\newtheorem{theorem}{Theorem}[section]
\newtheorem{lemma}[theorem]{Lemma}
\newtheorem{corollary}[theorem]{Corollary}
\newtheorem{proposition}[theorem]{Proposition}
\DeclareMathAlphabet{\mathpzc}{OT1}{pzc}{m}{it}
\def\@font@info#1{}
\theoremstyle{remark}
\newtheorem{remark}{Remark}[section]
\newenvironment{rmk}{\begin{remark}}{\hfill$\Diamond$\end{remark}}
\newtheorem{example}[remark]{Example}
\theoremstyle{definition}
\newtheorem{definition}[theorem]{Definition}
\numberwithin{equation}{section}
\newcommand{\ostar}{\mathbin{\mathpalette\make@circled\star}}
\newcommand{\make@circled}[2]{%
  \ooalign{$\m@th#1\smallbigcirc{#1}$\cr\hidewidth$\m@th#1#2$\hidewidth\cr}%
}
\newcommand{\smallbigcirc}[1]{%
  \vcenter{\hbox{\scalebox{0.77778}{$\m@th#1\bigcirc$}}}%
}
\newcommand{\beq}{\begin{eqnarray}}
\newcommand{\eeq}{\end{eqnarray}}
\def\g{\mathfrak g}
\def\h{\mathfrak h}
\def\G{\mathfrak{G}}
\def\C{\mathfrak{C}}
\def\bbZ{\mathbb{Z}}
\def\R{\mathbb{R}}
\def\bbC{\mathbb{C}}
\newcommand{\cA}{{\cal A}}
\newcommand{\cC}{{\cal C}}
\newcommand{\cD}{{\cal D}}
\newcommand{\cG}{{\cal G}}
\newcommand{\cB}{{\cal B}}
\newcommand{\cH}{{\cal H}}
\newcommand{\cM}{{\cal M}}
\newcommand{\cN}{{\cal N}}
\newcommand{\cV}{{\cal V}}
\newsavebox{\@brx}
\newcommand{\llangle}[1][]{\savebox{\@brx}{\(\m@th{#1\langle}\)}%
  \mathopen{\copy\@brx\kern-0.5\wd\@brx\usebox{\@brx}}}
\newcommand{\rrangle}[1][]{\savebox{\@brx}{\(\m@th{#1\rangle}\)}%
  \mathclose{\copy\@brx\kern-0.5\wd\@brx\usebox{\@brx}}}
\newcommand{\id}{\operatorname{id}}
\begin{document}

\title{Combinatorial quantization of 4d 2-Chern-Simons theory II:\\ Quantum invariants of higher ribbons in $D^4$}
\author[1]{{ \sf Hank Chen}\thanks{hank.chen@uwaterloo.ca}\thanks{chunhaochen@bimsa.cn}}

\affil[1]{\small Beijing Institute of Mathematical Sciences and Applications, Beijing 101408, China}

\maketitle

\begin{abstract}
This is a continuation of the first paper of this series, where the framework for the combinatorial quantization of the 4d 2-Chern-Simons theory with an underlying compact structure Lie 2-group $\mathbb{G}$ was laid out. In this paper, we continue our quest and characterize additive module *-functors $\omega:\C_q(\mathbb{G}^{\Gamma^2})\rightarrow\mathsf{Hilb}$, which serve as a categorification of linear *-functionals (ie. a \textit{state}) on a $C^*$-algebra. These allow us to construct non-Abelian Wilson surface correlators $\widehat{\C}_q(\mathbb{G}^{P})$ on the discrete 2d simple polyhedra $P$ partitioning 3-manifolds. By proving its stable equivalence under 3d handlebody moves, these Wilson surface states extend to decorated 3-dimensional marked bordisms in a 4-disc $D^4$. This provides a definition of an \textit{invariant of framed oriented 2-ribbons} in $D^4$ from the data of a quantum 2-group $\C_q(\mathbb{G}^{\Gamma^2})$. We find that these 2-Chern-Simons-type 2-ribbon invariants are given by bigraded $\mathbb{Z}$-modules, similar to the lasagna skein modules of Manolescu-Walker-Wedrich.
\end{abstract}

\newpage

\tableofcontents

\newpage

\section{Introduction}
This paper is the second part of the series dedicated to the combinatorial quantization of the Hamiltonian 2-Chern-Simons theory. This essentially completes the analysis of \cite{Chen1:2025?}, and constructs the 2-ribbon invariants that one obtains from the underlying Wilson surface observables.

To set the stage, we introduce first the following notions.
We first recall the following well-known definitions (see eg. \cite{baez2004,Yetter:1993dh,Martins:2010ry,Chen:2012gz,Baez:2005sn,Bai_2013,Chen:2023integrable}).
\begin{definition}
    A \textbf{strict Lie 2-group} $\mathbb{G}=\mathsf{H}\xrightarrow{\mathsf{t}}G$ is the data of a pair $\mathsf{H},G$ of Lie groups, a Lie group homomorphism $\mathsf{t}:\mathsf{H}\rightarrow G$ and a smooth action $\rhd: G\rightarrow \operatorname{Aut}\mathsf{H}$ satisfying 
    \begin{equation*}
        t(g\rhd h) = gt(h)g^{-1},\qquad t(h)\rhd h' = hh'h^{-1}
    \end{equation*}
    for all $g\in G,~h,h'\in\mathsf{H}$. 

    A \textbf{Lie 2-algebra/$L_2$-algebra} $\G=\h\xrightarrow{\mu_1}\g$ is a graded vector space $\G=\h\oplus\g$ equipped with $n$-nary skew-symmetric brackets $\mu_n\in \operatorname{Hom}^{n-2}(\G^{\wedge 2},\G)$ with $1\leq n\leq 2$, satisfying the graded Leibniz rules 
    $$\mu_1(\mu_2(x, y)) = \mu_2(x,\mu_1(y)),\qquad \mu_2(\mu_1(y),y')=\mu_2(y,\mu_1(y'))$$ for all $x\in\g,~y,y'\in\h$,
    as well as the Koszul identities. We shall assign $\h$ a degree of $(-1)$, and $\g$ a degree of 0.
    
\end{definition}
The following "2-Lie theorem" is also well-known \cite{Chen:2012gz}.
\begin{theorem}
    There is a one-to-one correspondence between Lie 2-algebras and connected, simply-connected Lie 2-groups. The differential $\mu_1$ is integrated to $\mathsf{t}$.
\end{theorem}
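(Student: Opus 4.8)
The plan is to recognize this as a categorified version of Lie's third theorem and to reduce it to the classical equivalence between finite-dimensional Lie algebras and connected, simply-connected Lie groups. First I would record the standard reformulations: a strict Lie 2-group $\mathbb{G} = \mathsf{H} \xrightarrow{\mathsf{t}} G$ is precisely a crossed module of Lie groups, and a strict Lie 2-algebra $\G = \h \xrightarrow{\mu_1} \g$ satisfying the stated equivariance (Peiffer) relations is precisely a crossed module of Lie algebras, where $\mu_1$ plays the role of the differential $d\mathsf{t}$ and $\mu_2$ encodes the bracket together with the infinitesimal action $\g \to \operatorname{Der}\h$. With these dictionaries in place, the theorem becomes the assertion that differentiation is an equivalence between the category of crossed modules of connected, simply-connected Lie groups and the category of crossed modules of Lie algebras.

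For the differentiation functor $\mathrm{Lie}$, I would differentiate each piece of the data: $\mathsf{t}: \mathsf{H} \to G$ yields $\mu_1 = d\mathsf{t}: \h \to \g$, and differentiating $\rhd: G \to \operatorname{Aut}\mathsf{H}$ at the identity gives an infinitesimal action $\g \to \operatorname{Der}\h$. One then checks that differentiating the two relations $t(g \rhd h) = g\, t(h)\, g^{-1}$ and $t(h) \rhd h' = h h' h^{-1}$ produces exactly the two graded Leibniz/Peiffer identities $\mu_1(\mu_2(x,y)) = \mu_2(x, \mu_1(y))$ and $\mu_2(\mu_1(y), y') = \mu_2(y, \mu_1(y'))$ listed in the definition; this is a routine computation with left-invariant vector fields.

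For the converse (integration), I would invoke the classical Lie's third theorem to integrate $\g$ and $\h$ to their unique connected, simply-connected Lie groups $G$ and $\mathsf{H}$. Because $\mathsf{H}$ is simply connected, the Lie algebra homomorphism $\mu_1: \h \to \g$ integrates to a unique Lie group homomorphism $\mathsf{t}: \mathsf{H} \to G$; because $G$ is simply connected and $\operatorname{Aut}\mathsf{H} \cong \operatorname{Aut}\h$ is a Lie group with Lie algebra $\operatorname{Der}\h$, the infinitesimal action $\g \to \operatorname{Der}\h$ integrates to a unique smooth action $\rhd: G \to \operatorname{Aut}\mathsf{H}$. The hard part is then upgrading the two infinitesimal Peiffer identities to their global group-level counterparts: for each relation both sides are smooth maps whose differentials at the identity agree by construction, so I would argue that they coincide on the whole connected group, using that a homomorphism out of a connected group is determined by its differential. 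This connectedness argument is the main technical obstacle and is exactly where the simply-connected hypothesis is indispensable, since it is what licenses every integration step above.

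Finally, to conclude the one-to-one correspondence I would show the two functors are mutually quasi-inverse. On objects this is immediate from the uniqueness clause of Lie's third theorem, and on morphisms it follows from the fact that morphisms of connected, simply-connected Lie groups are in bijection with morphisms of their Lie algebras; compatibility with the crossed-module structure is then inherited componentwise. This yields the claimed bijection, with $\mu_1$ integrating to $\mathsf{t}$ as asserted.
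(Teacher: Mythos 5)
Your proposal is correct, but note that the paper itself offers no proof of this statement: it is quoted as the well-known ``2-Lie theorem'' with a citation to the Lie 2-group/2-bialgebra literature, so there is nothing internal to compare against. Your argument is the standard one behind that cited result: identify strict Lie 2-groups with crossed modules of Lie groups and strict Lie 2-algebras (the paper's $\mu_3$-free, 2-term case) with differential crossed modules, then let differentiation and classical Lie theory do the work --- Lie III for $G$ and $\mathsf{H}$, Lie II (via simple-connectedness of $\mathsf{H}$) to integrate $\mu_1$ to $\mathsf{t}$, and the identification $\operatorname{Aut}\mathsf{H}\cong\operatorname{Aut}\h$ with Lie algebra $\operatorname{Der}\h$ to integrate the infinitesimal action. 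One step in your sketch deserves sharpening: globalizing the first Peiffer identity $\mathsf{t}(g\rhd h)=g\,\mathsf{t}(h)\,g^{-1}$ is not literally an application of ``a homomorphism out of a connected group is determined by its differential,'' because the dependence on $g$ is not through a homomorphism $\mathsf{H}\to G$. The clean argument is two-staged: for fixed $g$, both sides are homomorphisms $\mathsf{H}\to G$, so (by simple-connectedness of $\mathsf{H}$) the identity reduces to $\rd\mathsf{t}\circ\phi(g)=\operatorname{Ad}_g\circ\,\rd\mathsf{t}$; then the set of $g\in G$ satisfying this is a closed subgroup of $G$ whose Lie algebra is all of $\g$ by the infinitesimal Peiffer identity, hence equals $G$ by connectedness. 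With that routine repair (the second Peiffer identity does follow directly from the determined-by-differential lemma applied to $\mathsf{H}\to\operatorname{Aut}\mathsf{H}$), your proof is complete and is the expected one.
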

\noindent Equivalently \cite{Porst2008Strict2A,baez2004}, $\mathbb{G}$ is a category internal to the category $\mathsf{LieGrp}$ of Lie groups, with surjective submersive source/target maps \cite{MACKENZIE200046,Chen:2012gz,Schommer_Pries_2011}
\begin{equation*}
    \mathsf{H}\rtimes G\underset{t}{\overset{s}{\rightrightarrows}} G ,\qquad s(h,g) = g,\quad t(h,g) = g\mathsf{t}(h),
\end{equation*}
and a unit section $\id_g = (1,g)$. This is the central perspective that we shall take for the rest of this paper.

\medskip

We say the Lie 2-algebra $\G$ is \textbf{balanced} \cite{Soncini:2014} iff it has equipped a graded-symmetric non-degenerate invariant pairing form $\langle-,-\rangle: \G^{\otimes2}\rightarrow \bbC[1]$ of degree-1; namely it only has support on $\g\otimes\h\oplus\h\otimes \g.$  The classical \textbf{2-Chern-Simons action} \cite{Soncini:2014,Song_2023} then reads
\begin{equation*}
    S_{2CS}[A,B] = \int_{M^4}\langle B,F_A-\frac{1}{2}tB\rangle,\qquad A\in\Omega^1(M^4,\g),\quad B\in\Omega^2(M^4,\h),
\end{equation*}
where $M^4$ is a smooth 4-manifold. This action is part of the \textit{derived} family of homotopy-Chern-Simons theories constructed from $L_\infty$-algebras in \cite{Jurco:2018sby,Ritter:2016}.

2-Chern-Simons theory has been analyzed thoroughly classically in the literature, including its Hamiltonian analysis \cite{Martins:2010ry,Mikovic:2016xmo} and its classical moduli space of 2-flat connections/2-holonomies \cite{Baez:2004in,schreiber2013connectionsnonabeliangerbesholonomy,Chen:2024axr,Sati:2008eg}. As informed by the Fock-Rosly approach \cite{Fock:1998nu}, its quantization should then begin with a graded Poisson structure on the \textit{categorified} moduli space. 

A model for such a quantization framework in the discrete combinatorial context was pinned down in the previous paper \cite{Chen1:2025?}. This led to the definition of the "quantum 2-graph states" $\C_q(\mathbb{G}^{\Gamma^2})$, which can be understood as the categorical/higher-dimensional version of the compact quantum group \cite{Woronowicz1988} on a lattice \cite{Alekseev:1994pa}. It was found that they form a Hopf cocategory (cf. \cite{DAY199799}) \textit{internal} to the measureable fields of Crane-Yetter \cite{Crane:2003gk,Yetter2003MeasurableC}, consistent with the categorical ladder proposal of Baez-Dolan \cite{Baez:1995xq} and Crane-Frenkel \cite{Crane:1993if,Crane:1994ty}.

\begin{rmk}\label{categorificatione}
    Here, by "categorification" we mean the promotion of $\bbC$-valued functions, for instance, to vector space-valued functions. This is why we explicitly work with the categorified version of $L^2$-spaces --- namely the Crane-Yetter measureable fields. This procedure is well-known \cite{Kapranov:1994,Baez1996HigherDimensionalAI}, specifically in the context of topological quantum field theories (TQFTs) and topological orders \cite{Atiyah:1988,lurie2008classification,KitaevKong_2012,Bullivant:2016clk,Kong:2020,Johnson-Freyd:2020usu,Bullivant:2021,Carqueville:2023aak,Delcamp:2023kew,Kong2024-vr,Carqueville:2016kdq,Carqueville:2017aoe,Davydov:2011kb}, but their physical significance to lattice gauge theory has only been noted recently \cite{Jacobson:2023cmr,Chen:2024ddr,Chen1:2025?}. Although higher structures are already known to be required to capture instantons/defects/anomalies in gauge field theory \cite{Carey_1997,Fiorenza2020TwistorialCI,Sati:2009ic,Fiorenza:2020iax} since around the turn of the century, they can be missed by a na{\"i}ve truncation of the degrees-of-freedom on a lattice. The goal of categorification is to \textit{re}capture these anomaly data,\footnote{Indeed, the need for a "derived/higher categorical geometry" in AKSZ/$L_\infty$-algebra models of field theories cannot be understated \cite{Johnson-Freyd:2013oea,costello_gwilliam_2016,Calaque:2021sgp}. See \cite{Kong:2011jf} for a review.} specifically in higher-dimensions, reminiscent of the Villain lattice construction \cite{Chevyrev:2024eng}. In the present context of 2-Chern-Simons theory, its higher homotopy anomalies (ie. the \textit{Postnikov classes} \cite{Kapustin2017,Chen:2022hct,Benini:2018reh}, which we will discuss a bit more in \textit{Remark \ref{weak2ribbons}} later) are known to an play important role for geometric \textit{string structures} \cite{Kim:2019owc,Waldorf2012ACO,Baez:2005sn,Schommer_Pries_2011,Soncini:2014,schreiber2013connectionsnonabeliangerbesholonomy,Sati:2008eg,Fiorenza:2013jz}.
\end{rmk}

The higher representation theory of the quantum categorical symmetries of the {2-Chern-Simons TQFT}, ie. $\operatorname{2Rep}(\mathbb{U}_q\G)$, was studied in \cite{Chen:2025?}. It was found that they exhibit data and properties that categorify the notion of \textit{ribbon tensor categories} \cite{etingof2016tensor,Etingof:2004,book-quasihopf,SHUM199457}, which are well-known to play a central role in the construction of quantum ribbon invariants in 3d \cite{Reshetikhin:1991tc,Turaev:1992,Reshetikhin:1990pr,Turaev+2016}. 

The goal of this paper is therefore to explain and construct the invariants of  higher-dimensional ribbons arising from 2-Chern-Simons TQFT. Towards this, we once again take inspiration from the seminal works of Alekseev-Grosse-Schmerus, now their second paper \cite{Alekseev:1994au}, and develop a higher-dimensional analogue of the Chern-Simons algebra on the standard graph associated to a compact punctured Riemann surface (Def. 12 in \cite{Alekseev:1994au}).

\subsection{Main results}
Starting from the quantum 2-graph states $\C_q(\mathbb{G}^{\Gamma^2})$ of \cite{Chen1:2025?} on a 2-simplex geometry $\Gamma^2$, we characterize additive measureable *-functors in the ambient 2-category $\mathsf{Meas}$ of Crane-Yetter measureable categories. These are categorical models for \textit{states} on a $C^*$-algebra. The main ingredient will be the following Yoneda embedding theorem in the \textit{infinite-dimensional} context.
\begin{theorem}
    (\ref{yoneda}.) There is a fully-faithful embedding $\C_q(\mathbb{G}^{\Gamma^2})\hookrightarrow \operatorname{Fun}_\mathsf{Meas}(\C_q(\mathbb{G}^{\Gamma^2}),\mathsf{Hilb})$. 
\end{theorem}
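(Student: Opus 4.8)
The plan is to realize the embedding as the categorified representable (Yoneda) functor adapted to the measurable enrichment of $\C_q(\mathbb{G}^{\Gamma^2})$. Because $\C_q(\mathbb{G}^{\Gamma^2})$ is a Hopf cocategory internal to the Crane-Yetter measurable fields, its hom-objects are themselves measurable fields of Hilbert spaces, so I would define the embedding $Y$ on an object $x$ by the co-representable $h^x := \operatorname{Hom}_{\C_q(\mathbb{G}^{\Gamma^2})}(x,-)\colon \C_q(\mathbb{G}^{\Gamma^2})\to\mathsf{Hilb}$, and on a morphism $f\colon x\to y$ by precomposition $h^f\colon h^y\Rightarrow h^x$. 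The first task is to check that each $h^x$ is genuinely an object of $\operatorname{Fun}_\mathsf{Meas}(\C_q(\mathbb{G}^{\Gamma^2}),\mathsf{Hilb})$: additivity and the $*$-structure are inherited from the internal hom, the module-functor coherence is the two-variable adjunction between the internal hom and the $\mathsf{Meas}$-action supplied by the cocategory structure, and measurability is read off from the direct-integral structure of the underlying fields.

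The heart of the argument is a measurable enriched Yoneda lemma: for every additive measurable module $*$-functor $F\colon\C_q(\mathbb{G}^{\Gamma^2})\to\mathsf{Hilb}$ there is a natural isomorphism $\operatorname{Nat}_\mathsf{Meas}(h^x,F)\cong F(x)$. I would build the comparison by evaluation at the identity, $\eta\mapsto\eta_x(\id_x)\in F(x)$, with candidate inverse sending $\xi\in F(x)$ to the transformation with components $g\mapsto F(g)(\xi)$. In the ordinary enriched setting these are mutually inverse by the unit and naturality axioms; the genuinely new content here is to verify that both assignments are measurable maps of fields and that the resulting bijection respects the inner products, so that it is an isometric (unitary) isomorphism rather than a bare linear one.

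Full faithfulness then follows formally by specializing to $F=h^x$, which yields $\operatorname{Nat}_\mathsf{Meas}(h^y,h^x)\cong h^x(y)=\operatorname{Hom}_{\C_q(\mathbb{G}^{\Gamma^2})}(x,y)$; unwinding the construction identifies this isomorphism with the map $f\mapsto h^f$ defining $Y$ on morphisms, so $Y$ is fully faithful. Since the dagger/$*$-structure provides an isomorphism $\C_q(\mathbb{G}^{\Gamma^2})\cong\C_q(\mathbb{G}^{\Gamma^2})^{\mathrm{op}}$ that is the identity on objects and sends $f\mapsto f^*$, the contravariant assignment $x\mapsto h^x$ reorients into the covariant embedding asserted in the statement, and this identification simultaneously shows that $Y$ intertwines the $*$-operations (the adjoint of $h^f$ is $h^{f^*}$).

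The main obstacle is entirely infinite-dimensional and measure-theoretic. In the finite-dimensional semisimple case the Yoneda reduction is a finite sum and full faithfulness is immediate, whereas here the (co)end computing $\operatorname{Nat}_\mathsf{Meas}(h^x,F)$ must be interpreted as a direct integral over the spectrum of the Crane-Yetter base, and one must ensure that the fibrewise Yoneda isomorphisms assemble into a single isomorphism of measurable fields. The delicate points are (i) that the space of measurable natural transformations is itself a well-defined measurable field, so that $\operatorname{Fun}_\mathsf{Meas}(\C_q(\mathbb{G}^{\Gamma^2}),\mathsf{Hilb})$ lands back in $\mathsf{Meas}$, and (ii) that evaluation-at-identity is a measurable isometry; I expect both to hinge on a measurable selection/disintegration argument and von Neumann's direct-integral theory rather than on any new categorical input.
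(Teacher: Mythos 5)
Your construction starts from the representable functor $h^x=\operatorname{Hom}_{\C_q(\mathbb{G}^{\Gamma^2})}(x,-)$, but this object does not live in $\operatorname{Fun}_\mathsf{Meas}(\C_q(\mathbb{G}^{\Gamma^2}),\mathsf{Hilb})$, and this is where the proposal breaks --- not at the technical points you defer to the end. In the Crane--Yetter 2-category $\mathsf{Meas}$, a 1-morphism into $\mathsf{Hilb}$ is by definition an integral transform: a family of measures together with a field of Hilbert spaces serving as the kernel, so that $\omega(\phi)=\int^\oplus d\nu(\mathrm{z})\,\underline{\omega}_{\mathrm{z}}\otimes\phi_{\mathrm{z}}$. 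The hom-spaces of $\C_q(\mathbb{G}^{\Gamma^2})$ (Definition \ref{quantumhermitian}) are spaces of measurable \emph{essentially bounded} fields of operators; these are $L^\infty$-type Banach modules over $\bbC[[\hbar]]$, not Hilbert spaces, and the assignment $y\mapsto\operatorname{Hom}(x,y)$ is not of integral-transform form. So your ``first task'' (checking that $h^x$ is a $\mathsf{Hilb}$-valued measurable functor) is not a verification that can succeed. The only way to make the hom Hilbert-valued is to replace it stalkwise by the Hilbert--Schmidt space $B(H_{\mathrm{z}},K_{\mathrm{z}})\cong\bar H_{\mathrm{z}}\otimes K_{\mathrm{z}}$ (legitimate here because the stalks have finite rank) and take the direct integral $\int^\oplus_{\mathbb{G}^{\Gamma^2}}d\mu_{\Gamma^2}(\mathrm{z})\,\bar\phi'_{\mathrm{z}}\otimes\phi_{\mathrm{z}}$. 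That is precisely the functor $\omega_{\phi'}$ of \eqref{pairing} which the paper takes as the \emph{definition} of the embedding: kernel the linear-dual field $\bar\phi'$, measure the Haar measure $\mu_{\Gamma^2}$.

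Even after this repair, your route through a measurable enriched Yoneda lemma does not close. Elements $g$ of the $L^2$-hom $h^x(y)$ are square-integrable operator fields which need not be essentially bounded, hence are not morphisms of $\C_q(\mathbb{G}^{\Gamma^2})$; consequently the candidate inverse $\xi\mapsto(g\mapsto F(g)\xi)$ is undefined, and the asserted unitary isomorphism $\operatorname{Nat}_\mathsf{Meas}(h^x,F)\cong F(x)$ pairs an $L^\infty$-type object with a Hilbert space and cannot hold as stated. The paper never proves (and does not need) such a representability statement: it defines $\phi'\mapsto\omega_{\phi'}$ directly and reads full-faithfulness off Definition \ref{measnat} --- a measurable natural transformation between two integral transforms carrying the same underlying measure $\mu_{\Gamma^2}$ is, by definition, an essentially bounded field of operators between the kernels, i.e.\ exactly a morphism $\phi\to\phi'$ of measurable sheaves. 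Proposition \ref{notequiavlence} (the embedding is not essentially surjective) is a further warning that Yoneda-type representability in this infinite-dimensional setting is strictly weaker than what your argument assumes. Your handling of variance via the $*$-structure is reasonable in spirit, but note that the paper instead tracks it with the linear dual and the $\text{m-op}$ decoration in Proposition \ref{yoneda}.
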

\noindent Due to the infinite-dimensional nature of measureable categories, this embedding is a priori \textit{not} an equivalence. These invariant *-functors are formalized by the notion of a \textbf{cointegral} for Hopf cocategories (see \S \ref{cointegral}).

These additive *-functors allow us to define the \textbf{non-Abelian Wilson surface states} $\widehat{\C}_q(\mathbb{G}^{\Gamma_P})$, where $\Gamma_P=\Gamma^2$ denotes a combinatorial triangulation of a simple 2d polyhedron $P$. By considering $P$ as a piecewise linear (PL) 2-manifold, we prove the invariance of $\widehat{\C}_q(\mathbb{G}^{\Gamma_P})$ under 2d Pachner moves (\textbf{Theorem \ref{invariance}}), which gives us the \textbf{2-Chern-Simons 2-algebra on the standard simple polyhedron} in \S \ref{3handles}.

This standard 2-algebra is then the central ingredient for the construction of the higher-ribbon invariants arising from 2-Chern-Simons theory. These are defined as \textit{monoidal} functors between certain \textit{double categories} \cite{Kerler2001,shulman2009framedbicategoriesmonoidalfibrations},
\begin{equation}
    \Omega: \LaTeXunderbrace{\operatorname{PLRib}'_{(1+1)+\epsilon}(D^4)}_{\text{geometry}}\rightarrow \LaTeXunderbrace{\widehat{\C}_q(\mathbb{G})}_{\text{algebra}},\label{g2ribboninvar}
\end{equation}
as a higher-categorical analogue of the quantum group ribbon invariants in Reshetikhin-Turaev TQFT \cite{Reshetikhin:1990pr,Reshetikhin:1991tc,Turaev:1992,Turaev+2016}. Here, the left-hand "geometry side" consist of the so-called \textbf{marked PL 2-ribbons}. These are 2-dimensional framed, oriented PL geometries, embedded in a PL 4-disc $D^4$, which are equipped with transverse boundary graphs and diffeomorphisms on top.

\begin{rmk}\label{diffbords}
    The work of \cite{douglas2016internalbicategories} establishes a framework in which one can model bordisms with diffeomorphisms on top of them as \textit{categories internal to $\mathsf{Mfld}$}. They called these the "$(n+1+\epsilon)$-dimensional bordisms"  $\operatorname{Bord}_{\langle n,n-1\rangle+\epsilon}$, where the "$\epsilon$" is supposed to indicate the diffeomorphisms on top of the $n$-bordisms and their $(n-1)$-boundaries. The definition of these PL 2-ribbons are based on a PL version of this construction --- they are categories {internal} to the PL manifolds $\mathsf{PLTop}$. This is the \textit{raison d'{\^e}tre} for working with \textit{internal} structures here --- the categorical types match exactly with the geometry; this is crucial for \S \ref{Gfuncmon} later.
\end{rmk}

These invariants $\Omega$ are therefore not only \textit{functorial} by construction, but also \textit{monoidal} against a certain connected summation operation between the PL 2-ribbons. Through the theory of handlebody decompositions \cite{matveev2007algorithmic}, this monoidality turned out to be central in the following.
\begin{theorem}
    (\ref{handlebodyinvariance}.) The \textbf{2-ribbon invariants of 2-Chern-Simons theory} $\Omega(\,_{B_1}P_{B_2})\in\widehat{\C}_q(\mathbb{G}^P)$ are invariant under handlebody moves (see fig. \ref{fig:handlemoves}) on the 2d simple polyhedron $P$.
\end{theorem}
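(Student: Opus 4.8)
The plan is to establish invariance one elementary handlebody move at a time, reducing each to structures already under control: the 2d Pachner invariance of \textbf{Theorem \ref{invariance}}, the monoidality of $\Omega$ against connected summation, and the defining invariance property of the cointegral state functor $\omega$ from \S\ref{cointegral}. Since any handlebody move is supported inside a small PL ball $D^3\subset N$ where a 3-handle is attached, slid, or cancelled, the first step is to \emph{localize}: outside this ball the simple polyhedron $P$ and its decoration by the standard 2-algebra of \S\ref{3handles} are untouched, so it suffices to compare the value of $\Omega$ on the two local spine patches related by the move. Throughout I would treat the move as a 2-morphism (square) in the geometry double category $\operatorname{PLRib}'_{(1+1)+\epsilon}(D^4)$, so that functoriality of $\Omega$ reduces invariance of the whole invariant to an identity on the corresponding local piece of $\widehat{\C}_q(\mathbb{G}^P)$.

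First I would dispose of the purely combinatorial moves. A handlebody move that alters only the triangulation $\Gamma_P$ of the fixed PL polyhedron $P$ — without changing its underlying PL 2-manifold type — is a composite of 2d Pachner moves, so invariance is immediate from \textbf{Theorem \ref{invariance}}. This handles the subdivision/flip entries of fig. \ref{fig:handlemoves} and lets me henceforth regard $\widehat{\C}_q(\mathbb{G}^P)$ as depending only on the PL-type of $P$, not on the chosen triangulation.

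The genuine content is the handle slides and the creation/cancellation of handle pairs, which change the 3-handle structure of $N$. Here I would proceed in two moves. (a) A handle attachment corresponds, at the level of spines, to a connected summation of standard polyhedra; since $\Omega$ is monoidal against precisely this operation (by the double-category construction of \S\ref{Gfuncmon}), the invariant factors through a product over handles, so it is enough to check each local attachment region separately. (b) A handle slide then merely rearranges the transverse boundary graph within a single region, and this rearrangement is an instance of the invariance of the cointegral $\omega$ — the categorified analogue of the translation-invariance of the Haar integral that underlies the well-definedness of the Reshetikhin–Turaev invariants. Applying the functoriality of $\Omega$ to the slide square then reduces its invariance to this one cointegral identity.

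I expect the main obstacle to be step (b) in the infinite-dimensional measureable setting. Because the Yoneda embedding of \textbf{Theorem \ref{yoneda}} is \emph{not} an equivalence, the cointegral state $\omega$ is not dual to a finite categorical trace, so one cannot simply invoke a finite-dimensional pivotal/ribbon argument. I would therefore have to verify directly that the module *-functor $\omega:\C_q(\mathbb{G}^{\Gamma^2})\to\mathsf{Hilb}$ carries the handle-slide action — implemented by the coproduct of the Hopf cocategory $\C_q(\mathbb{G})$ — to an identity on the Hilbert-space fibers, i.e. that $\omega$ is genuinely two-sided invariant as a cointegral, with all measurable-field coherences (the associators and the *-structure) respected. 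Establishing this invariance, and checking its compatibility with the framing and orientation data carried by the marked 2-ribbon, is the crux; once it holds, stringing the elementary moves together yields invariance of $\Omega(\,_{B_1}P_{B_2})$ under arbitrary handlebody moves.
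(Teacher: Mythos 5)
Your overall architecture --- localize the move, exploit monoidality of $\Omega$ under connected summation, and reduce everything to a Haar-type invariance statement in the measureable setting --- is genuinely the same skeleton as the paper's proof, and your closing observation that the failure of the Yoneda embedding to be an equivalence blocks any finite-dimensional pivotal/ribbon shortcut is exactly right. But the proposal has a genuine gap at its crux, in two respects. First, the geometric reduction is left vague and partly misread: fig. \ref{fig:handlemoves} contains no subdivision/flip entries (Pachner invariance, \textbf{Theorem \ref{invariance}}, enters the paper's proof only as a disclaimer allowing one to work with collars rather than their triangulations), and the two moves that \emph{are} there --- 0-2/2-0 and 3-2/2-3 --- are not dispatched by your split into ``attachment = monoidality'' and ``slide = cointegral.'' The paper's key geometric lemma is sharper: away from the boundary slices, a 0-2 move is \emph{exactly} the cancellation identity $H'\ast H^{-1}=\id_{\ell_j^-}$ between two summation collars, and a 2-3 move is \emph{exactly} the composition identity $H^{-1}_{13}\ast H_{23}\ast H_{12}=\id_{\ell_j^-}$ among three of them. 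This converts both moves into a single statement: the monoidal structure $\hat\otimes_H$ of \eqref{summationcollar} depends on the summation collar $H$ only through its homotopy class. Without this identification your case analysis has nothing precise to check.

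Second, the mechanism you propose for the crux --- that the move is ``implemented by the coproduct'' and killed by two-sided cointegral invariance of $\omega$ --- is not what makes the collar-independence true, and you leave it unverified in any case. The cointegral property of \S \ref{cointegral} is translation invariance of the integral over a \emph{fixed} space of decorations, whereas what is needed here is invariance under changing the underlying collar geometry itself. The paper bridges this (Lemma \ref{2flats}) by an entirely different chain: 2-flatness (\textbf{Definition \ref{2flat}}) turns the homotopy $H\simeq H'$ into a 2-gauge transformation, i.e.\ a Lie 2-group diffeomorphism $f:\mathbb{G}^{H'}\to\mathbb{G}^{H}$; \textbf{Propositions \ref{reversestack}} and \textbf{\ref{cylinderembeddings}} produce the equivalence $F:\C_q(\mathbb{G}^{H})\simeq\C_q(\mathbb{G}^{H'})$; \textbf{Proposition \ref{pullbackmeas}} realizes $F\circ f_*$ as a pullback along a diffeomorphism; and then uniqueness of the invariant Haar measure on a compact Lie 2-group (\textbf{Proposition \ref{haarunique}}) together with \textbf{Theorem \ref{isointegral}} (direct integrals against equivalent measures are measureably naturally isomorphic) yields the required isomorphism of Haar integral functors, with the 2-3 move then handled by taking $H=H_{12}\cup H_{23}$, $H'=H_{13}$. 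To make your cointegral route rigorous you would have to re-develop precisely these 2-flatness and Haar-uniqueness ingredients to even identify the two decoration spaces being integrated over; as stated, your plan names the right theme but does not close the argument.
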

\noindent By the stable equivalence result of \cite{Sakata2022-il}, this means that $\Omega(\,_{B_1}P_{B_2})$ can be interpreted as certain decorated stratified 3-manifolds \cite{hudson1969piecewise,Liu:2024qth} embedded in $D^4$. 

\medskip

Isomorphism classes of 2-Chern-Simons 2-ribbon invariants \eqref{g2ribboninvar} involve the \textit{smooth equivariant} cohomology. The cohomolgoy on the classifying space (2-stack) of the Lie 2-group $\mathbb{G}$ has been studied in various guises in, for instance, \cite{Angulo2024TheVE,Schommer_Pries_2011,Waldorf2012ACO,Nikolaus2011FOUREV}.
\begin{proposition}
    (\ref{bigradedQ}.) Isomorphism classes of 2-Chern-Simons 2-ribbon invariants $2\mathcal{CS}^\mathbb{G}_{q}(D^4)$ are parameterized by assignments of $\mathbb{G}$-equivariant cohomology classes in $H_\mathbb{G}(B\mathbb{G},\mathbb{Z})[t][q,q^{-1}]$ to marked PL 2-ribbons up to diffeomorphism. 
\end{proposition}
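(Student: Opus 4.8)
The plan is to classify the monoidal functors $\Omega$ of \eqref{g2ribboninvar} up to monoidal natural isomorphism, and to identify the resulting set $2\mathcal{CS}^\mathbb{G}_{q}(D^4)$ with assignments valued in the bigraded ring $H(B\mathbb{G},\mathbb{Z})[t][q,q^{-1}]$. The first step is to pass from the functor $\Omega$ to the underlying assignment on diffeomorphism classes. By \textbf{Theorem \ref{handlebodyinvariance}}, the invariant $\Omega(\,_{B_1}P_{B_2})$ is unchanged under handlebody moves; combined with the stable equivalence of \cite{Sakata2022-il}, this shows that $\Omega$ descends to a function on diffeomorphism classes of marked PL 2-ribbons, equivalently on the decorated stratified 3-manifolds in $D^4$ that they present. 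A 2-ribbon invariant is therefore equivalent data to a monoidal assignment from these diffeomorphism classes into the isomorphism classes of objects of $\widehat{\C}_q(\mathbb{G})$.

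The second step is to compute this target as a bigraded ring. Connected summation of 2-ribbons equips the set of isomorphism classes of $\widehat{\C}_q(\mathbb{G})$-objects with a commutative multiplication sending the empty 2-ribbon to the unit. I would identify this Grothendieck-type ring through the Yoneda/Tannakian correspondence of \textbf{Theorem \ref{yoneda}}: the additive module $*$-functors $\omega\colon\C_q(\mathbb{G}^{\Gamma^2})\to\mathsf{Hilb}$, being categorified states/characters, are classified by characteristic-class data on the classifying 2-stack $B\mathbb{G}$, so their isomorphism classes assemble into the smooth cohomology ring $H(B\mathbb{G},\mathbb{Z})$. The two formal gradings then arise as follows: the Laurent variable $q^{\pm1}$ records the framing/ribbon-twist contribution of the quantum deformation parameter, with the $*$-structure forcing both $q$ and $q^{-1}$, while the polynomial variable $t$ records the secondary categorical degree coming from the degree-$(-1)$ generators $\mathfrak{h}$ of the $L_2$-algebra --- concretely the non-negative count of 2-strata/transverse boundary components. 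Together these produce the bigraded ring of Chern $q$-polynomials $H(B\mathbb{G},\mathbb{Z})[t][q,q^{-1}]$.

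The third step is to establish the bijection in both directions. For faithfulness, two invariants inducing the same bigraded assignment are monoidally naturally isomorphic: since the source double category is generated under composition and connected sum by a short list of elementary marked 2-ribbons, a natural isomorphism is pinned down by its components on generators, and the Pachner/handlebody relations of \textbf{Theorem \ref{invariance}} guarantee these components glue consistently; the fully-faithful embedding of \textbf{Theorem \ref{yoneda}} then upgrades the agreement of isomorphism classes to an honest natural isomorphism. For fullness (realizability), every element of $H(B\mathbb{G},\mathbb{Z})[t][q,q^{-1}]$ is realized by defining $\Omega$ on generators via the corresponding cointegral-weighted Wilson surface state and checking invariance under each handlebody move, which holds by \textbf{Theorem \ref{handlebodyinvariance}}.

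I expect the main obstacle to be the precise identification of the target ring in the second step --- in particular, showing that the full smooth cohomology $H(B\mathbb{G},\mathbb{Z})$, rather than some sub- or quotient ring, is realized, and that the $t$- and $q$-gradings match exactly. This requires controlling the differentiable cohomology of the classifying 2-stack through a van Est-type comparison with the Lie 2-algebra cohomology of $\G$, and matching the resulting (Postnikov) characteristic classes with the $q$-deformed Chern classes produced by the cointegral. Keeping the two gradings separate --- so that the quantum degree in $q$ and the categorical degree in $t$ do not collapse into one another --- is the delicate bookkeeping point, and is exactly what makes the answer bigraded in the sense of the lasagna skein modules of Manolescu-Walker-Wedrich.
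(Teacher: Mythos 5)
There is a genuine gap, and it sits exactly where you flagged your "main obstacle": the identification of the target ring is the entire content of the proposition, and your proposal substitutes a gesture (a "Yoneda/Tannakian correspondence" plus a van Est comparison) for an actual argument. The paper's mechanism is much more concrete and does not pass through the Yoneda embedding at all. By \textbf{Proposition \ref{vectorbundles}} (Serre--Swan), there is a forgetful functor $\cV^X\rightarrow\operatorname{Bun}_\bbC(X)$ with $X=B\mathbb{G}$, so a 2-graph state is simply a complex vector bundle; complex vector bundles are classified up to isomorphism by their Chern classes, packaged into the Chern polynomial $c(\phi;t)=1+\sum_{i\leq\operatorname{rk}\phi}c_i(\phi)t^i\in H^\bullet(X,\bbZ)[t]$; and the quantum case follows by replacing bundles with their $\star$-deformations \`a la Bursztyn, whose $\bbC[[\hbar]]$-module structure with $q=e^{i\hbar}$ invertible produces the $[q,q^{-1}]$ factor. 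In particular your reading of the two gradings is wrong on the $t$ side: $t$ is the formal variable of the Chern polynomial, recording the cohomological degree of the Chern classes of the underlying bundle; it has nothing to do with the degree-$(-1)$ generators $\h$ of the Lie 2-algebra or with counting 2-strata. (Your account of $q$ as the deformation parameter is essentially right, though the Laurent structure comes from invertibility of $q=e^{i\hbar}$, not from the $*$-structure.)

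Your third step also overclaims. The paper establishes an \emph{injective} ring map $\pi_\bullet\widehat{\C}_q(\mathbb{G})\cong\pi_\bullet\C_q(\mathbb{G})\rightarrow H(B\mathbb{G},\bbZ)[t][q,q^{-1}]$ (\textbf{Proposition \ref{bigradedQ}}), not an isomorphism, and "parameterized by" in the statement means that isomorphism classes of invariants $[\Omega]$ are identified with maps $\operatorname{PLRib}'_{1+1}(D^4)\rightarrow H(B\mathbb{G},\bbZ)[t][q,q^{-1}]$ (\textbf{Proposition \ref{2ribboninvariant}}); surjectivity onto the whole ring is neither claimed nor available, since there is no reason an arbitrary bigraded class is the Chern $q$-polynomial of a $\star$-deformed 2-graph state. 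So your "fullness/realizability" step would fail. Finally, two smaller points: the descent to diffeomorphism classes needs only the elementary observation that a level-preserving diffeomorphism $P\simeq P'$ induces $\mathbb{G}^P\cong\mathbb{G}^{P'}$ and hence $H^\bullet(B\mathbb{G}^P)=H^\bullet(B\mathbb{G}^{P'})$ (together with triangulation-independence, \textbf{Theorem \ref{invariance}}); invoking \textbf{Theorem \ref{handlebodyinvariance}} is both unnecessary and out of order, since this proposition precedes and is independent of the stable-equivalence result.
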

\noindent This result is interesting, as it seems to imply a close relation between $2\mathcal{CS}^\mathbb{G}_{q}(D^4)$ and another type of higher-tangle invariant that exists in the literature: the \textit{higher lasagna modules} of Manolescu-Walker-Wedrich \cite{Manolescu2022SkeinLM}, which are based on the derived, multiply-graded $\mathfrak{gl}_N$ Khovanov-Rozansky homology theory $\operatorname{KhR}^N$ \cite{Khovanov:2000,Khovanov:2006,Khovanov_2010,rouquier:hal-00002981,Rouquier2005CategorificationOS}. 

This may not as surprising as one may first think, since 2-Chern-Simons theory $S_{2CS}$ itself involves \textit{derived} fields and host Wilson surface operators that can end on knots \cite{Witten:2011zz}.\footnote{Furthermore, the gauge-field equations (1.1) in \cite{Gaiotto:2011} can be (mostly) reproduced by the fake-flatness $F_A-\mu_1B=0$ equation of motion in 2-Chern-Simons theory, by restricting to a 2-gauge sector of a certain field multiplet configuration $(A,B=0)\in \Omega^\bullet(M^4)\otimes\G$.} However, $2\mathcal{CS}^\mathbb{G}_{q}(D^4)$ do differ from the lasagna invariants $\mathcal{S}^{\mathfrak{gl}_N}_0(D^4)$ in a crucial manner; more details can be found in \S \ref{conclusion} and \S \ref{higherskein}. 

\medskip

We will also make use of the *-operations and the above Yoneda embedding result to define distinguished \textit{categorical} pairing forms from the geometry. They will play a central role in the notion of \textbf{reflection-positivity} for the corresponding 2-Chern-Simons 2-ribbon invariants $2\mathcal{CS}^\mathbb{G}_{q}(D^4)$.

\subsubsection*{Physical interpretations.}  Higher-gauge theory in general has been known to be deeply relevant to various fields of physics \cite{Chen:2022hct,Ritter:2016}, from quantum gravity \cite{Mikovic:2015hza,Mikovic:2011si,Girelli:2021khh,Baez:1995ph}, high-energy theory \cite{Baez:2002highergauge,Benini:2018reh,Cordova:2018cvg,Song_2023,Song:2021,Gaiotto:2014kfa}, condensed matter \cite{Wen:2019,Kong:2020wmn,Wang:2016rzy,Delcamp:2018kqc,Bullivant:2016clk,Bochniak_2021,Kapustin:2013uxa,Thorngren2015,Dubinkin:2020kxo}, to string theory \cite{Kim:2019owc,Sati:2009ic,Schreiber:2013pra}. As such, it is worthwhile to provide physical interpretations for some of our results. This will be expressed in {\color{purple}purple} in the following.

However, a prevailing slogan the author would like to emphasize here is the following:
\begin{center}
    \large{\emph{Gauge symmetries are internal, global symmetries are enriched.}}
\end{center}
A few comments in \textit{Remarks \ref{CSvsDW}, \ref{gaugedoubles}} will be made which highlight this slogan.
 
\subsection{Overview}
The outline of the paper is as follows. We will begin with a broad overview of the formal mathematical setup in \S \ref{prelim}. We will introduce the measureable categories of Crane-Yetter, definitions of categories/cocategories internal to a bicategory as well as the higher-categorical Hopf structures based on this internal model. This section serves as the foundation for the rest of this paper.

Then, in \S \ref{firstpaper}, we will give a concise but comprehensive review of the key concepts and results of the first paper \cite{Chen1:2025?}. Note that the language of \S \ref{prelim} is slightly different from that used in \cite{Chen1:2025?}, but they are equivalent; this will be explained clearly in \S \ref{classical2states} and \textit{Remark \ref{modelchange}}.

In \S \ref{objA}, we set out to pin down the combinatorial 2-simplex geometry underlying the 2-graph states $\phi\in\C_q(\mathbb{G}^{\Gamma^2})$. We show how the geometry (see figs. \ref{fig:interchanger}, \ref{fig:triple}) of 2d simple polyhedra $P$ can kept track of. {\color{purple}These 2-graph states $\phi$ serve as \textit{extended} operator insertions in discretized 2-Chern-Simons theory, and their operator products are governed abstractly by the braid relations \eqref{braid}.}

We will then prove the following two key results:
\begin{itemize}
    \item \S \ref{invarbdy}: invariance modulo boundary (\textbf{Theorem \ref{bdyinvar}}) --- namely that {\color{purple} the \textit{extended gauge charges} can be probed by ending the Wilson surfaces on boundaries \cite{Freidel:2020xyx,Freidel:2020svx,Ciambelli:2021nmv}}, and
    \item \S \ref{commbdy}: disjoint commutativity/braiding (\textbf{Theorem \ref{brmon}}) --- which is a {\color{purple} realization of the open-closed duality \cite{Kong:2011jf} between the Wilson surface sectors.} 
\end{itemize}

Categorical linear *-functionals on these 2-holonomy states are then studied in \S \ref{1morA}. The so-called "cone" functors are categorifications of the {\color{purple} quantum correlation functions between Wilson surface operators}. We completely characterize them within the ambient 2-category $\mathsf{Meas}$, and prove the Yoneda embedding.

Equipped with these states, we then move on to \S \ref{Gdecorations} where we first define the relevant geometry of \textit{marked} PL 2-ribbons (see figs. \ref{fig:2ribbontwist}, \ref{fig:sum}, \textbf{Proposition \ref{markedPL2skeins}}). The 2-ribbon invariants $\Omega$ \eqref{g2ribboninvar} are then defined in \S \ref{Gfuncmon}.  \S \ref{reflectionpositivity} treats the reflection-positivity/{\color{purple}unitarity} of $\Omega$ (see fig. \ref{fig:4-ball}).

The final section \S \ref{stableequivG2ribbons} is then dedicated to proving the invariance of $\Omega$ under stable equivalence/handlebody moves. The resulting decorated stratified 3-manifold can be interpreted as the {\color{purple} Hilbert space of 2-Chern-Simons Wilson surface states on a Cauchy slice}; see also \S \ref{alterfold} and figs. \ref{fig:handlecollar}, \ref{fig:cornersummation}.

In the conclusion \S \ref{conclusion}, we will frame the results of this paper in the larger context of categorical quantum algebras. In a companion work, we pin down a theory of categorical characters which will allow us to compute the 2-ribbon invariants constructed in this paper.

\medskip

The appendix will provide additional information. Specifically, \S \ref{prevworks} outlines the relation of 2-Chern-Simons 2-ribbon invariants to previous works in the literature. These include
\begin{enumerate}
    \item Chern-Simons standard graph algebra \cite{Alekseev:1994pa,Alekseev:1994au} (\S \ref{boundaryCS}),
    \item 2-tangles in 4-dimensions \cite{baez19982,BAEZ2003705,CARTER19971,Kharlamov:1993} (\S \ref{baezlangford}), and finally
    \item the higher lasagna skein modules \cite{Manolescu2022SkeinLM,Morrison2019InvariantsO4} (\S \ref{higherskein}).
\end{enumerate}
The idea that {\color{purple} higher-gauge theory is able to model codimension-2 defects} has been used in the condensed matter literature as well \cite{Pretko:2017fbf,Pretko:2020,Dubinkin:2020kxo,Lam:2023xng}.

\subsubsection*{Acknowledgments}
The author would like to thank Hao Zheng, Yilong Wang and Zhi-Hao Zhang for insightful discussions throughout the completion of this paper. This work is supported by the RFIS-I program of the National Science Foundation of China (NSFC), Grant Number: W2533012.

\section{Preliminaries}\label{prelim}
Suppose $X$ were a connected smooth Reimannian manifold equipped with a complete metric. Further, we will also assume $X$ is equipped with a Borel measure $\mu$, and let $\mathcal{U}\rightarrow X$ denote a corresponding $\mu$-measureable covering of Borel open sets. The central example is where $X$ is a locally compact topological/Lie group equipped with a Haar measure.

\subsection{Measureable fields and sheaves of Hermitian sections}
Recall the definition of a measureable field $H^X$ \cite{Crane:2003gk,Yetter2003MeasurableC,Baez:2012}.
\begin{definition}
    A \textbf{measureable field} $H^X$ over the measure space $(X,\mu)$ is the data of a family of Hilbert spaces $\{H_x\}_{x\in X}$ and the \textit{measureable sections} $\cM_H\subset \coprod_{x\in X}H_x$ such that
\begin{enumerate}
    \item the norm map $x\mapsto |\xi_x|_{H_x}$ is $\mu$-measureable for all $\xi\in\cM_H$,
    \item if $x\mapsto \langle \eta_x,\xi_x\rangle_{H_x}$ is $\mu$-measureable for all $\xi\in\cM_H$, then $\eta\in\cM_H,$ and
    \item $\cM_H$ is sequentially dense in $\coprod_{x\in X}H_x$.
\end{enumerate}
The collection of all measureable fields $H^X$ and bounded linear measureable operators $\phi:H^X\rightarrow H'^X$ (preserving the measureable sections) form the \textit{measureable category $\cH^X=\mathsf{Meas}_X$} of Crane-Yetter over $X.$ 
\end{definition}
\noindent We shall considerably leverage the theory of sheaves on smooth manifolds \cite{Fausk:2003,Kashiwara1990SheavesOM} in this paper.

\begin{rmk}\label{measureablefieldsassheaves}
    In the language of sheaves, the measureable category $\mathsf{Meas}_X$ over $(X,\mu)$ is equivalent to the category of sheaves of the so-called Hilbert \textit{$W^*$-modules} over $X$, where the $W^*$-algebra is given by the bounded functions $L^\infty(X,\mu)$. We are interested in better-behaved measureable fields in this paper here, however, for which we have access to \textbf{Proposition \ref{vectorbundles}} later. The reason will be clear in \S \ref{PLhomology}.
\end{rmk}

\medskip

One of the central results in \cite{Crane:2003gk,Yetter2003MeasurableC} is the construction of the 2-category $\mathsf{Meas}$ of measureable categories; we will recall its 1- and 2-morphisms in \S \ref{meas12mor}. A few more baisc facts about it is the following.
\begin{proposition}\label{basicfacts}
    Let $X,Y$ be measureable spaces and $\cH^X,\cH^Y$ the measureable categories on them.
    \begin{enumerate}
        \item The direct integral $\int_X^\oplus d\mu_X:\cH^X\rightarrow \mathsf{Hilb}$ is a $\bbC$-linear additive functor, which produces the Hilbert space $H^X\mapsto \int_X^\oplus d\mu_x H_x$ of $\mu$-almost everywhere (a.e.) equivalence classes of sections $\xi\in\cM_H$.
        \item $\mathsf{Meas}$ is symmetric monoidal with $\mathsf{Hilb}\simeq\cH^\emptyset$ as the monoidal unit.
        \item There are equivalences $\cH^{X\times Y}\simeq\cH^X\times \cH^Y$.
    \end{enumerate}
\end{proposition}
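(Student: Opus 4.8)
The plan is to regard all three statements as consequences of the classical theory of direct integrals of Hilbert spaces (von Neumann, Dixmier), reorganized in the Crane–Yetter formalism \cite{Crane:2003gk,Yetter2003MeasurableC}. I would build (1) first, since the direct-integral functor is the analytic engine behind the monoidal structure, and then deduce (3) and finally (2). The only genuine analytic inputs are the completeness of $L^2$ (Riesz–Fischer), the Fubini–Tonelli theorem, and a measurable disintegration; everything else is bookkeeping against axioms (1)--(3) in the definition of a measureable field.

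For (1), I first define the functor on objects by taking $\int_X^\oplus d\mu_x\,H_x$ to be the space of $\mu$-a.e. classes of sections $\xi\in\cM_H$ with $\int_X |\xi_x|_{H_x}^2\,d\mu<\infty$, with inner product $\langle\xi,\eta\rangle=\int_X \langle\xi_x,\eta_x\rangle_{H_x}\,d\mu$. Axiom (1) makes the integrand measurable so the pairing is well defined; that this is genuinely a Hilbert space, in particular complete, follows from a Riesz–Fischer argument in which axiom (3), the sequential density of $\cM_H$, supplies enough sections to identify Cauchy sequences with their a.e. limits, while axiom (2) guarantees those limits lie again in $\cM_H$. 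On morphisms, a bounded measureable operator $\phi=\{\phi_x\}$ acts sectionwise, $(\phi\xi)_x=\phi_x\xi_x$; its operator norm is bounded by the (essential) supremum $\sup_x\|\phi_x\|$ of the fibrewise norms, it descends to a.e. classes, and it preserves $\cM_H$ by hypothesis, so $\int_X^\oplus\phi$ is a well-defined bounded operator. Functoriality ($\int^\oplus$ of a composite is the composite, identities to identities), $\bbC$-linearity, and additivity ($\int^\oplus(\alpha\phi+\beta\psi)=\alpha\int^\oplus\phi+\beta\int^\oplus\psi$ and $\int^\oplus(H\oplus H')\cong \int^\oplus H\oplus\int^\oplus H'$) are then immediate from the sectionwise definition.

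For (3) — where I read the right-hand product as the external (Deligne-type) tensor product of categories rather than the Cartesian one, as is forced by the unit claim in (2) — I would produce the external-product functor sending a pair of fields $H^X=\{H_x\}$, $K^Y=\{K_y\}$ to the field $\{H_x\otimes K_y\}_{(x,y)}$ over $(X\times Y,\mu_X\times\mu_Y)$, whose measureable sections are generated under the closure axioms (2)--(3) by the elementary sections $\xi\otimes\eta$. Fubini–Tonelli makes $(x,y)\mapsto|\xi_x\otimes\eta_y|$ measurable, verifying axiom (1), and the density of simple tensors in $L^2(X\times Y)\cong L^2(X)\otimes L^2(Y)$ gives axiom (3); hence the construction lands in $\cH^{X\times Y}$. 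Full faithfulness reduces to the assertion that a bounded measureable family over $X\times Y$ between external products decomposes, on elementary sections, as a combination of $\phi_x\otimes\psi_y$, i.e. $\mathrm{Hom}(H\otimes K,H'\otimes K')\cong\mathrm{Hom}(H,H')\otimes\mathrm{Hom}(K,K')$ at the level of equivalence classes of measureable operators. Essential surjectivity is where the measure-theoretic disintegration enters: an arbitrary field over $X\times Y$ must be reconstructed, up to the direct sums and integrals already present in the category, from its fibrewise restrictions, matching the completed tensor product $\cH^X\times\cH^Y$.

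Statement (2) is then formal. Taking the monoidal product to be $(\cH^X,\cH^Y)\mapsto\cH^{X\times Y}$, the associator and symmetry are transported through the equivalence of (3) from the strictly coherent associativity and flip isomorphisms of the Cartesian product of measure spaces, so the pentagon and hexagon reduce to those for $\times$. The unit is the fibre category over the one-point (terminal) space, whose fields are single Hilbert spaces, giving $\cH^\emptyset\simeq\mathsf{Hilb}$, with the unitors induced by the canonical $X\times\{*\}\cong X$. I expect the main obstacle to be precisely the essential-surjectivity half of (3): one must invoke a disintegration / measurable-selection theorem to show that a general measureable field over the product splits into an external product, and then check that this splitting matches the exact completion defining the tensor product $\cH^X\times\cH^Y$ of measureable categories, rather than the much smaller naive image of elementary tensors. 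The completeness clause in (1) and the density axiom (3) are the technical facts that make this disintegration converge.
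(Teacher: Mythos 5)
Your part (1) is correct: it is the classical von Neumann--Dixmier direct-integral construction (Riesz--Fischer completeness via the density axiom, sectionwise action of morphisms), and in that respect it is necessarily a different route from the paper, whose entire proof consists of citing Thms.~27 and 50 of \cite{Yetter2003MeasurableC} and writing down the external-product functor \eqref{factorizable}. Part (2) is fine modulo part (3). The genuine gap is in your proof of (3), and it is not the technical hurdle you flag at the end (a disintegration/selection theorem): the strategy of proving that the external-product functor is an equivalence by checking full faithfulness and essential surjectivity cannot work, because both halves are false.

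Concretely: on Hom-sets the external-product functor is $(\phi,\psi)\mapsto\phi\otimes\psi$, which is not injective (rescale to $(\lambda\phi,\lambda^{-1}\psi)$), and your linearized reduction $\operatorname{Hom}(H\boxtimes K,H'\boxtimes K')\cong\operatorname{Hom}(H,H')\otimes\operatorname{Hom}(K,K')$ fails already over one-point spaces with infinite-dimensional fibres: the flip unitary $\Sigma$ on $H\otimes H$ satisfies $\|\Sigma-\sum_{i\le n}A_i\otimes B_i\|\geq 1$ for every finite sum (test $(\Sigma-T)(\xi\otimes e_0)$ against $e_0\otimes\xi$ with $\xi\perp e_0,B_1e_0,\dots,B_ne_0$), so no algebraic or norm-completed tensor product of Hom-spaces recovers $B(H\otimes H)$; only the \emph{weak} closure does, and that is not a Hom-set of your category. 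Essential surjectivity fails even after closing under direct sums: any countable sum of external products has dimension function of split form $\sum_n f_n(x)g_n(y)$, whereas if $A\subset[0,1]^2$ has positive measure but contains no positive-measure rectangle mod null (take $A=\{(x,y):x-y\in E\}$ with $E$ and $E^c$ of positive measure in every interval; a Steinhaus convolution argument shows $S\times T\subseteq A$ mod null forces $|S|\,|T|=0$), then $\chi_A\underline{\bbC}$ admits no such decomposition, since each support rectangle $S_n\times T_n$ would lie in $A$ mod null and hence be null, forcing $|A|=0$. Your hedge that the right-hand side of (3) must be a "completed" tensor product is the right instinct, but it makes the argument circular: the only workable definition of that completion in this infinite-dimensional setting is $\cH^{X\times Y}$ itself. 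That is exactly what the paper does by citing Yetter's Thm.~50 — the monoidal product on $\mathsf{Meas}$ is \emph{constructed} so that the product of $\cH^X$ and $\cH^Y$ is $\cH^{X\times Y}$, with \eqref{factorizable} as the structure functor — so statement (3) is definitional there, the real content being bicategorical coherence, and there is no disintegration step to carry out.
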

\begin{proof}
    These are Thms. 27 and 50 in \cite{Yetter2003MeasurableC}, respectively. The equivalence in the third statement is given by 
    \begin{equation}
        \operatorname{pr}_X^*(-|_X)\otimes \operatorname{pr}^*_Y(-|_Y): \mathsf{Meas}_X\times \mathsf{Meas}_Y\xrightarrow{\sim}\mathsf{Meas}(X\times Y),\label{factorizable}
    \end{equation} 
    where $X\xleftarrow{\operatorname{pr}_X} X\times Y\xrightarrow{\operatorname{pr}_Y} Y$ are the projections of measureable spaces and $\mathsf{Meas}_X\xleftarrow{-|_X}\mathsf{Meas}_X\times\mathsf{Meas}_Y\xrightarrow{-|_Y}\mathsf{Meas}_Y$ are the restriction functors on measureable fields.
\end{proof}
\noindent We will use the third statement freely throughout this paper.

\medskip

Similar to \cite{TRENTINAGLIA2010750}, we shall restrict to better-behaved collection of Hilbert fields.
\begin{definition}\label{hermitian}
    Suppose $X$ admits a $\mu$-measureable cover $\mathcal{U}\rightarrow X$ (ie. we have a Borel measureable algebra on $X$). The \textbf{measureable sheaves of (finite-rank) Hermitian sections} $\cV^X\subset\cH^X$ over $(X,\mu)$ is the full additive measureable subcategory consisting of measureable fields $H^X$ such that its direct integral over $U\in\mathcal{U}$,
\begin{equation*}
    \Gamma_c(H^X): U\mapsto \int_U^\oplus d\mu_x H_x,\qquad U\in\mathcal{U}
\end{equation*} 
defines a coherent sheaf of locally finitely-generated free projective $C(X)$-modules.
\end{definition}

By the classical Serre-Swan theorem \cite{Serre:1955,Swan1962VectorBA}, we can view objects in $\cV^X$ as Hermitian vector bundles (more correctly, \textit{coherent sheaves}) over $(X,\mu).$
\begin{proposition}\label{vectorbundles}
    There is a forgetful functor $\cV^X \rightarrow \operatorname{Bun}_\bbC(X)$ sending a sheaf of sections $\Gamma_c(H^X)$ to its underlying complex vector bundle $H^X$ over $X$. 
\end{proposition}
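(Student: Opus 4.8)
The plan is to obtain the functor as the composite of the Serre--Swan correspondence with the forgetful operation that discards the Hermitian $C(X)$-module structure. The defining condition of $\cV^X$ in Definition \ref{hermitian} is precisely engineered so that $\Gamma_c(H^X)$ is a coherent sheaf which is, on each member $U\in\mathcal{U}$ of the cover, a finitely-generated projective Hilbert $C(X)$-module. By the (local, sheaf-theoretic form of the) Serre--Swan theorem \cite{Serre:1955,Swan1962VectorBA}, finite generation together with projectivity forces $\Gamma_c(H^X)$ to be \emph{locally free of finite rank}, and a locally free coherent sheaf of finite rank over $X$ is exactly the sheaf of sections of a complex vector bundle. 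Forgetting the fibrewise Hermitian inner product (equivalently, the $C(X)$-valued pairing) then lands us in $\operatorname{Bun}_\bbC(X)$.

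First I would treat objects. Fix $H^X\in\cV^X$. On each $U\in\mathcal{U}$ the module $\Gamma_c(H^X)(U)=\int_U^\oplus d\mu_x\,H_x$ is finitely generated and projective, hence a direct summand of a free module; Serre--Swan produces a locally trivial bundle over $U$ whose fibre at $x$ is $H_x$. Since $X$ is connected (the standing assumption of \S \ref{prelim}) the rank is globally constant, and the coherence of $\Gamma_c(H^X)$ supplies the compatible restriction maps on overlaps $U\cap U'$ that glue these local trivialisations into a single complex vector bundle $H^X\to X$. This is the image of the functor on objects.

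Next, morphisms. A $1$-morphism $\phi\colon H^X\to H'^X$ in $\cV^X\subset\cH^X$ is a bounded measureable operator preserving measureable sections, so by the functoriality of the direct integral (Proposition \ref{basicfacts}(1), applied over each $U\in\mathcal{U}$) it induces a morphism of sheaves $\Gamma_c(\phi)\colon\Gamma_c(H^X)\to\Gamma_c(H'^X)$. As both sheaves are locally free of finite rank, $\Gamma_c(\phi)$ corresponds under Serre--Swan to a fibrewise-linear bundle map, and discarding the Hermitian data yields a $\bbC$-bundle morphism. Since $\Gamma_c$ and the Serre--Swan equivalence both preserve identities and composites, the assignment is functorial; this establishes the claim.

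The main obstacle is that the classical \emph{global} Serre--Swan theorem requires $X$ compact, whereas here $X$ is only a connected complete Riemannian manifold, possibly non-compact. The resolution is to work locally over the cover $\mathcal{U}$ and to use coherence to glue, rather than to invoke a single equivalence of module categories over $C(X)$; this is exactly why Definition \ref{hermitian} phrases the condition as a \emph{coherent sheaf of local} finitely-generated projective modules. A secondary point requiring care is that the measureable category only remembers $\mu$-a.e.\ data, so one must check that the finite-rank hypothesis (the ``better-behaved'' fields of Remark \ref{measureablefieldsassheaves}) genuinely upgrades the a.e.\ equivalence classes of sections to honest continuous sections of a topological bundle; local freeness guarantees there are no measure-zero rank jumps, which is what makes the underlying family $\{H_x\}_{x\in X}$ a bona fide bundle.
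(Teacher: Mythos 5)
Your proof is correct and takes essentially the same route as the paper, which obtains this proposition directly by invoking the Serre--Swan theorem on the locally finitely-generated projective $C(X)$-module condition built into Definition \ref{hermitian}. The paper leaves the local-to-global gluing over the cover $\mathcal{U}$ and the morphism-level verification implicit; your write-up simply makes those steps explicit.
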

\noindent Alternatively, $\cV^X\subset\cH^X$ can be understood as the full measureable subcategory which admits a forgetful functor into $\operatorname{Bun}_\bbC(X)$. As $\operatorname{Bun}_\bbC(X)$ is additive and exact, so is $\cV^X$.

Let $\mathsf{Meas}_\text{Herm} \subset\mathsf{Meas}$ denote the full 2-subcategory of measureable sheaves of Hermitian sections (and their completions) $\cV^X$.

\subsection{(Co)Categories internal to 2-categories}
We consider \textit{strict} categories $C$ \textit{internal} to $\mathsf{Meas}_\text{herm}$. This is a "strictified" version of the notion of a \textbf{category object in a 2-category $\cC$} (with pushouts and pullbacks).
\begin{definition}\label{internalcats}
    A \textbf{category $C$ internal to $\cC$} is a strict category object in a bicategory $\cC$ with pushouts and pullbacks (such as $\cC=\mathsf{Meas}$). It consists of the data:
\begin{itemize}
    \item a pair of objects $C_1,C_0\in\cC$,
    \item a pair of \textit{fibrant} 1-morphisms $s,t: C_1\rightarrow C_0$ in $\cV$ called the \textit{source/target}, and their pullback $C_1\,_t\times_sC_1$,
    \item a 1-morphism $\circ: C_1\,_t\times_sC_1\rightarrow C_1$ in $\cV$, called the \textit{composition law}, and
    \item a 1-morphism $\eta:C_0\rightarrow C_1$, called the \textit{unit}, such that
    \begin{enumerate}
        \item the composition law $\circ$ is strictly associative: the 2-morphism
\begin{tikzcd}
	{C_1\times_{C_0}C_1\times_{C_0}C_1} & {C_1\times_{C_0}C_1} \\
	{C_1\times_{C_0}C_1} & {C_1}
	\arrow["{\id\times \circ}", from=1-1, to=1-2]
	\arrow["{\circ\times \id}"', from=1-1, to=2-1]
	\arrow["\cong", shorten <=8pt, shorten >=8pt, Rightarrow, from=1-1, to=2-2]
	\arrow["\circ", from=1-2, to=2-2]
	\arrow["\circ"', from=2-1, to=2-2]
\end{tikzcd}
is invertible, 
        \item $\circ,1$ satisfy strict unity: for each $f\in C_1$ with $s(f) = x$ and $t(f)=y$, we have invertible 2-morphisms $1_y\circ f\cong f \cong f\circ 1_x$,
        \item the invertible compositional unitors and associators satisfy 
        \begin{enumerate}
            \item the exchange equation (which we call the \textit{interchange law}), 
            \item the left- and right-pentagon equations, and
            \item the left-, middle- and right-triangle equations,
        \end{enumerate}
        on the pullbacks $C_1^{[n]}=C_1\times_{C_0}C_1\times_{C_0}\dots\times_{C_0}C_1$. 
\end{enumerate}
\end{itemize}
A \textbf{cocategory $D$ internal to $\cC$} is a strict category object in $\cC^\text{op}$. It is equipped with \textit{cofibrant} 1-morphisms $u,v: D_0\rightarrow D_1$, a strict counit $\epsilon:D_1\rightarrow D_0$ and a strictly coassociative cocomposition law $\Delta_v: D_1\rightarrow D_1~_v\times_u D_1$ along the pushout.
\end{definition}
\noindent More details can be found in \cite{douglas2016internalbicategories}. Keep in mind that internal categories do not have cocompositions, and cocategories do not have compositions.\footnote{Note a category object in $\mathsf{Cat}$, the bicategory of categories, is a \textit{double category}; see Def. 10 of \cite{Ehresmann1963} and \S 12 of \cite{book-2cats}, and also \cite{Brown,Grandis2004}.}

A (strict) functor $F:C\rightarrow D$ of categories internal to $\cC$ is of course a pair of 1-morphisms $F_{i}:C_i\rightarrow D_i$ for $i=0,1$, equipped with invertible 2-morphisms
\[\begin{tikzcd}
	{C_1} && {D_1} \\
	& \cong \\
	{C_0} && {D_0}
	\arrow["{F_1}", from=1-1, to=1-3]
	\arrow["{s_C}"', shift right, curve={height=6pt}, from=1-1, to=3-1]
	\arrow["{t_C}", shift left, curve={height=-6pt}, from=1-1, to=3-1]
	\arrow["{s_D}"', shift right, curve={height=6pt}, from=1-3, to=3-3]
	\arrow["{t_D}", shift left, curve={height=-6pt}, from=1-3, to=3-3]
	\arrow["{F_0}", from=3-1, to=3-3]
\end{tikzcd},\qquad F(\circ)\cong \circ(F\times F),\qquad F_1(\eta)\cong \eta_{F_0}\]
which ensures that $F$ commutes with the fibrant source/target maps and the composition. 


\begin{rmk}
The insistence on working with \textit{internal} categories, as opposed to \textit{enriched} categories, may at first appear strange to some seasoned readers in higher-categorical algebras. However, internal categories have recently seen explicit applications in geometry and algebraic quantum field theory \cite{douglas2016internalbicategories,Bunk_2025}, specifically in the study of bordism categories with extra structure.
\end{rmk}


 \begin{rmk}\label{truncation}
Write $\cV= \mathsf{Meas}_\text{herm}$ the 2-category of measureable coherent sheaves, and let $\operatorname{Cat}_{\cV},\operatorname{Cocat}_{\cV}$ denote the collection of \textit{additive} categories/cocategories {internal} to $\cV$, respectively. A(n additive) co/category object $C$ internal to $\cV$ can be viewed as a double category \cite{Miranda:2025}, whose vertical 1-cells and 2-cells are given by measureable sheaf morphisms; see also \textit{Remark \ref{doublecocat}} later.  If $C$ were a category object in the full (2,1)-subcategory $\pi_{\leq  2}\cV\subset\cV$ consisting of only invertible 2-morphisms, then all of its 2-cells and \textit{vertical} 1-cells are invertible. The 2-truncation $\pi_{<2}C$ --- given by for instance taking the isomorphism classes of sheaves $\pi_{<2}C=[C]$ (see \S \ref{PLhomology}) --- is then an ordinary additive category. 
 \end{rmk}

\subsection{Internal Hopf categories}\label{inthopfcat}
We now define the notion of internal Hopf (co)categories that we shall use, which is heavily inspired by the frameworks of trialgebras \cite{Pfeiffer2007} and Hopf (op-)algebroids \cite{DAY199799}.

\medskip

Suppose $\cV$ is symmetric monoidal, with a monoidal unit object $I\in\cV$. As an abuse of notation, we will also denote by $I$ its discrete category $I\rightrightarrows I$ internal to $\cV$.
\begin{definition}\label{internalhopf}
    Let $(\cV,\times,I)$ be a symmetric monoidal 2-category.
    \begin{itemize}
        \item A \textbf{Hopf monoidal category $\cH$ in $\cV$} is a Hopf algebra object in $\operatorname{Cat}_{\cV}$. Namely, it is equipped with the following internal functors:
    \begin{enumerate}
        \item the \textit{product} $\otimes: \cH\times\cH\rightarrow\cH$ (with a unit $\iota\in \cH$),
        \item the strictly monoidal \textit{coproduct} $\Delta:\cH\rightarrow\cH\times\cH$ (with a counit $\epsilon: \cH\rightarrow I$), and
        \item the strictly op-comonoidal op-monoidal \textit{antipode} $S:\cH\rightarrow \cH^{\text{m-op},\text{c-op}}$,
    \end{enumerate}
    as well as internal natural transformations
    \begin{enumerate}
        \item the \textit{associators} $a^\otimes: \otimes\circ (\otimes\times1_\cH) \Rightarrow \otimes\circ(1_\cH\times\otimes)$ and \textit{unitors} $r^\otimes: (-\otimes \iota)\Rightarrow 1_\cH,~\ell^\otimes: (\iota\otimes -)\rightarrow 1_\cH$ satisfying the strict pentagon and triangle axioms, 
        \item the \textit{coassociators} $a^\Delta: (\Delta\times 1_\cH)\circ\Delta\Rightarrow (1_\cH\times\Delta)\Rightarrow\Delta$ and \textit{counitors} $r^\Delta:( \epsilon\times 1_\cH)\circ\Delta \Rightarrow1_\cH ,~ \ell^\Delta: (1_\cH\times \epsilon)\circ\Delta\Rightarrow 1_\cH$ satisfying the strict copentagon and cotriangle axioms,
        \item the invertible \textit{bimonoidal natural transformations}
        \begin{equation*}
            \Delta \circ \otimes \cong (1_\cH\times \sigma\times 1_\cH) \circ (\otimes\times\otimes)\circ \Delta
        \end{equation*}
        \item the \textit{antipode relations}
        \begin{equation*}
            \otimes\circ (S\times 1_\cH)\circ\Delta \cong \iota\otimes\epsilon\cong \otimes\circ(1_\cH\times S)\circ\Delta,
        \end{equation*}
    \end{enumerate}
    such that these internal natural transformations are mutually coherent.
    \item We say a Hopf monoidal category internal to $\cV$ is \textbf{strict} iff the above internal natural transformations are invertible and only have identity components.
    \item A \textbf{(strict) Hopf comonoidal cocategory in $\cV$} is a (strict) Hopf monoidal category in $\cV^\text{op}$. 
    \item We say $\cH$ is \textit{additive} if both of its objects and morphisms have $\cV$-internal direct sum biproducts, and all of its Hopf internal structures are additive functors/natural transformations.
    \end{itemize}
\end{definition}

As mentioned in \textit{Remark \ref{weakcoherence}}, there are of course {lax} versions of the above, where the coherence 2-cells above are not necessarily invertible. We will not need this much generality, even for the quantization of weak 2-Chern-Simons theory. We will make several brief remarks throughout this paper which explains how the Postnikov associator of $\mathbb{G}$ modifies our results.

\begin{rmk}
    Generally, (co)algebras in $\cV$ have a (co)composition law as well as a (co)monoidal product, which together satisfy the (co)interchange law. These are common structures in bicategories and 2-groups \cite{douglas2016internalbicategories,Baez:2003fs,Chen:2012gz}. It is worth emphasizing that Hopf cocategories do \textit{not} have a composition law for its morphisms.
\end{rmk}

\begin{rmk}\label{weakcoherence}
    We shall call a lax (Hopf monoidal) category object $C$ in $\cV$, whose invertible structural coherence morphisms (including those for the composition) are not necessarily concentrated at the identity component, a \textbf{(Hopf monoidal) $\cV$-pseudocategory}. $\mathsf{LieGrp}$-pseudocategories, internal to the bicategory $\mathsf{LieGrpd}$ of Lie groupoids, was examined in \cite{Ferreira:2015,Bunk_2025}. 
\end{rmk}

In the following, we will recall how combinatorial 2-Chern-Simons theory, based on a strictly associative structure Lie 2-group $\mathbb{G}$, gives rise to the structure of strict\footnote{In the case of non-associative \textit{smooth} 2-groups \cite{Schommer_Pries_2011}, we obtain instead Hopf psuedo-co/categories, but the coherence morphisms remain invertible.} Hopf co/monoidal intenral co/categories on the lattice. The co/monoidal co/associator morphisms of these Hopf co/categories receive contributions directly from the Postnikov anomaly mentioned in \textit{Remark \ref{categorificatione}}.

\section{A comprehensive overview of the first paper}\label{firstpaper}
Let us begin with a brief overview of the first paper, following the more formal perspective of the above section. We shall mainly focus on the central players: the 2-graph states $\C(\mathbb{G}^{\Gamma^2})$ and the 2-gauge transformations $\mathbb{U}\G^{\Gamma^1}$ on a lattice $\Gamma$. We will also state without proof some of their structural results that will be useful later; the interested reader is directed towards \cite{Chen1:2025?} for the proofs. 

The following was obtained by discretizing 2-holonomies of 2-connections \cite{Baez:2004in,schreiber2013connectionsnonabeliangerbesholonomy,Chen:2024axr}.
\begin{definition}\label{2holdef}
    Denote by $\Gamma^2$ a simply-connected 2-truncated topological simplicial complex. Objects of the 2-functor 2-category $F\in\operatorname{2Fun}(\Gamma^2,B\mathbb{G})$ are called \textbf{2-holonomies}, denoted $\mathbb{G}^{\Gamma^2}$, which consist of maps $F:\Gamma^2\rightarrow B\mathbb{G}$ satisfying the \textbf{fake-flatness condition}
\begin{equation*}
    t(b_f) = h_{\partial f},\qquad ~ ~F:(e,f)\mapsto (h_e,b_f)\in \mathsf{H}\rtimes G.
\end{equation*}
\begin{enumerate}
    \item The 1-morphisms/psuedonatural transformations $\eta: F\Rightarrow F'$ are called \textbf{2-gauge transformations}, and they act by horizontal conjugation
\begin{equation*}
    (h'_e,b'_f) = \operatorname{hAd}_{(a_v,\gamma_e)}^{-1}(h_e,b_f),\qquad \eta:(v,e)\mapsto (a_v,\gamma_e)\in\mathsf{H}\rtimes G
\end{equation*}
via the decorated 1-simplices $\mathbb{G}^{\Gamma^1}$.
\item The 2-morphisms/modifications $m:\eta\Rrightarrow \eta'$ are called \textbf{secondary gauge transformations}, and they act by vertical conjugation
\begin{equation*}
    (a_v',\gamma_e') = \operatorname{vAd}_{m_v}^{-1}(a_v,\gamma_e),\qquad m: v\mapsto m_v \in \mathsf{H}.
\end{equation*}
\end{enumerate}
\end{definition}
\noindent In \S \ref{2simplex}, we will set up the geometry such that $\Gamma^2$ can be seen as the combinatorial triangulation of a stratified PL 2-(pseudo)manifold.

\begin{tcolorbox}[breakable]
\subsubsection*{Slight foray into measure theory.}
Let $\mathbb{G} = \mathsf{H}\xrightarrow{\mathsf{t}}G$ be compact; namely it is a locally compact Hausdorff Lie groupoid and $G$ itself is compact.
\begin{definition}\label{2grouphaar}
    A \textbf{Haar measure} $\mu$ on $\mathbb{G}$ is a Radon measure equipped with a \textit{disintegration} (cf. \cite{Pachl_1978,Baez:2012}) $\{\nu^a\}_{a\in G}$ along the source map $s:\mathbb{G}\rightarrow G$ such that
    \begin{enumerate}
        \item the family $\{\nu^a\}_{a\in G}$ is a Haar system (cf. \cite{Williams2015HaarSO}), and
        \item the pushforward measure $\sigma = \mu\circ s^{-1}$ is an Haar-Radon measure on $G$.
    \end{enumerate}
    We say $\mu$ is an \textbf{invariant Haar measure} if the family $\{\nu^a\}_{a\in G}$ is $G$-equivariant and if $\sigma$ is an invariant measure on $G$.
\end{definition}
\noindent Though Haar systems on Lie groupoids are not unique \cite{Williams2015HaarSO}, we have the following analogue of Haar measures on ordinary compact Lie groups.

\begin{proposition}\label{haarunique}
    The Haar measure on compact connected Lie 2-groups $\mathbb{G}$, if it exists, is unique up to equivalence.
\end{proposition}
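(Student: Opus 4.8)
The plan is to adapt the classical proof that a compact group carries a unique normalized Haar measure to the groupoid $\mathbb{G}\rightrightarrows G$, letting the Haar-system invariance and the $G$-equivariance of Definition~\ref{2grouphaar} play the role that ordinary left-invariance plays in the group case. I read ``unique up to equivalence'' as mutual absolute continuity, so it suffices to produce, for any two (invariant) Haar measures $\mu,\mu'$, a constant $c>0$ with $\mu'=c\,\mu$.

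The key structural input is transitivity. Connectedness of the Lie $2$-group means $\pi_0\mathbb{G}=\operatorname{coker}\mathsf{t}=0$, i.e.\ $\mathsf{t}:\mathsf{H}\to G$ is surjective; since the orbit of $a\in G$ under $\mathbb{G}\rightrightarrows G$ is the coset $a\cdot\operatorname{im}\mathsf{t}$, surjectivity makes the groupoid transitive. First I would use this to collapse the fibre data to a single fibre. Right-invariance of the Haar system $\{\nu^a\}$ gives, whenever $a=\mathsf{t}(h)$, the relation $\nu^{a}=(L_{h^{-1}})_*\nu^{e}$ between the measure on $s^{-1}(a)\cong\mathsf{H}$ and the measure $\nu^{e}$ on the fibre over the identity; transitivity guarantees such an $h$ exists for every $a$, and left-$\ker\mathsf{t}$-invariance of $\nu^{e}$ (the isotropy at $e$ is exactly $\ker\mathsf{t}$) makes the translate independent of the choice of $h$. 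Thus the entire Haar system is pinned down by the single measure $\nu^{e}$.

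Next I would fix $\nu^{e}$ itself. Its defining invariance only forces left-$\ker\mathsf{t}$-invariance, so a priori $\nu^{e}$ is Haar on $\ker\mathsf{t}$ times an \emph{arbitrary} measure on $\ker\mathsf{t}\backslash\mathsf{H}\cong G$ --- this is precisely the freedom responsible for the non-uniqueness of groupoid Haar systems noted just before the statement. The point is that $G$-equivariance of an invariant Haar measure, together with the Peiffer identity $\mathsf{t}(h)\rhd h'=hh'h^{-1}$, upgrades this to full left-$\mathsf{H}$-invariance: the equivariance relates $\nu^{e}$ and $\nu^{a}$ through the action $h'\mapsto g\rhd h'$, and matching this against the groupoid-translation relation between the same two fibres converts the residual base-direction freedom into genuine left translation on $\mathsf{H}$. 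Hence $\nu^{e}$ is a left Haar measure on $\mathsf{H}$, unique up to a positive scalar by classical Haar uniqueness. In parallel, $\sigma=s_*\mu$ is a Haar--Radon measure on the compact connected group $G$, hence also unique up to a positive scalar.

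Assembling the disintegration $\mu=\int_G \nu^a\,d\sigma(a)$, both the base $\sigma$ and the fibre system $\{\nu^a\}$ of $\mu'$ are positive scalar multiples of those of $\mu$, whence $\mu'=c\,\mu$ and in particular $\mu\sim\mu'$. I expect the main obstacle to be exactly the fibre step: making precise that $G$-equivariance promotes left-$\ker\mathsf{t}$-invariance to left-$\mathsf{H}$-invariance of $\nu^{e}$, and thereby explaining conceptually why compactness and connectedness rescue uniqueness from the generic failure. A subsidiary technical point is to justify working with honest (smooth, everywhere-positive) densities, so that the Radon--Nikodym comparisons and the disintegration are literally valid on the Lie groupoid; here I would invoke Proposition~\ref{vectorbundles} and the compactness of $G$.
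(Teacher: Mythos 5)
Your proposal is correct in its core mechanism but takes a genuinely different route from the paper. The paper's proof is purely measure-theoretic and very short: it invokes the essential uniqueness of disintegrations (Pachl) to fix the fibre system $\{\nu^a\}$ at all points of continuity, which by compactness is everywhere, and then uses classical uniqueness (up to equivalence) of Haar measure on the compact base $G$ for the pushforward $\sigma=\mu\circ s^{-1}$. You instead exploit the crossed-module structure: all fibre measures are referred back to the single measure $\nu^e$, and matching the equivariance relation $\nu^{\mathsf{t}(h)}=(\mathsf{t}(h)\rhd)_*\nu^e$ against the groupoid-translation relation $\nu^{\mathsf{t}(h)}=(L_{h^{-1}})_*\nu^e$ via the Peiffer identity forces $\nu^e$ to be left-invariant on $\mathsf{H}$, reducing everything to classical Haar uniqueness on $\mathsf{H}$ and $G$ separately and yielding the stronger conclusion $\mu'=c\,\mu$. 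This is more structural and more informative than the paper's argument: it identifies exactly which clause of Definition~\ref{2grouphaar} defeats the generic non-uniqueness of groupoid Haar systems. (The matching step does go through, but it is a bit more work than your sketch suggests: it gives left-invariance of $\nu^e$ under the centre and under all elements $(g\rhd h)h^{-1}$, hence under conjugates of commutators, and one then needs the structure theorem $\mathsf{H}=Z(\mathsf{H})_0\cdot[\mathsf{H},\mathsf{H}]$ for a compact \emph{topologically connected} $\mathsf{H}$ to conclude full left-invariance.)

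Two caveats, one of which is a real gap as written. First, your transitivity step rests on reading ``connected'' as $\pi_0\mathbb{G}=\operatorname{coker}\mathsf{t}=0$. That is not the paper's usage: in the 2-Lie theorem quoted in the introduction, connectedness is topological connectedness of $\mathsf{H}$ and $G$, and the paper explicitly cares about 2-groups with $\mathsf{t}=0$ (cf.\ Remark~\ref{openclosed}), for which the groupoid is totally intransitive and your first step collapses. Fortunately the step is also unnecessary: $G$-equivariance already gives $\nu^{a}=(a\rhd)_*\nu^e$ for \emph{every} $a\in G$, so all fibres are pinned by $\nu^e$ with no surjectivity of $\mathsf{t}$, and your Peiffer matching only ever uses base points in the image of $\mathsf{t}$, so it is unaffected; you should restructure the proof so that equivariance, not transitivity, does the fibre-collapsing. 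Second, your proof uses the invariance clause essentially, so strictly you prove uniqueness of \emph{invariant} Haar measures, whereas the proposition is stated, and the paper's proof phrased, for Haar measures per Definition~\ref{2grouphaar} without equivariance; without that clause the fibre system retains the freedom of an arbitrary full-support measure per orbit, so the scopes genuinely differ. Since every later application in the paper (e.g.\ Proposition~\ref{notequiavlence}, Lemma~\ref{2flats}) concerns invariant Haar measures, your version suffices for the paper's purposes, but you should state the restriction. Finally, a minor point: invoking Proposition~\ref{vectorbundles} to justify working with densities is off-target — that is the Serre--Swan statement for Hermitian sheaves and has no bearing on the measure theory here.
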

\begin{proof}
    By \textbf{Definition \ref{2grouphaar}} , the uniqueness of disintegrations \cite{Pachl_1978} (see also Lemma 2.3 of \cite{ACKERMAN_2016}) states that $\nu$ is unique on all points of continuity, which by compactness is the entire Lie 2-group. Additionally, since the pushforward $\sigma=\mu\circ s^{-1}$ is required to be a Lie group Haar measure for $G$, which we know is unique up to equivalence for compact $G$, the result follows.
\end{proof}

Given $\Gamma^2$ is finite, there is an induced invariant Haar measure on $\mathbb{G}^{\Gamma^2}$ denoted by
\begin{equation*}
    d\mu_{\Gamma^2}\big(\{(h_e,b_f)\}_{(e,f)}\big) = \prod_{e\in\Gamma^1}d\sigma(h_e)\prod_{f:e\rightarrow \in\Gamma^2}d\nu^{h_e}(b_f),
\end{equation*}
where $\sigma = \mu\circ s^{-1}$ and $f$ is a face with source edge $e$. Similarly, we can also define an invariant Haar measure on $\mathbb{G}^{\Gamma^1}$,
    \begin{equation*}
    d\mu_{\Gamma^1}\big(\{(a_v,\gamma_e)\}_{(a,e)}\big) = \prod_{v\in\Gamma^0}d\sigma(a_v)\prod_{e:v\rightarrow \in\Gamma^1}d\nu^{a_v}(\gamma_e).
\end{equation*}

We will assume that the Haar measure $\mu$ is Borel: namely all $\mu$-measureable subsets are open in the smooth topology of $\mathbb{G}$.
\end{tcolorbox}

\begin{rmk}\label{tildeChaar}
    We will show that an invariant Haar measure $\mu$ equips the 2-graph states  with a \textit{Hopf cocategorical cointegral} (see \S \ref{cointegral}). In analogy with Hopf algebras \cite{RADFORD19921,Delvaux2006ANO,Radford:1976}, this should have several significant structural implications for Hopf categories, some of which have been mentioned in \cite{Chen:2025?}.
\end{rmk}

\subsection{Geometric 2-graph states}\label{classical2states}
Recall $\cV^X\subset\mathsf{Meas}_X$ is the full monoidal subcategory of measureable sheaves of Hermitian sections over $X$, and $\mathsf{Meas}_\text{herm}\subset \mathsf{Meas}$ is the corresponding full 2-subcategory over the site $\mathsf{Mfld}$ of smooth manifolds (equipped with a measure).

Objects of $\cV$ are measureable sheaves of Hermitian sections $\cV^X$ over $X\in \mathsf{Mfld}$. We shall leverage the measure $\mu$ to redefine the regularity of $\cV^X.$
\begin{definition}\label{geometric2graphs}
    A \textbf{geometric 2-graph state} $\phi$ is an object in the full monoidal subcategory $\C(\mathbb{G}^{\Gamma^2})\subset\cV^X$ over $X=(\mathbb{G}^{\Gamma^2},\mu_{\Gamma^2})$, consisting of those measureable sheaves of smooth Hermitian sections $\Gamma_c(H^X)$. Namely, they are sheaves of countably-generated Hilbert $L^2(X,\mu_{\Gamma^2})$-modules.

    Moreover, if $\Gamma=v$ is a single vertex, then $\C_q(\mathbb{G}^v) \simeq \mathsf{Hilb}$ is trivial. We equip $\C(\mathbb{G}^{\Gamma^2})$ with a unit $\eta:\mathsf{Hilb}\rightarrow \C(\mathbb{G}^{\Gamma^2})$ represented by the trivial line bundle $\underline{\bbC}$ over $X=(\mathbb{G}^{\Gamma^2},\mu_{\Gamma^2})$.
\end{definition}
\noindent The separability condition is natural from the physical point of view, but it was not necessary in \cite{Chen1:2025?,Chen:2025?}. It will also not strictly be necessary in this paper, but it shall be important for computations down the line.

\begin{proposition}
    If $\Gamma,\Gamma'$ are disjoint 2-graphs, then there are equivalences $\C(\mathbb{G}^{\Gamma\coprod\Gamma'})\simeq\C(\mathbb{G}^\Gamma\times\mathbb{G}^{\Gamma
    })\simeq \C(\mathbb{G}^{\Gamma})\times\C(\mathbb{G}^{\Gamma'})$ as measureable categories.
\end{proposition}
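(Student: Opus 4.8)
The plan is to prove the two equivalences separately: the first "$\simeq$" is a matter of identifying the underlying measure spaces, while the second is a direct application of the factorizability of measureable categories (Proposition \ref{basicfacts}, third item). The only genuine work lies in checking that the factorizability equivalence \eqref{factorizable} restricts to the full subcategories $\C(\mathbb{G}^{-})$ cut out by Definition \ref{geometric2graphs}.

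First I would establish $\mathbb{G}^{\Gamma\coprod\Gamma'}\cong\mathbb{G}^\Gamma\times\mathbb{G}^{\Gamma'}$ as measure spaces. Since $\Gamma\coprod\Gamma'$ has no simplices joining the two components, a 2-holonomy $F\in\operatorname{2Fun}(\Gamma\coprod\Gamma',B\mathbb{G})$ is precisely a pair of 2-holonomies, one on each component; moreover the fake-flatness condition $t(b_f)=h_{\partial f}$ couples a face $f$ only to the edges in its own component, so the constraint factorizes. As the edge and face sets of $\Gamma\coprod\Gamma'$ are the disjoint unions of those of $\Gamma$ and $\Gamma'$, the product formula for $d\mu_{\Gamma^2}$ shows that the induced invariant Haar measure $\mu_{\Gamma\coprod\Gamma'}$ is exactly the product $\mu_\Gamma\times\mu_{\Gamma'}$. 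Hence the ambient measureable categories coincide on the nose, and the first equivalence $\C(\mathbb{G}^{\Gamma\coprod\Gamma'})\simeq\C(\mathbb{G}^\Gamma\times\mathbb{G}^{\Gamma'})$ holds as the identity on the underlying data.

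For the second equivalence I would invoke Proposition \ref{basicfacts} with $X=\mathbb{G}^\Gamma$ and $Y=\mathbb{G}^{\Gamma'}$, giving $\cH^{X\times Y}\simeq\cH^X\times\cH^Y$ via the functor \eqref{factorizable}. It remains to verify that this restricts to an equivalence of the full subcategories $\C$. In the forward direction, the external tensor product $\operatorname{pr}_X^*E\otimes\operatorname{pr}_Y^*F$ of two finite-rank Hermitian coherent sheaves is again a finite-rank Hermitian coherent sheaf over $X\times Y$, of rank $\operatorname{rk}E\cdot\operatorname{rk}F$, by Proposition \ref{vectorbundles} and Serre--Swan; and since $L^2(X\times Y,\mu_X\times\mu_Y)\cong L^2(X,\mu_X)\otimes L^2(Y,\mu_Y)$ is separable whenever both factors are, countable generation of the completion is preserved. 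Thus \eqref{factorizable} maps $\C(\mathbb{G}^\Gamma)\times\C(\mathbb{G}^{\Gamma'})$ into $\C(\mathbb{G}^\Gamma\times\mathbb{G}^{\Gamma'})$, and fully-faithfulness of the restriction is then automatic, both sides being full subcategories of the ambient categories on which \eqref{factorizable} is already fully-faithful.

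The main obstacle is essential surjectivity of the restricted functor, i.e. the converse: given $H\in\C(\mathbb{G}^\Gamma\times\mathbb{G}^{\Gamma'})$, one must exhibit factors lying in the subcategories $\C$. Here I would use the quasi-inverse of \eqref{factorizable} furnished by the restriction functors $-|_X,-|_Y$: ambient essential surjectivity gives $H\cong\operatorname{pr}_X^*E\otimes\operatorname{pr}_Y^*F$, and restricting to a slice $X\times\{y_0\}$ (resp. $\{x_0\}\times Y$) through a point in the measure-theoretic support recovers $E$ (resp. $F$) up to a nonzero finite-dimensional fibre $F_{y_0}$. Multiplicativity of the rank then forces $E$ and $F$ to be finite-rank Hermitian, while countable generation of $H$ over $L^2(X\times Y)$ descends to the slices, so $E\in\C(\mathbb{G}^\Gamma)$ and $F\in\C(\mathbb{G}^{\Gamma'})$. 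Some care is needed to choose the slicing points within the support and to normalize the fibre, but no ingredient beyond Serre--Swan and the separability of $L^2$ tensor products is required.
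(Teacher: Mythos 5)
Your proof is correct and follows essentially the same route as the paper's: the paper likewise reduces the statement to the identification $\mathbb{G}^{\Gamma\coprod\Gamma'}\cong\mathbb{G}^\Gamma\times\mathbb{G}^{\Gamma'}$ of the underlying measure spaces together with the factorizability equivalence of \textbf{Proposition \ref{basicfacts}}, dismissing the rest as immediate since only the \emph{external} structure of $\C(\mathbb{G}^{-})$ as a measureable category is involved. Your additional verification that \eqref{factorizable} restricts to the full subcategories (the forward direction and the slicing argument for essential surjectivity) is diligence the paper omits, and it is sound granted the ambient equivalence exactly as the paper states it.
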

\noindent This is immediate from the third statement in \textbf{Proposition \ref{basicfacts}}, which concerns only the \textit{external} structure of $\C(\mathbb{G})$ as a measureable category. 

Internally, $\mathbb{G}$ itself has equipped source/target maps $s,t:\mathsf{H}\rtimes G \rightarrow G$, for which $G$ is equipped with the pushforward Haar measure $\sigma= \mu\circ s^{-1}$. These structure maps then induce pullback/inverse image functors $s^*,t^*:\C(G^{\Gamma^1})\rightarrow \C\big((\mathsf{H}\rtimes G)^{\Gamma^2}\big)$ of measureable sheaves \cite{Kashiwara1990SheavesOM,Baez:2012}. 

Crucially, we require $s,t$ to be surjective submerions \cite{Schommer_Pries_2011,MACKENZIE200046},\footnote{We will also require $s,t$ to induce maps of classifying (2-)stacks $B\mathbb{G}\rightarrow BG$. We will need this in \S \ref{nonabeliansurface} and \S \ref{bigradedQ}.} whence the induced pullbacks are strict cofibrant. Thus they admit a left-section functor $\varepsilon: \C\big((\mathsf{H}\rtimes G)^{\Gamma^2}\big)\rightarrow \C(G^{\Gamma^1})$ satisfying 
\begin{equation*}
    \varepsilon\circ s^* = \id_{\C(G^{\Gamma^1})},\qquad \varepsilon\circ t^* = \id_{\C(G^{\Gamma^1})},
\end{equation*}
which serves as the cocompositional unit on $\C(\mathbb{G}^{\Gamma^2})$. 

\begin{rmk}\label{doublecocat}
     It is useful to organize the 2-graph states by leveraging the notion of a \textit{double cocategory} \cite{Kerler2001}, where the "external/internal" structures are placed vertically/horizontally.\footnote{We can always do this for (co)categories $C$ internal to a bicategory $\cV$ which admits a 2-functor to $\mathsf{Cat}$ that preserves pullbacks and pushouts; see \textit{Remark \ref{modelchange}} later.} More precisely,  for $\phi,\phi'\in\C((\mathsf{H}\rtimes G)^{\Gamma^2})$ we write
\begin{equation}\begin{tikzcd}
	{\phi_1} & {\phi_2} \\
	{\phi_1'} & {\phi_2'}
	\arrow[""{name=0, anchor=center, inner sep=0}, "\psi", "\shortmid"{marking}, from=1-1, to=1-2]
	\arrow["{U_1}"', from=1-1, to=2-1]
	\arrow["{U_2}", from=1-2, to=2-2]
	\arrow[""{name=1, anchor=center, inner sep=0}, "{\psi'}"', "\shortmid"{marking}, from=2-1, to=2-2]
	\arrow["u", shorten <=4pt, shorten >=4pt, Rightarrow, from=0, to=1]
\end{tikzcd}\label{dudesquare},\end{equation}
where the vertical arrows $U_1,U_2,u$ are measureable morphisms and the horizontal \textit{co}arrows $\psi,\psi'\in \C(G^{\Gamma^1})$ are 1-holonomy states satisfying
\begin{equation*}
    s^*\psi = \phi_1,\qquad t^*\psi = \phi_2,\qquad s^*\psi' = \phi'_1,\qquad t^*\psi' = \phi_2'.
\end{equation*}
\end{rmk}

\subsubsection{Measureable functors and measureable natural transformations}\label{meas12mor}
To proceed, we first recall the notion of measureable functors and measureable natural transformations \cite{Yetter2003MeasurableC,Crane:2003gk,Baez:2012}.
\begin{definition}
    A {\bf measureable functor} $F:\cH^X\rightarrow \cH^Y$ between measureable categories $\cH^X,\cH^Y$ is a family $\{f_y\}_{y\in Y}$ of measures on $X$, together with a field $F$ of Hilbert spaces on $Y\times X$, such that 
    \begin{enumerate}
        \item the map $y\mapsto f_y(A)$ is measureable for all measureable subsets $A\subset X$, and
        \item $f_y(X\setminus\operatorname{cl}(\operatorname{supp}_yF))=0$ where $\operatorname{supp}_yF = \{x\in X\mid F_{y,x}\neq 0\}$.
    \end{enumerate}
    For $H^X\in\cH^X$, the target measureable field $F(H^X)\in\cH^Y$ is given by a direct integral
        \begin{equation*}
            (FH)_y = \int_X^\oplus df_y(x) F_{y,x}\otimes H_x.
        \end{equation*}

        The composition $F\circ G:\cH^X\rightarrow\cH^Z$ of measureable functors is given by the $Z$-family $\{(fg)_z\}_z$ of measures,
\begin{equation*}
    (fg)_y = \int_X df_z(y) g_y,
\end{equation*}
and the field of Hilbert spaces 
\begin{equation*}
    (F\circ G)_{z,x} = \int_Y^\oplus dk_{z,x}(y)F_{z,y}\otimes G_{y,z}
\end{equation*}
where $k$ is the {\it $f,g$-disintegration} measure \cite{Pachl_1978} satisfying
\begin{equation}
    \int_X d(fg)_z(x)\int_Y dk_{z,x}(y)F(y,x) = \int_Ydf_z(y) \int_X dg_y(x) F(y,x),\qquad \forall~ F\in L^0(Y\times X).\label{disintegration}
\end{equation}

    The identity functor $1_{\cH^X}$ is the dirac measure $\{\delta_x\}_{x\in X}$ and the rank-1 field $(1_{\cH^X})_{x,x'} = \bbC$.
\end{definition}
\noindent Note that not all tensor products of sections in $F_{y,-},H$ will define a section of $FH^X$. Only those which, for every $y\in Y$, that give rise to $L^2$-sections over $X$ will.

We also have the following notion, from Def. 48 of \cite{Yetter2003MeasurableC}.
\begin{definition}\label{measnat}
    A measureable natural transformation $\beta:(F,f)\Rightarrow (G,g): \cH^X\rightarrow \cH^Y$ is the data of a field of $g$-essentially bounded linear operators $\beta: F\rightarrow G$ such that on each component $H^X\in\cH^X$ we have a map
    \begin{equation*}
        F_y = \int^\oplus_X df_y(x) F_{y,x} \mapsto \int_X^\oplus dg_y(x)\sqrt{\frac{d\tilde f_y(x)}{dg_y(x)}} \id_{H_x}\otimes \beta_{y,x}(F_{y,x}),\qquad \forall y\in Y,
    \end{equation*}
    where $\tilde f_y$ is the dominated component of $f_y=\tilde f_y+\hat f_y$ which is absolutely continuous with respect to $g_y$.
\end{definition}
\noindent The 2-category $\mathsf{Meas}$ of measureable categories was constructed by Yetter, and it is in fact \textit{symmetric monoidal} with the identity $\cH^\emptyset \simeq \mathsf{Hilb}$; see Thm. 50 in \cite{Yetter2003MeasurableC}.

\begin{proposition}
    Two measureable functors $(F,f),(G,g): \cH^X\rightarrow \cH^Y$ are isomorphic iff (i) the underlying measures $f,g$ are equivalent $f\ll g,~g\ll f$ and (ii) the field of operators $\beta$ is invertible. 
\end{proposition}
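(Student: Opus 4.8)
The plan is to recognize that an isomorphism in the $2$-category $\mathsf{Meas}$ is precisely an invertible $2$-morphism, so the statement reduces to characterizing when a measureable natural transformation $\beta:(F,f)\Rightarrow(G,g)$ admits a two-sided vertical inverse $\gamma:(G,g)\Rightarrow(F,f)$. I would read both conditions off directly from the action of $\beta$ given in \textbf{Definition \ref{measnat}}, whose decisive feature is the Radon--Nikodym factor $\sqrt{d\tilde f_y/dg_y}$ built from the Lebesgue decomposition $f_y=\tilde f_y+\hat f_y$ relative to $g_y$.

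For necessity, the key observation I would exploit is that $\beta$ is blind to any $g_y$-singular component $\hat f_y$: the support of $\hat f_y$ is $g_y$-null, so against the target measure $dg_y$ the factor $\sqrt{d\tilde f_y/dg_y}$ annihilates every section carried there, and no subsequent operator $\gamma$ can recover such sections. Hence $\gamma\circ\beta=\id_{(F,f)}$ forces $\hat f_y=0$, i.e. $f_y\ll g_y$; running the identical argument on $\beta\circ\gamma=\id_{(G,g)}$ yields $g_y\ll f_y$, which is (i). With equivalence established, the scalar $\sqrt{df_y/dg_y}$ is finite and strictly positive a.e., so the fibrewise content of $\gamma\circ\beta=\id$ collapses to $\gamma_{y,x}\beta_{y,x}=\id_{F_{y,x}}$ and symmetrically $\beta_{y,x}\gamma_{y,x}=\id_{G_{y,x}}$, giving the pointwise invertibility (ii).

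For sufficiency I would run this in reverse: equivalence of measures trivializes the Lebesgue decomposition, so $\tilde f_y=f_y$ and $\tilde g_y=g_y$, and the derivatives obey the cocycle identity $(df_y/dg_y)(dg_y/df_y)=1$ a.e. I would then set $\gamma_{y,x}:=\beta_{y,x}^{-1}$, check that invertibility together with $f\ll g\ll f$ renders this field measureable and $f$-essentially bounded (hence a genuine natural transformation), and verify $\gamma\circ\beta=\id_{(F,f)}$ and $\beta\circ\gamma=\id_{(G,g)}$ by noting that the two square-root factors multiply to $\sqrt{(df_y/dg_y)(dg_y/df_y)}=1$ while the operators compose to the identity, exactly matching the identity transformation whose own Radon--Nikodym factor $\sqrt{df_y/df_y}$ is $1$.

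The main obstacle I anticipate is the measure-theoretic bookkeeping in the necessity half: making rigorous the claim that the $g_y$-singular part lies in the kernel of $\beta$ and must therefore vanish, and carefully tracking the square-root Radon--Nikodym factors through the vertical composition of \textbf{Definition \ref{measnat}} so that they genuinely cancel. A secondary technical point is confirming that the pointwise inverse field $\beta^{-1}$ stays essentially bounded and measureable once the measures are equivalent, for which I would lean on the measurability of operator inversion together with the boundedness of the Radon--Nikodym derivatives guaranteed by mutual absolute continuity.
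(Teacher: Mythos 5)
Your proposal is correct and takes essentially the same route as the paper, whose entire proof is the remark that the claim is immediate from \textbf{Definition \ref{measnat}}; your necessity/sufficiency unpacking (the $g_y$-singular part of $f_y$ is annihilated and hence must vanish, and the Radon--Nikodym square-root factors cancel against the inverse operator field) is exactly the content the paper leaves implicit. One caution: mutual absolute continuity does \emph{not} make the Radon--Nikodym derivatives bounded (they are only positive and finite a.e.), but this is harmless here --- the factors $\sqrt{df_y/dg_y}$ and $\sqrt{dg_y/df_y}$ cancel identically in the composite, and the essential boundedness and measureability of the inverse field is best read as part of what ``the field of operators $\beta$ is invertible'' means in condition (ii), rather than something to be derived from the measures.
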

\begin{proof}
    This is immediate from \textbf{Definition \ref{measnat}}.
\end{proof}
\noindent We say $F,G$ are \textbf{unitarily} isomorphic iff they are isomorphic and $\beta$ is in addition a field of unitary operators.

Note \textbf{Definition \ref{measnat}} says that the 2-category $\mathsf{Meas}$ is 2-enriched in measureable fields, similar to how, in the finite-dimensional case, $\mathsf{2Hilb}$ is 2-enriched in $\mathsf{Hilb}$ \cite{Baez1996HigherDimensionalAI,Ganter:2006}.

\subsubsection{2-gauge transformations}\label{2gauge} We now turn to the 2-gauge transformations acting on $\C(\mathbb{G}^{\Gamma^1})$. These are parameterized by the so-called \textit{decorated 1-graphs}, which are maps $\Gamma^1\rightarrow\mathbb{G}$ that assign Lie 2-group elements to edges in $\Gamma$, $$\zeta=\left\{(v\xrightarrow{e}v')\mapsto (a_v\xrightarrow{\gamma_e}a_{v'})\right\}_{(v,e)},\qquad \mathsf{t}(\gamma_e) = a_{v}^{-1}a_{v'}.$$ 
\begin{definition}\label{additive2gaugesymms}
    Denote by $\mathbb{U}\G^{\Gamma^1}$ the additive monoidal category internal to $\mathsf{Meas}$ additively generated\footnote{Here, by "additively generated" we mean that every objects in $\mathbb{U}\G^{\Gamma^1}$ is a direct sum  of homogeneous elements (cf. \cite{SOZER2023109155}), which is given by $\zeta\in \big(\mathbb{U}_q\G^{\Gamma^1}\big)^\text{hom} = \mathbb{G}^{\Gamma^1}$. This will be made more precise in a follow up work.} by 2-gauge parameters/decorated 1-graphs $\mathbb{G}^{\Gamma^1}$ equipped with fibrant source/target maps
    \begin{equation*}
        \tilde s,\tilde t:\mathbb{U}\G^{\Gamma^1}\rightrightarrows\mathbb{U}\g^{\Gamma^0},\qquad \zeta= a_v\xrightarrow{\gamma_e}a_{z'}\iff \begin{cases}
            \tilde s(\zeta) = a_v \\ 
            \tilde t(\zeta) = a_{v'},
        \end{cases}
    \end{equation*}
    and a unit section $\tilde\eta: a_v\mapsto \id_{a_v}$ given by the groupoid unit in $(\mathsf{H}\rtimes G)^{\Gamma^1}$. 
\end{definition}

The way these decorated 1-graphs act on the decorated 2-graphs $\mathrm{z}=(h_e,b_f)\in\mathbb{G}^{\Gamma^2}$ is through the \textit{inverse horizontal conjugation} action,
\begin{equation*}
    \operatorname{hAd}_\zeta^{-1}:  (h_e,b_f)\mapsto \zeta^{-1} \cdot (h_e,b_f)\cdot \zeta,\qquad \zeta=(a_v\xrightarrow{\gamma_e}a_{v'}).
\end{equation*}
Since the 2-graph states can be viewed as sections of Hermitian vector bundles $H^X\rightarrow X$ over $X=(\mathbb{G}^{\Gamma^2},\mu_{\Gamma^2})$, we can construct the pull-back bundle $(\Lambda_\zeta H)^X=(\operatorname{hAd}_\zeta^{-1})^*H^X$ along $\operatorname{hAd}^{-1}_\zeta$.

In \cite{Chen1:2025?}, this pullback $(\Lambda_\zeta H)^X$ was used in order to realize the 2-gauge transformations $\Lambda_\zeta$ concretely as bounded linear operators $U_\zeta$. For the purposes of this paper, however, we shall instead describe 2-gauge transformations directly as a measureable functor form the get-go. 

\medskip

Recall the notion of a \textit{direct image functor} of sheaves \cite{Kashiwara1990SheavesOM}.
\begin{definition}\label{2gaugetransfo}
    Let $X=(\mathbb{G}^{\Gamma^2},\mu_{\Gamma^2})$. A \textbf{2-gauge transformation} on $\C(\mathbb{G}^{\Gamma^2})$ is,  for each $\zeta\in\mathbb{U}\G^{\Gamma^1}$, an additive measureable invertible endofunctor $\Lambda_\zeta: \C(\mathbb{G}^{\Gamma^2})\rightarrow \C(\mathbb{G}^{\Gamma^2})$ given by the direct image functor $(\operatorname{hAd}^{-1}_\zeta)_*$ of sheaves along the horizontal conjugation automorphism $\operatorname{hAd}^{-1}_\zeta:X\rightarrow X$, such that there are identifications
    \begin{equation}
        s^*(\Lambda_\zeta\phi) = \Lambda_{\tilde s\zeta}(s^*\phi),\qquad t^*(\Lambda_\zeta\phi) =\Lambda_{\tilde t\zeta}  (t^*\phi),\qquad \forall~ \zeta\in\mathbb{U}\G^{\Gamma^1},~\phi\in\C(\mathbb{G}^{\Gamma^2}) \label{stequiv}
    \end{equation}
    against the cofibrant cosource/cotarget maps $s^*,t^*$ on the 2-graph states. Moreover, the counit is $\mathbb{U}\G^{\Gamma^1}$-invariant, $\epsilon(\Lambda_\zeta\phi)=\Lambda_{\tilde\eta}\epsilon(\phi)\cong\epsilon(\phi)$.
\end{definition}

In other words, $\Lambda$ determines $\C(\mathbb{G}^{\Gamma^2})$ as a measureable $\mathbb{U}\G^{\Gamma^1}$-module category,
\begin{equation*}
    \Lambda: \mathbb{U}\G^{\Gamma^1}\times \C(\mathbb{G}^{\Gamma^2})\rightarrow \C(\mathbb{G}^{\Gamma^2}),
\end{equation*}
which by \eqref{stequiv} is internal to $\mathsf{Meas}_\text{herm}$. It is crucial to emphasize here that the "morphisms layer" in $\Lambda$, as written here, are not populated by the 1-cells in $\G^{\Gamma^1}$ (ie. the decorated edges $(\mathsf{H}\rtimes G)^\text{edges}$), but instead by the monoidality witness/module associators $\alpha^\Lambda_{\zeta,\zeta'}: \Lambda_\zeta\circ\Lambda_{\zeta'}\Rightarrow\Lambda_{\zeta\cdot\zeta'}$ of 2-gauge transformations. 

\begin{rmk}\label{modelchange}
To treat the decorated 1-edges as 1-cells in $\mathbb{U}\G^{\Gamma^1}$, we recall the 2-truncation \textit{Remark \ref{truncation}} for co/categories internal to $\cV$. By treating $\pi_{<2}\C(\mathbb{G}^{\Gamma^2}),~\pi_{<2}\mathbb{U}\G^{\Gamma^1}$ as additive, \textit{$\mathsf{Meas}$-enriched} categories in this way, we see that the the corresponding 2-gauge transformations understood as an action functor $$\Lambda: \pi_{<2}\mathbb{U}\G^{\Gamma^1}\rightarrow \operatorname{Aut}_\mathsf{Cat}\big(\pi_{<2}\C(\mathbb{G}^{\Gamma^2})\big)$$ for which the 2-gauge transformations are equipped with the following structure 
\begin{equation}\begin{tikzcd}[ampersand replacement=\&]
	{\pi_{<2}\C(\mathbb{G}^{\Gamma^2})} \&\& {\pi_{<2}\C(\mathbb{G}^{\Gamma^2})}
	\arrow[""{name=0, anchor=center, inner sep=0}, "{\Lambda_{a_v}}", curve={height=-18pt}, from=1-1, to=1-3]
	\arrow[""{name=1, anchor=center, inner sep=0}, "{\Lambda_{a_{v'}}}"', curve={height=18pt}, from=1-1, to=1-3]
	\arrow["{\Lambda_{\gamma_e}}", shorten <=5pt, shorten >=5pt, Rightarrow, from=0, to=1]
\end{tikzcd},\qquad \zeta = [a_v\xrightarrow{\gamma_e}a_{v'}]\in\pi_{<2}\mathbb{U}\mathbb{G}^{\Gamma^1}.\label{2gtmodule}\end{equation}
Now if we replace $\pi_{<2}\C(\mathbb{G}^{\Gamma^2})$ with some other category, such as a finite linear semisimple one $\cD\in\mathsf{2Vect}$, then we obtain finite 2-representations of $\pi_{<2}\mathbb{U}\G^{\Gamma^1}$ as studied in \cite{Chen:2025?}.
\end{rmk}

\begin{tcolorbox}[breakable]
    \subsubsection*{Measureable functors and sheaves of bounded linear operators.}The way that this definition is related to the sheaves of bounded operators $U_\zeta$ used in \cite{Chen1:2025?,Chen:2025?} is through Prop. 46 of \cite{Baez:2012}.
\begin{proposition}\label{pullbackmeas}
    All measureable automorphisms on a measureable category $\cH^X$ over $(X,\mu)$ are measureably naturally isomorphic to one induced by pulling back a measureable map $f:X\rightarrow X$.
\end{proposition}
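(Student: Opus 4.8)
The plan is to use the explicit kernel presentation of measureable endofunctors together with the composition-by-disintegration formula \eqref{disintegration}. Recall that a measureable endofunctor $\Phi=(\{f_y\}_{y\in X},F)$ on $\cH^X$ is the data of a measureable family of measures on $X$ and a field $F$ of Hilbert spaces on $X\times X$, acting by $(\Phi H)_y=\int_X^\oplus df_y(x)\,F_{y,x}\otimes H_x$, with the identity functor being $(\{\delta_y\},\underline{\bbC})$. I would first suppose $\Phi$ is an automorphism, pick an inverse (up to natural isomorphism) $\Psi=(\{g_x\},G)$, and write out the two conditions $\Phi\circ\Psi\cong 1_{\cH^X}$ and $\Psi\circ\Phi\cong 1_{\cH^X}$ using the disintegration composition \eqref{disintegration} and the description of natural isomorphisms in \textbf{Definition \ref{measnat}}.

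The second step is to deduce the measure part. The composite kernel, obtained by convolving $\{f_y\}$ against $\{g_x\}$ through the disintegration measure, must be equivalent to the Dirac family $\{\delta_y\}$. One argues that a convolution of two nonnegative transition kernels can be almost-everywhere Dirac only if each factor is itself deterministic; concretely, $f_y=\delta_{\sigma(y)}$ and $g_x=\delta_{\tau(x)}$ for measureable maps $\sigma,\tau:X\rightarrow X$ with $\tau=\sigma^{-1}$ almost everywhere, so that $\sigma$ is a measureable isomorphism modulo null sets. The third step deduces the field part: once the measures collapse to Dirac, the composition formula forces $F_{y,x}\otimes G_{x,y'}$ (integrated against the now-Dirac disintegration) to be the rank-one field $\underline{\bbC}$ along the diagonal, and since a tensor product of Hilbert spaces is one-dimensional only if both factors are, $F_{y,\sigma(y)}$ must be a measureable line field over $X$. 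Any such line field is trivialized by a measureable unit-vector section (equivalently, absorbed into a unitary natural isomorphism), while the Radon--Nikodym factor $\sqrt{d\tilde f_y/dg_y}$ built into \textbf{Definition \ref{measnat}} supplies exactly the normalization making the comparison with the honest pullback $\sigma^*$ unitary. Hence $\Phi\cong\sigma^*$.

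The hard part will be the second step --- that invertibility forces the transition measures to be deterministic. The intuition (a stochastic kernel admitting a stochastic inverse must come from a measureable bijection) is standard, but making it precise \emph{up to natural isomorphism} requires handling the equivalence of measures and the invertible field of intertwiners simultaneously, and invoking uniqueness of disintegrations in the same spirit as the proof of \textbf{Proposition \ref{haarunique}}.

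Alternatively, I could bypass the kernel bookkeeping using \textbf{Remark \ref{measureablefieldsassheaves}}: since $\mathsf{Meas}_X\simeq \mathrm{Mod}_{L^\infty(X,\mu)}$, a measureable automorphism is a Morita autoequivalence of the module category of the commutative $W^*$-algebra $A=L^\infty(X,\mu)$. Because $A$ is commutative, its Picard group reduces to $\operatorname{Aut}(A)$ modulo invertible $A$-modules (which contribute only natural isomorphisms), and $\operatorname{Aut}(L^\infty(X,\mu))$ is precisely the group of measure-class-preserving measureable automorphisms of $X$, acting by pullback. This route is cleaner conceptually, but it hides the same measure-theoretic content inside the identification $\operatorname{Aut}(L^\infty(X,\mu))\cong \operatorname{Aut}_{\mathrm{meas}}(X)$, so I expect the essential obstacle to be unchanged.
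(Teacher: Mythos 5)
}
You should first know that the paper contains no proof of this statement to compare against: it is imported wholesale as Prop.~46 of \cite{Baez:2012}, and the surrounding text only uses it to reconcile the direct-image definition of 2-gauge transformations with the operator realization $U_\zeta$. So your proposal has to be judged on its own merits (and against the cited source, which argues by exactly the kind of kernel analysis you describe), not against an in-paper argument.

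On the merits, your outline is essentially the right one --- kernel presentation, collapse of the measure part to a deterministic map, collapse of the field part to a trivializable line field --- but the pivotal rigidity claim in your second step is false as you state it. It is \emph{not} true that a composite of two nonnegative kernels being a.e.\ Dirac forces each factor to be deterministic: take any measurable $T:X\rightarrow X$, set $g_x=\delta_{T(x)}$, and let $f_y$ be an arbitrary probability measure supported on the fibre $T^{-1}(y)$; then $(fg)_y=\int_X df_y(x)\,\delta_{T(x)}=\delta_y$ even though $f_y$ is as spread out as one likes. What rescues the argument is precisely that you have \emph{both} composites. From $(fg)_y\sim\delta_y$ and positivity one gets $\int_X g_x\big(X\setminus\{y\}\big)\,df_y(x)=0$, so $g_x$ is concentrated at $\{y\}$ for $f_y$-a.e.\ $x$; substituting this into the second condition $(gf)_x\sim\delta_x$ then gives $f_y\sim\delta_x$ for $f_y$-a.e.\ $x$, which (since $f_y\neq 0$) pins $f_y$ to a single atom $\sigma(y)$. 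Your write-up should therefore present the lemma as intrinsically two-sided, resting on extremality/positivity of Dirac measures, rather than as a one-kernel statement. Two further points need flagging: measurability of $y\mapsto\sigma(y)$ is a measurable-selection issue and requires $(X,\mu)$ to be a standard (Lebesgue) space --- harmless here, since $X=\mathbb{G}^{\Gamma^2}$ is Polish, but it is a genuine hypothesis; and your alternative Morita route through \textit{Remark \ref{measureablefieldsassheaves}} consumes the same hypothesis at the same spot, namely von Neumann's point-realization theorem identifying $\operatorname{Aut}\big(L^\infty(X,\mu)\big)$ with nonsingular point transformations, in addition to the unproven (and not entirely innocent) identification of Crane--Yetter measurable functors with normal $W^*$-module functors. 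Your third step is fine: $F_{y,\sigma(y)}\otimes G_{\sigma(y),y}\cong\bbC$ forces both factors to be lines, a measurable line field admits a measurable unit section and is hence trivial, and the Radon--Nikodym factor built into \textbf{Definition \ref{measnat}} absorbs the normalization in the comparison with $\sigma^*$.
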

Each automorphism $\Lambda_\zeta,~\zeta\in\mathbb{U}\G^{\Gamma^1}$ is thus measureably naturally isomorphic to one induced by pulling back the smooth measureable automorphism $\operatorname{hAd}_\zeta:X\rightarrow X$ on $X=(\mathbb{G}^{\Gamma^2},\mu_{\Gamma^2})$. The inverses of the operators $(U_\zeta^{-1})^\phi=(\operatorname{hAd}_\zeta^*)^\phi: \phi\rightarrow \phi|_{\operatorname{hAd}_\zeta-}$ are precisely those used in \cite{Chen1:2025?}.
\begin{definition}\label{2gauregularity}
    We say the 2-gauge transformations $\Lambda$ are \textbf{regular} iff the operators $\zeta\mapsto U_\zeta$ define measureable sheaves of (essentially) bounded linear operators over $(\mathbb{G}^{\Gamma^1},\mu_{\Gamma^1})$.
\end{definition}
\end{tcolorbox}


\begin{rmk}\label{secondarygaugetransfos}
    The module associators $\alpha^\Lambda$ are induced from invertible modifications $m:\operatorname{hAd}^{-1}\circ \operatorname{hAd}^{-1}\Rrightarrow \operatorname{hAd}^{-1}$ in the 2-functor 2-groupoid $\mathbb{G}^{\Gamma^2} = \operatorname{2Fun}(\Gamma^2,B\mathbb{G})$ describing the 2-holonomies. In the context of higher-gauge theory, these modifications are known as \textit{secondary gauge transformations} \cite{Kim:2019owc,Schenkel:2024dcd,Chen:2024axr,Chen1:2025?}. These can be ignored when $\mathbb{G}$ is strict, but they have non-identity components when $\mathbb{G}$ has a weak associators $\tau$. In forming the 2-truncation \textit{Remark \ref{modelchange}} they descend to crucial structures for the 2-gauge transformations $\Lambda$.
\end{rmk}

\subsubsection{Locality of states and gauge transformations}\label{locality}
Now a crucial feature of any lattice gauge theory is \textit{locality.} This is the notion that the data attached to the lattice, be it states or gauge transformations, should commute if they have disjoint support. In order to express this notion, we first define the so-called localized states and 2-gauge transformations.
\begin{definition}\label{local2graphstates}
    Let $(e,f)=e\xrightarrow{f}e_f\in\Gamma^2$ denote a 2-graph with source edge $e$. The \textbf{2-graph state localized at $(e,f)$} corresponding to $\phi\in\C(\mathbb{G}^{\Gamma^2})$ is defined by the measureable field $\phi_{(e,f)}$ whose stalk Hilbert spaces are given by
    \begin{equation*}
        (\phi_{(e,f)})_{\{(h_{e'},b_{f'})\}_{(e',f')}} = \chi^{[2]}_{(e,f)}\phi_{\{(e',f')\}_{(e',f')}},
    \end{equation*}
    where $\chi^{[2]}_{(e,f)}$ is the characteristic measure on ${\Gamma^2}$ supported at the face ${(e,f)}$. As a sheaf of smooth sections, $\phi_{(e,f)}$ is the restriction sheaf of $\phi$ along the inclusion $(e,f)\hookrightarrow \Gamma^2$. 
\end{definition}
\noindent More precisely, the restriction sheaf is the direct image of the induced pullback $\mathbb{G}^{\Gamma^2}\rightarrow \mathbb{G}^{(e,f)}$.

With these localized 2-graphs states, the geometry of the 2-graphs become apparent. If we let $\Delta$ denote the pullback measureable field of (group/groupoid) multiplication $\cdot_{h,v}$ in $\mathbb{G}$, such that we have, in Sweedler notation, an isomorphism of stalks 
\begin{equation*}
    (-\otimes-)\Delta(\phi)_{\mathrm{z},\mathrm{z}'} = \bigoplus (\phi_{(1)}^{h,v})_{\mathrm{z}}\otimes (\phi_{(2)}^{h,v})_{\mathrm{z}'}\cong \phi_{\mathrm{z}\cdot_{h,v}\mathrm{z}'},\qquad \mathrm{z},\mathrm{z'}\in\mathbb{G}
\end{equation*}
for all $\phi\in\C(\mathbb{G})$, then we can promote this coproduct to $\mathbb{G}^{\Gamma^2}$ in accordance with the geometry:
\begin{equation*}
    \Delta_{h,v}(\phi_{(e,f)}) = \begin{cases}
        \displaystyle \bigoplus (\phi_{(1)}^{h,v})_{(e_1,f_1)}\times (\phi_{(2)}^{h,v})_{(e_2,f_2)},&; (e,f) = (e_1,f_1)\cup_{h,v}(e_2,f_2)\\
        \phi_{(e_1,f_1)}\times \phi_{(e_2,f_2)} &; (e_1,f_1)\cap (e_2,f_2)=\emptyset
    \end{cases}
\end{equation*}
where $\cup_{h,v}$ are horizontal/vertical 2-graph gluing laws displayed in fig. \ref{fig:2-graph}. In the case where the 2-graphs $(e_1,f_1),(e_2,f_2)$ are disjoint, $(e,f)$ is interpreted as their disjoint union and the coproduct is grouplike/cocommutative.

\begin{figure}[h]
    \centering
    \includegraphics[width=0.8\linewidth]{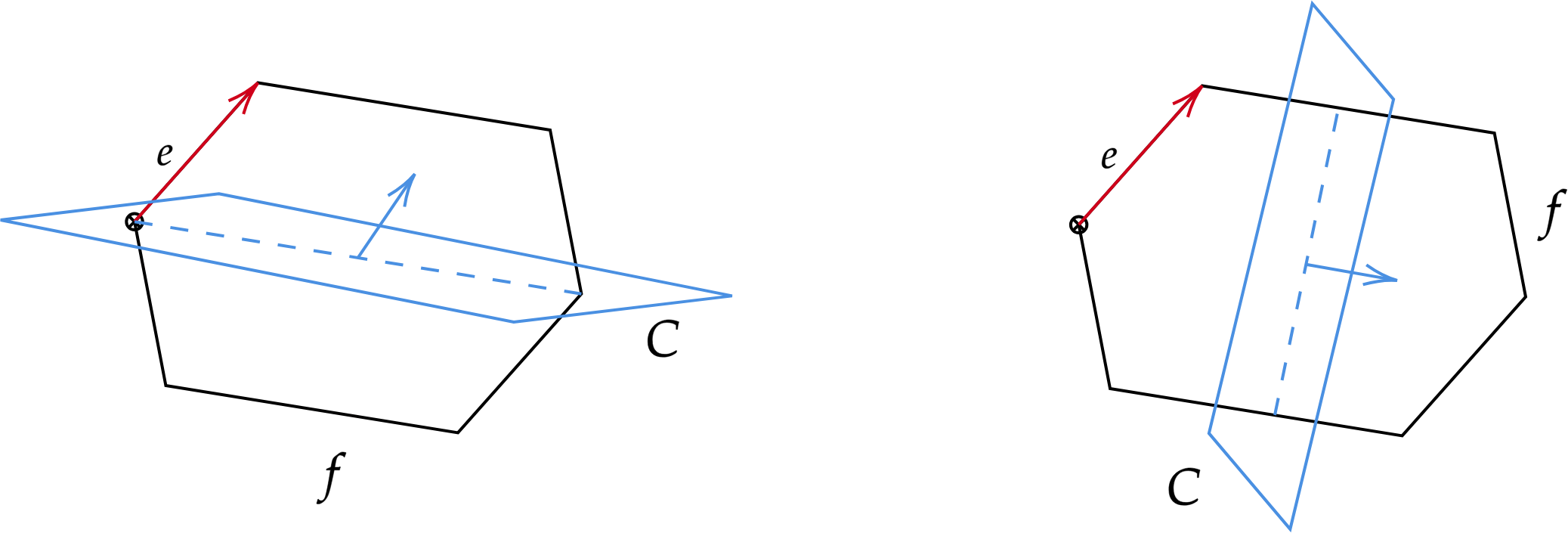}
    \caption{The two ways in which a local 2-graph $(e,f)$ can be decomposed into two 2-graphs, depending on how $(e,f)$ intersects an embedded 2-cell $C$ in the 3d manifold $\Sigma$. The left denotes $\cup_h,$ in which the normal vectors of $C$ are locally tangent to the source edge $e$ of $f$ around $v=s(e)$, while the right denotes $\cup_v$, where the normal vectors are perpendicular to $e$.}
    \label{fig:2-graph}
\end{figure}

We emphasize here that there are \textit{two} coproduct operations hidden in the symbol "$\Delta$", which correspond to the horizontal or the vertical labels $h,v$. These coproducts are required to satisfy the cointerchange law
\begin{equation*}
    (\Delta_h \times \Delta_h)\circ \Delta_v \cong (1\times \sigma\times 1)\circ (\Delta_v\times\Delta_v)\Delta_h
\end{equation*}
on $\C(\mathbb{G}^{\Gamma^2})$, which can be seen to arise from the geometry of \textit{triple intersections} of 2-cells in $\Sigma$. We shall in the following abbreviate $\Delta_{h,v}$ as $\Delta$ when no confusion is possible; explicit details can be found in \cite{Chen1:2025?}.

Similarly to for the 2-gauge transformations, it also inherits its notion of locality from the underlying geometry, this time of the \textit{1-graphs}. Like the 2-graphs states, this is captured by the coproducts $\tilde\Delta$ on $\mathbb{U}_q\G^{\Gamma^1}$. 
\begin{definition}
    Let $(v,e)=v\xrightarrow{e}v_f\in\Gamma^1$ denote a 1-graph with source vertex $v$. The \textbf{2-gauge transformation localized at $(v,e)$} corresponding to $\Lambda$ is a norm-smooth assignment
    \begin{equation*}
        \zeta\mapsto \chi_{(v,e)}^{[1]}\Lambda_\zeta,\qquad \zeta\in\mathbb{G}^{\Gamma^1}
    \end{equation*}
    of measureable direct image endofunctors on $\C(\mathbb{G}^{\Gamma^2})$, where $\chi^{[1]}_{(v,e)}$ is the characteristic measure on ${\Gamma^1}$ supported at the edge ${(v,e)}$.    
\end{definition}
\noindent In contrast to the 2-graph states, the way local 2-gauge transformations stack geometrically are dictated by its \textit{products} --- for homogeneous elements $\zeta,\zeta'\in\mathbb{U}\G^{\Gamma^1}$, we have
\begin{equation*}
      \zeta^{(v,e)}\zeta'^{(v',e')} = \begin{cases}
          (\zeta\cdot \zeta')^{(v,e)} &; (v',e')=(v,e) \\ 
          \zeta^{(v,e)}\circ\zeta'^{(v',e')} &; v' = t(e) \\ 
          0&; \text{otherwise}
      \end{cases},
\end{equation*}
where $\cdot,\circ$ denotes the group/horizontal and gorupoid/vertical composition of 2-gauge parameters $\mathbb{G}^{\Gamma^1}$.

\medskip

Recall from \textbf{Definition \ref{2gaugetransfo}} that the 2-gauge transformation operation $\Lambda$ makes $\C(\mathbb{G}^{\Gamma^2})$ into a $\mathbb{U}\G^{\Gamma^1}$-module. It is n fact a \textit{bimodule} over $\mathbb{U}\G^{\Gamma^1}$, equipped with a natural measureable natural isomorphism called the \textit{bimodule associator}
\begin{equation*}
    (\alpha^\bullet)_{\zeta,\zeta'}^{\phi}: \phi\bullet(\zeta\cdot\zeta') \xrightarrow{\sim}(\phi\bullet\zeta)\bullet\zeta',
\end{equation*}
which has only components on the identity due to the strict associativity of $\mathbb{G}$. This bimodule structure is related to the 2-gauge transformations though
\begin{equation}
    \zeta^{-1}\bullet \phi \bullet\zeta = \Lambda_\zeta\phi,\qquad\forall~ \phi\in\C(\mathbb{G}^{\Gamma^2}),\quad \zeta\in\mathbb{U}\G^{\Gamma^1}.\label{regrep}
\end{equation}
The coproduct $\tilde\Delta$ on $\mathbb{U}\G^{\Gamma^1}$, on the other hand, is instead induced from the consistency of this bimodule structure against the tensor product $\otimes$ of 2-graph states,
\begin{equation}
    \otimes \big((\phi\times\phi')\bullet  {\tilde\Delta_\zeta}\big) \xrightarrow{\sim} (\phi\otimes\phi')\bullet \zeta,\label{derivation}
\end{equation}
where $\phi,\phi'\in\C(\mathbb{G}^{\Gamma^2})$ and $\zeta\in\mathbb{U}\G^{\Gamma^1}$. The geometric interpretation of this so-called "derivation property" \eqref{derivation} is shown in fig. \ref{fig:derivation}.

\begin{figure}
    \centering
    \includegraphics[width=0.75\linewidth]{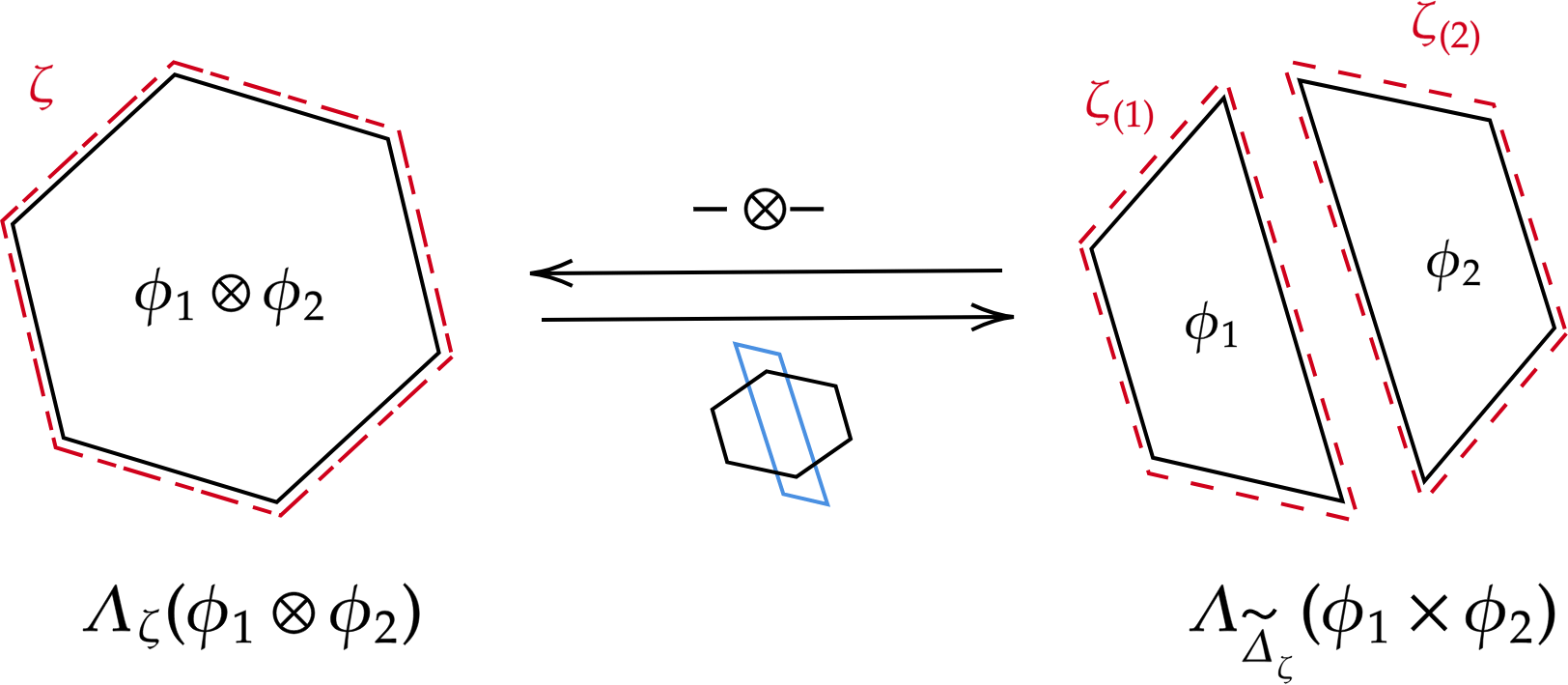}
    \caption{The graphical representation of the derivation property \eqref{derivation}, which implements the geometric consitency between the product $\otimes$ and the 2-gauge transformation action $\Lambda$.}
    \label{fig:derivation}
\end{figure}

Algebraically, \eqref{derivation} implies that $\C_q(\mathbb{G}^{\Gamma^2})$ is a monoidal $\mathbb{U}_q\G^{\Gamma^1}$-module category, since it gives the $\bullet$-module \textit{tensorator}; see \S \ref{observ} later, as well. 
\begin{rmk}
    The reason why \eqref{derivation} is called the "derivation property" is the following. One categorical level down, the same algebraic condition 
\begin{equation*}
    (\psi\psi')\bullet\zeta = \mu\big((\psi\otimes\psi')\tilde\Delta_\zeta\big) = (\psi\zeta)\psi' + \psi(\psi'\zeta)
\end{equation*}
is precisely the Leibniz rule for the derivation action of $\zeta\in U\g$ on functions $\psi,\psi'\in C(G)$ of a compact Lie group $G$.
\end{rmk}

We can now state the central characterization theorem proven in \cite{Chen1:2025?}.
\begin{theorem}
    Let ${\bf C}=\mathsf{LieGrp}\subset\mathsf{Mfld}$ denote the site of Lie groups. 
    \begin{itemize}
        \item The 2-graph states $\C(\mathbb{G}^{\Gamma^2})\in\mathsf{Hopf}(\operatorname{Cocat}_{\cV^{\bf C}})$ is a strict symmetric Hopf comonoidal cocategory internal to $\cV^{\bf C}$. 
        \item Given \textbf{Definition \ref{2gauregularity}}, the \textit{regular} 2-gauge transformations $\mathbb{U}\G^{\Gamma^1}\in \mathsf{Hopf}(\operatorname{Cat}_{\cV^{\bf C}})$ define a strict cosymmetric Hopf monoidal category internal to $\cV^{\bf C}$. 
    \end{itemize}
\end{theorem}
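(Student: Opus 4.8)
The plan is to verify the defining data and axioms of a symmetric Hopf cocategory internal to $\cV^{\bf C}$ (Definitions \ref{internalcats} and \ref{internalhopf}) directly from the internal category structure of the Lie 2-group $\mathbb{G}=\mathsf{H}\rtimes G\rightrightarrows G$ together with its two group products, exploiting that $X\mapsto\cV^X$ is contravariantly functorial and symmetric monoidal (Proposition \ref{basicfacts}). The cleanest organization is the double cocategory of Remark \ref{doublecocat}: the \emph{vertical}/internal structures are dualized from the groupoid operations of $\mathbb{G}$ (source, target, unit, composition), while the \emph{horizontal}/comonoidal structures are dualized from the group operations (multiplication, unit, inverse) on $\mathbb{G}^{\Gamma^2}$. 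Each such operation is a measurable map of measure spaces, so it pulls back to a cofibrant inverse-image functor (or pushes forward to a direct-image functor) of Hermitian sheaves; every axiom then reduces, via functoriality and the factorizability $\cV^{X\times Y}\simeq\cV^X\times\cV^Y$, to the corresponding group/groupoid identity.

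First I would assemble the internal cocategory structure. Since $s,t:\mathsf{H}\rtimes G\to G$ are surjective submersions, the induced $s^*,t^*:\C(G^{\Gamma^1})\to\C(\mathbb{G}^{\Gamma^2})$ are strict cofibrant cosource/cotarget functors (as recorded in the excerpt), and the unit section $\id_g=(1,g)$ dualizes to the left-section counit $\varepsilon$ with $\varepsilon\circ s^*=\varepsilon\circ t^*=\id$. The groupoid composition of $\mathbb{G}$ — the vertical product — dualizes to the cocomposition $\Delta_v$ valued in the pushout over $\C(G^{\Gamma^1})$; strict coassociativity and the counit laws follow from strict associativity and unitality of composition in $\mathbb{G}$, transported through contravariant functoriality of inverse image. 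This realizes $\C(\mathbb{G}^{\Gamma^2})\in\operatorname{Cocat}_{\cV^{\bf C}}$.

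Next I would equip this cocategory with its comonoidal Hopf structure. The group multiplication on $\mathbb{G}^{\Gamma^2}$ (the horizontal product, splitting into the $\mathsf{H}$- and $G$-products under the semidirect decomposition) dualizes to the comonoidal coproduct $\Delta_h$; the group unit gives $\eta$, the trivial map gives the counit $\epsilon$, and the group inverse — an anti-homomorphism — dualizes to the op-comonoidal op-monoidal antipode $S$. The Hopf axioms (co-pentagon/co-triangle, counit, antipode) reduce to the group laws on $\mathbb{G}^{\Gamma^2}$ read as isomorphisms of direct-integral sheaves, where separability (Definition \ref{geometric2graphs}) and the invariance/uniqueness of the Haar measure (Proposition \ref{haarunique}) ensure these hold as genuine measurable natural isomorphisms rather than merely fiberwise ones. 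The one genuinely higher-categorical step is the cointerchange law $(\Delta_h\times\Delta_h)\Delta_v\cong(1\times\sigma\times1)(\Delta_v\times\Delta_v)\Delta_h$, which I would derive by dualizing the interchange law of the strict $2$-group $\mathbb{G}$ — geometrically, the combinatorics of triple intersections of $2$-cells (fig. \ref{fig:2-graph}). Symmetry of the Hopf cocategory (the cobraiding $R:\Delta_h\Rightarrow\Delta_h^{\mathrm{op}}$ with $R_{21}R_{12}=1$) comes from the grouplike/cocommutative behaviour of $\Delta_h$ on disjoint $2$-graphs, i.e. from the symmetry $\sigma$ of $\mathsf{Meas}$.

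Finally, the statement for $\mathbb{U}\G^{\Gamma^1}$ is dual: one repeats the argument with category/cocategory exchanged, using the fibrant $\tilde s,\tilde t:\mathbb{U}\G^{\Gamma^1}\rightrightarrows\mathbb{U}\g^{\Gamma^0}$ and unit $\tilde\eta$ for the internal category structure, the group operations on decorated $1$-graphs for the (co-)Hopf structure, and Proposition \ref{pullbackmeas} for the bounded-operator realizations, with cosymmetry dual to the symmetry above. The main obstacle is not the algebra — which is forced by the $2$-group identities — but the analysis: one must check that every pullback/pushforward is a well-defined measurable functor preserving the Hermitian and separability conditions, and that all coherence $2$-cells are measurable natural isomorphisms. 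This is exactly where the disintegration formula \eqref{disintegration} and the absolute-continuity bookkeeping of Definition \ref{measnat} enter, so that the $f,g$-disintegration measures match across both sides of the (co)interchange and antipode identities; establishing these measure-equivalences on the nose, rather than up to null sets that could obstruct strictness, is the delicate part.
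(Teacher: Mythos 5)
The first thing to flag is that this paper does not actually prove the statement: it is one of the structural results explicitly ``stated without proof'' in \S\ref{firstpaper} and imported from the first paper of the series \cite{Chen1:2025?}, so there is no in-paper proof to compare against line by line. Judged against the constructions this paper does spell out, your architecture is the intended one: the cofibrant cosource/cotarget $s^*,t^*$ with the left-section counit $\varepsilon$ (\S\ref{classical2states}), the cocomposition and coproduct obtained as pullbacks of the groupoid/group multiplications $\cdot_{h,v}$ promoted along the gluing geometry of \S\ref{locality}, the cointerchange law inherited from the strict interchange law of $\mathbb{G}$, the measure-theoretic bookkeeping via disintegration and uniqueness of the invariant Haar measure (Proposition \ref{haarunique}), and the dual, fibrant, composition-carrying story for $\mathbb{U}\G^{\Gamma^1}$.

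There is, however, one step that would fail as written: your treatment of ``symmetric''. By the paper's own gloss directly below the theorem (``cosymmetric means the coproduct is cocommutative up to invertible higher morphisms''), symmetry of the Hopf \emph{co}category $\C(\mathbb{G}^{\Gamma^2})$ is the dual property: commutativity of the monoidal product $\otimes$ up to coherent isomorphism --- which is exactly what the undeformed sheaf tensor product provides, and exactly what the deformation to $\ostar$ destroys in the quantum theorem of \S\ref{quantum2states}. You instead try to produce a symmetric cobraiding $R:\Delta_h\Rightarrow\Delta_h^{\text{op}}$ from ``the grouplike/cocommutative behaviour of $\Delta_h$ on disjoint $2$-graphs''. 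But $\Delta_h$ is cocommutative \emph{only} on disjoint pieces; on glued configurations it is dual to the non-Abelian horizontal multiplication of $\mathbb{G}$, and no cocommutativity (nor any involutive identification of $\Delta_h$ with $\Delta_h^{\text{op}}$) holds there for general $\mathbb{G}$. The property you need to verify is the symmetry of $\otimes$, not of $\Delta_h$; as stated, your symmetry step proves the wrong thing and cannot be repaired for non-Abelian $\mathbb{G}$.

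Two smaller slips in the same vein: you never actually construct the comonoidal product $\otimes$ (the fibrewise tensor product of Hermitian sheaves), yet it is required even to state the antipode axiom $\otimes\circ(S\times 1)\circ\Delta_h\cong\eta\circ\epsilon$ and the bimonoidal compatibility you invoke; and your unit/counit assignments are reversed --- pullback along the group unit $\ast\to\mathbb{G}^{\Gamma^2}$ gives the counit $\epsilon:\C(\mathbb{G}^{\Gamma^2})\to\mathsf{Hilb}$, while pullback along the projection $\mathbb{G}^{\Gamma^2}\to\ast$ gives the unit $\eta$, i.e.\ the trivial line bundle of Definition \ref{geometric2graphs}.
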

\noindent Here, "(co)symmetric" refers to the (co)monoidal (co)product. This theorem led to the following definitions.
\begin{definition}\label{catcordring}
Suppose $\Gamma$ is a single PL 2-disc, consisting of a single face bounded by an edge loop based at a vertex. 
    \begin{enumerate}
        \item We call $\C(\mathbb{G}^{\Gamma^2})= \C(\mathbb{G})$ the \textbf{categorical coordinate ring} of $\mathbb{G}$.
        \item We call $\mathbb{U}\G^{\Gamma^1} = \mathbb{U}\G$ the \textbf{categorical universal enveloping algebra} of $\G$.
    \end{enumerate}
\end{definition}
\noindent We emphasize here that this name and notation for $\mathbb{U}\G$ is just suggestive: while $\C(\mathbb{G})$ was concretely constructed, $\mathbb{U}\G$ was specified indirectly through the 2-gauge transformation on it.

In the following, we will recall the quantum deformation of these structures introduced by the 2-Chern-Simons theory.

\subsection{Deformation quantization and the combinatorial 2-Fock-Rosly bracket}\label{2fockrosly}
Let us now briefly recall the procedure for deformation quantizing $\C(\mathbb{G}^{\Gamma^2})$. From the classical 2-Chern-Simons action $S_{2CS}$, one can extract the presymplectic form $\omega$ as well as the Lie 2-algebra cobracket $\delta$. The coefficients of these data, as in the usual Chern-Simons theory \cite{Alekseev:1994pa}, combine to give a classical 2-graded $r$-matrix \cite{Bai_2013,Chen:2023integrable} of degree-1
\begin{equation*}
    (1\otimes \mu_1)r=(\mu_1\otimes 1)r,\qquad r\sim \omega + \delta\in(\G^{\otimes 2})_1.
\end{equation*}
It is known \cite{chen:2022} that the semiclassical symmetries of 2-Chern-Simons theory is captured by the Lie 2-bialgebra $(\G;\delta)$ determined by this classical 2-$r$-matrix.

We now leverage the main result in \cite{Chen:2012gz}. 
\begin{theorem}
    There is a one-to-one correspondence between Lie 2-bialgebras and \textbf{Poisson-Lie 2-groups} $(\mathbb{G};\Pi)$, which are Lie 2-groups $\mathbb{G}$ equipped with a \textit{multiplicative} bivector field $\Pi\in\mathfrak{X}^2(\mathbb{G})$. 
\end{theorem}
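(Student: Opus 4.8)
The plan is to categorify Drinfeld's classical correspondence between Lie bialgebras $(\g,\delta)$ and connected simply-connected Poisson-Lie groups $(G,\Pi)$, using the $2$-Lie theorem recalled above to pass between the infinitesimal and global pictures. As in the classical case there are two directions to establish --- \emph{differentiation} $(\mathbb{G};\Pi)\mapsto(\G;\delta)$ and \emph{integration} $(\G;\delta)\mapsto(\mathbb{G};\Pi)$ --- together with a proof that they are mutually inverse. Throughout one must track the internal grading: $\h$ sits in degree $-1$, $\g$ in degree $0$, the differential $\mu_1$ integrates to $\mathsf{t}$, and both $\delta$ and $\Pi$ carry the grading dual to the degree-$1$ invariant pairing, so that the $r$-matrix condition $(1\otimes\mu_1)r=(\mu_1\otimes 1)r$ is respected.

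For differentiation I would start from a multiplicative bivector $\Pi\in\mathfrak{X}^2(\mathbb{G})$. Multiplicativity with respect to the strict $2$-group structure forces $\Pi$ to vanish along the unit section $\id_g=(1,g)$, so its intrinsic first-order derivative along the units is well-defined and yields the cobracket $\delta:\G\to\G\wedge\G$. The point is then to read off the defining identities of a Lie $2$-bialgebra from those of $\Pi$: the multiplicativity cocycle condition linearizes to the $1$-cocycle (co-Leibniz) compatibility of $\delta$ with the bracket $\mu_2$, while the vanishing of the Schouten bracket $[\Pi,\Pi]=0$ (the Poisson/Jacobi condition) linearizes to the co-Jacobi identity for $\delta$. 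Because $\mathbb{G}$ carries two compatible multiplications --- the groupoid (vertical) composition and the $2$-group (horizontal) product --- one recovers the full graded structure, and compatibility of $\Pi$ with the target $t(h,g)=g\mathsf{t}(h)$ produces precisely the $\mu_1$-compatibility dual to the $r$-matrix condition above.

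For integration I would invoke the $2$-Lie theorem to integrate $\G$ to the unique connected simply-connected Lie $2$-group $\mathbb{G}$. The cobracket $\delta$, viewed as a $1$-cocycle on $\G$ valued in $\G\wedge\G$ under the (co)adjoint action, integrates by a van Est--type argument (using simple-connectivity to lift from Lie $2$-algebra to Lie $2$-group cohomology) to a global multiplicative section, which upon right-translation defines $\Pi$; co-Jacobi then guarantees $[\Pi,\Pi]=0$. The essential subtlety is that the cocycle must be integrated \emph{twice over}, compatibly at both the groupoid and the $2$-group levels, with the resulting $\Pi$ respecting the interchange between the two layers. This is the analogue, at the level of the strict $2$-group $\mathsf{H}\rtimes G\rightrightarrows G$, of the Mackenzie--Xu correspondence between Lie bialgebroids and Poisson groupoids, now further constrained by the crossed-module relations.

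Finally, differentiation and integration are mutually inverse because, on the simply-connected $\mathbb{G}$, a multiplicative bivector is determined by its linearization together with the cocycle property, reducing uniqueness to the classical Lie-theoretic statement on the base $G$. I expect the main obstacle to be exactly this \emph{double} compatibility in the integration step: verifying that a single Lie $2$-algebra cocycle integrates to a bivector that is simultaneously multiplicative for the vertical and horizontal products and compatible with $\mathsf{t}$, while keeping the Schouten-bracket computation consistent with the graded degree-$1$ Poisson structure. This is the content supplied by \cite{Chen:2012gz}, and I would organize the argument so as to reduce each graded identity to an application of the classical correspondence on $G$ together with an $\h$-equivariance check.
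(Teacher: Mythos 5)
Your proposal cannot be compared against an in-paper argument, because the paper does not prove this statement at all: it is imported verbatim as the main result of \cite{Chen:2012gz} (the sentence immediately preceding the theorem reads ``We now leverage the main result in \cite{Chen:2012gz}''), exactly as the earlier 2-Lie theorem is. That said, your sketch does reconstruct the strategy of the cited reference faithfully: differentiation by linearizing the multiplicative bivector $\Pi$ along the unit section to obtain the cobracket $\delta$, with multiplicativity linearizing to the cocycle condition and $[\Pi,\Pi]=0$ to co-Jacobi; integration of $\delta$ by a van Est-type lifting that uses connectedness and simple-connectivity; and the identification of the genuine difficulty, namely that the integrated bivector must be multiplicative simultaneously for the horizontal (group) and vertical (groupoid) structures and compatible with $\mathsf{t}$ --- the 2-group analogue of the Mackenzie--Xu picture. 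Two remarks. First, your statement is actually more careful than the paper's: the correspondence only holds for \emph{connected, simply-connected} Poisson-Lie 2-groups, a hypothesis the paper's phrasing elides but which its earlier 2-Lie theorem makes explicit, and which your integration step correctly requires. Second, as a proof your text remains a plan: the two steps you yourself flag as the crux --- carrying out the graded van Est integration compatibly at both layers, and verifying the Schouten-bracket computation in the degree-1 graded setting --- are named but not executed, so the proposal should be read as a correct architecture whose load-bearing lemmas are still those supplied by \cite{Chen:2012gz}.
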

\noindent Elements of the universal envelope of $\G$, such as the classical 2-$r$-matrix $r$, act on functions of $\mathbb{G}$ by graded derivations \cite{Chen:2012gz}. 

This induces a 2-graded Poisson bracket $\{-,-\}$ \cite{Chen:2012gz,Chen:2023integrable} which gives rise to the following \textbf{combinatorial 2-Fock-Rosly Poisson brackets} (here $\hbar = \frac{2\pi}{k}$)
\begin{align*}
    \{f_{(e_1,f_1)},f_{(e_2,f_2)}\} &= \hbar\big({\delta_{t(e),s(e')}} {r (f_{(e_1,f_1)}\cdot f_{(e_2,f_2)})} - {\delta_{s(e),t(e')}}{(f_{(e_1,f_1)}\cdot f_{(e_2,f_2)}) r^T}\big) \\
    &\equiv \hbar \big((-\cdot-) [r,\Delta_\mathrm{h}(\phi_{(e,f)})]_c\big)
\end{align*}
on localized functions $f_{(e,f)}\in C(X)$ of the decorated 2-graphs $X=\mathbb{G}^{\Gamma^2}$. Here, $(e,f)=(e_1,f_1)\cup_\mathrm{h}(e_2,f_2)$ denotes the 2-graph obtained from \textit{gluing} $(e_1,f_1)$ with $(e_2,f_2)$ such that $e = e_1\ast e_2$ or $e=e_2\ast e_1$ (ie. the \textit{source edges} are composed).

\subsubsection{Quantum 2-graph states}\label{quantum2states}
We now invoke the central result in \cite{Bursztyn2000DeformationQO}: for each smooth Riemannian manifold $X$ and a fixed $\star$-product on the $C^*$-algebra $C(X)$, there is a unique (up to isometry) $\star$-product on the smooth sections $\Gamma(E)$ of a Hermitian vector bundle $E\rightarrow X$, treated as sheaves of $C(X)$-modules over the ring of power series in $\hbar=\frac{2\pi}{k}$. We denote such sheaves by $\Gamma(E)[[\hbar]]$.

As such, the $\star$-product on $C(X)$, obtained from the deformation quantization along the Fock-Rosly 2-group Poisson bracket $\{-,-\}$ above, extends to sections $\Gamma_c(H^X)$ of \textit{any} measureable Hermitian vector bundle $H^X\rightarrow X$ over $X$. This extension, in particular, satisfies the following \textit{semiclassical limit}
\begin{equation*}
    \lim_{\hbar\rightarrow 0}\frac{1}{\hbar}\big(\xi\star\xi' - \xi\star \xi\big) = \{\xi,\xi'\},
\end{equation*}
where $\xi,\xi'$ are sections in the \textit{same} sheaf $\Gamma_c(H^X)$. 

\medskip

Moreover, this deformation quantization also determines a $\star$-product on sections of the tensor product sheaf $\big(\Gamma_c(H^X)\otimes \Gamma_c(H'^X)\big)[[\hbar]]\cong \Gamma_c((H\otimes H')^X)[[\hbar]]$. This allows us to define a \textbf{tensor $\ostar$-product}, as a deformation the usual symmetric tensor product $\otimes$, equipped with sheaf automorphisms
\begin{equation}
    \Gamma_c(H^X)[[\hbar]]\ostar \Gamma_c(H'^X)[[\hbar]]\cong \Gamma_c((H\otimes H')^X)[[\hbar]]\label{quantumautomorphism}
\end{equation}
over the ring of formal power series in $\hbar$. This deformed tensor product then by construction satisfies the following \textbf{Dirac quantization condition}: formally, for each $\phi=\Gamma_c(H^X)[[\hbar]],\phi'=\Gamma_c(H'^X)[[\hbar]]\in\C(\mathbb{G}^{\Gamma^2})$, we have a sheaf automorphism on $\Gamma_c((H\otimes H')^X)[[\hbar]]$ on which
\begin{equation}
       \lim_{\hbar\rightarrow 0}\frac{1}{i\hbar}\big(\xi\ostar\xi' - \xi'\ostar\xi\big) \mapsto \{\xi,\xi'\},\label{semiclassical}
\end{equation}
with respect to the combinatorial 2-group Fock-Rosly Poisson bracket, for sections $\xi\in\phi,~\xi'\in\phi'$ on \textit{different} sheaves.

More formally, if we write "evaluating at $\hbar=0$" as a functor, then the above disgussion renders the following diagram
\begin{equation}
    \begin{tikzcd}[cramped]
	{\C_q(\mathbb{G}^{\Gamma^2})\times \C_q(\mathbb{G}^{\Gamma^2})} & {\C_q(\mathbb{G}^{\Gamma^2})} \\
	{\C(\mathbb{G}^{\Gamma^2})\times \C(\mathbb{G}^{\Gamma^2})} & {\C(\mathbb{G}^{\Gamma^2})}
	\arrow["\ostar", from=1-1, to=1-2]
	\arrow[from=1-1, to=2-1]
	\arrow["\cong", Rightarrow, from=1-1, to=2-2]
	\arrow[from=1-2, to=2-2]
	\arrow["\otimes"', from=2-1, to=2-2]
\end{tikzcd}\label{ostarproduct}
\end{equation}
commutative, up to the homotopy given by the sheaf automorphism \eqref{quantumautomorphism}. 

\begin{definition}\label{quantumhermitian}
    Let $q=e^{i\hbar}=e^{i\frac{2\pi}{k}}$ and $X=(\mathbb{G}^{\Gamma^2},\mu_{\Gamma^2})$. Denote by $\cV^X_q\subset\cH^X$ the category of meaureable sheaves of \textbf{Hermitian $\hbar$-power series sections} $\phi=\Gamma_c(H^X)[[\hbar]]$ --- namely, $\phi$ is a local finitely-generated projective Hilbert $C(X)\otimes_{\bbC}\bbC[[\hbar]]$-module. The morphisms are measureable essentially bounded $\bbC[[\hbar]]$-linear operators.
    
    The associated tensor $\star$-product $\ostar$ \eqref{quantumautomorphism} defines a monoidal functor $\ostar: \cV^X_q\times\cV^X_q\rightarrow \cV^X_q$  satisfying \eqref{semiclassical} and fits into \eqref{ostarproduct}. This makes $\cV^X_q\in\mathsf{Meas}$ into a {non-commutative algebra object} in $\mathsf{Meas}$.

The \textbf{quantum 2-graph states} $\C_q(\mathbb{G}^{\Gamma^2})\subset\cV_q^X$ on $X=(\mathbb{G}^{\Gamma^2},\mu_{\Gamma^2})$ is the full monoidal 2-subcategory whose norm-completions $\Gamma(H^X)[[\hbar]]$ are \textit{separable}: namely they define sheaves of countably-generated Hilbert $L^2(X,\mu_{\Gamma^2})\otimes_\bbC \bbC[[\hbar]]$-modules over $X$. 
\end{definition}
\noindent These quantum 2-graph states $\C_q(\mathbb{G}^{\Gamma^2})$ are precisely those which underlie the discrete degrees-of-freedom in quantum 2-Chern-Simons theory.


Recall that elements of the universal envelope of $\G$ acts on $C(\mathbb{G})$, and hence sections on $\mathbb{G}$, by derivations.
\begin{theorem}\label{quantumhopfcocat}
    The quantum 2-graph states $\C_q(\mathbb{G}^{\Gamma^2})$ is a strict (non-symmetric) Hopf comonoidal cocategory internal to $\cV^{\bf C}$, equipped with an invertible \textbf{cobraiding} $\mathfrak{R}=(R,R')$ of the following form:
    \begin{enumerate}
        \item suppressing the strict associators, $\mathfrak{R}$ is an invertible \textit{bimodule internal endofunctor}
    $$\mathfrak{R}=R\ostar-\ostar R^{-1}: \C_q(\mathbb{G}^{\Gamma^2})\times \C_q(\mathbb{G}^{\Gamma^2})\to \C_q(\mathbb{G}^{\Gamma^2})\times \C_q(\mathbb{G}^{\Gamma^2})$$
        induced by the "conjugation" by a so-called \textbf{2-$R$-matrix} $R\in \C_q((\mathsf{H}\rtimes G)^{\Gamma^2})\times \C_q((\mathsf{H}\rtimes G)^{\Gamma^2})$, satisfying the quasitriangularity condition/\textit{2-Yang-Baxter relations} (cf. \cite{Chen:2023tjf,Chen1:2025?})
\begin{equation}
    (\Delta\times 1)R \cong R^{13}\ostar R^{12},\qquad (1\times\Delta) R \cong R^{13}\ostar R^{23},\label{2yb}
\end{equation}
\item there is an internal natural transformation $R': \mathfrak{R}\circ\Delta\Rightarrow \Delta^\text{op}$  whose components witness the intertwining relations,
\begin{equation*}
    \Delta^\text{op}(\phi)\ostar R \cong  R\ostar  \Delta(\phi),\qquad\phi\in \C_q(\mathbb{G}^{\Gamma^2})
\end{equation*}
for each $\phi\in \C_q(\mathbb{G}^{\Gamma^2})$.
    \end{enumerate}
\end{theorem} 

\begin{rmk}\label{undeformedverticalcomposition}
    Since the \textit{vertical}/groupoid direction direction remains undeformed, we will often denote the monoidal product $\phi\ostar\phi'=\phi\otimes\phi'$ as the undeformed tensor product when $\phi,\phi'$ are localized on 2-simplces which meet \textit{vertically}.
\end{rmk}

We emphasize once again from \textit{Remark \ref{weakcoherence}}, in the current case of the \textit{strict} 2-Chern-Simons theory (ie. in the absence of the weak associator $\tau$), the Hopf structures are strict with invertible coherence morphisms.

\subsubsection{Quantum 2-gauge transformations}\label{deformed2gau}
Upon quantization, the 2-gauge transformations must also deformed accordingly. We shall do this indirectly by preserving certain consistency conditions under the new $\bbC[[\hbar]]$-module structure aforded by deformation quantization.

More precisely, it was proven in \cite{Chen1:2025?} that, if $\C_q(\mathbb{G}^{\Gamma^2})$ is to remain a $\mathbb{U}\G^{\Gamma^1}$-module category satisfying the property \eqref{derivation}, then $\mathbb{U}\G^{\Gamma^1}$ must itself inherit a non-trivial cobraiding $\tilde R$, and a quantum deformed \textit{coproduct} $\tilde\Delta$. This makes  $\mathbb{U}\G^{\Gamma^1}\rightsquigarrow \mathbb{U}_q\G^{\Gamma^1}$ into a Hopf category which is non-cosymmetric. 

Moreover, these newly deformed coproduct and invertible cobraiding structures come equipped with invertible natural transformations 
\begin{align}
      &\varpi: (-\ostar-)\circ \Lambda_{\tilde\Delta} \cong \Lambda\circ  (-\ostar-): \tilde{\mathbb{U}}_q\G^{\Gamma^1}\times\mathfrak{C}_q(\mathbb{G}^{\Gamma^2})^{\times 2}\rightarrow \mathfrak{C}_q(\mathbb{G}^{\Gamma^2})\label{tensorator},\\
      &\Lambda_{\tilde\Delta } \big(\mathfrak{R} \circ \Delta \big)\cong \Lambda_{\tilde{\mathfrak R}\circ  \tilde \Delta }\Delta:  \tilde{\mathbb{U}}_q\G ^{\Gamma^1}\times\mathfrak{C}_q(\mathbb{G}^{\Gamma^2}) \rightarrow \mathfrak{C}_q(\mathbb{G}^{\Gamma^2})^{\times 2}, \label{inducrmat}
\end{align}
which are crucial in preserving the derivation property \eqref{derivation} and \textbf{ensuring that $\C_q(\mathbb{G}^{\Gamma^2})$ remains a \textit{monoidal} measureable $\mathbb{U}_q$-module} under the 2-gauge transformation operation $\Lambda$.

\begin{rmk}\label{induced2Rs}
    We shall refer to the first 2-morphism $\varpi$ \eqref{tensorator} in the above as the \textbf{tensorator}. To clarify what the second 2-morphism \eqref{inducrmat} is really doing, recall from \textbf{Theorem \ref{quantumhopfcat}} that $\tilde{\mathfrak{R}}$ is also induced from a 2-$R$-matrix $\tilde R\in\mathbb{U}_q\G^{\Gamma^1}\times\mathbb{U}_q\G^{\Gamma^1}$. In this case, components of the 2-morphism \eqref{inducrmat} can then be written as invertible measureable morphisms for which
    \begin{gather*}
        (\Delta(\phi)\ostar R^{-1})\bullet \tilde\Delta(\zeta)\cong \Delta(\phi)\bullet(\tilde R \cdot \tilde \Delta(\zeta)) ,\\ 
        (R \ostar \Delta(\phi))\bullet \tilde\Delta(\zeta)\cong \Delta(\phi)\bullet\big({\tilde \Delta(\zeta)\cdot \tilde R^{-1}}\big)
    \end{gather*}
    in terms of the right-module structure $\bullet$ \eqref{regrep}, where $\cdot$ denotes the monoidal structure on $\mathbb{U}_q\G^{\Gamma^1}$.
\end{rmk}

The following characterization can then be obtained.
\begin{theorem}\label{quantumhopfcat}
    $\mathbb{U}_q\G^{\Gamma^1}$ is a strict (non-cosymmetric) Hopf monoidal category internal to $\mathsf{Meas}$, also equipped with an invertible cobraiding $\tilde{\mathfrak{R}}=(\tilde R,\tilde R')$ of the form similar in \textbf{Theorem \ref{quantumhopfcocat}}.
\end{theorem}

\medskip

In accordance with the above, we can now introduce the categorical versions of compact quantum groups, in analogy to the quantum coordinate rings of Woronowicz \cite{Woronowicz1988} or the quantum enveloping algebras of Drinfel'd-Jimbo \cite{Drinfeld:1986in,Jimbo:1985zk}.
\begin{definition}\label{categoricalquantumgroup}
    Suppose $\Gamma$ is a PL 2-disc.
    \begin{enumerate}
        \item $\C_q(\mathbb{G}^{\Gamma^2}) = \C_q(\mathbb{G})$ is called the \textbf{quantum categorical coordinate ring}.
        \item $\mathbb{U}_q\G^{\Gamma^1} = \mathbb{U}_q\G$ is called the \textbf{quantum categorical enveloping algebra}.
    \end{enumerate}
\end{definition}


It is reasonable to expect a parallel, categorical analogue of the Drinfel'd-Jimbo construction for $\mathbb{U}_q\G$, as well as a categorical analogue of the quantum Fourier theory \cite{Semenov1992} which ties them together. We will not pursue this in this paper, however.



\subsection{The Lattice 2-algebra}\label{lattice2alg}
Equipped with the above structures, \cite{Chen1:2025?} defined the \textit{lattice 2-algebra} of 2-Chern-Simons theory. It is endowed with certain conditions which are categorical analogues of those in the lattice algebra for Chern-Simons theory \cite{Alekseev:1994pa}.
\begin{definition}
    The \textbf{lattice 2-algebra} $\mathcal{B}^\Gamma$ for 2-Chern-Simons theory on the lattice $\Gamma$ is the monoidal semidirect product  (cf. \cite{Fuller:2015}) $\C_q(\mathbb{G}^{\Gamma^2})\rtimes \mathbb{U}_q\G^{\Gamma^1}$ through the right action $\bullet$, such that each $\phi\in\C_q(\mathbb{G}^{\Gamma^2})$ satisfy:
    \begin{enumerate}
        \item the \textbf{left-covariance} condition\footnote{This can be understood as a version of \eqref{regrep} in the general Hopf categorical context.}
        \begin{equation}
        \phi\bullet (a_v,\gamma_e) \cong (1 \otimes \Lambda)_{\tilde\Delta(a_v,\gamma_e)}\bullet\phi,\qquad \forall ~(a_v,\gamma_e)\in\mathbb{U}_q\G^{\Gamma^1},\label{leftreg}
    \end{equation}
        \item on local 2-graph states, there exist sheaf isomorphism witnessing the \textbf{braid relations}
                \begin{equation}
            \phi_{(e,f)}\times\phi_{(e',f')} \cong \begin{cases}\phi_{(e',f')}\times\phi_{(e,f)} &; (e,f) \cap (e',f')=\emptyset\\
                (\Lambda\times \Lambda)_{\tilde R_e}(\phi_{(e',f')}\times\phi_{(e,f)}) &; e\cup \partial f' \neq \emptyset
            \end{cases},\label{braid}
        \end{equation}
        where $\tilde R_e\in \mathbb{U}_q\G^{\Gamma^1}\times \mathbb{U}_q\G^{\Gamma^1}$ is the cobraiding 2-$R$-matrix (cf. \textit{Remark \ref{induced2Rs}}) localized on the common edge $e$.
    \end{enumerate}
\end{definition}
\noindent The braid relations ensure that both sides of \eqref{braid} furnish the same $\mathbb{U}_q\G^{\Gamma^1}$-representation, up to intertwining homotopy; they shall play an important role in \S \ref{commbdy}.

Now as mentioned previously in \S \ref{deformed2gau}, the derivation property \eqref{derivation} and its underlying coherent monoidal module tensorator $\varpi$ \eqref{tensorator} ensures that $\C_q(\mathbb{G}^{\Gamma^2})$ remains a monoidal measureable $\mathbb{U}_q\G^{\Gamma^1}$-module category under quantum deformation.

It is also worth mentioning here that the witness for the the braid relations \eqref{braid} can be \textit{explicitly} obtained from the invertible cobraiding, as well as the coherence 2-morphism \eqref{inducrmat}.

\subsubsection{2-Chern-Simons lattice observables}\label{observ}
In a field theory, from the purely algebraic perspective, observables should be defined as the "gauge invariants" --- in an appropriate sense --- of all possible configurations. This philosophy takes different guises in different physical contexts: such as in the invertible TQFT context \cite{huang2023tannaka} and in the perturbative QFT context \cite{costello_gwilliam_2016}.

In our case in the context of the 2-category $\mathsf{Meas}$, this idea takes the form of the following explicit definition.
\begin{definition}
The \textbf{observables of 2-Chern-Simons theory} $\mathcal{O}^\Gamma$ consist of those 2-graph states $\phi\in\C_q(\mathbb{G}^{\Gamma^2})$ equipped with natural measureable sheaf isomorphisms
    \begin{equation}
        \phi\bullet \zeta \cong \zeta\bullet\phi,\qquad \forall~ \zeta\in A\label{invarstatement},
    \end{equation}
    witnessing the \textit{invariance condition}, where $A\subset\mathbb{U}_q\G^{\Gamma^1}$ runs over all Borel measureable subsets. By construction, there is a fully-faithful internal functor $\mathcal{O}^\Gamma\rightarrow \mathcal{B}^\Gamma$ into the lattice 2-algebra.
\end{definition} 
By \eqref{leftreg}, the observables $\mathcal{O}^\Gamma$ are equivalently those 2-graph states $\phi$ which are equipped with measureable natural isomorphisms \eqref{invarstatement} $\Lambda_\zeta\phi\cong \phi$ for all $\zeta\in\mathbb{U}_q\G^{\Gamma^1}$. 



In other words, $\mathcal{O}^\Gamma = \big(\C_q(\mathbb{G}^{\Gamma^2})\big)^{\mathbb{U}_q\G^{\Gamma^1}}$ are the \textit{homotopy fixed-points}, or equivalently the \textbf{equivariantization} of $\C_q(\mathbb{G}^{\Gamma^2})$ under the 2-gauge transformations $\mathbb{U}_q\G^{\Gamma^1}$, with respect to the module structure \eqref{2gtmodule}. This is a categorical analogue of the Chern-Simons observables defined in \cite{Alekseev:1994pa} --- as invariants of the algebra of observables.




\begin{rmk}
    Suppose the PL 2-manifold $S$, embedded in a 3d manifold $\Sigma$, has two triangulations $\Gamma,\Gamma'$ that are refinements of each other --- that is, there is an embedding $\boldsymbol{\Delta}\supset\boldsymbol{\Delta}'$ of their corresponding simplicial complexes  --- then there is a monoidal restriction functor of sheaves $f_{\Gamma\supset\Gamma'}:\mathcal{B}^{\Gamma}\rightarrow\mathcal{B}^{\Gamma'}$ on the lattice 2-algebras. The family $\big(\mathcal{B}^\Gamma,f_{\Gamma\supset\Gamma'}\big)_\Gamma$ thus forms a direct system in the double bicategory of cobraided Hopf cocategories $\cA=\operatorname{cobHopf}_{\cV^{\bf C}}$ in $\cV^{\bf C}$, where ${\bf C}=\mathsf{LieGrp}$. If 2-colimits exist in $\cA$, then we can take the direct limit to obtain the "universal" 2-Chern-Simons algebra $\mathcal{B}=\lim_{\Gamma\rightarrow}\mathcal{B}^\Gamma$.
\end{rmk}

\begin{rmk}\label{2colimit}
    Since each $\mathcal{B}^\Gamma$ is a monoidal semidirect product and each functor $f_{\Gamma\supset\Gamma'}$ is monoidal, $\mathcal{B}$ can also be written as a monoidal semidirect product $\overline{\C}_q(\mathbb{G})\rtimes\overline{\mathbb{U}}_q\G$ (these may not coincide on-the-nose with \textbf{Definition \ref{categoricalquantumgroup}}). The homotopy fixed points $\mathcal{O}=(\overline{\C}_q(\mathbb{G}))^{\overline{\mathbb{U}}_q\G}$ would then, analogous to the lattice algebra in Chern-Simons theory \cite{Alekseev:1994pa}, be able to be interpreted as a model for the \textit{quantum categorified moduli space of flat 2-connections.} 
\end{rmk}



\subsubsection{2-$\dagger$ unitarity of the 2-holonomies}\label{2dagger}
Recall in the above theorems that $\C_q(\mathbb{G}^{\Gamma^2})$ is a \textit{Hopf} cocategory, which has equipped a antipode functor. Similar to the coproducts $\Delta,$ these antipode functors $S$ are intimately tied to the geometry of the underlying 2-graphs. Specifically, $S$ is induced from \textit{orientation reversal}.

Following {\bf Example 5.5} of \cite{ferrer2024daggerncategories}, we take the embedded graph $\Gamma\subset\Sigma$ as a framed piecewise-linear (PL) 2-manifold, then the PL-group $\operatorname{PL}(2) = O(2) = SO(2)\rtimes \bbZ_2$ tells us directly what the 2-dagger structure on $\Gamma$ is --- $\dagger_2$ is given by the orientation reversal $\bbZ_2$ subgroup and $\dagger_1$ is a $2\pi$-rotation in framing $SO(2)$-factor.

Crucially, these daggers are involutive $\dagger_2^2 = \id,~\dagger^2_1 \cong \id$ and they {\it strongly commute} 
\begin{equation}
    \dagger_2\circ \dagger_1 = \dagger_1^\text{op}\circ\dagger_2\label{commutedagger}.
\end{equation}
For edges in $\Gamma^1$, on the other hand, $\dagger_2$ implements an orientatino reversal $e^{\dagger_2} = \bar e$ while $\dagger_1$ rotates its framing: if $\nu$ is a trivialization of the normal bundle along the embedding $e\hookrightarrow \Sigma$, then $(e,\nu)^{\dagger_1} =(e,-\nu)$. Let us denote this frame rotation by the shorthand $e^T = (e,-\nu)$.

We denote the induced maps on the measureable Lie 2-groups by
$X=\mathbb{G}^{\Gamma^2}\xrightarrow{\sim} \overline{X}^{\mathrm{h,v}} = \mathbb{G}^{(\Gamma^2)^{\dagger_2,\dagger_1}}$. 
    \begin{definition}\label{unitary2hol}
Define the \textbf{antipode fucntors}
\begin{equation}
  S_\mathrm{v}:\C_q(\mathbb{G}^{\Gamma^2})\rightarrow  \C_q(\mathbb{G}^{\Gamma^2})^{{\text{op}}},\qquad S_\mathrm{h}: \C_q(\mathbb{G}^{\Gamma^2})\rightarrow \C_q(\mathbb{G}^{\Gamma^2})^{\text{m-op,c-op}},\label{antilinear}
\end{equation} 
      where "$-^{{\text{op}}}$" denotes taking the opposite internal category, and "$-^{\text{m-op,c-op}}$" denotes taking the reverse internal monodal/comonoidal structure. The \textbf{2-$\dagger$ unitarity of the 2-holonomies} is the property that:
        \begin{itemize}
            \item For each 2-graph state in $\mathfrak{C}_q(\mathbb{G}^{\Gamma^2})$, we have stalk-wise for each $\mathrm{z}=\{(h_e,b_f)\}_{(e,f)}\in\mathbb{G}^{\Gamma^2}$,
            \begin{align*}
        (S_h\phi)_{\mathrm{z}} &= \bar\phi_{\mathrm{z}^{\dagger_1}},\qquad \mathrm{z}^{\dagger_1}=\{(h_{e^{\dagger_1}},b_{f^{\dagger_1}})\}_{(e,f)}\\ (S_v\phi)_{\mathrm{z}} &= \phi^T_{\mathrm{z}^{\dagger_2}},\qquad \mathrm{z}^{\dagger_2} = \{(h_{e^{\dagger_2}},b_{f^{\dagger_2}})\}_{(e,f)}
    \end{align*}
        where $\bar\phi$ is the measureable field $(H^*)^X$ complex linear dual to $\phi$, and $\phi^T$ is the same sheaf underlying $\phi\in\C_q(\mathbb{G}^{\Gamma^2})$ but equipped with the adjoint sheaf morphisms. 
    \item For the 2-gauge transformation $\Lambda: \mathbb{U}\G^{\Gamma^1}\times \C(\mathbb{G}^{\Gamma^2})\rightarrow \C(\mathbb{G}^{\Gamma^2})$, we have pointwise for each $\zeta=\{(a_v,\gamma_{e}\}_{(e,v)}\in\mathbb{U}_q\G^{\Gamma^1}$ (recall $e^T = (e,-\nu)$ denotes a frame rotation of an edge),
    \begin{align*}
            \Lambda_{\tilde S_h\zeta} &= \bar \Lambda_{\zeta^{\dagger_1}},\qquad \zeta^{\dagger_1}=\{ (a_{v'}\xrightarrow{\gamma_{\bar e}}a_v)\}_{(a,v)},\\ \Lambda_{\tilde S_v\zeta}&= \Lambda^\dagger_{\zeta^{\dagger_2}},\qquad \zeta^{\dagger_1}=\{(a_{v}\xrightarrow{\gamma_{e^T}}a_{v'})\}_{(a,v)}
    \end{align*}
    where $\bar \Lambda_\zeta$ is the complex conjugate measureable functor and $\Lambda_\zeta^\dagger$ is the adjoint.
         \end{itemize}
    \end{definition}
Note for $C=\C_q(\mathbb{G}^{\Gamma^2})$, the vertical antipode $S_v:C\rightarrow C^{\bar{\text{op}},\text{c-op}_v}$ reverses both the \textit{external} (ie. in $\mathsf{Meas}_X$) composition and the internal (ie. in $C_1$) cocomoposition $\Delta_v$. On the other hand, the horizontal antipode $S_h:C\rightarrow C^{\text{m-op,c-op}_h}$ is internally op-$\ostar$-monoidal and op-comonoidal.

\medskip

The $\dagger$-unitarity property intertwines the external $\dagger$-adjoint structures and the internal geometry of the underlying 2-graph $\Gamma$.

\subsubsection{*-operations}\label{*-op}
Denote by $\eta_\mathrm{h,v}: \Gamma_c(H^X)[[\hbar]]\rightarrow \Gamma_c(H^{\overline{X}^\mathrm{h,v}})[[\hbar]]$ the $\bbC[[\hbar]]$-linear measureable sheaf morphisms induced on the 2-graph states by the 2-$\dagger$ structure of $\Gamma^2$. 
\begin{definition}\label{daggerpair}
    We say the pair $(\eta_\mathrm{h},\eta_\mathrm{v})$ is a \textbf{2-$\dagger$-intertwining pair} iff for each $\zeta\in\mathbb{U}_q\G^{\Gamma^1}$, we have
    \begin{equation*}
    \eta_\mathrm{h}\big(\Lambda_\zeta\phi_{(e,f)}\big) = \Lambda_{\bar \zeta}(\eta_\mathrm{h}\phi)_{(\bar e',\bar f)},\qquad \eta_\mathrm{v}\big(\Lambda_\zeta \phi_{(e,f)}\big)= \Lambda_\zeta(\eta_\mathrm{v}\phi)_{(e',\bar f)}
\end{equation*}
as operators on each quantum 2-graph state $\phi=\Gamma_c(H^X)[[\hbar]]\in\C_q(\mathbb{G}^{\Gamma^2})$, where $U_\zeta$ denotes the field of bounded invertible operators realizing the 2-gauge transformations $\Lambda_\zeta$.
\end{definition}

We are finally ready to state the *-operations on the 2-graph states and the 2-gauge transformations. Suppose the $R$-matrix $\tilde R$ on $\mathbb{U}_q\G^{\Gamma_1}$ is invertible, in the sense that the induced cobraiding natural transformations $\tilde\Delta\Rightarrow \tilde\Delta^\text{op}$ are invertible.

Due to the locality properties \S \ref{locality}, it suffice to define the *-operations on local pieces.
\begin{definition}\label{starop}
Let $(v,e) = v\xrightarrow{e}v'\in\Gamma^1$ denote a 1-graph, and  let $(e,f)\in\Gamma^2$ denote a 2-graph, with source and target edges $e,e':v\rightarrow v'$.
    \begin{enumerate}
        \item The \textbf{*-operations} on localized homogeneous elements in $\tilde{\cC}$ are given by
    \begin{equation}
        \zeta^{*_1}_{(v,e)} = \bar \zeta ,\qquad \bar\zeta^{*_2}_{(v,e)} = \zeta^T \label{dagger2-gau}
    \end{equation}
    where $v'\xrightarrow{\bar e}v$ is the orientation-reversal and $v\xrightarrow{e^T}v'$ is the framing rotation.
    \item Given the 2-$\dagger$-intertwining pairs in \textbf{Definition \ref{daggerpair}}, the \textbf{*-operations} on localized 2-graph states $\phi_{(e,f)}\in\mathscr{A}^0=\mathfrak{C}_q(\mathbb{G}^{\Gamma^2})$ are given by 
    \begin{align*}
        & \phi_{(e,f)}^{*_1} = (\Lambda\otimes 1)_{\tilde R^{-1}}(\phi_{(\bar e',\bar f)}) \eta_\mathrm{h},\\
        & \phi_{(e,f)}^{*_2} = (\phi_{(e',\bar f)})\eta_\mathrm{v},
    \end{align*}
    where $(\bar e',\bar f) = (e,f)^{\dagger_1}$ and $(e',\bar f) = (e,f)^{\dagger_2}$. Here,  the $\tilde R$-matrix is localized on $\partial f$.
    \item The regular $\bullet$-module structure on $\mathscr{A}^0$ over $\tilde{\cC}$ is *-compatible: there exist natural measureable isomorphisms
    \begin{equation*}
        (\phi\bullet\zeta)^{*_{1,2}} \cong \zeta^{*_{1,2}} \bullet\phi^{*_{1,2}},\qquad \forall~ \phi\in\mathscr{A}^0,\quad \zeta\in\tilde{\cC},
    \end{equation*}
    satisfying the obvious coherence conditions against the $\bullet$-module associator and the tensorator \eqref{tensorator}.
    \end{enumerate}       
\end{definition}
\noindent Note crucially that these *-operations are in general \textit{not} involutive. 

A routine check yields the following \cite{Chen1:2025?}.
\begin{proposition}
    The *-operations strongly commute, $(-^{*_1})^\text{op}\circ -^{*_2}\cong (-^{*_2})^\text{m-op,c-op}\circ -^{*_1}$. 
\end{proposition}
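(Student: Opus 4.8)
The plan is to reduce the claimed strong commutativity to the strong commutativity \eqref{commutedagger} of the geometric $\dagger$-structures on the embedded graph $\Gamma\subset\Sigma$, checking in addition that the algebraic twists $\tilde R_\mathrm{h}^{-1},\tilde R_\mathrm{v}^{-1}$ and the sheaf morphisms $\eta_\mathrm{h},\eta_\mathrm{v}$ carried by the *-operations are mutually compatible. By the locality of the *-operations (\S\ref{locality}, and the discussion preceding Definition \ref{starop}) it suffices to verify the identity on localized gauge parameters $\zeta_{(v,e)}$ and on localized 2-graph states $\phi_{(e,f)}$.

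For the gauge parameters the identity is purely geometric. By \eqref{dagger2-gau} the operation $*_2$ is the orientation reversal $\dagger_2$ and $*_1$ is the framing rotation $\dagger_1$ on the decorated 1-graph, carrying no $\tilde R$- or $\eta$-decoration. Under the correspondences $*_1\leftrightarrow\dagger_1$ and $*_2\leftrightarrow\dagger_2$, the two composites $(-^{*_1})^\text{op}\circ -^{*_2}$ and $(-^{*_2})^\text{m-op,c-op}\circ -^{*_1}$ reduce precisely to the two sides of \eqref{commutedagger} applied to the $1$-skeleton, the external $\text{op}$ and $\text{c-op}$ reversals tracking exactly the $\dagger_1^\text{op}$ on the right of \eqref{commutedagger}.

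For the 2-graph states one must additionally transport the cobraiding twists and the intertwining morphisms through each other. Computing $(\phi_{(e,f)}^{*_2})^{*_1}$ applies first the vertical data $(\Lambda\otimes 1)_{\tilde R_\mathrm{v}^{-1}}\eta_\mathrm{v}$ over the orientation-reversed graph $(e,f)^{\dagger_2}$, then the horizontal data $(\Lambda\otimes 1)_{\tilde R_\mathrm{h}^{-1}}\eta_\mathrm{h}$ over the further framing-rotated graph $(e,f)^{\dagger_1\dagger_2}$; the reversed composite $(\phi_{(e,f)}^{*_1})^{*_2}$ lands on $(e,f)^{\dagger_2\dagger_1}$. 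The geometric pieces agree by \eqref{commutedagger}, so it remains to match the residual algebraic data. The key step is to commute the framing rotation $\dagger_1$ together with its horizontal twist $\tilde R_\mathrm{h}^{-1}$ past the vertical twist $\tilde R_\mathrm{v}^{-1}$: here I would invoke (i) the intertwining property of the pair $(\eta_\mathrm{h},\eta_\mathrm{v})$ from Definition \ref{daggerpair}, which dictates how each $\eta$ transports a $\Lambda_\zeta$-action through an orientation or framing flip, and (ii) the compatibility of the cobraiding $R,\tilde R$ with the interchange law (recorded after \eqref{braidR}), which is precisely the cointerchange datum relating $\Delta_\mathrm{h}$ and $\Delta_\mathrm{v}$, and hence relating $\tilde R_\mathrm{h}$ and $\tilde R_\mathrm{v}$ up to the $\text{c-op}$ reversal.

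I expect the bookkeeping of these cobraiding conjugations to be the main obstacle. The geometric reduction to \eqref{commutedagger} is clean, and the intertwining relations of Definition \ref{daggerpair} handle the $\eta$-morphisms; but showing that the ordered composite of $\tilde R_\mathrm{h}^{-1}$ and $\tilde R_\mathrm{v}^{-1}$ is symmetric under $\mathrm{h}\leftrightarrow\mathrm{v}$ up to precisely the $\text{m-op},\text{c-op}$ reversal --- rather than merely up to the geometric flips --- requires aligning the quasitriangularity and interchange relations with the strong commutativity \eqref{commutedagger}, keeping careful track of on which tensor factor each reversal acts. This is the delicate content hidden in the "routine check''.
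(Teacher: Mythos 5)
First, a point of comparison: the paper itself contains no proof of this proposition --- it is imported from the first paper of the series with the phrase ``a routine check yields the following \cite{Chen1:2025?}'' --- so your proposal can only be judged on its own terms. Its skeleton is the natural one and assembles the right ingredients: locality (\S \ref{locality}) to reduce to localized $\zeta_{(v,e)}$ and $\phi_{(e,f)}$, the geometric strong commutation \eqref{commutedagger} for the dagger parts, and \textbf{Definition \ref{daggerpair}} plus the quasitriangularity/Hopf axioms of $\mathbb{U}_q\G^{\Gamma^1}$ for the decorations. The 1-skeleton half is indeed immediate from \eqref{dagger2-gau} and \eqref{commutedagger}, exactly as you say.

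The gap is that the half of the argument carrying all the content of the proposition is announced but never performed. A complete check must (a) use \textbf{Definition \ref{daggerpair}} to pull $\eta_\mathrm{h}$ and $\eta_\mathrm{v}$ through the $\Lambda$-actions in both composites, and here the \emph{asymmetry} of that definition is essential: $\eta_\mathrm{h}$ conjugates the gauge parameter ($\Lambda_\zeta \mapsto \Lambda_{\bar\zeta}$) while $\eta_\mathrm{v}$ leaves it untouched, so one composite acquires a conjugated twist $\overline{\tilde R^{-1}}$ and the other does not; (b) use 2-$\dagger$-unitarity (\textbf{Definition \ref{unitary2hol}}, \textit{Remark \ref{opposites}}) to rewrite that conjugate in terms of the antipode $\tilde S$, since only then do the quasitriangularity axioms apply; and (c) contract the two stacked $\Lambda$-actions via the module associator $\alpha^{\Lambda\otimes 1}$ (as is done in the proofs of \textbf{Theorem \ref{onecellcommute}} and \S \ref{*-opiso}) and verify that the resulting single twists on the two sides match precisely after the $\text{m-op},\text{c-op}$ reversal. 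Your proposal replaces (a)--(c) with an appeal to ``compatibility of the cobraiding with the interchange law,'' treats the two $\eta$'s symmetrically, and never engages with the asymmetry in \textbf{Definition \ref{daggerpair}} --- which, together with the fact that $S_h$ is op-monoidal/op-comonoidal while $S_v$ reverses the external composition, is exactly where the \emph{asymmetric} superscripts ($\text{op}$ on one side, $\text{m-op},\text{c-op}$ on the other) come from. Your closing paragraph then concedes the decisive step (``this is the delicate content hidden in the routine check'') rather than supplying it. As written, you have proved the proposition on gauge parameters and reduced the 2-graph-state case to an unproven identity between ordered twist composites.
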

Throughout the following, we will assume that both $-^{*_1},-^{*_2}$ are equivalences of measureable categories, with $-^{*_2}$ is idempotent/involutive but $-^{*_1}$ not necessarily (unless $q=1$; see \S 7, \cite{Chen:2025?}).

\begin{rmk}\label{gluingedges}
    We pause here to note that the definition \eqref{dagger2-gau} essentially states that a frame reversal $(e,\nu)\mapsto (e,-\nu)$ on a 1-graph is implemented by the antipode on the decorations. This is an important fact for gluing localized 2-graphs: the interfacing edge has opposite framing depending on which local 2-graph it is embedded into.
\end{rmk}

To extend the above definition globally to the entire lattice configuration on $\Gamma$, the following was proven in \cite{Chen1:2025?}.
\begin{theorem}
    Given 2-$\dagger$-unitarity holds on each quantum 2-graphs state, the *-operations preserves (i) the $\bullet$-bimodule structure $\C_q(\mathbb{G}^{\Gamma^2})\circlearrowleft\mathbb{U}_q\G^{\Gamma^1}$, (ii) the covariance condition \eqref{leftreg}, and (iii) the braiding relations \eqref{braid}. Thus they extend to the lattice 2-algebra $\mathcal{B}^\Gamma$.
\end{theorem}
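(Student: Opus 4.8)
The plan is to exploit the locality of all three structures established in §\ref{locality}: the bimodule action $\bullet$, the covariance \eqref{leftreg}, and the braiding \eqref{braid} are each determined by their restrictions to local 2-graphs $\phi_{(e,f)}$ and local 1-graphs $\zeta_{(v,e)}$, since the coproducts $\Delta,\tilde\Delta$ glue these pieces geometrically along $\cup_\mathrm{h},\cup_\mathrm{v}$. It therefore suffices to verify that the *-operations of \textbf{Definition \ref{starop}} intertwine the local structures, and then to check that the local compatibilities are themselves consistent with the gluing laws. The principal inputs will be the 2-$\dagger$-unitarity relations of \textbf{Definition \ref{unitary2hol}}, the 2-$\dagger$-intertwining pair conditions of \textbf{Definition \ref{daggerpair}}, and the quasitriangularity of $\tilde R$.

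First I would treat the bimodule structure (i). Writing the action as $\zeta^{-1}\bullet\phi\bullet\zeta = \Lambda_\zeta\phi$, I apply $*_2$ and use the intertwining relation $\eta_\mathrm{v}(\Lambda_\zeta\phi_{(e,f)}) = \Lambda_\zeta(\eta_\mathrm{v}\phi)_{(e',\bar f)}$, which says orientation-reversal commutes with the gauge action. Combined with $\zeta^{*_2}_{(v,e)} = \zeta_{(v',\bar e)}$ and the involutivity of $*_2$, this yields the anti-homomorphism identity $(\phi\bullet\zeta)^{*_2}\cong \zeta^{*_2}\bullet\phi^{*_2}$, so $*_2$ preserves the bimodule. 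For $*_1$ the same computation uses $\eta_\mathrm{h}(\Lambda_\zeta\phi_{(e,f)}) = \Lambda_{\bar\zeta}(\eta_\mathrm{h}\phi)_{(\bar e',\bar f)}$ together with the $\tilde R_\mathrm{h}^{-1}$-twist appearing in the formula for $\phi^{*_1}$, while the conjugation $\bar\Lambda_{\zeta^{\dagger_1}} = \Lambda_{\tilde S_\mathrm{h}\zeta}$ from unitarity accounts for the complex-linear dual. Covariance (ii) follows in the same spirit: applying the *-operations to \eqref{leftreg} and using that the antipodes $\tilde S_\mathrm{h},\tilde S_\mathrm{v}$ are op-comonoidal (so the coproduct of a *-image is the flipped coproduct of the *-images), I match the two sides using the unitarity relations $\Lambda_{\tilde S_\mathrm{h}\zeta} = \bar\Lambda_{\zeta^{\dagger_1}}$ and $\Lambda_{\tilde S_\mathrm{v}\zeta} = \Lambda^T_{\zeta^{\dagger_2}}$.

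The hard part will be the braiding relations (iii), precisely because the *-operations are themselves defined through the $R$-matrices $\tilde R_\mathrm{h}^{-1},\tilde R_\mathrm{v}^{-1}$, making the desired compatibility self-referential. My plan is to apply $*_1$ (resp. $*_2$) to the nontrivial case of \eqref{braid} and show that the $\tilde R_e$-twist on the right is reproduced after the $\dagger$-reversal. The key lemma I would isolate is that the $\dagger$-structures act on the cobraiding by $\tilde R^{\dagger_1}\cong \tilde R_\mathrm{h}^{-1}$ (the framing rotation inverts the braiding) and $\tilde R^{\dagger_2}\cong \tilde R_\mathrm{v}^{\text{op}}$ (orientation reversal transposes it), which are consistent with the invertibility of the cobraiding assumed just before \textbf{Definition \ref{starop}}. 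Feeding these into the hexagon/quasitriangularity axioms for $\tilde R$ then shows that the extra conjugations introduced by $*$ cancel against the $\tilde R_e$ in \eqref{braid}. The non-involutivity of $*_1$ is controlled by tracking the identity only up to the invertible coherence morphisms witnessing \eqref{braid}, and by invoking the strong commutativity $(-^{*_1})^\text{op}\circ{-}^{*_2}\cong (-^{*_2})^{\text{m-op,c-op}}\circ{-}^{*_1}$ to interchange the two *-operations coherently.

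Finally, I would assemble the local compatibilities into a global statement on $\mathcal{B}^\Gamma = \C_q(\mathbb{G}^{\Gamma^2})\rtimes\mathbb{U}_q\G^{\Gamma^1}$. The only subtlety here is the framing on interfacing edges: by \textbf{Remark \ref{gluingedges}}, a shared edge carries opposite framings in the two local patches, which is exactly the framing rotation implemented by $*_1$ on the gauge decorations. Thus the *-images of glued local pieces agree on overlaps, the sheaf isomorphisms witnessing (i)--(iii) are compatible with $\cup_\mathrm{h},\cup_\mathrm{v}$, and the *-operations extend to the whole monoidal semidirect product $\mathcal{B}^\Gamma$, as claimed.
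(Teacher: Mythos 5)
Your proposal is correct and matches the approach the paper itself points to: the theorem is not reproved here but recalled from the first paper of the series, and the surrounding discussion states precisely that, under 2-$\dagger$-unitarity, compatibility of the *-operations with \eqref{leftreg} and \eqref{braid} is \emph{equivalent} to the antipode/cobraiding axioms on $\mathbb{U}_q\G^{\Gamma^1}$ --- in particular, $*_1$ preserving \eqref{braid} amounts to quasitriangularity of $\tilde R$ --- which is exactly the reduction your argument performs for (ii) and (iii). Your remaining ingredients (locality to reduce to local pieces, the 2-$\dagger$-intertwining pair and unitarity for the bimodule part, and \textit{Remark \ref{gluingedges}} to handle opposite framings on interfacing edges when globalizing to $\mathcal{B}^\Gamma$) are likewise the ones the paper's framework supplies, so the proposal follows essentially the same route.
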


In fact, under the unitarity property defined above, the compatibility of the *-operations with \eqref{leftreg}, \eqref{braid} is \textit{equivalent} to the various axioms satisfied by the antipode/cobraiding $\tilde S,\tilde R$ on $\mathbb{U}_q\G^{\Gamma^1}$.

\section{Higher-algebra of dense 2-holonomies/2-monodromies}\label{objA}
We now formally begin the main contents of this paper. Given the underlying 2d lattice $\Gamma$, we model its triangulation as a simplicial complex. Its 2-truncation $\Gamma^2$ is a \textit{2-graph}, whose $2$-groupoid structure describes how the closed 2-simplices are glued together in $\Gamma^2$. Using this idea, we seek to build 2-graph states $\C_q(\mathbb{G}^{\Gamma^2})$ from the local quantum categorified coordinate ring $\C_q(\mathbb{G})\simeq\C_q(\mathbb{G}^{\Delta^2})$ living on each fundamental 2-simplex $\Delta^2=\Delta$. 

\subsection{Setting up the 2-simplex geometry}\label{2simplex}
We shall label a fundamental 2-simplex by specifying its edges and face $({\bf e}=(e_1,e_2,e_3),f)$, such that the 2-holonomy decorations satisfy fake-flatness $t(b_f) = h_{\partial f}$ with $\partial f = e_1-e_2+e_3$. We will in the following identify the first edge $e_1$ as the {\it source edge} of the face $f$. Once this choice is made, the cyclic ordering of the vertices and the rest of the edges are induced by the orientation of the face $f$ in $\Delta$.

Consider an embedded triangulated 2-manifold $\Gamma\subset\Sigma^3$. Its vertex, edge and face ordering is inherited from the orientation of $\Sigma^3$. 
\begin{definition}
    Denote by $\displaystyle \boldsymbol{\Delta}=\coprod_{l\leq k}\Delta_l^{\epsilon_l}$ a collection of ordered 2-simplices with orientation labelled by $\epsilon_l=\pm1$. A \textbf{simplicial decomposition of $\Gamma^2$ by $\boldsymbol{\Delta}$ of legnth $k\geq 1$} is the structure of a simplicial set on $\boldsymbol{\Delta}$ --- namely the data of face and degeneracy maps on the 2-simplices $\Delta_l$ such that  $e_j^{l_j}=\delta^l_j(\Delta_j)$ is the $l$-th face of the $j$-th 2-simplex $\Delta_j\in\boldsymbol{\Delta}^2$, with $1\leq j\leq k$ and $1\leq l\leq 3,$ --- such that $\Gamma^2$ is PL homeomorphic to the 2-truncated simplicial nerve
    \begin{equation*}
        \Gamma^2\cong \big(\boldsymbol{\Delta}^2\mathrel{\substack{\textstyle\rightarrow\\[-0.6ex]
                      \textstyle\rightarrow \\[-0.6ex]
                      \textstyle\rightarrow}} \boldsymbol{\Delta}^1 \rightrightarrows \boldsymbol{\Delta}^0\big).
    \end{equation*}
    Moreover, we say $\boldsymbol{\Delta}$ is \textbf{regular} iff each edge is shared by at most by two distinct 2-simplices.
    \end{definition}
    \noindent If $\boldsymbol{\Delta}$ is regular, then we can write the PL identification as
    \begin{equation*}
        \Gamma^2 \cong \Delta_1^{\epsilon_1}  \,_{e_1^{t_1}}\cup_{e_2^{s_2}}\Delta_2^{\epsilon_2 } \,_{e_2^{t_2}}\cup_{e_3^{s_3}}\dots \,_{e_{k-1}^{t_{k-1}}}\cup_{e_k^{s_{k}}}\Delta_k^{\epsilon_k}.
    \end{equation*}
    The length $k$ is simply the number of distinctly-labelled 2-simplices. 
    
    Here, the "incoming" $e^t\subset\partial\Delta$ and "outgoing" $e'^s\subset\partial\Delta'$ edges of two oriented simplices $\Delta^\epsilon,\Delta'^{\epsilon'}\in\boldsymbol{\Delta}^2$ are glued along a given PL homeomorhism $e^t\cong e'^{s}$, which can be either orientation preserving ($\epsilon=\epsilon'$) or reversing ($\epsilon=-\epsilon'$). Since only relative orientation matters in the gluing, we can always assume the orientation of $\Gamma^2$ agrees with the first simplex $\Delta_1$, ie. $\epsilon_1=1$.

    \begin{definition}
        We call a vertex $v_j$ in $\Delta_j$ the \textbf{root vertex} if $v_j$ is the source vertex of the distinguished source edge $e_1^j$ of $\Delta_j$. We take as base point of $\Gamma$ to be the root vertex $v=v_1$ of $\Delta_1$.
    \end{definition}
\noindent Recall in the definition of the fundamental 2-simplex $\Delta$ that the data of (i) a distinguished source edge, and (ii) its orientation determine the orientation of $\Delta$ itself.
\begin{definition}
    We say that the simplicial decomposition $\boldsymbol{\Delta}$ of $\Gamma^2$ with length $k$ is \textbf{unbroken} if the distinguished source edges of $\Delta_j$, $1\leq j\leq k$, glue into a continuous PL path $p=p_k$ in $\Gamma^2.$ See fig. \ref{fig:unbroken}.
\end{definition}
\noindent Note we can always change the designated source edge label such that $\boldsymbol{\Delta}$ is unbroken. The fact that the path $p$ intersects the root vertices of every 2-simplex $\Delta\in\boldsymbol{\Delta}^2$ is a key property which will be used later on. 

\begin{figure}[h]
    \centering
    \includegraphics[width=0.6\linewidth]{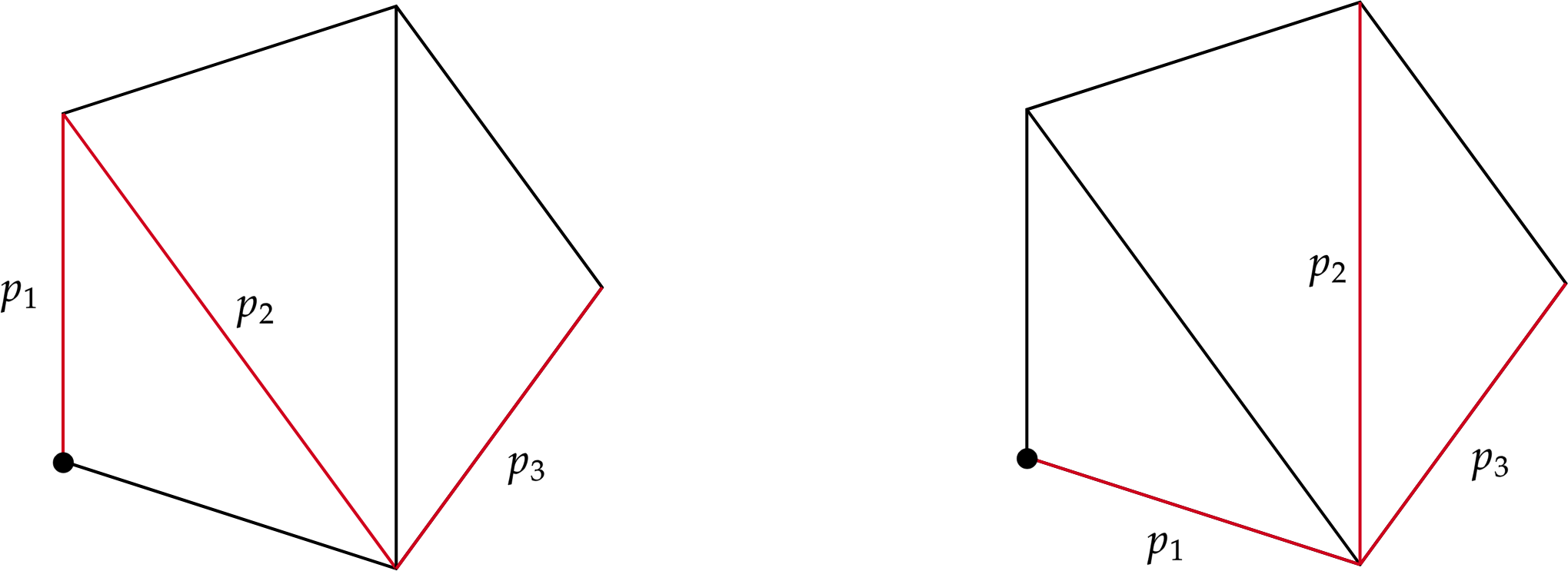}
    \caption{A typical complex of 2-simplices with different choice of source paths, coloured in red. The left is a unbroken configuration, and the right is broken.}
    \label{fig:unbroken}
\end{figure}

It is clear that, if $p$ is an oriented PL path, then its orientation determines uniquely a set of orientation data $\{\epsilon_j\}_{j=1}^k$ for $\boldsymbol{\Delta}$.
\begin{proposition}\label{path}
    Suppose the simplicial decomposition $\boldsymbol{\Delta}$ with length $k$ is regular, then there exists an assignment of source edges to $\{\Delta_j\}_j$ such that it is unbroken, with the length of $p$ bounded by $|p|\leq k-1$.
\end{proposition}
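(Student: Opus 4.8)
The plan is to turn regularity into an explicit chain and to choose the source edges to be the internal gluing edges of that chain. Writing $\Gamma^2\cong\Delta_1^{\epsilon_1}\cup\dots\cup\Delta_k^{\epsilon_k}$ in the regular form recorded above, I would denote by $g_j=\Delta_j\cap\Delta_{j+1}$ the edge identified between the $j$-th and $(j+1)$-th simplices, for $1\le j\le k-1$. Regularity is exactly what guarantees that each $g_j$ belongs to precisely the two consecutive simplices, so the chain carries exactly $k-1$ such gluing edges and no edge is overloaded by three or more simplices.

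The combinatorial heart of the argument is the observation that consecutive gluing edges meet in a single vertex: $g_j$ and $g_{j+1}$ are two \emph{distinct} edges of the common simplex $\Delta_{j+1}$, and two distinct edges of a triangle intersect in exactly one vertex $v_{j+1}$. I would use this to define the assignment $e_1^j:=g_j$ for $1\le j\le k-1$ and $e_1^k:=g_{k-1}$, orienting $g_j$ from $v_j$ (its endpoint shared with $g_{j-1}$) to $v_{j+1}$ (its endpoint shared with $g_{j+1}$). With this choice the root vertices are exactly $v_1,\dots,v_k$, the union of all distinguished source edges is the connected $1$-complex $g_1\cup\dots\cup g_{k-1}$, and consecutive source edges share the junction vertex $v_{j+1}$; hence $\boldsymbol{\Delta}$ is unbroken and $p$ threads through every root vertex, as required. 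Note that the terminal simplices $\Delta_{k-1},\Delta_k$ then share the single source edge $g_{k-1}$ with opposite orientation, exactly the framing reversal of \textit{Remark \ref{gluingedges}}.

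The length bound is then essentially a bookkeeping statement: $p$ is assembled from the $k-1$ gluing edges $g_1,\dots,g_{k-1}$, so $|p|\le k-1$, with equality in the generic (simple-path) case and strict inequality whenever the chain folds back so that some $g_j$ coincide or the path revisits a vertex. The orientation data $\{\epsilon_j\}$ is then not an extra constraint: as noted just before the statement, orienting $p$ forces each $\epsilon_j$ by the head-to-tail condition at $v_{j+1}$, and since we are free to pick the $\epsilon_j$ this is always realizable. I would run the verification as an induction on $k$ --- delete $\Delta_k$, apply the claim to the length-$(k-1)$ subchain, and re-extend the path by the single edge $g_{k-1}$ --- so that the continuity and orientation checks stay local to one junction at a time.

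The step I expect to be the genuine obstacle is not the counting but ensuring that $g_1\cup\dots\cup g_{k-1}$ is an honest (non-branching) path rather than a branched graph: this requires that at each middle simplex the edges $g_{j-1}$ and $g_{j+1}$ attach to $g_j$ at its two \emph{different} endpoints, i.e. that the strip does not pinch at a vertex carrying three or more gluing edges. I would discharge this using that $\Gamma\subset\Sigma^3$ is an embedded PL $2$-manifold, so the link of every interior vertex is a circle and no interior vertex can be incident to three gluing edges of a single chain; combined with the already-noted freedom to relabel source edges so as to make $\boldsymbol{\Delta}$ unbroken, this secures both the path property and the bound. The degenerate case $k=1$ carries no gluing edge and should be read with $p$ the trivial path, consistent with $|p|\le 0$.
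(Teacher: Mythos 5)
Your reduction to a chain and the observation that consecutive gluing edges $g_j,g_{j+1}$ are distinct edges of $\Delta_{j+1}$ sharing exactly one vertex are both correct, but the proof fails at precisely the step you flagged, and the rescue you propose does not work. Taking the source edges to \emph{be} the gluing edges forces $p$ to be the union $g_1\cup\dots\cup g_{k-1}$, and in general this union is a branched star, not a path. The concrete counterexample is the fan (star of a vertex): take $\Delta_j$ with vertices $c,w_j,w_{j+1}$ for $j=1,\dots,k$, glued along $g_j=[c,w_{j+1}]$. This is regular, it is exactly the chain form assumed in the proposition, and it occurs inside every triangulated surface as the star of an interior vertex. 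Here every gluing edge contains $c$, so for $k\geq 4$ the union $g_1\cup\dots\cup g_{k-1}$ has a vertex of degree at least $3$ and cannot be traversed as a PL path; moreover your vertices $v_j=g_{j-1}\cap g_j$ all coincide at $c$, so the prescription "orient $g_j$ from $v_j$ to $v_{j+1}$" degenerates. Your proposed fix --- that the link of an interior vertex is a circle, so no interior vertex meets three gluing edges of one chain --- is exactly backwards: the star of an interior vertex is the prototypical regular chain \emph{all} of whose gluing edges meet at that vertex; the circle link produces this configuration rather than excluding it. Appealing to "the freedom to relabel source edges so as to make $\boldsymbol{\Delta}$ unbroken" is circular, since that freedom is what the proposition is asserting.

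A correct argument must allow source edges that are not gluing edges, which is what the paper does. It records the gluing data as a set $\cG\subset(\bbZ_3)^{k-1}$ of index pairs $(t_j;s_l)$, splits $\cG=\cG_2\sqcup\cG_{0,1}$ according to whether both identified edges are source edges (pairs $(1;1)$) or at most one is, and counts: a gluing of type $\cG_2$ extends $p$ by $0$, one of type $\cG_{0,1}$ by $1$, whence $|p|\leq|\cG_{0,1}|\leq k-1$. In the fan with $k=5$, for instance, an unbroken assignment must mix gluing and outer edges --- e.g.\ $\Delta_1,\Delta_2\mapsto[c,w_2]$, $\Delta_3\mapsto[c,w_4]$, $\Delta_4\mapsto[w_4,w_5]$, $\Delta_5\mapsto[w_5,w_6]$, giving the path $w_2\,c\,w_4\,w_5\,w_6$ of length $4=k-1$ --- and no such assignment is ever produced by your scheme. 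So the counting half of your write-up is fine, but the existence of an unbroken assignment, which is the actual content of the proposition, is exactly what your rigid choice fails to deliver.
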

\begin{proof}
    Recall how the source edges are defined: it is the "first" edge in a fundamental 2-simplex $\Delta$, and the rest of the edges are labelled in cyclic order according to the orientation of $\Delta$.
    
    Prior to assuming regularity, we are going to record the indices $(t_j;s_l)$ which label the edges appearing in the gluing data of the simplicial decomposition $\boldsymbol{\Delta}$. 
    \begin{definition}
        Define the set $\cG=\{(t_j;s_l)\}_{j,l}$ of indices, where $j,l$ runs over the indices for which we have a prescribed PL identification $e_j^{t_j}\cong e_l^{s_l}$ of the corresponding gluing edges. 
    \end{definition}
    \noindent The condition of regularity then means that each edge in $\boldsymbol{\Delta}^1$ cannot have more than one gluing data: if $t_j=t_{j'}$ then $(t_j;s_l)=(t_{j'};s_{l'})$ must coincide in $\cG$. This then allows us to take $\cG$ as a subset of $(\bbZ_3)^{k-1}$.
    
    $\boldsymbol{\Delta}$ can in turn be made unbroken provided $t_j\neq s_{l}$ if one of $t_j,s_l$ is not 1 --- namely, we have to remove from $(\bbZ_3)^{k-1}$ the diagonal of the subset $\bbZ_2\subset\bbZ_3$. This guarantees the existence of a PL continuous path $p$ in $\Gamma$. We now partition $\cG$ into two subsets: one $\cG_2$ consisting of members of the form $(1;1)$ and one $\cG_{0.1}=\cG\setminus \cG_2$ that does not; it is from $\cG_{0,1}$ that we have to remove the diagonal. 
    
    These subsets have the following geometric meaning,
    \begin{enumerate}
        \item $\cG_{0,1}$ contains indices for the gluing edges $e_j^{t_j},e_l^{s_l}$ for which at most only one of them is a source edge, and
        \item $\cG_2$ contains those for which \textit{both} of them are source edges.
    \end{enumerate}
    It is then easy to see that gluing two 2-simplices along edges labelled in $\cG_{0,1}$ will increase the length of $p$ by 1, while gluing along those in $\cG_2$ will increase $|p|$ by 0. The length $p$ is therefore bounded by the size of $\cG_{0,1}$, which is $k-1$.
\end{proof}
\noindent Note a length $|p|=0$ of zero is only possible in a regular simplicial decomposition $\boldsymbol{\Delta}$ of length at most 2. The above proposition can be strengthened to ensure that the path $p$ of length $k-1$ is oriented, by including the data $\epsilon_j/\epsilon_l=\pm 1$ of the relative orientations into the set $\cG$.

In the following, we will always assume that $\boldsymbol{\Delta}$ is equipped with a specification of source edges such that it defines a regular and unbroken simplicial decomposition $\boldsymbol{\Delta}$ of $\Gamma$. Further, we shall also assume that the orientation data for the fundamental 2-simplices in $\boldsymbol{\Delta}$ are determined uniquely (up to global orientation reversal) by the PL orientation of the path $p$.

\subsubsection*{Whiskering.}
Fix a base point vertex $v\in \Gamma^2$. We denote by $p_j\subset \coprod_l\partial\Delta^2_l$ some simplex path which connects $v$ to the root vertex of $\Delta_j^2$, for all $1\leq j\leq k$. For a decorated 2-simplex $\mathbb{G}^{\Delta_j}$, let $\phi_j\in\C_q(\mathbb{G}^{\Delta_j})$.
\begin{definition}
   Define the \textit{whiskering} of $\phi_j$ to the base vertex $v\in\Gamma^2$ as the meassureable field $W_{p_j}\phi$ with stalk Hilbert spaces
\begin{equation*}
    (W_{p_j}\phi_j)_\mathrm{z} = (\phi_j)_{h_{p_j}\rhd \mathrm{z}},\qquad p_j=1\implies W_{p_j}\rhd-=\id.
\end{equation*}
From the perspective of sheaves, $W_{p_j}:\C(\mathbb{G}^{\Delta_j}) \rightarrow\C(\mathbb{G}^{p_j\ast\Delta_j})$ is the invertible direct image functor along the whiskering automorphism $h_{p_j}\rhd-: \mathbb{G}\rightarrow \mathbb{G}$, where $p_j\ast \Delta_j$ is the attachment of the path $p_j$ to the root vertex of $\Delta_j$.
\end{definition}
\noindent Note a whiskering by the edge holonomy $h_e$ cannot in general be removed through a 2-gauge transformation! Unless, of course, $h_e=a_v^{-1}a_{v'}$ is a pure gauge.

\medskip

Note we can whisker along any path, not just the ones overlaying the distinguished source path on $\Gamma^2$ obtained from \textbf{Proposition \ref{path}}.
\begin{tcolorbox}[breakable]
    \subsubsection*{Homotopies between whiskerings.} Consider two generic paths $p,p'$ which are homotopic in $\Gamma^2$. Let $D: p\Rightarrow p'$ denote the contractible closed face $D$ spanned by them, which encloses several glued simplices. Due to fake-flatness $h_{p'} =  h_p\mathsf{t}(b_D)$, the whiskering along $p$ vs. that along $p'$ differ by a vertical multiplication of the face holonomy $b_D\in \mathsf{H}$. 
    
    This induces the translation operator $T_{D}:\xi_\mathrm{z}\mapsto \xi_{b_D\circ \mathrm{z}}$ on sections of 2-graph states $\phi\in\C_q(\mathbb{G}^{\Gamma^2})$. More precisely, we achieve the invertible bounded linear operators
\begin{equation*}
    T_{D}^{\phi}:W_{p}(\phi_j) \rightarrow W_{p'}(\phi_j),\qquad \forall ~\phi_j\in\C_q(\mathbb{G}^{\Delta_j})
\end{equation*}
witnessing the difference between the whiskerings along $p,p'$, where $\Delta_j$ is the 2-simplex whose root vertex $v_j=p(1)=p'(1)$ is the endpoint of $p,p'$.  Imposing naturality against measureable morphisms, ie. the commutativity $$T_{D}^{\phi'} \circ W_{p}(f) = W_{p}(f)\circ T_{D},\qquad \forall ~f: \phi_j\to \phi_j',$$ we can lift the above to the following.
\begin{proposition}\label{whiskeringhomotopy}
    Each PL homotopy $D: p\Rightarrow p'$ between oriented paths $p,p'$ on $\Gamma^2$ are witnessed by monoidal  invertible measureable natural transformations $T_{D}:W_{p}\Rightarrow W_{p'}$ between the associated whiskering measureable functors.
\end{proposition}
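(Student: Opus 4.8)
The plan is to build $T_D$ componentwise as the reindexing operator induced by vertical multiplication by the enclosed face holonomy $b_D\in\mathsf{H}$, and then to promote this componentwise data to a monoidal natural isomorphism using the pullback machinery of \textbf{Proposition \ref{pullbackmeas}}. First I would record that vertical multiplication defines a base automorphism $\Phi_D:\mathbb{G}^{\Gamma^2}\to\mathbb{G}^{\Gamma^2}$, $\mathrm{z}\mapsto b_D\circ\mathrm{z}$. The geometric input is fake-flatness $h_{p'}=h_p\mathsf{t}(b_D)$: combined with the strict Lie 2-group relation $\mathsf{t}(h)\rhd h'=hh'h^{-1}$, this identifies the discrepancy between the whiskering automorphisms $h_p\rhd(-)$ and $h_{p'}\rhd(-)$ with $\Phi_D$, so that pulling a stalk of $W_p(\phi_j)$ back along $\Phi_D$ lands precisely in the corresponding stalk of $W_{p'}(\phi_j)$. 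This is exactly the componentwise operator $T_D^{\phi_j}:W_p(\phi_j)\to W_{p'}(\phi_j)$ anticipated in the discussion above.

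Second, I would verify that $T_D$ is a bounded, invertible, $\bbC[[\hbar]]$-linear measureable natural isomorphism. Invertibility is the key structural point, and it follows from observing that $b_D$ is a function only of the decorations of the region enclosed by $D$, which are disjoint from $\Delta_j$ and are left fixed by $\Phi_D$; hence $\Phi_D$ is a triangular (unipotent-type) automorphism whose inverse translates the $\Delta_j$-decorations by $b_D^{-1}$, with $b_D$ read off from the same fixed enclosed-region data. By \textbf{Proposition \ref{pullbackmeas}} the induced direct-image operator $T_D=(\Phi_D)_*$ is then a measureable natural isomorphism, bounded because $\Phi_D$ is a smooth measure-preserving automorphism of $(\mathbb{G}^{\Gamma^2},\mu_{\Gamma^2})$, with inverse $T_D^{-1}=T_{D^{-1}}$ witnessing $W_{p'}\Rightarrow W_p$. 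Naturality --- the square $T_D^{\phi_j'}\circ W_p(f)=W_{p'}(f)\circ T_D^{\phi_j}$ for every measureable morphism $f:\phi_j\to\phi_j'$ --- I would then obtain from the functoriality of $\Phi_D^*$: since $T_D$ reindexes the base while $f$ acts fibrewise, pullback along $\Phi_D$ commutes with postcomposition by any sheaf morphism, which is the commutativity asserted above.

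The last and hardest step is monoidality, i.e. compatibility of $T_D$ with the deformed tensor product $\ostar$. In the classical limit $\hbar\to0$ this is immediate, since $\Phi_D$ is then a group automorphism of the base and vertical translation distributes over $\otimes$; the genuine content lies in the quantum correction. Here I would check that $T_D$ intertwines the sheaf automorphisms \eqref{quantumautomorphism} and the monoidal tensorator \eqref{tensorator}, which I expect to reduce to the fact that $b_D$ is grouplike with respect to the vertical coproduct $\Delta_v$ --- being the ordered composite of the face decorations enclosed by $D$ --- so that translation by $b_D$ is diagonal against $\Delta_h$ and therefore commutes with the cobraiding $R$ of \eqref{braidR} up to the interchange law. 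I expect this compatibility with $\ostar$ to be the main obstacle, as it is the only place where the deformation quantization truly enters; every other property degenerates to the classical statement that a vertical translation is a measureable automorphism. Assembling the four properties yields the desired monoidal invertible measureable natural transformation $T_D:W_p\Rightarrow W_{p'}$.
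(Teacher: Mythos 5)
Your proposal follows essentially the same route as the paper: fake-flatness $h_{p'}=h_p\mathsf{t}(b_D)$ identifies the discrepancy between the two whiskerings with vertical translation by the enclosed face holonomy $b_D\in\mathsf{H}$, inducing the translation operators $T_D^{\phi_j}:W_p(\phi_j)\to W_{p'}(\phi_j)$, which are then assembled into an invertible measureable natural transformation whose monoidality comes from the monoidality of the whiskering operation. The only differences are cosmetic: the paper \emph{imposes} naturality against measureable morphisms as a defining condition rather than deriving it from pullback functoriality as you do, and it dispatches monoidality in one line via $W_p(\phi\otimes\phi')\cong W_p\phi\otimes W_p\phi'$, so your more elaborate (and admittedly unfinished) analysis of compatibility with the cobraiding and the $\ostar$-tensorator, while reasonable caution, is not needed for the argument the paper actually makes.
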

\noindent The monoidality follows from the fact that the whiskering operation is monoidal,
\begin{equation*}
    W_p(\phi\otimes\phi') \cong W_{p}\phi\otimes W_{p}\phi',\qquad \phi,\phi'\in\C_q(\mathbb{G}^{\Delta})
\end{equation*}
where $\Delta$ is the 2-simplex whose root vertex is the endpoint of the path $p$.
\end{tcolorbox}

As such, provided $\Gamma^2$ is unbroken and simply-connected, and that $p$ starts at the root of $\Gamma^2$, then there is an invertible measureable natural transformation $W_p\Rightarrow W_{p_j}$ which brings the whiskering by $p$ to the whiskering by the source path $p_j$.

\subsection{Dense states of 2-holonomies and 2-monodromies}\label{holdense}
We are finally ready to describe the construction of 2-simplex holonomies. We shall do this iteratively, starting from the case where the regular simplicial decomposition $\boldsymbol{\Delta}$ has $k=2$. Let $\Delta_1,\Delta_2\in\boldsymbol{\Delta}^2$ be the 2-simplices in a regular simplicial decomposition $\boldsymbol{\Delta}$ of $\Gamma$ with the prescribed PL identification $f_\epsilon:e_1^t\xrightarrow{\sim} e_2^s$. Recall $\epsilon=\pm1$ keeps track of the orientation.

We now make use of the degeneracy maps $d_j^l$ in the simplicial set $\boldsymbol{\Delta}$; denote by $d_j(e^{l}_j)$ the degenerate 2-simplex which collapse down to the $l$-th edge $e^l_j$ of the $j$-th 2-simplex. We call $u_{12} = d_1(f_\epsilon(e_1^{t}))\cap d_2(e_2^s)$ the \textbf{$(1,2)$-degeneracy intersection}. This subgraph has the property that its decorations have non-zero measure
\begin{equation*}
    \mu_{\Delta_1\coprod\Delta_2}(\mathbb{G}^{u_{12}})\neq0
\end{equation*}
with respect to the Haar measure $\mu_{\Delta_1\coprod\Delta_2}$ on the disjoint union decorated 2-simplices $\mathbb{G}^{\Delta_1}\times\mathbb{G}^{\Delta_2}=\mathbb{G}^{\Delta_1\coprod \Delta_2}$.

By the classic Tietze extension theorem \cite{brylinski2007loop,book-mathphys1}, we can pick a smoothly interpolating function on $\mathbb{G}^{u_{12}}$ to extend sections of $\phi_1\in \C_q(\mathbb{G}^{\Delta_1})$, say, into the degeneracy intersection $u_{12}$. Recall the notion of \textit{localized 2-graph states} in \textbf{Definition \ref{local2graphstates}}.
\begin{definition}\label{glueamenable}
    Suppose $\Gamma$ is a 2-graph lattice containing two 2-simplices $\Delta_1,\Delta_2$ which meet at an edge $e$, and suppose $\phi_{1,2}\in\C_q(\mathbb{G}^{\Gamma})$ are 2-graph states localized on $\Delta_{1,2}$, respectively. The tuple $(\phi_1,\phi_2)$ is called \textbf{gluing-amenable at $e$} iff there exist an isomorphism of restriction sheaves
    \begin{equation*}
        \alpha_{12}:  \phi_1\mid_{\mathbb{G}^{u_{12}}}~\cong~ \phi_2\mid_{\mathbb{G}^{u_{12}}},\qquad \alpha_{12}=\alpha_{21}^{-1}.
    \end{equation*}
    We denote the gluing-amenable 2-graph states by $\C_q(\mathbb{G}^{\Delta_1})\times_e\C_q(\mathbb{G}^{\Delta_2})$, where $e$ is the gluing edge.
\end{definition}
\noindent In essence, this condition allows us to "concatenate" $\phi_1,\phi_2$ along the glued edges $f_\epsilon:e_1^t\xrightarrow{\sim}e_2^s$. 

What this definition means more explicitly is the following. Let $\Gamma_c(H_j^{X_{12}})[[\hbar]]$ denote the measureable sheaf of Hermitian sections corresponding to the restricted 2-graph states $\phi_j\mid_{X_{12}}$, where $j=1,2$ and $X_{12}=\mathbb{G}^{u_{12}}$. The gluing-amenability condition is then the existence of a *-isomorphism $\Gamma_c(H_1^{X_{12}})[[\hbar]]_{/V}\cong \Gamma_c(H_2^{X_{12}})[[\hbar]_{/V}$ of free $C(U)[[\hbar]]$-modules for each such Borel open $V\subset X_{12}$.

Let $p=p_2$ denote the PL path from $v$ to the root of $\Delta_2$, we then use the quantum deformed monoidal structure \S \ref{quantum2states} to define the \textbf{2-holonomy state}
\begin{equation*}
    {\Phi} = 
        \phi_1\ostar(h_{p_2}\rhd \phi_2),\qquad \phi_1,\phi_2 ~\text{ gluing-amenable}
\end{equation*}
associated to $\phi_1,\phi_2$. The resulting 2-graph state $\Phi$ is clearly localized on $\Delta_1\cup_e \Delta_2$.

\medskip

We now wish to extend the notion of gluing-amenability to a regular simplicial decomposition $\boldsymbol{\Delta}$ of $\Gamma$ containing $k>2$ number of fundamental 2-simplices. In order to do so, we first have to spell out the necessary coherence structure.

\subsubsection{Interchangers; vertices of trisecitons}\label{interchanger}
In \S \ref{holdense}, we have described how we can build 2-graphs $\Gamma$ and 2-graph states on them from local data on each 2-simplex within it. We pause here to introduce a special geometric configuration of particular importance. 

Let $\Delta_1,\dots,\Delta_4$ denote four oriented fundamental 2-simplices, which glues into the graph $\Gamma_+$ specified by the following gluing configurations:
\begin{equation*}
    \Delta_{2i-1}^+\,_{e_{2i-1}^2}\cup_{e_{2i}^3}\Delta_{2i}^+,\qquad \Delta_{i}^+\,_{e_i^{1}}\cup_{e_{i+2}^1}\Delta_{i+2}^-,
\end{equation*}
where $i=1,2$. In other words, the resulting graph $\Gamma_+$ is obtained by gluing a pair of the 2-simplices horizontally, and then gluing them vertically. Here, we have chosen the source edges to be $e_i=e_{i}^1\cong -e_{i+2}^1$ for $i=1,2$, which is completely internal in $\Gamma_+$. We denote by the other glued edges by $e_i'=e_{2i-1}^2\cong e_{2i}^3$, and the corresponding degeneracy intersection by $u_{1234} = u_{12}\cap u_{34}\cap u_{13}\cap u_{24}$ around the central vertex.

The fact that $\Gamma_+$ is well-defined means that the simplicial decomposition $\boldsymbol{\Delta}=\{\Delta^{\epsilon_i}_i\}_{i=1}^4$ is unambiguous. This manifests as a certain \textit{interchanger} isomorphism.
\begin{definition}\label{intch}
    Let $\Delta_1,\dots,\Delta_4$ denote 2-simplices for which $\Gamma_+=\coprod_{i=1}^4\Delta_i\subset\Gamma$ is a 2-subgraph, and let $\phi_i$ be 2-graph states localized on $\Delta_i$ for $1\leq i\leq 4$. We say the tuple $(\phi_1,\phi_2,\phi_3,\phi_4)$ is \textbf{gluing-amenable at $\Gamma_+$} iff (i) they are pairwise gluing-amenable, and (ii) they have equipped a measureable sheaf isomorphism
    \begin{equation*}
        \beta_{12}^{34}:(\phi_1\ostar\phi_2)\ostar(\phi_3\ostar\phi_4)\xrightarrow{\sim} (\phi_1\ostar\phi_3)\ostar(\phi_2\ostar\phi_4),
    \end{equation*}
    called the \textbf{interchanger}. By \textit{Remark \ref{undeformedverticalcomposition}}, we will denote this measureable natural transformation by 
    \begin{equation*}
        \beta: (-\otimes-)\circ(-\ostar- \times -\ostar-) \xRightarrow{\sim} (-\ostar-)\circ(-\otimes-\times-\otimes -)\circ(1\times \sigma\times 1),
    \end{equation*}
    where $\sigma: \C_q(\mathbb{G}^{\Gamma})\times \C_q(\mathbb{G}^{\Gamma})\rightarrow \C_q(\mathbb{G}^{\Gamma})\times \C_q(\mathbb{G}^{\Gamma})$ is a swap of products.
\end{definition}


\begin{rmk}\label{trisection}
    Geometrically, $\beta$ witnesses the equivalence between the two ways  in which the decorated 2-simplices on $\coprod_{i=1}^4{\Delta_i}$ can be glued onto $\Gamma_+$; see fig. \ref{fig:interchanger}. Hence, \textbf{Definition \ref{intch}} is saying that each such trisection in a 2-graph is assigned a natural interchange isomorphism $\beta$.
\end{rmk}

\begin{figure}[h]
    \centering
    \includegraphics[width=0.8\linewidth]{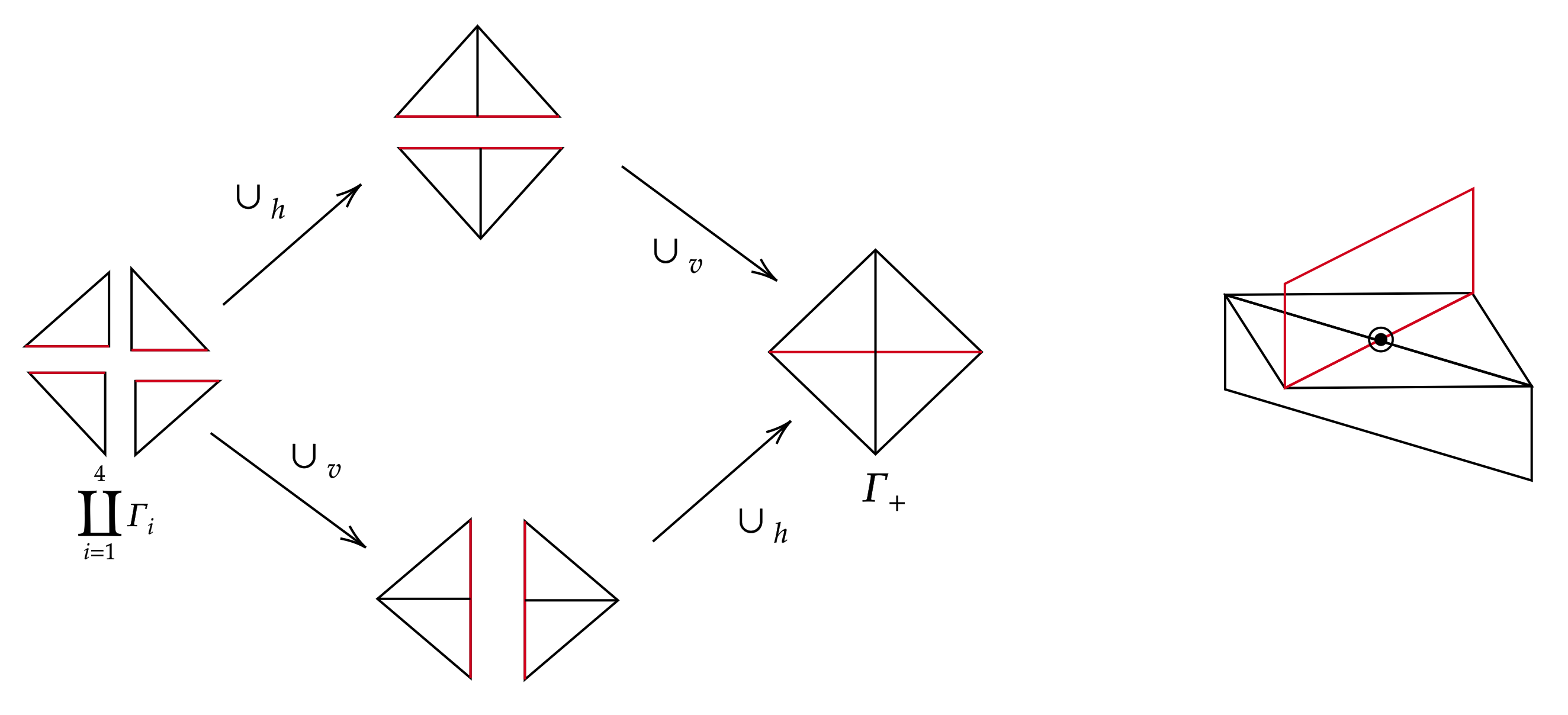}
    \caption{The left illustrates the geometric configuration of 2-simplices upon which the interchanger $\beta$ is defined.  This geometry is precisely the \textit{vertex} in a trisected singular graph \cite{matveev2007algorithmic} as displayed on the right; see also fig. 2 (c) of \cite{Sakata2022-il}.}
    \label{fig:interchanger}
\end{figure}

This isomorphism $\beta$ witnesses the equivalence between the two valid ways of constructing holonomy-dense 2-graph states in $\C_q(\mathbb{G}^{\Gamma_+})$; since the deformed products of 2-graph states are used in the construction, the data of the interchanger $\beta$ will also depend on $q$. We say $\C_q(\mathbb{G}^{\Gamma_+})$ \textbf{holonomy-dense} if the above functors $\C_q(\mathbb{G}^{\coprod_{i=1}^4\Delta_i})^+\rightarrow \C_q(\mathbb{G}^{\Gamma_+})$ are equivalences.



\begin{rmk}
    Another geometric interpretation of the subgraph $\Gamma_+$ is the following. Consider a graph $\Gamma\subset\Sigma$ embedded in a 3d manifold $\Sigma$, and two other dsjoint 2-cells $C,C'\subset\Sigma$ in general position, such that their transversal intersection $\Gamma\cap C\cap C'$ forms a "cross". This cross is precisely what the internal tree $E'$ of the glued edges in $\Gamma_+$ looks like. As such, the data $\beta$ can also be interpreted as a witnesses for \textit{triple intersections} of surfaces in $\Sigma$. The fact that higher-gauge theories in 4d can detect triple intersections was also noted in  \cite{PUTROV2017254}.
\end{rmk}

Recall from the proof of \textbf{Proposition \ref{path}} that the set $\cG$ keeps tracks of the edge gluing data in $\boldsymbol{\Delta}$. 
\begin{definition}
    Suppose $\boldsymbol{\Delta}$ has length $k>2$. The tuple $(\phi_1,\dots,\phi_k)$ of 2-graph states $\phi_i$ localized on a regular simplicial decomposition $\boldsymbol{\Delta}$ of $\Gamma$ is \textbf{gluing-amenable} iff 
    \begin{enumerate}
        \item each adjacent localized pair $\phi_j,\phi_l$ of 2-graph states is gluing-amenable over the $(j,l)$-degeneracy intersection $u_{jl},$ where $j,l$ run over the indices of the set $\cG$,
        \item for each 2-subgraph of the form $\Gamma_+\subset\Gamma$, every localized 4-tuple $(\phi_1,\dots,\phi_4)$ on it has equipped a natural interchanger isomorphism $\beta_{12}^{34}$.
    \end{enumerate}
\end{definition}

\subsubsection{Graphical 2-holonomies and holonomy-density}
The data of the interchanger $\beta,$ as well as the strong associativity\footnote{In the undeformed case, this simply follows from the strong associativity of graph gluing. In the quantum case, we also require the strict Jacobi identity of the combinatorial 2-Fock-Rosly Poisson brackets. This is explained in more detail in \cite{Chen1:2025?}.} of $\ostar$, then allow us to construct 2-holonomy states on a generic regular simplicial decomposition $\boldsymbol{\Delta}$ of length $k> 2$ in a non-ambiguous manner.
\begin{definition}\label{2hol2mon}
    Let $(\phi_1,\dots,\phi_k)$ denote a tuple of 2-simplex states in $\C_q(\mathbb{G}^{\Gamma})$ which are gluing-amenable, then the associated \textbf{2-holonomy state on $\Gamma$} is the product
    \begin{equation}
        \Phi = \phi_1\ostar (W_{p_1}\phi_2)\ostar\dots\ostar (W_{p_k}\phi_k)\in\C_q(\mathbb{G}^{\Gamma}).\label{2holstate}
    \end{equation}
    When $\Gamma^2$ has no boundary, we call the associated sheaf $\Phi$ the \textbf{2-monodromy state}.
\end{definition}


From here on, we consider $\Gamma$ as a fixed lattice graph embedded in a PL 3d manifold $\Sigma$. The orientation of the PL path $p$ described in \textbf{Proposition \ref{path}} determines an orientation of the 2-simplices underlying the associated regular unbroken simplicial decomposition $\boldsymbol{\Delta}$ of $\Gamma$.

\begin{definition}
    Let $\boldsymbol{\Delta}$ denote a regular unbroken oriented simplicial decomposition of $\Gamma$. We say $\C_q(\mathbb{G}^{\Gamma})$ is \textbf{holonomy-dense} iff for every $\phi\in\C_q(\mathbb{G}^{\Gamma})$ there exist a gluing-amenable tuple $(\phi_1,\dots,\phi_k)$, localized on 2-simplices $\Delta_1,\dots,\Delta_k$ appearing in $\boldsymbol{\Delta}$, such that $\phi$ is measureably naturally isomorphic to 2-holonomy states $\Phi$ of the form \eqref{2holstate}. 
\end{definition}
We are then able to iteratively construct 2-graphs states $\phi\in\C_q(\mathbb{G}^{\Gamma^2})$ from the products of (gluing-amenable) states living on the fundamental 2-simplices $\Delta_j\in\boldsymbol{\Delta}.$ This is another expression of locality in our theory.

\subsection{Invariance modulo boundary}\label{invarbdy}
Fix a regular unbroken oriented simplicial decomposition $\boldsymbol{\Delta}$ of $\Gamma$. The above formulation of $\Phi$ is a direct generalization formulas given for the Chern-Simons holonomies in \cite{Alekseev:1994pa}, and they have the following analogous property.
\begin{theorem}
    Let $E^1=\{e^{t_j}_j\cong e^{s_l}_l\}_{(t_j;s_l)\in\cG}\subset\Gamma^1$ denote the \textit{rooted tree} of internal 1-graphs of $\Gamma$, consisting of edges across which the 2-simplices $\Delta\in\boldsymbol{\Delta}^2$ are glued upon. If $\phi\in\C_q(\mathbb{G}^{\Gamma^2})$ were holonomy-dense, then there is a measureable isomorphism $\Lambda_\zeta\phi\xrightarrow{\sim}\phi$ for all $\zeta\in\mathbb{U}_q\G^{E^1}$.   
\end{theorem}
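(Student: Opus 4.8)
The plan is to read the statement as a \emph{tree-gauge fixing} result: since $E^1$ is a rooted tree, hence contractible, the 2-gauge transformations supported on it ought to be pure gauge, and the whiskering operations $W_{p_j}$ appearing in the holonomy-dense form \eqref{2holstate} are precisely the device that absorbs the internal-edge holonomies by basing every simplex at the single root vertex $v$. Concretely, I would first reduce to a single internal edge. By holonomy-density and \textbf{Proposition \ref{basisdecomp}}, write $\phi\cong\Phi=\phi_1\ostar(W_{p_2}\phi_2)\ostar\dots\ostar(W_{p_k}\phi_k)$; by the locality of the 2-gauge action (\S\ref{locality}) and the coproduct $\tilde\Delta$ that governs how a localized $\Lambda_\zeta$ decomposes, it suffices to exhibit the isomorphism for $\zeta$ localized on one internal edge $e\in E^1$ at a time.

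For such an $e$, regularity forces it to be shared by exactly two simplices $\Delta_a,\Delta_b$, and the relevant factor of $\Phi$ is $\phi_a\ostar(W_{p}\phi_b)$ for a tree path $p$ running through $e$. The key step is to commute $\Lambda_\zeta$ past the whiskering. Writing $\Lambda_\zeta=(\operatorname{hAd}^{-1}_\zeta)_*$ and $W_p=(h_p\rhd-)_*$ as direct image functors (\textbf{Definition \ref{2gaugetransfo}}), the composite $\operatorname{hAd}^{-1}_\zeta\circ(h_p\rhd-)$ differs from $(h_{p'}\rhd-)\circ\operatorname{hAd}^{-1}_{\zeta'}$ only by moving the edge parameter of $\zeta$ to the endpoint of $p$ and gauge-transforming the path holonomy $h_p\mapsto h_{p'}$; this yields a natural isomorphism $\Lambda_\zeta\circ W_p\cong W_{p'}\circ\Lambda_{\zeta'}$, with $p,p'$ homotopic in the contractible tree so that \textbf{Proposition \ref{whiskeringhomotopy}} identifies the two whiskerings. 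The covariance relations \eqref{stequiv} then identify the residual source-side action of $\zeta$ on $\Delta_a$'s copy of $e$ with the target-side action on $\Delta_b$'s copy, while the gluing-amenability isomorphism $\alpha_{ab}$ of \textbf{Definition \ref{glueamenable}} matches the two sheaves on the shared overlap $\mathbb{G}^{u_{ab}}$. Because $\operatorname{hAd}^{-1}_\zeta$ is a two-sided (horizontal) conjugation, these two half-actions are mutually inverse and cancel under the $\ostar$-product by the derivation property \eqref{derivation}, producing the desired local isomorphism $\Lambda_\zeta\Phi\cong\Phi$.

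To globalize, I would induct over the edges of $E^1$ from the leaves toward the root $v$, at each stage peeling off a leaf edge and applying the local cancellation above. The coassociativity of $\ostar$ together with the interchanger $\beta$ of \textbf{Definition \ref{intch}} guarantees that the order in which the internal edges are processed does not matter, so the local isomorphisms assemble into a single measurable natural isomorphism $\Lambda_\zeta\phi\xrightarrow{\sim}\phi$ for arbitrary $\zeta\in\mathbb{U}_q\G^{E^1}$. That $E^1$ is a \emph{tree}, and hence simply-connected, is essential: it is exactly the absence of internal loops that prevents any leftover holonomy/monodromy from obstructing the telescoping cancellation.

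I expect the main obstacle to be the quantum ($q\neq1$) bookkeeping. Once the product is the noncommutative $\ostar$ and the gauge coproduct $\tilde\Delta$ carries the $\tilde R$-matrix corrections \eqref{inducrmat}, the naive telescoping of the conjugations acquires braiding factors, so the cancellation must be reorganized using the braid relations \eqref{braid} rather than strict commutativity. One must also check that the secondary-gauge component $\gamma_e\in\ker\mathsf{t}\subset\mathsf{H}$ of $\zeta$ is absorbed consistently; here the fake-flatness condition $t(b_f)=h_{\partial f}$ is what ties the $\mathsf{H}$-valued edge decoration to the boundary holonomy and lets the $\mathsf{H}$-part of the parameter be moved through the whiskering. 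Finally, promoting the stalkwise cancellations to genuine isomorphisms of measurable sheaves of Hermitian sections---rather than mere pointwise equalities---will require invoking \textbf{Proposition \ref{pullbackmeas}}, so that each $\Lambda_\zeta$ is realized as pulling back along a smooth measurable automorphism of $X=(\mathbb{G}^{\Gamma^2},\mu_{\Gamma^2})$, whence the composites are again of that form and the resulting natural isomorphisms are automatically measurable.
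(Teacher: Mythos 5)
Your overall skeleton matches the paper's: reduce by locality to a single gluing edge, apply the derivation property \eqref{derivation} (the module tensorator $\varpi$) to a pair $W_{p_j}\phi_j\ostar W_{p_l}\phi_l$, and then extend edge-by-edge over $E^1$ and by holonomy-density from $\Phi$ to $\phi$. But there is a genuine gap at the one step that carries all the weight: you assert that ``because $\operatorname{hAd}^{-1}_\zeta$ is a two-sided conjugation, the two half-actions are mutually inverse and cancel under the $\ostar$-product by the derivation property.'' In the quantum setting this is precisely what must be \emph{proved}, and the derivation property alone does not give it: \eqref{derivation} only rewrites $\Lambda_{\zeta_j}(W_{p_j}\phi_j)\ostar\Lambda_{\zeta_l}(W_{p_l}\phi_l)$ as the action of $\tilde\Delta(\zeta)$ on the product; it does not make that action trivial. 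The paper's mechanism is Hopf-algebraic: since the path $p$ carries a framing agreeing with $\Delta_j$, the interfacing simplex $\Delta_l$ meets the gluing edge with the \emph{opposite} framing, and by \eqref{dagger2-gau} (see \textit{Remark \ref{gluingedges}}) this frame reversal inserts the antipode, so the localized parameter is $\zeta = (\tilde S\zeta)_{(v_j,e_j^{t_j})}\times\zeta_{(v_l,e_l^{s_l})}$. The cancellation is then the Hopf axiom
\begin{equation*}
(\tilde S\otimes 1)\tilde\Delta = (1\otimes\tilde S)\tilde\Delta = \tilde\epsilon\otimes\tilde\eta,
\end{equation*}
which collapses the action to $\Lambda_{\tilde\epsilon(\zeta)\cdot\tilde\eta}\cong\operatorname{id}$. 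This works uniformly at any $q$ without ever touching the braiding.

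This also means your anticipated repair for the $q\neq 1$ case points in the wrong direction: you expect braiding factors from \eqref{inducrmat} that must be ``reorganized using the braid relations \eqref{braid},'' but the braid relations govern the commutativity of states localized on \emph{different} simplices meeting a common edge (that is the content of \S\ref{commbdy}, \textbf{Theorem \ref{onecellcommute}}), not the cancellation of the gauge action at a gluing edge. Chasing $\tilde R$-factors here would entangle you in the disjoint-commutativity theory for no benefit, whereas the antipode insertion makes the argument purely local and $R$-matrix-free. Your leaf-to-root induction and the appeal to contractibility of $E^1$ are harmless but also not where the content lies: once the single-edge cancellation is established via the antipode, locality of the edges in $E^1$ lets one apply it independently at every gluing edge, with no monodromy obstruction to worry about; naturality of the resulting $\varphi^\Phi_\zeta$ and holonomy-density then finish the proof exactly as you outline.
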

\begin{proof}
    
    Recall from \S \ref{2gauge}, \S \ref{locality} that the the geometry/locality of the 2-gauge parameters $\mathbb{U}_q\G^{\Gamma^1}$ are dictated by the coproducts $\tilde\Delta$. We shall use to this to describe how 2-gauge transformations act on gluing-amenable 2-graph states.
    
    By definition, a 2-gauge transformation localized to the edges $e_j^{t_j}\in\partial\Delta_j,e_l^{s_l}\in\partial\Delta_l$ act as the measureable endofunctors $$\Lambda_{\zeta_j} = \Lambda_{(a_{v_j},\gamma_{e_j^{t_j}})},~\Lambda_{\zeta_l} = \Lambda_{(a_{v_l},\gamma_{e_l^{s_l}})}: \C_q(\mathbb{G}^{\Gamma})\rightarrow \C_q(\mathbb{G}^{\Gamma})$$ near the simplices $\Delta_j,\Delta_l$. Suppose now we specify the gluing data, namely a PL identification $f_\epsilon: e_j^{t_j}\cong e_l^{s_l}$ across which the localized 2-graph states $\phi_j,\phi_l$ are gluing-amenable. The derivation property \eqref{derivation} then supplies a module tensorator $\varpi$ \eqref{tensorator} such that 
        \begin{equation*}
        \big(\varpi_\zeta^{W_{p_j}\phi_j,W_{p_l}\phi_l}\big)^{-1}: \Lambda_{\zeta_j}(W_{p_j}\phi_j)\ostar \Lambda_{\zeta_l}(W_{p_l}\phi_l) \xrightarrow{\sim}(-\ostar-)\big ((\Lambda\times\Lambda)_{\tilde\Delta(\zeta)}(W_{p_j}\phi_j\times W_{p_l}\phi_l)\big),
    \end{equation*}
    as an invertible measureable natural transformation in $\C_q(\mathbb{G}^{p_j\ast\Delta_j\coprod p_l\ast\Delta_l})$. 
    
    By definition, the 2-gauge parameter $\zeta = \zeta_{f_\epsilon(v_j,e_j^{t_j})}\cdot \zeta_{(v_l,e_l^{s_l})}$ is obtained by horizontally stacking the 2-gauge transformations. However, given the path $p$ is endowed with a framing which agrees with $\Delta_j$, then the 2-simplex $\Delta_l$ interfacing with it must have the opposite framing. This framing reversal thus comes, according to \eqref{dagger2-gau}, with an antipode $\tilde S$ on $\mathbb{U}_q\G$,
    \begin{equation*}
        \zeta = (\tilde S\zeta)_{(v_j,e_j^{t_j})}\times \zeta_{(v_l,e_l^{s_l})};
    \end{equation*}
    see \textit{Remark \ref{gluingedges}}. 

    Given the counit $\tilde\epsilon$ and the unit $\tilde\eta=(1_v,({\bf 1}_1)_e)$ in $\mathbb{U}_q\G$ such that
    \begin{equation*}
        \Lambda_{\tilde\epsilon(\zeta)} = \id_\zeta,\qquad \Lambda_{\tilde\eta}=1_{\C_q(\mathbb{G}^{\Gamma^2})},
    \end{equation*}
    the Hopf axioms 
    $$(\tilde S\otimes 1)\tilde \Delta \cong (1\otimes \tilde S)\tilde \Delta\cong\tilde\epsilon\otimes \tilde\eta$$
    then provide an invertible natural transformation  
    \begin{equation*}
        (-\ostar-)\big ((\Lambda\times\Lambda)_{\tilde\Delta(\zeta)}(W_{p_j}\phi_j\times W_{p_l}\phi_l)\big) \cong \Lambda_{\tilde\epsilon(\zeta)\cdot\tilde\eta} (W_{p_j}\phi_j\ostar W_{p_l}\phi_l) \cong W_{p_j}\phi_j\ostar W_{p_l}\phi_l.
    \end{equation*}
    

    Due to the locality of the edges in $E^1$, we can repeat the above argument for each edge in $E^1$ such that we achieve an invertible measureable natural transformation $\varphi_\zeta$ on the 2-holonomy states,
    \begin{equation*}
        \varphi_\zeta^\Phi: \Lambda_\zeta\Phi\xrightarrow{\sim} \Phi,\qquad \forall~ \Phi. 
    \end{equation*}
    By holonomy-density, we can then extend this to all $\phi\in\C_q(\mathbb{G}^{\Gamma^2})$.
\end{proof}
\noindent Note this isomorphism is natural against measureable morphisms between holonomy-dense measureable fields: $f\circ \varphi^\phi_\zeta= \varphi_\zeta^{\phi'}\circ f$ for all $f:\phi\rightarrow\phi'$.

An immediate corollary is therefore the following.
\begin{corollary}\label{bdyinvar}
    If $\C_q(\mathbb{G}^{\Gamma^2})$ were holonomy-dense, then it is a homotopy fixed point under $\mathbb{U}_q\G^{E^1}$. Therefore, holonomy-dense \textbf{2-monodromy states are observable}: $$\partial\Gamma=\emptyset\implies \C_q(\mathbb{G}^{\Gamma})\subset \mathcal{O}^\Gamma.$$
\end{corollary}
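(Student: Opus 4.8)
The plan is to deduce the corollary directly from the preceding theorem by locating the two points at which the full gauge object $\mathbb{U}_q\G^{\Gamma^1}$ collapses onto its internal part $\mathbb{U}_q\G^{E^1}$. First I would recall the equivalent description of observables recorded just after the definition of $\mathcal{O}^\Gamma$: by left-covariance \eqref{leftreg}, a state $\phi$ lies in $\mathcal{O}^\Gamma$ exactly when it carries natural measureable isomorphisms $\Lambda_\zeta\phi\cong\phi$ for \emph{every} $\zeta\in\mathbb{U}_q\G^{\Gamma^1}$. The theorem already produces precisely such isomorphisms $\varphi_\zeta^\phi$, but only for $\zeta$ supported on the internal gluing tree $E^1$; the whole business of the corollary is therefore to promote ``internal edges'' to ``all edges'' and to check that the resulting family is coherent enough to name a genuine homotopy fixed point.

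For the first claim I would argue that the isomorphisms $\varphi_\zeta^\phi$ assemble into coherent invariance data. Naturality against measureable morphisms is already the remark following the theorem ($f\circ\varphi_\zeta^\phi=\varphi_\zeta^{\phi'}\circ f$). Compatibility with composition of gauge parameters --- that $\varphi_{\zeta\cdot\zeta'}^\phi$ agrees with $\varphi_\zeta^\phi$ and $\Lambda_\zeta(\varphi_{\zeta'}^\phi)$ up to the module associator $(\alpha^\bullet)^\phi_{\zeta,\zeta'}$ --- follows because each $\varphi_\zeta^\phi$ is built from the strict Hopf relation $(\tilde S\otimes 1)\tilde\Delta=\tilde\epsilon\otimes\tilde\eta$ together with the module tensorator $\varpi$ \eqref{tensorator}, both of which are themselves coherent for the strict cobraided Hopf structure on $\mathbb{U}_q\G^{\Gamma^1}$. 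This is exactly the coherence datum defining a homotopy fixed point of the $\mathbb{U}_q\G^{E^1}$-action, so a holonomy-dense $\C_q(\mathbb{G}^{\Gamma^2})$ is such a fixed point.

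The geometric step, which converts invariance under $\mathbb{U}_q\G^{E^1}$ into invariance under all of $\mathbb{U}_q\G^{\Gamma^1}$, is the identity $E^1=\Gamma^1$ in the boundaryless case, which I would prove straight from regularity of $\boldsymbol{\Delta}$. Each edge of $\Gamma^1$ is shared by at most two $2$-simplices, and it is a boundary edge exactly when it belongs to a single one; hence $\partial\Gamma=\emptyset$ forces every edge to be shared by exactly two simplices, so every edge occurs in the gluing data $\cG$ and thus lies in $E^1$. Since $\Gamma$ is connected, the endpoints of these edges also exhaust all vertices, giving $\mathbb{U}_q\G^{E^1}=\mathbb{U}_q\G^{\Gamma^1}$. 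Combining with the previous paragraph, $\Lambda_\zeta\phi\cong\phi$ naturally for all $\zeta\in\mathbb{U}_q\G^{\Gamma^1}$, which is the equivalent form of the invariance condition \eqref{invarstatement}, so $\C_q(\mathbb{G}^{\Gamma})\subset\mathcal{O}^\Gamma$.

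The hard part will be the coherence in the second paragraph: individually the invariance isomorphisms are immediate from the theorem, but verifying that they are mutually compatible with the module associators and the cointerchange law --- so that they define an honest homotopy fixed point rather than a mere collection of isomorphisms --- requires tracking the antipode-induced framing reversals at each gluing edge (see Remark~\ref{gluingedges}) and confirming they cancel globally over $E^1$. The geometric identity $E^1=\Gamma^1$ is by comparison routine, the only care being that the ``rooted tree'' description of $E^1$ is not read as undercounting edges once the surface closes up.
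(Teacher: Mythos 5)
Your proposal is correct and follows essentially the same route as the paper's proof: the invariance isomorphisms $\varphi_\zeta^\phi$ from the preceding theorem are assembled (via holonomy-density) into an invertible natural transformation $\Lambda_\zeta\Rightarrow 1_{\C_q(\mathbb{G}^{\Gamma^2})}$, coherence is checked against the module associator $\alpha^\Lambda_{\zeta,\zeta'}$ (the paper's triangle axioms), and the boundaryless case reduces to the fact that every edge is then a gluing edge. The only difference is one of emphasis: you spell out the geometric identity $E^1=\Gamma^1$ when $\partial\Gamma=\emptyset$ (every edge shared by exactly two simplices in a regular decomposition, connectedness exhausting the vertices), which the paper compresses into ``follows directly from the definition'' --- a worthwhile detail to make explicit, but not a different argument.
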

\begin{proof}
    It is quick to verify that the natural transformations $\varphi$ satisfy the triangle axioms
    \begin{equation*}
        \varphi_{\zeta\cdot\zeta'}=\varphi_\zeta\ast(\Lambda_\zeta\circ \varphi_{\zeta'})\ast \alpha^\Lambda_{\zeta,\zeta'}
    \end{equation*}
    against the $\Lambda$-module associator $\alpha^\Lambda_{\zeta,\zeta'}:\Lambda_{\zeta\cdot\zeta'}(-) \Rightarrow \Lambda_\zeta\circ (\Lambda_{\zeta'}(-))$.
    The result then follows.
    
    

    The second statement follows directly from the invariance condition \eqref{invarstatement}.
\end{proof}

Note if $\zeta,\zeta'$ are localized on disjoint edges in $E^1$, then $\Lambda_\zeta,\Lambda_{\zeta'}$ commute up to a natural measureable isomorphism by locality (see \S \ref{locality}). 




    
\subsection{Disjoint commutativity modulo boundary}\label{commbdy}
We now turn to general simplicial decompositions of a 2-graph, in which each edge is not shared by necessarily at most two faces in $\boldsymbol{\Delta}$. To build such a structure up from the regular one, we first set up the local geometry, where a 2-simplex intersects a graph $\Gamma$ at one of its \textit{internal} edges. 

Provided $\Gamma$ itself has equipped a regular (unbroken oriented) simplicial decomposition $\boldsymbol{\Delta}$, there is then a 2-subgraph $\Gamma_e$ local to an internal edge $e\in E^1$, satisfying the property that its induced regular simplicial decomposition $\boldsymbol{\Delta}_e\subset\boldsymbol{\Delta}$ has size $k=2$. 

We fix the labels $\Delta_1,\Delta_2\in\boldsymbol{\Delta}_e$ and the associated gluing data on $e$ as a PL identification $e=e_1^{t_1}\xrightarrow{\sim}e_2^{s_2}$.  For simplicity, we shall pick the base point of $\Gamma_e$ to be contained within the glued edge. This is such that no whiskering needs to be performed when forming local holonomy-dense 2-graph states on $\Gamma_e$. 

Now suppose a third fundamental simplex $\Delta'$ intersects $\Gamma_e$ at its internal gluing edge $e$, whence this edge is shared by \textit{three} simplices. We denote the resulting graph by $\Gamma'_e$, which is equipped with a non-regular simplicial decomposition.

\subsubsection{Non-regular 2-graphs; triple points}\label{sec:triplepoint}
Prior to studying properties of the holonomy-dense 2-graph states on $\C_q(\mathbb{G}^{\Gamma'_e})$, we first promote our notion of "gluing-amenability" to non-regular simplicial decompositions. 

Suppose three fundamental 2-simplices $\Delta_1,\Delta_2,\Delta_3$ are incident upon the same edge $e$. Denote by $u_{123} = u_{12}\cap u_{13}\cap u_{13}$ the triple intersection of the pairwise degeneracy intersections $u_{12},u_{13},u_{23}$, and we label the pairwise sheaf automorphisms (here the indices $i,j,k$ are defined modulo 3) $$\alpha_{ij}: \phi_i\mid_{\mathbb{G}^{u_{ij}}} \cong \phi_j\mid_{\mathbb{G}^{u_{jk}}},\qquad 1\leq i<j<k\leq 3$$ as provided in \textbf{Definition \ref{glueamenable}}.

Under this configuration, we now introduce a $U(1)$-phase (resp. natural isomorphism of sheaves) $\sigma_{123}$ localized at the gluing edge $e$, which directly receives contribution from the \textit{Postnikov class} (resp. associator) $\tau$ of $\mathbb{G}$.
\begin{definition}\label{nonreg-glue}
    We say the triple $(\phi_1,\phi_2,\phi_3)_\sigma\in\C_q(\mathbb{G}^{\Gamma})$ is \textbf{gluing-amenable on the non-regular 2-subgraph $\Gamma_e'$} iff there is a $U(1)$-phase $\sigma_{123}\in U(1)$, localized on 2-holonomy decorations on the triple intersection $u_{123}$, such that the associated sheaf isomorphisms $\alpha_{ij}$ satisfies
\begin{equation*}
        \alpha_{23}\circ\alpha_{12} = \sigma_{123} \cdot \alpha_{13}.
    \end{equation*}
    If $\Delta_4$ is another 2-simplex incident upon this same edge $e$, then on the quadruple intersection $u_{1234}$ this phase satisfies the pentagon condition
    \begin{equation*}
        (\delta\sigma)_{1234} = \sigma_{234}\sigma_{134}^{-1}\sigma_{124}\sigma_{123}^{-1}=1,
    \end{equation*}
    which ensures that the assignment of $U(1)$-phases $\sigma$ is well-defined 
\end{definition}
\noindent The above pentagon condition bears a striking resemblance to {\v C}ech 2-cocycle conditions, hence we shall denote by $\check{H}(\mathbb{G}^{u_{123}},U(1))$ the space in which such $U(1)$-phases $\sigma$ live. As a slight abuse of language, we shall refer to $\sigma$ as the \textit{$U(1)$-gerbe} attached to a triple degeneracy intersection $u_{123}$.

\begin{rmk}\label{triplepoint}
    Geometrically, $\sigma$ witnesses the equivalence between the two ways in which the decorated 2-graphs on $\coprod_{i=1}^3\Delta_i$ into decorations on $\Gamma_e'$; see fig. \ref{fig:triple}. As such, \textbf{Definition \ref{intch}} is saying that each such triple point in a 2-graph is assigned a natural isomorphism $\sigma$. In the strict cast, these $\sigma$'s only have components proportional to the identity, and hence reduces to a $U(1)$-valued phase.
\end{rmk}

\begin{figure}[h]
    \centering
    \includegraphics[width=0.8\linewidth]{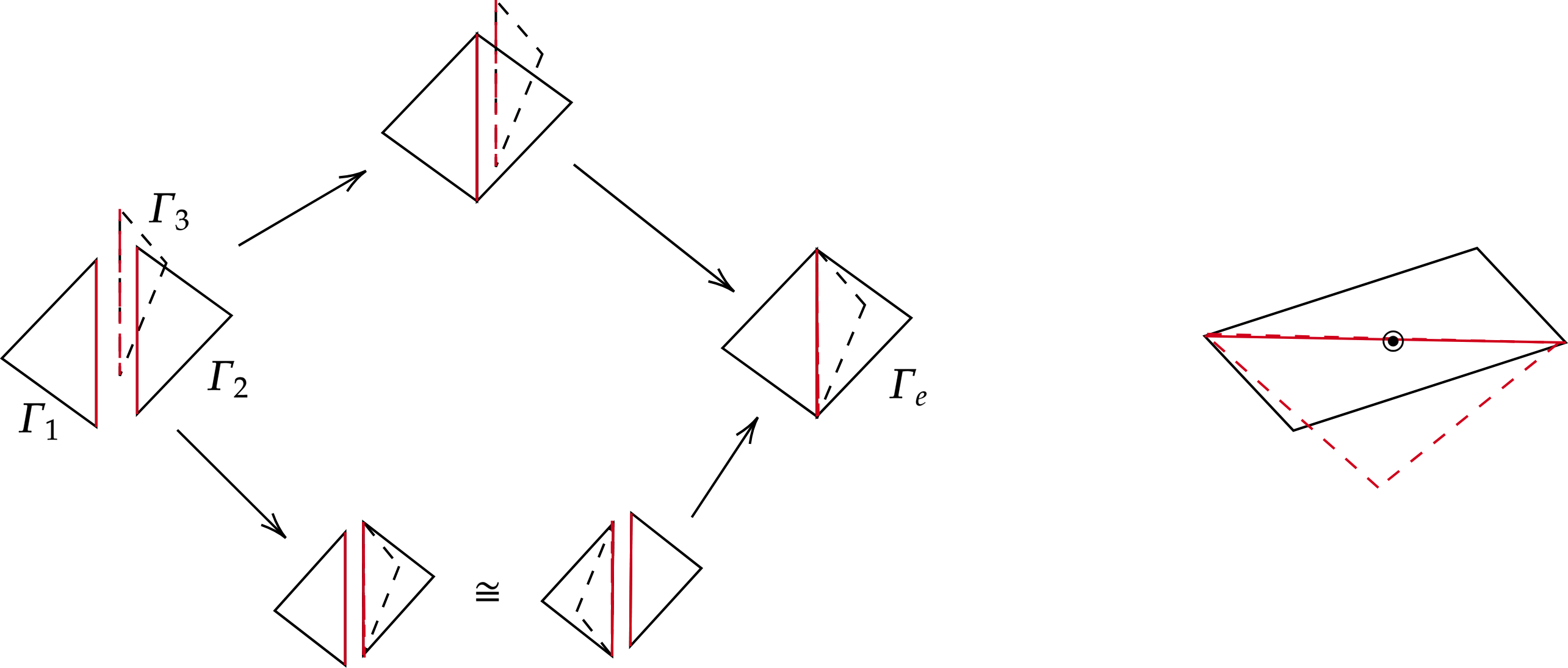}
    \caption{The left illustrates the geometric configuration of 2-simplices upon which the $U(1)$-gerbe $\sigma$ is defined. This geometry is precisely the \textit{triple point} in a singular graph \cite{matveev2007algorithmic} as displayed on the right; see also fig. 2 (b) of \cite{Sakata2022-il}.  }
    \label{fig:triple}
\end{figure}

The data $\sigma$ will be implicit in the following. 
\begin{theorem}\label{onecellcommute}
    Let $\C_q(\mathbb{G}^{\Gamma})$ be holonomy-dense, and let $\Delta'$ be another fundamental 2-simplex which intersects $\Gamma$ at one of its internal edges $e\in E^1$. Then provided $(\Phi_e,\phi')\in\C_q(\mathbb{G}^{\Gamma\coprod \Delta'})$ is gluing-amenable at the non-regular 2-subgraph $\Gamma_e'$, there exists measureable isomorphisms of sheaves
    \begin{equation*}
        \phi'\times \Phi_e \xrightarrow{\sim}\Phi_e\times \phi',\qquad \Phi_e\times \phi' \xrightarrow{\sim}\phi'\times \Phi_e
    \end{equation*}
    in $\C_q(\mathbb{G}^{\Gamma_e\coprod \Delta'})\cong \C_q(\mathbb{G}^{\Gamma_e})\times\C_q(\mathbb{G}^{\Delta'})$.
\end{theorem}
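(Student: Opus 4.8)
The plan is to reduce the claim to the braid relations \eqref{braid} of the lattice 2-algebra and then to annihilate the resulting $\tilde R$-matrix against the gauge-invariance of the holonomy-dense state $\Phi_e$ at the internal edge $e$. Write $\Phi_e\cong\phi_1\ostar(W_{p}\phi_2)$ for the local holonomy-dense state on $\Gamma_e$, whose constituent simplices $\Delta_1,\Delta_2$ are glued along $e$. Since $\Delta'$ meets $\Gamma_e$ precisely along $e$, this edge becomes a triple point shared by $\Delta_1,\Delta_2,\Delta'$, and the gluing-amenability hypothesis supplies the coherent $U(1)$-gerbe datum $\sigma$ of \textbf{Definition \ref{nonreg-glue}}. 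The entire computation takes place in $\C_q(\mathbb{G}^{\Gamma_e\coprod\Delta'})\cong\C_q(\mathbb{G}^{\Gamma_e})\times\C_q(\mathbb{G}^{\Delta'})$, the factorization being an instance of \textbf{Proposition \ref{basicfacts}}(3).

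First I would apply the braid relation \eqref{braid} to the pair $(\phi',\Phi_e)$. Because $\phi'$ and $\Phi_e$ overlap only on $e$, the non-disjoint case ($e\cup\partial f'\neq\emptyset$) applies and yields a natural sheaf isomorphism
\[
\phi'\times\Phi_e \xrightarrow{\;\sim\;} (\Lambda\times\Lambda)_{\tilde R_e}(\Phi_e\times\phi')=\textstyle\sum \Lambda_{\tilde R^{(1)}}\Phi_e \times \Lambda_{\tilde R^{(2)}}\phi',
\]
where $\tilde R_e$ is the $\tilde R$-matrix of $\mathbb{U}_q\G=\mathbb{U}_q\G^{(v,e)}$ localized on $e$, written in Sweedler notation. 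I would construct this isomorphism on the nose on the triple degeneracy intersection using the cobraiding $R$ of \eqref{braidR} together with the module tensorator $\varpi$ of \eqref{tensorator}, and then extend it to all of $\C_q(\mathbb{G}^{\Gamma_e})$ by holonomy-density.

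The crux is the next step. The first leg $\tilde R^{(1)}$ is a $2$-gauge parameter localized on $e$, and $e\in E^1$ is an \emph{internal} edge of $\Gamma_e$. Hence, by the invariance-modulo-boundary theorem of \S\ref{invarbdy}, $\Phi_e$ is a homotopy fixed point: there is a natural invertible transformation $\varphi_\zeta^{\Phi_e}:\Lambda_\zeta\Phi_e\xrightarrow{\sim}\Phi_e$ for every $\zeta\in\mathbb{U}_q\G^{E^1}$, which (just as an invariant vector satisfies $\zeta\cdot v=\tilde\epsilon(\zeta)v$) realizes $\Lambda_\zeta\Phi_e\cong\tilde\epsilon(\zeta)\Phi_e$ naturally in $\zeta$. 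Applying $\varphi$ termwise to the first factor therefore collapses the first tensor leg, and the counit axiom $(\tilde\epsilon\otimes\id)\tilde R=\tilde\eta$ of the cobraiding reduces the second:
\[
\textstyle\sum \Lambda_{\tilde R^{(1)}}\Phi_e \times \Lambda_{\tilde R^{(2)}}\phi' \cong \Phi_e\times\Lambda_{(\tilde\epsilon\otimes\id)\tilde R}\,\phi'=\Phi_e\times\Lambda_{\tilde\eta}\phi'\cong\Phi_e\times\phi'.
\]
This gives the first isomorphism. The reverse isomorphism $\Phi_e\times\phi'\xrightarrow{\sim}\phi'\times\Phi_e$ is produced identically, running \eqref{braid} with the inverse braiding $\tilde R_e^{-1}$, which is available since $\tilde R$ is assumed invertible in \S\ref{*-op}.

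I expect the main obstacle to be the coherence bookkeeping of this crux step: verifying that the invariance isomorphisms $\varphi$, the braiding from $R,\tilde R$, the tensorator $\varpi$, and the triple-point gerbe $\sigma$ assemble into a single well-defined natural isomorphism, and in particular that the counit collapse $\sum\tilde\epsilon(\tilde R^{(1)})\otimes\tilde R^{(2)}=\tilde\eta\otimes(-)$ holds at the level of natural transformations rather than merely stalkwise. The cocycle/pentagon condition on $\sigma$ in \textbf{Definition \ref{nonreg-glue}} should guarantee that the $U(1)$-phases at the triple point do not obstruct this; confirming the compatibility of $\varphi_\zeta^{\Phi_e}$ with the coproduct $\tilde\Delta$ used to define $\tilde R_e$ is where the real work lies.
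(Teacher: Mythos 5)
There is a genuine gap, and it sits exactly where your proof does its main work. Your reduction hinges on applying the braid relation \eqref{braid} \emph{once}, to the pair $(\phi',\Phi_e)$ as a whole, with a single $\tilde R_e$ whose first leg acts on all of $\Phi_e$ through the module structure. That step is not available: \eqref{braid} is formulated only for states localized on single faces, while $\Phi_e$ lives on the two faces $\Delta_1,\Delta_2$ both incident on $e$. When one actually extends \eqref{braid} to the composite --- which is what the paper does, writing $\Phi_e\cong\phi_1\ostar\phi_2$ by holonomy-density and braiding $\phi'$ past each factor separately --- the framing conventions at the shared edge (eq. \eqref{dagger2-gau} and \textit{Remark \ref{gluingedges}}) force the second application to carry an antipode: fixing the framing of $e$ to agree with $\Delta_1$, one gets $\phi'\times\phi_1\cong(\Lambda\times\Lambda)_{\tilde R_e}(\phi_1\times\phi')$ but $\phi'\times\phi_2\cong(\Lambda\times\Lambda)_{(1\times\tilde S)\tilde R_e}(\phi_2\times\phi')$. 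Combining these through the module associator, the total gauge action is $\tilde R\,\hat\cdot\,(1\times\tilde S)\tilde R$, which equals $\tilde\eta\times\tilde\eta$ by one of the quasitriangularity axioms, so the braiding is already trivial with no appeal to gauge invariance; the reverse isomorphism uses the other axiom $(\tilde S\times 1)\tilde R\,\hat\cdot\,\tilde R=\tilde\eta\times\tilde\eta$. Your naive composite braiding would instead be governed by a $(\tilde\Delta\otimes 1)\tilde R$-type (i.e. $\tilde R_{13}\tilde R_{23}$-type) expression, which differs from the antipode-twisted one precisely by the framing bookkeeping you omit. As an unproven intermediate claim it is the real gap; moreover, asserting a braiding between $\phi'$ and arbitrary states of $\C_q(\mathbb{G}^{\Gamma_e})$ in that generality is essentially the functor $c$ of \textbf{Theorem \ref{brmon}}, whose proof in the paper rests on the present theorem, so you would risk circularity.

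The second half of your argument --- upgrading the homotopy-fixed-point isomorphism $\Lambda_\zeta\Phi_e\cong\Phi_e$ of \S\ref{invarbdy} to an action factoring through $\tilde\epsilon$ naturally in $\zeta$, and then collapsing $\tilde R$ with $(\tilde\epsilon\otimes\id)\tilde R=\tilde\eta$ --- is a sensible mechanism, and the coherence data of \textbf{Corollary \ref{bdyinvar}} (the triangle axioms for $\varphi_\zeta$ against the module associator) is the right ingredient for the compatibility with $\tilde\Delta$ that you flag. It is also morally the same cancellation: the invariance theorem of \S\ref{invarbdy} is itself proven from the Hopf axiom $(\tilde S\otimes 1)\tilde\Delta=\tilde\epsilon\otimes\tilde\eta$, i.e. the same antipode pairing across the internal edge. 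But as written, this machinery is applied on top of a braid isomorphism that was never established; once you repair step 2 by braiding face-by-face, the $\tilde R$-contributions cancel right there by quasitriangularity, and your invariance-plus-counit step becomes redundant rather than the crux.
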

\begin{proof}
    Denote by the involved non-regular 2-subgraph $\Gamma_e'= \Gamma_e\cup\Delta'$, where $\Gamma_e\subset\Gamma$ is the 2-subgraph of the \textit{regular} 2-graph $\Gamma$ which meets the 2-simplex $\Delta'$ non-regularly. We then use holonomy-density to write $\phi\cong\Phi_e = \phi_1\ostar\phi_2\in\C_q(\mathbb{G}^{\Gamma})$ for any 2-graph state localized at $\Gamma_e'$, where $(\phi_1,\phi_2)\in\C_q(\mathbb{G}^{\Gamma})$ denote a tuple of 2-graph states, localized on $\Delta_1,\Delta_2$, which are gluing-amenable at the common edge $e\in E^1$. Note no whiskering needs to be done on $\Gamma_e$ as we have assumed that the base point $v$ of $\Gamma_e$ is contained in $e$.

    Now take some $\phi'\in\C_q(\mathbb{G}^{\Delta'})$. Given this setup, we then have a dense inclusion of sheaves of sections 
    \begin{equation*}
        (1\times -\ostar-) (\phi'\times\phi_1\times\phi_2)= \phi'\times (\phi_1\ostar\phi_2) \subset \phi'\times \Phi_e. 
    \end{equation*}
    By hypothesis, $\partial\Delta'\cap e\neq \emptyset$. If we pick the local framing of the interface $e$ to coincide with the framings of $\Delta_1$, then we have a measureable isomorphism of sheaves
    \begin{equation*}
        \phi'\times\phi_1 \cong (\Lambda\times \Lambda)_{\tilde R_e}(\phi_1 \times\phi')
    \end{equation*}
    by the \textit{braid relations} \eqref{braid}, where $\tilde R_e$ is the 2-$R$-matrix on $\mathbb{U}_q\G^e$. On the other hand, once we have fixed the framing of $e$ as above, it must be opposite to that of $\Delta_2$. Hence \eqref{dagger2-gau}
    \begin{equation*}
        \phi'\times\phi_2 \cong (\Lambda\times \Lambda)_{(1\times \tilde S)\tilde R_e}(\phi_2\times\phi');
    \end{equation*}
    see \textit{Remark \ref{gluingedges}}. 

We now combine these two computations through the gluing-amenability condition \textbf{Definition \ref{nonreg-glue}}. Using the module associator $$(\alpha^{\Lambda\times\Lambda}_{\tilde R,(1\times\tilde S)\tilde R}): (\Lambda\times\Lambda)_{\tilde R}\circ (\Lambda\times \Lambda)_{(1\times \tilde S)\tilde R}\Rightarrow (\Lambda\times\Lambda)_{\tilde R ~\hat\cdot~ (1\times \tilde S) \tilde R},$$
   together with one of the quasitriangularity axioms satisfied by the cobraiding $\tilde R$, 
   \begin{equation*}
        \tilde R \cdot (1\times \tilde S) \tilde R = \tilde\eta\times\tilde \eta,
    \end{equation*}
     we finally achieve a measureable isomorphism of sheaves
     \begin{equation*}
         \phi'\times\Phi_e\cong \phi'\times (\phi_1\ostar\phi_2) \xrightarrow{\sim} (\phi_1\ostar\phi_2)\times\phi'\cong \Phi_e\cong \phi',
     \end{equation*}
     as desired.

     Similar argument applies to produce a sheaf isomorphism 
     $\Phi_e\times \phi'  \xrightarrow{\sim} \phi'\times \Phi_e$
     from the other quasitriangularity axiom $$(\tilde S\times1) \tilde R \cdot \tilde R  = \tilde\eta\times\tilde \eta.$$
\end{proof}
\noindent Keep in mind that, in general, the above sheaf isomorphisms need \textit{not} be inverses of each other.

\medskip

In the following, we will often abuse notation to denote "$\phi\in\C_q(\mathbb{G}^{\Gamma'})$" by a 2-graph state $\phi\in\C_q(\mathbb{G}^{\Gamma})$ which is localized, in the sense of \textbf{Definition \ref{local2graphstates}}, on a 2-subgraph $\Gamma'\subset \Gamma$. The fact that $\C_q(\mathbb{G}^{\Gamma'})\subset \C_q(\mathbb{G}^{\Gamma})$ is a full measureable subcategory will be implicitly understood.

\subsubsection{Consistency with the interchanger}
We now wish to extend the above argument to \textit{any} regular graph $\Gamma'$ which meets the given $\Gamma$ at a collection of internal edges of $\Gamma$ in $E^1$. To do this, however, we need to understand how the $U(1)$-gerbes $\sigma$ "stack" against each other. This involves the planar interchanger $\beta$. 

The geometric setup is the following. Let $\Gamma_e,\Gamma_{e'}$ denote graphs of the form above: each consisting of three fundamental 2-simplices glued at the same edges $e,e'$, respectively. Given then edges $e,e'$ are composable
\begin{equation*}
    \exists v_\circ,\qquad e\cup_{v_\circ} e' = v\xrightarrow{e}v_\circ\xrightarrow{e'}v',
\end{equation*}
we can introduce additional gluing data which stacks these graphs together along (all) their source edges: $e_i^1\xrightarrow{\sim}e_i'^1$. We denote the resulting graph by $\Gamma=\Gamma_{e\cup_{v_\circ} e'}$. 


\begin{figure}[h]
    \centering
    \includegraphics[width=1\linewidth]{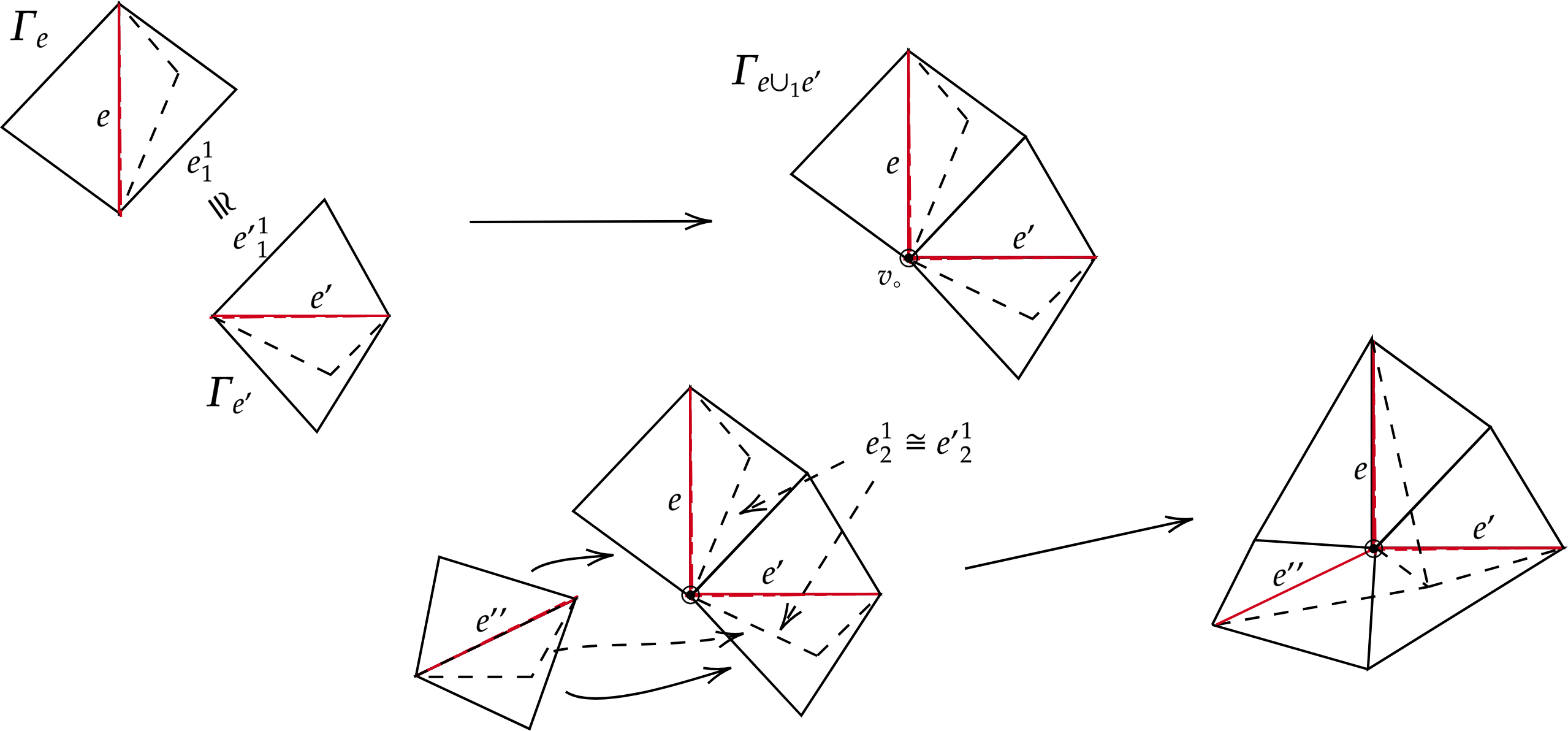}
    \caption{The geometric configurations involving the different gluing operations between non-regular triple point 2-graphs $\Gamma_e,\Gamma_{e'}$. The upper row displays their gluing along a single source edge $e_1^1\cong e_1'^1$, while the lower row displays a trivalent vertex formed by triple point 2-graphs.}
    \label{fig:triplecompose}
\end{figure}

The degeneracy neighbourhood around the central vertex $v_\circ$ then carries the data of \textit{both} of the $U(1)$ phases $\sigma_{123},\sigma_{1'2'3'}$ provided by \textbf{Definition \ref{nonreg-glue}}. This stacking of the graphs induces a "fusion operation" (cf. \cite{Waldorf2015TransgressiveLG}) on the $U(1)$-gerbes,
$$\cup_3=\cup: \check{H}^2(\mathbb{G}^u,U(1)) \otimes \check H^2(\mathbb{G}^u,U(1))\rightarrow \check H^2(\mathbb{G}^u,U(1)),\qquad u=u_{123}\cap u_{123}',$$
along the vertical composition operation $\otimes$ on the 2-graph sheaves. As such, we can denote the $U(1)$-gerbe attached to $\Gamma$ by $\sigma\cup\sigma'$.

\medskip

On the other hand, for $i=1,2,3$, let $\Gamma_i$ denote the graph consisting of two fundamental 2-simplices $\Delta_i,\Delta_i'$ glued along their source edges $e_i^1\cong e_i'^1$. If we introduce the following additional gluing data $(e^{(')})_i^2\cong (e^{(')})_{i+1}^3$ for $i-1 \in\bbZ_3$, then we also obtain the graph $\Gamma$ as defined above; see fig. \ref{fig:compat}. However, the $U(1)$ phase which is obtained in this manner is given instead by the following composite sheaf isomorphisms
\begin{equation*}
    (\alpha_{23}\ostar \alpha_{2'3'})\circ (\alpha_{12}\ostar \alpha_{1'2'}) = \sigma_{(11')(22')(33')}\cdot (\alpha_{13}\ostar \alpha_{2'3'})
\end{equation*}
near the central vertex $v_\circ.$ This also defines a $U(1)$-gerbe, which we denote by $\sigma\cdot\sigma'\in \check H^2(\mathbb{G}^{u},U(1))$. 

The notion of "gluing-amenability" for generic non-regular simplicial decompositions therefore must involve consistency relations between the $U(1$)-gerbes $\sigma\cup\sigma',\sigma\cdot\sigma'$ living on subgraphs of the form $\Gamma$. This is stated as follows. 

\medskip

Let $u=u_{123}\cap u_{123}'$ denote the degeneracy intersection around the central vertex $v_\circ$ of $\Gamma=\Gamma_{e\cup_{v_\circ}e'}=\bigcup_{i=1}^3\Gamma_i$. We now introduce the $U(1)$-phases $\gamma_{12},\gamma_{23},\gamma_{13}$ (see \textit{Remark \ref{commutephase}}) which witness the commutativity of $\alpha$ with the interchangers $\beta$,\footnote{Here we have abbreviated $\alpha_i=\alpha_{i,i+1}: \phi_i\mid_{u_{i,i+1}}\cong \phi_{i+1}\mid_{u_{i,i+1}}$ for $i=1,2,3$, where $\alpha_3=\alpha_{3,1}$.} 
    \begin{align*}
        \beta_{23}^{2'3'}\circ\big((\alpha_1\ostar\alpha_2)\otimes(\alpha_{1'}\ostar \alpha_{2'})\big)&= \gamma_{12}\cdot\big((\alpha_{1}\otimes\alpha_{1'})\ostar(\alpha_{2}\otimes\alpha_{2'})\big) \circ \beta_{12}^{1'2'},\\
        \beta_{31}^{3'1'}\circ\big((\alpha_2\ostar\alpha_3)\otimes(\alpha_{2'}\ostar \alpha_{3'})\big) &= \gamma_{23}\cdot \big((\alpha_{2}\otimes\alpha_{2'})\ostar(\alpha_{3}\otimes\alpha_{3'})\big) \circ \beta_{23}^{2'3'},\\
        \beta_{21}^{2'1'}\circ\big((\alpha_1\ostar\alpha_3)\otimes(\alpha_{1'}\ostar \alpha_{3'})\big) &= \gamma_{13} \cdot\big((\alpha_{1}\otimes\alpha_{1'})\ostar(\alpha_{3}\otimes\alpha_{3'})\big) \circ \beta_{13}^{1'3'}.
    \end{align*}
Geometrically, these $U(1)$-phases $\gamma$ witness the compatibility of the configuration of simplices indicated in fig. \ref{fig:compat}.

\begin{figure}[h]
    \centering
    \includegraphics[width=0.65\linewidth]{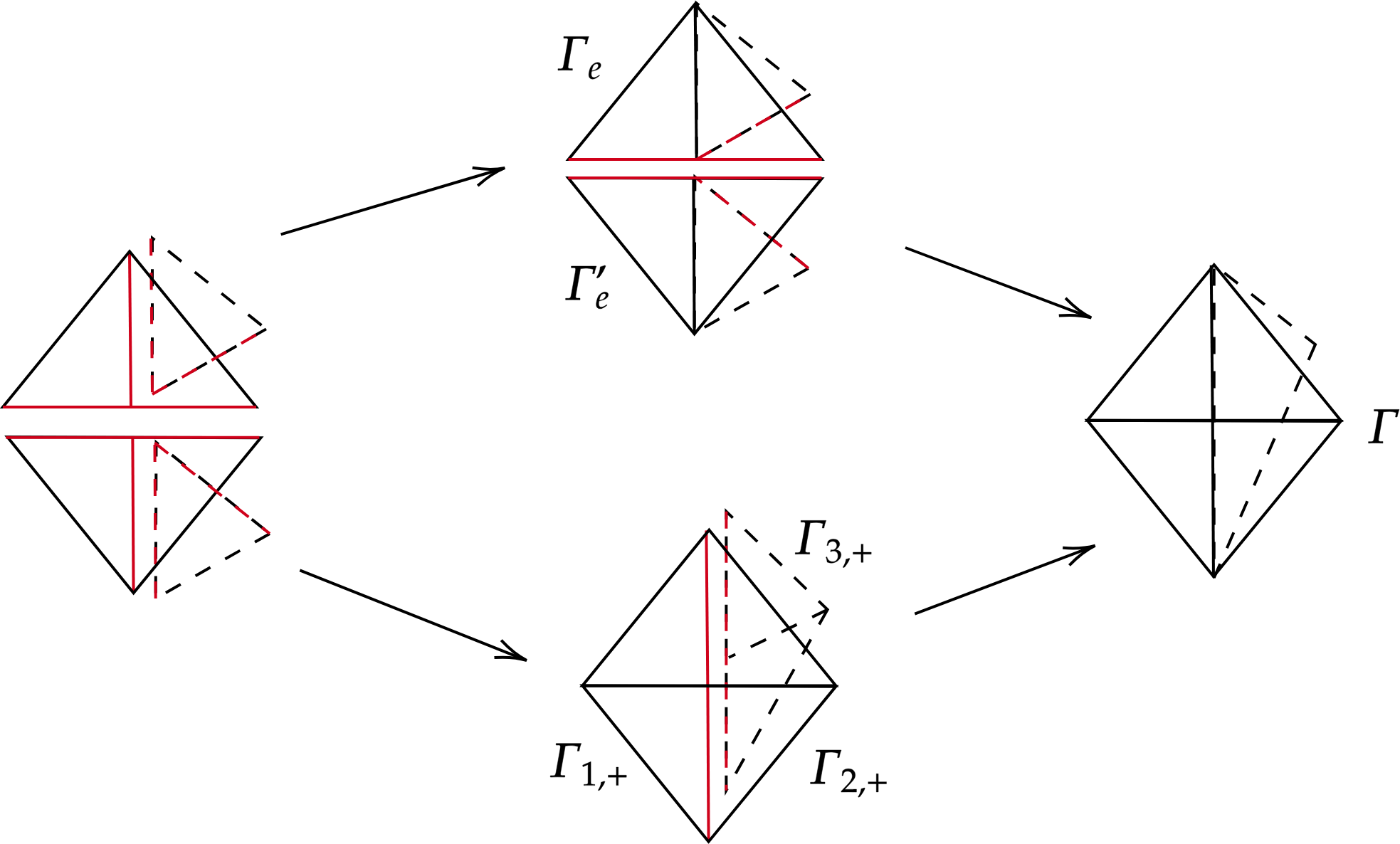}
    \caption{The figure illustrates the geometric configuration of 2-simplices upon which $\gamma$ witnesses the compatibility of the interchanger $\beta$ with the sheaf isomorphisms $\alpha$.}
    \label{fig:compat}
\end{figure}

The condition is then that these phases implements the consistency of the products $\cdot,\cup$,
\begin{equation}
    \gamma_{12} \gamma_{13}^{-1}\gamma_{23}=(\sigma_{123}\otimes\sigma_{123}') (\sigma_{(11')(22')(33')})^{-1},\nonumber
\end{equation}
By translating this into the language of the {\v C}ech cocycle $\delta \eta$, we have the following.

\begin{definition}\label{consistentgerbe}
    We say that the tuple $(\phi_1,\phi_2,\phi_3;\phi_1',\phi_2',\phi_3')$ of 2-graph states in $\C_q(\mathbb{G}^{\Gamma})$ localized, respectively, on the 2-simplices $\Delta_1,\Delta_1',\dots,\Delta_3,\Delta_3'$, is \textbf{gluing-amenable} on $\Gamma=\coprod_i(\Delta_i\cup\Delta_i')$ iff there exists a {\v C}ech 1-cocycle $\gamma \in Z^1(\mathbb{G}^{u},U(1))$ such that 
    \begin{equation}
        (\sigma\cup\sigma') = \delta\gamma  (\sigma\cdot \sigma')\label{gerbeinterchange}.
    \end{equation}
   In other words, the two operations $\cdot,\cup$ coincide in {\v C}ech cohomology on $\mathbb{G}^u$, where $u=u_{123}\cap u_{123}'$.
\end{definition}
\noindent This condition ensures that the $U(1)$-gerbe attached to states on graphs of the form $\Gamma=\Gamma_{e\cup_{v_\circ}e'}=\bigcup_{i=1}^3\Gamma_i$ is unambiguously $\sigma\cup\sigma'$. 

\begin{rmk}\label{commutephase}
    The quantity $\eta$ in general defines a sheaf isomorphism on quadruple tensor products of 2-graph states, hereby abbreviated as "$\phi^4$". However, $\eta$ is natural and only have components proportional to the identity in this case, which gives a $U(1)$-phase similar to $\sigma$ mentioned in \textit{Remark \ref{triplepoint}}.
\end{rmk}


\begin{example}\label{S3decomposition}
    Let $P\subset\R^3$ denote the union of the three coordinate planes in $\R^3$, and consider a 2-graph $\Gamma^2$ which triangulates $P\cap D^3$, where $D^3$ is the unit 3-disc. This geometric configuration consists of the stacking of two subgraphs of the form $\Gamma_e' \cup_e \Delta_4$, where $\Gamma_e'$ is the graph around a triple point as described in \textit{Remark \ref{triplepoint}}. In accordance with \textbf{Definition \ref{consistentgerbe}}, gluing-amenable 2-graph states on each wedge in $\Gamma=\Gamma^2$ has attached a $U(1)$-valued {\v C}ech 2-cocycle of the form $\sigma\cup\sigma'$. The difference between these gerbes across the wedges are described by precisely \textit{the Leibniz rule},
\begin{equation*}
    \delta (\sigma\cup\sigma') = \delta\sigma\cup\sigma' + \sigma\cup \delta\sigma' ,
\end{equation*}
whence the 2-cocycle condition in \textbf{Definition \ref{nonreg-glue}} says that the $U(1)$-gerbe attached to $\Gamma$ is unambiguously given by the {\v C}ech cohomology class of $\sigma\cup\sigma'$.    
\end{example}



 \subsubsection{Braiding properties of the 2-graph operator products}
We can now finally examine how each 2-graph states behave depending on the locality of the 2-graphs.


\begin{theorem}\label{brmon}
    For each 2-graph $\Gamma,\Gamma'$, define the functor
    \begin{equation}
        c:\C_q(\mathbb{G}^{\Gamma})\times \C_q(\mathbb{G}^{\Gamma'}) \rightarrow \C_q(\mathbb{G}^{\Gamma'})\times \C_q(\mathbb{G}^{\Gamma}),\qquad \Phi\times \Phi' \mapsto \text{flip}\circ\big((\Lambda\times \Lambda)_{\tilde R} \Phi\times\Phi'\big),\label{braidfunctor}
    \end{equation}
    where $\text{flip}$ is the swap of the Cartesian product factors. If $\Gamma^1\cap \partial\Gamma'$ contains at most 0-simplices, then there exists a trivialization $c\cong \text{flip}$.
\end{theorem}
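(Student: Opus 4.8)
The plan is to reduce the braiding functor $c$ to its action on local simplex states and show that the $\tilde R$-correction it carries is forced to collapse to the unit under the support hypothesis. First I would invoke holonomy-density: by \textbf{Proposition \ref{basisdecomp}}, any $\Phi\in\C_q(\mathbb{G}^\Gamma)$ and $\Phi'\in\C_q(\mathbb{G}^{\Gamma'})$ are measureably naturally isomorphic to $\ostar$-products of states localized on the fundamental $2$-simplices of the (regular, unbroken, oriented) simplicial decompositions of $\Gamma$ and $\Gamma'$. Since $\Gamma$ and $\Gamma'$ meet only in a $0$-dimensional locus, the factorizability $\cH^{X\times Y}\simeq\cH^X\times\cH^Y$ of \textbf{Proposition \ref{basicfacts}} lets me treat $\C_q(\mathbb{G}^{\Gamma\coprod\Gamma'})\simeq\C_q(\mathbb{G}^\Gamma)\times\C_q(\mathbb{G}^{\Gamma'})$, and the tensorator \eqref{tensorator} then reduces $c$ to the pairwise braiding of each localized simplex state $\phi_{(e,f)}$ of $\Gamma$ against each localized $\phi'_{(e',f')}$ of $\Gamma'$, governed by the braid relations \eqref{braid}.

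The core step is then to show that each such pairwise braiding falls into the \emph{disjoint} case of \eqref{braid}. The $\tilde R$-matrix entering $c$ is, by \eqref{inducrmat} and the locality of the $2$-gauge symmetry (\S \ref{locality}), localized on the boundary $1$-skeleton $\partial\Gamma'$; concretely, $\tilde R_e\sim 1+i\hbar r+o((i\hbar)^2)$ is supported on a single edge $e$ via the characteristic measure $\chi^{[1]}_{(v,e)}$, and the classical $r$-matrix couples two faces only across a shared boundary edge. Under the hypothesis that $\Gamma^1\cap\partial\Gamma'$ contains at most $0$-simplices, no source edge $e$ of any face $f$ in $\Gamma$ meets $\partial f'$ in a $1$-simplex, so the edge-supported quantity $\chi^{[1]}_{(v,e)}\tilde R_e$ restricts to a set of vanishing $1$-dimensional measure. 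Hence $\tilde R_e$ collapses to the grouplike unit $\tilde\eta\times\tilde\eta$ of $\mathbb{U}_q\G$, and $(\Lambda\times\Lambda)_{\tilde R_e}\cong\Lambda_{\tilde\eta}\times\Lambda_{\tilde\eta}\cong\id$, which is exactly the disjoint branch $\phi_{(e,f)}\times\phi'_{(e',f')}\cong\phi'_{(e',f')}\times\phi_{(e,f)}$ of \eqref{braid}. This is precisely the point at which \textbf{Theorem \ref{onecellcommute}} would instead produce a non-trivial correction: there the interface was a genuine internal edge, whereas here it is at most a point.

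Finally I would reassemble these pointwise trivializations into a global natural isomorphism. Using the tensorator \eqref{tensorator} to repackage the $\ostar$-products, the naturality of the local swaps against measureable morphisms, and holonomy-density once more to pass from the dense subcategory of $\ostar$-decomposable states to all of $\C_q(\mathbb{G}^\Gamma)\times\C_q(\mathbb{G}^{\Gamma'})$, the pairwise identifications organize into a single invertible measureable natural transformation exhibiting $c$ as the bare swap, i.e.\ the symmetry of the symmetric monoidal structure on $\mathsf{Meas}$ (\textbf{Proposition \ref{basicfacts}}); under this symmetry the swap is identified with the claimed trivialization $c\cong\id$. The main obstacle I anticipate is the measure-theoretic heart of the second paragraph: making rigorous that $\tilde R_e$, as an edge-localized object, genuinely reduces to the counit/unit over a $0$-dimensional overlap rather than merely having small support. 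This requires the localization of $\tilde R$ on the boundary $1$-skeleton together with a careful analysis of how the characteristic measures $\chi^{[1]}$ behave when an edge is restricted to an isolated $0$-simplex; the hypothesis must be used to guarantee that the overlap carries no $1$-dimensional Haar mass on which $r$ could act.
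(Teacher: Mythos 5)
Your proposal has a genuine gap at its core step, and it sits precisely where you part ways with \textbf{Theorem \ref{onecellcommute}}. You claim that when $\Gamma^1\cap\partial\Gamma'$ is $0$-dimensional, each pairwise braiding lands in the \emph{disjoint} branch of \eqref{braid}, because the edge-localized $\tilde R_e$ "restricts to a set of vanishing $1$-dimensional measure" and so collapses to $\tilde\eta\times\tilde\eta$. This is not how \eqref{braid} works: its non-trivial branch is triggered by $e\cap\partial f'\neq\emptyset$, which includes intersection at a single point, and $\tilde R_e$ is localized on the decorations $\mathbb{G}^{(v,e)}$ of the \emph{whole} edge $e$, not on the geometric overlap locus $e\cap\partial f'$. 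The action $(\Lambda\times\Lambda)_{\tilde R_e}$ therefore does not become trivial just because the overlap is a point; each individual crossing of a face of $\Gamma'$ past a face of $\Gamma$ adjacent to $e$ genuinely carries an $\tilde R_e$-twist. So your step 2 fails, and with it the reassembly in step 3; the obstacle you flag at the end is not a technical detail to be patched but the collapse of the whole strategy.

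What actually makes the theorem true --- and what the paper does --- is an algebraic cancellation, not a measure-theoretic vanishing. After using \textbf{Corollary \ref{bdyinvar}} to assume without loss of generality that $\Gamma^1\subset\partial\Gamma$, the $0$-simplex hypothesis means $\Gamma'$ ends on a set $E$ of internal edges of $\Gamma$; one decomposes $\Gamma=\Gamma_1\cup_E\Gamma_2$ and applies the argument of \textbf{Theorem \ref{onecellcommute}} at each $e\in E$: crossing $\phi'$ past the \emph{two} faces of $\Gamma$ adjacent to $e$ produces one factor of $\tilde R_e$ and one factor of $(1\times\tilde S)\tilde R_e$ (the antipode coming from the frame reversal \eqref{dagger2-gau}), and these contract to the unit by quasitriangularity,
\begin{equation*}
    \tilde R~\hat\cdot~(1\times\tilde S)\tilde R=\tilde\eta\times\tilde\eta .
\end{equation*}
The trivialization $c\Rightarrow\mathrm{flip}$ is then globalized along composite edges $e\cup_{v_\circ}e'$ using the gerbe-consistency condition of \textbf{Definition \ref{consistentgerbe}}, and extended to all states by holonomy-density; the genuinely disjoint situation $\Gamma^1\cap\partial\Gamma'=\emptyset$ is the only case where \eqref{braid} alone yields the flip. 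Your proposal explicitly disavows this mechanism ("Theorem \ref{onecellcommute} would instead produce a non-trivial correction"), which is exactly backwards: that theorem's \emph{conclusion} is commutativity, obtained through the cancellation of the non-trivial corrections, and it is the key lemma for the very case you are trying to handle.
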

\begin{proof}
    Note the functor $c$, as defined, depends on where $\mathbb{U}_q\G^{\Gamma^1}$ is localized --- namely how the 1-graph $\Gamma^1\hookrightarrow\Sigma$ is embedded into the 3d PL Cauchy surface in relation to the graphs $\Gamma,\Gamma'$. By \textbf{Corollary \ref{bdyinvar}}, 2-gauge transformations $\Lambda$ act non-trivially only on the boundary, hence we can without loss of generality assume $\Gamma^1 \subset \partial\Gamma$ is localized to the boundary of, say, the graph $\Gamma$.
    \begin{enumerate}
        \item \textbf{$\Gamma^1\cap \partial\Gamma'$ contains only 0-simplices}: $\Gamma'$ ends on a set $E=\Gamma^1$ of internal edges of $\Gamma$. We can then decompose $\Gamma=\Gamma_1\cup_E\Gamma_2$ further, whence by holonomy-density, we can apply the argument in \textbf{Theorem \ref{onecellcommute}} to each local graph intersection along $e\in E$. The condition \textbf{Definition \ref{consistentgerbe}} then allows us to extend this argument along composite internal edges $e\cup_{v_\circ}e'$, and hence to the entire collection $E$. This gives a natural isomorphism $c\Rightarrow {flip}$ which trivializes the braiding on  the gluing-amenable states $\C_q(\mathbb{G}^{\Gamma})\times_{E} \C_q(\mathbb{G}^{\Gamma'})$.
        \item \textbf{$\Gamma^1\cup\partial\Gamma'=\emptyset$ is empty}: in this case, $\Gamma,\Gamma'$ are disjoint, whence $\tilde R$ acts trivially by \eqref{braid}. The braiding functor $c$ {\it is} just the flip functor. 
    \end{enumerate}

    The final statement follows immediately from \textbf{Definition \ref{2hol2mon}}.
\end{proof}
\noindent  In other words, the extended operator insertions commute on 2-graphs with "decloaized boundaries".  This is the categorical analogue of Thm. 1 in \cite{Alekseev:1994au}: the closed plaquette elements $c^I(P)$ are central in $\cA_{CS}$. 


\begin{rmk}\label{2tangle}
    Recall \textbf{Definition \ref{categoricalquantumgroup}}. From \S \ref{deformed2gau} and \S \ref{lattice2alg}, the categorical quantum coordinate ring $\C_q(\mathbb{G})\in\operatorname{Mod}^*_\mathsf{Meas}(\mathbb{U}_q\G)$ is a measureable *-module category over $\mathbb{U}_q\G$. Due to the comonoidality of the cobraiding $\tilde R$ (or the \textit{higher-Yang-Baxter equations} satisfied by the 2-$R$-matrix, cf. \cite{Chen:2023tjf,Chen1:2025?}), the functor $c$ \eqref{braidfunctor} induces a braided monoidal structure on $\operatorname{Mod}^*_\mathsf{Meas}(\mathbb{U}_q\G)$ \cite{neuchl1997representation,Chen:2025?}.  If we further replace $\mathsf{Meas}$ with its finite-dimensional version $\mathsf{2Hilb}$, then we would recover the \textit{ribbon tensor} 2-category $\operatorname{2Rep}({\mathbb{U}}_q\G)$ of 2-representations studied in \cite{Chen:2025?}; see also \textit{Remark \ref{modelchange}}. 
\end{rmk}

An immediate consequence of \textbf{Theorem \ref{brmon}} is that 2-monodromy states --- namely the \textit{closed} Wilson surface states --- commute with all other 2-graph states. In the context of \textit{Remark \ref{2tangle}}, it means that the "closed-surface sector" of 2-Chern-Simons theory is contained within the \textit{$E_2$-centre} $Z_2(\operatorname{Mod}^*_\mathsf{Meas}(\mathbb{U}_q\G))$. This fact is a concrete manifestation of the general idea that the closed-brane sector of higher-dimensional QFT lies, in an appropriate sense, in the centre of the open-brane sector \cite{Kong:2011jf}. 

\begin{rmk}\label{openclosed}
By definition of the 2-holonomies $\mathbb{G}^{\Gamma^2}$ (\textbf{Definition \ref{2holdef}}), open Wilson surface states can only be described by the theory of \textit{non-Abelian gerbes} afforded by principal 2-bundles \cite{schreiber2013connectionsnonabeliangerbesholonomy,Waldorf:2012,Nikolaus2011FOUREV,Baez:2012}. Indeed, 2-gauge theories with a trivial structure map $\mu_1=0$ can only describe Abelian Wilson operators on closed surfaces \cite{Kapustin:2013uxa,Chen:2024axr,Alvarez:1997ma}, and not open-brane sectors.
\end{rmk}

\subsection{Orientation reversals and frame rotations}\label{*-opiso}
To close this section off, let us investigate the what the *-operations defined in \S \ref{*-op} imply through holonomy-density.
\begin{proposition}  
    Let $\C_q(\mathbb{G}^{\Gamma})$ be holonomy-dense, then there are meausreable natural isomorphisms
    \begin{equation*}
        -^{*_1}\xRightarrow{\sim} (\Lambda\otimes 1)_{{\tilde R^{-1}}}\circ (-^{\dagger_1}),\qquad -^{*_2}\xRightarrow{\sim}  -^{\dagger_2}
    \end{equation*}
    whose underlying measureable morphism at each component $\phi$ is given by the 2-$\dagger$-intertwining pair $\eta$. Here, each relevant $\tilde R$-matrices are localized on $\partial\Gamma$. If $\partial\Gamma$, then $\tilde R$ is localized on the base point $v\in \Gamma$.
\end{proposition}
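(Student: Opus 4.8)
The plan is to reduce the per-face definition of the *-operations (Definition \ref{starop}) to the genuine boundary by the same mechanism that powered the invariance-modulo-boundary result (Corollary \ref{bdyinvar}). First I would invoke Proposition \ref{basisdecomp}: since $\C_q(\mathbb{G}^{\Gamma})$ is holonomy-dense, any state $\phi$ is measureably naturally isomorphic to a 2-holonomy state $\Phi=\phi_1\ostar(W_{p_2}\phi_2)\ostar\cdots\ostar(W_{p_k}\phi_k)$ assembled from localized 2-simplex states. Because $-^{*_1}$ inherits the op-$\ostar$-monoidal and op-comonoidal structure of the horizontal antipode $S_\mathrm{h}$ (recorded after Definition \ref{unitary2hol}), applying $-^{*_1}$ distributes across the $\ostar$-factors in reverse order, so that $\Phi^{*_1}$ is an $\ostar$-product of the face-local expressions $\phi_{j,(e,f)}^{*_1}=(\Lambda\otimes 1)_{\tilde R_\mathrm{h}^{-1}}(\phi_{j,(\bar e',\bar f)})\,\eta_\mathrm{h}$, each carrying a copy of $\tilde R_\mathrm{h}^{-1}$ localized on the corresponding face boundary $\partial f_j$; the analogous statement for $-^{*_2}$ follows from the vertical antipode $S_\mathrm{v}$, with $\tilde R_\mathrm{v}^{-1}$ and $\eta_\mathrm{v}$.

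Next I would separate the edges occurring in $\{\partial f_j\}_j$ into the internal rooted tree $E^1$ (edges across which glued 2-simplices are identified) and the genuine boundary $\partial\Gamma$. Each $\tilde R$-contribution supported on $E^1$ is, by construction, a 2-gauge transformation $(\Lambda\otimes 1)_{\tilde R^{-1}}$ whose acting leg lies in $\mathbb{U}_q\G^{E^1}$. This is where holonomy-density does the essential work: by Corollary \ref{bdyinvar} the category $\C_q(\mathbb{G}^{\Gamma})$ is a homotopy fixed point under $\mathbb{U}_q\G^{E^1}$, so each internal $\tilde R$-transformation is equipped with an invertible measureable natural isomorphism $\varphi_\zeta:\Lambda_\zeta\Rightarrow 1$ trivializing it. Collecting these $\varphi$'s across $E^1$ trivializes all internal $\tilde R$-factors and leaves only the $\tilde R_\mathrm{h}^{-1}$ (resp. $\tilde R_\mathrm{v}^{-1}$) supported on $\partial\Gamma$. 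The surviving sheaf relabeling $\phi_{(\bar e',\bar f)}$ (resp. $\phi_{(e',\bar f)}$) together with the intertwiner $\eta_\mathrm{h}$ (resp. $\eta_\mathrm{v}$) is precisely the bare $\dagger_1$ (resp. $\dagger_2$) operation on the whole state, giving the asserted natural isomorphisms $-^{*_1}\xRightarrow{\sim}(\Lambda\otimes 1)_{\tilde R_\mathrm{h}^{-1}}\circ -^{\dagger_1}$ and $-^{*_2}\xRightarrow{\sim}(\Lambda\otimes 1)_{\tilde R_\mathrm{v}^{-1}}\circ -^{\dagger_2}$, with witnessing component $\eta$ at each $\phi$.

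The commutation needed to slide the $\eta$-intertwiners past the surviving $\Lambda$'s and past the trivializations $\varphi_\zeta$ is exactly the 2-$\dagger$-intertwining pair condition of Definition \ref{daggerpair}, $\eta_\mathrm{h}(\Lambda_\zeta\phi_{(e,f)})=\Lambda_{\bar\zeta}(\eta_\mathrm{h}\phi)_{(\bar e',\bar f)}$ and its $\eta_\mathrm{v}$-analogue, which I would use to keep the assembly well-defined. For the final statement, when $\partial\Gamma=\emptyset$ every edge is internal, so $E^1$ exhausts all edges except the single base-point vertex $v$ at which the whiskering paths $p_j$ are rooted; since a whiskering by a non-pure-gauge edge holonomy cannot be removed by a 2-gauge transformation, the trivialization via Corollary \ref{bdyinvar} leaves $\tilde R$ supported only at $v$. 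The main obstacle I anticipate is the coherence bookkeeping of the second paragraph: one must verify that the trivializing isomorphisms $\varphi_\zeta$, the interchangers $\beta$ governing how the face-local pieces were glued (Definition \ref{intch}), and the $\eta$-intertwiners all satisfy compatible pentagon, triangle, and interchange relations, so that the resulting natural transformation is independent of the order in which internal edges are trivialized. Concretely, this amounts to checking that the $U(1)$-gerbe cocycle condition of Definition \ref{consistentgerbe} is respected throughout the reduction, which is the genuinely nontrivial part of the argument.
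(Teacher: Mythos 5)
Your proof is correct and takes essentially the same route as the paper: the paper's own proof is the one-line observation that, by holonomy-density, the claim follows directly from Definition \ref{daggerpair} (the 2-$\dagger$-intertwining pair) and Definition \ref{starop} (the face-local *-operations), with the boundary localization of the $\tilde R$-matrices resting on the invariance-modulo-boundary machinery of \S \ref{invarbdy}--\S \ref{commbdy} exactly as you deploy it via Corollary \ref{bdyinvar}. Your closing concern about coherence (interchangers, gerbe cocycles, order-independence of the trivializations) is precisely what the standing gluing-amenability and holonomy-density hypotheses are assumed to supply, which is why the paper treats the assembly as immediate rather than as a separate step.
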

\begin{proof}
    By holonomy-density, this follows directly from \textbf{Definition \ref{daggerpair}} and \textbf{Definition \ref{starop}}.
\end{proof}

What this means more explicitly is that there exist natural measureable isomorphisms which identify the following 2-graph states
\begin{equation*}
        \phi^{*_1} \cong  (\Lambda\otimes 1)_{\tilde R^{-1}}(\phi \circ -^{\dagger_1}),\qquad 
          \phi^{*_2} \cong \phi\circ(-^{\dagger_2})
    \end{equation*}
coming from the conditions in  \textbf{Definition \ref{2dagger}} as well as the module associator $\alpha^{\Lambda\otimes 1}_{\tilde R,\tilde R^{-1}}: (\Lambda\otimes 1)_{\tilde R}\circ(\Lambda\otimes 1)_{\tilde R^{-1}} \cong (\Lambda\otimes 1)_{{\tilde R~\hat\cdot \tilde R^{-1}}}\cong 1_{\C_q(\mathbb{G}^{\Gamma})}$.

Further, these natural isomorphisms commutes with those coming from the strong-commutativity $(-^{*_1})^\text{op}\circ -^{*_2}\cong (-^{*_2})^\text{m-op,c-op}\circ -^{*_1}$ of the *-operations. 

\medskip

\begin{definition}\label{2flat}
    The \textbf{flatness of the 2-holonomies} is the notion that, if $V$ is a contractible 3-cell, then $\prod_{f\in\partial V}b_f=1$ for all $\mathrm{z}=\{(h_e,b_f)\}_{(e,f)}\in \mathbb{G}^{\partial V}$. As such, if $V$ is represented by a PL homotopy $\Gamma\Rightarrow\Gamma'$ then the 2-holonomies on $\Gamma,\Gamma'$ are 2-gauge equivalent.

    This is well-known fact in \textit{strict} higher-gauge theory \cite{Baez:2004in,schreiber2013connectionsnonabeliangerbesholonomy,Martins:2010ry,Mikovic:2016xmo,Bullivant:2016clk,Bochniak_2021}.
\end{definition}
\noindent By "full-stacking", we mean a PL identification of two 2-simplices \textit{everywhere} (ie. not just at one of their edges).

 \begin{rmk}\label{weak2ribbons}
        In weak 2-Chern-Simons theory, the Postnikov class of $\mathbb{G}$ \cite{Schommer_Pries_2011,Chen:2022hct,Kapustin:2013uxa,Baez2023HoangXS} gives the anomaly/defect that breaks precisely the 2-flatness condition \cite{Kim:2019owc,schreiber2013connectionsnonabeliangerbesholonomy,Sati:2008eg}: $\prod_{f\in\partial V}h_v = \tau_{h_{e_1},h_{e_2},h_{e_3}}$. This leads to non-trivial modifications between whiskering pseudonaturals as described in \textit{Remark \ref{pachnertau}}, and also induce a \textit{first descendant} modification between 2-gauge transformations (this was described in \cite{Chen1:2025?}). The presence of $\tau$ necessitates the categorification step \textit{Remark \ref{categorificatione}}, and one in general should not truncate the 2-gauge transformations to an internal 0-category.
    \end{rmk}

 We now leverage 2-flatness to prove a categorical, "basis-independent" analogue of Prop. 7 in \cite{Alekseev:1994au}. 
\begin{proposition}\label{orientationreverse}
    Suppose $\Gamma=\Delta\cup_\Delta\bar\Delta$ consist of the full-stacking of a fundamental 2-simplex $\Delta$ with its orientation reversal $\bar\Delta=\Delta^{\dagger_1}$, then holonomy-dense 2-graph states on $\Gamma$ is trivial: $\C_q(\mathbb{G}^{\Gamma})\simeq\mathsf{Hilb}$.
\end{proposition}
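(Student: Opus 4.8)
The plan is to combine the geometric triviality of the stacked configuration with the flatness of the 2-holonomies and the observable structure of closed 2-graph states. First I would pin down the geometry: the full-stacking $\Gamma=\Delta\cup_\Delta\bar\Delta$ identifies the two triangles along their entire boundary, so $\Gamma$ is a closed PL 2-sphere with $\partial\Gamma=\emptyset$, and it bounds a contractible 3-cell $V$ (the ``pillow'' between the two faces $f,\bar f$). Since $\partial\Gamma=\emptyset$, \textbf{Corollary \ref{bdyinvar}} applies: every holonomy-dense state on $\Gamma$ is a 2-monodromy state, hence an observable, and therefore a homotopy fixed point under the full 2-gauge action $\mathbb{U}_q\G^{\Gamma^1}$.

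Next I would exploit holonomy-density to write any such state as $\Phi=\phi_1\ostar(W_p\phi_2)$ with $\phi_1\in\C_q(\mathbb{G}^\Delta)$ and $\phi_2\in\C_q(\mathbb{G}^{\bar\Delta})$ gluing-amenable. Because $\bar\Delta=\Delta^{\dagger_1}$, the results of \S\ref{*-opiso} identify $\phi_2$ with $\phi_1^{*_1}$ up to a conjugation by $\tilde R_\mathrm{h}^{-1}$ localized at the base point, so that $\Phi\cong\phi_1\ostar(W_p\phi_1^{*_1})$ up to an $\tilde R$-twist. The key input is then \textbf{Definition \ref{2flat}}: since $V$ is contractible, the face holonomies satisfy $b_f b_{\bar f}=1$ on $\mathbb{G}^{\partial V}=\mathbb{G}^\Gamma$, and the whiskering $W_p$ along any path inside $V$ is trivialized by the translation operator $T_D$ of \textbf{Proposition \ref{whiskeringhomotopy}}, whose defining face holonomy $b_D$ vanishes by flatness. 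Consequently the support of $\Phi$, on the fake-flat and 2-flat locus relevant to the gauge-invariant observables, collapses onto the single trivial 2-holonomy $\mathrm{z}=1$.

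Finally I would conclude at the level of sheaves. Once the effective decorated configuration space is a single point, the measureable category of sheaves over it is $\cH^\emptyset\simeq\mathsf{Hilb}$ by \textbf{Proposition \ref{basicfacts}}, equivalently by the single-vertex case of \textbf{Definition \ref{geometric2graphs}} in which $\C_q(\mathbb{G}^v)\simeq\mathsf{Hilb}$. Since the hypothesis of the proposition identifies $\C_q(\mathbb{G}^\Gamma)$ with its holonomy-dense states, this yields $\C_q(\mathbb{G}^\Gamma)\simeq\mathsf{Hilb}$.

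The hard part will be the transition in the final paragraph from ``the gauge-invariant moduli is a point'' to ``the entire measureable category (with its quantum $\ostar$-deformed Hilbert-module structure) is trivial''. One must verify that the gluing-amenability isomorphism $\alpha_{12}\colon\phi_1\cong\phi_2$ over the degeneracy intersection, together with the flatness constraint $b_f b_{\bar f}=1$ and the $*_1$-duality $\phi_2\cong\phi_1^{*_1}$, forces the product $\phi_1\ostar\phi_1^{*_1}$ to be isomorphic to the trivial line bundle $\underline{\bbC}$ rather than merely to a rank-one sheaf concentrated at a point. This is precisely where the Hopf $*$-category structure must act as a genuine evaluation/coevaluation duality; I would phrase it as the statement that the antipode axiom, schematically $(-\ostar-)\circ(\mathrm{id}\times S_h)\circ\Delta_h\cong\eta\circ\epsilon$, restricted to the 2-flat locus $b_f b_{\bar f}=1$, identifies $\phi_1\ostar\phi_1^{*_1}$ with $\eta(\epsilon(\phi_1))=\underline{\bbC}$, which is the monoidal unit representing $\mathsf{Hilb}$.
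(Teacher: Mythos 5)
You have assembled the right geometric ingredients --- the closed null-homotopic ``pillow'', 2-flatness (\textbf{Definition \ref{2flat}}), holonomy-density, and the gauge invariance of 2-monodromy states (\textbf{Corollary \ref{bdyinvar}}) --- but the algebraic mechanism you use to close the argument has two genuine gaps. First, the identification $\phi_2\cong\phi_1^{*_1}$ is unjustified: gluing-amenability of the pair $(\phi_1,\phi_2)$ only supplies an isomorphism $\alpha_{12}$ of the \emph{restrictions} over the degeneracy intersection, and an arbitrary state on $\bar\Delta$ is in no way forced to be the $*_1$-dual of the particular state it is glued to. Second, even granting that identification, your final step conflates the Hopf antipode axiom with a rigidity statement. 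The axiom $(-\ostar-)\circ(\mathrm{id}\times S_h)\circ\Delta_h\cong\eta\circ\epsilon$ constrains the Sweedler legs $\sum\phi_{(1)}\ostar S_h(\phi_{(2)})$ of the coproduct of a \emph{single} state; it says nothing about the product $\phi_1\ostar\phi_1^{*_1}$ of a state with its own dual. Promoting it to an evaluation/coevaluation duality is exactly what the paper warns is unavailable here: in the infinite-dimensional measureable setting there are no coevaluations satisfying the snake equations (see the remark preceding \textbf{Definition \ref{rigiddagger}}), so the ``hard part'' you correctly isolate cannot be resolved by the duality route you propose.

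The paper's proof avoids both issues by never invoking the $*_1$-dual or the antipode at all. Instead, 2-flatness is used to show that \emph{every} decoration configuration on $\Gamma=\Delta\cup_\Delta\bar\Delta$ is a pure 2-gauge configuration (the composite $\mathrm{z}^{-1_h}\cdot_h\mathrm{z}'$ is pure gauge), so by holonomy-density and 2-$\dagger$ unitarity every state is supported on pure-gauge 2-holonomies. Pure-gauge configurations can be removed by a 2-gauge transformation, and --- this is the step your ``support collapses onto $\mathrm{z}=1$'' glosses over --- since $\Gamma$ is closed, \textbf{Corollary \ref{bdyinvar}} makes every state measureably isomorphic to its gauge transform. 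Hence each $\Phi$ is isomorphic to a state with all decorations removed, i.e.\ to the unit $\eta$, which exhibits every object of $\C_q(\mathbb{G}^\Gamma)$ as lying in the essential image of the full functor $\mathsf{Hilb}\simeq\C_q(\mathbb{G}^v)\rightarrow\C_q(\mathbb{G}^\Gamma)$. Gauge invariance, not duality, is what converts ``the configurations are pure gauge'' into ``the category is trivial''; if you repair your draft by replacing the $*_1$/antipode argument with this invariance step, the rest of your outline goes through.
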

\begin{proof}
   The full-stacking of $\Delta$ and its orientation reversal $\bar\Delta$ gives rise to a \textit{closed} 2-graph $\Gamma^2$ which comes equipped with a null-homotopy $\Gamma^2 \simeq v$. Thus by 2-flatness \textbf{Definition \ref{2flat}}, the 2-holonomies on $\bar\Delta,\Delta$ are 2-gauge equivalent: for each fixed $\mathrm{z}\in\mathbb{G}^{\bar\Delta}$ and $\mathrm{z}'\in  \mathbb{G}^\Delta$, we can find a 2-gauge transformation $\zeta\in\mathbb{G}^{\Gamma^1}$ for which $\operatorname{hAd}^{-1}_\zeta\mathrm{z} = \mathrm{z}'$ --- or, in other words, $\mathrm{z}^{-1_h}\cdot_h\mathrm{z}'$ is a \textit{pure 2-gauge}.

    Therefore, through holonomy-density and 2-$\dagger$ unitarity \S \ref{2dagger}, each 2-graph state $\Phi={\phi}\otimes\phi'\in\C_q(\mathbb{G}^{\Gamma})$ by 2-flatness is a pure 2-gauge state (namely one with support only on pure 2-gauge 2-holonomies). By construction, pure 2-gauge holonomy configurations can be removed by a 2-gauge transformation \S \ref{2gauge}. But since $\Gamma^2$ has no boundary, $\C_q(\mathbb{G}^{\Gamma^2})$ only has 2-monodromy states, which are 2-gauge invariant up to homotopy by \textbf{Proposition \ref{bdyinvar}}. 
    
    This means that there is a measureable isomorphism $\Phi \cong \eta$ to the unit $\eta \in\C_q(\mathbb{G}^\Gamma)$, which removes \textit{all} of the 2-holonomy decorations on any 2-graph state $\Phi\in\C_q(\mathbb{G}^\Gamma)$. The unit, by definition, can be viewed as a full measureable functor $\C_q(\mathbb{G}^v)\simeq\mathsf{Hilb}\rightarrow \C_q(\mathbb{G}^\Gamma)$ from states on the trivial 2-graph $v$. The above argument then means that every 2-graph state in $\C_q(\mathbb{G}^\Gamma)$ lives in the essential image of this functor, giving us the desired equivalence
    \begin{equation*}
        \C_q(\mathbb{G}^{\Gamma^2})\simeq\mathsf{Hilb}.
    \end{equation*}
\end{proof}

Gluing-amenability then allows us to extend \textbf{Proposition \ref{orientationreverse}} to entire 2-graphs.
\begin{proposition}\label{reversestack}
    Let $\bar\Gamma=\Gamma^{\dagger_1}$ denote the orientation reversed simplicial complex of $\Gamma$, then there is an equivalence
    \begin{equation*}
        \C_q(\mathbb{G}^{\Gamma\cup_\Gamma\bar\Gamma}) \simeq\mathsf{Hilb}
    \end{equation*}
on holonomy-dense 2-graph states on the full-stacking $\Gamma\cup_\Gamma\bar\Gamma$.
\end{proposition}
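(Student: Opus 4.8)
The plan is to reduce the global statement to the local trivialization of Proposition \ref{orientationreverse} via a regular unbroken oriented simplicial decomposition $\boldsymbol{\Delta}=\{\Delta_j\}_{j=1}^k$ of $\Gamma$, which exists by Proposition \ref{path}. The orientation reversal $\bar\Gamma=\Gamma^{\dagger_1}$ carries the mirrored decomposition $\{\bar\Delta_j\}_{j=1}^k$ with $\bar\Delta_j=\Delta_j^{\dagger_1}$, and since the full-stacking is a pointwise PL identification it restricts to each fundamental piece. Consequently $\Gamma\cup_\Gamma\bar\Gamma$ inherits a simplicial decomposition whose fundamental pieces are exactly the local doubles $D\Delta_j=\Delta_j\cup_{\Delta_j}\bar\Delta_j$, glued to one another along the (doubled) edges prescribed by the gluing data of $\boldsymbol{\Delta}$. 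Proposition \ref{orientationreverse} then gives $\C_q(\mathbb{G}^{D\Delta_j})\simeq\mathsf{Hilb}$ for every $j$.

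Next I would invoke holonomy-density together with Proposition \ref{basisdecomp} to write an arbitrary state $\Phi\in\C_q(\mathbb{G}^{\Gamma\cup_\Gamma\bar\Gamma})$, up to measureable natural isomorphism, as a gluing-amenable product $\Phi\cong\Phi_1\ostar(W_{p_2}\Phi_2)\ostar\cdots\ostar(W_{p_k}\Phi_k)$ of local states $\Phi_j\in\C_q(\mathbb{G}^{D\Delta_j})$, where the $p_j$ are the source paths of the decomposition. The preceding proposition supplies, for each $j$, a measureable isomorphism $\Phi_j\cong\eta_j$ to the local unit. Because the whiskering functors $W_{p_j}$ are direct image functors along the whiskering diffeomorphisms $h_{p_j}\rhd-$, they fix the trivial line bundle $\underline{\bbC}$, so $W_{p_j}\eta_j\cong\eta$; and because $\eta$ is the monoidal unit for the deformed tensor product $\ostar$ (\S\ref{quantum2states}), repeated application of the unitor gives $\eta\ostar\cdots\ostar\eta\cong\eta$. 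Assembling these isomorphisms yields $\Phi\cong\eta$, which places every state in the essential image of the unit functor $\mathsf{Hilb}\simeq\C_q(\mathbb{G}^v)\xrightarrow{\eta}\C_q(\mathbb{G}^{\Gamma\cup_\Gamma\bar\Gamma})$, giving the claimed equivalence.

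The main obstacle is the coherence of the gluing: to concatenate the local trivializations $\Phi_j\cong\eta_j$ into a single global isomorphism one must verify that the structural data accompanying the decomposition — the gluing isomorphisms $\alpha_{ij}$ of Definition \ref{glueamenable}, the interchangers $\beta$ of Definition \ref{intch}, and the $U(1)$-gerbes $\sigma$ of Definitions \ref{nonreg-glue}--\ref{consistentgerbe} — are compatible with sending each $\Phi_j$ to its unit. The observation that resolves this is that all of these data are natural, so on the trivial line bundle they collapse to canonical coherence isomorphisms: the restrictions $\eta\mid_{\mathbb{G}^{u_{ij}}}$ are canonically identified, forcing each $\alpha_{ij}$ to an identity, hence every gerbe $\sigma$ to the trivial {\v C}ech class and every interchanger $\beta$ to the canonical symmetry of $\ostar$ on the unit. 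As an independent consistency check one can bypass the local bookkeeping altogether: $\Gamma\cup_\Gamma\bar\Gamma$ is a closed 2-graph bounding the contractible 3-cell realizing the PL homotopy $\Gamma\Rightarrow\bar\Gamma$, so 2-flatness (Definition \ref{2flat}) forces every 2-holonomy on the double to be pure 2-gauge, and Corollary \ref{bdyinvar} then trivializes it exactly as in the proof of Proposition \ref{orientationreverse}. I would present the local gluing argument as the primary route and invoke the global 2-flatness argument as the verification that the coherence data is indeed unobstructed.
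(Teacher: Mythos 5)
Your proposal follows essentially the same route as the paper's proof: use gluing-amenability and the interchanger isomorphisms to break states on the full-stacking $\Gamma\cup_\Gamma\bar\Gamma$ into products over the local doubles $\Delta_j\cup_{\Delta_j}\bar\Delta_j$, then apply Proposition \ref{orientationreverse} to each factor. Your additional bookkeeping (whiskering acting trivially on the unit, collapse of the $\alpha_{ij}$, $\beta$, $\sigma$ data on $\eta$, and the global 2-flatness cross-check) fills in coherence details the paper leaves implicit, but does not constitute a different argument.
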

\begin{proof}
    By gluing-amenability, we can use the interchanger isomorphisms $\beta$ \textbf{Definition \ref{intch}} to break 2-graph states on $\Gamma\cup_\Gamma\bar\Gamma$ to a product of 2-graph states on the stacking $\Delta_j\cup_{\Delta_j}\bar\Delta_j$ of each fundamental 2-simplex $\Delta_j$ contained in $\Gamma$. The result then follows by applying \textbf{Proposition \ref{orientationreverse}} repeatedly.
\end{proof}

\medskip

The results of these sections, \S \ref{invarbdy}, \S \ref{commbdy} and \S \ref{*-opiso}, are direct higher-dimensional generalizations of part (1), (2) and (3) of Proposition 2, 3 in \cite{Alekseev:1994au}.\footnote{That is, except the first formula in part (3) of these propositions. This formula expands the tensor products of the quantum algebra $C_q(G^{\Gamma^1})$ in a basis, resulting in the Clebsch-Goran coefficients. We had not done this here, as to do so for $\C_q(\mathbb{G}^{\Gamma^2})$ we require a categorical Peter-Weyl theorem. We leave this to a companion work.} Though many subtleties arise in the weak case (cf. \textit{Remark \ref{weakcoherence}}), we expect lax versions of the results of these sections to continue to hold.


\section{Categorified states: additive measureable *-functors}\label{1morA}
Recall that the usual notion of a \textit{normalized state} on a unital $C^*$-algebra $A$ is a linear funciotnal $\psi:A\rightarrow\bbC$ for which $\psi(1)=1$ \cite{book-operators,Woronowicz1988}. The space of such linear functionals serves as the physical Hilbert space of states in the quantum theory. 

The goal in this section is to introduce a categorified version of these states. The guiding principle is once again $\mathsf{Meas}$, the 2-category of measureable categories \cite{Yetter2003MeasurableC}. Indeed, there is a natural equivalence $\mathsf{Hilb}\simeq\cH^\emptyset$ with the measureable category over the empty set. Moreover, considering $\mathsf{Meas}$ as a monoidal bicategory (see Thm. 50, \cite{Yetter2003MeasurableC}), $\mathsf{Hilb}$ is the monoidal identity.

\begin{tcolorbox}[breakable]
    \subsubsection*{Global measureable change of basis.}Let $\{H_x\}_{x\in X}$ be a family of Hilbert spaces over the measure space $(X,\mu)$ and let $R$ be a local ring over $\bbC$ (such as when $R=C(Y),L^2(Y,\mu')$ for some other manifold/measure space $(Y,\mu')$). The following proposition will be useful.
\begin{proposition}\label{analysisandalgebra}
    If each $H_x$ is a (finitely-generated projective) $R$-module, then the direct integral $\displaystyle\int_X^\oplus d\mu(x) H_x$ is a (finitely-generated projective) $R$-module. Conversely, if $H$ is a $R$-module and admits a direct integral decomposition $\displaystyle\int_X^\oplus d\mu(x) H_x$, then each $H_x$ is also a $R$-module.
\end{proposition}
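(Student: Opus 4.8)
The plan is to treat the $R$-module structures fibrewise and to recognise the direct integral of \textbf{Proposition \ref{basicfacts}}(1) as the operation that simultaneously assembles fibrewise actions into a global one and, conversely, disintegrates a compatible global action back onto the fibres. Throughout I would use that for $R=C(Y)$ or $R=L^2(Y,\mu')$ an $R$-module structure on a Hilbert space is the same datum as a measureable field over $Y$ (cf. \textit{Remark \ref{measureablefieldsassheaves}}), so that the whole family $\{H_x\}$ together with its $R$-action is a measureable field over $X\times Y$ through the factorization $\cH^{X\times Y}\simeq\cH^X\times\cH^Y$ of \textbf{Proposition \ref{basicfacts}}(3).

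For the forward direction I would first record the module structure abstractly. Each fibrewise action is a ring homomorphism $\rho_x:R\to\operatorname{End}(H_x)$; by hypothesis these constitute a measureable field of operators, so for every $r\in R$ the field $\{\rho_x(r)\}_x$ is decomposable and $\rho(r):=\int_X^\oplus d\mu(x)\,\rho_x(r)$ defines a bounded operator on $\int_X^\oplus d\mu(x)H_x$. Functoriality of the direct integral on decomposable operators makes $r\mapsto\rho(r)$ a ring homomorphism, which is the desired $R$-module structure. When $R=L^2(Y,\mu')$ and $H_x=\int_Y^\oplus d\mu'(y)K_{x,y}$, this is exactly Fubini for direct integrals, $\int_X^\oplus\!\int_Y^\oplus K_{x,y}=\int_{X\times Y}^\oplus K_{x,y}=\int_Y^\oplus\!\big(\int_X^\oplus K_{x,y}\big)$, the final expression being visibly an $R$-module.

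The finitely-generated projective refinement I would extract from the hypothesis that $R$ is local: by Kaplansky's theorem a finitely-generated projective module over a local ring is free, so $H_x\cong R^{n_x}$ with $n_x$ a measureable function of $x$. Partitioning $X$ into the measureable sets $X_n=\{x:n_x=n\}$, on each piece $\int_{X_n}^\oplus d\mu(x)R^n\cong\big(\int_{X_n}^\oplus d\mu(x)R\big)^{\oplus n}$ is free of rank $n$, hence projective. I expect this to be the main obstacle: global finite generation over $R$ alone fails for $X$ of infinite measure, so the statement must be read in the local, coherent-sheaf sense of \textbf{Definition \ref{hermitian}}, where finite generation is a condition on stalks over $X$. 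The genuine content is then that the fibrewise free rank $n_x$ is preserved and essentially bounded, which is precisely what keeps the integral inside $\cV^X$ rather than a general measureable field.

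For the converse, a direct integral decomposition of $H$ equips it with the diagonalizable abelian von Neumann algebra $\cA\cong L^\infty(X,\mu)$ acting by scalar fields. The given action $\rho:R\to B(H)$ is compatible with the field structure over $X$, hence commutes with $\cA$; by von Neumann reduction theory the commutant $\cA'$ consists exactly of the decomposable operators, so each $\rho(r)$ disintegrates as $\rho(r)=\int_X^\oplus d\mu(x)\,\rho(r)_x$ with $\rho(r)_x\in B(H_x)$ for $\mu$-a.e. $x$. Measurability of $\rho(r)_x$ in $r$, together with the a.e. identities inherited from those satisfied by $\rho$, shows that $r\mapsto\rho(r)_x$ is a.e. a ring homomorphism, giving each $H_x$ an $R$-module structure. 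The only delicate point here is the measurability and essential uniqueness of the disintegration, which I would handle exactly as in the disintegration arguments already invoked for \textbf{Definition \ref{2grouphaar}} and in \eqref{disintegration}.
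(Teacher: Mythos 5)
Your proposal follows essentially the same two-pronged strategy as the paper, so its core goes through: the forward direction is the observation that a fibrewise $R$-action assembles over the direct integral, and the converse is a disintegration statement. The differences are mostly packaging. For the forward direction the paper argues at the level of sections --- if $u\sim_\mu v$ then $r\cdot u\sim_\mu r\cdot v$, so the fibrewise action descends to a.e.\ equivalence classes --- whereas you assemble decomposable operators $\rho(r)=\int_X^\oplus d\mu(x)\,\rho_x(r)$; the two are equivalent, though your version needs $\operatorname{ess\,sup}_x\|\rho_x(r)\|<\infty$ for $\rho(r)$ to be bounded, which is not automatic from measurability alone (the paper's section-level phrasing quietly carries the same integrability subtlety). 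For the converse the paper simply cites Segal's decomposition theorem \cite{Segal1951DecompositionsOO} (see Thm.\ 1.2(iii) of \cite{Frank:1993}) applied to the $W^*$-algebra $L^\infty(X,\mu)\otimes_\bbC R$; your commutant/reduction-theory argument --- the diagonal algebra $\cA\cong L^\infty(X,\mu)$ has commutant the decomposable operators, so $\rho(r)$ disintegrates --- is precisely the proof of that cited theorem, so you have re-derived it rather than found a different route. Note that both you and the paper implicitly assume the $R$-action commutes with $\cA$, i.e.\ is compatible with the decomposition; this is what "admits a direct integral decomposition" as an $R$-module has to mean for either argument to start.

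Where you genuinely depart from the paper is the parenthetical finitely-generated-projective refinement, which the paper's proof silently ignores; your instinct to address it is good, but the key step is wrong as stated. The claim that $\int_{X_n}^\oplus d\mu(x)\,R^n\cong\bigl(\int_{X_n}^\oplus d\mu(x)\,R\bigr)^{\oplus n}$ is "free of rank $n$" fails because $\int_{X_n}^\oplus d\mu(x)\,R$ is not $R$ but (roughly) $L^2(X_n,\mu)\otimes R$, a free module of rank $\dim L^2(X_n,\mu)$ --- infinite whenever $\mu|_{X_n}$ is non-atomic, irrespective of whether the total measure is finite, so your "infinite measure" diagnosis is also off. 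Global finite generation over $R$ alone therefore genuinely fails, as you yourself then concede; the refinement is only tenable in the local, stalkwise sense of \textbf{Definition \ref{hermitian}}, with the multiplicity space $L^2(X,\mu)$ absorbed into the acting ring --- which is exactly the role played by $L^\infty(X,\mu)\otimes_\bbC R$ in the paper's citation of Segal.
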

\begin{proof}
    If $v\sim_\mu v$ are $\mu$-a.e. equivalent sections in $\coprod_{x\in X}H_x$, then $v-u\sim_\mu 0$, hence $r\cdot u - r\cdot v = r\cdot (u-v)\sim_\mu 0$ and hence $r\cdot u\sim_\mu r\cdot u$ are also $\mu$-a.e. equivalent sections for any $r\in R$. The converse is a special case of a theorem in the work of Segal \cite{Segal1951DecompositionsOO} (see also Thm. 1.2 (iii) in \cite{Frank:1993}), where we simply replace the $W^*$-algebra $A\cong L^\infty(X,\mu)$ with $A\otimes_\bbC R$.
\end{proof}
\noindent In other words, if $R$ is "constant across $X$", then the direct integral will also inherit the $R$-module structure and vice versa.\footnote{
    The author believes that there should be a much more general version of the above statement where $R$ is allowed to be local along $X$, provided the local $R_x$-module structure is allowed to vary in a $\mu$-essentially bounded manner across $x\in X$. We will not need such a powerful statement here, however.}
\end{tcolorbox}


\subsection{Categorical linear *-functionals on 2-graph states}\label{3handles}
In accordance with the above setup, we will model such "categorical linear functionals" as an additive measureable functor of sheaves 
\begin{equation*}
    \omega: \C_q(\mathbb{G}^{\Gamma^2})\rightarrow \mathsf{Hilb},
\end{equation*}
where we are considering $\mathsf{Hilb}$ as the category of sections of Hermitian vector bundles over the singleton $\ast$. Here, additive means that $\omega$ respects the direct sum of sheaves, but it need \textit{not} respect any monoidal structure!

In this section, we will prove a Yoneda embedding \textbf{Proposition \ref{yoneda}} for $\C_q(\mathbb{G}^{\Gamma})\subset \mathsf{Meas}_X$ by just treating it as a full subcategory of  measureable fields of over $X=(\mathbb{G}^{\Gamma},\mu_{\Gamma^3})$, as in \textbf{Definition \ref{quantumhermitian}}. We will come back to deal with the \textit{internal}/double cocategory structure in \S \ref{nonabeliansurface}.


\subsubsection{Evaluation states; cone functors on $\Lambda\Gamma^2$}
We begin with a connected PL 2-manifold $S$ equipped with an oriented simplicial decomposition $\boldsymbol{\Delta}$. The resulting graph $\Gamma$ of $S$, obtained from the gluing data attached to $\boldsymbol{\Delta}$ is a convex simplicial space.

To set up the geometry, we first recall from \cite{hudson1969piecewise}.
\begin{definition}
    The \textbf{convex sum} of two convex sets $A,B\subset \mathbb{R}^N$ is 
    \begin{equation*}
        A+_cB=\{\lambda a+(1-\lambda)b\mid a\in A,~b\in B,~\lambda\in[0,1]\}.
    \end{equation*}
    The \textbf{one-point suspension} $\Lambda A$ of $A$ is the convex set $A+_c\{\ast\}$ where $\ast\in\mathbb{R}^N$ is some point which is non-colinear with any $a\in A$.
\end{definition}
\noindent The non-colinearity assumption is required such that, if $A=\Delta^{n}$ is a $n$-simplex, then its one-point suspension $\Delta^{n+1}= \Lambda\Delta^n$ is the $(n+1)$-simplex. 

Suppose $\Sigma \cong CS$ is the \textit{PL cone} over $S$, then if $S$ has equipped a simplicial decomposition by the graph $\Gamma$, then $\Sigma$ has equipped a simplicial decomposition given by the on-point suspension $\Lambda\Gamma$. For instance, if $S=S^2$ were the PL 2-sphere, then $\Sigma$ is homeomorphic to the PL 3-disc $D^3$. 

We shall focus on this case first. Let $\Gamma$ be a \textit{connected} 2-graph. 
\begin{definition}\label{linearfunctors}
    Denote by $\eta\in\C_q(\mathbb{G}^{\Gamma^2})$ the unit, and $\emptyset = \mathbb{G}^{\emptyset}$ the trivial decorated 2-graph. A {\bf categorical state} associated to the one-point suspension $\Lambda\Gamma$, also referred to as a \textbf{cone functor}, is an additive measureable functor $$\omega=\omega_{\Lambda\Gamma}:\C_q(\mathbb{G}^{\Gamma})\rightarrow\cH^{\emptyset}\simeq\mathsf{Hilb},$$ for which $\omega(\eta)\in\mathsf{Hilb}^\text{f.d.}$ is of finite-dimension. 
\end{definition}

By definition, $\omega$ comes with an underlying field $\underline{\omega}$ of Hilbert spaces on $\ast\times\mathbb{G}^{\Gamma}=\mathbb{G}^{\Gamma}$, such that
\begin{equation*}
    \omega(\phi) = \int_{\mathbb{G}^{\Gamma}}^\oplus d\nu_{\Gamma^2}(\mathrm{z}) \underline{\omega}_\mathrm{z}\otimes \phi_\mathrm{z},\qquad \phi\in\C_q(\mathbb{G}^{\Gamma})
\end{equation*}
where $\nu_{\Gamma^2}$ is another Haar measure $\mu'_{\Gamma^2}$ on $\mathbb{G}^{\Gamma}$.


\subsubsection*{An infinite-dimensional Yoneda embedding.}
One crucial fact to keep in mind is that the data $\underline{\omega}$ does \textit{not} itself determine a measureable field in general. Indeed, the space $\cM_\omega\subset\coprod_x \underline{\omega}_x$ of measureable sections is not specified. 

However, we do have access to a \textbf{Yoneda embedding}, which in the context of \textit{Remark \ref{doublecocat}} is a instance of the double Yoneda lemma (Thm. 4.1.2 in \cite{froehlich2024yonedalemmarepresentationtheorem}).
\begin{proposition}\label{yoneda}
    There is a fully-faithful embedding $\C_q(\mathbb{G}^{\Gamma^2})^{\text{m-op}}\rightarrow \operatorname{Fun}(\C_q(\mathbb{G}^{\Gamma^2}),\mathsf{Hilb})$, where $\C_q(\mathbb{G}^{\Gamma^2})^{\text{m-op}}$ denotes the opposite algebra object in $\mathsf{Meas}$.
\end{proposition}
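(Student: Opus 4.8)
The plan is to realize the map as the measureable incarnation of the enriched/regular Yoneda embedding, with $C=\C_q(\mathbb{G}^{\Gamma^2})$ viewed as a full subcategory of $\cH^X$ over $X=(\mathbb{G}^{\Gamma^2},\mu_{\Gamma^2})$ and target $\mathsf{Hilb}\simeq\cH^\emptyset$. To each object $\psi=\Gamma_c(H_\psi^X)[[\hbar]]$ I would assign the representable cone functor (in the sense of \textbf{Definition \ref{linearfunctors}})
\begin{equation*}
    Y(\psi)\colon\ \phi\ \mapsto\ \int_X^\oplus d\mu_{\Gamma^2}(\mathrm{z})\ \bar\psi_{\mathrm{z}}\otimes\phi_{\mathrm{z}},
\end{equation*}
whose underlying field is the dual field $\underline{Y(\psi)}=\bar\psi$; this is exactly $\operatorname{Hom}_{\cH^X}(\psi,-)$ completed in $L^2$ against the Haar measure. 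The reason the source is the monoidal opposite $\C_q(\mathbb{G}^{\Gamma^2})^{\text{m-op}}$ rather than the naive categorical opposite is that, viewing $\C_q(\mathbb{G}^{\Gamma^2})$ as the algebra object $(\C_q(\mathbb{G}^{\Gamma^2}),\ostar)$ in $\mathsf{Meas}$, the pairing above is contravariant for $\ostar$, so $Y$ is an (anti)homomorphism of the product; the $*$-operations of \S\ref{*-op} (the dual $\bar{\cdot}$ together with the op-monoidal antipode) identify the morphism-reversal $C^{\text{op}}$ with the product-reversal $C^{\text{m-op}}$, and $Y$ sends a sheaf morphism $g\colon\psi'\to\psi$ to the natural transformation $Y(\psi)\Rightarrow Y(\psi')$ given by precomposition with $\bar g$.

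Second, I would check that $Y$ genuinely lands in additive measureable functors. Additivity is immediate from $\bbC[[\hbar]]$-linearity of $\otimes$ and of the direct integral (\textbf{Proposition \ref{basicfacts}}). The measureable section space of $\bar\psi$ is inherited, by duality, from the coherent-sheaf structure of $\psi\in\cV_q^X$, so $Y(\psi)$ carries the data of a measureable functor in the sense recalled before \textbf{Definition \ref{measnat}}; that each output $Y(\psi)(\phi)$ is a bona fide Hilbert space (and not merely a field) follows from the separability hypothesis of \textbf{Definition \ref{quantumhermitian}} together with \textbf{Proposition \ref{analysisandalgebra}}, which guarantees that the direct integral of the finite-rank fields $\bar\psi\otimes\phi$ inherits the $L^2(X,\mu_{\Gamma^2})\otimes_\bbC\bbC[[\hbar]]$-module structure. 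Functoriality in $\phi$ is then routine.

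Third, for full faithfulness I would run the Yoneda reduction internally to $\mathsf{Meas}$. By \textbf{Definition \ref{measnat}} a measureable natural transformation $\theta\colon Y(\psi)\Rightarrow Y(\psi')$ is a $\mu$-essentially bounded field $\{\theta_\mathrm{z}\colon\bar\psi_\mathrm{z}\to\bar\psi'_\mathrm{z}\}$ that is natural in $\phi$. Testing naturality against the generating holonomy-dense states produced by \textbf{Proposition \ref{basisdecomp}} (and sequential density of the measureable sections) forces $\theta$ to be independent of $\phi$, i.e.\ to be exactly a measureable morphism $\bar\psi\to\bar\psi'$ of dual sheaves; dualizing recovers a unique morphism $\psi'\to\psi$ in $\C_q(\mathbb{G}^{\Gamma^2})$, which is the identity $\operatorname{Nat}(Y(\psi),Y(\psi'))\cong\operatorname{Hom}_{\C_q(\mathbb{G}^{\Gamma^2})^{\text{m-op}}}(\psi,\psi')$. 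Faithfulness is the easy half: $Y(g)=0$ forces $\bar g$ to vanish $\mu$-a.e., hence $g=0$ in $\cV_q^X$.

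The hard part is the fullness half of the previous step, and it is precisely where the infinite-dimensionality of $\mathsf{Meas}$ bites. A priori the field $\underline{\omega}$ of a cone functor does not determine a measureable functor, since the space of measureable sections is extra data; correspondingly one must exclude ``exotic'' natural transformations whose fields $\theta_\mathrm{z}$ fail to assemble into a morphism of the coherent sheaves. I would control this by passing to the Hilbert $W^*$-module description of \textbf{Remark \ref{measureablefieldsassheaves}} and invoking uniqueness of disintegrations (as used in \textbf{Proposition \ref{haarunique}} and in \eqref{disintegration}) to show that a measureable-natural field is automatically $L^\infty(X,\mu_{\Gamma^2})$-linear and therefore a sheaf map. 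This same mechanism is exactly what obstructs essential surjectivity --- the representables do \emph{not} exhaust $\operatorname{Fun}(\C_q(\mathbb{G}^{\Gamma^2}),\mathsf{Hilb})$ --- so the argument delivers a fully faithful embedding that is not an equivalence, in agreement with the remark following the statement.
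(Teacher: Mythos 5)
Your proposal is correct and takes essentially the same route as the paper: you define the embedding by the identical direct-integral pairing $\phi'\mapsto\displaystyle\int^\oplus_{\mathbb{G}^{\Gamma^2}}d\mu_{\Gamma^2}(\mathrm{z})\,\bar\phi'_{\mathrm{z}}\otimes(-)_{\mathrm{z}}$ against the Haar measure, and you derive full-faithfulness from the characterization of measureable natural transformations as fields of essentially bounded operators between the underlying dual sheaves (\textbf{Definition \ref{measnat}}), which is exactly the paper's (much terser) argument. Your additional verifications --- that the representables land in additive measureable functors, and the $W^*$-module/disintegration argument ensuring a natural field is genuinely a sheaf morphism --- are elaborations of steps the paper declares obvious, not a different method.
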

\begin{proof}
    The embedding takes a 2-graph state $\bar\phi'\in \C_q(\mathbb{G}^{\Gamma^2})^{\text{m-op}}$, linear dual to one $\phi'\C_q(\mathbb{G}^{\Gamma^2})$, to a measureable functor $(\omega_{\phi'},\mu_{\Gamma^2})$ of the form
    \begin{equation}
    \omega_{\phi'}(\phi) = \int_{\mathbb{G}^{\Gamma}}^\oplus d\mu_{\Gamma^2}(\mathrm{z}) \bar\phi'_\mathrm{z}\otimes \phi_\mathrm{z},\qquad \phi\in\C_q(\mathbb{G}^{\Gamma})^{\text{m-op}};\label{pairing}
\end{equation}
see \textit{Remark \ref{opposites}}.


The full-faithfulness is obvious by \textbf{Definition \ref{measnat}}: each natural transformation $\omega_\phi\Rightarrow \omega_{\phi'}$ correspond to a bounded linear operator $\beta: \phi=\Gamma_c(H^X)\rightarrow \Gamma_c(H'^X)=\phi'$ of measureable sheaves.
\end{proof}

\begin{rmk}\label{opposites}
    We emphasize that, by $C^{\text{m-op}}$ for a category $C=(C_0,C_1,\operatorname{id},\circ)$ internal to $\cV=\mathsf{Meas}$, it means the monoidal structure $\otimes$ and the compositions on the measureable categories $C_0,C_1$ are reversed. On the other hand, for the 2-graph states, the direct image functors induced by the 2-$\dagger$ structures on $\Gamma$ are a priori \textit{covariant} on $C_{1}\rightarrow C_{1}$ in $\cV$, but reverses the "internal" composition $\circ$. The unitarity property of \textbf{Definition \ref{unitary2hol}} mixes both, and makes the *-operations into an $\text{m-op}$ contravariant functor.
\end{rmk}


\begin{rmk}\label{pair}
    This embedding, and the formula \eqref{pairing}, determines a \textit{categorical pairing form}
\begin{equation}
    \C_q(\mathbb{G}^{\Gamma^2})^{\text{m-op}} \times \C_q(\mathbb{G}^{\Gamma^2})\rightarrow \operatorname{Fun}(\C_q(\mathbb{G}^{\Gamma^2}),\mathsf{Hilb})\times\C_q(\mathbb{G}^{\Gamma^2})\xrightarrow{\operatorname{eval}}\mathsf{Hilb},\label{pairfunctor}
\end{equation}
which was used in \cite{Chen:2025?} as a "duality evaluation" for $\C_q(\mathbb{G}^{\Gamma^2})$.\footnote{Such pairing functors, if Frobenius, was also used by \cite{Crane:1994ty} as part of the definition of a Hopf category. However, we will not be using that notion here.} This categorfies the pairing functional $\langle\Psi_2\mid\Psi_1\rangle = \omega(\bar\psi_2(U)\psi_1(U))$ defined on the 3d Chern-Simons holonomies $\psi(U)$  as constructed in \S 6.2 of \cite{Alekseev:1994pa}. 
\end{rmk}

A perhaps unfortunate fact is the following.
\begin{proposition}\label{notequiavlence}
    The embedding $\phi'\mapsto\omega_{\phi'}$ \eqref{pairing} is $\mathrm{not}$ essentially surjective.
\end{proposition}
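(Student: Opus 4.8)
The plan is to show that the Yoneda functor $\phi'\mapsto\omega_{\phi'}$ from \eqref{pairing} fails to hit every additive measureable functor $\C_q(\mathbb{G}^{\Gamma^2})\to\mathsf{Hilb}$, and the obstruction is precisely the infinite-dimensionality noted just before \textbf{Proposition \ref{yoneda}}: the image consists only of those functors whose underlying field $\underline\omega$ comes equipped with a \emph{prescribed} sheaf of measureable sections $\cM_{\underline\omega}$, namely those of the form $\bar\phi'$ for an actual 2-graph state $\phi'\in\C_q(\mathbb{G}^{\Gamma^2})$. First I would recall that an object of $\C_q(\mathbb{G}^{\Gamma^2})$ is, by \textbf{Definition \ref{quantumhermitian}}, a sheaf $\Gamma_c(H'^X)[[\hbar]]$ whose norm-completion is a \emph{separable}, countably-generated Hilbert $L^2(X,\mu_{\Gamma^2})\otimes\bbC[[\hbar]]$-module, and in particular (via \textbf{Proposition \ref{vectorbundles}}) is of locally finite rank. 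Thus every functor in the essential image of \eqref{pairing} integrates against a field $\bar\phi'_\mathrm{z}$ that is finite-rank $\mu_{\Gamma^2}$-a.e., and whose section space is exactly $\cM_{H'}$.

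The key step is then to exhibit an additive measureable functor $\omega:\C_q(\mathbb{G}^{\Gamma^2})\to\mathsf{Hilb}$ that cannot arise this way. The cleanest construction uses the freedom in the measure: by \textbf{Definition \ref{measnat}}, a measureable functor carries its own family of measures $\{f_\ast\}$ on $X=(\mathbb{G}^{\Gamma^2},\mu_{\Gamma^2})$, and two such functors are isomorphic only if the measures are mutually absolutely continuous. I would therefore take $\omega$ supported on a field $\underline\omega$ together with a measure $\nu$ that is \emph{mutually singular} with respect to $\mu_{\Gamma^2}$ (for instance, concentrated on a positive-codimension stratum of $\mathbb{G}^{\Gamma^2}$ such as the pure-2-gauge locus used in \textbf{Proposition \ref{orientationreverse}}, which has $\mu_{\Gamma^2}$-measure zero but carries its own induced Haar-type measure). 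Any $\omega_{\phi'}$ integrates against $\mu_{\Gamma^2}$ by \eqref{pairing}, so $\omega_{\phi'}$ restricted to that stratum is identically zero; hence $\omega$ is not measureably naturally isomorphic to any $\omega_{\phi'}$, establishing non-surjectivity. An alternative, which I would mention as a \emph{Remark}, is to take $\underline\omega$ to be a field of rank growing without bound (an infinite-rank or non-separable field), violating the countable-generation/finite-rank constraint that every $\bar\phi'$ satisfies; such an $\omega$ is still a legitimate object of $\operatorname{Fun}_\mathsf{Meas}(\C_q(\mathbb{G}^{\Gamma^2}),\mathsf{Hilb})$ once one fixes a consistent choice of measureable sections $\cM_{\underline\omega}$, but it cannot equal any finite-rank $\bar\phi'$.

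The main obstacle I anticipate is making the singular-measure (or infinite-rank) $\omega$ into a genuine \emph{additive measureable functor} with a well-defined output Hilbert space: I must check that the direct integral $\int^\oplus d\nu(\mathrm{z})\,\underline\omega_\mathrm{z}\otimes\phi_\mathrm{z}$ actually converges to an object of $\mathsf{Hilb}$ for every $\phi\in\C_q(\mathbb{G}^{\Gamma^2})$, and that a coherent space of measureable sections $\cM_{\underline\omega}$ exists (conditions (1)--(3) of the measureable-field definition). For the singular case this is automatic once $\nu$ is a Radon measure on the chosen stratum and $\underline\omega$ is taken finite-rank there; the only subtlety is that the restriction $\phi_\mathrm{z}$ to a $\mu_{\Gamma^2}$-null set is only defined up to the finer measure $\nu$, which I would handle by passing to the smooth (sheaf-theoretic) representative $\Gamma_c(H^X)$ guaranteed by \textbf{Definition \ref{quantumhermitian}} and \textbf{Proposition \ref{vectorbundles}}, whose stalk over any point is unambiguous. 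Finally I would close by noting the conceptual upshot, already flagged in the text: this failure of essential surjectivity is exactly what forces the \emph{cointegral} formalism of \S\ref{cointegral}, since the genuine ``states'' $\omega$ are strictly richer than the representable ones.
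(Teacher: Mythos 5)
Your primary construction has a genuine gap. The counterexample you lean on --- a functor $\omega$ whose underlying measure $\nu$ is mutually singular with respect to $\mu_{\Gamma^2}$, supported on a $\mu_{\Gamma^2}$-null stratum --- is not a well-defined measureable functor on $\C_q(\mathbb{G}^{\Gamma^2})$ in this paper's framework. By \textbf{Definition \ref{quantumhermitian}}, $\C_q(\mathbb{G}^{\Gamma^2})$ is a \emph{full} subcategory of $\cV^X_q\subset\cH^X$ over the fixed measure space $X=(\mathbb{G}^{\Gamma^2},\mu_{\Gamma^2})$: its morphisms are merely $\mu_{\Gamma^2}$-essentially bounded measureable fields of operators, and the structural identifications of the ambient measureable category (a.e.-classes of sections in direct integrals, the Radon--Nikodym formula in \textbf{Definition \ref{measnat}}) are all taken modulo $\mu_{\Gamma^2}$-null sets. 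Consequently the value of a morphism $f$ on your chosen stratum is either not determined at all (if morphisms are a.e.-classes, $\omega(f)$ depends on the representative) or not controlled (a field can be $\mu_{\Gamma^2}$-essentially bounded while having unbounded fibrewise norms on a null set, so $\int^\oplus d\nu\, f_{\mathrm{z}}$ need not be a bounded operator, i.e.\ need not be a morphism of $\mathsf{Hilb}$). Your proposed fix --- passing to the coherent-sheaf representative --- repairs the \emph{objects} only: morphisms of $\C_q(\mathbb{G}^{\Gamma^2})$ are not sheaf maps or continuous bundle maps, so they do not restrict to null strata. Note also that the paper's proof deliberately forecloses the measure direction: it considers functors $(\omega,\nu_{\Gamma^2})$ whose underlying measure is a Haar measure (as in \textbf{Definition \ref{linearfunctors}}), and by \textbf{Proposition \ref{haarunique}} any two such are equivalent, so the measure condition ``(i) is not problematic''; in the relevant class of functors the measure can never be the obstruction.

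The sound part of your proposal is precisely the paragraph you relegate to a remark. Taking the measure to be $\mu_{\Gamma^2}$ itself and the underlying field $\underline{\omega}$ to be infinite-rank (or otherwise not a sheaf of Hermitian sections) is, in essence, the paper's own argument: a measureable natural isomorphism $\omega\Rightarrow\omega_{\phi'}$ requires, besides equivalence of measures, an invertible $\mu_{\Gamma^2}$-essentially bounded field of operators $\underline{\omega}\rightarrow\bar\phi'$; this would force the arbitrary field $\underline{\omega}$ (a sheaf of Hilbert $W^*$-modules, in the language of \textit{Remark \ref{measureablefieldsassheaves}}) to be isomorphic to a locally finitely-generated projective sheaf of Hermitian sections, and since $\cV^X\not\simeq\cH^X$ this fails for some $\omega$. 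Your infinite-rank field is a concrete witness of exactly this obstruction, since invertible fields of bounded operators preserve fibre Hilbert dimension a.e., so no such field can connect an infinite-rank $\underline{\omega}$ to the locally finite-rank $\bar\phi'$. Promote that remark to the main proof, with the measure fixed to $\mu_{\Gamma^2}$ so that the measure condition is automatic, and your argument becomes correct and coincides with the paper's.
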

\begin{proof}
    By \textbf{Definition \ref{measnat}}, a measureable natural isomorphism $(\omega,\nu_{\Gamma^2})\Rightarrow \omega_{\phi}$ to one coming from a 2-graph state $\phi$ consist of (i) a Haar measure equivalent  to $\mu_{\Gamma^2}$, and (ii) a field of $\mu_{\Gamma^2}$-essentially bounded sheaf of invertible operators $K: \omega\rightarrow\phi$. 
    
    We know from \textbf{Proposition \ref{haarunique}} that (i) is not problematic. On the other hand, if a sheaf of invertible operator $K$ in (ii) exists, then $\omega\in\cV^X$ itself must be a measureable sheaf of Hermitian sections. The existence of $K$ for all $\omega$ means that $\cV^X\simeq\cH^X$ are equivalent, which is of course not the case.
    
    Indeed, in the language of sheaves \textit{Remark \ref{measureablefieldsassheaves}}, (ii) says that we can find a field of bounded isomorphisms from \textit{any} Hilbert $W^*$-module to a Hilbert $C^*$-module, which is not possible in general.
\end{proof}
\noindent This issue is a consequence of the infinite-dimensional nature of the structures involved. 

Indeed, this result is a categorical analogue of the fact that there is no isomorphism between test functions and tempered distributions \cite{book-mathphys1}. This will show up again later in \textbf{Proposition \ref{cylinderembeddings}}.

\subsubsection{Transition states; cylinder functors on $\Gamma^2\times [0,1]$}
Consider the following geometry. Let $\Sigma \cong S\times[0,1]$ be a manifold diffeomorphic to the cylinder on $S$. Equip $\Sigma$ with a PL structure $C: \boldsymbol{\Delta}\rightarrow \Sigma$ which defines a homotopy between the given PL structures $\Gamma_0,\Gamma_1:\boldsymbol{\Delta}\rightarrow S\times\{0,1\}$ on the two copies of $S$. 

We now wish to define the categorical functional $\omega_C$ associated to the cylinder graph $C$. 
\begin{definition}
    The {categorical functional associated to the homotopy $C$}, or simply a \textbf{cylinder functor}, is a unit-preserving additive measureable functor $$\omega=\omega_{C}:\C_q(\mathbb{G}^{\Gamma_0})\rightarrow\C_q(\mathbb{G}^{\Gamma_1}),$$ such that the target is once again a 2-graph state.
\end{definition}
\noindent Let us spell out what this means. Keep in mind that $\Gamma_0,\Gamma_1$ are \textit{disjoint}.

A priori, the data of this additive measureable functor $\omega_C$ involves an underlying field $\underline{\omega}$ of Hilbert spaces over $\mathbb{G}^{\Gamma_1}\times\mathbb{G}^{\Gamma_0}$, together with a $\mathbb{G}^{\Gamma_1}$-family of measures $\{\nu_{{\mathrm{z}}}\}_{{\mathrm{z}}\in\mathbb{G}^{\Gamma_1}}$ on $\mathbb{G}^{\Gamma_0}$, such that
\begin{equation*}
    \omega_C(\phi)_{{\mathrm{z}}} = \int_{\mathbb{G}^{\Gamma_0}}^\oplus d\nu_{{\mathrm{z}}}(\mathrm{z}') \underline{\omega}_{{\mathrm{z}},\mathrm{z}'}\otimes \phi_\mathrm{z'},\qquad \phi\in\C_q(\mathbb{G}^{\Gamma_0}),~ \bar{\mathrm{z}}\in\mathbb{G}^{\bar\Gamma_1}.
\end{equation*}
This is not enough, however, as general measureable functors $\omega_C$ may not produce a 2-graph state. An additional requisite condition is the following: that for each Borel subset $ U\subset \mathbb{G}^{\Gamma_1}$, the assignment
\begin{equation*}
    U\mapsto \int_{U}^\oplus d\mu_{\Gamma_1}({\mathrm{z}}) \omega_C(\phi)_{{\mathrm{z}}}
\end{equation*}
defines a sheaf of $L^2$-sections $\Gamma_c(H^{X_1})$ of a Hermitian vector bundle $H^{X_1}\rightarrow X_1$ over $ X_1=(\mathbb{G}^{\Gamma_1},\mu_{\Gamma_1})$. This puts constraints on $\underline{\omega}$. 

\medskip

Prior to proceeding, we first introduce the following notion.
\begin{definition}
    We say the Radon measures $(\mu,\mu')$ are a \textbf{disintegration pair} on $Y\times X$ iff for each $Y$-family $\{\nu_y\}_{y\in Y}$ of disintegration measures, there is a $X$-family $\{\nu'_x\}_{x\in X}$ of disintegration measures such that 
    \begin{equation*}
        \int_{Y}d\mu(y)\int_{X}d\nu_y(x)f(y,x) = \int_{Y\times X}d\lambda(y,x) f(y,x) = \int_Xd\mu'(x)\int_Yd\nu'_x(y)f(y,x)
    \end{equation*}
    for all measureable function $f$ on $X\times Y$. Here, $\lambda$ is a measure on $Y\times X$ which is obtained by "integrating" $\nu_y$ against $\mu$, or "integrating" $\nu'_x$ against $\mu'$.
\end{definition}
The existence and uniqueness of disintegration pairs \cite{Pachl_1978} (see also Thm. 23 in \cite{Baez:2012} and Lemma 2.3 in \cite{ACKERMAN_2016}) gives the following.
\begin{proposition}\label{uniqueness}
    We have a disintegration pair $(\mu,\mu')$ whenever
\begin{equation*}
    \mu(U)=0\implies \lambda(U\times X) = 0,\qquad \mu'(V) \implies \lambda(Y\times V)=0
\end{equation*}
for each measureable $U\subset Y,~V\subset X$. In which case, they are unique.
\end{proposition}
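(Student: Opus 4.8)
The plan is to read the two displayed hypotheses as absolute-continuity statements for the marginals of the joint measure $\lambda$, and then to invoke the existence-and-uniqueness theorem for disintegrations of Radon measures in exactly the form cited (\cite{Pachl_1978}, Thm. 23 of \cite{Baez:2012}, and Lemma 2.3 of \cite{ACKERMAN_2016}). Concretely, I would first fix a $Y$-family $\{\nu_y\}_{y\in Y}$ and use it together with $\mu$ to build the candidate joint Radon measure
\[
\lambda(E) = \int_Y d\mu(y)\,\nu_y(E_y),\qquad E_y = \{x\in X\mid (y,x)\in E\},
\]
on $Y\times X$. That this is a well-defined Radon measure, and that the left-hand equality in the definition of a disintegration pair holds, is precisely the content of the cited disintegration results applied to the projection $\mathrm{pr}_Y:Y\times X\to Y$. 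The first hypothesis $\mu(U)=0\Rightarrow\lambda(U\times X)=0$ then records that the $Y$-marginal $(\mathrm{pr}_Y)_*\lambda$ is absolutely continuous with respect to $\mu$, and the second records that $(\mathrm{pr}_X)_*\lambda\ll\mu'$.

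Next I would disintegrate $\lambda$ along the opposite projection $\mathrm{pr}_X:Y\times X\to X$. The existence half of the disintegration theorem supplies a measurable $X$-family $\{\tilde\nu_x\}_{x\in X}$ with $\lambda=\int_X d\big((\mathrm{pr}_X)_*\lambda\big)(x)\,\tilde\nu_x$. Because $(\mathrm{pr}_X)_*\lambda\ll\mu'$ by the second hypothesis, Radon--Nikodym yields a density $\rho=\tfrac{d(\mathrm{pr}_X)_*\lambda}{d\mu'}\in L^1(X,\mu')$, and setting $\nu'_x=\rho(x)\,\tilde\nu_x$ produces a family for which $\lambda=\int_X d\mu'(x)\,\nu'_x$. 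This is exactly the right-hand equality in the definition, so $(\mu,\mu')$ is a disintegration pair. For uniqueness I would appeal to the essential-uniqueness clause of the same theorem: conditional families are determined up to a $\mu'$-null redefinition from the $X$-side and up to a $\mu$-null redefinition from the $Y$-side. Combined with \textbf{Proposition \ref{haarunique}}, which already pins down $\mu$ and $\mu'$ up to equivalence on the compact Lie $2$-group factors, this forces $\lambda$ together with both conditional families to be unique up to the appropriate null sets.

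The main obstacle I anticipate is not the algebra but the measurability and regularity required to even apply the disintegration theorem: on arbitrary measure spaces disintegrations need neither exist nor be unique, and one must work in a setting (Radon measures on Polish or compact spaces, or Pachl's framework of compact measures) where the selection and gluing of conditional measures is measurable. Here this is guaranteed because $X=\mathbb{G}^{\Gamma}$ and $Y$ are assembled from the compact Lie $2$-group $\mathbb{G}$ carrying a Radon Haar measure (\textbf{Definition \ref{2grouphaar}}), so the compact-measure versions of the theorem apply verbatim. The only point that still needs genuine care is verifying that the family $\{\nu'_x\}$ reconstructed via Radon--Nikodym remains $\mu'$-measurable; this follows since $\rho$ is measurable and the $\tilde\nu_x$ were chosen measurably, so the product $\rho(x)\tilde\nu_x$ depends measurably on $x$.
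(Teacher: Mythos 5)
Your proof is correct and takes essentially the same approach as the paper: the paper offers no argument beyond deriving the proposition from the cited disintegration theorems (\cite{Pachl_1978}, Thm.~23 of \cite{Baez:2012}, Lemma~2.3 of \cite{ACKERMAN_2016}), and your write-up is exactly the standard expansion of that citation --- form $\lambda$ from $\mu$ and $\{\nu_y\}$, read the two hypotheses as absolute continuity of the marginals $(\operatorname{pr}_Y)_*\lambda\ll\mu$ and $(\operatorname{pr}_X)_*\lambda\ll\mu'$, disintegrate along $\operatorname{pr}_X$, rescale by the Radon--Nikodym density, and invoke essential uniqueness of conditional families. The only cosmetic remark is that the appeal to \textbf{Proposition \ref{haarunique}} is unnecessary (and the first hypothesis is in fact automatic from the construction of $\lambda$); essential uniqueness of the disintegration already yields the uniqueness clause.
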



\subsubsection*{Characterizing $\omega_C$ and pairings along the cylinder.} 
Let us now try to characterize $\omega_C$ on the cylinder under the assumption that the given Haar measures $(\mu_{\Gamma_0},\mu_{\Gamma_1})$ form a disintegration pair on $\mathbb{G}^{\Gamma_1}\times\mathbb{G}^{\Gamma_0}$. 

For each Borel $U\subset \mathbb{G}^{\Gamma_1}$, we rewrite the direct integral of $\omega_C(\phi)$ in the following way,
    \begin{align*}
        \int_{U}^\oplus d\mu_{\Gamma_1}({\mathrm{z}}) \omega_C(\phi)_{{\mathrm{z}}}&= \int_{U}^\oplus d\mu_{\Gamma_1}({\mathrm{z}})\int_{\mathbb{G}^{\Gamma_0}}^\oplus d\nu_{\mathrm{z}}(\mathrm{z}')\underline{\omega}_{{\mathrm{z}},\mathrm{z}'}\otimes \phi_\mathrm{z'}\\
        &= \int_{\mathbb{G}^{\Gamma_0}}^\oplus d\mu_{\Gamma_0}({\mathrm{z}}')\int_{U}^\oplus d\nu'_{\mathrm{z}'}(\mathrm{z})\underline{\omega}_{{\mathrm{z}},\mathrm{z}'}\otimes \phi_\mathrm{z'}\equiv  \int_{\mathbb{G}^{\Gamma_0}}^\oplus d\mu_{\Gamma_0}({\mathrm{z}'}) (\Omega_{\mathrm{z}'})_{/U}\otimes \phi_\mathrm{z'},
    \end{align*}
    which gives us a $\mathsf{Hilb}$-valued presheaf on $\mathbb{G}^{\Gamma_1}$,
   \begin{equation*}
        \Omega_\mathrm{z'}: U\mapsto (\Omega_\mathrm{z'})_{/U} = \int_{U}^\oplus d\nu'_{\mathrm{z}'}(\mathrm{z})\underline{\omega}_{{\mathrm{z}},\mathrm{z}'},\qquad \mathrm{z'}\in\mathbb{G}^{\Gamma_0}
    \end{equation*}
    for each $\mathrm{z'}\in\mathbb{G}^{\Gamma_0}$.
        
Recall from Lemma 4.3 of \cite{Bridgeland:1999} that a $S$-family of sheaves on $X$ is a sheaf on $X\times S$ which is flat over $S$. We then have the following characterization.
\begin{proposition}\label{cylinder}
    Suppose $(\mu_{\Gamma_0},\mu_{\Gamma_1})$ forms a disintegration pair. Then $\omega_C(\phi)\in\C_q(\mathbb{G}^{\Gamma_1})$ is a 2-graph state for all $\phi\in\C_q(\mathbb{G}^{\Gamma_0})$ iff $\Omega$ defines a $\mathbb{G}^{\Gamma_0}$-family of sheaves of finitely-generated projective $C(\mathbb{G}^{\Gamma_1})$-modules of $L^2$-sections on $\mathbb{G}^{\Gamma_1}$.
\end{proposition}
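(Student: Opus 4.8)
The plan is to read the rewritten direct integral
\begin{equation*}
    \Gamma_c(\omega_C(\phi))(U) = \int_{\mathbb{G}^{\Gamma_0}}^\oplus d\mu_{\Gamma_0}(\mathrm{z}')\, (\Omega_{\mathrm{z}'})_{/U}\otimes\phi_{\mathrm{z}'}
\end{equation*}
as a direct integral decomposition, over the measure space $(\mathbb{G}^{\Gamma_0},\mu_{\Gamma_0})$, of the candidate sheaf $\omega_C(\phi)$ into the fibers $\Omega_{\mathrm{z}'}\otimes\phi_{\mathrm{z}'}$, and then to apply the analysis--algebra dictionary of \textbf{Proposition \ref{analysisandalgebra}} with $R=C(\mathbb{G}^{\Gamma_1})$. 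The disintegration-pair hypothesis on $(\mu_{\Gamma_0},\mu_{\Gamma_1})$ is precisely what legitimizes the Fubini swap producing this formula, and \textbf{Proposition \ref{uniqueness}} guarantees that the transverse family $\{\nu'_{\mathrm{z}'}\}$, hence $\Omega$ itself, is unambiguously defined.

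For the ($\Leftarrow$) direction I would argue as follows. Assume $\Omega$ is a $\mathbb{G}^{\Gamma_0}$-family of sheaves of finitely-generated projective $C(\mathbb{G}^{\Gamma_1})$-modules. Since every 2-graph state $\phi$ is a finitely-generated projective sheaf, its underlying bundle (\textbf{Proposition \ref{vectorbundles}}) has finite-dimensional stalks $\phi_{\mathrm{z}'}$; tensoring a finitely-generated projective $C(\mathbb{G}^{\Gamma_1})$-module by the finite-dimensional space $\phi_{\mathrm{z}'}$ leaves it finitely-generated projective, with rank bounded by the flatness/coherence of the family. The forward half of \textbf{Proposition \ref{analysisandalgebra}} then shows the direct integral above is again a finitely-generated projective $C(\mathbb{G}^{\Gamma_1})$-module, while separability and the $\bbC[[\hbar]]$-structure are preserved fiberwise, so $\omega_C(\phi)\in\C_q(\mathbb{G}^{\Gamma_1})$.

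For ($\Rightarrow$), assume $\omega_C(\phi)$ is a 2-graph state for all $\phi$. The converse half of \textbf{Proposition \ref{analysisandalgebra}} immediately gives that each fiber $\Omega_{\mathrm{z}'}\otimes\phi_{\mathrm{z}'}$ is a $C(\mathbb{G}^{\Gamma_1})$-module. To upgrade this to finite-generation and projectivity of $\Omega_{\mathrm{z}'}$ itself, I would exploit the arbitrariness of $\phi$: choosing $\phi$ to be a 2-graph state localized in a small Borel neighborhood of a chosen $\mathrm{z}'_0$ with rank-one stalk there, the transverse disintegration $\nu'_{\mathrm{z}'}$ concentrates the integral, so that $\Gamma_c(\omega_C(\phi))$ restricts essentially to $\Omega_{\mathrm{z}'_0}$; the finitely-generated projectivity of the former is then inherited by the latter for $\mu_{\Gamma_0}$-a.e. $\mathrm{z}'_0$. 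The measurability of $\underline{\omega}$ over $\mathbb{G}^{\Gamma_1}\times\mathbb{G}^{\Gamma_0}$, together with the disintegration-pair structure, finally assembles the fibers into a genuine sheaf on $\mathbb{G}^{\Gamma_1}\times\mathbb{G}^{\Gamma_0}$ flat over $\mathbb{G}^{\Gamma_0}$, i.e. a $\mathbb{G}^{\Gamma_0}$-family in the sense of Lemma 4.3 of \cite{Bridgeland:1999}.

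The main obstacle I anticipate is exactly this last upgrade in the ($\Rightarrow$) direction: the converse in \textbf{Proposition \ref{analysisandalgebra}} delivers only the bare module structure on each fiber, not finite generation or projectivity, so the localization argument with $\phi$ must be made precise. In particular one must ensure the ranks of the $\Omega_{\mathrm{z}'}$ remain $\mu_{\Gamma_0}$-essentially bounded, since an unbounded rank would already obstruct finitely-generated projectivity of the total integral and hence contradict the hypothesis. Controlling this essential boundedness of ranks, and verifying that the localized test states $\phi$ genuinely isolate a single fiber up to $\mu_{\Gamma_0}$-null sets --- where uniqueness of disintegrations (\textbf{Proposition \ref{uniqueness}}) is the essential input --- is where the real work lies.
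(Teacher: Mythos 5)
Your backward direction is essentially the paper's argument (tensor the fibres, apply the forward half of \textbf{Proposition \ref{analysisandalgebra}} together with Serre--Swan), so no issue there. The forward direction is where you diverge, and your proposed localization step is precisely where the argument breaks. The test objects you want --- 2-graph states "localized in a small Borel neighborhood of $\mathrm{z}'_0$ with rank-one stalk there" --- do not exist in $\C_q(\mathbb{G}^{\Gamma_0})$. By \textbf{Definition \ref{quantumhermitian}}, a 2-graph state is a sheaf of finitely-generated projective Hilbert $C(X)\otimes_\bbC\bbC[[\hbar]]$-modules, i.e.\ by the Serre--Swan perspective of \textbf{Proposition \ref{vectorbundles}} the sections of a Hermitian vector bundle over the decoration space; its rank is locally constant, so on the connected space $\mathbb{G}^{\Gamma_0}$ it cannot be rank one near $\mathrm{z}'_0$ and vanish off a neighborhood. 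The hypothesis "$\omega_C(\phi)$ is a 2-graph state for all $\phi$" quantifies only over such honest bundle-like objects, so the concentration argument never gets off the ground; the localization you describe (characteristic measures $\chi^{[2]}$, restriction sheaves) exists in the paper only for localization along subgraphs of $\Gamma$, not along Borel subsets of the group $\mathbb{G}^{\Gamma_0}$.

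The paper's proof uses the opposite extreme: the single, maximally delocalized test object $\eta_0=\underline{\bbC}$, the monoidal unit of \textbf{Definition \ref{geometric2graphs}}, which \emph{is} a 2-graph state. Since its stalks are all $\bbC$, one gets $\omega_C(\eta_0)\cong \int^\oplus_{\mathbb{G}^{\Gamma_0}}d\mu_{\Gamma_0}(\mathrm{z})\,(\Omega_\mathrm{z})_{/U}$ on each Borel $U\subset\mathbb{G}^{\Gamma_1}$, so the hypothesis applied to this one object already exhibits the total direct integral of the family $\Omega$ (with no tensor factor in the way) as a sheaf of finitely-generated projective $C(\mathbb{G}^{\Gamma_1})$-modules of $L^2$-sections. \textbf{Proposition \ref{analysisandalgebra}} then transfers the module structure and the finite-generation/projectivity between this total integral and its fibres $(\Omega_\mathrm{z})_{/U}$, and the (finite) rank of $\omega_C(\eta_0)$ uniformly controls the fibre ranks --- which disposes of your essential-boundedness worry without any a.e.\ bookkeeping. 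So the missing idea is simply: evaluate on the unit rather than on localized states; the role of \textbf{Proposition \ref{uniqueness}} is the same in both write-ups.
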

\begin{proof}
    The hypotheses guarantee that the sheaf $U\mapsto \displaystyle\int_{\mathbb{G}^{\Gamma_0}}^\oplus d\mu_{\Gamma_0}({\mathrm{z}'}) (\Omega_{\mathrm{z}'})_{/U}\otimes \phi_\mathrm{z'} = \int_{U}^\oplus d\mu_{\Gamma_1}({\mathrm{z}}) \omega_C(\phi)_{{\mathrm{z}}} $ is well-defined, and that it is equivalent to a sheaf of sections of a Hermitian vector bundle over $\mathbb{G}^{\Gamma_1}$ by the Serre-Swan theorem \cite{Serre:1955,Swan1962VectorBA}.

    Conversely, suppose the above sheaf defines a 2-graph state for all $\phi$. Evaluating $\omega_C$ on the unit, $$\omega_C(\eta_0) = \int_{\mathbb{G}^{\Gamma_0}}^\oplus d\mu_{\Gamma_0}(\mathrm{z}) (\Omega_\mathrm{z})_{/U}\otimes \eta_0\cong \int_{\mathbb{G}^{\Gamma_0}}^\oplus d\mu_{\Gamma_0}(\mathrm{z}) (\Omega_\mathrm{z})_{/U},$$ implies that $U\mapsto (\Omega_\mathrm{z})_{/U}$ defines a sheaf. Since each stalk $(\Omega_\mathrm{z})_\mathrm{z'}$ is finitely-generated and projective as a $C(\mathbb{G}^{\Gamma_1})$-module, so is the sheaf $U\mapsto (\Omega_\mathrm{z})_{/U}$ by \textbf{Proposition \ref{analysisandalgebra}}.
\end{proof}
\noindent By definition, measureable natural transformations between cylinder functors $\omega_C,\omega'_C$ correspond to ($\mu_{\Gamma_1}$-essentially) bounded linear operators of sheaves on $\mathbb{G}^{\Gamma_0}\times \mathbb{G}^{\Gamma_1}$.

\medskip

By leveraging this characterization, there are embeddings that can be written down.
\begin{proposition}\label{cylinderembeddings}
    Let $C: \Gamma_0\Rightarrow \Gamma_1$ denote a homotopy between 2-graphs.
    \begin{itemize}
        \item There are fully-faithful embeddings
\begin{enumerate}
    \item $\C_q(\mathbb{G}^{\Gamma_0})^{\text{m-op}}\times\C_q(\mathbb{G}^{\Gamma_1})\rightarrow \operatorname{Fun}_\mathsf{Meas}(\C_q(\mathbb{G}^{\Gamma_0}),\C_q(\mathbb{G}^{\Gamma_1}))$,
    \item $\operatorname{Fun}_\mathsf{Meas}(\C_q(\mathbb{G}^{\Gamma_0}),\C_q(\mathbb{G}^{\Gamma_1})) \rightarrow \operatorname{Fun}_{\mathsf{Meas}}(\C_q(\mathbb{G}^{\Gamma_0})\times\C_q(\mathbb{G}^{\Gamma_1})^{\text{m-op}},\mathsf{Hilb})$.
\end{enumerate}
\item Neither of which are equivalences in general.
    \end{itemize}
\end{proposition}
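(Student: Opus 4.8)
The plan is to prove the two embeddings of \textbf{Proposition \ref{cylinderembeddings}} and then to establish that neither is an equivalence. For the first embedding, I would proceed exactly as in the proof of \textbf{Proposition \ref{yoneda}}, but with the target $\mathsf{Hilb}$ replaced by $\C_q(\mathbb{G}^{\Gamma_1})$. Given a pair $(\bar\phi_0',\phi_1)\in\C_q(\mathbb{G}^{\Gamma_0})^{\text{m-op}}\times\C_q(\mathbb{G}^{\Gamma_1})$, I would send it to the cylinder functor whose underlying field is $\underline{\omega}_{\mathrm{z},\mathrm{z}'}=(\bar\phi_0')_{\mathrm{z}'}\otimes(\phi_1)_{\mathrm{z}}$ over $\mathbb{G}^{\Gamma_1}\times\mathbb{G}^{\Gamma_0}$, so that
\begin{equation*}
    \omega_{(\phi_0',\phi_1)}(\phi)_{\mathrm{z}} = (\phi_1)_{\mathrm{z}}\otimes\int_{\mathbb{G}^{\Gamma_0}}^\oplus d\mu_{\Gamma_0}(\mathrm{z}')\,(\bar\phi_0')_{\mathrm{z}'}\otimes\phi_{\mathrm{z}'}.
\end{equation*}
That this lands in $\C_q(\mathbb{G}^{\Gamma_1})$ is guaranteed by \textbf{Proposition \ref{cylinder}}: the factor $(\phi_1)_{\mathrm{z}}$ supplies exactly the $\mathbb{G}^{\Gamma_0}$-family of finitely-generated projective $C(\mathbb{G}^{\Gamma_1})$-module sheaves required, while the direct integral against $\bar\phi_0'$ produces a (constant-in-$\mathrm{z}$) Hilbert space. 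Full-faithfulness follows, as in \textbf{Proposition \ref{yoneda}}, from \textbf{Definition \ref{measnat}}: measureable natural transformations between two such functors correspond precisely to the $\mu_{\Gamma_1}$-essentially bounded fields of sheaf operators on $\mathbb{G}^{\Gamma_0}\times\mathbb{G}^{\Gamma_1}$, i.e. to pairs of morphisms $\phi_0'\to\phi_0''$ and $\phi_1\to\phi_1'$ in the respective factors.

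For the second embedding, I would use the currying/adjunction inherent in the monoidal structure of $\mathsf{Meas}$ (\textbf{Proposition \ref{basicfacts}}, part 3, together with the monoidal closedness reflected in the factorization \eqref{factorizable}). A cylinder functor $F:\C_q(\mathbb{G}^{\Gamma_0})\to\C_q(\mathbb{G}^{\Gamma_1})$ is sent to the functor $\C_q(\mathbb{G}^{\Gamma_0})\times\C_q(\mathbb{G}^{\Gamma_1})^{\text{m-op}}\to\mathsf{Hilb}$ defined by $(\phi,\psi')\mapsto \langle\psi',F(\phi)\rangle$, where the pairing $\langle-,-\rangle$ is the evaluation pairing form of \textbf{Remark \ref{pair}}/\eqref{pairfunctor} applied on $\Gamma_1$, namely $\langle\psi',F(\phi)\rangle=\int^\oplus_{\mathbb{G}^{\Gamma_1}}d\mu_{\Gamma_1}(\mathrm{z})\,\bar\psi'_{\mathrm{z}}\otimes F(\phi)_{\mathrm{z}}$. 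Full-faithfulness again reduces to \textbf{Definition \ref{measnat}}, since the pairing against $\C_q(\mathbb{G}^{\Gamma_1})^{\text{m-op}}$ is itself fully faithful by the Yoneda embedding \textbf{Proposition \ref{yoneda}} applied on $\Gamma_1$; composing two fully-faithful embeddings preserves full-faithfulness.

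The main obstacle, and the substantive content, is the final clause that neither embedding is an equivalence. Here I would import the mechanism of \textbf{Proposition \ref{notequiavlence}} essentially verbatim. For embedding (1), essential surjectivity would require that every cylinder functor $\omega_C$ be measureably naturally isomorphic to one of the special form above; by \textbf{Definition \ref{measnat}} such an isomorphism demands a $\mu_{\Gamma_1}$-essentially bounded field of \emph{invertible} sheaf operators $K:\underline{\omega}\to(\bar\phi_0')\otimes(\phi_1)$. As in \textbf{Proposition \ref{notequiavlence}}, the existence of such $K$ for all $\omega_C$ would force the equivalence $\cV^{X}\simeq\cH^{X}$ of the measureable sheaves of Hermitian sections with the full measureable category — equivalently (\textbf{Remark \ref{measureablefieldsassheaves}}) it would assert that every sheaf of Hilbert $W^*$-modules is isomorphic to one of finitely-generated projective Hilbert $C^*$-modules, which fails in the infinite-dimensional setting. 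For embedding (2), I would exhibit a functor $\C_q(\mathbb{G}^{\Gamma_0})\times\C_q(\mathbb{G}^{\Gamma_1})^{\text{m-op}}\to\mathsf{Hilb}$ that does not factor through the evaluation pairing: the obstruction is again that a general additive measureable bifunctor carries a $W^*$-module worth of data that cannot be represented by a single finitely-generated projective sheaf $F(\phi)$, so essential surjectivity fails for the same infinite-dimensional reason. The delicate point to get right is that the disintegration-pair hypothesis of \textbf{Proposition \ref{cylinder}} is what makes the forward embeddings well-defined, but it does \emph{not} supply the inverse operators $K$ needed for surjectivity; I expect the cleanest route is to reduce both non-equivalence statements to the single structural fact $\cV^X\not\simeq\cH^X$ already used in \textbf{Proposition \ref{notequiavlence}}, rather than constructing explicit counterexample fields by hand.
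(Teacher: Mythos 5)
Your constructions of the two embeddings are essentially the paper's: your embedding (1) is exactly the paper's factorizable-sheaf construction $\omega_\Phi$ specialized to the product measure $\lambda=\mu_{\Gamma_1}\times\mu_{\Gamma_0}$, and your embedding (2) computes the same pairing as the paper's cone functor $\Omega_C$ built from the dual family $\bar\Omega$ (the paper constructs it directly rather than currying through Yoneda; note that the exponential law $\operatorname{Fun}_{\mathsf{Meas}}(\cA\times\cB,\mathsf{Hilb})\simeq\operatorname{Fun}_{\mathsf{Meas}}(\cA,\operatorname{Fun}_{\mathsf{Meas}}(\cB,\mathsf{Hilb}))$ you implicitly invoke is nowhere established for measureable categories, which is presumably why the paper avoids it). Your non-equivalence argument for embedding (2) also agrees with the paper's: cone functors valued in $\mathsf{Hilb}$ have unconstrained underlying fields, so the obstruction is exactly that of \textbf{Proposition \ref{notequiavlence}}.

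The genuine gap is your treatment of non-equivalence for embedding (1), and in particular your closing claim that both non-equivalence statements reduce to the single fact $\cV^X\not\simeq\cH^X$. That mechanism has no purchase on embedding (1). Its target $\operatorname{Fun}_\mathsf{Meas}(\C_q(\mathbb{G}^{\Gamma_0}),\C_q(\mathbb{G}^{\Gamma_1}))$ does \emph{not} consist of functors with arbitrary underlying fields: since the surjective projections make $(\mu_{\Gamma_0},\mu_{\Gamma_1})$ a disintegration pair, \textbf{Proposition \ref{cylinder}} already forces the family $\Omega$ associated to any such functor to be a $\mathbb{G}^{\Gamma_0}$-family (hence flat over $\mathbb{G}^{\Gamma_0}$) of finitely-generated projective sheaves of $L^2$-sections on $\mathbb{G}^{\Gamma_1}$. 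Every object of the target is therefore already ``sheafy'', the $W^*$-versus-$C^*$-module obstruction of \textbf{Proposition \ref{notequiavlence}} never arises, and your assertion that the existence of invertible comparison operators $K$ ``would force $\cV^X\simeq\cH^X$'' does not follow. The paper's actual obstruction lives in the $\mathbb{G}^{\Gamma_0}$-direction and is of a different nature: the essential image of embedding (1) consists of factorizable sheaves that are projective in \emph{both} coordinates, whereas \textbf{Proposition \ref{cylinder}} demands only flatness along $\mathbb{G}^{\Gamma_0}$; since $C(X)$ is not Noetherian for $\operatorname{dim}X>0$, flat modules need not be projective, so there exist cylinder functors whose families are flat but not projective along $\mathbb{G}^{\Gamma_0}$ and hence lie outside the essential image. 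The two embeddings fail to be equivalences for two genuinely different reasons, and your uniform reduction misses the first one.
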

\begin{proof}
\begin{itemize}
    \item We will explicitly construct the embeddings in the following.
     \begin{enumerate}
        \item      The goal is to construct a $\mathbb{G}^{\Gamma_0}$-family of sheaves of Hermitian $L^2$-sections on $\mathbb{G}^{\Gamma_1}$ from a pair of 2-graph states $\bar\phi_0\in  \C_q(\mathbb{G}^{\Gamma_0})^{\text{m-op}},~\phi_1\in \C_q(\mathbb{G}^{\Gamma_1})$. Here we emphasize that $\bar\phi$ is the \textit{linear dual}, not the *-operations.
        
        To do so, we use the monoidal product on $\mathsf{Meas}$ in Thm. 50 of \cite{Yetter2003MeasurableC}. Consider a 2-graph state $\Phi=\phi_0\times\phi1$ on $\mathbb{G}^{\Gamma_0}\times \mathbb{G}^{\Gamma_1}$ subject to the following conditions.
        \begin{itemize}
            \item $\Phi$ is \textbf{factorizable}: we have $\operatorname{pr}_1^*\Phi=\phi_1$ and $\operatorname{pr}_0^*\Phi=\bar\phi_0$ as sheaves along the projection functors \eqref{factorizable}, and
            \item $\Phi$ is equipped with a bounded Radon measure $\lambda$ on $\mathbb{G}^{\Gamma_0}\times \mathbb{G}^{\Gamma_1}$, for which the given Haar measures $\mu_{\Gamma_{0,1}} = \lambda\circ\operatorname{pr}_{0,1}^{-1}$ are the corresponding pushfowards.
        \end{itemize}
    These surjective submersive projections make $(\mu_{\Gamma_0},\mu_{\Gamma_1})$ into a disintegration pair. 
    
    Since projective modules are flat, the presheaf $\Phi_{\mathrm{z}}: U\mapsto (\Phi_{\mathrm{z}})_{/U},~ U\subset \mathbb{G}^{\Gamma_1}$ is a $\mathbb{G}^{\Gamma_0}$-family of finitely-generated projective sheaves on $\mathbb{G}^{\Gamma_1}$, which defines a cylinder functor $\omega_{\Phi}$ as desired. 

    The full-faithfulness is clear from definition: measureable natural transformations between cylinder functors of the form $\omega_\Phi,\omega_{\Phi'}$ are precisely sheaves of ($\mu_{\Gamma_1}$-essentially) bounded linear operators $\Phi\rightarrow\Phi'$.

   \item Now consider a cylinder functor $\omega_C$. Given its associated family of sheaves $\Omega$, the linear dual gives rise to a $\mathbb{G}^{\Gamma_0}$-family $\bar\Omega$ of finitely-generated projective $L^2$-sheaves on $\mathbb{G}^{\bar\Gamma_1}$. 

    Now let $\phi_0\in  \C_q(\mathbb{G}^{\Gamma_0}),~\bar\phi_1\in \C_q(\mathbb{G}^{\Gamma_1})^{\text{m-op}}$, and denote by $\tilde\Phi=\phi_0\times\bar\phi_1$ the associated factoriazable 2-graph state defined from along the canonical projections.

    Given the Radon measure $\lambda$ as above, we can then define a cone functor $\Omega_C$ by
   \begin{equation*}
       \Omega_C(\phi_0\times \bar\phi_1) = \int_{\mathbb{G}^{\Gamma_0}\times \mathbb{G}^{\bar\Gamma_1}}^\oplus d\lambda(\mathrm{z},\mathrm{z}') \bar\Omega_{\mathrm{z},\mathrm{z}'}\otimes \tilde\Phi_{\mathrm{z},\mathrm{z}'}\in\mathsf{Hilb}.
   \end{equation*}
Once again, the full-faithfulness is clear: measureable natural transformations $\Omega_C\Rightarrow\Omega'_C$ of the form above are precisely bounded linear operators between families of sheaves $\Omega\rightarrow \Omega'$.
    \end{enumerate}

    \item Given \textbf{Proposition \ref{uniqueness}}, the reasons for the non-essential surjectiveness is the following.
    \begin{enumerate}
        \item First, cylinder functors of the form $\omega_\Phi$ comes from factorizable sheaves $\Phi$ which are projective in both coordinates $\mathbb{G}^{\Gamma_0}\times \mathbb{G}^{\Gamma_1}$, whereas the characterization \textbf{Proposition \ref{cylinder}} only requires flatness along $\mathbb{G}^{\Gamma_0}$.\footnote{Though any flat module over a Noetherian ring is projective, it is well-known that continuous functions $C(X)$ over any manifold $X$, with $\operatorname{dim}X>0$, is not Noetherian.} 
        \item Second, cone functors of the form $\Omega_C$ come from families of very well-behaved sheaves, while generically their underlying field of Hilbert spaces $\underline{\omega}$ have no constraint. Thus the issue is the same as in \textbf{Proposition \ref{notequiavlence}}.
    \end{enumerate}
\end{itemize}
\end{proof}

\begin{rmk}
    The composition of the embeddings in the above theorem gives a full-faithful functor
        \begin{equation}
            \C_q(\mathbb{G}^{\Gamma_0})^{\text{m-op}}\times\C_q(\mathbb{G}^{\Gamma_1})\rightarrow \operatorname{Fun}_{\mathsf{Meas}}(\C_q(\mathbb{G}^{\Gamma_0})\times\C_q(\mathbb{G}^{\Gamma_1})^{\text{m-op}},\mathsf{Hilb}),\label{disjointpair}
        \end{equation}
        which extends the {categorical pairing form} (see \textit{Remark \ref{pairing}}) to disjoint homotopic graphs $\Gamma_0,\Gamma_1$. In fact, it is clear that, if $\Gamma_1 = v$ is trivial, then under the equivalence $\C_q(\mathbb{G}^\ast)\simeq\mathsf{Hilb}$ this functor \eqref{disjointpair} reproduces precisely the Yoneda embedding in \textbf{Proposition \ref{yoneda}}.
\end{rmk}

 Consider the one-point suspension of the disjoint union $\Gamma_0\coprod\Gamma_1$. It is PL homeomorphic to two  tetrahedra on $\Gamma_0,\Gamma_1$ identified at the cone point (a PL cylinder "pinched" at the centre), which is an \textit{irregular point} in the stratification (see fig. 2 in \cite{Liu:2024qth}). This leads to the fact that the right-hand side of \eqref{disjointpair}, ie. the cone functors $\operatorname{Fun}_{\mathsf{Meas}}(\C_q(\mathbb{G}^{\Gamma_0})\times\C_q(\mathbb{G}^{\bar\Gamma_1})^\text{op},\mathsf{Hilb})$, being "too large": it contains geometries which are not cylinders. Irregular points are also undesireable from the lattice theoretic perspective \cite{Liu:2024qth}, as they lead to ambiguities.

\subsection{Gauge *-invariance of categorical states}
Recall from \S \ref{lattice2alg} that $\C_q(\mathbb{G}^{\Gamma})\subset \cV_q^X$, for each 2-graph $\Gamma$, is a right *-module over $\mathbb{U}_q\G^{\Gamma^1}$. The categorical linear functionals, which are supposed to define states on the \textit{physical} degrees-of-freedom, should therefore be \textit{invariant} under $\mathbb{U}_q\G^{\Gamma^1}$. Such notions are captured by \textit{module functors}. 

These are by now very well-known, specifically in the theory of tensor categories \cite{etingof2016tensor,maclane:71,Bartsch:2022mpm,Delcamp:2023kew}. 
\begin{definition}
    Let $A,B$ denote two $\bbC$-linear monoidal categories. We say $\cM$ is an $A$-module if it comes equipped with a functor $\rhd: A\times \cM\rightarrow \cM$ and the module associator natural transformation $(-\otimes -)\rhd-\Rightarrow -\rhd (-\rhd-)$.
    \begin{enumerate}
        \item An \textbf{$A$-module functor} $F: \cM\rightarrow \cN$ is a functor equipped with natural transformations $F_a: F\circ (a\rhd_\cM-)\Rightarrow (a\rhd_{\cN} -)\circ F$, satisfying monoidal coherence conditions in $A$.
        \item Let $\cN$ be a $B$-module. A monoidal functor $f: A\rightarrow B$ induces the \textbf{restriction of scalars} functor $f^*\times 1_\cN:-\rhd -\mapsto f(-)\rhd  -$, which turns $(\cN,\rhd_B)$ into an $A$-module: $a\rhd_fn = f(a)\rhd n$.
    \end{enumerate}
\end{definition}
We will also recall the notion of a \textbf{rigid dagger} category \cite{Jones:2017}. 
\begin{definition}\label{rigiddagger}
     Let $\cM,\cN$ be rigid dagger categories. A \textbf{rigid dagger functor} $F:\cM\rightarrow \cN$ is a functor equipped with natural isomorphisms
        \begin{equation}
    F^\text{m-op}\circ (-^*)_\cM \cong (-^*)_\cN\circ F,\qquad F^\text{op}\circ (-^\dagger)_\cM\cong(-^\dagger)_\cN\circ F\label{daggercommute}
\end{equation}
preserving the rigid duality data, and satisfying the obvious coherence conditions against the rigid monoidal structures of $\cM,\cN$.
\end{definition}
In fact, when the rigid duality is involutive, a rigid duality structure can be thought of as a $\bbZ_2\times B\bbZ_2$-module structure on $\cM$. This gives the delooping $B\cM$ the structure of a coherent 2-$\dagger$ structure \cite{ferrer2024daggerncategories}.

\subsubsection{Invariant categorical linear functionals}
Consider a PL continuous map $\Gamma'^2\rightarrow \Gamma^2$ between two 2-graphs, and denote by $h:\Gamma'^1\rightarrow\Gamma^1$ the induced PL continuous map on their 1-skeleta, which by definition is a functor of PL 1-simplex groupoids.

We construct a functor $h^*:\mathbb{U}_q\G^{\Gamma^1}\rightarrow \mathbb{U}_q\G^{\Gamma'^1}$ on the 2-gauge parameters by pulling back $h$, which is easily seen to be \textit{strictly} monoidal
\begin{align*}
    h^*(\zeta\cdot_\mathrm{h}\zeta') &= h^*\big((aa')_v\xrightarrow{\gamma_e(a_v\rhd\gamma'_e}(aa')_{v'}\big)=(aa')_{h(v)}\xrightarrow{\gamma_{h(e)}(a_{h(v)}\rhd \gamma'_{h(e)})} (aa')_{h(v)} \\
    &=  (a_{h(v)}\xrightarrow{\gamma_{h(e)}}a_{h(v')})\cdot_\mathrm{h}(a'_{h(v)}\xrightarrow{\gamma_{h(e)}'}a'_{h(v')}) = h^*(\zeta)\cdot_\mathrm{h} h^*(\zeta),\\
    h^*(\zeta\cdot_\mathrm{v}\zeta') &= h^*\big(a_v\xrightarrow{\gamma_e}a_{v_0}\xrightarrow{\gamma_{e'}}a_{v'}\big) = h^*\big(a_v\xrightarrow{\gamma_{e\ast e'}}a_{v'}\big) =a_{h(v)}\xrightarrow{\gamma_{h(e\ast e')}}a_{h(v')} \\
    &= a_{h(v)}\xrightarrow{\gamma_{h(e)}\gamma_{h(e')}}a_{h(v')}= a_{h(v)}\xrightarrow{\gamma_{h(e)}}a_{h(v_0)}\xrightarrow{\gamma_{h(e')}}a_{h(v')} = 
    h^*(\zeta)\cdot_\mathrm{v}h^*(\zeta')
\end{align*}
for each (horizontally/vertically) composable $\zeta,\zeta'\in\mathbb{U}_q\G^{\Gamma^1}$.

\medskip

This monoidal functor $h^*$ then induces a restriction of scalars, sending $\mathbb{U}\G^{\Gamma^1}$-modules to $\mathbb{U}\G^{\Gamma'^1}$-modules. We can therefore introduce the following notion.
\begin{definition}
    Suppose there is a PL continuous map $\Gamma'^2\rightarrow \Gamma^2$, then a \textbf{measureable $\bullet$-module functor} $F:\C_q(\mathbb{G}^{\Gamma^2})\rightarrow\C_q(\mathbb{G}^{\Gamma'^2})$ is a measureable functor $\omega^h=(\omega,\nu)$ --- with $\nu$ a ($\operatorname{pr}_2$-filtred) measure on $\mathbb{G}^{\Gamma^2}\times\mathbb{G}^{\Gamma'^2}$ --- equipped with a measureable natural transformation
    \begin{equation*}
        \omega_\zeta: \omega \circ  (-\bullet \zeta) \Rightarrow (-\bullet h^*\zeta)\circ\omega
    \end{equation*}
    for all $\zeta\in\mathbb{U}_q\G^{\Gamma^1}$, such that the diagram against the module associator $\alpha^\Lambda$,
\[\begin{tikzcd}
 & {(-\bullet h^*\zeta)\circ\omega\circ(-\bullet  \zeta')} & \\
	{\omega\circ(-\bullet\zeta)\circ(-\bullet\zeta')}  & & {(-\bullet h^*\zeta)\circ(-\bullet  h^*\zeta')\circ\omega} \\
	{\omega\circ(-\bullet \zeta\cdot\zeta')} && {\big(-\bullet h^*(\zeta\cdot\zeta')\big)\circ\omega}
	\arrow["{\omega_\zeta\circ(-\bullet  \zeta')}", Rightarrow, from=2-1, to=1-2]
	\arrow["{\omega\circ\alpha^\bullet_{\zeta,\zeta'}}"', Rightarrow, from=2-1, to=3-1]
	\arrow["{(-\bullet \zeta)\circ\omega_{\zeta'}}", Rightarrow, from=1-2, to=2-3]
	\arrow["{\alpha^\bullet_{\zeta,\zeta'}\circ\omega}", Rightarrow, from=2-3, to=3-3]
	\arrow["{\omega_{\zeta\cdot\zeta'}}"', Rightarrow, from=3-1, to=3-3]
\end{tikzcd},\]
commutes. Here $\cdot$ denotes either the horizontal or vertical composition, depending on the composability of $\zeta,\zeta'$.
\end{definition}

Explicitly, the natural transformation $\omega_\zeta$ is the data of a field of bounded linear operators
\begin{equation*}
    (\omega_\zeta)_{\mathrm{z}',\mathrm{z}}: (\omega \Lambda_\zeta)_{\mathrm{z}',\mathrm{z}}\rightarrow (\Lambda_{h^*\zeta} \omega)_{\mathrm{z}',\mathrm{z}},\qquad \mathrm{z},\mathrm{z}'\in X=\mathbb{G}^{\Gamma^2},
\end{equation*}
with measureability class $\sqrt{(\nu\lambda_\zeta)(\lambda_\zeta \nu)}$ \cite{Baez:2012}, where $\lambda_\zeta$ is the measure on $X\times X$ underlying $\Lambda_\zeta$. We will assume $\omega_\zeta$ is invertible in the following.

By inducing $\Lambda_\zeta$ from a pullback (see \S \ref{2gauge}), $\lambda_\zeta=\delta$ is the delta measure and $f\delta=f=\delta f \implies \sqrt{ff} = f$ by Radon-Nikodym.  Taking the PL continuous map $h$ to be the identity, we recover the notion of "measureable module endofunctors" introduced in the appendix of \cite{Chen:2025?}, through the model change \textit{Remark \ref{modelchange}}.

\medskip

This gives us the following \textit{invariance} property.
\begin{proposition}\label{invarfunctor}
    Cone $\bullet$-module functors $\omega$ are $\mathbb{U}_q\G^{\Gamma^1}$-invariant, hence they descend to categorical states on the 2-Chern-Simons observables $\omega\in \operatorname{Fun}_\mathsf{Meas}(\mathcal{O}^\Gamma,\mathsf{Hilb})$.
\end{proposition}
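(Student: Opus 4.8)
The plan is to specialize the general notion of a measureable $\bullet$-module functor to the cone setting, in which the target is the monoidal unit $\mathsf{Hilb}\simeq\C_q(\mathbb{G}^v)$ carried by the cone point $v$, and then read off invariance from the triviality of the target gauge action. First I would take the PL continuous map $\Gamma'^2\to\Gamma^2$ underlying the cone functor to be the one whose codomain skeleton is the single vertex $v$, so that $\Gamma'^1$ has \emph{no} edges. The induced monoidal functor $h^*:\mathbb{U}_q\G^{\Gamma^1}\to\mathbb{U}_q\G^{v}$ then necessarily lands in the trivial gauge group: every $\zeta$ is sent to the groupoid unit $\tilde\eta$, and by \textbf{Definition \ref{2gaugetransfo}} we have $\Lambda_{\tilde\eta}=1_{\C_q(\mathbb{G}^{v})}\cong 1_{\mathsf{Hilb}}$, whence $-\bullet h^*\zeta\cong\id$. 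Consequently the module-functor coherence datum $\omega_\zeta:\omega\circ(-\bullet\zeta)\Rightarrow(-\bullet h^*\zeta)\circ\omega$ collapses to an \emph{invertible} measureable natural transformation $\omega\circ(-\bullet\zeta)\Rightarrow\omega$ for every $\zeta\in\mathbb{U}_q\G^{\Gamma^1}$; stalkwise this is exactly a field of isomorphisms $\omega(\phi\bullet\zeta)\cong\omega(\phi)$, which is the asserted $\mathbb{U}_q\G^{\Gamma^1}$-invariance.

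Next I would verify that these invariance witnesses assemble coherently in $\zeta$. The pentagon square in the definition of a $\bullet$-module functor, run against the $\Lambda$-module associator $\alpha^\bullet_{\zeta,\zeta'}$, forces the relation $\omega_{\zeta\cdot\zeta'}\cong\omega_\zeta\ast\big(\omega\circ\alpha^\bullet_{\zeta,\zeta'}\big)\ast\big((-\bullet\zeta)\circ\omega_{\zeta'}\big)$, so the $\omega_\zeta$ are natural and compatible with both the horizontal and vertical composition of gauge parameters. The measure-theoretic bookkeeping here is mild: by \textbf{Proposition \ref{pullbackmeas}} each $\Lambda_\zeta$ is induced from pulling back the smooth automorphism $\operatorname{hAd}_\zeta$, so its underlying measure on $X\times X$ is the delta measure $\lambda_\zeta=\delta$, and the measureability class $\sqrt{(\nu\lambda_\zeta)(\lambda_\zeta\nu)}$ reduces to $\nu$ by Radon--Nikodym, as already noted after the definition. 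Thus $\omega_\zeta$ is genuinely a field of bounded invertible operators of the required class, and no regularity is lost.

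Finally I would descend along the fully-faithful inclusion $\mathcal{O}^\Gamma\hookrightarrow\C_q(\mathbb{G}^{\Gamma^2})$. Recall from \eqref{leftreg} and \textbf{Corollary \ref{bdyinvar}} that $\mathcal{O}^\Gamma=\big(\C_q(\mathbb{G}^{\Gamma^2})\big)^{\mathbb{U}_q\G^{\Gamma^1}}$ is the category of homotopy fixed points, i.e. states $\phi$ equipped with coherent gauge-fixing isomorphisms $\Lambda_\zeta\phi\cong\phi$. Restricting $\omega$ along this inclusion yields a measureable functor $\omega|_{\mathcal{O}^\Gamma}:\mathcal{O}^\Gamma\to\mathsf{Hilb}$, and the point of invariance is that the value of $\omega$ on an observable is independent of the chosen fixed-point witness: the datum $\omega_\zeta$ and the intrinsic witness $\Lambda_\zeta\phi\cong\phi$ are matched by the shared associator $\alpha^\Lambda$, so the comparison of the two identifications of $\omega(\phi)$ is forced to be the identity by naturality against all measureable morphisms on $\mathsf{Hilb}$. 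This exhibits $\omega\in\operatorname{Fun}_\mathsf{Meas}(\mathcal{O}^\Gamma,\mathsf{Hilb})$.

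The step I expect to be the main obstacle is precisely this last coherence check, rather than the formal collapse $-\bullet h^*\zeta\cong\id$. One must confirm that the two a priori distinct homotopies identifying $\omega(\phi\bullet\zeta)$ with $\omega(\phi)$ — one coming from the $\bullet$-module functor structure $\omega_\zeta$, the other from the fixed-point isomorphism that defines membership in $\mathcal{O}^\Gamma$ — agree up to a \emph{canonical} homotopy, not merely both exist. This is a diagram chase comparing the module-functor pentagon with the triangle axiom $\varphi_{\zeta\cdot\zeta'}=\varphi_\zeta\ast(\Lambda_\zeta\circ\varphi_{\zeta'})\ast\alpha^\Lambda_{\zeta,\zeta'}$ appearing in \textbf{Corollary \ref{bdyinvar}}; once these two coherences are aligned, the descent to a well-defined categorical state on $\mathcal{O}^\Gamma$ follows.
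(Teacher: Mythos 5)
Your proposal is correct and follows essentially the same route as the paper's proof: both specialize the $\bullet$-module functor datum to the constant PL map from the point into $\Gamma^2$, observe that the pulled-back gauge parameter is forced to be trivial (the paper phrases this as $h^*(\zeta)=\tilde\epsilon(\zeta)$, you as $h^*(\zeta)=\tilde\eta$, with the same effect $-\bullet h^*\zeta\cong\id$), so that $\omega_\zeta$ collapses to an invariance isomorphism $\omega\circ(-\bullet\zeta)\Rightarrow\omega$, and then descend to $\mathcal{O}^\Gamma$ via the left-covariance \eqref{leftreg}. Your additional coherence and measure-theoretic checks (the pentagon against $\alpha^\bullet$, the Radon--Nikodym reduction of the measureability class) are consistent with, and merely make explicit, what the paper treats as immediate.
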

\begin{proof}
    Recall $\cH^\emptyset\simeq\mathsf{Hilb}$. Consider the constant PL continuous map $\ast\rightarrow \Gamma^2$ sending a point to the root $v\in \Gamma^2$ of a 2-graph, which gives rise the same trivial map on the 1-skeleta $h: \ast\rightarrow \Gamma^1$. Since the point $\ast$ is undecorated, the induced map on the decorated 1-graphs is the monoidal counit $h^*(\zeta) = \tilde\epsilon(\zeta)$ in $\mathbb{U}_q\G^{\Gamma^1}$ (ie. the trivial transformation for all $\zeta$).
    
    By \textbf{Definition \ref{linearfunctors}}, the $\bullet$-module structure on cone functors $\omega\in\operatorname{Fun}_\mathsf{Meas}(\C_q(\mathbb{G}^{\Gamma^2}),\mathsf{Hilb})$ then reads
    \begin{equation*}
        \omega_\zeta: \omega \circ(-\bullet \zeta) \Rightarrow (-\bullet  h^*\zeta)\circ\omega = (-\bullet \tilde \epsilon(\zeta))\circ\omega \cong \omega.
    \end{equation*}
    where we have by definition $-\bullet {\tilde\epsilon(\zeta)} \cong -\otimes\mathsf{Hilb} \cong 1_{\C_q(\mathbb{G}^{\Gamma^2})}$ for all $\zeta\in\mathbb{U}_q\G^{\Gamma^1}$.

    Now given 2-gauge transformations can be written in terms of the $\bullet$-bimodule structure \eqref{leftreg}, the last statement follows immediately.
\end{proof}
\noindent This is a categorified version of the invariance condition, eq. (6.7) of \cite{Alekseev:1994pa}, for linear functionals in discrete Chern-Simons theory.\footnote{Note we do not require the monoidality of categorical linear funcitonals under the monoidal structure given by $\ostar$, since such functors decategorifies into an algebra map, which does not correspond to a state on a $C^*$-algebra.}

\subsubsection{*-functors and cointegrals for Hopf categories}
Recall from \S \ref{*-op} that the *-operations give the cocategory $\C_q(\mathbb{G}^\Gamma)$ with a dagger *-structure (in which the duality is \textit{not} necessarily involutive). The unitarity property stated in \textbf{Definition \ref{unitary2hol}} then allows us to construct the duality data on $\C_q(\mathbb{G}^{\Gamma^2})$ (specifically the evaluation measureable functors; see the appendix of \cite{Chen:2025?}).

\begin{rmk}
    In the following, we will only focus on the property \eqref{daggercommute}. This is because infinite-dimensional Hilbert spaces do \textit{not} have coevaluation maps that satisfy the snake equation against the canonical evaluation map, and hence any infinite-dimensional analogue of $\mathsf{Hilb}$ will not be rigid. Indeed, evaluation module functors on $\C_q(\mathbb{G}^\Gamma)$ have been written down in the appendix of \cite{Chen:2025?} using the *-operations, but it does not have coevaluations.
\end{rmk}

Focusing on the cone functors $\omega=\omega_{\Lambda\Gamma^2}$ for clarity, we define the following.
\begin{definition}
    A \textbf{measaureable (cone) $\bullet$-module *-functor} is a cone $\bullet$-module functor $\omega: \C_q(\mathbb{G}^{\Gamma})\rightarrow \mathsf{Hilb}$ equipped with invertible measureable $\bullet$-module natural transformations
    \begin{equation*}
        \omega^\dagger: -^\dagger\circ \omega \Rightarrow \omega^\text{op}\circ -^\dagger,\qquad \omega^*: \bar\cdot \circ \omega \Rightarrow \omega^{\text{m-op}}\circ \bar\cdot
    \end{equation*}
    such that the obvious coherence conditions against the *-module natural transformations $\overline{\phi\bullet\zeta} \cong \bar{\zeta}\bullet \bar\phi$ are satisfied.

    Denote by $\operatorname{Fun}_{\mathsf{Meas}}^{\bullet,\ast}(\C_q(\mathbb{G}^{\Gamma^2}),\mathsf{Hilb})$ the hom-category of such $\bullet$-module cone *-functors on $\C_q(\mathbb{G}^{\Gamma^2})$.
\end{definition}
We shall assume these measureable natural transformations are invertible. 

Let us now prove the categorification of eq. (6.8) in \cite{Alekseev:1994pa}.
\begin{proposition}
    Let $\omega$ be a measaureable (cone) $\bullet$-module *-functor, then there are natural measureable isoomrphisms
    \begin{equation*}
    \overline{\omega(\phi)} \cong \omega(\phi^{*_1}),\qquad \omega(\phi)^\dagger \cong \omega(\phi^{*_2})
\end{equation*}
    for each $\phi\in\C_q(\mathbb{G}^{\Gamma^2})$, intertwining the *-operations \textbf{Definition \ref{starop}}.
\end{proposition}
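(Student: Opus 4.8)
The plan is to unwind the definitions and chase the two required isomorphisms through the compatibility data that a $\bullet$-module *-functor carries. Recall that a cone $\bullet$-module *-functor $\omega$ comes equipped with two invertible measureable natural transformations, $\omega^*:\bar\cdot\circ\omega\Rightarrow\omega^{\text{m-op}}\circ\bar\cdot$ and $\omega^\dagger:-^\dagger\circ\omega\Rightarrow\omega^{\text{op}}\circ-^\dagger$, which intertwine the conjugation $\bar\cdot$ and the adjoint $-^\dagger$ with the corresponding operations on $\mathsf{Hilb}$. On the other hand, \textbf{Proposition \ref{orientationreverse}} (through the *-operations of \textbf{Definition \ref{starop}}) expresses the *-operations $-^{*_1},-^{*_2}$ as twisted composites of the conjugation/adjoint with the $\dagger_1,\dagger_2$ orientation data, namely $-^{*_1}\xRightarrow{\sim}(\Lambda\otimes1)_{\tilde R_{\mathrm h}^{-1}}\circ-^{\dagger_1}$ and $-^{*_2}\xRightarrow{\sim}(\Lambda\otimes1)_{\tilde R_{\mathrm v}^{-1}}\circ-^{\dagger_2}$, witnessed by the 2-$\dagger$-intertwining pair $\eta$. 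The strategy is therefore to feed these descriptions into the module-functor compatibility $\omega_\zeta$ and the two *-structures $\omega^*,\omega^\dagger$, and read off the claimed isomorphisms.

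First I would establish $\overline{\omega(\phi)}\cong\omega(\phi^{*_1})$. Starting from the conjugation intertwiner, $\overline{\omega(\phi)}\cong\omega^{\text{m-op}}(\bar\phi)$ via $\omega^*$. Now $\bar\phi$ differs from $\phi^{\dagger_1}$ precisely by the $\dagger_1$-frame data, and by the $*$-module naturality $\overline{\phi\bullet\zeta}\cong\bar\zeta\bullet\bar\phi$ together with the invariance \textbf{Proposition \ref{invarfunctor}} one can absorb the $\Lambda$-twist $(\Lambda\otimes1)_{\tilde R_{\mathrm h}^{-1}}$ into the module structure: since $\omega$ is $\mathbb{U}_q\G^{\Gamma^1}$-invariant as a cone functor, $\omega\circ(\Lambda_\zeta-)\cong\omega$ naturally for each $\zeta$, so applying this with the $\tilde R_{\mathrm h}^{-1}$-localized 2-gauge parameter converts $\omega^{\text{m-op}}(\bar\phi)$ into $\omega\big((\Lambda\otimes1)_{\tilde R_{\mathrm h}^{-1}}\phi^{\dagger_1}\big)\cong\omega(\phi^{*_1})$, where the last isomorphism is exactly the content of the natural isomorphism in \textbf{Proposition \ref{orientationreverse}} relating $-^{*_1}$ to $-^{\dagger_1}$. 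The second isomorphism $\omega(\phi)^\dagger\cong\omega(\phi^{*_2})$ runs identically but with $\omega^\dagger$ in place of $\omega^*$ and the $\dagger_2/\tilde R_{\mathrm v}$ data in place of $\dagger_1/\tilde R_{\mathrm h}$, using the $*_2$-version of the twist.

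The main obstacle I expect is bookkeeping the localization of the $\tilde R$-matrices and checking that the absorption of the $\Lambda$-twist into $\omega$ is natural and compatible on both sides — that is, that the invariance isomorphism $\omega\circ\Lambda_\zeta\cong\omega$ intertwines with $\omega^*$ (resp. $\omega^\dagger$) as required by the coherence conditions in the definition of a $\bullet$-module *-functor. Concretely, one must verify that the diagram combining $\omega_\zeta$, the module associator $\alpha^\bullet$, and the conjugation/adjoint intertwiners commutes, so that the two ways of transporting $\tilde R_{\mathrm{h},\mathrm{v}}^{-1}$ agree; this is where the strong commutativity $(-^{*_1})^{\text{op}}\circ-^{*_2}\cong(-^{*_2})^{\text{m-op,c-op}}\circ-^{*_1}$ and the quasitriangularity axioms for $\tilde R$ are needed. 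Once these coherence squares are pasted together, the two stated natural isomorphisms follow, and by construction they intertwine the *-operations of \textbf{Definition \ref{starop}} as claimed.
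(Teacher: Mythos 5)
Your proof is correct and follows essentially the same route as the paper's: apply the *-functor intertwiners $\omega^*,\omega^\dagger$ to reduce to $\omega$ evaluated on the conjugate/adjoint of $\phi$, rewrite the *-operations as $\dagger_{1,2}$-structures twisted by $(\Lambda\otimes 1)_{\tilde R^{-1}_{\mathrm{h,v}}}$ via \textbf{Definition \ref{starop}}, and then kill the $\tilde R$-twists using the $\mathbb{U}_q\G^{\Gamma^1}$-invariance of cone $\bullet$-module functors (\textbf{Proposition \ref{invarfunctor}}). One small correction: the natural isomorphisms $-^{*_1}\Rightarrow(\Lambda\otimes 1)_{\tilde R_{\mathrm{h}}^{-1}}\circ -^{\dagger_1}$, $-^{*_2}\Rightarrow(\Lambda\otimes 1)_{\tilde R_{\mathrm{v}}^{-1}}\circ -^{\dagger_2}$ you invoke are the content of the (unlabelled) proposition opening \S \ref{*-opiso}, not of \textbf{Proposition \ref{orientationreverse}} (which concerns the triviality of states on the full-stacking $\Delta\cup_\Delta\bar\Delta$); with that reference fixed, the argument matches the paper's.
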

\begin{proof}
    To begin, by definition, for each $\phi\in\C_q(\mathbb{G}^{\Gamma^2})$ we have {linear} isomorphisms
\begin{equation*}
    \overline{\omega(\phi)} \cong \omega(\bar\phi) \cong \omega(S_h\phi^{\dagger_1}),\qquad {\omega(\phi)}^\dagger \cong \omega(S_v\phi^{\dagger_2}),
\end{equation*}
where we have used the unitarity property \textbf{Definition \ref{opposites}} to rewrite $\bar \phi$ in terms of the horizontal/vertical antipodes $S_h,S_v$ and the 2-dagger structures on the 2-graphs, $$(\phi^{\dagger_{1,2}})_\mathrm{z} = \phi_{\mathrm{z}^{\dagger_{1,2}}},\qquad \mathrm{z}\in \mathbb{G}^{\Gamma^2}.$$

However, by definition of the *-operations in \textbf{Definition \ref{starop}}, these 2-dagger structures are related to $-^{*_{1,2}}$ up to an action of the $R$-matrices (as well as the invertible 2-$\dagger$ intertwiner pair $\eta=(\eta_h,\eta_v)$). Due to the invariance property \textbf{Proposition \ref{invarfunctor}} of cone $\bullet$-module functors $\omega$, these are trivialized whence we achieve the natural measureable isomorphisms as desired.
\end{proof}

\subsubsection{Cointegrals for Hopf (co)categories}\label{cointegral}
Equipped with the notion of $\bullet$-module functors, we can then concretely interpret the Haar measure $\mu$ of a Lie 2-group $\mathbb{G}$. Recall that a \textit{left-/right-cointegral} of a Hopf algebra $H$ is a linear functional $\lambda_l,\lambda_r: H\rightarrow \bbC$ for which
\begin{equation*}
    (\lambda_l\otimes 1) \circ\Delta = \eta \circ \lambda_l,\qquad (1\otimes\lambda_r)\circ\Delta =\eta\circ\lambda_r,
\end{equation*}
respectively, where $\Delta: H\rightarrow H\otimes H$ is the coprodut and $\eta:\bbC\rightarrow H$ is the unit. $\lambda:H\rightarrow \bbC$ is simply called a \textbf{cointegral} if it is both a left- and a right-cointegral.

A classic example of a Hopf algebra, which is \textit{not} in general finite-dimensional (but finitely-generated as  $C^*$-algebra), equipped with a cointegral is the (undeformed) compact quantum group $C(G)$ of Woronowicz \cite{Woronowicz1988} for a compact semisimple Lie group $G$. It is given precisely by the Haar measure on $G$.

Let us now introduce the (co)categorical version.
\begin{definition}
    Let $H$ denote a Hopf cocategory internal to a symmetric monoidal bicategory $\cV$. A \textbf{left-/right-cointegral} for $H$ is an internal functor $\Lambda_l,\Lambda_r: H\rightarrow I$ into the discrete internal cocategory $I\leftleftarrows I$ on the monoidal unit $I\in \cV$, such that there exist natural transformations
    \begin{equation}
        (\Lambda_l\times 1) \circ\Delta \Rightarrow \eta \circ \Lambda_l,\qquad (1\times\Lambda_r)\circ\Delta \Rightarrow\eta\circ\Lambda_r,\label{integralnat}
    \end{equation}
    satisfying the obvious coherence conditions against the natural transformations $\Delta\circ m \Rightarrow (m\times m)\circ\Delta$ witnessing the bimonoidal axioms.

    We call $\Lambda_l,\Lambda_r$ \textbf{strong} iff these natural transformations are invertible. We say $\Lambda: H\rightarrow I$ is an \textbf{integral} iff it is both a left- and right-cointegral such that the following diagram 
\[\begin{tikzcd}
	{(1\times \Lambda\times 1) \circ(\Delta \times 1)\circ \Delta } && {(1\times \Lambda\times 1) \circ(1\times \Delta)\circ \Delta } \\
	{(\eta\circ\Lambda \times 1)\circ \Delta} & {\eta\times\eta} & {(1\times\eta\circ\Lambda )\circ \Delta}
	\arrow[Rightarrow, from=1-1, to=1-3]
	\arrow[Rightarrow, from=1-1, to=2-1]
	\arrow[Rightarrow, from=1-3, to=2-3]
	\arrow[Rightarrow, from=2-1, to=2-2]
	\arrow[Rightarrow, from=2-3, to=2-2]
\end{tikzcd}\]
against the coassociator $(\Delta \times 1)\circ \Delta \Rightarrow (1\times\Delta )\circ \Delta$ commutes.
\end{definition}

We can now prove the following.
\begin{proposition}\label{quantum2cointegral}
    Let $\mu$ denote an invariant Haar measure for the compact Lie 2-group $\mathbb{G}$, then the direct integral $\int^\oplus_\mathbb{G}d\mu (-): \C(\mathbb{G})\rightarrow \mathsf{Hilb}$ is a strong cointegral for the geometric 2-graph states $\C(\mathbb{G})$. 
\end{proposition}
\begin{proof}
    By \textbf{Definition \ref{2grouphaar}}, $\mu$ has a disintegration along the source map for which the pushforward $\sigma=\mu\circ s^{-1}$ is itself an invariant Haar measure on $G$. This allows us to define the measureable functor $\int_G^\oplus d\sigma(-): \C(G)\rightarrow\mathsf{Hilb}$ which fits into the strict commutative diagram
\[\begin{tikzcd}
	{\C(\mathsf{H}\rtimes G)} && {\mathsf{Hilb}} \\
	{\C(G)} && {\mathsf{Hilb}}
	\arrow["{\int_\mathbb{G}^\oplus d\mu(-)}", from=1-1, to=1-3]
	\arrow["=", shorten <=10pt, shorten >=10pt, Rightarrow, from=1-1, to=2-3]
	\arrow[shift left, from=2-1, to=1-1]
	\arrow[shift right, from=2-1, to=1-1]
	\arrow["{\int_G^\oplus d\sigma(-)}"', from=2-1, to=2-3]
	\arrow[shift left, from=2-3, to=1-3]
	\arrow[shift right, from=2-3, to=1-3]
\end{tikzcd}.\]
    This casts $\int_\mathbb{G}^\oplus d\mu(-):\C(\mathbb{G})\rightarrow \mathsf{Hilb}$ as a functor of \textit{internal} cocategories.

    To show invariance, we invoke Thm. 28 of \cite{Yetter2003MeasurableC}: 
    \begin{theorem}\label{isointegral}
        Direct integral functors $\displaystyle\int_X^\oplus d\mu,~\int_X^\oplus d\nu$ on a measureable category $\cH^X$ over some measureable space $X$ are measureably naturally isomorphic iff the two measures $\mu,\nu$ are equivalent (namely they are absolutely continuous with respect to each other $\mu\ll\nu,~\nu\ll\mu$).     
    \end{theorem}
    \noindent Therefore any given measure $\mu$ on $\mathbb{G}$ invariant under both left and right 2-group (ie. group and groupoid) multiplications, the induced direct integrals $\displaystyle\int_\mathbb{G}^\oplus d\mu(\mathrm{z}\cdot-) \cong \int_\mathbb{G}^\oplus d\mu \cong\int_\mathbb{G}^\oplus d\mu(-\cdot \mathrm{z})$ are measureably naturally isomorphic. These provide the desired natural isomorphisms required for a cointegral.
    
    The fact that invariance (in the sense of \textbf{Definition \ref{2grouphaar}}) implies both left- and right-invariance of $\mu$ under the 2-group multiplication operations was proven in \S 3.2.2 of \cite{Chen1:2025?}.
\end{proof}

This endows the cone $\bullet$-module *-functors $\omega\in\operatorname{Fun}_{\mathsf{Meas}}^{\bullet,*}(\C_q(\mathbb{G}^\Gamma),\mathsf{Hilb})$ the interpretation of a "quantum" version of a Hopf category cointegral, and the categorical version of the "quantum Haar measure" described in \cite{Alekseev:1994pa}.

\begin{rmk}
    We know from \textbf{Proposition \ref{haarunique}} that Haar measures are unique on compact Lie 2-groups $\mathbb{G}$. Hence, to show $\C(\mathbb{G})$ is unimodular, we just need to show that all cointegrals on $\C(\mathbb{G})$ come from invariant Haar measures via the direct integral. This is not known, however.
\end{rmk}

\subsection{Orientation and framing pairings}\label{higherdaggerpairings}
It is \textit{crucial} that the unitarity property \textbf{Definition \ref{unitary2hol}} relates the "internal" dagger *-structure on $\C_q(\mathbb{G}^{\Gamma^2})$ to the "external" dagger duality on $\mathsf{Meas}$ (see \textit{Remark \ref{opposites}}), since this then allows us to turn the pairing functor of \textit{Remark \ref{pairing}} into a geometric one.
\begin{definition}\label{orientationpairing}
    Let $\bar\Gamma^2 = (\Gamma^2)^{\dagger_1}$ denote the orientation reversed 2-graph. The \textbf{orientation pairing} on 2-graph states is the composite measureable functor
    \begin{equation}
        \C_q(\mathbb{G}^{\bar\Gamma^2})^{\text{c-op}_h}\times\C_q(\mathbb{G}^{\Gamma^2}) \xrightarrow{S_h\times 1} \C_q(\mathbb{G}^{\Gamma^2})^\text{m-op}\times\C_q(\mathbb{G}^{\Gamma^2})\xrightarrow{\eqref{pairfunctor}} \mathsf{Hilb},\label{geometrypair1}
    \end{equation}
    given in terms of the horizontal antipode $S_h:\C_q(\mathbb{G}^{\Gamma^2})\rightarrow \C_q(\mathbb{G}^{\Gamma^2})^{\text{m-op,c-op}_h}$ by \eqref{pairing},
    \begin{equation*}
        (\phi',\phi) \mapsto \omega_{S_h\phi'}(\phi) = \int_{\mathbb{G}^{\Gamma^2}}^\oplus d\mu_{\Gamma^2}(\mathrm{z})(S_h\phi')^{*_1}_{\mathrm{z}}\otimes \phi_{\mathrm{z}}.
    \end{equation*}
\end{definition}

We also have the following notion.
\begin{definition}\label{framingpairing}
    Let $\tilde\Gamma^2 = (\Gamma^2)^{\dagger_2}$ denote the frame-rotated 2-graph. The \textbf{framing pairing} on 2-graph states is the composite measureable functor
    \begin{equation}
        \C_q(\mathbb{G}^{\tilde\Gamma^2})^{\text{c-op}_v}\times\C_q(\mathbb{G}^{\Gamma^2}) \xrightarrow{S_v\times 1} \C_q(\mathbb{G}^{\Gamma^2})^\text{m-op}\times\C_q(\mathbb{G}^{\Gamma^2})\rightarrow \mathsf{Hilb},\label{geometrypair2}
    \end{equation}
    given in terms of the vertical antipode $S_v:\C_q(\mathbb{G}^{\Gamma^2})\rightarrow \C_q(\mathbb{G}^{\Gamma^2})^{\text{m-op,c-op}_v}$,
    \begin{equation*}
        (\phi',\phi) \mapsto \omega_{S_v\phi'}(\phi) = \int_{\mathbb{G}^{\Gamma^2}}^\oplus d\mu_{\Gamma^2}(\mathrm{z})(S_v\phi')^{*_2}_{\mathrm{z}}\otimes \phi_{\mathrm{z}},
    \end{equation*}
\end{definition}
They will play an important role later in \S \ref{2-skeins}.

\section{$\mathbb{G}$-decorated 2-ribbons: $\operatorname{PLRib}'^{\mathbb{G};q}_{(1+1)+\epsilon}(D^4)$}\label{Gdecorations}

\subsection{Handlebody decompositions and the standard 2-algebra}\label{3dhandles}
The above \S \ref{1morA} lays down the foundation for the {\it gluing} of 3d handlebodies onto the 2-graph states, which allows us to reconstruct 3-manifold ribbon invariants through the \textit{handlebody decomposition}.  For details of the following notions, see eg. \cite{Sakata2022-il,matveev2007algorithmic}.

\begin{definition}
    A \textbf{2d polyhedron} $P$ is the underlying space of a non-collapsible locally finite 2-dimensional complex, such that the link of each vertex contains no isolated vertices. We say $P$ is \textbf{simple} if each point has a neighbourhood homeomorphic to either a non-singular point, a triple point or a trisection vertex (see fig. 2, \cite{Sakata2022-il}).
\end{definition}

The idea is that by pasting 3-dimensional handles onto $P$ in a certain way, we can obtain a 3-manifold.
\begin{definition}
    Let $M$ be a closed, connected, oriented 3-manifold. A \textbf{handlebody decomposition of type-$(g_1,\dots,g_n;P)$} for $M$ is a 2d simple polyhedron $P$ such that $M\setminus P=\coprod_{i=1}^nH_i$, where each $H_i$ is the interior of a 3-dimensional handlebody with genus $g_i$. The polyhedron $P$ is called the \textbf{partition} of $M$.
\end{definition}

The central theorem in \cite{Castler:1965} is that \textit{every} 3-manifold can be obtained in this way.
\begin{theorem}\label{handlebody}
    Any closed connected 3-manifold admits a simple handlebody decomposition of type-(0).
\end{theorem}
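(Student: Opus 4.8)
The plan is to follow Casler's original strategy: reduce the closed case to a manifold with nonempty boundary, where a spine is easy to produce, and then upgrade that spine to a \emph{simple} (indeed special) polyhedron by general-position moves that do not disturb the single-ball complement. First I would remove the interior of an embedded PL $3$-ball from $M$ to obtain $N = M \setminus \operatorname{int}(B^3)$, a compact connected $3$-manifold with $\partial N \cong S^2$. The key observation is that \emph{any} spine of $N$ already realizes type-$(0)$ for $M$: if $P \subset N$ is a subpolyhedron in the interior onto which $N$ PL-collapses, then $N \setminus P$ is PL homeomorphic to an open collar $\partial N \times (0,1]$, and gluing the removed ball back along $\partial N \times \{1\} = \partial B^3$ gives
$$M \setminus P \;=\; (N \setminus P)\cup_{S^2} B^3 \;\cong\; \big(S^2\times(0,1]\big)\cup_{S^2} B^3,$$
which is a single open $3$-ball. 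Thus the entire problem reduces to producing a \emph{simple} spine of $N$.

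Next I would produce a two-dimensional spine. Fixing a PL triangulation of $N$ and using that $\partial N \neq \emptyset$, one can start an elementary collapse from a boundary triangle that is a free face of a single tetrahedron and iterate; since the top-dimensional cells can all be removed, $N$ PL-collapses onto a $2$-dimensional subpolyhedron $P_0$. By the previous paragraph $N \setminus P_0$ is a collar and $M \setminus P_0$ is a single open ball, so $P_0$ realizes type-$(0)$ — but $P_0$ need not be simple.

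The remaining, and decisive, step is to convert $P_0$ into a simple polyhedron without changing the homeomorphism type of the complement. Here I would put $P_0$ into general position inside the $3$-manifold $N$: a generic compact $2$-polyhedron embedded in a $3$-manifold has, after arbitrarily small PL isotopies, a neighborhood stratified exactly by non-singular points, triple lines, and isolated \emph{trisection} (true) vertices whose links are the $1$-skeleton of a tetrahedron — precisely the three local models in the definition of simplicity. The elementary moves that remove the higher-order singularities (sliding across cells, pushing and subdividing $2$-components, Matveev's lune/trisection moves) can each be taken supported in a small ball and realized by collapses and expansions, so they preserve both the spine property and the PL type of $N \setminus P$. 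After finitely many such moves one obtains a simple polyhedron $P$ with $N\setminus P$ still a collar, hence $M\setminus P$ a single open ball, giving the desired type-$(0)$ decomposition.

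The main obstacle is this final step: producing \emph{some} $2$-spine with connected ball complement is essentially formal once the collapse is carried out, but verifying that it can always be perturbed into a \emph{simple} (and ultimately special) polyhedron while controlling the complement requires the full general-position theory for $2$-polyhedra in $3$-manifolds together with the invariance of the complement under the elementary moves. This is exactly the combinatorial content of Casler's theorem, for which I would refer to \cite{Castler:1965} and to the detailed treatment of the moves in \cite{matveev2007algorithmic}.
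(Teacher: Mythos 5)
The paper itself offers no argument for this statement --- it is quoted directly from \cite{Castler:1965} --- so the question is whether your sketch would stand on its own, and there is one step that fails as written. Your reduction is fine: puncturing $M$, noting via regular-neighborhood theory that the complement of any spine $P$ of $N=M\setminus\operatorname{int}(B^3)$ is a collar $\partial N\times(0,1]$, hence $M\setminus P$ is a single open ball, and collapsing a triangulated $N$ with nonempty boundary onto a $2$-dimensional spine $P_0$ --- all of this is standard and correct. The gap is the final step. An \emph{ambient} PL isotopy, however small, cannot change the intrinsic local structure of an embedded polyhedron: if $P_0$ has an edge along which four or more sheets meet, a vertex whose link is not one of the three allowed models, or residual $1$-dimensional strata left over from the collapse, no isotopy of $N$ will make it simple. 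General position is a statement about perturbing \emph{maps}; your $P_0$ is already embedded, so there is nothing to perturb. Likewise, Matveev's lune and trisection moves are moves \emph{between} simple (special) polyhedra --- they preserve simplicity but do not create it.

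What actually works, and is what Casler and Matveev do, is to change the polyhedron itself rather than its embedding. Either build the spine to be special from the outset, by taking the $2$-skeleton of the cell decomposition dual to a triangulation of $M$ --- which is automatically special, with complement a disjoint union of open balls, one per vertex of the triangulation --- and then merge those balls into a single one by finitely many arch insertions, each of which keeps the polyhedron special and reduces the number of complementary components by one; or, if you insist on starting from an arbitrary collapsed spine, repair it by local replacement of neighborhoods of the bad strata (expansions followed by collapses), which genuinely alters $P_0$ as a complex while preserving the spine property. Your closing deferral to \cite{Castler:1965,matveev2007algorithmic} does cover the theorem, but the mechanism you name --- small isotopies into general position --- is not the one those references use, and it cannot be made to work.
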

\noindent Now the point is that a 2-graph $\Gamma^2$ serves {precisely} as the combinatorial triangulation of a \textit{simple} polyhedron $P$, and its 1-skeleton $\Gamma^1$ forms its \textit{singular graph} $B$. 

It is thus possible to determine a handlebody decomposition of a 3-manifold $\Sigma$ by embedding a 2-graph $\Gamma^2$ into it. 

\begin{rmk}\label{lengthhandle}
    Given a handlebody decomposition of type-$(g_1,\dots,g_n;P)$ for a 3-manifold $M$, let us call $n$ its \textbf{length}. Length $n=2$ decompositions are precisely Heegaard splittings, and length $n=3$ are trisections. Generally, handlebody decompositions of larger length and lesser genera "knows" more about the underlying 3-manifold; indeed, 3-manifolds $M$ admitting a length-3 handlebody decomposition with genera $\leq 1$ has been classified completely up to homeomorphism in \cite{Gomez_Larranaga1987-dz}, Thm. 1. Moreover, by Proposition 4.2 of \cite{Sakata2022-il}, any 3-manifold $M$ whose spheres are all separating admits a length-3 decomposition of the type $(0,0,g)$, where $g$ is the Heegaard genus of $M$.
\end{rmk}

The heavy-lifting of \S \ref{objA} --- specifically the specification of the interchangers $\beta$ and the $U(1)$-gerbes $\sigma$ in \textit{Remarks \ref{trisection}, \ref{triplepoint}} --- then defines holonomy-dense 2-graph states on combinatorial triangulations of such simple partitions $P$. We can then give the categorical analogue of Def. 12 in \cite{Alekseev:1994au}.

\begin{definition}\label{standard2alg}
    The \textbf{standard 2-algebra} $\mathcal{B}^P$ associated to a 2d simple polyhedron $P$ is the monoidal semidirect product $\C_q(\mathbb{G}^{(\Gamma_P)^2})\rtimes \mathbb{U}_q\G^{(\Gamma_B)^1}$, where $(\Gamma_P)^2 = \Gamma_P$ is a combinatorial quantization of $P$ and $(\Gamma_B)^1=\Gamma_B$ is the induced triangulation of its underlying singular graph $B$. 
\end{definition}
\noindent In the following, all 2-graph states are holonomy-dense.

\subsubsection{Independence of the 2-graph}\label{pachner}
In this section, we will examine the dependence of the standard 2-algebra under the choice of combinatorial triangulation $\Gamma_P$ of $P$. Treating $P$ as a (framed) PL 2-manifold, will do this through the Pachner moves \cite{Pachner1991Pachner}.

\begin{theorem}\label{invariance}
    The standard 2-algebra $\mathcal{B}^{\Gamma_P}$ associated to a 2d simple polyhedron $P$ in \textbf{Definition \ref{standard2alg}} is independent of the choice of the combinatorial triangulation.
\end{theorem}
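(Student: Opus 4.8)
The plan is to reduce triangulation-independence to invariance under the elementary bistellar (Pachner) moves \cite{Pachner1991Pachner}, applied to the two-dimensional strata of $P$ while keeping its singular graph $B$ (and hence the trisection vertices and triple points recorded by the interchanger $\beta$ and the gerbe $\sigma$) fixed. By Pachner's theorem, any two combinatorial triangulations $\Gamma_P,\Gamma_P'$ of the framed PL surface underlying a facet of $P$ are connected by a finite sequence of $(2,2)$-moves (edge flips) and $(1,3)$-moves (star subdivisions), together with their inverses. Since $B$ is left untouched by interior moves, the factor $\mathbb{U}_q\G^{(\Gamma_B)^1}$ is unchanged (refinements of $\Gamma_B$ itself are handled analogously via the vertical composition/coproduct on $\mathbb{U}_q\G$), so it suffices to produce, for each elementary move, a measureable monoidal equivalence of the 2-graph states compatible with the $\bullet$-module action, the covariance \eqref{leftreg} and the braid relations \eqref{braid}. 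Because all states are assumed holonomy-dense, it is enough to exhibit the witnessing measureable natural isomorphisms on the 2-holonomy states $\Phi$ of the form \eqref{2holstate}.

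For the $(2,2)$-move, the two triangulations of a quadrilateral region are exactly the two ways of decomposing the configuration $\Gamma_+$ of \textbf{Definition \ref{intch}}: the diagonal $e$ versus the flipped diagonal $e'$ correspond to the two orders of horizontal and vertical gluing of the four incident 2-simplices. The interchanger $\beta$ (\textit{Remark \ref{trisection}}) supplies the required sheaf isomorphism $(\phi_1\ostar\phi_2)\otimes(\phi_3\ostar\phi_4)\xrightarrow{\sim}(\phi_1\otimes\phi_3)\ostar(\phi_2\otimes\phi_4)$ relating the two $\ostar$-products, while the change of the internal gauge edge from $e$ to $e'$ is trivialized by \textbf{Corollary \ref{bdyinvar}}, since both lie in the internal rooted tree $E^1$. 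The framing reversal across the flipped edge is absorbed by the antipode/framing relation \eqref{dagger2-gau} (see \textit{Remark \ref{gluingedges}}), exactly as in the proof of \textbf{Theorem \ref{onecellcommute}}.

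For the $(1,3)$-move, subdividing a face by a central vertex introduces three new internal edges and replaces one face holonomy by three, all constrained by fake-flatness $t(b_f)=h_{\partial f}$. The added configuration is contractible, so by 2-flatness (\textbf{Definition \ref{2flat}}) the new decorations are pure 2-gauge; the argument of \textbf{Proposition \ref{orientationreverse}} then removes them, yielding a measureable isomorphism from the subdivided state to the original one. Equivalently, the new internal edges lie in $E^1$, so \textbf{Corollary \ref{bdyinvar}} trivializes their 2-gauge content, and the whiskering along the extra subdivision paths differs from the original only by the translation operators $T_D$ of \textbf{Proposition \ref{whiskeringhomotopy}}, which are invertible. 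Both moves, and their inverses, are therefore witnessed by invertible measureable natural transformations built from $\beta$, the gerbe data $\sigma$, the gauge-trivializations $\varphi_\zeta$, and the whiskering homotopies $T_D$.

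The main obstacle is coherence: one must check that these elementary isomorphisms are compatible with the full structure of the standard 2-algebra, so that composing them along a Pachner sequence yields a well-defined equivalence of monoidal semidirect products independent of the chosen path of moves. Concretely, the interchanger isomorphisms of neighbouring flips must satisfy the consistency condition of \textbf{Definition \ref{consistentgerbe}} --- that is, the two operations $\cdot,\cup$ on the $U(1)$-gerbes must agree in {\v C}ech cohomology --- so that the gerbe class attached to a trisection is independent of the order in which flips are performed. Verifying that $\beta$, $\sigma$ and $T_D$ intertwine the $\bullet$-action and the cobraiding $R,\tilde R$, so that covariance and the braid relations are preserved, is the remaining and most delicate point; in the weak case it is here that the Postnikov anomaly $\tau$ enters through $\sigma$ (\textit{Remark \ref{triplepoint}}), and one must check that the moves preserve its cohomology class.
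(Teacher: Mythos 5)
Your reduction to the two 2d Pachner moves and your treatment of the $(1,3)$-subdivision are essentially the paper's: there, the extra face decorations are removed by whiskering along boundary edges (using fake-flatness $t(b_D)=h_{\partial D}$), and the double whiskering is trivialized by the homotopy operators of \textbf{Proposition \ref{whiskeringhomotopy}}, giving $T^{-1}_{D'\ast D}\colon W_{e'}^{-1}\circ W_e^{-1}\Rightarrow 1_{\C_q(\mathbb{G}^\Delta)}$. Your alternative phrasing via \textbf{Corollary \ref{bdyinvar}} and the $T_D$'s amounts to the same mechanism, although your appeal to 2-flatness (\textbf{Definition \ref{2flat}}) is misplaced: that condition concerns contractible 3-cells, and no 3-cell is present in a purely 2-dimensional subdivision --- the relevant condition is fake-flatness.

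The genuine gap is in the flip move. You identify the two triangulations of the quadrilateral with ``the two orders of horizontal and vertical gluing'' in the configuration $\Gamma_+$ of \textbf{Definition \ref{intch}}, and propose that the interchanger $\beta$ witnesses the flip. This conflates two different things. The interchanger relates two factorizations, $(\phi_1\ostar\phi_2)\otimes(\phi_3\ostar\phi_4)$ and $(\phi_1\otimes\phi_3)\ostar(\phi_2\otimes\phi_4)$, of a state on one and the same graph $\Gamma_+$ (the trisection-vertex configuration of fig. \ref{fig:interchanger}); no edge of $\Gamma_+$ changes. A flip, by contrast, replaces the internal diagonal $e$ by the other diagonal $e'$, so the underlying decorated space $\mathbb{G}^{\Gamma}$ itself changes, and one must compare states on two genuinely different graphs $\Gamma\neq\Gamma'$. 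What witnesses this in the paper is the \emph{strict associativity of $\ostar$}: the two triangulations correspond to the two bracketings $(\phi_2\ostar\phi_3)\ostar\phi_1\in\C_q(\mathbb{G}^{\Gamma})$ and $\phi_2\ostar(\phi_3\ostar\phi_1)\in\C_q(\mathbb{G}^{\Gamma'})$ --- exactly as flips of polygon triangulations correspond to re-bracketings --- and strict associativity in turn rests on the strict Jacobi identity of the combinatorial 2-Fock-Rosly bracket of \S\ref{2fockrosly}. This is the key input your argument is missing; the interchanger cannot supply it, and attempting to route through the common four-triangle refinement of the two diagonal triangulations would presuppose subdivision invariance in a circular way. (Relatedly, in the weak case the Postnikov anomaly $\tau$ enters Pachner invariance through the two orders of whiskering in the subdivision move, \textit{Remark \ref{pachnertau}}, rather than through the gerbe $\sigma$ as you suggest.)
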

\begin{proof}
Let us begin by setting up the geometry of the Pachner moves. In 2-dimensions, there are two of them: a "flip" and a "bistellar subdivision"; see also fig. 3 in \cite{Beck2024-fs}. The way that we are going to perform them is given in fig. \ref{fig:pachner}.

\begin{figure}[h]
    \centering
    \includegraphics[width=1\linewidth]{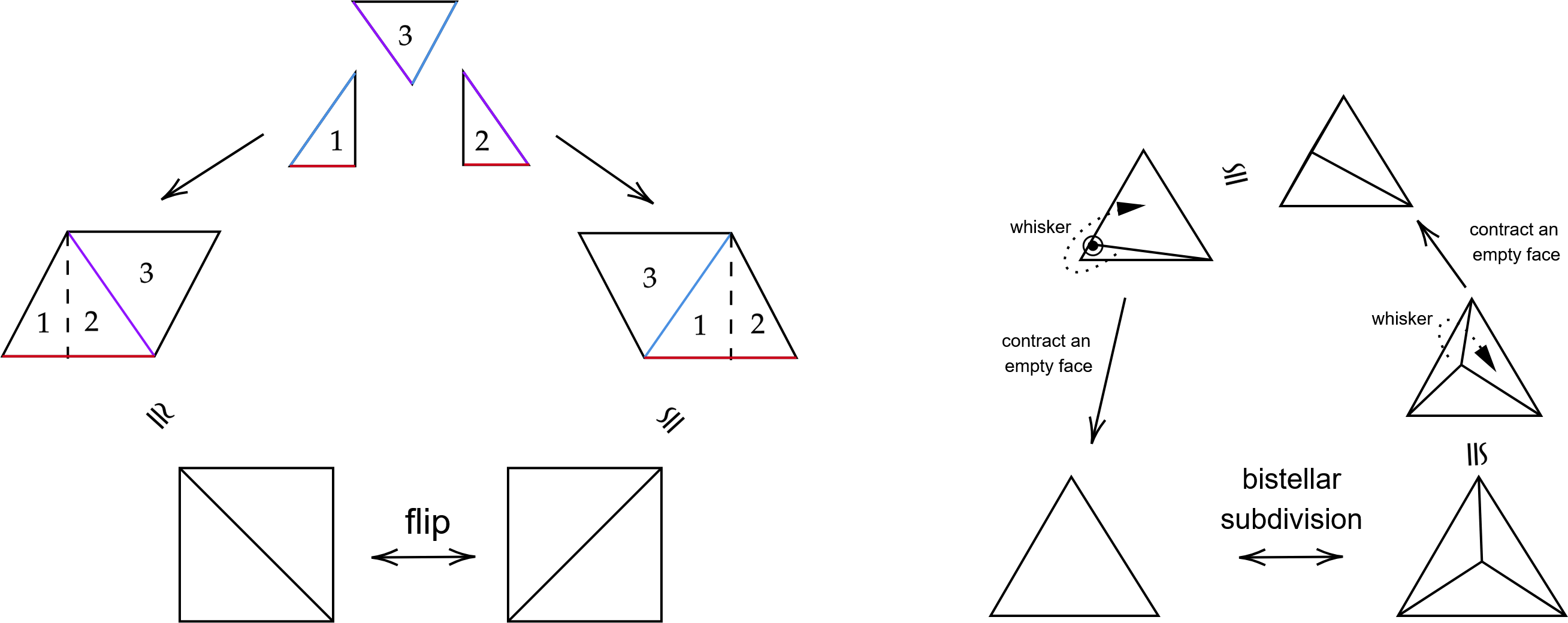}
    \caption{The 2-simplex configurations which witness the 2d Pachner moves.}
    \label{fig:pachner}
\end{figure}

\begin{lemma}
    $\C_q(\mathbb{G}^\Gamma)$ is invariant under flip moves.
\end{lemma}
\begin{proof}
    Let $\Gamma$ and $\Gamma'$ denote two combinatorial triangulations of the unit square which differ by a single flip move. Take three gluing-amenable 2-simplex states $(\phi_1,\phi_2,\phi_3)\in\C_q(\mathbb{G}^{\coprod_{i\leq 3}\Delta_i})$ in accordance with the configuration of 2-simplices $\Delta_1,\Delta_2,\Delta_3$ as arranged on the left-hand side of fig. \ref{fig:pachner}. 
    
    By following the geometric procedure as indicated on the left of the figure, we construct a 2-graph state on $\Gamma$ by first gluing $\Delta_2,\Delta_3$, then with $\Delta_1$:
    \begin{equation*}
        (\phi_2\ostar \phi_3)\ostar \phi_1\in\C_q(\mathbb{G}^\Gamma).
    \end{equation*}
    Similarly, the procedure along the right side produces a 2-graph state on $\Gamma'$ by first gluing $\Delta_1,\Delta_3$, then with $\Delta_2$,
    \begin{equation*}
        \phi_2\ostar (\phi_3\ostar\phi_1)\in\C_q(\mathbb{G}^{\Gamma'}).
    \end{equation*}
    This is precisely the associativity of $\ostar$.

    More generally by holonomy-density, the flip move is equivalent to an invertible natural transformation,
\[\begin{tikzcd}
	{\C_q(\mathbb{G}^{\Delta_2})\times_{23} \C_q(\mathbb{G}^{\Delta_3})\times_{31} \C_q(\mathbb{G}^{\Delta_1})} && {\C_q(\mathbb{G}^{\Delta_2})\times_e\C_q(\mathbb{G}^{\Delta_3\cup_{31}\Delta_1})} \\
	{\C_q(\mathbb{G}^{\Delta_2\cup_{23}\Delta_3})\times_e \C_q(\mathbb{G}^{\Delta_1})} && {\C_q(\mathbb{G}^{\Gamma})=\C_q(\mathbb{G}^{\Gamma'})}
	\arrow["{1\times\ostar}", from=1-1, to=1-3]
	\arrow["{\ostar\times1}"', from=1-1, to=2-1]
	\arrow["\cong"', shorten <=17pt, shorten >=17pt, Rightarrow, from=1-1, to=2-3]
	\arrow["\ostar", from=1-3, to=2-3]
	\arrow["\ostar"', from=2-1, to=2-3]
\end{tikzcd},\]
which witnesses the associativity of $\ostar$, where $e$ denotes the edge at the bottom of the left-hand side of fig. \ref{fig:pachner}, coloured in red. 

By construction (cf. \textbf{Theorem \ref{quantumhopfcocat}} and \eqref{quantumautomorphism}), $\ostar$ is not only associative by also \textit{strictly} so (namely the above associator natural transformation is not only invertible but also only have components at the identity). The statement follows.
\end{proof}
\noindent     The fact that the flip move is related to a certain notion of associativity was noticed also in the construction of 2d TQFTs from $A_\infty$-algebras in \cite{Beck2024-fs}. This is a manifestation of a certain theorem of Gauss.

We now turn to the bistellar subdivision.
\begin{lemma}
    If $\Delta\simeq \Delta_\ast$ are 2-simplices related by a bistellar subdivision, then $\C_q(\mathbb{G}^\Delta)\simeq\C_q(\mathbb{G}^{\Delta_\ast})$.
\end{lemma}
\begin{proof}
    As illustrated on the right-hand side of fig. \ref{fig:pachner}, we can move from the bistellar subdivision $\Delta_\ast$ to $\Delta$ by contracting "empty" faces. However, since each 2-simplex are decorated with 2-holonomies $(h_e,b_f)\in\mathbb{G}$, we need to leverage the composition of 2-holonomies in $\mathbb{G}^\Delta$ to remove decorations on the face that we wish to contract.
    
    This can be done through the fake-flatness condition: if a face $D$ bounds $e$, then its 2-holonomy satisfies $tb_D=h_{\partial D}$. We can thus remove a 2-holonomy by a \textit{whiskering} \cite{Baez:2004} along the inverse of the decoration $h_e$ on the boundary $e=\partial D$, making the underlying 2-simplex undecorated.
   
    Recall the direct image functor on the sheaves $\C_q(\mathbb{G}^\Delta)$ induced by this whiskering operation is denoted by $W_e$. If the edge $e$ is a contractible loop, then we can use \textbf{Proposition \ref{whiskeringhomotopy}} to construct an invertible measureable natural transformation to trivialize it. 

    Now as can be seen in fig. \ref{fig:pachner}, we have to do this whiskering twice. Therefore we have a measureable natural isomorphism 
    \begin{equation}
        T_{D'\ast D}^{-1}: W_{e'}^{-1}\circ W_e^{-1}\Rightarrow 1_{\C_q(\mathbb{G}^\Delta)},\qquad \partial (D'\ast D)= e'\ast e
    \end{equation}
    witnessing the equivalence $\C_q(\mathbb{G}^{\Delta_\ast})\simeq\C_q(\mathbb{G}^\Delta)$ under bistellar subdivision.
\end{proof}

Invariance of the 2-gauge transformations under the 1d Pachner move can be routinely checked. 
\end{proof}

\begin{rmk}\label{pachnertau}
    Here we make the crucial observation that both of the above lemmas hold \textit{up to equivalence} when the associativity in $\mathbb{G}$ is weakened. The weak associator $\tau$ on $\mathbb{G}$ contributes directly not only to the invertible associativity of $\ostar$, but also to the invertible modification $T_{D'_1\ast D_1}^{-1}\Rrightarrow T_{D'_2\ast D_2}^{-1}$ which witnesses the bistellar move. These witnesses of course must be mutually coherent; in terms of higher-gauge theory, these equations take the guise of the \textit{descent equations} for $\tau$ \cite{Kapustin:2013uxa,Baez:2004in}.
\end{rmk}

Thanks to this result, we will denote by $\C_q(\mathbb{G}^P)$ the 2-graph states associated to a 2d simple polyhedron $P$ evaluated on any choice of a combinatorial triangulation $\Gamma_P$ of $P.$



\subsubsection{Example: cone functors on $S^3$}\label{cone3d}
    Let us consider the example of the (unit) 3-sphere $M=S^3$, and consider a 2d polyhedron $P$ partitioning it. We pick the 2-graph underlying $P$ is exactly the one $\Gamma^2=\Gamma_{S^3}$ described in \textit{Example \ref{S3decomposition}}. Note that in $S^3$, this polyhedron $P$ is convex and has no boundary as a 2-graph.
    
    This 2-graph admits a splitting into eight fundamental 2-simplices $\Delta_1,\dots,\Delta_4,\Delta_1',\dots,\Delta_4$, for which $\Gamma_{+,i}=\Delta_i\cup\Delta_{i+1}\cup\Delta_i'\cup\Delta_{i+1}$ defines the geometry described in \S \ref{interchanger} for each $i=1,\dots,4$ (here the indices are modulo 4, $\Delta_{4+1} = \Delta_1$). These are the boundaries of the standard octants in $\R^3$.
    
    Let us first describe how the 2-monodromy states $\Phi\in\C_q(\mathbb{G}^P)$ on $P$ are constructed. To do this, fix a set of eight 2-simpelx states $\phi_i \in\C_q(\mathbb{G}^{\Delta_i}),~\phi'_i\in\C_q(\mathbb{G}^{\Delta_i'}$, $i=1,\dots,4$. There are certain cnoditions that these 2-simplex states must satisfy.
    \begin{enumerate}
        \item First, by \textbf{Definition \ref{intch}}, each 4-tuple $(\phi_i,\phi_{i+1},\phi_i',\phi_{i+1}')\in\C_q(\mathbb{G}^{\Delta_i\coprod \Delta_{i+1}\coprod \Delta_i'\coprod\Delta_{i+1}'})$ must be gluing-amenable for each $i=1,\dots,4$, which provides us with interchanger natural isomorphisms $\beta_i$. We define
        \begin{equation*}
            \Phi_i = \phi_i\ostar\phi_{i+1}\ostar\phi_i'\ostar\phi_{i+1}' \in\C_q(\mathbb{G}^{\Gamma_{+,i}})
        \end{equation*}
        as their product.
        \item Next, by \textbf{Definition \ref{nonreg-glue}}, each triple $(\Phi_i,\Phi_{i+1},\Phi_{i+2})_{(\sigma\cup\sigma')_i}\in \C_q(\mathbb{G}^{\coprod_{j=i}^{j+2}\Gamma_{+,j}})$ must be gluing-amenable for each $i=1,\dots,4$ (recall the indices are mod-4, $\Gamma_{+,5}=\Gamma_{+,1},\Gamma_{+,6}=\Gamma_{+,2}$, etc.). This involves the data of {\v C}ech 2-cocycles $(\sigma\cup\sigma')_i$ attached to each edge $\coprod_{j=i}^{j+2}\Gamma_{+,j}$ in $P$.
    \end{enumerate}

Now notice that a PL 3-disc around the origin of $P$ is L homeomorphic to the configuration seen in the lower-right of fig. \ref{fig:triplecompose}. Therefore by \textit{Example \ref{S3decomposition}}, we have a well-defined {\v C}ech cohomology class/$U(1)$-gerbe $\sigma\cup\sigma'\cup\sigma'' \in {\check{H}}^2(\mathbb{G}^u,U(1))$ attached to $P$ where $u$ is the degeneracy intersection surrounding the central vertex in $P$. 

Thus elements of $\C_q(\mathbb{G}^P)$ are characterized by the data $(\Phi;\sigma\cup_2\sigma'\cup_2\sigma'')$, where
\begin{equation}
    \Phi = \Phi_1\ostar \Phi_2\ostar \Phi_{3}\ostar\Phi_4\label{S^3monodromy}
\end{equation}
is the associated 2-monodromy state.

\medskip

Now consider the one-point suspension $\Lambda P$ of $P$, which by construction bounds a 3-disc. This 3-disc is precisely the genus-0 handlebody $H_0$ arising from a type-0 handlebody decomposition of the 3-sphere $S^3$, for which $P$ is the partition. 
\begin{definition}\label{S3catstate}
    A \textbf{categorical state on $S^3$} is characterized by 
    \begin{enumerate}
        \item a cone functor $\omega \in\operatorname{Fun}(\C_q(\mathbb{G}^P),\mathsf{Hilb})$ on 2-monodromy states of the form \eqref{S^3monodromy}, and
        \item a $U(1)$-gerbe of the form $\sigma\cup_2\sigma'\cup_2\sigma''\in \check{H}^2(\mathbb{G}^P,U(1))$.
    \end{enumerate}
    If $\omega$ lies in the image of the Yodena embedded \eqref{pairing}, then we call it a \textbf{closed Wilson surface state of $S^3$}.
\end{definition}
\noindent See \S \ref{nonabeliansurface} and \textbf{Proposition \ref{closedwilsonsurfaces}} later.

\begin{rmk}
    Note in this definition, categorical states on $S^3$, or \textit{any} 3-manifold without boundary for that matter, are automatically 2-gauge invariant. This is because the underlying 2-graph states are 2-monodomy states, which we know from \S \ref{invarbdy} is $\mathbb{U}_q\G^B$-invariant. 
\end{rmk}

Due to \textbf{Theorem \ref{handlebody}}, the above procedure can be applied to \textit{any} closed connected oriented 3-manifold $M$. If $M$ has boundary, then the underlying 2-graph states are 2-holonomy states, and hence not necessarily $\mathbb{U}_q\G^B$-invariant. In any case, this gives a procedure in which categorical states as in \textbf{Definition \ref{linearfunctors}} can be assigned to a type-0 partition $P$ of a 3-manifold.


\medskip



Throughout the following, we shall arrange the 2d polyhedron $P$ with boundary $\partial P = B_0\coprod \bar B_1$, such that $B_0$ consist precisely of the source edges living on the boundary $\partial\Gamma_P$ of the underlying 2-graph $\Gamma_P$ of $P$. 

\subsubsection{Non-Abelian Wilson surface states of 2-Chern-Simons theory}\label{nonabeliansurface}
By the full-faithful Yoneda embedding $\C_q(\mathbb{G}^P)\hookrightarrow \operatorname{Fun}_\mathsf{Meas}^{\ast}(\C_q(\mathbb{G}^P),\mathsf{Hilb})$ in \textbf{Proposition \ref{yoneda}}, there is a measureable subcategory equivalent to $\C_q(\mathbb{G}^P)$. 

Upon imposing $\bullet$-module structure, there is then a subcategory, denoted by $$\widehat{\C}_q(\mathbb{G}^P)\subset\operatorname{Fun}_\mathsf{Meas}^{\bullet,\ast}(\C_q(\mathbb{G}^P),\mathsf{Hilb}),$$ which is equivalent to the equivariantization/the \textit{lattice observables} $\C_q(\mathbb{G}^P)^{\mathbb{U}_q\G^B}$ (see \S \ref{observ}). We call $\widehat{\C}_q(\mathbb{G}^P)$ the \textbf{non-Abelian Wilson surface states} of the 2-Chern-Simons theory. 

As advertised in the beginning of \S \ref{cone3d}, we now investigate its \textit{internal} properties.
\begin{proposition}\label{wilsonsurfaces}
    $\widehat{\C}_q(\mathbb{G}^P)$ is a category internal to the bicategory $\mathsf{Meas}$.
\end{proposition}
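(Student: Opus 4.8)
The plan is to transport the internal cocategory structure that $\C_q(\mathbb{G}^P)$ already carries --- as a Hopf cocategory internal to $\cV^{\bf C}$, by the structure theorem reviewed in \S\ref{firstpaper} --- across the fully faithful embedding of \textbf{Proposition \ref{yoneda}}. The key conceptual input is \textbf{Definition \ref{internalcats}}: a cocategory internal to a bicategory $\cC$ is, by definition, a strict category object in the horizontal opposite $\cC^\text{op}$. The pairing embedding $\C_q(\mathbb{G}^P)^{\text{m-op}}\hookrightarrow\operatorname{Fun}_\mathsf{Meas}^{\bullet,\ast}(\C_q(\mathbb{G}^P),\mathsf{Hilb})$ reverses the monoidal product and composition (Remark \ref{opposites}); combined with the dualization ``cocategory in $\cV$ $=$ category in $\cV^\text{op}$,'' the two reversals compose to endow the essential image $\widehat{\C}_q(\mathbb{G}^P)$ with a genuine internal \emph{category} structure in $\mathsf{Meas}$.

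First I would make the transported data explicit. Writing $C_0=\C(G^{(\Gamma_B)^1})$ and $C_1=\C((\mathsf{H}\rtimes G)^{(\Gamma_P)^2})$ for the internal objects and morphisms of the cocategory, equipped with cofibrant cosource/cotarget $s^*,t^*:C_0\to C_1$, counit $\varepsilon:C_1\to C_0$, and coassociative cocomposition $\Delta_\text{v}:C_1\to C_1\,{}_v\times_u C_1$ along the pushout (Remark \ref{doublecocat}), I would set $\widehat{C}_i$ to be the respective essential images. The embedding then produces source/target $s,t:\widehat{C}_1\to\widehat{C}_0$ out of $s^*,t^*$, a unit $1:\widehat{C}_0\to\widehat{C}_1$ out of $\varepsilon$, and --- dualizing the pushout of $\Delta_\text{v}$ to the product/pullback in $\mathsf{Meas}$ afforded by \textbf{Proposition \ref{basicfacts}} (3) --- a composition $\circ:\widehat{C}_1\,_t\times_s\widehat{C}_1\to\widehat{C}_1$. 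The strict coassociativity and counit axioms dualize to the strict associativity and unity required by \textbf{Definition \ref{internalcats}}, while the interchange, pentagon and triangle coherences transport verbatim from the cocategory coherences of $\C_q(\mathbb{G}^P)$ stated in \S\ref{firstpaper}.

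The main obstacle is that the embedding is fully faithful but \emph{not} essentially surjective (Propositions \ref{notequiavlence} and \ref{cylinderembeddings}): one cannot simply equip the ambient $\operatorname{Fun}_\mathsf{Meas}^{\bullet,\ast}(\C_q(\mathbb{G}^P),\mathsf{Hilb})$ with an internal structure and restrict. I would therefore verify that the transported functors $s,t,1,\circ$ preserve the property of lying in $\widehat{\C}_q(\mathbb{G}^P)$ --- equivalently, that the duality pairing \eqref{pairing} intertwines the geometric cocomposition $\Delta_\text{v}$ with an honest composition of cone functors whose values land back in the essential image. The delicate analytic point is the fibrancy of the new $s,t$ demanded by \textbf{Definition \ref{internalcats}}: cofibrancy of $s^*,t^*$ must be shown to dualize to fibrancy in $\mathsf{Meas}$, which hinges on the surjective-submersion hypothesis on the structure maps of $\mathbb{G}$ used throughout \S\ref{firstpaper}, together with the disintegration-pair criterion of \textbf{Proposition \ref{uniqueness}} ensuring the pullbacks $\widehat{C}_1\,_t\times_s\widehat{C}_1$ exist as measureable categories. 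Confirming that the composition so obtained is well-defined on $\widehat{\C}_q(\mathbb{G}^P)$, and not merely on the larger functor category, is where the real work lies.
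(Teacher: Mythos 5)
Your proposal follows essentially the same route as the paper: there, $\hat s,\hat t$ are defined as precomposition (restriction of scalars) with the cofibrant cosource/cotarget $s^*,t^*$, the cocomposition $\Delta_v$ into the pushout induces the composition $\circ$ on the pullback $\widehat{\C}_q(\mathbb{G}^P)\,_{\hat t}\times_{\hat s}\widehat{\C}_q(\mathbb{G}^P)$, and strict associativity is inherited from strict coassociativity --- which is exactly your transport-through-Yoneda argument made concrete. The only difference is one of emphasis: you explicitly flag closure of the essential image (given that the embedding of \textbf{Proposition \ref{yoneda}} is not essentially surjective) and the fibrancy of $\hat s,\hat t$ as the points needing verification, whereas the paper's terse proof treats the whole structure as ``canonically induced'' and leaves those checks implicit.
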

\begin{proof}

     We treat Wilson surface states $\widehat{\C}_q(\mathbb{G}^P)$ as presheafs $\phi\mapsto\omega_\phi$ of the 2-graph states $\C_q(\mathbb{G}^P)$, valued in the category $\mathsf{Hilb}$ which possesses small co/limits. There are then canonically induced restrictions of scalars functors 
     \begin{equation*}
    \hat s: \omega_\phi \mapsto \omega_\phi\circ s^*,\qquad \hat t:\omega_\phi\mapsto \omega_\phi\circ t^*,\qquad \forall~\phi\in\C_q(\mathbb{G}^{\Gamma}),
\end{equation*}
induced by the cofibrant cosource/cotaget maps $s^*,t^*$ on $\C_q(\mathbb{G}^P)$. 

Since the Yoneda embedding preserves limits, the cocomposition $\Delta_v:\C_q(\mathbb{G}^P)\rightarrow \C_q(\mathbb{G}^P)\,_{t^*}\times_{s^*}\C_q(\mathbb{G}^P)$ to the pushout canonically induces a composition operation $\circ :\widehat{\C}_q(\mathbb{G}^P)\,_{\hat t}\times_{\hat s}\widehat{\C}_q(\mathbb{G}^P)\rightarrow \C_q(\mathbb{G}^P)$ on the pullback, making $\widehat{\C}_q(G^{B_0})\xleftarrow{\hat s}\widehat{\C}_q\big((\mathsf{H}\rtimes G)^P\big)\xrightarrow{\hat t}\widehat{\C}_q(G^{B_1})$ into a category internal to $\mathsf{Meas}$.


It is then not hard to see that the associativity of $\circ$ come from the coassociativity of $\Delta_v$.
\end{proof}

\begin{rmk}\label{CSvsDW}
    We emphasize here that Wilson surface states are \textit{not} defined as the 2-holonomies $\mathbb{G}^P$ themselves. They differ by \textit{two} dualities
    $$ \mathbb{G}^P \rightsquigarrow \C_q(\mathbb{G}^P)\rightsquigarrow \operatorname{Fun}_\mathsf{Meas}^{\ast}(\C_q(\mathbb{G}^P),\mathsf{Hilb}),$$
    which can possibly be an equivalence (of monoidal categories) if (i) no non-trivial quantum deformations occur and (ii) all of the Yoneda-type embeddings (\textbf{Propositions \ref{yoneda}, \ref{cylinderembeddings}}) are equivalences. As one expects, the only known case where this happens is when $\mathbb{G}$ is finite in the Morita context of $\mathsf{2Vect}$, not $\mathsf{Meas}$. In which case, we obtain the \textbf{4d 2-group Dijkgraaf-Witten theory} \cite{Yetter:1993dh,Martins:2006hx,Bullivant:2016clk,Delcamp:2017pcw,Bochniak:2020vil}, instead of 2-Chern-Simons theory. Such Djkgraaf-Witten TQFTs appear in the study of topological phases of matter, which explains why many condensed matter literature \cite{Wen:2019,Kapustin:2013uxa,Chen2z:2023,Else:2017yqj,Wang:2016rzy,Wan:2014woa,PUTROV2017254,Kong:2020wmn,walker2012} can get away with reading off the fusion and braiding properties of the underlying anomaly-free non-degenerate gapped state directly from the action. 
\end{rmk}

We will actually need $\widehat{\C}_q(\mathbb{G}^P)$ to be monoidal later, in order to keep track of more geometric data. Such a monoidal structure can be induced from the internal coproduct functor $\Delta_h$ on $\C_q(\mathbb{G}^P)$, but we shall introduce a modified version explicitly in \S \ref{monoidality}. 

\begin{rmk}\label{gaugedoubles}
    There is a very widely-accepted statement in the categorical symmetries literature \cite{Bartsch:2022mpm,KongTianZhou:2020,Baez:2012,Ganter:2014,Delcamp:2023kew,Sean:private}, which is:
    \begin{center}
        \emph{Finite 2-group $\mathbb{G}$ Dijkgraaf-Witten theories are described by the Drinfel'd centre $Z_1\big(\operatorname{2Rep}(\mathbb{G})\big)$.}
    \end{center}
    Given the above remark, this statement is not immediate and requires verification. This was done for the 3+1d $\bbZ_p$-toric code  (and its spin counterpart) in \cite{Chen2z:2023}, where $p$ is prime. The 2-category capturing the Wilson surface states were explicitly matched to well-known 2-categories studied in \cite{Johnson-Freyd:2020,Johnson_Freyd_2023,KongTianZhou:2020,Wen:2019} for $p=2$.\footnote{The 4d gravitational-anomalous boundary of the 5d $\bbZ_2$-protected state $w_2w_3$ \cite{Thorngren2015,Johnson_Freyd_2023,Chen:2021xks}, on the other hand, is known to \textit{not} be a centre.}
\end{rmk}

\subsection{PL 2-ribbons $\operatorname{PLRib}'_{(1+1)+\epsilon}(D^4)$ in a 4-disc}\label{PL2ribbons}
The geometry we will consider is the following. For the time being, imagine a PL 4-disc $D^4=[0,1]^4 \subset \R^4$ whose top/bottom boundaries $D^3\times\{0,1\}$ are equipped with embedded directed graphs $B_{0,1}$, respectively. Let $P$ denote a 2d polyhedron, embedded in $D^4 = [0,1]^4$, such that $P$ intersects the top layer at $B_0$ and the bottom layer at $B_1$, both transversally. We call such a configuration $\,_{B_0}P_{B_1}$.

\begin{definition}\label{2-skeins}
    The monoidal category $\operatorname{PLRib}_{2+\epsilon}'(D^4)$ consist of
    \begin{itemize}
        \item the objects are the slab layers $D^3\times \{0,1\}$ with a framed oriented immersed PL 1-submanifolds $B_0,B_1$ (read: directed graphs), as well as PL homeomorphisms on them, and 
        \item the morphisms are the 4-slabs $D^4$ with a framed oriented immersed PL 2-submanifold $P\subset D^4$ (read: a 2d simple polyhedron) such that $P\cap (D^3\times\{0\})=B_0$ and $P\cap (D^3\times \{1\})=B_1$ transversally, as well as level-preserving PL homeomorphisms\footnote{What this means is that these are diffeomorphisms of the fibre bundles $D^4\rightarrow D^3$ and $D^4\rightarrow [0,1]$.} relative boundary.
    \end{itemize}
    The (horizontal) composition law is given by stacking these slabs long the $[0,1]$ direction: $\,_{B_0}P_{B_1} \circ \,_{B_1}P'_{B_{2}} = \,_{B_0}(P\cup_{B_1} P')_{B_2}$. The monoidal structure is given by disjoint union.
\end{definition}

Now consider PL 2-ribbon configuration of the form $B_0\coprod B'_0 \xRightarrow{P\coprod P'} B_1\coprod B'_1$. By applying a $\pi$-rotation of the \textit{entire} half-slab $D^3\times[1/2,1]$, while holding the top half $D^3\times[0,1/2]$ fixed, we obtain another PL 2-ribbon
$$B_0\coprod B'_0 \xRightarrow{(P\coprod P')^\pi} B'_1\coprod B_1.$$ 
Applying this operation twice, we obtain a PL 2-ribbon $(P\coprod P')^{2\pi}$ (see fig. \ref{fig:2ribbontwist}) which is {not} naturally isomorphic (ie. ambient isotopic relative boundary) to the original 2-ribbon $P\coprod P'$. This is because to undo such a $2\pi$-twist on the half-slab while keeping the boundary graphs fixed, we \textit{must} cross the polyhedra past each other, which is in general not an level-preserving diffeomorphism in $D^3\times[0,1]$.

\begin{figure}[h]
    \centering
    \includegraphics[width=1\linewidth]{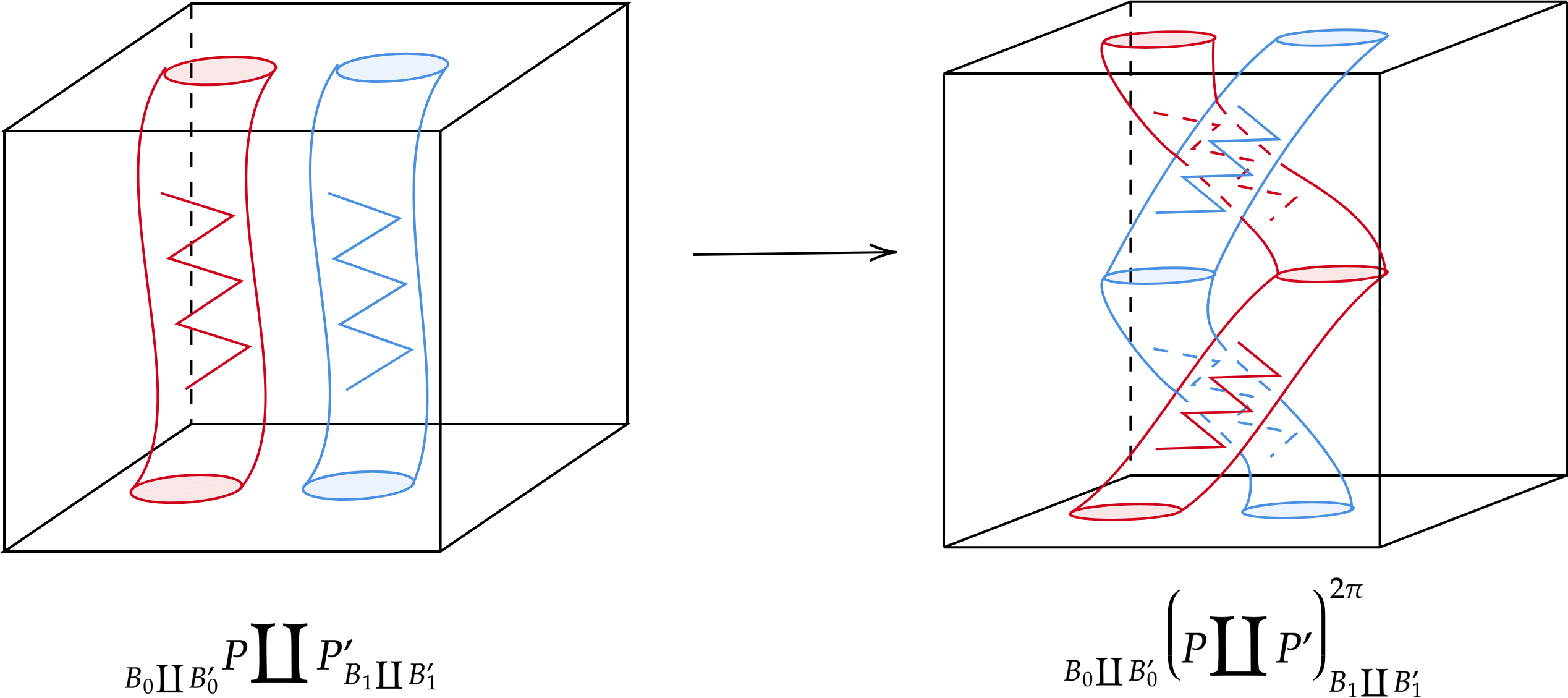}
    \caption{The $2\pi$-twisted PL 2-ribbon.}
    \label{fig:2ribbontwist}
\end{figure}

By a construction analogous to \S 2.1 of \cite{douglas2016internalbicategories}, each PL 2-ribbon in $\operatorname{PLRib}'_{(1+1)+\epsilon}(D^4)$ is a category internal to $\operatorname{PLTop}.$ Indeed, the so-called "$(n+\epsilon)$-dimensional bordisms" constructed there are categories \textit{internal} to $\mathsf{Mfld}$; see \textit{Remark \ref{diffbords}}. 

\begin{proposition}\label{internalribbon}
    $\operatorname{PLRib}'_{(1+1)+\epsilon}(D^4)$ is a double category.
\end{proposition}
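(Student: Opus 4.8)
The plan is to establish that $\operatorname{PLRib}'_{(1+1)+\epsilon}(D^4)$ is a double category by exhibiting it as a category object internal to $\operatorname{PLTop}$ and then invoking \textit{Remark \ref{diffbords}} and the construction of \cite{douglas2016internalbicategories}. Recall that a double category is precisely a category object in $\mathsf{Cat}$ (as noted after \textbf{Definition \ref{internalcats}}); since $\operatorname{PLTop}$ admits a forgetful functor into $\mathsf{Set}$ and the internal categorical structure pushes forward compatibly, it suffices to verify the category-object axioms of \textbf{Definition \ref{internalcats}} at the level of the PL geometry. First I would identify the two layers of structure explicitly: the \emph{objects} $D_0$ are the slab layers $D^3\times\{\ast\}$ carrying framed oriented immersed PL $1$-submanifolds $B$, and the \emph{morphisms} $D_1$ are the $4$-slabs $\,_{B_0}P_{B_1}$ carrying the transverse framed oriented PL $2$-submanifold $P$. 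The source/target functors $s,t:D_1\rightrightarrows D_0$ send $\,_{B_0}P_{B_1}$ to its boundary graphs $B_0=P\cap(D^3\times\{0\})$ and $B_1=P\cap(D^3\times\{1\})$; transversality of these intersections is exactly what makes $s,t$ \emph{fibrant} submersive maps in $\operatorname{PLTop}$, so that the pullback $D_1\,_t\times_s D_1$ exists as a PL object.

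Next I would verify the structural functors. The \emph{horizontal composition} is the stacking law $\,_{B_0}P_{B_1}\circ\,_{B_1}P'_{B_2}=\,_{B_0}(P\cup_{B_1}P')_{B_2}$ from \textbf{Definition \ref{2-skeins}}, which is well-defined precisely over the pullback $D_1\,_t\times_s D_1$ where the target graph of the first slab agrees with the source graph of the second; the \emph{unit} $1:D_0\to D_1$ sends a graph $B$ to the trivial product slab $B\times[0,1]$. The key axioms to check are then (i) strict associativity of the stacking law, which holds because concatenation of $[0,1]$-slabs along the stacking direction is strictly associative up to the canonical PL reparametrization $[0,1]\cup[0,1]\cong[0,1]$, and (ii) strict unitality, which follows because gluing a trivial collar $B\times[0,1]$ does not change the PL-homeomorphism type of the slab relative boundary. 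Because we work with \emph{level-preserving} PL diffeomorphisms relative boundary (the ``$\epsilon$'' layer of \textit{Remark \ref{diffbords}}), these identifications are realized by genuine morphisms in $\operatorname{PLTop}$, so the associators and unitors are the required invertible coherence data. The interchange law, pentagon and triangle conditions of \textbf{Definition \ref{internalcats}} then reduce to the corresponding compatibilities for PL gluings, which are the statements that stacking in the $[0,1]$-direction commutes with the vertical composition of level-preserving diffeomorphisms --- this is the content of the internal-category construction in \S 2.1 of \cite{douglas2016internalbicategories}, which I would invoke directly.

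The main obstacle I anticipate is the careful treatment of the two \emph{distinct} compositional directions and their interchange. The horizontal direction is the slab-stacking of \textbf{Definition \ref{2-skeins}}, but the diffeomorphisms-on-top supply a second, \emph{vertical} composition (composing ambient isotopies and level-preserving diffeomorphisms), and a double category requires these to satisfy the interchange law strictly. Making this precise means one must keep track of the fibre-bundle structure $D^4\to D^3$ and $D^4\to[0,1]$ simultaneously (as flagged in the footnote to \textbf{Definition \ref{2-skeins}}), and check that a level-preserving diffeomorphism of a stacked slab decomposes uniquely as the stacking of level-preserving diffeomorphisms of the pieces. I would handle this by appealing to the PL collar-neighborhood theorem to split diffeomorphisms near the gluing interface $B_1$, ensuring that the decomposition is canonical and compatible with reparametrization; the resulting interchange isomorphism is then forced to be the identity by the rigidity of the PL product structure, giving the required \emph{strict} double category rather than a merely pseudo one. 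The remaining coherence checks are routine transcriptions of the internal-bicategory axioms and I would relegate them to the cited construction of \cite{douglas2016internalbicategories}.
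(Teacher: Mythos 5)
Your proposal is correct and follows essentially the same route as the paper's own proof, which simply exhibits the square data directly: objects are the boundary graphs, horizontal arrows are the 2-ribbons composed by stacking, vertical arrows are PL diffeomorphisms of the graphs, squares are level-preserving PL diffeomorphisms rel boundary, with units given by the trivial ribbon $B\times[0,1]$ and the identity diffeomorphism, and with the interchange law guaranteed by the level-preserving condition --- exactly the content of your second and third paragraphs. One caveat: your opening reduction is misstated, since pushing a category object in $\operatorname{PLTop}$ forward along the forgetful functor to $\mathsf{Set}$ yields only an ordinary category (objects $=$ graphs, morphisms $=$ ribbons), whereas the double-category structure requires retaining the diffeomorphisms as the vertical direction, i.e.\ a category object in $\mathsf{Cat}$ (cf.\ \textit{Remark \ref{modelchange}}); since you do supply that vertical layer and its interchange explicitly afterwards, the slip is harmless to the argument.
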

\begin{proof}
    Each object $\,_{B_0}P_{B_1}\in\operatorname{PLRib}'_{(1+1)+\epsilon}(D^4)$ and their PL homeomorphisms can be represented as
\[\begin{tikzcd}
	{B_0} & {B_1} \\
	{B_0’} & {B_1’}
	\arrow[""{name=0, anchor=center, inner sep=0}, "P", "\shortmid"{marking}, from=1-1, to=1-2]
	\arrow["{f_0}"', from=1-1, to=2-1]
	\arrow["{f_1}", from=1-2, to=2-2]
	\arrow[""{name=1, anchor=center, inner sep=0}, "{P’}"', "\shortmid"{marking}, from=2-1, to=2-2]
	\arrow["\mathscr{a}"', shorten <=4pt, shorten >=4pt, Rightarrow, from=0, to=1]
\end{tikzcd},\]
    where $f_{0,1}$ are PL homeomorphisms in $D^3\times\{0,1\}$ of the boundary graphs $B_0,B_1$, and $\alpha$ is a PL homeomorphism  of $P$ rel. boundary in $D^4$. 
    
    The vertical and horizontal compositions and their associativity are obvious; the vertical composition unit is the identity PL homeomorphism, while the horizontal composition unit is given by the trivial PL 2-ribbon $B\times[0,1]: B\rightarrow B$. The level-preserving condition ensures that the $\alpha$'s satisfy the interchange law.
\end{proof}

\subsubsection{Horizontal functoriality: stacking on 4-discs}\label{positiveskeins}
For simplicity, we will for now assume that the graphs $B_0,B_1$ embedded in the slab layers are closed. Then, the 2d polyhedron $P$ within the slab has only $B_0,B_1$ as boundary. 

We shall identify $B_0=p_P\cap \partial\Gamma_P$ as precisely the subcomplex of the distinguished source edges $p_P$ (see \textbf{Proposition \ref{path}}) that lives on the boundary of $\Gamma_P$. All other source edges are internal. We will also assume the root vertex $v$ of the 2-graph $\Gamma_P$ to lie on the source boundary $v\in B_0\subset D^3\times\{0\}$. 

\begin{definition}\label{stackable}
    Take two PL 2-ribbon configurations $\,_{B_0}P_{B_1}$ and $\,_{B_0'}P'_{B_1'}$ embedded within $D^3\times[0,1]$ and $D^3\times[1,2]$, respectively. We say these two PL 2-ribbons are \textbf{stackable} iff there exists an orientation \textit{reversing} PL homeomorphism $f:B_1\cong B_0'$. 
    
    Denote by $P\cup_{B_1}P'$ the 2d simple polyhedron (with boundary $B_0,B_1'$) obtained by gluing of $P,P'$ at $B_1\cong B_0'$. Given level-preserving PL homeomorphisms $\mathscr{a},\mathscr{a}'$ on $P,P'$, we also have the concatenation $\mathscr{a}\cup_{B_1}\mathscr{a}'$ along $B_1$. The \textbf{stacking of $P$ and $P'$ along $f$} $\,_{B_0}(P\cup_{B_1}P')_{B_1'}$ is the {horizontal composition} in the double category $\operatorname{PLRib}'_{(1+1)+\epsilon}(D^4)$ obtained by rescaling the glued polyhedron $P\cup_{B_1}P'$ along the vertical axis $[0,2]\xrightarrow{\sim}[0,1]$ by one-half.
\end{definition}
\noindent We call $f$ the \textbf{stacking homeomorphism}. The associativity is obvious.

Now provided the PL 2-ribbon $\,_{B_0}P_{B_1}$ intersects the middle slice $D^3\times\{1/2\}$ transversally at a graph $B_{1/2}$, such that $P_1=P\cap(D^3\times[0,1/2])$ and $P_2= P\cap (D^3\times[1/2,1])$ remain 2d simple polyhedra, then we have
\begin{equation*}
    \,_{B_0}P_{B_1}\cong \,_{B_0}(P_1)_{B_{1/2}}\cup_{B_{1/2}}\,_{B_{1/2}}(P_2)_{B_1}.
\end{equation*}
This can be done for any PL 2-ribbon, since we can apply a PL homeomorphism which slides a neighbourhood of the trisection vertex away from the middle slice,\footnote{The reason we have to do this is because the graphs above and below the central trisection neighbourhood are not PL homeomorphic; see the right-hand side of fig. \ref{fig:interchanger}.} and apply a PL homeomorphism if necessary to ensure that it intersects $P$ transversally.

\begin{proposition}
    For each $\,_{B_0}P_{B_1}\coprod \,_{B_0'}P'_{B_1'} \in \operatorname{PLRib}_{2+\epsilon}'(D^4)$, we have 
    \begin{equation*}
        \,_{B_0\coprod B'_0}(P\coprod P')^{2\pi}_{B'_1\coprod B_1} \cong \Big(\,_{B_0\coprod B'_0}(P\coprod P')^{\pi}_{B'_{1/2}\coprod B_{1/2}}\Big)\cup_{B'_{1/2}\coprod B_{1/2}} \Big(\,_{B_{1/2}\coprod B'_{1/2}}(P\coprod P')^{\pi}_{B'_{1/2}\coprod B_{1/2}}\Big).
    \end{equation*}
\end{proposition}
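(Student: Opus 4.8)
The plan is to reduce the proposition to the stacking/decomposition machinery just established for $\operatorname{PLRib}'_{(1+1)+\epsilon}(D^4)$. The statement asserts that a full $2\pi$-twist of the pair $(P\coprod P')$ factors as the horizontal composition (stacking) of two $\pi$-twists, one on each vertical half-slab. First I would recall that the $2\pi$-twist is defined precisely by applying the $\pi$-rotation operation of \S\ref{PL2ribbons} twice; the key observation is that the two applications act on the \emph{same} half-slab $D^3\times[1/2,1]$ while the top half is held fixed, so composing them is literally a product of two diffeomorphisms supported on the lower half.

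The central geometric input will be the transversality-and-splitting lemma stated immediately above the proposition: since $(P\coprod P')^{2\pi}$ can be arranged to meet the middle slice $D^3\times\{1/2\}$ transversally (after a level-preserving PL diffeomorphism sliding the trisection vertex off the middle slice), we obtain a decomposition
\begin{equation*}
    \,_{B_0\coprod B_0'}(P\coprod P')^{2\pi}_{B_1'\coprod B_1}\cong \,_{B_0\coprod B_0'}\big((P\coprod P')^{2\pi}_1\big)_{B_{1/2}}\cup \,_{B_{1/2}}\big((P\coprod P')^{2\pi}_2\big)_{B_1'\coprod B_1},
\end{equation*}
where the subscripts $1,2$ denote the upper and lower halves. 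The next step is to identify each half: because the first $\pi$-rotation acts entirely below the middle slice, its restriction to the upper half is the identity, while its restriction to the lower half realizes the single $\pi$-twist; iterating shows the upper half carries one $\pi$-twist and the lower half the other. I would make this precise by tracking the boundary graphs through the rotation — each $\pi$-rotation swaps $B'$ and $B$ factors, so after the first $\pi$ the middle interface is $B'_{1/2}\coprod B_{1/2}$, and the second $\pi$ acts on this to give the claimed boundary data on each piece, matching the right-hand side of the asserted formula.

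Finally I would invoke \textbf{Definition \ref{stackable}} to identify this splitting with the horizontal composition $\cup$, checking that the stacking homeomorphism along $B'_{1/2}\coprod B_{1/2}$ is orientation-reversing as required (this follows from the interface being a transverse slice of a single embedded polyhedron, whose two sides carry opposite induced framings by \textit{Remark \ref{gluingedges}}), and that the level-preserving condition is respected so that the interchange law of \textbf{Proposition \ref{internalribbon}} applies. The main obstacle I anticipate is the bookkeeping of framings and orientations at the interface $B'_{1/2}\coprod B_{1/2}$: one must verify that the $\pi$-twist genuinely distributes as two half-twists rather than producing a net framing anomaly, and that the rescaling $[0,2]\xrightarrow{\sim}[0,1]$ in the definition of stacking does not introduce an extra isotopy obstruction. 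Everything else is a direct application of the transversal splitting together with the double-category structure already in hand.
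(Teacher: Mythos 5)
Your overall strategy --- cut $(P\coprod P')^{2\pi}$ at the middle slice, identify the two halves as $\pi$-twists, and then invoke the stacking machinery --- is the route the paper leaves implicit, but your identification step contains a genuine error. By your own (correct) observation, \emph{both} applications of the $\pi$-rotation are supported in the same half-slab $D^3\times[1/2,1]$, with $D^3\times[0,1/2]$ held fixed. Consequently the composite concentrates the entire $2\pi$ of twisting on one side of the middle slice: the upper half of $(P\coprod P')^{2\pi}$ is an untwisted product ribbon, and the literal cut at $D^3\times\{1/2\}$ produces the stacking of a trivial half with a fully $2\pi$-twisted half, \emph{not} two $\pi$-twisted halves. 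Your sentence ``iterating shows the upper half carries one $\pi$-twist and the lower half the other'' therefore contradicts the sentence immediately before it. Likewise, the middle-slice intersection of the literal composite is the \emph{unswapped} graph $B_{1/2}\coprod B'_{1/2}$ (two $\pi$-rotations return the strands to their original positions at that level), not the swapped interface $B'_{1/2}\coprod B_{1/2}$ appearing on the right-hand side of the statement, so your boundary-tracking does not come out right either.

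What is missing is the redistribution isotopy, which is the actual geometric content of the proposition. Concretely: describe the twisting of $(P\coprod P')^{2\pi}$ by a profile $\theta_1(z)$ with $\theta_1(0)=0$ and $\theta_1(1)=2\pi$, all of the increase occurring in $[1/2,1]$, and let $\theta_2$ be any profile reaching exactly $\pi$ at $z=1/2$. The difference $\theta_2-\theta_1$ lifts to a path in $\R$ with \emph{equal} endpoints, so rotating each level $D^3\times\{z\}$ by the angle $t\bigl(\theta_2(z)-\theta_1(z)\bigr)$, $t\in[0,1]$, is a level-preserving PL ambient isotopy fixing $D^3\times\{0,1\}$ pointwise which carries the concentrated twist to the distributed one. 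Only after this isotopy (equivalently: cutting at the level where the accumulated twist equals $\pi$, then rescaling) does the transversal splitting yield two $\pi$-twisted halves meeting along $B'_{1/2}\coprod B_{1/2}$, after which your appeal to \textbf{Definition \ref{stackable}} and \textbf{Proposition \ref{internalribbon}} goes through as written. You do flag in your final paragraph that one must ``verify that the twist genuinely distributes as two half-twists,'' but you treat this as framing bookkeeping; it is in fact the step your argument asserts --- via a false intermediate claim --- rather than proves.
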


However, as opposed to the usual 3d embedded ribbon category, the boundary slabs come with embedded 1-simplicial complexes $B$, instead of points. These complexes have more structure --- namely they can be pasted together along certain junctions. The composition along the boundary graphs will give rise to a {monoidal structure} which is \textit{not} just given by the disjoint union in general. Let us describe this in the following.

\subsubsection{Anchored connected summation of PL 2-ribbons}\label{connectdesum}
Let us now relax the assumption that the boundary graphs $B$ are closed, though they still remain connected. Let us describe the data necessary in order to facilitate the conjunction of PL 2-ribbons.

\begin{definition}
    A \textbf{marking} on a PL 2-ribbon $\,_{B_0}P_{B_1}$ is a distinguished framed oriented PL path $\ell: [0,1]\rightarrow D^3\times[0,1]$ embedded in $P$ (ie. its image is contained $\ell([0,1])\subset P$) such that $\ell$ intersects the slab layers $D^3\times\{0,1\}$ transversally at the graphs $B_0,B_1$.
    
    We call the endpoints $\ell(0)\in B_0,~\ell(1)\in B_1$ of a marking $\ell$ the \textbf{anchors}. The PL 2-ribbon $P$ is \textbf{marked} if it has equipped a set $L$ of such markings $\ell\in L$.
\end{definition}

Markings $L$ on a generic PL 2-ribbon $\,_{B_0}P_{B_1}$ is characterized by a bipartition $L = L^+\coprod L^-$, indicating the markings with positive or negative framings; namely, $\ell^\pm\in L^\pm$ iff its anchors $\ell^\pm(0),\ell^\pm(1)$ have positive/negative framing in $D^3\times\{0,1\}$. The set $L$ is therefore characterized by a tuple $(n,m)\in\bbZ_{\geq 0}^2$ for which $n=|L^+|$ and $m=|L^-|$.
\begin{definition}
    We call the anchors with positive framing \textbf{incoming}, while the others \textbf{outgoing}.
\end{definition}
\noindent We are going to assume without much loss of generality that the root vertex $v\in \Gamma_P$ of $P$ is an \textit{incoming} anchor.

    

Let $\,_{B_0}P_{B_1},~\,_{B_0'}P'_{B_1'}\in\operatorname{PLRib}_{2+\epsilon}'(D^4)$ denote two marked PL 2-ribbons. In the following, we will embed each of them into \textit{quarter-slab spaces} instead: $$P\subset D^2\times [0,1]\times [0,1],\qquad P'\subset D^2\times[1,2]\times [0,1],$$ and we will require the PL homeomorphisms on the boundary graphs $B$ to be level-preserving with respect to the fibrations $D^3\rightarrow D^2$ and $D^3\rightarrow[0,1]$.
\begin{definition}\label{summable}
    We say two disjoint marked PL 2-ribbons $P,P'$ with marking sets $L,L'$ are \textbf{connected summable} iff there exists markings $\ell^-\in L^-$ and $\ell'^+\in L'^+$ such that, upon embedding $P\coprod P' \subset D^3\times[0,2]\times[0,1]$, there exists PL \textit{framing-reversing} homotopy $H:\ell^-\Rightarrow \ell'^+$ in $D^2\times[0,2]\times[0,1]$ relative boundary.

    With this homotopy, consider the following PL 2-ribbon (see fig. \ref{fig:sum})
    \begin{equation*}
        \,_{B_0\vee_{\ell(0)}B_0'}(P \#_{H} P')_{B_1\vee_{\ell(1)}B_1'}\subset D^2\times[0,2]\times[0,1],
    \end{equation*}
    where $\vee$ denotes the wedge sum and $P\#_{H} P'\subset D^3\times[0,1]$ is the connected simple 2d polyhedron obtained by pasting the given homotopy $H: [0,1]\times [0,1]\rightarrow D^3\times[0,1]$ with $P\coprod P'$. The \textbf{PL connected sum $(\,_{B_0}P_{B_1})\#_H(\,_{B_0'}P'_{B_1'})$} along $H$ is the rescaling of this PL 2-ribbon along the third coordinate by $1/2$.
\end{definition}
\noindent We call $H$ the \textbf{summation collar} of $P\#_H P'$. Since we have split up the incoming and outgoing anchors along which the PL connection summation can be performed, the strict associativity\footnote{Geometrically here, having only identity components means that the associator is PL (2-)homotopic to the identity, which is true.} of $\#$ is obvious.

\begin{figure}[h]
    \centering
    \includegraphics[width=0.8\linewidth]{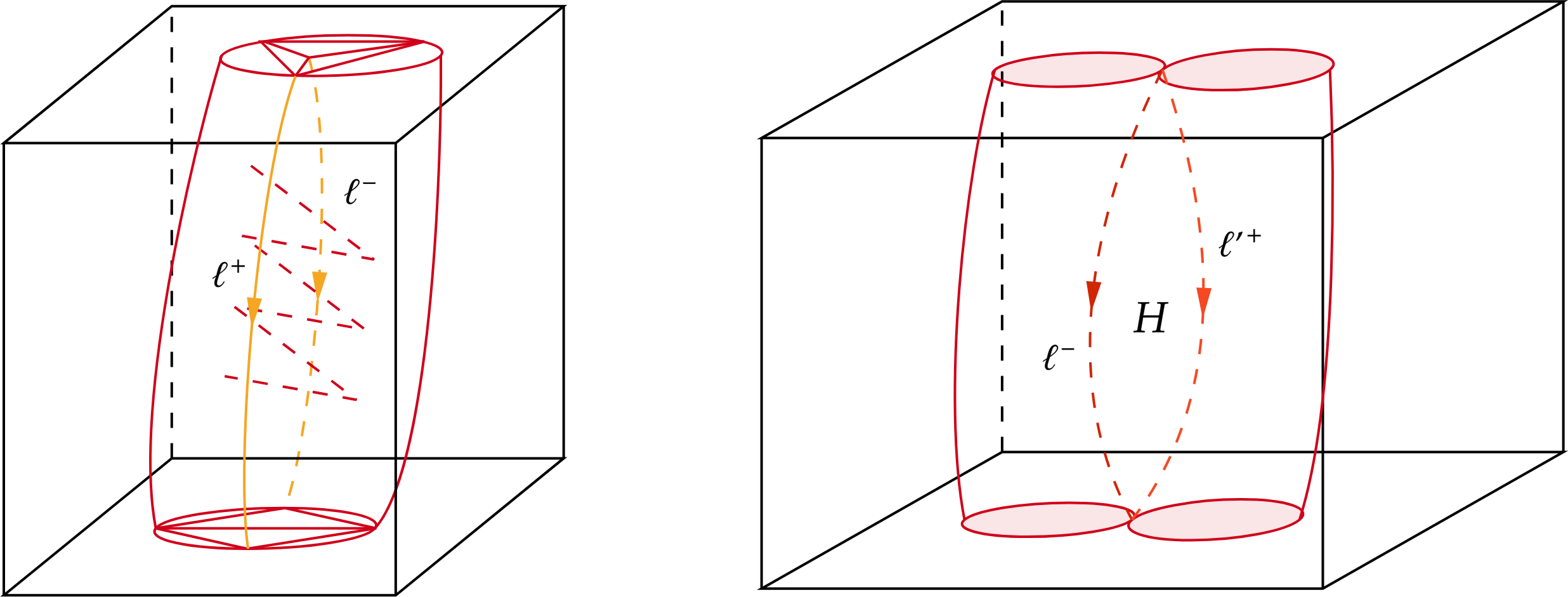}
    \caption{The markings on, and connected summation of, PL 2-ribbons.}
    \label{fig:sum}
\end{figure}

\begin{rmk}\label{skeinint}
    For the stacking of marked PL 2-ribbons, we must make sure that the incoming and outgoing anchors on the boundary graphs agree upon applying the stacking homeomorphism $f:B_1\cong B_0'$. This adds the following additional constraint to \textbf{Definition \ref{stackable}}:
    \begin{equation*}
        f(L^+) = L'^+,\qquad f(L^-) =L'^-.
    \end{equation*}
    This of course implies that the numbers $n=n',~m=m'$ of positively/negatively framed anchors on $B_1$ agrees with those on $B_0'$, otherwise the PL 2-ribbon cannot be stacked. If we consider PL 2-ribbons $P_1,\dots P_4$ for which (i) $P_1,P_3$ and $P_2,P_4$ are stackable and (ii) $P_1,P_2$ and $P_3,P_4$ can be PL connected summed, then we have a level-preserving PL diffeomorpism
    \begin{equation*}
        \mathfrak{b}: (P_1\cup_{B_1} P_3)\#_{H\ast H'} (P_2\cup_{B_2}P_4) \xrightarrow{\sim} (P_1\#_H P_2)\cup_{B_1\vee B_2}(P_3\#_{H'} B_4)
    \end{equation*}
    given by continuously deforming the framing of the underlying summation collars $H\cup H'$ on either side.
\end{rmk}

\subsubsection{Marked PL 2-ribbons as a double bicategory}
Given the 2d polyhedra $P,P'$ under consideration are path-connected, they are PL connected summable whenever their boundary graphs have the same number of framed anchors. The above structures immediately implies the following.
\begin{proposition}\label{markedPL2skeins}
    Marked PL 2-ribbons in the 4-disc $D^4$ are bicategories internal to $\operatorname{PLTop}$ (cf. \S 3.1 in \cite{douglas2016internalbicategories}). Together, they form a $\mathrm{double~bicategory}$ $\mathcal{T}'^{PL}_\text{mrk}$.
\end{proposition}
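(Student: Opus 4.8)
The plan is to assemble the proposition from the structural pieces already established in \S\ref{PL2ribbons}, verifying the two layers of the double-bicategory structure separately and then checking their compatibility via the interchange data. First I would fix the ambient $\operatorname{PLTop}$ and recall, from \textbf{Proposition \ref{internalribbon}}, that the underlying configurations $\,_{B_0}P_{B_1}$ with their level-preserving PL diffeomorphisms already assemble into a double category: this supplies the vertical composition (of diffeomorphisms) and one of the two horizontal composition laws (the stacking $\cup_{B_1}$ along matched boundary graphs, from \textbf{Definition \ref{stackable}}). The new content of the present proposition is the \emph{second} horizontal direction, namely the anchored connected summation $\#_H$ of \textbf{Definition \ref{summable}}, and the fact that adding it promotes the double category to a double \emph{bi}category. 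So the core of the argument is to exhibit $\#_H$ as a genuine second monoidal/compositional layer and to produce the coherence isomorphism relating the two.

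The key steps, in order, would be: (i) verify that each fixed marked PL 2-ribbon, equipped with $\#_H$ as composition and its markings $L=L^+\coprod L^-$ as objects/source-target data, is a bicategory internal to $\operatorname{PLTop}$, following the template of \S 3.1 of \cite{douglas2016internalbicategories}; the associativity of $\#_H$ is strict (noted after \textbf{Definition \ref{summable}}, since incoming and outgoing anchors are separated), and the unit is the trivial collar, so the only bicategorical weakness comes from the framing of the summation collars $H$, which furnishes the unitors and associators up to ambient isotopy rel.\ boundary. (ii) Record that the stacking $\cup_{B_1}$ is strictly associative (stated after \textbf{Definition \ref{stackable}}) and so provides the horizontal composition of the \emph{outer} double structure. (iii) Invoke the diffeomorphism $\mathfrak{b}$ of \textit{Remark \ref{skeinint}}, which I would argue is precisely the interchange $2$-cell relating $(\,\cup_{B_1}\,)$ and $(\,\#_H\,)$: given four mutually compatible ribbons $P_1,\dots,P_4$, deforming the framing of the glued collars $H\cup H'$ yields the level-preserving PL diffeomorphism witnessing that stacking-then-summing agrees with summing-then-stacking. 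This is what upgrades a pair of internal compositions into a double bicategory $\mathcal{T}'^{PL}_{\text{mrk}}$.

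The main obstacle I anticipate is step (iii): showing that $\mathfrak{b}$ satisfies the full interchange coherence — not merely existing as an isomorphism, but being natural with respect to the diffeomorphism $2$-cells $\mathscr{a}$ and compatible with the associators and unitors of \emph{both} horizontal directions simultaneously. Concretely, one must check that the level-preserving constraint (diffeomorphisms of the fibrations $D^4\to D^3$ and $D^4\to[0,1]$, and in the summation case $D^3\to D^2$) is strong enough to make all the resulting collar-deformation isotopies cohere, i.e.\ that the hexagon/pentagon-type diagrams built from $\mathfrak{b}$, the framing-collar associators, and the stacking identities commute up to PL ambient isotopy rel.\ boundary. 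I would handle this by a dimension/transversality argument: since both composition directions are realized by gluing along transverse boundary graphs and all the governing diffeomorphisms preserve the relevant fibrations, the space of choices of summation collars and stacking identifications is contractible in the appropriate PL mapping space, so any two ways of composing the coherence cells are related by a unique-up-to-isotopy PL diffeomorphism. This contractibility is exactly the internal-to-$\operatorname{PLTop}$ analogue of the coherence mechanism in \cite{douglas2016internalbicategories}, and I would cite \S 3.1 there for the formal statement that such internal bicategories organize, under a second compatible composition, into a double bicategory, thereby completing the identification of $\mathcal{T}'^{PL}_{\text{mrk}}$.
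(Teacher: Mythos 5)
Your proposal has two genuine gaps, one structural and one in the coherence mechanism.

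First, you miss the paper's central organizing move. The proof does not bolt $\#_H$ onto the double category of Proposition \ref{internalribbon}; it \emph{re-grades} the whole collection. The objects of the bicategory are the natural numbers $n\in\bbZ_{\geq 0}$ (framed points in $D^2$), the 1-morphisms $B:n\rightarrow m$ are the boundary graphs themselves, composed by wedge sum $B_1\vee_m B_2$ at shared anchors, and the 2-morphisms are the marked ribbons $\,_{B_0}P_{B_1}$, composed vertically by stacking $\cup_B$ and horizontally by connected summation $\#_H$. It is precisely this grading by anchor count --- absent from your proposal --- that makes $\#_H$ a genuine composition law of 2-morphisms over a shared object, rather than a monoidal-type product on a double category; and it is what makes the geometric data a bicategory internal to $\mathsf{PLTop}$. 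Your step (i), which treats "each fixed marked PL 2-ribbon with its markings as objects" as a bicategory, does not parse: a single ribbon is a 2-cell of the structure, not a bicategory. The passage to a double bicategory is then not an ad hoc promotion either: the geometric bicategory together with the diffeomorphism cells $f_0,f_1,\mathscr{a}$ forms a tricategory, which by \S 4.2 of \cite{douglas2016internalbicategories} is equivalent to a bicategory internal to $\mathsf{Cat}$, i.e.\ a double bicategory; your proposal never produces this identification.

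Second, your coherence argument in step (iii) rests on a claim that is false. You argue that the space of summation collars and stacking identifications is contractible, so all coherence cells agree up to unique isotopy. But distinct collars $H,H'$ generally yield composites $P\#_H P'$ and $P\#_{H'}P'$ that are \emph{not} PL-diffeomorphic rel boundary --- this is exactly why the horizontal composition is taken over all possible collars, why Lemma \ref{2flats} needs the hypothesis that the collars be homotopic, and why invariance under change of collar is a nontrivial theorem (Theorem \ref{handlebodyinvariance}) requiring the handlebody-move quotient of \S \ref{stableequivG2ribbons}. If collar choices were contractible, stable equivalence would be vacuous. The paper's actual mechanism, which you should adopt, is to impose the level-preserving conditions on the diffeomorphism cells ($f_0,f_1$ level-preserving in $D^2\times[0,1]$, and $\mathscr{a}$ doubly level-preserving in $D^2\times[0,1]\times[0,1]$) and verify the interchange law of Remark \ref{skeinint} --- whose interchanger $\mathfrak{b}$ you do correctly identify --- together with the interchange associativity axioms F2-8, F3-8, F3-14 of \S 3.1 in \cite{douglas2016internalbicategories}.
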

\begin{proof}
    The objects $n\in\bbZ_{\geq 0}$ are given by $n$ framed points, (horizontal) 1-morphisms $B:n\rightarrow m$ are graphs embedded in $D^3$ with $n,m$ incoming/outgoing (ie. positively-/negatively-framed) external marked points, and (horizontal) 2-morphisms given by $P: B_0\Rightarrow B_1: n\rightarrow m$ given by a marked PL 2-ribbon $\,_{B_0}P_{B_1}$ embedded in $D^4$.

    Composition of 1-morphisms $n\xrightarrow{B_1}m\xrightarrow{B_2}k$ is the wedge sum $B_1\vee_m B_2$. Vertical composition $B_0\xRightarrow{P}B_1\xRightarrow{P'}B_2$ of 2-morphisms is the stacking $P\cup_BP'$, and the horizontal composition $\big(B_0\xRightarrow{P}B_1\big)\#_H \big(B_0'\xRightarrow{P'}B_1'\big)$ is the PL connected summation $P\#_H P'$ over \textit{all} possible summation collars $H:L^-\Rightarrow L'^+$.  
    
    Up to rescaling, the identity 1-morphisms $1_n:n\rightarrow n$ are straight lines $\{1,\dots,n\}\times [0,1]\times\{0\}$, and the identity 2-morphism $\id_B: B\rightarrow B$ is the cylinder $B\times[0,1]\subset D^2\times[0,1]\times[0,1]$. 
    
    By performing diffeomorphisms on the framed points in $D^2$, the PL 2-ribbons $P$ thus form bicategories internal to $\mathsf{PLTop}$. Its collection $\mathcal{T}^{PL}_\text{mrk}$ is a tricategory, which is equivalent to a bicategory internal to $\mathsf{Cat}$ (\S 4.2, \cite{douglas2016internalbicategories}) --- aka. a \textit{double bicategory}. The shape of the 3-cells in $\mathcal{T}^{PL}_\text{mrk}$ takes the form
\[\alpha=\begin{tikzcd}
	n \\
	&&& m \\
	{n’} \\
	&&& {m’}
	\arrow[""{name=0, anchor=center, inner sep=0}, "{B_0}", "\shortmid"{marking}, curve={height=-12pt}, from=1-1, to=2-4]
	\arrow[""{name=1, anchor=center, inner sep=0}, "{B_1}"', "\shortmid"{marking}, curve={height=12pt}, from=1-1, to=2-4]
	\arrow[from=1-1, to=3-1]
	\arrow[from=2-4, to=4-4]
	\arrow[""{name=2, anchor=center, inner sep=0}, "{B_1’}"', "\shortmid"{marking}, curve={height=18pt}, from=3-1, to=4-4]
	\arrow[""{name=3, anchor=center, inner sep=0}, "{B_0’}"{pos=0.7}, "\shortmid"{marking}, curve={height=-18pt}, from=3-1, to=4-4]
	\arrow[""{name=4, anchor=center, inner sep=0}, "P", "\shortmid"{marking}, shorten <=3pt, shorten >=3pt, Rightarrow, from=0, to=1]
	\arrow["{f_0}"', shift right=5, shorten <=9pt, shorten >=9pt, Rightarrow, from=1, to=2]
	\arrow["{f_1}"{pos=0.6}, shift left=5, shorten <=8pt, shorten >=8pt, Rightarrow, from=0, to=3]
	\arrow[""{name=5, anchor=center, inner sep=0}, "{P’}", "\shortmid"{marking}, shorten <=5pt, shorten >=5pt, Rightarrow, from=3, to=2]
	\arrow["\mathscr{a}"', shorten <=9pt, shorten >=9pt, Rightarrow, scaling nfold=3, from=4, to=5]
\end{tikzcd};\]   
the 2-cells $f_0,f_1$ represent the PL homeomorphisms on the graphs $B_0,B_1$, while the 3-cell $\alpha$ is a diffeomorphism rel. boundary on the PL 2-ribbons $P$.

     To ensure the relevant interchange laws --- as described in \textit{Remark \ref{skeinint}} --- and the interchange associativity (see F2-8, F3-8, and F3-14 in \S 3.1 of \cite{douglas2016internalbicategories}, respectively) are satisfied, we require the relevant PL homeomorphisms to be suitably level-preserving. This means that the isotopies $f_0,f_1$ are level-preserving in $D^2\times[0,1]$, while $\alpha$ are "doubly" level-preserving in $D^2\times[0,1]\times[0,1]$ (cf. \cite{BAEZ2003705} and \S \ref{baezlangford}).
\end{proof}

For generic $n,m\in\bbZ_{\geq 0}$, the hom-category $\operatorname{Hom}_{\mathcal{T}'^{PL}_\text{mrk}}(n,m)$ is
\begin{itemize}
    \item  a left $\operatorname{End}_{\mathcal{T}'^{PL}_\text{mrk}}(n)$-module and 
    \item a right $\operatorname{End}_{\mathcal{T}'^{PL}_\text{mrk}}(m)$-module 
\end{itemize}
under PL connected summation $\#$. Notice that if there are no markings $n=0$, then $\#_\emptyset =\coprod$ reduces to the disjoint union. Thus $\operatorname{End}_{\mathcal{T}'^{PL}_\text{mrk}}(0)$ recovers \textbf{Definition \ref{higherGskein}}.

\begin{rmk}\label{emptyboundary}
    A subtlety that should be emphasized here is that all PL 2-ribbons we are considering are based spaces. Hence, by "$0\in\bbZ_{\geq 0}$" we mean an unframed base point $v$. We shall always consider such a point to be \textit{external}, hence 1-morphisms of the form $0\xrightarrow{B}n$ can be thought of as directed graphs with a single incoming vertex, and analogously for $n\xrightarrow{B'}0$. The "trivial 1-endomorphism $\emptyset:0\rightarrow 0$" is thus understood as the trivial graph $v$, not {\it literally} the empty set. This allows us to define 2-morphisms of the form "$B\xRightarrow{P}\emptyset$" as marked PL 2-ribbons such that $\ell(1)=v$ for all paths $\ell\in L$ in the marking set.
\end{rmk}

\subsection{$\mathbb{G}$-decorated ribbons from 2-Chern-Simons theory}\label{Gfuncmon}
Recall the measureable category $\cV^X_q$ over a smooth measureable space $X$ in \textbf{Definition \ref{quantumhermitian}}. The quantum categorical coordinate ring $\C_q(\mathbb{G})\subset \cV^X_q$ is a 2-subcategory for $X=(\mathbb{G},\mu)$, and we let $\widehat{\C}_q(\mathbb{G})$ denote its image under the Yoneda embedding as in \textbf{Proposition \ref{yoneda}}.

In accordance with \textbf{Proposition \ref{wilsonsurfaces}}, we can view $\widehat{\C}_q(\mathbb{G})$ as a double category of measureable fields in $\mathsf{Meas}$.
The {\it raison d'{\^e}tre} \textit{Remark \ref{diffbords}} then allows us to finally define the following.
\begin{definition}\label{higherGskein}
    The \textbf{category of $\mathbb{G}$-decorated ribbons} $$\operatorname{PLRib}'^{\mathbb{G};q}_{(1+1)+\epsilon}(D^4)\equiv \operatorname{Fun}\big(\operatorname{PLRib}'_{(1+1)+\epsilon}(D^4),\widehat{\C}_q(\mathbb{G})\big)$$ is the double category of \textit{double functors} \cite{Kerler2001}
    $$\Omega: \left(\begin{tikzcd}
	{B_0} & {B_1} \\
	{B_0’} & {B_1’}
	\arrow[""{name=0, anchor=center, inner sep=0}, "P", "\shortmid"{marking}, from=1-1, to=1-2]
	\arrow["{f_0}"', from=1-1, to=2-1]
	\arrow["{f_1}", from=1-2, to=2-2]
	\arrow[""{name=1, anchor=center, inner sep=0}, "{P’}"', "\shortmid"{marking}, from=2-1, to=2-2]
	\arrow["\alpha"', shorten <=4pt, shorten >=4pt, Rightarrow, from=0, to=1]
\end{tikzcd}\right) \mapsto \left(\begin{tikzcd}
	{\sigma_0} & {\sigma_1} \\
	{\sigma_0’} & {\sigma_1’}
	\arrow[""{name=0, anchor=center, inner sep=0}, "\omega", "\shortmid"{marking}, from=1-1, to=1-2]
	\arrow["{\Omega f_0}"', from=1-1, to=2-1]
	\arrow["{\Omega f_1}", from=1-2, to=2-2]
	\arrow[""{name=1, anchor=center, inner sep=0}, "{\omega’}"', "\shortmid"{marking}, from=2-1, to=2-2]
	\arrow["\Omega\alpha"', shorten <=4pt, shorten >=4pt, Rightarrow, from=0, to=1]
\end{tikzcd}\right)$$ parameterized by the non-Abelian Wilson surface states $\,_{\sigma_0}\omega_{\sigma_1}\in\widehat{\C}_q(\mathbb{G}^{\,_{B_0}P_{B_1}})$ for which \begin{equation*}
        \begin{cases}
            \hat s(\omega)=\sigma_0 \\ \hat t(\omega)=\sigma_1
        \end{cases},\qquad \widehat{\C}_q\big(G^{B_0})\xleftarrow{\hat s}\widehat{\C}_q\big((\mathsf{H}\rtimes G)^P\big)\xrightarrow{\hat t}\widehat{\C}_q\big(G^{B_1}).
    \end{equation*}
    The ambient PL isotopies $f_{0,1},\alpha$ on the PL 2-ribbons are sent to measureable isomorphisms $\Omega f_{0,1},\Omega\alpha$ on the Wilson surface states. 
\end{definition}

Note $\Omega$ contains not just the Wilson surface states, but also the following data:
\begin{enumerate}
        \item an interchanger sheaf automorphism for each trisection vertex; see \S \ref{interchanger}, and
        \item a $U(1)$-gerbe $\check{H}^2(\mathbb{G},U(1))$ for each triple point; see \S \ref{sec:triplepoint}.
\end{enumerate}
These allow the $\mathbb{G}$-decorated ribbons $\operatorname{PLRib}^{\mathbb{G};q}_{(1+1)+\epsilon}(D^4)$ to capture the geometry of 2d simple polyhedra up to diffeomorphism. This is important for the topology of embedded 3-manifolds, as we have seen in \S \ref{3handles}.

\begin{proposition}\label{closedwilsonsurfaces}
    $\Omega(\emptyset) \simeq \mathsf{Hilb}$ on the empty PL 2-ribbon. For $\partial P = \emptyset$ without boundary (which of course implies $B_0,B_1=\emptyset$), we call $\Omega(P)$ the \textbf{closed Wilson surface states}.
\end{proposition}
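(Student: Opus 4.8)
The statement has two parts. The first is the identification $\Omega(\emptyset)\simeq\mathsf{Hilb}$ on the empty (trivial) PL 2-ribbon, and the second is merely a definitional naming of $\Omega(P)$ as the closed Wilson surface states when $\partial P=\emptyset$. The plan is to treat the first as the only substantive claim, since the second requires no argument beyond unpacking \textbf{Definition \ref{higherGskein}}. For the first part, I would proceed by tracing the empty PL 2-ribbon through the double functor $\Omega$ and invoking the monoidal unit structure of the ambient 2-category $\mathsf{Meas}$.

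First I would clarify what ``$\emptyset$'' means here: by \emph{Remark \ref{emptyboundary}}, the object $0\in\bbZ_{\geq 0}$ is an unframed base point $v$, and the ``trivial 1-endomorphism $\emptyset:0\to 0$'' is the trivial graph $v$. Thus the empty PL 2-ribbon corresponds to the trivial 2-graph consisting of a single vertex with no faces. The key input is then \textbf{Definition \ref{geometric2graphs}}, which states outright that for $\Gamma=v$ a single vertex we have $\C_q(\mathbb{G}^v)\simeq\mathsf{Hilb}$. Applying the Yoneda embedding of \textbf{Proposition \ref{yoneda}} to this trivial case, together with the fact that $\operatorname{Fun}_\mathsf{Meas}(\mathsf{Hilb},\mathsf{Hilb})\simeq\mathsf{Hilb}$ (since $\mathsf{Hilb}\simeq\cH^\emptyset$ is the monoidal unit of $\mathsf{Meas}$ by the second statement of \textbf{Proposition \ref{basicfacts}}), gives $\widehat{\C}_q(\mathbb{G}^v)\simeq\mathsf{Hilb}$. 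Finally $\Omega(\emptyset)$, being the value of the double functor on the trivial ribbon, lands in this $\widehat{\C}_q(\mathbb{G}^v)\simeq\mathsf{Hilb}$, and since the source/target restriction-of-scalars functors $\hat s,\hat t$ act trivially on the undecorated base point, the whole internal-categorical structure collapses to $\mathsf{Hilb}$.

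I expect the main obstacle to be not the equivalence itself but the bookkeeping of \emph{which} trivial object we land in and why the internal/double structure degenerates correctly. Specifically, one must check that on the trivial ribbon the source and target graphs $\sigma_0,\sigma_1$ both coincide with the trivial Wilson surface state over $v$, so that the internal category $\widehat{\C}_q(\mathbb{G}^v)$ has a single object-type and its composition $\circ$ (coming from the cocomposition $\Delta_v$ via \textbf{Proposition \ref{wilsonsurfaces}}) is the identity. This amounts to verifying that the monoidal unit $\eta:\mathsf{Hilb}\to\C_q(\mathbb{G}^{\Gamma^2})$ of \textbf{Definition \ref{geometric2graphs}} is compatible with the double-functorial structure of $\Omega$ — that $\Omega$ sends the horizontal composition unit (the trivial PL 2-ribbon $B\times[0,1]$ with $B=v$) to the identity internal functor. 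This is a coherence check rather than a genuine difficulty.

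The second sentence is a definition and requires no proof: when $\partial P=\emptyset$ (hence $B_0,B_1=\emptyset$), \textbf{Corollary \ref{bdyinvar}} guarantees the underlying 2-graph states are 2-monodromy states lying in the observables $\mathcal{O}^\Gamma$, and one simply names $\Omega(P)$ the closed Wilson surface states, consistent with the terminology fixed in \textbf{Definition \ref{S3catstate}}. I would therefore present the proof as a short verification of the first equivalence, closing with the remark that the naming convention for closed surfaces follows directly from the boundaryless hypothesis together with the 2-gauge invariance established in \S \ref{invarbdy}.
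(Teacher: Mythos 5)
Your proposal is correct and follows essentially the same route as the paper, whose entire proof is that the equivalence follows immediately from $\C_q(\mathbb{G}^\emptyset)\simeq\mathsf{Hilb}$ (equivalently, $\C_q(\mathbb{G}^v)\simeq\mathsf{Hilb}$ from \textbf{Definition \ref{geometric2graphs}}, which you invoke). The additional Yoneda and coherence bookkeeping you supply is harmless but not needed, and you correctly identify the second sentence as a definition requiring no proof.
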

\begin{proof}
    These follow immediately from the fact that $\C_q(\mathbb{G}^\emptyset) \simeq\mathsf{Hilb}$. 
\end{proof}
\noindent The $S^3$-state constructed in \S \ref{cone3d}, for instance, define the closed Wilson surface states $\Omega(P_{S^3})$ on $S^3$; recall \textbf{Definition \ref{S^3monodromy}}.

\medskip

The above definition is not fully complete, however, and we shall give the "correct" one later in \textbf{Definition \ref{markedhigherGskein}}. However, it does highlight the following central idea.

\begin{rmk}
     \textbf{Definition \ref{higherGskein}} is the reason for our insistence on working with \textit{internal} categories throughout the quantization scheme we have developed/are developing. Such structures are not only natural from the perspective of higher-gauge principal bundles \cite{Nikolaus2011FOUREV,Baez:2012,Schommer_Pries_2011}, but also from that of extended $(n+1)+\epsilon$-dimensinoal bordisms \cite{douglas2016internalbicategories,Bunk_2025,haioun2025nonsemisimplewrtboundarycraneyetter}.
\end{rmk}

\subsubsection{Functoriality against the stacking of 4-discs}
The functoriality is immediate from \textbf{Proposition \ref{wilsonsurfaces}}, but let us describe it explicitly. To mediate the gluing construction, we require an equivalence $\C_q(G^{B_1}) \simeq \C_q(G^{B_0'})$ of the categories of measureables sheaves, and they must fit into the following cospan diagram
\begin{equation}\begin{tikzcd}
	& {\C_q\big(\mathbb{G}^{\,_{B_0}P_{B_1}}\big)} && {\C_q\big(\mathbb{G}^{\,_{B_0'}P'_{B_1'}}\big)} \\
	{\C_q(G^{B_0})} && {\C_q(G^{B_1})\simeq \C_q(G^{B_0'})} && {\C_q(G^{B_1'})}
	\arrow["{s^*}", from=2-1, to=1-2]
	\arrow["{t^*}"', from=2-3, to=1-2]
	\arrow["{s’^*}", from=2-3, to=1-4]
	\arrow["{t’^*}"', from=2-5, to=1-4]
\end{tikzcd}\label{cospan}
\end{equation}
formed from the cofibrant cosource/cotarget functors on the sheaves/2-graph states within the slab.

Denote by the pushout $\C_q\big(\mathbb{G}^{\,_{B_0}P_{B_1}}\big)\times_{B_1}\C_q\big(\mathbb{G}^{\,_{B_0'}P'_{B_1'}}\big)$ along \eqref{cospan}, we dualize it via the full-faithful imit-preserving Yoneda embedding to obtain the associated pullback $\widehat{\C}_q\big(\mathbb{G}^{\,_{B_0}P_{B_1}}\big)\times_{B_1}\widehat{\C}_q\big(\mathbb{G}^{\,_{B_0'}P'_{B_1'}}\big)$ upon which we can define a canonical additive measureable functor 
    \begin{align*}
        \circ_{B_1}: \widehat{\C}_q\big(\mathbb{G}^{\,_{B_0}P_{B_1}}\big)\times_{B_1}\widehat{\C}_q\big(\mathbb{G}^{\,_{B_0'}P'_{B_1'}}\big)\rightarrow \widehat{\C}_q\big(\mathbb{G}^{\,_{B_0}(P\cup_{B_1}P')_{B_1'}}\big).
    \end{align*}
This functor $\circ_{B_1}$ can be understood as a form of \textit{profunctor composition} 
$$\,_{\sigma_0}(\omega\circ_{B_1}\omega')_{\sigma_1'} = \int^{\sigma_0\cong \sigma_1\in \C_q(G^{B_1})}\,_{\sigma_0}\omega_{\sigma_1}\times \,_{\sigma_0'}\omega'_{\sigma_1'}$$
of Wilson surface states $\,_{\sigma_0}\omega_{\sigma_1}\in \widehat{\C}_q\big(\mathbb{G}^{\,_{B_0}P_{B_1}}\big)$ and $\,_{\sigma_0'}\omega'_{\sigma_1'}\in \widehat{\C}_q\big(\mathbb{G}^{\,_{B_0'}P'_{B_1'}}\big)$.

\medskip

To describe the pushout $\C_q\big(\mathbb{G}^{\,_{B_0}P_{B_1}}\big)\times_{B_1}\C_q\big(\mathbb{G}^{\,_{B_0'}P'_{B_1'}}\big)$ more explicitly, we will leverage \textbf{Theorem \ref{invariance}} and use the degeneracy maps $\delta,\delta'$ in a combinatorial triangulations of the 2d polyhedra $P,P'$. Let $\tilde f$ denote the extension of the gluing PL homeomorphism $f:B_1\cong B_0'$ to the contractible 2-simplex $\delta(B_1)$. We define the degeneracy intersection $$\mathfrak{u}_{12} = \tilde f(\delta(B_1))\cap \delta'(B_0'),\qquad \tilde f(\delta(B_1)) = \delta'(f(B_1))$$ near the middle slab layer $f:B_1\cong B_0'$  (ie. a small\footnote{Since degeneracy 2-simplices are contractible, we can perform PL homeomorphisms that shrink $\mathfrak{u}_{12}$ to be as small as we wish.} collar around $B_1\cong B_0'$).

Define the full measureable subcategory $$\C_q\big(\mathbb{G}^{\,_{B_0}P_{B_1}}\big)\times_{B_1}\C_q\big(\mathbb{G}^{\,_{B_0'}P'_{B_1'}}\big)\subset\C_q\big(\mathbb{G}^{\,_{B_0}P_{B_1}}\big)\times\C_q\big(\mathbb{G}^{\,_{B_0'}P'_{B_1'}}\big)$$ consisting of pairs $(\phi,\phi')$ of 2-graph states for whom there exist a natural measureable sheaf isomorphism
    \begin{equation}
        \phi\mid_{\mathbb{G}^{\mathfrak{u}_{12}}} \cong \phi'\mid_{\mathbb{G}^{\mathfrak{u}_{12}}}.\label{bdyamenable}
    \end{equation}
This additive measureable subcategory defines $\C_q\big(\mathbb{G}^{\,_{B_0}P_{B_1}}\big)\times_{B_1}\C_q\big(\mathbb{G}^{\,_{B_0'}P'_{B_1'}}\big)$.

\medskip

\begin{definition}\label{compositionskein}
    The (horizontal) \textbf{functoriality of $\mathbb{G}$-decorated ribbons} is the data of a measureable natural isomorphism
    \begin{equation*}
\Omega(P\cup_{B_1}P') \cong \omega\circ_{B_1}\omega',\qquad \forall~ P,P'\in \operatorname{PLRib}'_{(1+1)+\epsilon}(D^4)
    \end{equation*}
    where $\Omega(\,_{B_0}P_{B_1}) = \,_{\sigma_0}\omega_{\sigma_1}$ and $\Omega(\,_{B_0'}P_{B_1'}) = \,_{\sigma_0'}\omega'_{\sigma_1'}$, satisfying the obvious coherence conditions against the compositional associators/unitors.
\end{definition}

\begin{rmk}\label{laxdoublefunctor}
There is a more general notion of \textit{double lax functors/pseudofunctors} \cite{Shulman2011,Pare:2011,froehlich2024yonedalemmarepresentationtheorem}, in which functoriality is witnessed by a (not necessarily invertible) double natural transformation $\Omega_\circ:\Omega\circ(-\cup_{B_1} -) \Rightarrow \Omega(-)\circ_{B_1}\Omega(-)$, whose components are given by vertical measureable morphisms
        \begin{equation*}
\Omega_\circ: \Omega(P\cup_{B_1}P') \xrightarrow{\sim} \omega\circ_{B_1}\omega',\qquad \forall~ P,P'\in \operatorname{PLRib}'_{(1+1)+\epsilon}(D^4)
    \end{equation*}
    in $\widehat{\C}_q(\mathbb{G}).$ These morphisms must also satisfy natural commutative conditions against the 2-morphisms $\Omega(\alpha)$ in the data of the double functor $\Omega$. We will assume such data to be trivial $\Omega_\circ=\id$ in the following.
\end{rmk}

Note that if $(\omega,\omega')\in \widehat{\C}_q\big(\mathbb{G}^{\,_{B_0}P_{B_1}}\big)\times_{B_1}\widehat{\C}_q\big(\mathbb{G}^{\,_{B_0'}P'_{B_1'}}\big)$ live in the pullback measureable subcategory, then they \textit{by construction} must satisfy $t^*(\omega) \cong s^*(\omega')$ up to measureable isomorphism, since the degeneracy intersection $\mathfrak{u}_{12}\supset \mathbb{G}^{B_1}$ contains decorations on $B_1\cong B_0'$.

\begin{rmk}
    Recall the local sheaf identifications $\alpha$ introduced in \textbf{Definition \ref{glueamenable}}. By holonomy-density, the sheaf isomorphism \eqref{bdyamenable} can be constructed from the local $\alpha$'s --- more precisely, if $B_1=\bigcup_e e \cong B_0'$ is given by a collection of 1-simplices, then \eqref{bdyamenable} can be written as $\bigotimes_e \alpha_e$ where $\alpha_e$ are the natural sheaf identifications across the edge $e$.
\end{rmk}

Since the composition $\circ$ is canonically induced from the (vertical) cocomposition of the 2-graph states, which is strictly coassociative, it is strictly associative.

\subsubsection{Monoidality under PL connected summation}\label{monoidality}
Recall from \textbf{Proposition \ref{wilsonsurfaces}} that the Wilson surface states $\widehat{\C}_q(\mathbb{G}^P)$ is a monoidal internal category, induced by the horizontal gluing of decorated 2-graphs. We will leverage this fact to define an internal monoidal structure $$\hat\otimes_H: \widehat{\C}_q(\mathbb{G}^P)\times_H\widehat{\C}_q(\mathbb{G}^{P'})\rightarrow \widehat{\C}_q(\mathbb{G}^{P\#_H P'})$$ on the Wilson surface states along the summation collar $H$. 

To begin, we first note that $\partial H = (\ell'^+)^{-1}\ast\ell^-$, and hence the boundary holonomies on $H$ are completely determined by the given edge decorations on the incoming $\ell^-$ and outgoing $\ell'^+$ markings of $P,P'$. Fixing these, we can then parameterize 2-graph states on $H$ as those sheaves $\phi_H\in\C_q((\mathsf{H}\rtimes G)^H)$ whose cosource/cotargets satisfy
\begin{equation*}
    s^*\phi_H \cong \Phi\mid_{\ell^-},\qquad t^*\phi_H\cong \Phi'\mid_{\ell^+}
\end{equation*}
for some given 2-graph states $\Phi\in\C_q(\mathbb{G}^P),~\Phi'\in\C_q(\mathbb{G}^{P'})$. This allows us to paste $\Phi,\Phi'$ across $\phi_H$. By holonomy-density, ${\C}_q(\mathbb{G}^{P\#_H P'})$ consists of 2-graph states of the form $$\Phi\ostar_{\ell_1^-} \phi_H\ostar_{\ell_2^+}\Phi',\qquad \Phi\in\C_q(\mathbb{G}^P)~\Phi'\in\C_q(\mathbb{G}^{P'}),$$ where the subscripts $\ell_{1,2}^+$ indicates the gluing-amenabiity conditions across the markings; see \S \ref{objA}.

Then, for each Wilson surface state (which are cone $\bullet$-module *-functors) $\omega \in \widehat{\C}_q(\mathbb{G}^P),~\omega'\in\widehat{\C}_q(\mathbb{G}^{P'})$, their monoidal product is defined through the coend (cf. \cite{Day:1970,Loregian_2021})
\begin{equation}
    (\omega\hat \otimes_H\omega')(\Phi\ostar_{\ell}\Phi') = \int^{\phi_H\in\C_q((\mathsf{H}\rtimes G)^H)}\omega(\Phi) \otimes\left(\int_{\mathbb{G}^H}^\oplus d\mu_H(\mathrm{z})(\phi_H)_\mathrm{z}\right)\otimes\omega'(\Phi'),\label{summationcollar}
\end{equation}
where $\int_{\mathbb{G}^H}^\oplus d\mu_H(-): \C_q(\mathbb{G}^H)\rightarrow \mathsf{Hilb}$ is the $\bullet$-invariant direct integral; see also \textit{Remark \ref{Dayconvolve}}.

\begin{rmk}\label{Dayconvolve}
    The appearance of the coend in \eqref{summationcollar} is inspired by both the formula in \cite{Alekseev:1994au}, as well as the \textit{Day convolution product} \cite{Day:1970} on the presheaves $\operatorname{Fun}(C^\text{op},\mathsf{Vect})$ of a $\bbC$-linear monoidal category $(C,\otimes,I)$,
    \begin{equation*}
        (F\otimes_\text{Day}G)(c) = \int^{(c_1,c_2)\in C\times C} \operatorname{Hom}(c\otimes c_1,c_2)\otimes F(c_1)\otimes G(c_2),
    \end{equation*}
    for which the Yoneda embedding $C\rightarrow \operatorname{Fun}(C^\text{op},\mathsf{Vect})$ is monoidal. In terms of the Day convolution, \eqref{summationcollar} essentially says that the summation collars $H$ are decorated by trivial face states living in $\mathsf{Hilb}$.
\end{rmk}

This property of being monoidal is shared by \textit{all} end-categories of the marked PL 2-ribbons, as detailed in \textbf{Proposition \ref{markedPL2skeins}}. To put them all together, we consider free formal linear combinations of marked PL 2-ribbons over $\bbC$, and take the formal direct sum 
    \begin{equation*}
        \operatorname{PLRib}'_{(1+1)+\epsilon}(D^4)\equiv \bigoplus_n\operatorname{End}_{\mathcal{T}'^{PL}_\text{mrk}}(n)
    \end{equation*}
as $\bbC$-modules. This allows us to enhance \textbf{Definition \ref{higherGskein}}.

\begin{definition}\label{markedhigherGskein}
The \textbf{marked $\mathbb{G}$-decorated ribbons} is the category $$\operatorname{PLRib}'^{\mathbb{G};q}_{(1+1)+\epsilon}(D^4) =\operatorname{Fun}(\operatorname{PLRib}'_{(1+1)+\epsilon}(D^4),\widehat{\C}_q(\mathbb{G}))$$ of additive \textit{monoidal} internal functors.

The \textbf{monoidality of marked $\mathbb{G}$-decorated ribbons} is the data of a double monoidal natural isomorphism $\Omega_{\hat\otimes}: \Omega(-\#_H -) \Rightarrow\Omega(-)\hat\otimes_H\Omega(-)$, satisfying the following coherence property against the functoriality witness $\Omega_\circ$ of \textit{Remark \ref{laxdoublefunctor}},
\begin{equation*}
    \beta\ast\big(\Omega_{\hat\otimes}\circ(\Omega_\circ\times\Omega_\circ)\big) =\big(\Omega_\circ \circ(\Omega_{\hat\otimes}\times\Omega_{\hat\otimes})\big) \ast \mathfrak{b},
\end{equation*}
where $\mathfrak{b}$ is the interchanger on $\operatorname{PLRib}'_{(1+1)+\epsilon}(D^4)$ (see \textit{Remark \ref{skeinint}}) and $\beta$ is the interchanger on $\cV^X_q$ (see \S \ref{interchanger}).
\end{definition}
\noindent The monoidal condition on $\Omega_{\hat\otimes}$ simply means that it satisfies the obvious coherence diagrams against the associators of $\operatorname{PLRib}'_{(1+1)+\epsilon}(D^4)$ and $\widehat{\C}_q(\mathbb{G})$. We will not write them out here.


\subsubsection{Isomorphism classes of Wilson surfaces}\label{PLhomology}
Prior to moving on, let us examine the measureable isomorphism classes of objects in $\widehat{\C}_q(\mathbb{G})$. As mentioned in \S \ref{nonabeliansurface}, the Yoneda embedding allows us to start with the equivariantization $\C_q(\mathbb{G})^{\mathbb{U}_q\G}\subset \cV^X_q$, where $X=(\mathbb{G},\mu)$.

Recall that $\cV^X,\cV^X_q$ are additive and exact as categories of certain sheaves of sections over $X$. Henceforth, let us denote the resulting ring of isomorphism classes by $[\cV^X],[\cV^X_q]$.
\begin{proposition}\label{bigradedQ}
    There is an injective ring map $[\C_q(\mathbb{G})]\rightarrow H(B\mathbb{G},\mathbb{Z})[t][q,q^{-1}]$ into a bigraded polynomial algebra over the cohomology classes of $\mathbb{G}$.
\end{proposition}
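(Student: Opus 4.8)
The plan is to reduce the statement to the characteristic-class theory of the underlying Hermitian vector bundles and then promote it to a (bi)graded ring map. First, since the Yoneda embedding of \textbf{Proposition \ref{yoneda}} is fully-faithful it is essentially injective on objects, and it is additive and $\ostar$-monoidal, so it induces the asserted ring isomorphism $\pi_\bullet\widehat{\C}_q(\mathbb{G})\cong\pi_\bullet\C_q(\mathbb{G})$, where the ring structure is $\oplus$ for addition and $\ostar$ for multiplication with unit the trivial line bundle $\underline{\bbC}$. It then suffices to analyze $\pi_\bullet\C_q(\mathbb{G})$ as a subring of isomorphism classes in $\cV^X_q$ with $X=(\mathbb{G},\mu)$. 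I would next apply the forgetful functor $\cV^X_q\to\cV^X\to\operatorname{Bun}_\bbC(\mathbb{G})$ of \textbf{Proposition \ref{vectorbundles}}, sending $\Gamma_c(H^X)[[\hbar]]$ to its underlying complex bundle $H^X\to\mathbb{G}$ at $\hbar=0$. As $\cV^X$ is additive and exact, passing to isomorphism classes (and group-completing) factors this through $K^0(\mathbb{G})$, with $\oplus\mapsto+$ and $\ostar\mapsto\times$, the latter because $\ostar$ degenerates to the symmetric tensor product $\otimes$ in the semiclassical limit \eqref{ostarproduct}.

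The genuine ring map is then the Chern character $\operatorname{ch}\colon K^0(\mathbb{G})\to H^{\mathrm{even}}(\mathbb{G})$, which carries direct sum to sum and tensor to cup product. I would build in the two gradings as follows. The $\hbar$-adic filtration on $\cV^X_q$ refines each classical class by its order in $\hbar=\tfrac{1}{i}\log q$; recording this order with the Laurent variable $q$ upgrades the target to $H(\mathbb{G},\bbZ)[q,q^{-1}]$, with the $q\to1$ leading term recovering the classical invariant by the semiclassical limit theorem of \S\ref{semiclasscal}. To descend from $\mathbb{G}$ to the classifying $2$-stack $B\mathbb{G}$, I would invoke the standing requirement (footnoted in \S\ref{classical2states}) that $s,t$ induce maps $B\mathbb{G}\to BG$, together with gauge-invariance: by \textbf{Proposition \ref{invarfunctor}} and \textbf{Corollary \ref{bdyinvar}} the $\mathbb{U}_q\G$-invariant sector transgresses along $\mathbb{G}\to B\mathbb{G}$, so the Chern classes are pulled from $H(B\mathbb{G},\bbZ)$. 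The crossed-module fibration $B\mathsf{H}\to B\mathbb{G}\to BG$ contributes the extra polynomial generator $t$, most naturally read as the Chern-polynomial/degree variable $c_t(\phi)=\sum_j c_j(\phi)\,t^j$ refined by the $\h$-degree (the degree-$(-1)$ factor of the Lie $2$-algebra grading), yielding the bigraded target $H(B\mathbb{G},\bbZ)[t][q,q^{-1}]$.

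For injectivity I would restrict to the separable, finite-rank subcategory isolated in \textbf{Definitions \ref{geometric2graphs}} and \textbf{\ref{quantumhermitian}}: over a compact, hence finite-dimensional, Lie $2$-group the cohomology is finitely generated, and the subring generated by the Chern classes of the line-bundle generators is a free polynomial algebra on independent generators, so distinct isomorphism classes have distinct images. The main obstacle is precisely reconciling the \emph{ring-map} property with \emph{integral} coefficients: $\operatorname{ch}$ is only rationally injective, so the honest statement will require either a torsion-free hypothesis on $H(B\mathbb{G},\bbZ)$ for compact semisimple $\mathbb{G}$, or replacing $\operatorname{ch}$ by the total Chern class $c_t=1+c_1t+\dots$, which is integral and multiplicative under Whitney sum but demands care in translating $\oplus$ and $\ostar$ into the polynomial ring structure. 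The delicate final point is to verify that the $q$- and $t$-graded lift stays injective — that the $\hbar$-order and $\h$-degree gradings separate classes indistinguishable by the classical Chern class alone — which I would underwrite using the faithfulness of the $q\to1$ reduction established in \S\ref{semiclasscal}.
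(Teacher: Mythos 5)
Your proposal follows the paper's broad skeleton (forgetful functor to bundles, characteristic classes, an extra grading from the $\hbar$-deformation), but it diverges at three points, and two of them are genuine gaps rather than alternative routes.

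First, the injectivity mechanism. You route the argument through $K^0(\mathbb{G})$ and the Chern character $\operatorname{ch}$, and then you yourself observe the fatal problem: $\operatorname{ch}$ kills torsion and is only rationally injective, so it cannot produce an injective ring map into $H(B\mathbb{G},\mathbb{Z})[t][q,q^{-1}]$ with \emph{integral} coefficients. You offer the fix --- ``replacing $\operatorname{ch}$ by the total Chern class'' --- only as an unexplored alternative that ``demands care,'' but this fix \emph{is} the paper's actual proof: the paper never passes through $K$-theory or $\operatorname{ch}$ at all. It applies the forgetful functor $\cV^X\rightarrow\operatorname{Bun}_\bbC(X)$ of \textbf{Proposition \ref{vectorbundles}}, invokes the classification of complex bundles by their Chern classes, and sends a class $\phi$ directly to its Chern polynomial $c(\phi;t)=1+\sum_{i\leq\operatorname{rk}\phi}c_i(\phi)t^i\in H^\bullet(X,\mathbb{Z})[t]$, so that $\pi_\bullet\operatorname{Bun}_\bbC(X)\cong H(X,\mathbb{Z})[t]$ integrally; the $\star$-deformation of \cite{Bursztyn2000DeformationQO} then adjoins the $q$-grading, giving $\operatorname{Bun}_{\bbC,q}(X)$ and the target $H^\bullet(X,\bbZ)[t][q,q^{-1}]$. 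A hedged statement that the needed repair exists is not a proof of the statement as posed.

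Second, you misidentify both gradings' geometric origins. In the paper the variable $t$ is nothing but the formal bookkeeping variable of the Chern polynomial (indexing the cohomological degree of each $c_i$); it does not arise from the crossed-module fibration $B\mathsf{H}\rightarrow B\mathbb{G}\rightarrow BG$, nor from the degree-$(-1)$ factor $\h$ of the Lie 2-algebra. Relatedly, your descent from cohomology of $\mathbb{G}$ to cohomology of $B\mathbb{G}$ via ``transgression'' underwritten by \textbf{Proposition \ref{invarfunctor}} and \textbf{Corollary \ref{bdyinvar}} is unsupported: those results concern $\mathbb{U}_q\G$-invariance of cone module functors and of holonomy-dense monodromy states, and say nothing about pulling Chern classes back from the classifying 2-stack. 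The paper avoids needing any such descent by declaring at the outset that $X=B\mathbb{G}$ (realized as a simplicial filtration or classifying 2-stack, citing \cite{Henriques2006Integrating,Schommer_Pries_2011,Nikolaus2011FOUREV}) and regarding the states as bundles over that $X$. Your opening reduction $\pi_\bullet\widehat{\C}_q(\mathbb{G})\cong\pi_\bullet\C_q(\mathbb{G})$ via the Yoneda embedding is fine and matches the paper's implicit use of it, but the body of the argument as written would not compile into a proof of the integral, bigraded statement.
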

\begin{proof}   
    For this proposition, we shall consider $X=B\mathbb{G}$ as the classifying space (classifying 2-stack) of $\mathbb{G}$, which one can realize geometrically in terms of its {\v C}ech covers \cite{Schommer_Pries_2011,Baez:2004in,schreiber2013connectionsnonabeliangerbesholonomy,Nikolaus2011FOUREV,Henriques2006Integrating,Schreiber:2013pra}.

    
    Consider the classical, undeformed case first. By \textbf{Proposition \ref{vectorbundles}}, there is a forgetful functor $\cV^X\rightarrow \operatorname{Bun}_\bbC(X)$ which simply treats a geometric 2-graph state $\phi$ as a complex vector bundle. This induces an injective ring map $[\C(X)] \rightarrow [\operatorname{Bun}_\bbC(X)]$. 
    
    It is well-known that complex vector bundles are classified by its Chern classes $c_i\in H^{2i}(X,\mathbb{Z})$ \cite{atiyah2018k,book-charclass} up to isomorphism. The total Chern class $c(\phi) \in H^\bullet(X,\mathbb{Z})$ of a complex vector bundle $\phi\rightarrow X$ can be captured by the \textit{Chern polynomial} $$c(\phi;t) = 1+ \sum_{i\leq \operatorname{rk}\phi} c_i(\phi)t^i\in H^\bullet(X,\mathbb{Z})[t]$$ over the cohomology ring. Thus, we can write $[\operatorname{Bun}_\bbC(X)]\cong H(X,\mathbb{Z})[t]$, mapping isomorphism classes of 2-graph states $\phi\mapsto c(\phi;t)$ to its Chern polynomial. 
    
    Now in the quantum case, the sheaves of sections $\Gamma(X)\rightsquigarrow \Gamma(X)[[\hbar]]$ of complex vector bundles become $\star$-deformed over the power series ring $\bbC[[\hbar]]$ {\`a} la \cite{Bursztyn2000DeformationQO}. We let $\operatorname{Bun}_{\bbC,q}(X)$ denote the category of such $\star$-deformed complex vector bundles on $X$, as defined in \cite{Bursztyn2000DeformationQO}, equipped with $\bbC[[\hbar]]$-linear sheaf morphisms.

    This $\star$-deformation endows the Chern polynomials another grading coming from the powers of $q=e^{i\hbar}$. If we denote by the isomorphism classes $[\operatorname{Bun}_{\bbC,q}(X)] \cong H^\bullet(X;\bbZ)[t][q,q^{-1}]$, then the forgetful functor $\cV^X_q\rightarrow \operatorname{Bun}_{\bbC;q}(X)$ induces the desired injective map $[{\C}_q(\mathbb{G})] \rightarrow H(B\mathbb{G},\mathbb{Z})[t][q,q^{-1}]$.  
\end{proof}
\noindent In analogy with the theory of principal $G$-bundles \cite{book-charclass}, principal $\mathbb{G}$-bundles \cite{Baez:2012,schreiber2013connectionsnonabeliangerbesholonomy,Schommer_Pries_2011,Nikolaus2011FOUREV,Wockel2008Principal2A} are determined by pull-backs of cohomology classes in $H^\bullet(B\mathbb{G},\bbZ)$.

\medskip

Now upon equivariantizing by the $\mathbb{U}_q\G$-module structure, we have the 2-Chern-Simons observables $\mathcal{O}^\Gamma = \C_q(\mathbb{G})^{\mathbb{U}_q\G}$ defined in \S \ref{observ}. By \textbf{Proposition \ref{bigradedQ}} and the Yoneda embedding, we then have a map from $[\widehat{\C}_q(\mathbb{G})]\cong [\mathcal{O}^\Gamma]$ into the $[\mathbb{U}_q\G]$-invariant part of the bigraded cohomology ring $H(B\mathbb{G},\mathbb{Z})[t][q,q^{-1}]$. We denote this ring by
\begin{equation*}
    H_{\mathbb{G}}(B\mathbb{G},\bbZ)[q,q^{-1}][t]= \big(H(B\mathbb{G},\bbZ)[q,q^{-1}][t]\big)^{\mathbb{G}},
\end{equation*}
where the subscript "$\mathbb{G}$" denotes, morally, the "\textit{Lie 2-group $\mathbb{G}$-equivariant cohomology}" (cf. \cite{Schreiber:2013pra}) obtained upon equivariantizing by the $\mathbb{U}_q\G$-action. This notation is suggested by the fact (see \textbf{Definition \ref{additive2gaugesymms}}) that $\mathbb{U}_q\G$ has a $\mathbb{G}$-grading as a monoidal category.

By extending the above to Wilson surface states on non-trivial 2-graph lattices $\Gamma$, we then have the following.
\begin{proposition}\label{2ribboninvariant}
    Denote by $\operatorname{PLRib}'_{1+1}(D^4)=[\operatorname{PLRib}'_{(1+1)+\epsilon}(D^4)]$ the additive monoid of (formal linear combinations of) PL homeomorphism classes of PL 2-ribbons. Isomorphism classes of marked $\mathbb{G}$-decorated ribbons, $[\Omega]\in\operatorname{Fun}\big(\operatorname{PLRib}'_{1+1}(D^4),[\widehat{\C}_q(\mathbb{G})]\big)$, are then parameterized by the set
    \begin{equation*}
        \left\{H_{\mathbb{G}^B}(B\mathbb{G}^P,\mathbb{Z})[t][q,q^{-1}]\mid P\in \operatorname{PLRib}'_{1+1}(D^4)\right\},
    \end{equation*}
    where $B$ is the intersection of the singular graph of $P$ with its (oriented) boundary $\partial P$.
\end{proposition}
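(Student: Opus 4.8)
The plan is to exhibit the parameterizing bijection as the decategorification of $\Omega$ followed by the Chern-polynomial injection. Explicitly, to an isomorphism class $[\Omega]$ I would assign the function $P \mapsto c(\Omega(P);t)$ recording the total Chern $q$-polynomial of the Wilson surface state $\Omega(P)$; this is exactly the composite $[\Omega]\colon \operatorname{PLRib}'_{1+1}(D^4) \to \pi_\bullet\widehat{\C}_q(\mathbb{G})$ postcomposed with the injective ring map of \textbf{Proposition \ref{bigradedQ}}. First I would verify that this function is well-defined on PL-diffeomorphism classes. By \textbf{Definition \ref{markedhigherGskein}}, $\Omega$ sends the ambient isotopies $f_0, f_1$ and the rel-boundary diffeomorphisms $\alpha$ to measureable \emph{isomorphisms} $\Omega f_0, \Omega f_1, \Omega\alpha$ on the Wilson surface states; hence PL-diffeomorphic $2$-ribbons have isomorphic images, and $[\Omega(P)]$ descends to the monoid $\operatorname{PLRib}'_{1+1}(D^4) = \pi_\bullet\operatorname{PLRib}'_{(1+1)+\epsilon}(D^4)$.

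Next I would establish injectivity of the assignment. Under the full-faithful Yoneda embedding of \textbf{Proposition \ref{yoneda}}, the isomorphism class of $\Omega(P) = \,_{\sigma_0}\omega_{\sigma_1} \in \widehat{\C}_q(\mathbb{G}^P)$ is determined by, and determines, its image in $\pi_\bullet\C_q(\mathbb{G}^P)$; by \textbf{Proposition \ref{bigradedQ}} the latter injects into $H(B\mathbb{G}^P,\mathbb{Z})[t][q,q^{-1}]$ via the Chern $q$-polynomial. Since a monoidal internal functor between the (strictly associative) composition structures is determined up to natural isomorphism by its values on objects together with its functoriality and monoidality witnesses, the injectivity of \textbf{Proposition \ref{bigradedQ}} lifts to injectivity of $[\Omega] \mapsto \big(P \mapsto c(\Omega(P);t)\big)$. \textbf{Theorem \ref{invariance}} is what makes the target well-posed: the ring $H(B\mathbb{G}^{\Gamma_P},\mathbb{Z})[t][q,q^{-1}]$ is independent of the combinatorial triangulation $\Gamma_P$ of $P$, so it is an invariant $H(B\mathbb{G}^P,\mathbb{Z})[t][q,q^{-1}]$ of the diffeomorphism class.

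To identify the indexed family $\{H(B\mathbb{G}^P,\mathbb{Z})[t][q,q^{-1}]\}_P$ with $\operatorname{Map}\big(\operatorname{PLRib}'_{1+1}(D^4), H(B\mathbb{G},\mathbb{Z})[t][q,q^{-1}]\big)$, I would produce a canonical ring isomorphism $H(B\mathbb{G}^P,\mathbb{Z}) \cong H(B\mathbb{G},\mathbb{Z})$ for every $P$. Here the surjective submersions $s,t\colon \mathsf{H}\rtimes G \rightrightarrows G$, which induce maps of classifying $2$-stacks $B\mathbb{G}\to BG$ (see the footnote in \S \ref{classical2states}), combine with holonomy-density: a holonomy-dense state factorizes through the interchangers $\beta$ and the cocomposition $\Delta_v$ into states on the fundamental $2$-simplices, each modelled on a single copy of $\mathbb{G}$. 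Thus $B\mathbb{G}^P$ is assembled from copies of $B\mathbb{G}$ glued along the singular graph, and a Mayer--Vietoris / K\"unneth computation collapses the fibrewise cohomology onto the fixed fibre $H(B\mathbb{G},\mathbb{Z})[t][q,q^{-1}]$. Sections of the family then become plain functions, yielding the asserted $\cong \operatorname{Map}(\cdots)$.

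Finally, for surjectivity I would reverse the construction: given a function $\Phi$, choose on each generating $2$-ribbon a Wilson surface state with $c(\Omega(P);t) = \Phi(P)$ and propagate it across stacking $\circ_{B_1}$ and PL connected summation $\#_H$ using the functoriality and monoidality witnesses $\Omega_\circ, \Omega_{\hat\otimes}$ of \textbf{Definition \ref{compositionskein}} and \textbf{Definition \ref{markedhigherGskein}}. I expect this to be the main obstacle, for two reasons. First, \textbf{Proposition \ref{bigradedQ}} is only an \emph{injection}, so not every bigraded class is the Chern $q$-polynomial of an actual separable Hermitian sheaf; the honest statement is therefore a bijection onto those maps landing in the image of the Chern map, and one must check this image is closed under the ring operations induced by $\#_H$ and $\circ_{B_1}$. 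Second, the coherence identity relating $\Omega_{\hat\otimes}, \Omega_\circ$, the interchanger $\beta$ and the geometric interchanger $\mathfrak{b}$ could in principle cut down the admissible functions further; I would need to verify that, because both composition laws are strictly (co)associative and the Chern map is a ring homomorphism, this coherence imposes no condition beyond multiplicativity, so that every ring-multiplicative map into the image is indeed realized.
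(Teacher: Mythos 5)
Your first paragraph already contains everything the paper's own proof establishes: the paper simply observes that a level-preserving PL diffeomorphism $P\simeq P'$ induces an isomorphism $\mathbb{G}^P\cong\mathbb{G}^{P'}$ of Lie 2-groups, hence of their sheaves of sections, hence an identification $H^\bullet(B\mathbb{G}^P)=H^\bullet(B\mathbb{G}^{P'})$ --- and stops there. You reach the same well-definedness by a slightly different route (functoriality of $\Omega$ on isotopies via \textbf{Definition \ref{markedhigherGskein}}, rather than the induced Lie 2-group isomorphism), which is equivalent. The rest of the paper's ``parameterization'' is implicitly just the composite of decategorification with the injection of \textbf{Proposition \ref{bigradedQ}}; your injectivity paragraph makes this explicit and is correct --- note that at the $\pi_\bullet$ level $[\Omega]$ is merely a map of additive monoids, so it \emph{is} determined by its values on objects, and no appeal to functoriality witnesses is needed there. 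Your surjectivity caveat is also a legitimate sharpening rather than an error: since the Chern map of \textbf{Proposition \ref{bigradedQ}} is only injective, ``parameterized by'' can at best mean a bijection onto the image of that map, and the paper neither claims nor proves more.

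The one genuine flaw is your third paragraph. The proposed ring isomorphism $H(B\mathbb{G}^P,\mathbb{Z})\cong H(B\mathbb{G},\mathbb{Z})$ via a Mayer--Vietoris/K\"unneth collapse cannot hold as stated: $\mathbb{G}^P$ is (a fake-flatness subvariety of) a product of one copy of $\mathbb{G}$ per cell of a triangulation $\Gamma_P$, so K\"unneth produces tensor products of copies of $H(B\mathbb{G},\mathbb{Z})$ whose size grows with the number of cells --- the ring genuinely depends on $P$, and holonomy-density does not change this, since gluing-amenability identifies sheaves only over degeneracy intersections, not the full decoration spaces. The ``$\cong$'' in the statement is a set-level identification of the family of assignments $P\mapsto H(B\mathbb{G}^P,\mathbb{Z})[t][q,q^{-1}]$ with a mapping set, not a fibrewise ring isomorphism, and the paper offers no argument for it beyond the well-definedness above. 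If you want to keep your bijective formulation, replace the fixed fibre $H(B\mathbb{G},\mathbb{Z})[t][q,q^{-1}]$ by the product over diffeomorphism classes of the individual rings $H(B\mathbb{G}^P,\mathbb{Z})[t][q,q^{-1}]$, restricted to the image of the Chern map; that is the statement your first two paragraphs actually prove.
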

\begin{proof}
Clearly, the isomorphism class $[\Omega]$ is well-defined under (level-preserving) PL homeomorphism $P\simeq P'$.

Pick a combinatorial triangulation $\Gamma_P$ of $P$, whose underlying 1-skeleton $\Gamma_P^1$ gives a triangulation of its singular graph. Then by \textbf{Proposition \ref{bigradedQ}} (or its straightforward generalization to 2-graph states), isomorphism classes of marked $\mathbb{G}$-decorated ribbons are parameterized by $H_{\mathbb{G}^{\Gamma_{\partial P}^1}}(B\mathbb{G}^{\Gamma_P},\mathbb{Z})[t][q,q^{-1}]$, where $\Gamma_{\partial P}^1 = \Gamma_P^1\cap \partial P.$

Now thanks to \textbf{Theorem \ref{invariance}}, the 2-graph states ${\C}_q(\mathbb{G}^{\Gamma_P})$ do not depend on the choice of the combinatorial triangulation $\Gamma_P$ of a 2d simple polyhedron $P$.  Similarly, $\mathbb{U}_q\G^{\Gamma_{P}^1}$ do not depend on the choice of the induced triangulation on its singular graph $B$. Therefore the bigraded ring $H_{\mathbb{G}^B}(B\mathbb{G}^P,\mathbb{Z})[t][q,q^{-1}]$ does not depend on the triangulation.  This proves the statement.

\end{proof}

\begin{rmk}\label{knothomology}
    It is very interesting that the structure of bigraded cohomology rings appeared here, since the knot categorification program pioneered by Khovanov \cite{Khovanov:2006,Khovanov:2000,Khovanov_2010,rouquier:hal-00002981,Elias2010ADT,webster2013knot} produces bigraded chain complexes. The attentive reader may have also noticed that the definition of the 2-Chern-Simons $\mathbb{G}$-decorated ribbons \textbf{Definition \ref{markedhigherGskein}} bears a striking resemblance to the lasagna higher skein modules arising from Khovanov homology \cite{Manolescu2022SkeinLM,Morrison2019InvariantsO4}. Even further, the $(\infty,2)$-categories arising from categorical quantum groups \cite{Chen:2023tjf,Chen:2025?} underlying 2-Chern-Simons theory, as well as that arising from Soergel bimodules \cite{liu2024braided} underlying knot homology, are both braided monoidal. We will say more about this in \S \ref{higherskein}.
\end{rmk}

For posterity, let us recall the following notion \cite{book-charclass}.
\begin{definition}\label{chernnumber}
    The \textbf{(total) Chern number} of a complex vector bundle $E\rightarrow X$ on $X$ is 
    \begin{equation*}
        \operatorname{ch}(E) = \int_{[X]}c(E),\qquad [X]\in H_{\operatorname{dim}X}(X,\mathbb{Z}),
    \end{equation*}
    where $c(E)$ is the total Chern class of $E$ and $[X]$ is the fundamental homology class.
\end{definition}

\subsection{Reflection-positivity of $\mathbb{G}$-decorated ribbons}\label{reflectionpositivity}
By considering PL 2-ribbons as PL 2-manifolds, the following is immediate.
\begin{proposition}\label{higherdaggerPL2ribbon}
    Orientation reversals and a $2\pi$-rotations of the framing on $D^4$ induces the following functors
    \begin{align*}
        -^{\dagger_1}&:\mathcal{T}'^{PL}_\text{mrk}\rightarrow (\mathcal{T}'^{PL}_\text{mrk})^\text{1-op,2-op},\\
        -^{\dagger_2}&:\mathcal{T}'^{PL}_\text{mrk}\rightarrow (\mathcal{T}'^{PL}_\text{mrk})^\text{2-op}.
    \end{align*}
    which identify a 2-$\dagger$ structure on $\mathcal{T}'^{PL}_\text{mrk}$ \cite{ferrer2024daggerncategories,stehouwer2023dagger}.
\end{proposition}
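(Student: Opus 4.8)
The plan is to read off the two dagger functors directly from the geometry of the PL 4-disc, exactly as was done for the 2-graph states in \S \ref{2dagger}. Recall that $\mathcal{T}'^{PL}_\text{mrk}$ is a double bicategory whose $k$-cells are framed oriented PL $k$-geometries embedded in $D^4$: objects are framed points, horizontal 1-cells are boundary graphs $B$, horizontal 2-cells are marked PL 2-ribbons $P$, and the 3-cells are level-preserving PL diffeomorphisms. The structure group controlling orientation and framing is $\operatorname{PL}(2)=O(2)=SO(2)\rtimes\bbZ_2$, just as in \textbf{Example 5.5} of \cite{ferrer2024daggerncategories}. First I would define $-^{\dagger_1}$ by acting with the orientation-reversal $\bbZ_2\subset O(2)$ on the embedded PL geometries, sending each $P\mapsto P^{\dagger_1}=\bar P$ (and its boundary graphs $B_i\mapsto \bar B_i$ via $e\mapsto \bar e$); and define $-^{\dagger_2}$ by the $2\pi$-framing rotation in the $SO(2)$-factor, $P\mapsto P^{\dagger_2}=\tilde P$, which rotates the trivialization $\nu$ of the normal bundle.

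The key steps are then to verify that these geometric operations have the claimed variance. I would check that orientation reversal reverses both the horizontal composition direction (stacking along $[0,1]$) and the vertical composition direction (stacking along the slab graphs/the diffeomorphism layer), giving the stated target $(\mathcal{T}'^{PL}_\text{mrk})^\text{1-op,2-op}$; this is geometrically immediate since reversing the global orientation of $D^4$ flips the direction of both the $[0,1]$-axis used for $\#_H/\text{horizontal}$ and the axis used for the stacking $\cup_{B}$. Similarly, the $2\pi$-framing rotation leaves the horizontal composition alone but acts nontrivially on the top diffeomorphism layer — the $2\pi$-twisted PL 2-ribbon of fig. \ref{fig:2ribbontwist} is precisely the obstruction to triviality — hence it reverses only the 2-morphism (diffeomorphism) direction, yielding $(\mathcal{T}'^{PL}_\text{mrk})^\text{2-op}$. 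Functoriality on each cell-level follows because orientation reversal and framing rotation are themselves PL homeomorphisms of $D^4$, so they carry level-preserving diffeomorphisms to level-preserving diffeomorphisms and respect the gluing $\cup_B$ and connected-summation $\#_H$ operations (the latter up to the framing-reversal built into \textbf{Definition \ref{summable}}).

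Finally I would record that the two functors assemble into a coherent 2-$\dagger$ structure in the sense of \cite{ferrer2024daggerncategories,stehouwer2023dagger}. The essential point is the strong commutation of the two involutions, mirroring \eqref{commutedagger}: $\dagger_2\circ\dagger_1\cong\dagger_1^\text{op}\circ\dagger_2$, which holds because $O(2)=SO(2)\rtimes\bbZ_2$ is a semidirect product so the reflection and the rotation commute up to the expected reversal. Involutivity $\dagger_1^2=\id$ on the nose (orientation reversal is a strict $\bbZ_2$-action) and $\dagger_2^2\cong\id$ up to isotopy (a $4\pi$-framing rotation is PL-isotopic to the identity) then furnish exactly the coherence data identifying a 2-$\dagger$ structure.

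The main obstacle I expect is not the existence of the two functors — that is essentially forced by the $\operatorname{PL}(2)=O(2)$ structure group — but rather the careful bookkeeping of the variance labels across \emph{all three} layers of the double bicategory $\mathcal{T}'^{PL}_\text{mrk}$ simultaneously. Because $\mathcal{T}'^{PL}_\text{mrk}$ is a tricategory (equivalently a bicategory internal to $\mathsf{Cat}$, per \S 4.2 of \cite{douglas2016internalbicategories}), one must confirm that the \text{1-op},\text{2-op} decorations are consistent with the interchange associativity constraints (F2-8, F3-8, F3-14 of \cite{douglas2016internalbicategories}) under the level-preserving hypotheses; this is where the $2\pi$-twist subtlety of fig. \ref{fig:2ribbontwist} must be tracked most carefully, to ensure $\dagger_2$ genuinely reverses the top diffeomorphism layer and is not trivialized by an isotopy that fails to be level-preserving.
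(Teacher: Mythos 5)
Your overall route is the same as the paper's: the proposition is derived there directly from viewing marked PL 2-ribbons as framed oriented PL 2-manifolds with structure group $\operatorname{PL}(2)=O(2)=SO(2)\rtimes\bbZ_2$, exactly as set up in \S \ref{2dagger}, and your definitions of the two functors, your functoriality argument (ambient PL involutions preserve level-preserving diffeomorphisms, $\cup_B$ and $\#_H$), and your coherence data (strict involutivity of orientation reversal, involutivity up to isotopy of the framing rotation, and strong commutation \eqref{commutedagger} from the semidirect product structure) all coincide with that reading.

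There is, however, a genuine gap in your variance argument for $-^{\dagger_2}$. You identify ``2-op'' with reversal of ``the top diffeomorphism layer'', writing ``the 2-morphism (diffeomorphism) direction''. In $\mathcal{T}'^{PL}_\text{mrk}$ (\textbf{Proposition \ref{markedPL2skeins}}) the 2-morphisms are the PL 2-ribbons $P:B_0\Rightarrow B_1$ themselves; the PL diffeomorphisms $\mathscr{a}$ are the 3-cells. So $(\mathcal{T}'^{PL}_\text{mrk})^\text{2-op}$ means that the source and target boundary graphs of each 2-ribbon are exchanged, not that the diffeomorphism layer is inverted. The correct mechanism is that the framing of the boundary anchors is precisely the datum distinguishing incoming from outgoing boundary data (the bipartition $L=L^+\coprod L^-$ of the marking set): a rotation of the framing exchanges positively and negatively framed anchors, and therefore reads the 2-ribbon in the opposite vertical direction. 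This is exactly how $\dagger_2$ is used in the codimension-1 construction of \S \ref{reflectionpositivity}, where frame rotation converts $\,_{B}P_{\emptyset}$ into $\,_{\emptyset}\tilde{P}_{\tilde{B}}$, the boundary graph moving from the source slot to the target slot. Your appeal to fig. \ref{fig:2ribbontwist} is a red herring: the $2\pi$-twisted 2-ribbon there records the braiding of disjoint summands under rotation of a half-slab (a statement about the monoidal structure), and it neither produces nor obstructs the reversal of the 2-morphism direction. A secondary inaccuracy: describing $\dagger_1$ as flipping ``both the $\#_H$-axis and the stacking axis'' of $D^4$ is not literally an orientation reversal, since reflecting two coordinates is orientation-preserving; what yields the 1-op,2-op variance is the reversal of the orientation data of the embedded graphs and 2-ribbons themselves, which flips edge directions (1-op) and the reading direction $B_0\Rightarrow B_1$ (2-op) simultaneously, as witnessed by $(\tilde P\cup_B P)^{\dagger_1}\cong \bar P\cup_{\bar B}\bar{\tilde P}$ in \S \ref{reflectionpositivity}.
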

\noindent This notion, as well as the framing and orientation pairings that we have defined in \S \ref{higherdaggerpairings}, will be crucial for the reflection-positivity of the $\mathbb{G}$-decorated ribbons.

\subsubsection{Codimension-1}
The geometry we will consider is the following. Let $B\in \operatorname{Hom}_{\mathcal{T}'^{PL}_\text{mrk}}(n,0)$ denote a connected directed graph with an unframed outgoing anchor $v$, and take $\,_{B}P_\emptyset\in \operatorname{PLRib}_{2+\epsilon}'(D^4)$ to be a marked PL 2-ribbon with the trivial target boundary graph (recall \textit{Remark \ref{emptyboundary}}). Let $L^+$ denote the marking set of $P$, which are all incoming.

Pick any combinatorial triangulation $\Gamma_P$ of $P$. By rotating the framing $(e,\nu)\mapsto e^T=(e,-\nu)$ of the source edges in $B$ (see \S \ref{2dagger}), we obtain a marked PL 2-ribbon $\,_{\emptyset}\tilde P_{\tilde B}$ whose target graph is the oppositely-framed graph $\tilde B$, and the set $\tilde L$ of orientation-reversed markings $\tilde \ell$, which are all incoming as well. We equip it with the triangulation $\Gamma_{\tilde{P}}=\Gamma_P^{\dagger_2} = \tilde\Gamma_P$. This allows us to stack these PL 2-ribbons together to obtain $\,_\emptyset(\tilde P\cup P)_\emptyset$.

By functoriality, $\mathbb{G}$-decorated PL 2-ribbons on this configuration live in the pullback
\begin{equation*}
    \widehat{\C}_q(\mathbb{G}^{\tilde P})^{\text{op}}\times_{B}\widehat{\C}_q(\mathbb{G}^P)\subset \operatorname{Fun}_{\mathsf{Meas}}^{\bullet,\ast}({\C}_q(\mathbb{G}^{\tilde P})^{\text{op}}\times_{B}{\C}_q(\mathbb{G}^P),\mathsf{Hilb}).
\end{equation*}
Note the framing pairing of \textbf{Definition \ref{framingpairing}} is precisely a $\bullet$-module cone *-functor. It in fact defines a Wilson surface state, living in the left-hand side of the above.

    Denote by $(\tilde\omega_P,\omega_P)\in \widehat{\C}_q(\mathbb{G}^{\tilde P})^{\text{c-op}_v}\times_{B}\widehat{\C}_q(\mathbb{G}^P)$ the framing pairing state given in \eqref{geometrypair2}. The composition law $\circ$ in \textbf{Definition \ref{compositionskein}} sends it to a Wilson surface state on $\bar P\cup_B P$:
\begin{equation}
    \Omega_P= \Omega_{\tilde P\cup_BP } = \tilde\omega_P\circ\omega_P\in \widehat{\C}_q\big(\mathbb{G}^{\tilde P\cup_B P}\big).\nonumber
\end{equation}   

\subsubsection{Codimension-2}
Next, we start with the composite PL 2-ribbon $\,_\emptyset(\tilde P\cup_B P)_\emptyset$, which contains $n$ markings equipped with the marking set $\tilde L\ast L=(\tilde L\ast L)^+$. Each marking in $\tilde L\ast L$ are incoming, and takes the form $\tilde \ell^+ \ast \ell^+$ concatenated along the middle anchors in $B$, with endpoints given by the trivial graph $\emptyset$ with unframed base point $v$.

Consider the PL 2-ribbon $\big(\,_\emptyset (\tilde P\cup_B P)_\emptyset\big)^{\dagger_1}= \,_\emptyset (\tilde P\cup_B P)^{\dagger_1}_\emptyset$. It has equipped a marking set $\overline{(\tilde L\ast L)}= \bar L \ast \bar{\tilde{L}}$ containing the concatenation of framing-reversed \textit{outgoing} paths $\overline{\ell^+}=\bar\ell^-,~\overline{\tilde\ell^+} = \bar{\tilde \ell}^-$ along the orientation-reversed boundary graph $\bar B$ in the middle. Hence up to ambient PL homeomorphism we have
\begin{equation*}
    (\tilde P\cup_B P)^{\dagger_1} \cong \bar P\cup_{\bar B}\bar {\tilde P}.
\end{equation*}
Importantly, each marking in $\tilde L\ast L$ is framing-reversing PL homotopous to some marking in $\overline{(\tilde L\ast L)}$, which allows us to form the connected summation 
\begin{equation}
     \mathscr{P}_B = \big(\bar P\cup_{\bar B}\bar {\tilde P}\big)\#_H (\tilde P\cup_B P) \cong \big(\bar P \#_{H_1}\tilde P\big)\cup_{\bar B\vee B}\big(\bar{\tilde P}\#_{H_2}P\big),\label{2-ribboninterchange}
\end{equation}
where we have used the interchanger diffeomorphism mentioned in \textit{Remark \ref{skeinint}}, and $H = H_1\ast H_2: \bar L \ast \bar{\tilde{L}}\Rightarrow \tilde L \ast L$ are the given summation collars. See fig. \ref{fig:4-ball}.

\begin{figure}[h]
    \centering
    \includegraphics[width=1\linewidth]{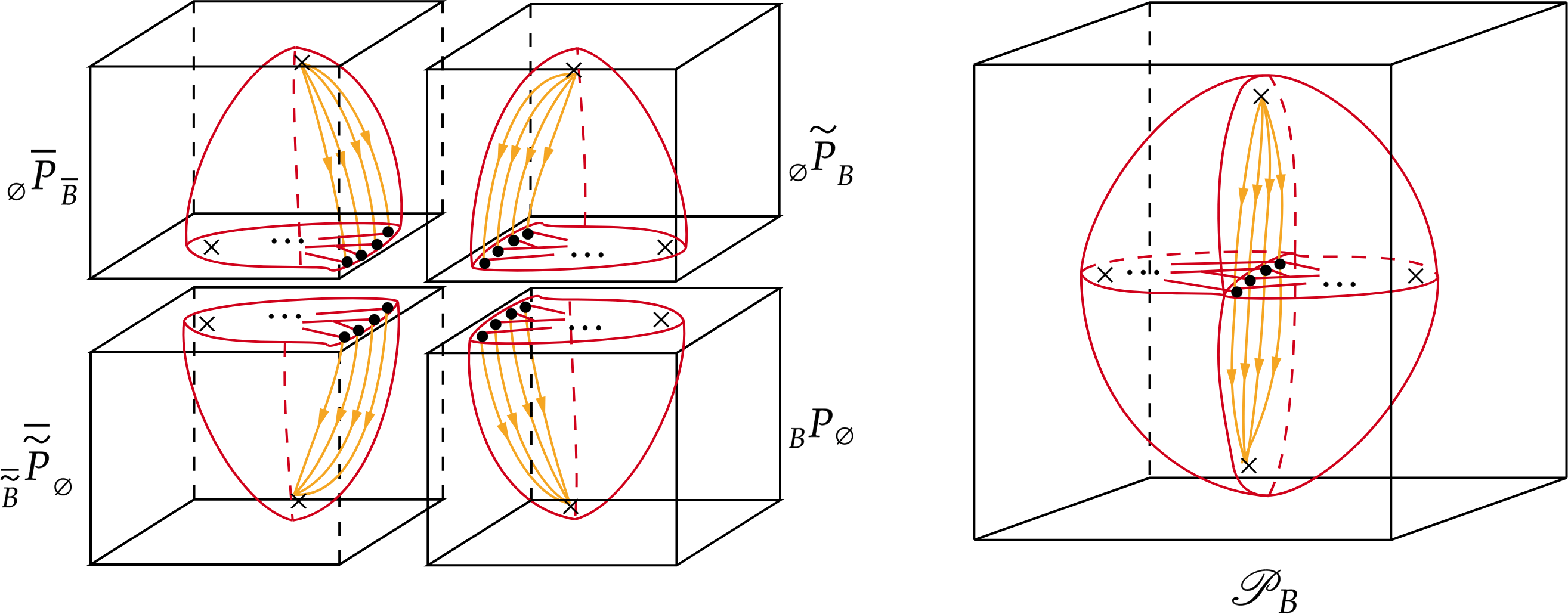}
    \caption{The "embellished" closed 2-ribbon $\mathscr{P}_B$ obtained from the construction. The trivial unframed anchors are marked with the symbol "$\times$".}
    \label{fig:4-ball}
\end{figure}

Let us then denote by $(\bar{\tilde\omega}_P,\bar\omega_P)\in \widehat{\C}_q(\mathbb{G}^{\bar{\tilde P}})^{\text{c-op}_v,\text{c-op}_h}\times_{\bar B}\widehat{\C}_q(\mathbb{G}^{\bar P})^{\text{c-op}_h}$ the framing pairing state \eqref{framingpairing} under the action of $-^{*_1}$. An argument analogous to the above then gives a Wilson surface state 
\begin{equation*}
    \bar\Omega_P=\Omega_{\bar{\tilde P}\cup_{\bar B } \bar P } = \bar{\tilde\omega}_P\circ\bar\omega_P\in \widehat{\C}_q\big(\mathbb{G}^{\bar{\tilde P}\cup_{\bar B} \bar P}\big).
\end{equation*}
By monoidality of $\mathbb{G}$-decorated PL 2-ribbons, we are then able to form the {monoidal product}
\begin{equation}
    \mathscr{O}_{P;B}=\bar\Omega_P~\hat\otimes_H~ \Omega_P \in \widehat{\C}_q(\mathbb{G}^{\mathscr{P}_B})\label{alterfoldstate}
\end{equation}
between these two Wilson surfaces. This distinguished state \eqref{alterfoldstate} has some interesting properties, which we will briefly mention in \S \ref{conclusion}.

We finally come to the main definition of this section. 
\begin{definition}
    We say the $\mathbb{G}$-decorated PL 2-ribbons $\operatorname{PLRib}'^{\mathbb{G}}_{(1+1)+\epsilon}(D^4)$ satisfy \textbf{reflection-positivity} iff for each marked PL 2-ribbon $\,_BP_\emptyset\in \operatorname{PLRib}'_{(1+1)+\epsilon}(D^4)$, the bigraded total Chern $q$-polynomial $c_\mathscr{O}=[\mathscr{O}_{P;B}]\in  H_{\mathbb{G}^B}(B(\mathsf{H}\rtimes G)^{\mathscr{P}_B},\bbZ)[t][q,q^{-1}]$ defined in \textbf{Proposition \ref{bigradedQ}} whose Chern number 
    \begin{equation*}
        \operatorname{ch}_\mathscr{O} = \int_{\big[(\mathsf{H}\rtimes G)^{\mathscr{P}_B}\big]} c_\mathscr{O}\in \bbZ[q,q^{-1}]
    \end{equation*}
    is a positive $q$-polynomial; namely $\operatorname{ch}_\mathscr{O}$ only has positive coefficients.
\end{definition}
\noindent Note the 1-holonomy degrees-of-freedom on $G$ is kept, since the boundary graph $B$ is kept fixed. 

\begin{rmk}
    Neglecting the $q$-grading in $\operatorname{ch}_\mathscr{O}$ for the moment, the positivity means that the Chern classes $c_{\mathscr{O},r}$ can be represented by positive real $(r,r)$-forms on $\mathbb{G}^{\mathscr{P}_B}$ for all $r\leq\operatorname{rk}\mathscr{O}_{P;B}$. Such conditions can in fact determine the geometry of $\mathbb{G}^{\mathscr{P}_B}$: for instance, the positivity of the first Chern class of a $\bbC$-line bundle $L\rightarrow X$ means that $c_1(L)$ can be represented by a K{\"a}hler form, making $X$ into a K{\"a}hler manifold; see \cite{book-symplectic}.
\end{rmk}

If we glue a 3-disc onto $\mathscr{P}_B$, then the embedded graph $B$ (or rather $\bar B\vee B$) keeps track of a \textit{separating surface} $M$ in a 3-manifold $\Sigma$ for whom $\mathscr{P}_B$ is its type-0 partition. Incidentally, these separating surfaces are crucial ingredients for the construction of the so-called \textbf{alterfold TQFTs} \cite{Liu:2023dhj}; we will say a bit more in regards to this connection in \S \ref{conclusion}.

\section{Stably equivalent $\mathbb{G}$-decorated 2-ribbons: $\operatorname{PLRib}^{\mathbb{G};q}_{(1+1)+\epsilon}(D^4)$}\label{stableequivG2ribbons}
Recall if a 3-manifold $\Sigma$ admits $P$ as a simple type-(0) partition, then $M\setminus P\cong D^3$ is a PL 3-disc. By performing a PL homeomorphism which "shrinks" this 3-disc to be small enough, the 3-manifold $\Sigma$ can be submersed into the slab $D^3\times[0,1]$, provided the original 2d polyhedron $P$ is already embedded into the slab. 

Conversely, given a 2d polyhedron $P$, we can obtain a 3-manifold $\Sigma$ by "filling in" $P$ by gluing a genus-0 3-handle $D^3$ along $\partial D^3 \xrightarrow{\sim}P$. As for the boundary of the simple polyhedron $P$, we first perform a PL homeomorphism that makes $P$ intersect the boundary slabs $D^3\times\{0,1\}$ transversally (see Thm. 2.32 in \cite{Liu:2024qth}) at the graphs $B_0,B_1$. This transversal intersection grants us an $\epsilon$-small collar $B_0\times[0,\epsilon]$ above $B_0$, say. Gluing in a PL 3-disc $D^3\simeq D^2\times[0,1]$ onto $P$ then looks, around this $\epsilon$-collar, like filling $B_0\times[0,\epsilon]$ with a PL 2-cylinder $D^2\times[0,\epsilon]$ along a PL homeomorphism $\partial D^2 \times[0,\epsilon] \cong B_0\times[0,\epsilon]$. 

If $B_0$ itself is closed, then filling in a 2-handle like this nets us a compact oriented Riemann surface $M_0$; see Def. 11 of \cite{Alekseev:1994au}. For instance, if $B_0 \simeq S^1\vee S^1$, then filling in a 2-disc gives the 2-torus $M_0\simeq \mathbb{T}^2$ (see \S \ref{boundaryCS}).  Similar argument applies to the "target" graph $B_1$. 

Thus this describes a way in which we can assign a 3-dimensional bordism $\Sigma:M_0\rightarrow M_1$ to a PL 2-ribbon configuration $\_{B_0}P_{B_1}$ by filling in 3-handles. Moreover, this 3-dimensional bordism can be smoothly embedded into the 4-disc $D^4$.

\subsection{Stable equivalence of partitions}
A central result in 2-dimensional topology is that compact oriented Riemann surfaces $M$ are determined up to homeomorphism by filling its standard graph $B$ with a 2-handle \cite{Alekseev:1994au,matveev2007algorithmic}. As such, the boundary configurations $M_0,M_1$ can be determined completely by the boundary graphs $B_0,B_1$.

But what about the bulk? Given a compact oriented 3-manifold $\Sigma$ whose boundary components $\partial \Sigma = M_0\coprod\bar M_1$ determine the standard graphs $B_0,B_1$ uniquely up to PL homeomorphism, we can find \textit{a} type-(0) simple partition $P$ of $\Sigma$ such that $\,_{B_0}P_{B_1}\in\operatorname{PLRib}_{2+\epsilon}'(D^4)$ is a PL 2-ribbon configuration. 

However, the problem is that $P$ may not be unique.
\begin{definition}
    We say two 2d partitions $P\sim P'$ associated to type-0 handlebody decompositions of a 3-manifold $\Sigma$ are equivalent iff they differ by an ambient isotopy in $\Sigma$. 
\end{definition}
\noindent Two equivalent simple partitions of course determine the same 3-manifold up to homotopy, but the problem is that a 3-manifold $\Sigma$ may admit various \textit{inequivalent} simple polyhedron partitions.\footnote{Recall \textit{Remark \ref{lengthhandle}} tells us that longer-length handlebody decompositions determine the underlying 3-manifold more accurately. Type-(0) decompositions have length one, so one does not expect 3-manifolds to have unique such partitions.}

How much distinct inequivalent type-(0) simple partitions of a given 3-manifold can differ is characterized by the following stable equivalence result of Thm. 3.5 in \cite{Sakata2022-il}.
\begin{theorem}
    Two handlebody decompositions of type-(0) of a closed connected oriented 3-manifold $\Sigma$ are equivalent $P\sim P'$ up to a finite number of 0-2/2-3 $\mathrm{handlebody ~moves}$ (fig. \ref{fig:handlemoves}).
\end{theorem}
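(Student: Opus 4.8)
The plan is to reduce the statement to the Matveev--Piergallini calculus for special spines of 3-manifolds, following \cite{Sakata2022-il,matveev2007algorithmic}. The key observation is that a type-(0) decomposition $\Sigma\setminus P\cong D^3$ exhibits $P$ as a \emph{spine} of the closed manifold $\Sigma$: removing the single complementary open $3$-ball, the remainder $\Sigma\setminus\operatorname{int}D^3$ deformation retracts onto $P$. Under the simpleness hypothesis (each point has a neighbourhood modelled on a nonsingular point, a triple point, or a trisection vertex), $P$ is in fact a \emph{special} spine in Matveev's sense, and conversely every special spine of a closed 3-manifold has complement a single ball. Thus type-(0) partitions of $\Sigma$ are the same data as special spines of $\Sigma$, and two partitions $P,P'$ present special spines of one and the same closed manifold.

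First I would invoke the Matveev--Piergallini theorem: any two special spines of a compact 3-manifold carrying at least two trisection (``true'') vertices are related by a finite sequence of the Matveev--Pachner move and its inverse --- which is exactly the $2$-$3$ handlebody move of fig.~\ref{fig:handlemoves}. To connect special spines with few vertices, and to equalize vertex counts between $P$ and $P'$ before applying the $2$-$3$ calculus, one supplements this with the bubble/stabilization move, which is the $0$-$2$ handlebody move. The existence of \emph{some} type-(0) partition to begin with is guaranteed by \textbf{Theorem \ref{handlebody}}, so both $P,P'$ land in the same move-class and the theorem follows once the two move systems are matched against the local models of fig.~\ref{fig:handlemoves}.

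The main obstacle is the bookkeeping at the interface between the two calculi. One must verify that each $0$-$2$ and $2$-$3$ move, being supported in a ball, preserves the type-(0) condition --- i.e.\ keeps the complement a single $3$-ball rather than raising the genus of the complementary handlebody --- which follows from locality together with an Euler-characteristic count for $\Sigma\setminus P$. One must also confirm that the simple polyhedra produced by our combinatorial triangulations $\Gamma_P$ can be brought into special form (all $2$-strata open discs, $4$-valent singular graph) by finitely many such moves, so that Matveev--Piergallini applies verbatim. Handling the boundaryless case with exactly one complementary ball, and checking that stabilization never forces a genus-raising move, is precisely the content of Thm.~3.5 in \cite{Sakata2022-il}, to which I would defer for the full verification.
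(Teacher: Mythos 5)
The paper does not actually prove this statement: it is imported verbatim as Thm.~3.5 of \cite{Sakata2022-il} (the sentence introducing the theorem says exactly this), so your closing deferral to that reference puts you on the same footing as the paper itself. What you add on top --- identifying type-(0) partitions with spines of $\Sigma$ (complement a single open ball, via \textbf{Theorem \ref{handlebody}}), then reducing to the Matveev--Piergallini calculus, with $2$-$3$ moves doing the work once each spine carries at least two true vertices and $0$-$2$ moves handling the low-vertex cases --- is indeed the mathematics underlying Sakata's result, so as a sketch of \emph{why} the cited theorem is true, it is sound.

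Two points of precision are worth fixing. First, simpleness does not imply specialness: Matveev--Piergallini is a theorem about \emph{special} spines (all $2$-strata open discs, singular graph with true vertices), whereas a simple spine may have annular or higher-genus $2$-components. You assert the implication early on ("$P$ is in fact a special spine in Matveev's sense") and only later concede it must be arranged by finitely many moves; the concession is the honest statement, and that reduction is a step where real work is required rather than something that "applies verbatim." Second, the $0$-$2$ move is the lune move, not a "bubble/stabilization" move. In \cite{Sakata2022-il}, and in the Remark immediately following the theorem in this paper, \emph{stabilization} is reserved for the genus-raising operation on the complementary handlebodies; the whole content of the type-(0) statement is that stabilizations are \emph{not} needed when both partitions have a single ball as complement. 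Conflating the lune move with stabilization would collapse this theorem into the weaker general stable-equivalence statement, rather than the strictly sharper assertion being made here.
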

\noindent Therefore, given a 3-manifold, its type-0 partitions are \textit{not} determined uniquely up to ambient isotopy, but instead up to stable equivalence. 

\begin{figure}
    \centering
    \includegraphics[width=1\linewidth]{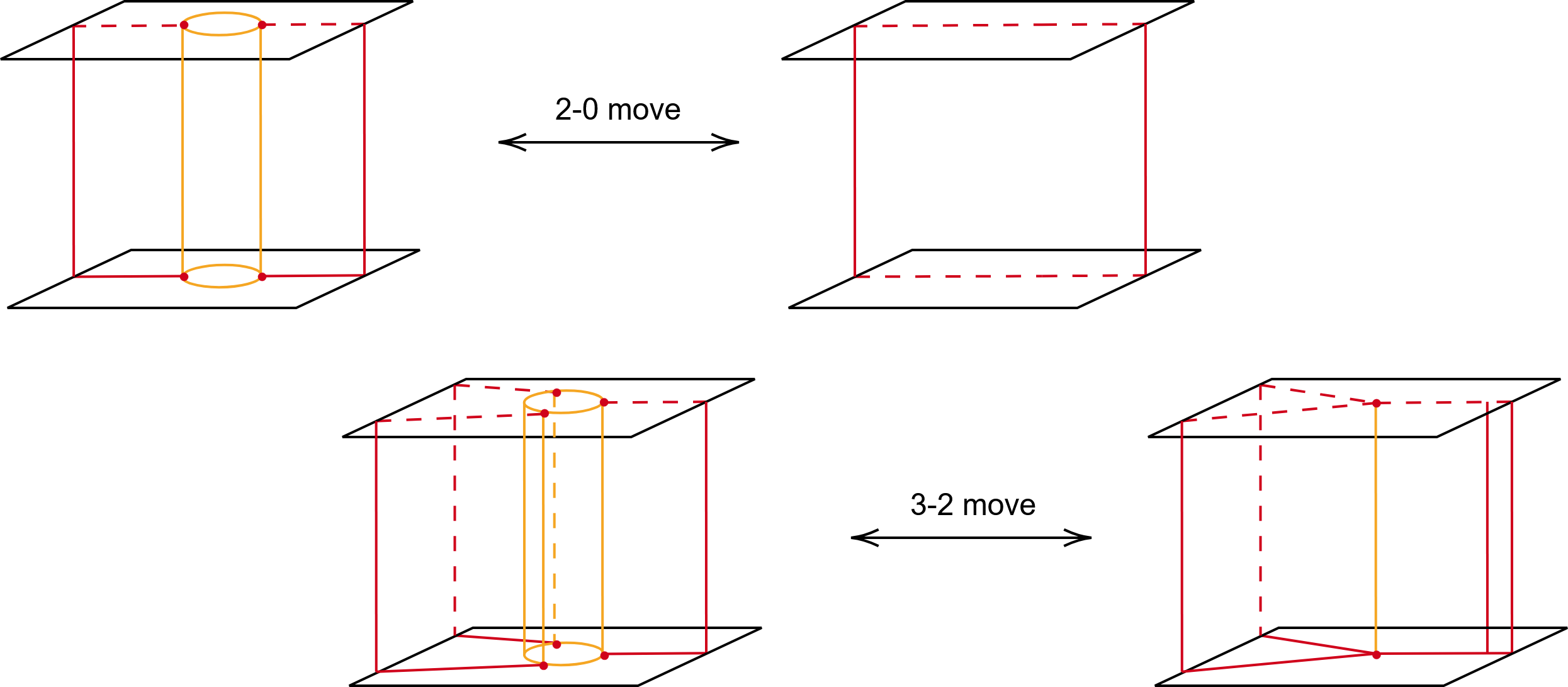}
    \caption{The 0-2/2-0 and 3-2/2-3 handlebody moves.}
    \label{fig:handlemoves}
\end{figure}

\begin{rmk}
    The full statement of stable equivalence in \cite{Sakata2022-il} is that two handlebody decompositions of types-$(g_1,\dots,g_n;P)$ and -$(g_1',\dots,g_n';P')$ of $\Sigma$ are equivalent upon a finite number of applications of handlebody moves of fig. \ref{fig:handlemoves}, \textit{as well as} stabilizations. This "stabilization" operation in essence adds handles to the partition, and hence increases the genera $g_i$. Of course, type-(0) partitions are by definition \textit{unstabilized}  (ie. one that does not come from performing stabilizations), and there has been work previously which classifies whether a given partition of general type is {unstabilized}. The result of Waldhausen \cite{WALDHAUSEN1968195}, for instance, states that any Heegaard splitting of $S^3$ with genus $g$ is stabilized for $g\geq 1.$
\end{rmk}

We must now quotient out the handlebody moves.
\begin{definition}
    The \textbf{stably-equivalent PL 2-ribbons}, $\operatorname{PLRib}_{(1+1)+\epsilon}(D^4)$, is the homotopy quotient $\operatorname{PLRib}_{(1+1)+\epsilon}(D^4)/\sim$, where $\,_{B_0}P_{B_1}\sim \,_{B_0}P'_{B_1}$ iff $P,P'$ are equivalent up to (a finite number of) handlebody moves away from (small $\epsilon$-collars of) the boundaries $B_0,B_1$. Define $$\operatorname{PLRib}_{(1+1)+\epsilon}(D^4)\equiv \bigoplus_n\operatorname{End}_{\mathcal{T}^{PL}_\text{mrk}}(n).$$
\end{definition}
\noindent Note we only perform handlebody moves in the bulk of the 4-disc.

\begin{proposition}\label{3epsilonbord}
    $\operatorname{PLRib}_{(1+1)+\epsilon}(D^4)$ is a monoidal double category equivalent to the category $\operatorname{Bord}^{SO}_{\langle 3,2\rangle+\epsilon}(D^4)$ of $(3+\epsilon)$-dimensional framed oriented bordisms equipped with a submersion into the 4-disc $D^4$, given by filling in a 3-disc.
\end{proposition}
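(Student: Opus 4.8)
The plan is to establish the equivalence of double categories $\operatorname{PLRib}_{(1+1)+\epsilon}(D^4)\simeq \operatorname{Bord}^{SO}_{\langle 3,2\rangle+\epsilon}(D^4)$ by exhibiting a functor in each direction and showing they are mutually quasi-inverse, with the monoidality following from the compatibility of disjoint union / connected summation on both sides. First I would construct the functor $F:\operatorname{PLRib}_{(1+1)+\epsilon}(D^4)\to \operatorname{Bord}^{SO}_{\langle 3,2\rangle+\epsilon}(D^4)$ using the 3-handle filling procedure detailed in the opening paragraphs of \S\ref{stableequivG2ribbons}. Concretely, $F$ sends a marked PL 2-ribbon $\,_{B_0}P_{B_1}$ to the oriented bordism $\Sigma: M_0\to M_1$ obtained by gluing a genus-0 3-disc $D^3$ along $\partial D^3\xrightarrow{\sim}P$ and filling the $\epsilon$-collars $B_i\times[0,\epsilon]$ with PL 2-cylinders. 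On objects (the framed anchors $n\in\bbZ_{\geq 0}$) and on horizontal 1-morphisms (the boundary graphs $B$), $F$ uses the classical 2-dimensional result that a compact oriented surface $M$ is determined up to PL homeomorphism by filling its standard graph $B$ with a 2-handle. The submersion into $D^4$ is recorded as part of the data, matching the definition of $\operatorname{Bord}^{SO}_{\langle 3,2\rangle+\epsilon}(D^4)$.

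Next I would show $F$ is well-defined on the homotopy quotient, i.e. that it descends through the handlebody-move relation $\sim$. This is precisely where \textbf{Theorem \ref{handlebodyinvariance}} (stated in the introduction) together with the stable equivalence theorem of \cite{Sakata2022-il} does the heavy lifting: two type-(0) partitions $P\sim P'$ related by $0$-$2$/$2$-$3$ handlebody moves (fig.~\ref{fig:handlemoves}) fill in to \emph{ambient isotopic} bordisms in $D^4$, so $F(P)\cong F(P')$ as objects of the bordism double category. The $\epsilon$-layer of diffeomorphisms is handled by sending the PL isotopies $f_0,f_1,\mathscr{a}$ on $\,_{B_0}P_{B_1}$ to the induced diffeomorphisms of the filled bordism; the level-preserving condition from \textbf{Proposition \ref{internalribbon}} guarantees these assemble into 2- and 3-cells in $\operatorname{Bord}^{SO}_{\langle 3,2\rangle+\epsilon}(D^4)$, and functoriality against stacking $P\cup_{B_1}P'$ follows because gluing 3-handles commutes with vertical composition of bordisms.

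For the inverse functor $G$, I would invoke \textbf{Theorem \ref{handlebody}} (the Casler result, every closed connected 3-manifold admits a simple type-(0) handlebody decomposition) applied relative boundary: given a $(3+\epsilon)$-bordism $\Sigma:M_0\to M_1$ submerged in $D^4$, choose a type-(0) partition $P$ with $M_i\setminus P$ the standard graphs $B_i$, yielding $\,_{B_0}P_{B_1}$. The content of $G$ being well-defined up to the relation $\sim$ is exactly the non-uniqueness of partitions, which is the reason we passed to the homotopy quotient in the first place: any two choices differ by handlebody moves, hence agree in $\operatorname{PLRib}_{(1+1)+\epsilon}(D^4)$. I would then verify $F\circ G\simeq \id$ and $G\circ F\simeq \id$; the former is essentially the statement that filling and then re-partitioning returns the same bordism up to isotopy, and the latter that partitioning and re-filling returns the same 2-ribbon up to handlebody moves. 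Finally, monoidality is checked by observing that $F$ carries disjoint union / PL connected summation $\#_H$ (which on the $n=0$ sector reduces to $\coprod$, per \textbf{Definition \ref{summable}}) to the disjoint-union monoidal structure on bordisms, with the summation collars $H$ translating into the tubes connecting bordism components.

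The main obstacle I anticipate is the careful bookkeeping of the $\epsilon$-layer and the \emph{boundary} behaviour: handlebody moves must be performed strictly in the bulk $D^3\times(0,1)$ away from $B_0,B_1$, so I must check that the stable-equivalence theorem of \cite{Sakata2022-il} can be applied \emph{relative} to fixed transverse boundary graphs, and that the resulting ambient isotopies in $D^4$ can be chosen level-preserving so as to define genuine 3-cells rather than merely abstract homeomorphisms. A secondary subtlety is the precise definition of $\operatorname{Bord}^{SO}_{\langle 3,2\rangle+\epsilon}(D^4)$ in the PL versus smooth category --- I would rely on \textit{Remark \ref{diffbords}} and \cite{douglas2016internalbicategories} to identify the internal-category structures on both sides, and argue that the PL-to-smooth comparison is an equivalence on the relevant low-dimensional bordisms via standard smoothing theory, so that the framed oriented structures (the $SO$-reduction) match the $\dagger_1,\dagger_2$ data of \textbf{Proposition \ref{higherdaggerPL2ribbon}}.
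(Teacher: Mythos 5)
Your proposal is correct and takes essentially the same route as the paper, which in fact states this proposition \emph{without} a written proof: it is treated as immediate from the filling-in construction at the opening of \S \ref{stableequivG2ribbons}, Casler's theorem (\textbf{Theorem \ref{handlebody}}), and the stable-equivalence theorem of \cite{Sakata2022-il} — precisely the ingredients you assemble into an explicit pair of quasi-inverse functors. One caveat: your appeal to \textbf{Theorem \ref{handlebodyinvariance}} for the descent of $F$ is both unnecessary and logically backwards, since that theorem concerns the algebraic invariant $\Omega$ and is proven \emph{after} this proposition; the geometric fact you actually need — that $0$-$2$/$2$-$3$ handlebody moves performed in the bulk do not change the filled bordism up to ambient isotopy rel boundary — is supplied by \cite{Sakata2022-il} alone.
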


\begin{rmk}\label{corners}
    The statement "filling in a 3-disc" needs more elaboration. In general, there are two ways to paste a handle to a smooth manifold smoothly: (i) a pair of small collars/tubular half-neighbourhoods \textit{with trivial normal bundles} around the attaching sites are chosen, then they are smoothly identified, or (ii) the handle boundary is attached directly, then the resulting manifold with corners are smoothed out. Details of the first construction can be found in \cite{kosinski2013differential}. In the second case, subtleties can arise since the smoothing of the corners is \textit{data}, which makes keeping track of $\operatorname{Bord}^{SO}_{\langle 3,2\rangle+\epsilon}(D^4)$ slightly tedious. As such, we shall take the first approach implicitly in the following. 
\end{rmk}

\subsection{Invariance under stable equivalence}\label{stableinvariance}
In this penultimate section of this paper, we shall prove the following central result. Recall the $\mathbb{G}$-decorated marked PL ribbons in \textbf{Definition \ref{markedhigherGskein}}.
\begin{theorem}\label{handlebodyinvariance}
    Each additive monoidal internal functor $\Omega: \operatorname{PLRib}_{(1+1)+\epsilon}'(D^4)\rightarrow \widehat{\C}_q(\mathbb{G})$ descends to $\operatorname{PLRib}_{(1+1)+\epsilon}(D^4)$. The \textbf{quantum 2-Chern-Simons 2-ribbon invariant} on the 4-disc $D^4$ is therefore defined as $$2\mathcal{CS}^\mathbb{G}_{q}(D^4)\equiv \operatorname{Fun}\big(\operatorname{PLRib}_{1+1}(D^4),[\widehat{\C}_q(\mathbb{G})]\big).$$
\end{theorem}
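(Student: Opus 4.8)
The plan is to prove invariance of $\Omega$ under the $0$-$2$/$2$-$0$ and $2$-$3$/$3$-$2$ handlebody moves of fig.~\ref{fig:handlemoves}, applied in the bulk $D^3\times(0,1)$ away from the boundary graphs $B_0,B_1$. Since $\Omega$ is a monoidal internal functor valued in $\widehat{\C}_q(\mathbb{G})$, and since by \textbf{Proposition \ref{yoneda}} the latter is equivalent to $\C_q(\mathbb{G}^P)$, it suffices to produce, for each handlebody move $P\rightsquigarrow P'$, a measureable natural isomorphism $\Omega(\,_{B_0}P_{B_1})\cong\Omega(\,_{B_0}P'_{B_1})$ of the associated Wilson surface states that is compatible with the functoriality witness $\Omega_\circ$ (\textit{Remark \ref{laxdoublefunctor}}) and the monoidality witness $\Omega_{\hat\otimes}$. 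The strategy is to reduce each handlebody move to a \emph{local} operation on a combinatorial triangulation $\Gamma_P$, and then to invoke the already-established invariance results. First I would observe that each handlebody move is realized, at the level of the 2-graph triangulation, as a composite of $2$d Pachner moves together with a PL homotopy of the partition $P$ through a contractible 3-cell $V$. The former is handled by \textbf{Theorem \ref{invariance}} (independence of the standard 2-algebra, hence of $\C_q(\mathbb{G}^P)$, under flips and bistellar subdivisions), and the latter by \textbf{Definition \ref{2flat}} (2-flatness: $\prod_{f\in\partial V}b_f=1$ for a contractible 3-cell).

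The key steps, in order, would be as follows. First, I would analyze the $0$-$2$/$2$-$0$ move: geometrically it creates (or annihilates) a 2-sphere's worth of partition bounding a genus-0 handle, i.e.\ it inserts a closed 2-graph $\Gamma_{S^2}$ that is null-homotopic in the filled 3-disc. By \textbf{Proposition \ref{orientationreverse}} and its extension \textbf{Proposition \ref{reversestack}}, holonomy-dense 2-graph states on such a closed null-homotopic configuration give $\C_q(\mathbb{G}^{\Gamma})\simeq\mathsf{Hilb}$, the monoidal unit. Composing with the unit-preservation of the cone functors (\textbf{Definition \ref{linearfunctors}}, $\omega(\eta)\in\mathsf{Hilb}^{\mathrm{f.d.}}$) and the monoidality $\Omega_{\hat\otimes}$, the inserted handle contributes only the unit object, yielding $\Omega(P)\cong\Omega(P')$. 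Second, for the $2$-$3$/$3$-$2$ move I would use that it replaces two faces of a tetrahedral 3-cell by three, while keeping the boundary $\partial V$ fixed; this is precisely a $3$d Pachner move whose 2-skeletal shadow is a bistellar subdivision on the partition. Here I would invoke 2-flatness again: the whiskerings along the two homotopic paths bounding $V$ differ by the face holonomy $b_D$, and \textbf{Proposition \ref{whiskeringhomotopy}} supplies the invertible measureable natural transformation $T_D:W_p\Rightarrow W_{p'}$ trivializing this difference, exactly as in the bistellar-subdivision lemma inside the proof of \textbf{Theorem \ref{invariance}}. Third, having established invariance of $\C_q(\mathbb{G}^P)$ (hence of $\pi_\bullet\widehat{\C}_q(\mathbb{G}^P)\cong H(B\mathbb{G}^P,\mathbb{Z})[t][q,q^{-1}]$ via \textbf{Proposition \ref{bigradedQ}}) under both moves, I would assemble these into a well-defined functor on the homotopy quotient $\operatorname{PLRib}_{(1+1)+\epsilon}(D^4)$, checking that the natural isomorphisms so produced are coherent against $\Omega_\circ,\Omega_{\hat\otimes}$ so that the monoidal internal structure descends. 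The final definition $2\mathcal{CS}^\mathbb{G}_{q}(D^4)\equiv\operatorname{Fun}(\operatorname{PLRib}_{2+1}(D^4),\pi_\bullet\widehat{\C}_q(\mathbb{G}))$ is then immediate upon passing to isomorphism classes.

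The main obstacle I expect is the \emph{coherence} bookkeeping rather than the individual invariances. Specifically, the $2$-$3$ move takes place near a trisection vertex and a triple point, so the whiskering-homotopy trivialization must be shown to commute with the interchanger $\beta$ (\textbf{Definition \ref{intch}}) and with the $U(1)$-gerbe data $\sigma$ (\textbf{Definition \ref{nonreg-glue}}) attached to those singular loci. In the strict case this is controlled by the cocycle condition $(\delta\sigma)_{1234}=1$ and the interchange consistency \eqref{gerbeinterchange} of \textbf{Definition \ref{consistentgerbe}}, which guarantee that the gerbe class $\sigma\cup\sigma'$ is move-invariant in {\v C}ech cohomology; but verifying that the composite $2$-cell $\Omega(\alpha)$ assigned to the handlebody isotopy respects both the functoriality and monoidality witnesses simultaneously—i.e.\ that the diagram pairing $\mathfrak{b}$ (\textit{Remark \ref{skeinint}}) against $\beta$ closes up—requires care. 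I would isolate this as a separate lemma, establishing that the handlebody moves act trivially on the gerbe class and on the interchanger up to coherent isomorphism, and only then conclude that $\Omega$ factors through the quotient. I note that in the weak case the Postnikov anomaly $\tau$ obstructs exactly the 2-flatness step (cf.\ \textit{Remark \ref{weak2ribbons}}, \textit{Remark \ref{pachnertau}}), so the argument as stated is genuinely restricted to strict $\mathbb{G}$; the lax generalization would require tracking the modification witnessing $\tau$ through each move.
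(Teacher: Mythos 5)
Your reduction of the handlebody moves is where the argument breaks down. You propose to realize each move as a composite of 2d Pachner moves together with a null-homotopic insertion, handling the former by \textbf{Theorem \ref{invariance}} and the latter by \textbf{Propositions \ref{orientationreverse}/\ref{reversestack}}. But Pachner moves relate different combinatorial triangulations of the \emph{same} simple polyhedron $P$, whereas the 0-2 and 2-3 handlebody moves change the homeomorphism type of $P$ itself: they create or destroy triple-point circles and trisection vertices. So \textbf{Theorem \ref{invariance}} cannot absorb any part of a handlebody move, and the "bubble" created by a 0-2 move is not a disjoint closed 2-graph $\Gamma_{S^2}$ to which \textbf{Proposition \ref{orientationreverse}} applies --- it is attached to the ambient polyhedron along new singular strata, carrying gerbe data $\sigma$ and interchanger data $\beta$ at exactly the loci you defer to a "separate lemma." That deferred lemma is not bookkeeping; it is the theorem. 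Similarly, describing the 2-3 move as a bistellar subdivision trivialized by whiskering homotopies conflates a change of triangulation with a change of polyhedron.

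The paper takes a different route that dissolves these issues: it recasts both moves as statements about \emph{summation collars}. A 0-2 move asserts $H'\ast H^{-1}=\id$ for two collars between the same pair of 2-ribbons, and a 2-3 move asserts $H_{13}^{-1}\ast H_{23}\ast H_{12}=\id$; invariance of $\Omega$ then reduces to showing that the monoidal structure $\hat\otimes_H$ of \eqref{summationcollar} is independent of the collar up to homotopy. Since the collar enters that formula only through the categorical Haar integral $\int_{\mathbb{G}^H}^\oplus d\mu_H(-)$, the actual content is the commutativity of diagram \eqref{commute}, proven in \textbf{Lemma \ref{2flats}} by combining: 2-flatness (to produce a Lie 2-group diffeomorphism $\mathbb{G}^{H'}\to\mathbb{G}^{H}$), \textbf{Proposition \ref{reversestack}} and \textbf{Proposition \ref{cylinderembeddings}} (to extract an equivalence $F:\C_q(\mathbb{G}^{H})\simeq\C_q(\mathbb{G}^{H'})$), \textbf{Proposition \ref{pullbackmeas}}, uniqueness of the invariant Haar measure (\textbf{Proposition \ref{haarunique}}), and Yetter's \textbf{Theorem \ref{isointegral}} identifying direct integral functors for equivalent measures. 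You correctly sensed that 2-flatness and \textbf{Proposition \ref{reversestack}} are the right ingredients --- they do appear in the paper's proof --- but your proposal never touches the measure-theoretic core (pushforward of Haar measures and equivalence of direct integrals), without which there is no isomorphism between the two monoidal products and hence no descent. Your closing remark about the Postnikov class $\tau$ obstructing the strict argument is consistent with the paper, which notes that $\tau$ enters precisely as an anomaly for the 3-2 move.
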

\begin{proof}
Since we have an equivalence $\widehat{\C}_q(\mathbb{G})\simeq\C_q(\mathbb{G})$ of measureable categories thanks to the Yoneda embedding, we will work directly with the 2-graph states in the following.

\begin{lemma}
    All PL 2-ribbons involved in the following need not have boundary components.
    \begin{itemize}
        \item Let $P,P'$ be connected summable PL 2-ribbons with two summation collars given by framing-reversing homotopies $H,H': \ell_j^-\Rightarrow\ell_k'^+$, then a 0-2 handlebody move is equivalent to the PL isomorphism $H'\ast H^{-1}=\id_{\ell_j^-}$.
        \item Let $P_1,P_2,P_3$ be pairwise connected summable PL 2-ribbons, and let $H_{12},H_{23},H_{13}$ be the associated summation collars. Then a 2-3 handlebody move is equivalent to the PL isomorphism $H^{-1}_{13}\ast H_{23}\ast H_{12} = \id_{\ell_j^-}.$
    \end{itemize}
\end{lemma}
\begin{proof}
    By $H_1^{-1}\ast H_2$, we mean the gluing $H_1^{\dagger_1}\cup_L H_2$ of the orientation-reversal of $H_1$ with $H_2$ along a  PL homeomorphism of their boundaries $L=\ell^-\coprod \ell'^+$ .
    
    The statement follows directly from the geometry; see fig. \ref{fig:handlecollar}. Away from (collars of) the boundary slices, the restriction of $H'\ast H^{-1}=\id$ to a neighbourhood in the interior is exactly a 2-0 handlebody move. Similarly, the equation $H^{-1}_{13}\ast H_{23}\ast H_{12} =\id$ gives rise to a 3-2 handlebody move.
\end{proof}

\begin{figure}[h]
    \centering
    \includegraphics[width=1\linewidth]{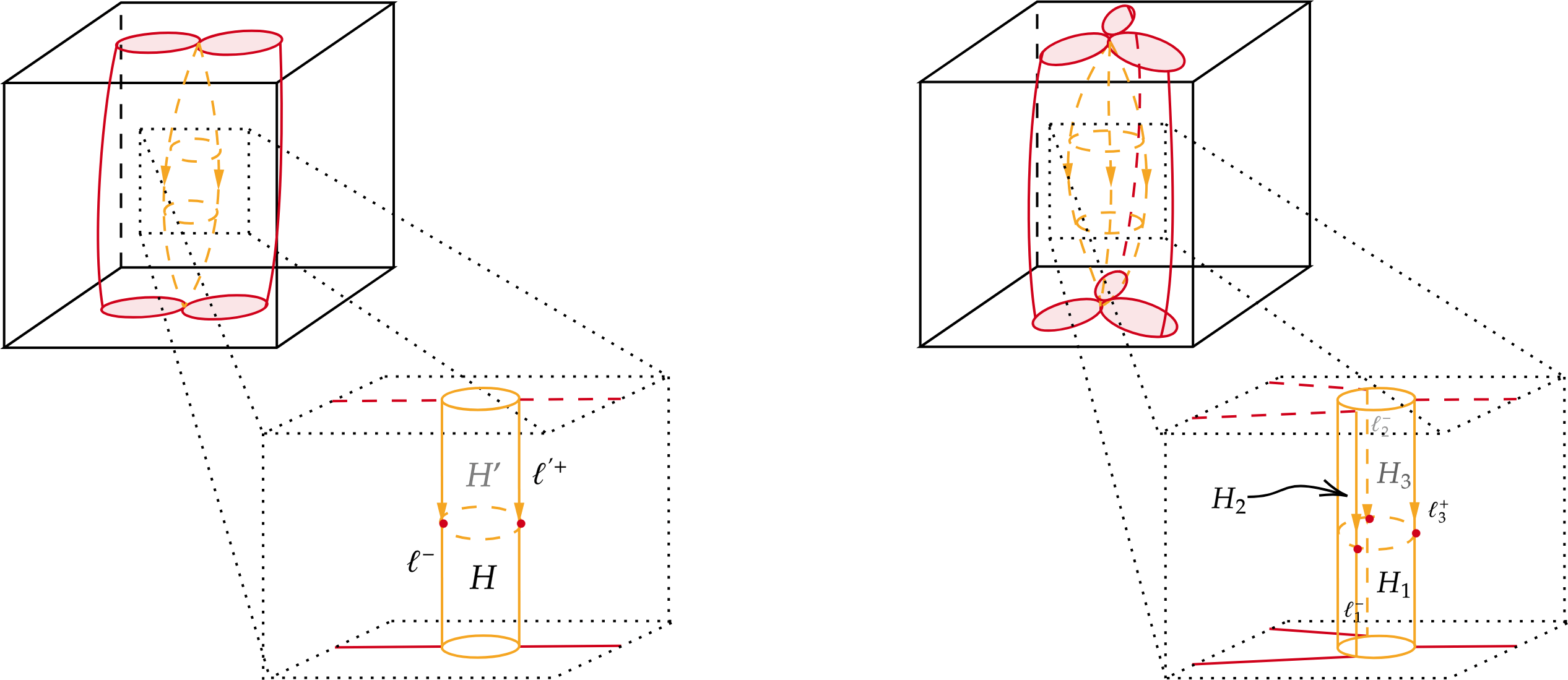}
    \caption{Configurations which relate the handlebody moves to homotopies between the summation collars.}
    \label{fig:handlecollar}
\end{figure}

The invariance under stable equivalence then follows provided the connected summation on $\mathbb{G}$-decorated 2-ribbon graphs do not depend on the summation collar $H$ up to homotopy. In other words, we have the diagram
\[\begin{tikzcd}
	{\widehat{\C}_q(\mathbb{G}^P)\times_H\widehat{\C}_q(\mathbb{G}^{P'})} && {\widehat{\C}_q(\mathbb{G}^P)\times_{H'}\widehat{\C}_q(\mathbb{G}^{P'})} \\
	{\widehat{\C}_q(\mathbb{G}^{P\#_HP'})} && {\widehat{\C}_q(\mathbb{G}^{P\#_{H'}P'})}
	\arrow["\simeq", from=1-1, to=1-3]
	\arrow["{\hat\otimes_H}"', from=1-1, to=2-1]
	\arrow["\cong", shorten <=14pt, shorten >=14pt, Rightarrow, from=1-1, to=2-3]
	\arrow["{\hat\otimes_{H'}}", from=1-3, to=2-3]
	\arrow["\simeq"', from=2-1, to=2-3]
\end{tikzcd}\]
From the formula \eqref{summationcollar} for the monoidal structure $\hat\otimes_H$, it is clear that it suffices to exhibit the homotopy commutative diagram
\begin{equation}\begin{tikzcd}
	{\C_q((\mathsf{H}\rtimes G)^H)} & {\C_q((\mathsf{H}\rtimes G)^{H'})} \\
	{\mathsf{Hilb}} & {\mathsf{Hilb}}
	\arrow["\simeq", from=1-1, to=1-2]
	\arrow["{\int_{\mathbb{G}^H}^\oplus d\mu_H(-)}"', from=1-1, to=2-1]
	\arrow["\cong",shorten <=8pt, shorten >=8pt, Rightarrow, from=1-1, to=2-2]
	\arrow["{\int_{\mathbb{G}^{H'}}^\oplus d\mu_{H'}(-)}", from=1-2, to=2-2]
	\arrow["="', from=2-1, to=2-2]
\end{tikzcd}\label{commute}    
\end{equation}
with respect to the direct Haar integral functors. 

\begin{tcolorbox}[breakable]
    \subsubsection*{Disclaimer.} Strictly speaking, we will need to pick a combinatorial triangulation $\Gamma_H,\Gamma_{H'}$ of the collars $H,H'$ for following argument. But due to \textbf{Theorem \ref{invariance}}, this choice does not matter up to equivalence, so for the sake of clarity we will work directly with $H,H'$.
\end{tcolorbox}

\begin{lemma}\label{2flats}
    If $H,H'$ are two homotopic summation collars, ie. they bound a contractible 3-cell in $D^2\times[0,1]^2\subset D^4$, then \eqref{commute} commutes.
\end{lemma}
\begin{proof}
We leverage the underlying geometry to extract the following two ingredients. 
\begin{enumerate}
    \item Recall from \textbf{Definition \ref{summable}} that $H,H'$ must be oriented and framed in the same way. Let $L=\ell^-\coprod \ell'^+$ and denote by $H'^{\dagger_1}\cup_L H\Rightarrow \id_{\ell^-}$ the given PL homotopy. 2-flatness \textbf{Definition \ref{2flat}} then guarantees a 2-gauge transformation $f: \mathbb{G}^{H'}\rightarrow \mathbb{G}^{H}$ on the 2-holonomies, which is a Lie 2-group diffeomorphism. 
    \item 
Let $\hat F:\C_q(\mathbb{G}^{H'^{\dagger_1}\cup_{L} H})\simeq\mathsf{Hilb}$ be the equivalence given to us by \textbf{Proposition \ref{reversestack}}. Holonomy-density $\ostar: \C_q(\mathbb{G}^{\bar H'})^{\text{m-op}}\times_L \C_q(\mathbb{G}^{H})\xrightarrow{\sim}\C_q(\mathbb{G}^{\bar H'\cup_{L} H})$ allows us to view $\hat F: \C_q(\mathbb{G}^{\bar H'})^{\text{m-op}}\times_L \C_q(\mathbb{G}^{H})\rightarrow \mathsf{Hilb}.$ From this, we can then use \textbf{Proposition \ref{cylinderembeddings}} to deduce that $\hat F$ in fact lives in the essential image of the embedding\footnote{$\hat F$ actually comes from the functor \eqref{disjointpair}, in fact, since it just performs a $\ostar$-tensor product on the two given 2-graph states. This is true for any equivalence provided by \textbf{Proposition \ref{reversestack}}.}
\begin{align*}
    \operatorname{Fun}_\mathsf{Meas}^{*,\bullet}(\C_q(\mathbb{G}^{H}),\C_q(\mathbb{G}^{H'})) \rightarrow \operatorname{Fun}^{*,\bullet}_{\mathsf{Meas}}(\C_q(\mathbb{G}^{\bar H'})^{\text{m-op}}\times\C_q(\mathbb{G}^{H}),\mathsf{Hilb}).
\end{align*}
Its preimage gives the equivalence $F: \C_q(\mathbb{G}^{H})\simeq \C_q(\mathbb{G}^{H'})$ which fits on the top row of \eqref{commute}.
\end{enumerate}

We now use $f$ and $F$ to construct a Lie 2-group diffeomorphism $G:\mathbb{G}^{H'}\rightarrow\mathbb{G}^{H}$ such that $\mu_{H}$ is equivalent to the pushforward $\mu_{H'}\circ G^{-1}$. First, using $f$ we induce the direct image functor $f_*: \C_q(\mathbb{G}^{H'})\xrightarrow{\sim} \C_q(\mathbb{G}^{H})$. The composite $F\circ f_*$ is then a measureable automorphism on $ \C_q(\mathbb{G}^{H'})\subset \mathsf{Meas}_{\mathbb{G}^{H'}}$, which by \textbf{Proposition \ref{pullbackmeas}} is measureably naturally isomorphic $G'^* \cong F\circ  f_*$ to the pull-back measureable functor along a Lie 2-group diffeomorphism $G':\mathbb{G}^{H'}\rightarrow\mathbb{G}^{H'}$. 

We put $G=f\circ G': \mathbb{G}^{H'}\rightarrow\mathbb{G}^{H}$ as the requisite Lie 2-group diffeomorphism. The push-forward measure $\mu_{H}' = \mu_{H'}\circ G^{-1}$ is an invariant Haar measure on $\mathbb{G}^{H}$, which by uniqueness \textbf{Proposition \ref{haarunique}} we have an equivalence $\mu_{H}\sim\mu_H'=\mu_{H'}\circ G^{-1}$. \textbf{Theorem \ref{isointegral}} then finally gives us the desired measureable natural isomorphism (in the first line)
\begin{align*}
    \int_{\mathbb{G}^{H}}^\oplus d\mu_{H}(-)&\cong \int_{\mathbb{G}^H}^\oplus d(\mu_{H'}\circ G^{-1})(-) \cong \int^\oplus_{G(\mathbb{G}^{H'})}d\mu_{H'}(-)\\
    & \cong \int^\oplus_{\mathbb{G}^{H'}}d\mu_{H'}(-)\circ \big(f\circ G'\big)^* \cong \int^\oplus_{\mathbb{G}^{H'}}d\mu_{H'}(-)\circ (F\circ f_*\circ f^*) \\ 
    &\Rightarrow \int^\oplus_{\mathbb{G}^{H'}}d\mu_{H'}(-)\circ F
\end{align*}
where we have used the composition associativity in $\mathsf{Meas}$ in the second line, and the adjunction $f^*\dashv f_*$ for coherent sheaves of $C_q(\mathbb{G})$-modules \cite{Fausk:2003,Kashiwara1990SheavesOM} in the last line.
\end{proof}
To treat the case with three summation collars $H_{12},H_{23},H_{13}$, we can simply pick $H'=H_{13}, H=H_{12}\cup_{L_2} H_{23}$ and apply the above result.
\end{proof}

    For \textit{weak 2-Chern-Simons} 2-ribbon invariants $2\mathcal{CS}^{\mathbb{G};\tau}_q(D^4)$, it can be seen from the above proof that the non-trivial associator $\tau$ contributes directly to an anomaly in the 3-2 handlebody move. On the other hand, the 1-2 handlebody move instead receives anomaly contribution from weak unitors of $\mathbb{G}$, which we do not typically enter into the data of the 2-holonomies.


\subsection{Connected summation with corners}\label{alterfold}
By combining the above main theorem and \textbf{Proposition \ref{2ribboninvariant}}, the 2-Chern-Simons 2-ribbon invariants are parameterized as a set by the invariant subset of the Chern $q$-polynomials $$H_{\mathbb{G}^B}(B\mathbb{G}^{P},\bbZ)[t][q,q^{-1}], \qquad  \,_BP_\emptyset\in\operatorname{PLRib}_{2+1}(D^4)$$ living on PL homeomorphism classes of PL 2-ribbons.

Now in accordance with \textbf{Proposition \ref{3epsilonbord}}, these 2-ribbon invariants should extend to invariants of framed oriented $(2+1)+\epsilon$-dimensional bordisms $\operatorname{Bod}^{SO}_{\langle 3,2\rangle+\epsilon}(D^4)$ via the handlebody decomposition. This then begs the question: what is the monoidal structure on $3+\epsilon$ bordisms induced from PL connected summation $\#$? 

For PL 2-ribbons without boundary graphs, this is simple: the idea is to interpret a summation collars $H$ as the \textit{core} of an attaching handle $\mathring{H} = S^2\times [0,1]$ associated to the usual \textit{interior} connected summation 
\begin{equation*}
    \Sigma_1 \# \Sigma_2 = (\Sigma_1\setminus D^3)\cup_{S^2}(\Sigma_2\setminus D^3),\qquad \partial \mathring{H} =S^2 \times S^0,
\end{equation*}
where $S^2$ is the sphere boundary $\partial D^3\simeq S^2$ of open 3-discs $D^3$ in the interior of the 3-manifolds $\Sigma_1,\Sigma_2$. Note that all notion of "attaching" is in the sense mentioned in \textit{Remark \ref{corners}}.

In the presence of boundary, we turn to the following notion from \cite{kosinski2013differential}.
\begin{definition}
    Let $\Sigma_1,\Sigma_2$ be smooth $n$-manifolds with connected boundary. The \textbf{boundary connected sum} $\Sigma_1\#_{\partial} \Sigma_2$ is the gluing $\Sigma_1\cup_f \Sigma_2$ along a diffeomorphism $f:D^{n-1}\rightarrow D'^{n-1}$ of (tame) $(n-1)$-discs $D^{n-1}\subset \partial \Sigma_1,~D'^{n-1}\subset \partial \Sigma_2$.
\end{definition}
\noindent Notice that, in contrast to ordinary interior connected summation, the \textit{entire} tame 2-discs are identified, not just its boundary. The idea is then that the anchors on a PL 2-ribbon are interpreted as the core of this 2-disc. 

The PL connected summation operation $\#_H$ can therefore be interpreted as a "combination" of both an interior connected sum and a boundary connected sum. Indeed, since the attaching handle $H^2$ whose core is given by the summation collar $H$ \textit{must} meet the boundary of the 3-manifold by construction, this meeting generates \textit{corners} upon connected summation. The prototypical form of a connected attaching handle in the interior is the cylinder $\mathring H= S_+^2\times [0,1]$ on a hemisphere $S_+^2\cong D^2\subset S^2$, whose corner is given by two (oppositely-framed) circles $S^1\times S^0$. See the top left corner of fig. \ref{fig:cornersummation}.

The more precise definition is the following, as inspired by "connected summations with corners" described in \S 2.1 of \cite{sarkar2018cohomologyringsclasstorus} and the "end summation" operation of Gompf \cite{Gompf:1983,Bennett_2016}.
\begin{definition}
    Let $\Sigma$ denote a 3-manifold with boundary $M$. An immersed 3-disc $D^3$ is called \textbf{partially embedded} iff
    \begin{itemize}
        \item it intersects the boundary $M$ at a 2-disc $D^3\cap M\cong S^2_-\cong D^2$, and
        \item there exists an $\epsilon$-collar $k_\epsilon$ of the boundary away from which the remaining portion $\tilde D^3$ of $D^3$ embeds into the interior $\operatorname{int}\Sigma$ of $\Sigma$.
    \end{itemize}
    The \textbf{corner connected summation} $\Sigma \#_{\mathring{H}}\Sigma'$ between two such 3-manifolds $\Sigma,\Sigma'$ with partially embedded 3-discs $D^3,D'^3$ is the result of gluing an attaching half-cylinder $\mathring{H}\cong S_+^2\times[0,1]$ (the summation collar), subject to the following conditions:
    \begin{enumerate}
        \item away from the $\epsilon$-collars $k_\epsilon,k'_\epsilon$, we have a diffeomorphism $f: \partial\mathring{H}\xrightarrow{\sim}\partial (\operatorname{int}\Sigma\setminus \tilde D^3) \coprod \partial (\operatorname{int}\Sigma \setminus \tilde D'^3)$, 
        \item on the boundary, we have a diffeomorphism $f_\partial: D^3\cap M \xrightarrow{\sim} D'^3\cap M'$, and finally,
        \item on the $\epsilon$-collars, we have a smooth interpolation from $f_\epsilon$ to $f_\partial$ around the corners of $\mathring{H}$.
    \end{enumerate}
\end{definition}
\noindent An illustration of this procedure is given in fig. \ref{fig:cornersummation}.

\begin{figure}[h]
    \centering
    \includegraphics[width=0.85\linewidth]{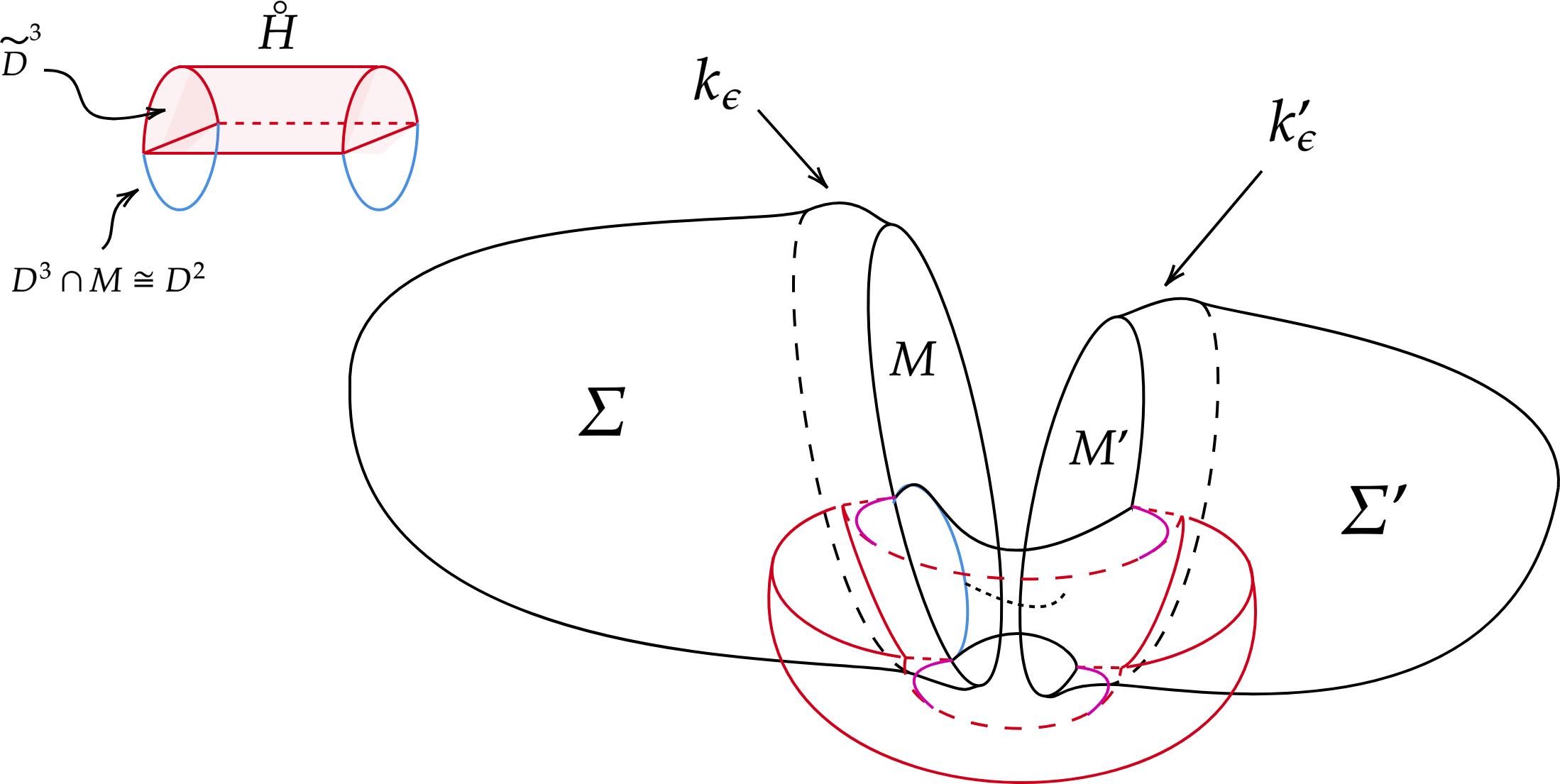}
    \caption{A demonstration of the corner connected summation operation on $\Sigma,\Sigma'$. The summation collar $\mathring{H}$ is colour-coded as red, while the boundary portions of the 3-disc $D^2\cong D^3\cap M$ are blue. Within the $\epsilon$-collars $k_\epsilon,k_{\epsilon}'$, the handle attachment map $f_\epsilon$ is smoothly interpolated into the boundary gluing map $f_\partial$ of the 2-disc; this is colour-coded in purple.}
    \label{fig:cornersummation}
\end{figure}

The composition of $\Sigma$ as bordisms in $\operatorname{Bord}_{\langle3,2\rangle+\epsilon}^{SO}(D^4)$ are once again given by stacking, but with the additional condition that there must be a diffeomorphism 
\begin{equation*}
    k_\epsilon \cup_M k_{\epsilon}' \cong M\times[0,2\epsilon]
\end{equation*}
between the $\epsilon$-collars of $\Sigma,\Sigma'$ around the middle 2-manifold $M$ and the cylinder on $M$. Moreover, the partially embedded 3-discs should become a genuinely embedded 3-disc in the bulk $\Sigma\cup\Sigma'$. This reflects the "stackability condition" for PL 2-ribbons described in \textit{Remark \ref{skeinint}}.

\begin{rmk}\label{freedman}
    It is interesting to observe the close relationship between the special handles with corners $\mathring{H}$ described in \S \ref{alterfold} and the Casson handles in $M^4$ \cite{Casson:1986}. This may allow one to perform Freedman's exotic 4-manifold surgery \cite{Freedman:1982} with 2-Chern-Simons 2-ribbon invariants $2\mathcal{CS}_q^\mathbb{G}(M^4)$. We will study this more explicitly in a future work down the line.
\end{rmk}

\section{Conclusion}\label{conclusion}
In this paper, we have constructed the 2-ribbon invariant $2\mathcal{CS}_q^\mathbb{G}(D^4)$ in a 4-disc of 2-Chern-Simons theory. This is a crucial towards the definition of the \textit{2-Chern-Simons TQFT}, with the ultimate goal of performing 4-manifold handlebody surgery on $M^4$ with them. For this, the 2-ribbon invariants $2\mathcal{CS}^\mathbb{G}_q(D^4)$ must of course first be extended to arbitrary 4-manifolds $M^4$.

In analogy with the Witten-Turaev-Reshetikhin TQFT in 3-dimensions \cite{Reshetikhin:1991tc,Turaev:1992,WITTEN1990285}, this presents a series of challenges that one must overcome.  Aside from extracting the higher-skein relations --- which we shall mention in \S \ref{higherskein} --- these include:
\begin{itemize}
    \item What is the notion of "2-sphericality" for the 2-ribbon invariants on $M^4=S^4$?
    \item What is the quantization condition for 2-Chern-Simons theory?
    \item How do we \textit{actually} compute $2\mathcal{CS}^\mathbb{G}_q(D^4)$?
\end{itemize}
These are actually the same question: 
\begin{center}
    {\large{\emph{What is the 2-representation theory for (vertical isomorphism classes of) $\mathbb{U}_q\G$?}}}
\end{center}
Indeed, in the usual skein theory {\`a} la Witten-Reshetikhin-Turaev, sphericality requires a notion of \textit{quantum dimension}, which is what allows us to compute knot polynomials/Kauffman bracket from irreducible representations of, for instance, $U_q\frak{sl}_2$. Moreover, positivity of the quantum dimension immediately implies the Chern-Simons level-quantization $q\in\mu_\infty$ \cite{Alekseev:1994au}.

\medskip

Toward this, there has been some discussions in the literature about what "higher-dimensional sphericality" and "2-categorical dimension" means one level up \cite{Mackaay:hc,Douglas:2018,Liu:2024qth}. Further, a definition of the 2-categorical \textit{quantum} dimension was given in \cite{Chen:2025?},
\begin{equation*}
    \mathfrak{Dim}_q(\cD): 1_\cD\Rightarrow 1_\cD,\qquad \cD\in \operatorname{2Rep}(\mathbb{U}_q\G),
\end{equation*}
which was shown to bypass the difficulty (\textit{Warning 2.5} of \cite{Douglas:2018}) suffered by the strict-pivotal setting.

\medskip

In a companion work, we will dive deeper into the categorical representation/character theory of $\mathbb{U}_q\G$ and make \textit{Remark \ref{modelchange}} precise. Based on its structures as a {$\mathsf{Meas}$-internal Hopf category}, we will tackle the aforementioned issues of 2-categorical "quantum dimensions/quantum 2-traces". This servers, together with smooth 4-manifold theory (cf. \textit{Remark \ref{freedman}}), as the foundation for the \textbf{4d 2-Chern-Simons TQFT}.

\medskip

We mention some more interesting aspects of the 2-Chern-Simons TQFT in the following.

\subsubsection*{Gapped and gapless boundaries of the 2-Chern-Simons TQFT.} We will show in \S \ref{boundaryCS} that the 3d Chern-Simons degrees-of-freedom can be extracted as the "degree-0 part" of its 4d derived counterpart. However, we note here that this is \textit{not} a form of "{transgression}" --- the latter is well-known to govern the Chern-Simons/Wess-Zumino-Witten holography \cite{Willerton:2008gyk,Carey_1997,Waldorf:2012,Waldorf2015TransgressiveLG}.

The works \cite{Chen:2024axr,Chen:2023integrable} suggest that transgressing the 2-Chern-Simons theory leads to a gapless 3d topological-holomorphic field theory that hosts \textit{derived} current algebras (cf. \cite{FAONTE2019389,Kapranov2021InfinitedimensionalL,Alfonsi:2024qdr}). This means that, at the level of TQFTs, there are two different types of boundaries for 2-Chern-Simons theory: the Chern-Simons/Witten-Reshetikhin-Turaev TQFT (which is gapped) and a topological-holomorphic field theory of "affine raviolo" type \cite{Garner:2023zqn,alfonsi2024raviolo} (which is gapless). 

An upcoming work by the author will describe this "affine raviolo Kac-Moody VOA" in more detail. 
\[\begin{tikzcd}
	{\text{2-Chern-Simons TQFT}} && \begin{array}{c} \substack{\text{Chern-Simons}\\ \text{Witten-Reshetikhin-Turaev}} \text{ TQFT} \end{array} \\
	\begin{array}{c} 3d ~\substack{\text{derived Kac-Moody} \\ \text{affine raviolo}}\text{ VOA} \end{array} && {2d~\text{affine Kac-Moody VOA} }
	\arrow["{\text{deg-0}}", from=1-1, to=1-3]
	\arrow["{\text{“2-transgression”}}"', dashed, from=1-1, to=2-1]
	\arrow["{\text{transgression}}", from=1-3, to=2-3]
	\arrow["{\text{deg-0?}}", dashed, from=2-1, to=2-3]
\end{tikzcd}\]
This presents a very interesting 4d/3d example of the topological bulk-boundary correspondence as described in, for instance, \cite{Kong:2022hjj,Kong2024-vr,Wen:2019}. 

\subsubsection*{Alterfolds with corners.} Recall the closed PL 2-ribbon $\mathscr{P}_B$ constructed in \S \ref{reflectionpositivity}. By pasting a 3d genus-0 3-handle onto $\mathscr{P}_B$, we obtain a stratified 3-manifold $M^3=M^3_\mathscr{P}$ for whom the associated distinguished Wilson surface state $\mathscr{O}_{P;B}\in\widehat{\C}_q(\mathbb{G}^{\mathscr{P}_B})$ \eqref{alterfoldstate} can be thought of as the decorations on $M^3$ \cite{Liu:2024qth}. 

However, the 3-manifold constructed in this way not only has a separating surface, but also \textit{corners} given by the marked anchors of the PL 2-ribbon $\mathscr{P}_B$. If we view $\mathscr{P}_B: \emptyset\Rightarrow \bar B\vee B\Rightarrow \emptyset$ is a split higher-idempotent (or better yet, a \textit{condensation higher-monad} \cite{Gaiotto:2019xmp,Johnson-Freyd:2020usu,decoppet2022morita,decoppet2022morita,Douglas:2018}), then it can be shown (more details will appear in a future work) that $\mathscr{O}_{P;B}$ determines a \textit{von Neumann $D^3$-algebra} $A_{P}\subset \cB(H_B)$ on some (separable, possibly infinite-dimensional) Hilbert space $H_B\in \mathsf{Hilb}\simeq \widehat{\C}_q(G^\emptyset)$. 

The functional integral construction \cite{Liu:2024qth} then gives us a \textit{3d alterfold TQFT} $Z_A$, whose value on $M^3=M^3_\mathscr{P}$ is given by a non-degenerate positive tracial state $\operatorname{tr}_{H_B}:A_{P} \rightarrow \R_{\geq 0}$. Such tracial states present an interesting challenge: its existence \textit{must}, in general, combine techniques from operator algebras \cite{book-operators,Takesaki1979} and the theory of modified traces \cite{geer:hal-00603999,Geer2011-cr,Geer_Patureau-Mirand_Turaev_2009}. 


\subsubsection*{Relation to Soergel bimodules.} In view of the results of \S \ref{boundaryCS}, 2-Chern-Simons theory contains a categorification $\C_q(G)$ of the Chern-Simons degrees-of-freedom decorating 1-tangles. In accordance with \textbf{Proposition \ref{bigradedQ}}, it determines a bigraded ring $H_G(BG,\bbZ)[t][q,q^{-1}]$ localized at the graph $B$. Due to \textit{Remark \ref{knothomology}}, one may wonder how this invariant is related to Khovanov-Rozansky homology.

Following \cite{2024arXiv240704891L}, we take $G=U_N$ with its maximal torus $T=U_1^N$, and consider the standard parabolics $G_i = U_1^{i-1}\times U_2\times U_1^{N-i-1}\subset G$ associated to each permutation $s_{i,i+1}$ in the Weyl group. One can extract from the integral cohomology $H^\bullet(BU_N,\bbZ) = H^\bullet(BU_N)$ (or any generalized cohomology $E$ over any $E_\infty$-ring spectrum with a complex orientation) the data of the so-called \textit{Bott-Samelson $H^\bullet(BT)$-$H^\bullet(BT)$ bimodules} $(H\bbZ) B_{i_1,\dots,i_m}^\bullet$, which are closely related to the $U_N$ Soergel bimodules that govern Khovanov-Rozansky homology \cite{webster2013knot,liu2024braided,Elias2010ADT,elias2020introduction}.

Together with the observations made in \S \ref{higherskein}, it may therefore be possible to relate the 2-Chern-Simons TQFT with the lasagna higher-skein modules of \cite{Morrison2019InvariantsO4,Manolescu2022SkeinLM}. 



\newpage

\appendix

\section{Relation to previous works}\label{prevworks}
In this appendix, we organize the relationship between the combinatorial quantization framework developed here with many of the (mostly) recent existing literature.

\subsection{Recovering the Chern-Simons observables}\label{boundaryCS}
The fact that 2-Chern-Simons action can recover Chern-Simons action at the boundary is known semiclassically \cite{Jurco:2018sby,Soncini:2014,Chen:2024axr}. Here, we provide a quantum version of this fact, by recovering the combinatorial framework of \cite{Alekseev:1994pa,Alekseev:1994au}. 


Let  ${\C}_q(G)$ denote the objects part\footnote{Given a (co)category object $C$ internal to a bicategory $\cV$, the functor $\operatorname{Cat}_\cV\rightarrow\cV: C\mapsto C_0$ which extracts the objects $C_0$ of $C$ is the right-adjoint of the \textit{discretization} functor $\cV\rightarrow\operatorname{Cat}_\cV: C_0\mapsto \big(C_0\rightrightarrows C_0\big)$ \cite{Miranda:2025}.} of the quantum categorical coordinate ring $\C_q(\mathbb{G})$. By construction, $\C_q(G)$ serves as the categorification of a quasitriangular Hopf *-algebra  isomorphic to the quantum coordinate ring $C_q(G)$ on $G$.

If the boundary $\partial P=B$ has a single component, then its objects part determines a Hopf cocategory $\C_q(G^B)$ localized on $B$. This object $\C_q(G^B)$ serves as the categorification of the degrees-of-freedom in Chern-Simons theory, in the sense that $\C_q(G^B)$ are given by measureable sheaves of modules over the quasitrigular Hopf algebra $\C_q(G^B)$, which is isomorphic to the one defined in Def. 12 of \cite{Alekseev:1994au}.  It is also not hard to see that the *-operation $-^{*_1}$ descends to the orientation reversal *-operation on $C_q(G)$ as defined in \cite{Alekseev:1994pa}.

Indeed, if $\phi^I_e\in C_q(G^B)$ denotes a basis of localized 1-graph states $e\in\Gamma^1$ such that $\phi_e^{IJ}(\{h_{e'}\}_{e'}) = h^{IJ}_e$ is the $(I,J)$-th entry of $h_e$, then we can see from \S \ref{locality} that the coproduct restricted on $C_q(G^B)$ satisfies
    \begin{equation*}
        (-\cdot-)\big(\Delta_0(\phi^{IJ}_e)\big) =\sum_K\Big( \sum_{e_1\ast e_2=e} \phi_{e_1}^{IK} \phi^{KJ}_{e_2} - \sum_{e_2\ast e_1 =e} \phi^{IJ}_{e_2} \phi^{JK}_{e_1} \Big),
    \end{equation*}
which is precisely the coproduct on the Chern-Simons holonomies \cite{Alekseev:1994pa}. The $R$-matrices $(\bar R_0)_e$ on each edge $e\in B$ can also be checked to be of the same form as eqs. (2.45)-(2.48) in \cite{Alekseev:1994pa}; they govern the cocommutativity of the Wilson lines localized on adjacent edges in $B$.

\subsubsection*{Example: the standard Chern-Simons algebra on the 2-torus}
Let us make the above more precise, with the example of the unpunctured 2-torus $\mathbb{T}^2 = \Sigma_{1,0}$. The \textbf{standard graph} $B_{1,0}$ (see Def. 11 in \cite{Alekseev:1994au}) is a(n oriented) graph with a single 4-valent crossing, homotopically equivalent to the bouquet $S^1\vee S^1$ of two circles based at the crossing vertex $v$.

The first step is to recover $B_{1,0}$ from the marked PL 2-ribbons $\mathcal{T}'^{PL}_\text{mrk}$ in \textbf{Definition \ref{markedPL2skeins}}.
\begin{lemma}\label{edgecontraction}
    The standard graph $B_1$ of the 2-torus $\Sigma_1=  \mathbb{T}^2$ can be recovered from objects in the ribbon 2-algebra $\operatorname{End}_{\mathcal{T}'^{PL}_\text{mrk}}(2)$.
\end{lemma}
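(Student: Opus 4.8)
The plan is to recover the standard graph $B_{1,0}$ of the 2-torus as a specific endomorphism in $\operatorname{End}_{\mathcal{T}'^{PL}_\text{mrk}}(2)$, using the combinatorial structure of the marked PL 2-ribbons. First I would start with the object $2\in\mathcal{T}'^{PL}_\text{mrk}$, namely two framed external points, and identify a distinguished 1-morphism $B:2\rightarrow 2$ whose underlying directed graph, once we contract appropriately along the markings, is homotopy equivalent to $S^1\vee S^1$. The key geometric observation is that the bouquet of two circles based at a single 4-valent crossing vertex $v$ can be obtained from a PL 2-ribbon whose boundary graph has two incoming and two outgoing anchors (so that $n=m=2$ in the notation of \textbf{Proposition \ref{markedPL2skeins}}), where the two loops arise from identifying each incoming anchor with the corresponding outgoing anchor. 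In the language of the double bicategory, this identification is precisely the edge-contraction operation, realized by stacking the trivial 2-ribbon against itself and collapsing the transverse summation collars.

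Next I would make the edge-contraction explicit. The marked PL 2-ribbon $\,_{B_0}P_{B_1}$ that yields $\mathbb{T}^2$ upon filling in a 3-handle must, by the discussion in \S \ref{stableequivG2ribbons}, have a boundary graph $B_0$ whose 2-handle filling gives the surface $\Sigma_{1,0}$. Since $B_0\simeq S^1\vee S^1$ produces the 2-torus, the task reduces to exhibiting this wedge of circles as the endomorphism $1_2$ or a twisted variant thereof in $\operatorname{End}_{\mathcal{T}'^{PL}_\text{mrk}}(2)$. I would use the identity 1-morphism $1_2:2\rightarrow 2$, given by straight lines $\{1,2\}\times[0,1]\times\{0\}$, and then apply the PL connected summation $\#$ together with the wedge-sum composition $\vee$ of 1-morphisms to glue the two strands into loops. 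Concretely, composing $0\xrightarrow{B}2$ with $2\xrightarrow{B'}0$ (in the sense of \textit{Remark \ref{emptyboundary}}, where $0$ is the unframed base point) and contracting along the two anchors produces a graph with a single 4-valent vertex and two loops, which is exactly $B_{1,0}$.

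The main obstacle I expect is bookkeeping the framing and orientation data correctly so that the recovered graph genuinely matches the standard graph of \cite{Alekseev:1994au}, rather than some framing-shifted or orientation-reversed variant. In particular, the crossing vertex of $B_{1,0}$ carries a specific cyclic ordering of its four incident half-edges inherited from the orientation of $\mathbb{T}^2$, and I must check that the level-preserving PL diffeomorphisms available in $\mathcal{T}'^{PL}_\text{mrk}$ (the isotopies $f_0,f_1$ and the rel-boundary diffeomorphism $\mathscr{a}$) realize precisely this ordering after edge contraction. I would verify this by tracking the incoming/outgoing anchor bipartition $L=L^+\coprod L^-$ through the contraction, ensuring that the two incoming anchors and two outgoing anchors pair up into the two loops with the correct relative framing. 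Once this matching is established, the lemma follows: the standard graph $B_{1,0}$ is recovered as the edge-contraction of the appropriate object in $\operatorname{End}_{\mathcal{T}'^{PL}_\text{mrk}}(2)$, and the subsequent 2-handle filling recovers the torus $\Sigma_{1,0}=\mathbb{T}^2$ as claimed.
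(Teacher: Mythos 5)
There is a genuine gap, and it is exactly at the point you flag as your ``main obstacle'' but never resolve. Your construction starts from the identity $1_2$ (two parallel, non-crossing strands) and closes it off with caps $0\xrightarrow{B}2$ and $2\xrightarrow{B'}0$, then contracts. A 4-valent vertex produced this way carries the \emph{non-interleaved} cyclic ordering of half-edges (the pattern $a\,a^{-1}\,b\,b^{-1}$): the two loops sit side by side rather than crossing. But the standard graph $B_{1,0}$ of $\mathbb{T}^2$ in \cite{Alekseev:1994au} is a closed 4-valent \emph{crossing} graph, i.e.\ the two loops interleave at the vertex (pattern $a\,b\,a^{-1}\,b^{-1}$). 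This is not bookkeeping that PL isotopy can repair: filling a 2-handle along the non-interleaved ribbon graph yields $S^2$, not the torus, and no level-preserving diffeomorphism $f_0,f_1,\mathscr{a}$ available in $\mathcal{T}'^{PL}_\text{mrk}$ converts one cyclic ordering into the other --- indeed that rigidity is precisely why the paper can list distinct minimal graphs up to ambient PL diffeomorphism. Homotopy equivalence to $S^1\vee S^1$, which is all your construction guarantees, is far too weak, since the surface recovered by handle-filling depends on the ribbon (cyclic-order) structure, not the homotopy type.

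The paper's proof supplies the missing ingredient at the outset: it classifies the connected \emph{minimal} graphs in $\operatorname{End}_{\mathcal{T}'^{PL}_\text{mrk}}(2)$ up to ambient PL diffeomorphism, finding three --- the identity $1_2$ and the two 4-valent diagrams $B_+$ and $B_\times$ of fig.~\ref{fig:standardgraph}, which carry the crossing data. One then closes off $B_\times$ by gluing $1_2$ into its incoming and outgoing vertices and contracts the middle internal edge by a PL homotopy; the crossing of $B_\times$ is what produces the interleaved cyclic order at the resulting single vertex, hence genuinely the standard graph of $\mathbb{T}^2$. (Consistently with this, \textit{Remark \ref{orientationgraph}} shows that making the same construction with $B_+$ instead yields the standard graph of the oppositely-oriented torus $\bar{\mathbb{T}}^2$ --- so the choice of crossing diagram is data your construction never sees.) To repair your argument you would have to replace $1_2$ by $B_\times$ (or $B_+$) as the graph being closed off, at which point you recover the paper's proof.
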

\begin{proof}
     We call a connected graph $B\in \operatorname{End}_{\mathcal{T}'^{PL}_\text{mrk}}(n)$ \textit{minimal} when it is indecomposable as a wedge sum of graphs in $\operatorname{End}_{\mathcal{T}'^{PL}_\text{mrk}}(n)$. Setting $n=2$, there are three connected minimal graphs up to ambient PL homeomorphism; they are the identity $1_2$ (two parallel lines) and the two diagrams $B_+,B_\times$ illustrated in fig. \ref{fig:standardgraph}.

\begin{figure}[h]
    \centering
    \includegraphics[width=1\linewidth]{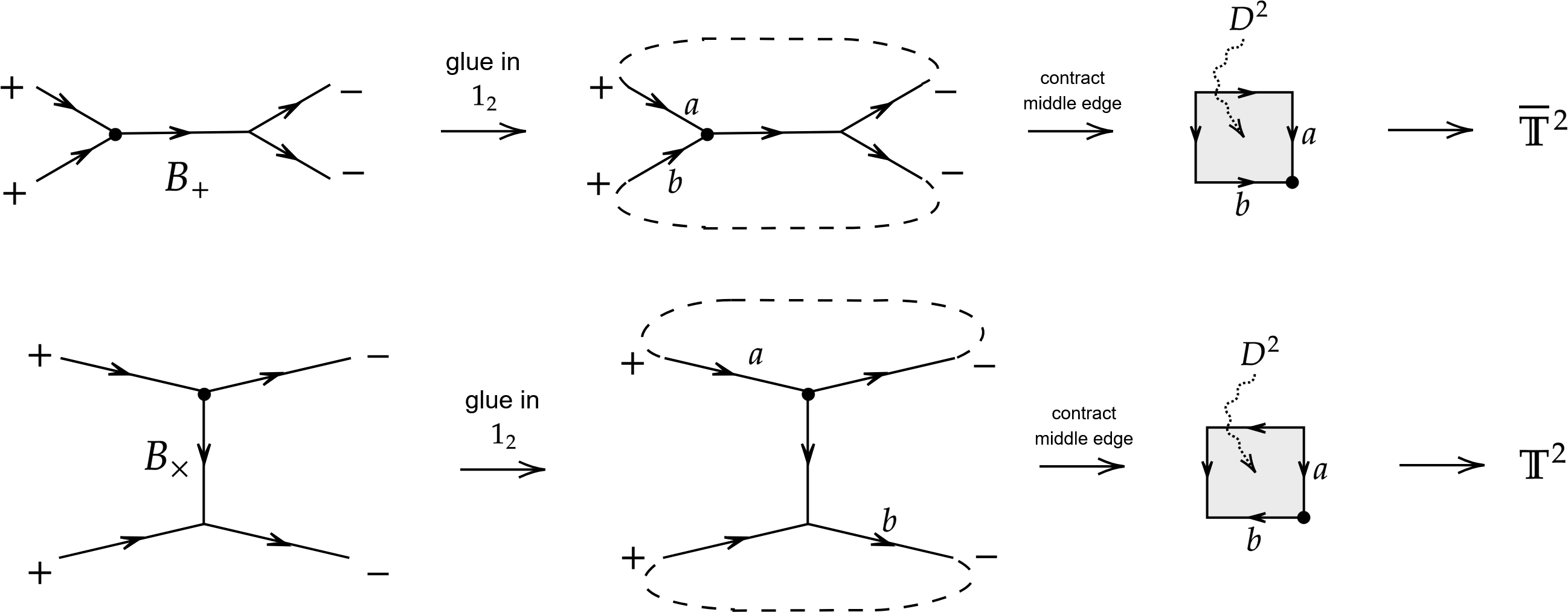}
    \caption{The minimal $s$- and $t$-channel graphs $B_+,B_\times\in \operatorname{End}_{\mathcal{T}'^{PL}_\text{mrk}}(2)$, from which we can obtain the 2-torus $\mathbb{T}^2$ and its orientation reversal $\bar{\mathbb{T}}^2$.}
    \label{fig:standardgraph}
\end{figure}

    We can close off $B_\times$, say, by gluing the identity graph $1_2$ into its incoming and outgoing vertices. The standard graph $B_1$ on $\mathbb{T}^2$, which is a closed 4-valent crossing graph as oriented in fig. 1 of \cite{Alekseev:1994au}, can then be obtained from it by contracting the middle internal edge via a PL homotopy. See the right side of fig. \ref{fig:standardgraph}.
\end{proof}

Now by closing $B_\times$ off as described in \textbf{Lemma \ref{edgecontraction}}, additional $R$-matrix relations governing the locality between the holonomies on the incoming and outgoing edges (see eg. line 4 of Def. 12 in \cite{Alekseev:1994au}) are introduced. The edge contraction result (Prop. 9) in \textit{loc. sit.} then provides the desired isomorphism of $C_q(G^{B})$ with the Chern-Simons standard graph algebra on $B_{1,0}$.

    \begin{rmk}\label{orientationgraph}   
        The standard graph of the 2-torus $\bar{\mathbb{T}}^2$ with the opposite orientation can be obtained by contracting the middle internal edge of $B_+$. This is illustrated in the top row of fig. \ref{fig:standardgraph}. This introduces different locality/braiding relations in $C_q(G^B)$ which produces the Chern-Simons standard graph algebra for the oppositely-oriented 2-torus.
    \end{rmk}


\subsection{Geometry of 2-tangles in 4-dimensions}\label{baezlangford}
The above result, as well as the definition of the PL 2-ribbons in \S \ref{PL2ribbons}, suggests a close relationship between the double bicategory $\mathcal{T}'^{PL}_\text{mrk}$ and the 2-category encoding the geometric/homotopic properties of the 2-tangles in 4-dimensions. 

Let us therefore begin by recalling the following notion \cite{BAEZ2003705}.
\begin{definition}\label{2tangledefinition}
    Consider the following data.
    \begin{enumerate}
        \item \textit{Objects:} these are finite subsets of $D^2$, and are in one-to-one correspondence with the natural numbers $\bbZ_{\geq 0}$,
        \item \textit{1-morphisms:} these are tangles --- namely embedded 1-manifolds $T\subset D^2\times[0,1]$ such that
        \begin{enumerate}
            \item its boundary points $\partial T$ lie in $\operatorname{int}D^2\times\{0,1\}$, and
            \item it has a "product structure": there exists $\epsilon>0$ such that, if $|z-z_0|<\epsilon$ for $z_0=0,1$ and $(x,y,z_0)\in T$, then  $(x,y,z)\in T$.
        \end{enumerate}
        \item \textit{2-morphisms:} these are surfaces with corners --- namely embedded 2-manifolds $S\subset D^2\times[0,1]\times[0,1]$ such that
        \begin{enumerate}
            \item its boundary is embedded in $D^2\times\partial([0,1]^2),$ such that $S\cap \big(D^3\times\{0,1\}\big)$ are a pair of tangles and $S\cap \big(D^2\times\{0,1\}\times[0,1]\big)$ consist of finitely many straight lines.
            \item $S$ has a "product structure near the boundary": there exist $\epsilon>0$ such that (i) if $|z-z'|<\epsilon$ then $(x,y,z,t)\in S \iff (x,y,z',t)\in S$, and (ii) if  $|t-t_0|<\epsilon$ for $t_0=0,1$ and $(x,y,z,t_0)\in S$, then $(x,y,z,t)\in S$.
        \end{enumerate}
        See eg. fig. \ref{fig:2tangleeg}.
    \end{enumerate}
    The \textbf{Baez-Langford 2-category $\mathcal{T}$ of (unframed unoriented) 2-tangles} is the 2-category obtained from the above geometric data up to level-preserving smooth isotopies in $D^4$, with the obvious composition laws for 1- and 2-morphisms (see Lemma 5 of \cite{BAEZ2003705}).
\end{definition}

\begin{figure}[h]
    \centering
    \includegraphics[width=0.55\linewidth]{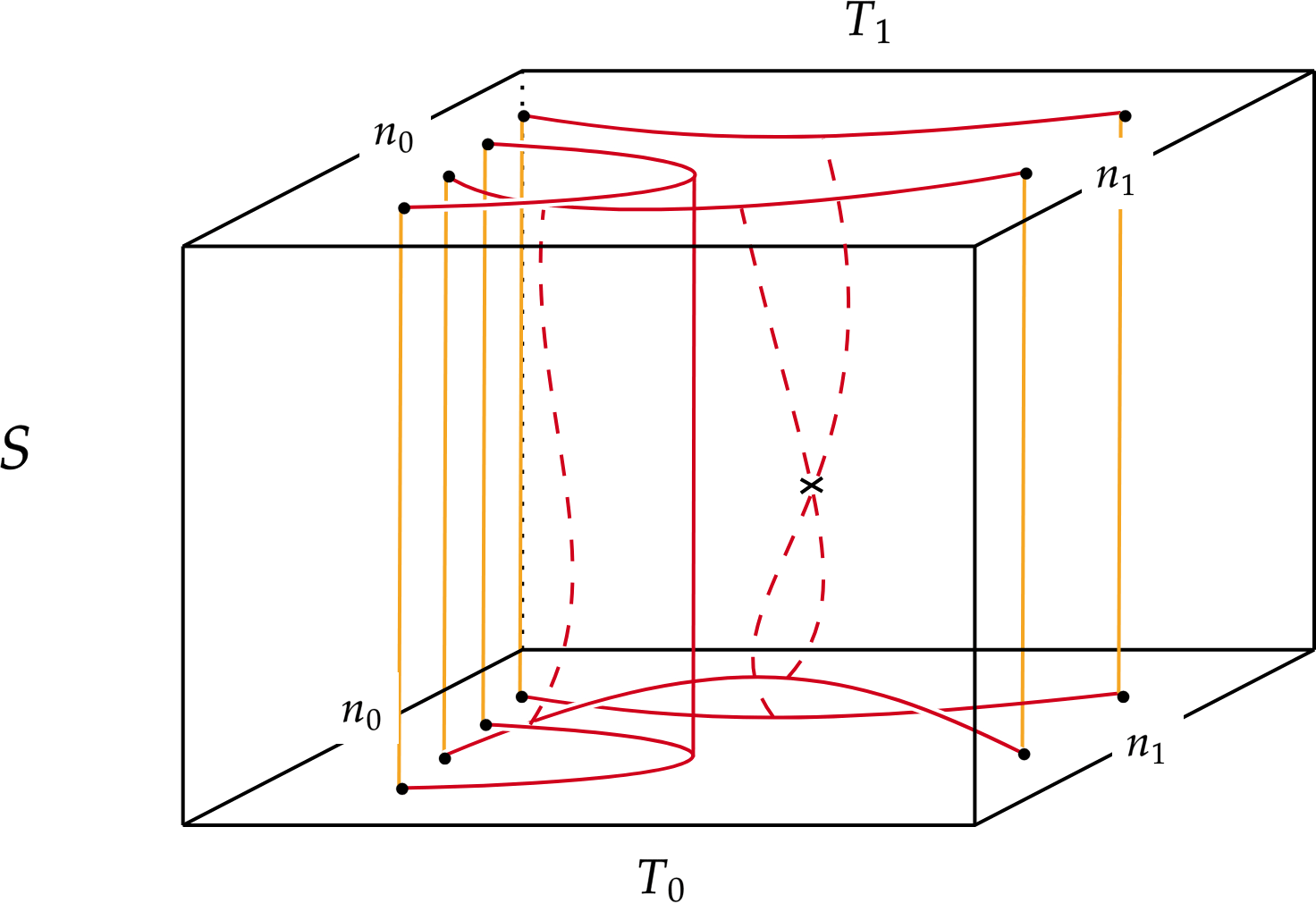}
    \caption{An example of a 2-tangle $S: T_0\Rightarrow T_1$ in $\mathcal{T}$, where $T_0,T_1: n_0\rightarrow n_1$ are embedded tangles.}
    \label{fig:2tangleeg}
\end{figure}

Each ambient isotopy class of the above data (1-/2-morphisms) have a "generic" representative. We define what this means here.
\begin{definition}
    Let $T$ be a tangle and $S$ an embedded surface as above.
    \begin{itemize}
        \item $T\subset D^3$ is called \textbf{generic} iff (i) its projection to the last two coordinates $[0,1]\times[0,1]$ is an embedding except at finitely many separated crossings, (ii) critical points of the Morse height function on $T$ are non-degenerate local extrema and (iii) all crossings and critical points are at different heights.
        \item $S\subset D^4$ is called \textbf{generic} iff its interseciton with the constant $t$-leaves is a generic tangle except at finitely many values of $t\in[0,1]$, at which one of the following "full set of elementary string interactions" \cite{CARTER19971} occur
        \begin{enumerate}
            \item the Reidemester I, II, III moves,
            \item birth/death of an unknotted circle,
            \item a saddle point of the Morse height function $S\rightarrow \R: (x,y,z,t)\mapsto t$,
            \item a "cusp on a fold line",
            \item a "double point crossing on a fold line", and
            \item moves that change the heights of the tangle crossings/extrema.
        \end{enumerate}
    \end{itemize}
    An example of a 2-tangle $S$ exhibiting the Reidemeister II move and a "double point corssing on a fold line", simultaneously, is displayed in fig. \ref{fig:2tangleeg}.
\end{definition}
\noindent The following is then proved in \cite{BAEZ2003705} by arguing with generic representatives in $\mathcal{T}.$
\begin{theorem}
    $\mathcal{T}$ is a "braided monoidal 2-category with duals\footnote{This means that the objects have duals such that the duality-mates of the 1-morphisms coincide with their adjoints. This notion was noted in \cite{Chen:2025?} to be a weak form of the so-called "$SO(3)$-volutive property" for ribbon tensor 2-categories, but it suffices for unframed unoriented 2-tangles.}" equipped with a self-dual generator $Z\in\mathcal{T}$, which is given by a single unframed point $Z\in D^2$ in the cube. 
\end{theorem}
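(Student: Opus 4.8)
The plan is to verify that the Baez-Langford 2-category $\mathcal{T}$ of 2-tangles carries each structural ingredient claimed --- braided monoidal structure, duals on objects and 1-morphisms, and a self-dual generator --- by exhibiting the requisite cells as geometric configurations in $D^4$ and checking the coherence axioms through generic representatives. Since all the hard analytic work has been done in \cite{BAEZ2003705} and \cite{CARTER19971}, the proof is essentially a bookkeeping argument: translate each algebraic axiom into a statement about isotopy classes of generic tangles/surfaces, then invoke the full set of elementary string interactions to witness the required identities. First I would fix the monoidal structure as disjoint union of finite subsets of $D^2$ (stacking cubes side-by-side along the first coordinate), with the empty set as unit; functoriality and associativity are immediate from the product-structure-near-boundary conditions in \textbf{Definition \ref{2tangledefinition}}.

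Next I would construct the braiding. For objects $m,n\in\bbZ_{\geq 0}$, the braiding 1-morphism $\beta_{m,n}$ is the tangle in $D^2\times[0,1]$ that slides the $m$ strands past the $n$ strands by a half-rotation of the disc $D^2$; the invertibility (up to 2-isomorphism) of $\beta$ and the two hexagon identities are witnessed by explicit 2-tangles realizing the Reidemeister III move together with the height-change moves. The key point is that these hexagon-witnessing surfaces are themselves only defined up to level-preserving isotopy, so I would check that the Zamolodchikov-type coherence (the axioms of a braided monoidal 2-category in the sense used in \cite{BAEZ2003705}) reduces to the statement that any two generic surfaces interpolating the same boundary tangles differ by a sequence drawn from the elementary string interactions. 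This is exactly the content of the movie-move theorem of Carter-Rieger-Saito, so the coherence follows by citation rather than recomputation.

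For the duals, the dual of an object $n$ is $n$ itself with reversed strand-direction (in the unoriented case simply $n$ again), and the dual of a tangle $T$ is its reflection across the $t=1/2$ slice, with the cup/cap 1-morphisms supplying the evaluation and coevaluation. The snake (zig-zag) identities hold up to 2-isomorphism, the witnessing 2-tangles being the standard "Reidemeister-I-on-a-fold" and birth/death surfaces; that the duality-mates of 1-morphisms coincide with their adjoints --- the "with duals" condition flagged in the footnote --- is verified by observing that reflection across $t=1/2$ and reflection across the height axis agree on isotopy classes. Finally, the self-dual generator $Z$ is the single point $Z\in D^2$: every object $n$ is the $n$-fold monoidal power $Z^{\sqcup n}$ since finite subsets of $D^2$ are classified by cardinality, and $Z$ is self-dual because a single unoriented point equals its own reversal.

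The main obstacle I expect is not any single geometric identity but the verification of the higher coherence data of a braided monoidal 2-category with duals, where one must produce invertible modifications (the 2-morphisms between the hexagon/pentagon 2-cells) and check they satisfy their own axioms. In the smooth setting this is precisely where the combinatorics of generic surfaces becomes delicate: one must be sure that the chosen generic representatives can be made simultaneously generic for all the diagrams appearing in a given axiom, and that the finitely many exceptional $t$-values can be separated. My plan would be to defer this entirely to the stratification argument of \cite{BAEZ2003705}, invoking their Lemma 5 for the composition laws and their main theorem for the coherence, and simply to assemble these pieces into the stated structure, emphasizing only the identification of $Z$ as the self-dual generator since that is the part most directly relevant to the comparison with $\mathcal{T}'^{PL}_\text{mrk}$ in \S \ref{baezlangford}.
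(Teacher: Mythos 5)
This theorem is not proved in the paper at all: it is recalled verbatim from Baez--Langford, with the paper noting only that it "is proved in \cite{BAEZ2003705} by arguing with generic representatives in $\mathcal{T}$." Your proposal reconstructs exactly that argument --- generic representatives, the elementary string interactions of \cite{CARTER19971}, and deferral of all coherence to the cited movie-move/stratification results --- so it takes essentially the same approach as the paper's (citational) treatment and is correct.
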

 
Moreover, there is an equivalence $\mathcal{T}\simeq\mathcal{C}$ which describes unframed unoriented 2-tangles in 4-dimensions using a combinatorial description $\mathcal{C}$ studied in \cite{CARTER19971}. It was also conjectured in \cite{BAEZ2003705} that $\mathcal{T}$ should coincide with the "2-category of higher tangles" studied earlier by \cite{Kharlamov:1993}.

\medskip

From the above description, it is clear that PL 2-ribbons $\mathcal{T}^{PL}_\text{mrk}$ up to diffeomorphisms differ from $\mathcal{T}$ by its end-categories; $\mathcal{T}^{PL}_\text{mrk}$ seems to be much more related to $\mathfrak{gl}_N$-webs and foams \cite{Morrison2019InvariantsO4} at first glance. Thus, the goal for us here is to describe a formal procedure that relates the marked PL 2-ribbons to triangulations \cite{Cairns:1961} of the 2-tangles.\footnote{Notice that the "straight lines" in \textbf{Definition \ref{2tangledefinition}} of a 2-tangle $S$ are \textit{precisely} the markings on a PL 2-ribbon $P$.}

\begin{figure}[h]
    \centering
    \includegraphics[width=1\linewidth]{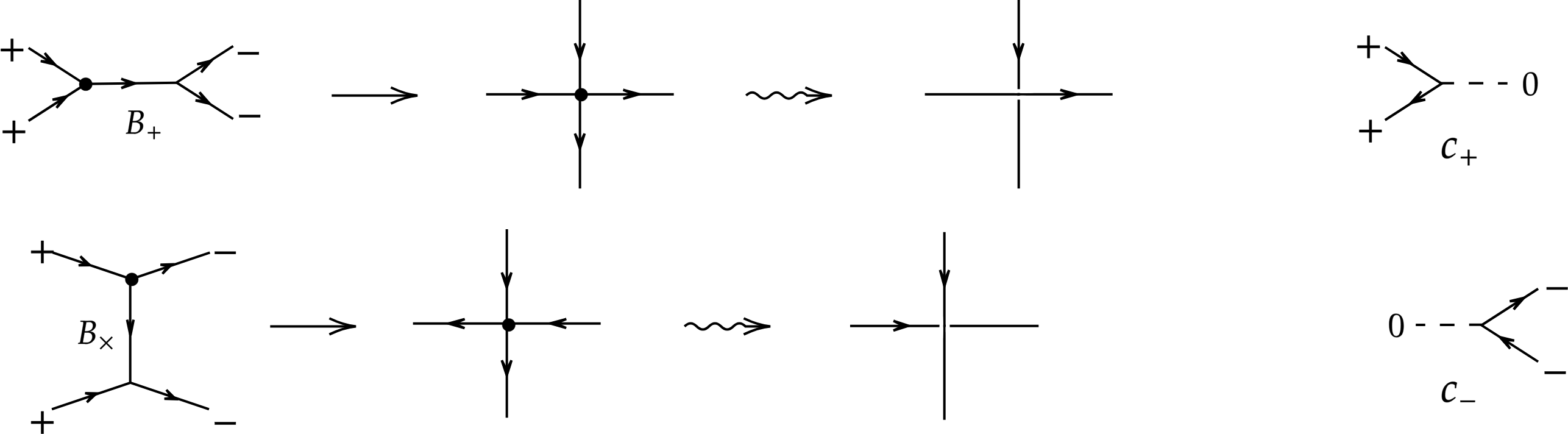}
    \caption{Conventions for interpreting the directed oriented graphs as certain embedded 1-tangles. The dashed edges are to indicate the trivial unframed "invisible" graph $1_0: 0\rightarrow 0$. These graphs $B_\times,B_+$ were also used as resolutions of tangle crossings in (2.3) of \cite{Morrison2019InvariantsO4}.}
    \label{fig:crossingsfolds}
\end{figure}

To setup the demonstration, we shall adopt the following conventions. All tangles will be assumed to be given a consistent blackboard framing.
\begin{itemize}
    \item Crossings (see the left side of fig. \ref{fig:crossingsfolds}): recall the  4-valent diagrams obtained from the graphs $B_+,B_\times$ in fig. \ref{fig:standardgraph}. The convention is that, if one stands on the oriented edge facing towards the crossing, then the crossing edge is associated with an under-crossing tangle. Otherwise it is an over-crossing. 
    \item Folds (see the right side of fig. \ref{fig:crossingsfolds}): we shall interpret the folds of 1-tangles as directed graphs $c_+: 2\rightarrow 0,c_-:0\rightarrow 2$ with the trivially marked point $0$ as source/target. One of the edges ending at two framed points are oriented "incorrectly", such that both of these points can be viewed as having the same framing.
\end{itemize}
We now construct the PL 2-ribbons on the graphs $B_+,B_\times$ which correspond to elementary string interactions involving the crossings, while those the graphs $c_\pm$ for the ones involving folds.

The isotopies which change the height of the string interactions are obvious, so we shall neglect them in the following.
\begin{enumerate}
    \item \textbf{Birth/death of an unknotted circle.} Consider the wedge sum $c_+\vee_{2} c_-$ along \textit{both} of its endpoints, then there is a PL 2-ribbon $c_+\vee_2 c_-\Rightarrow 1_0$ as shown on the left of fig. \ref{fig:cupsaddles}. We call this PL 2-ribbon "\textit{building a house}".
    \item \textbf{Saddle points.} Consider the wedge sum $c_-\vee_{0} c_+$, then there is a PL 2-ribbon $c_-\vee_0 c_+ \Rightarrow 1_2$ as shown on the right of fig. \ref{fig:cupsaddles}.
    \item \textbf{Cusp on a fold line.} Consider the wedge sum $c_-\vee_1 c_+$ along only one of its endpoints, then there is a PL 2-ribbon $c_-\vee_1 c_+ \Rightarrow1_0$ as shown in the middle of fig. \ref{fig:cupsaddles}.
    \begin{figure}[h]
        \centering
        \includegraphics[width=0.85\linewidth]{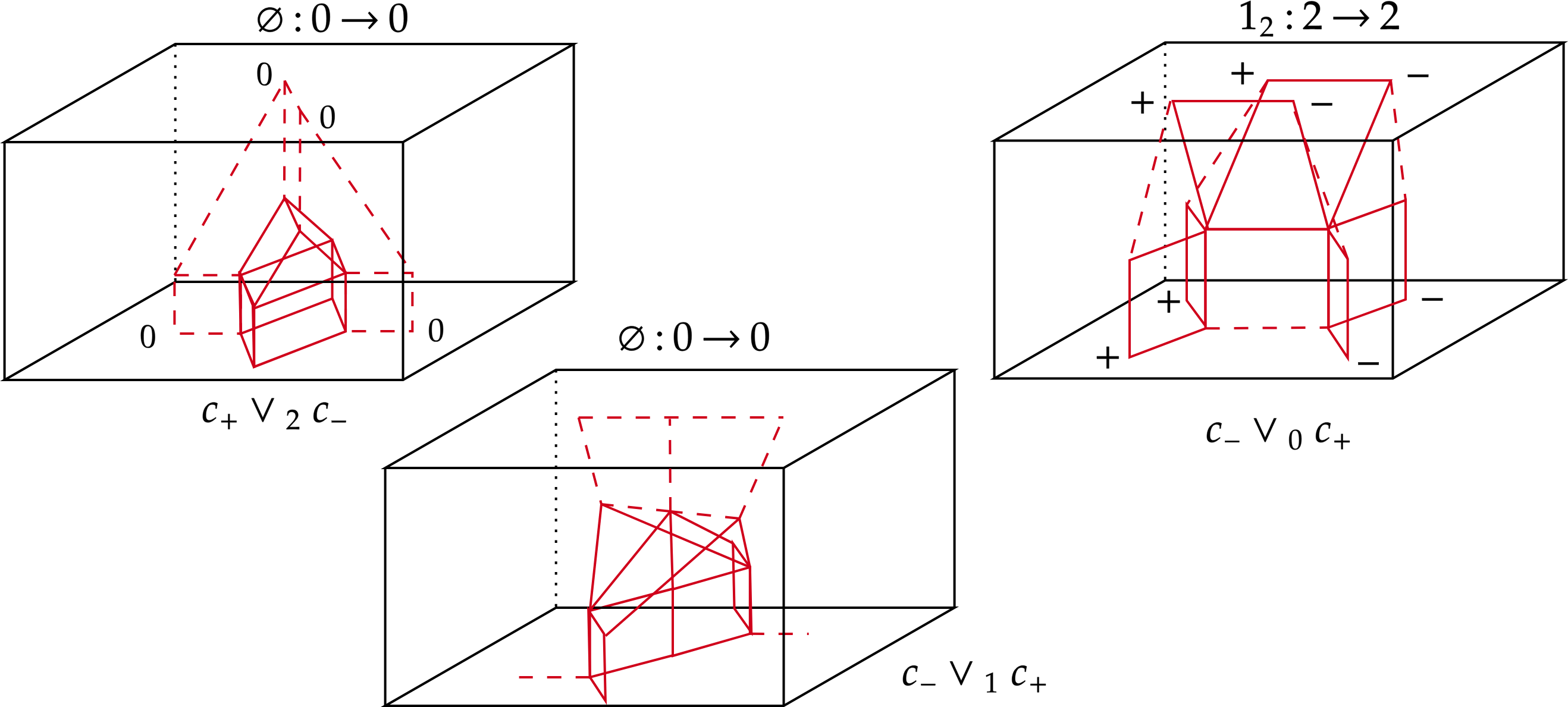}
        \caption{The PL 2-ribbon configurations which, upon smoothing, produces the birth/death of a circle, a saddle point and a cusp on a fold line. We have neglected the orientation and framing data of the graph for clarity.}
        \label{fig:cupsaddles}
    \end{figure}
    \item \textbf{Double point crossing on a fold line.} Consider the wedge sum $B_+\vee c_+$, then there is a PL 2-ribbon $B_+\vee c_+\Rightarrow c_-\vee B_+$ as in the left side of fig. \ref{fig:foldline}. Rotating the slab by $\pi/2$, we obtain $B_\times\vee c_+\Rightarrow c_-\vee B_\times$.
    \item \textbf{Reidemeister moves.} Consider the configurations $c_-\vee (B_+\coprod B_\times)\vee c_+,~c_-\vee_1 B_\times\vee_1 c_+$ as displayed on the right side of fig. \ref{fig:foldline}. The PL 2-ribbons witnessing Reidemeister I \& II moves can be obtained from "building a house", contracting the closed cycle present in these graphs.
    \begin{figure}[h]
        \centering
        \includegraphics[width=1\linewidth]{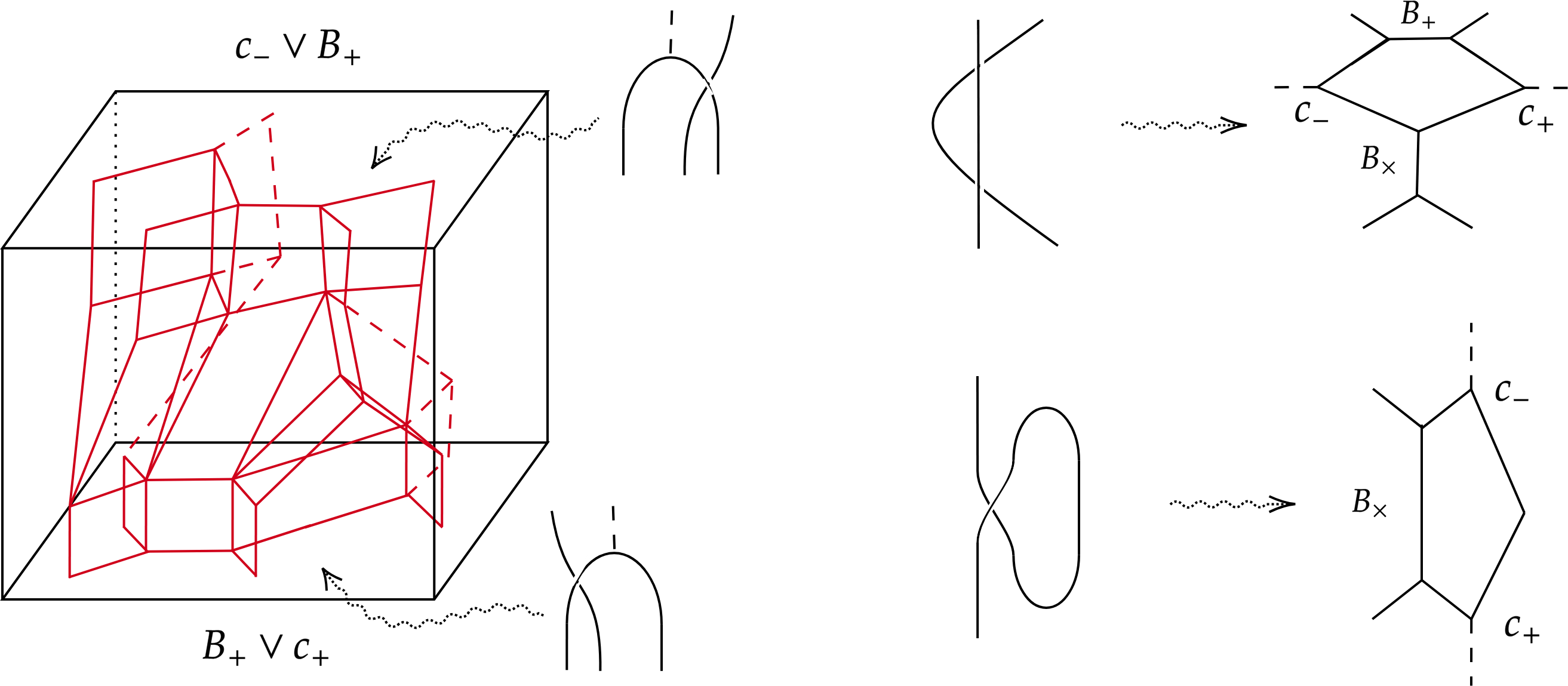}
        \caption{The PL 2-ribbons which, upon smoothing, produce a double point crossing on a fold line and the tangles involved in the Reidemeister I \& II moves.}
        \label{fig:foldline}
    \end{figure}
    The Reidemeister III move can also be constructed in the same way.
\end{enumerate}

There are, however, key differences between $\mathcal{T}$ and $\mathcal{T}'^{PL}_\text{mrk}$.
\begin{itemize}
    \item none of the (PL linearized) string interactions involve a trisection vertex (fig. \ref{fig:interchanger}), and
    \item $\mathcal{T}$ is not 2-$\dagger$; indeed, 1-/2-tangles in $\mathcal{T}$ are unframed and unoriented.
\end{itemize}
These mean that $\mathcal{T}'^{PL}_\text{mrk}$ could potentially capture more geometric data than $\mathcal{T}$; evidence for this was emphasized also in \cite{Chen:2025?}.

\subsection{Higher-dimensional skein relations}\label{higherskein}
As mentioned in \textit{Remark \ref{knothomology}}, both the $\mathfrak{gl}_N$ Khovanov homology and the 2-Chern-Simons Wilson surface states give rise to bigraded\footnote{In fact $\operatorname{KhR}^N$ is tri-graded, with the additional grading coming from \textit{blob homology} \cite{Morrison_2012}. However, this grading does not appear on the 4-disc $D^4$.} Abelian $\bbZ$-modules. These 2-ribbon invariants that arise form them --- though closely related geometrically --- have an important distinction.

In the former case, the usual skein relations from the quantum $\frak{gl}_N$, say, were first inserted into the skein polynomials $\mathcal{R}=\bbZ[q,q^{-1}]$,
\begin{equation*}
    s_{GL_N;q}(M^3) = \frac{\operatorname{Span}_R\left\{\text{framed links in $M^3$}\right\}}{\left\{\text{isotopies} ~\cup ~\substack{\text{skein relations} \\ \text{in $D^3\hookrightarrow M^3$}} \right\}}, 
\end{equation*}
which were then categorified to a homology theory $\mathcal{S}_{GL_N;q}^\ast(M^4)$. In the latter case, on the other hand, the underlying structure gauge group is first categorified, then from which an intrinsically higher-dimensional skein relation for decorated 2-ribbons can be extracted from the cobraiding $(\mathcal{R},\mathsf{T})$ on the 2-graph states. 

\medskip

Now geometrically, given the well-known "string-surface crossing" diagrams in braided monoidal 2-categories \cite{Douglas:2018,Johnson_Freyd_2023,Barrett_2024}, these higher-skein relations should encode the four ways in which string-surface crossings can be resolved; see fig. \ref{fig:stringsurface}. 

\begin{figure}[h]
    \centering
    \includegraphics[width=0.8\linewidth]{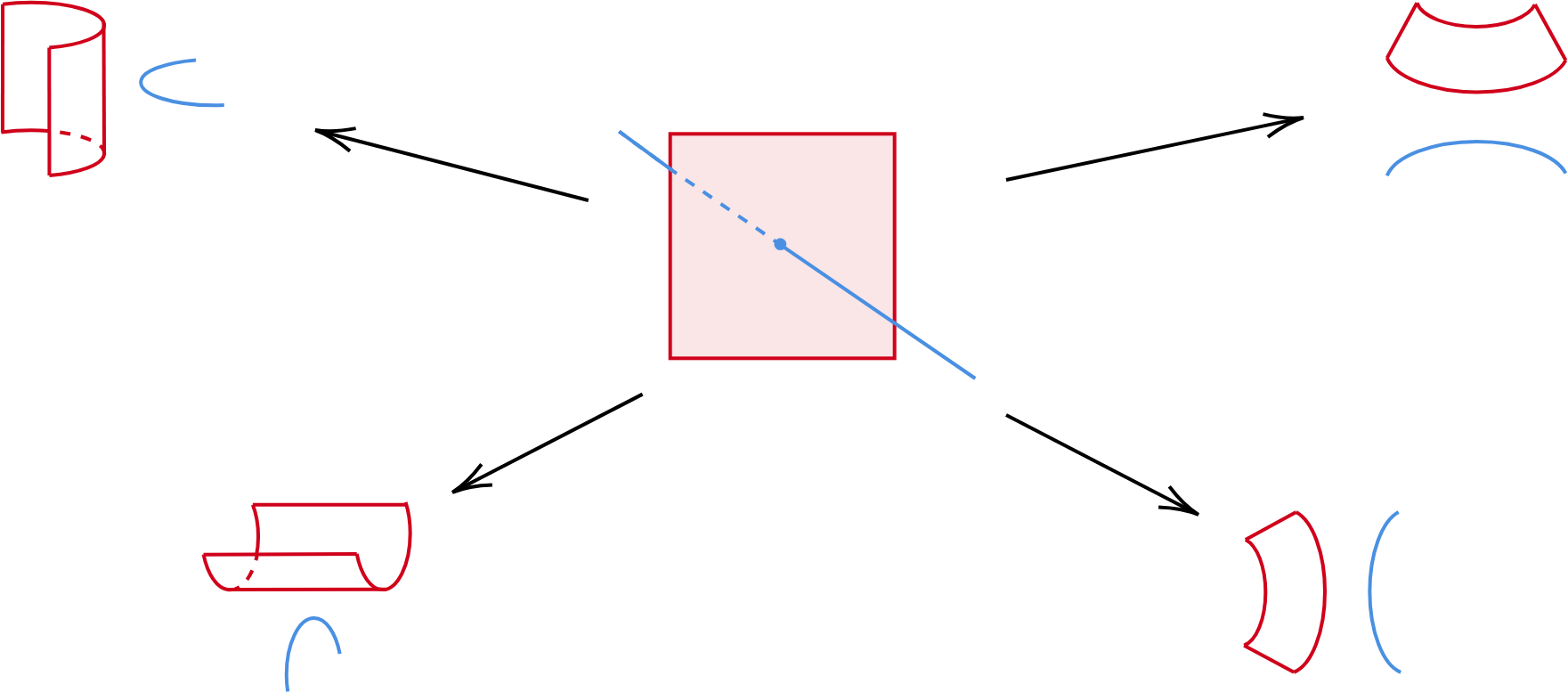}
    \caption{The four ways of resolving a string-surface crossing.}
    \label{fig:stringsurface}
\end{figure}

Provided the $R$-matrix cobraiding on a Hopf category determines a braiding structure on the 2-category of its 2-representations \cite{Chen:2025?} (see also \cite{neuchl1997representation,Chen:2023tjf}), these higher-skein relations should be captured by the $R$-matrix cobraiding on $\mathbb{U}_q\G$ --- which turn out to have \textit{precisely} four components (recall $\mathbb{G}$ as a category has $\mathsf{H}\rtimes G$ as morphisms),
\begin{align*}
    & \tilde R = \tilde R^l\times \tilde R^r \in \mathbb{U}_q\G\times \mathbb{U}_q\G,\\
    & \tilde R^l = \sum \tilde R^l_{(1)}\times \tilde R^l_{(2)} \in \mathbb{U}_q(\h\rtimes \g) \sim \mathbb{U}_q\h\times\mathbb{U}_q\g \\
    &\tilde R^r = \sum \tilde R^r_{(1)}\times \tilde R^r_{(2)}\in  \mathbb{U}_q(\h\rtimes \g) \sim \mathbb{U}_q\h\times\mathbb{U}_q\g,
\end{align*}
each governing the coarrow-part vs. the object-part components of the cobraiding $\mathfrak{R}$. The infinitesimal version of this idea is studied recently in \cite{Kemp_2025}.

Such \textbf{higher-skein relations} on $\mathscr{R}^\ast=H^\bullet_\mathbb{G}(B\mathbb{G},\bbZ)[q,q^{-1}]$ inherited upon $2\mathcal{CS}^\mathbb{G}_{q}(D^4)$ are what enters the skein-theoretic definition of the (tentative) 4-dimensional multiply-graded 2-Chern-Simons invariant 
\begin{equation*}
    \mathscr{S}_{\mathbb{G};q}^\ast(M^4) = \frac{\operatorname{Span}_{\mathscr{R}^\ast}\left\{\text{framed oriented 2-ribbons in $M^4$} \right\}} {\left\{\text{isotopies } \cup \substack{\text{ 2-skein relations} \\ \text{in $D^4\hookrightarrow M^4$}} \right\}},
\end{equation*}
in complete analogy with the Reshetikhin-Turaev construction \cite{Reshetikhin:1991tc,Turaev+2016}. 

The situation can be summarized in the following way,
\[\begin{tikzcd}
	&& \begin{array}{c} s_{G;q}(M^3)~\substack{\text{ribbon} \\ \text{invariants}} \end{array} && \begin{array}{c} \mathcal{S}^\ast_{GL_N;q}(M^4)~ \substack{\text{Khovanov-Rozansky} \\ \text{lasagna modules}} \end{array} \\
	\begin{array}{c} C_q(G)~\substack{\text{quantum} \\ \text{groups}}\end{array} \\
	&& \begin{array}{c} \C_q(\mathbb{G})~ \substack{\text{categorical} \\ \text{quantum groups}} \end{array} && \begin{array}{c} \mathscr{S}^\ast_{\mathbb{G};q}(M^4) ~\substack{\text{2-Chern-Simons}\\ \text{2-ribbon invariants}} \end{array}
	\arrow["{\text{categorify}}", from=1-3, to=1-5]
	\arrow["\begin{array}{c} \substack{\text{skein} \\ \text{relations}} \end{array}", from=2-1, to=1-3]
	\arrow["{\text{categorify}}"', from=2-1, to=3-3]
	\arrow["\begin{array}{c} \substack{\text{"2-skein"}\\ \text{relations}} \end{array}"', from=3-3, to=3-5]
	\arrow["{?}"{description}, squiggly, tail reversed, from=3-5, to=1-5]
\end{tikzcd}\]
It would be interesting to pin these 2-skein relations down and explicit compute the 2-ribbon invariants on, eg., $M^4=\mathbb{C}P^2,\overline{\bbC P}^2$ or $S^2\times S^2$. We shall leave this for a future work.

\newpage

\printbibliography

@article{TRENTINAGLIA2010750,
title = {Tannaka duality for proper Lie groupoids},
journal = {Journal of Pure and Applied Algebra},
volume = {214},
number = {6},
pages = {750-768},
year = {2010},
issn = {0022-4049},
doi = {https://doi.org/10.1016/j.jpaa.2009.08.004},
url = {https://www.sciencedirect.com/science/article/pii/S0022404909001765},
author = {Giorgio Trentinaglia},
abstract = {By replacing the category of smooth vector bundles of finite rank over a manifold with the category of what we call smooth Euclidean fields, which is a proper enlargement of the former, and by considering smooth actions of Lie groupoids on smooth Euclidean fields, we are able to prove a Tannaka duality theorem for proper Lie groupoids. The notion of smooth Euclidean field we introduce here is the smooth, finite dimensional analogue of the usual notion of continuous Hilbert field.}
}

@article{Miranda:2025,
    author = {C. Hughes, A. Miranda},
    title = {The Elementary Theory of the 2-Category of Small Categories},
    journal = {Theory and Applications of Categories},
    year = 2025,
volume=43,
number=8,
pages={196–242}
}

@article{Kemp_2025,
   title={Infinitesimal 2-braidings from 2-shifted Poisson structures},
   volume={212},
   ISSN={0393-0440},
   url={http://dx.doi.org/10.1016/j.geomphys.2025.105456},
   DOI={10.1016/j.geomphys.2025.105456},
   journal={Journal of Geometry and Physics},
   publisher={Elsevier BV},
   author={Kemp, Cameron and Laugwitz, Robert and Schenkel, Alexander},
   year={2025},
   month=jun, pages={105456} }

@InProceedings{Day:1970,
author="Day, Brian",
editor="MacLane, S.
and Applegate, H.
and Barr, M.
and Day, B.
and Dubuc, E.
and Phreilambud
and Pultr, A.
and Street, R.
and Tierney, M.
and Swierczkowski, S.",
title="On closed categories of functors",
booktitle="Reports of the Midwest Category Seminar IV",
year="1970",
publisher="Springer Berlin Heidelberg",
address="Berlin, Heidelberg",
pages="1--38",
isbn="978-3-540-36292-0"
}

@ARTICLE{2024arXiv240704891L,
       author = {{Liu}, Yu Leon},
        title = "{Braiding on complex oriented Soergel bimodules}",
      journal = {arXiv e-prints},
     keywords = {Mathematics - Algebraic Topology, Mathematics - Quantum Algebra, Mathematics - Representation Theory},
         year = 2024,
        month = jul,
          eid = {arXiv:2407.04891},
        pages = {arXiv:2407.04891},
          doi = {10.48550/arXiv.2407.04891},
archivePrefix = {arXiv},
       eprint = {2407.04891},
 primaryClass = {math.AT},
       adsurl = {https://ui.adsabs.harvard.edu/abs/2024arXiv240704891L},
      adsnote = {Provided by the SAO/NASA Astrophysics Data System}
}

@book{elias2020introduction,
  title={Introduction to Soergel Bimodules},
  author={Elias, B. and Makisumi, S. and Thiel, U. and Williamson, G.},
  isbn={9783030488260},
  series={RSME Springer Series},
  url={https://link.springer.com/book/10.1007/978-3-030-48826-0},
  year={2020},
  publisher={Springer International Publishing}
}

@article{Witten:2011zz,
    author = "Witten, Edward",
    title = "Fivebranes and Knots",
journal={Quantum Topol.},
vol=3,
number=1,
pages= {1–137},
    eprint = "1101.3216",
    archivePrefix = "arXiv",
    primaryClass = "hep-th",
    month = "1",
    year = "2012"
}

@article{SOZER2023109155,
title = {Monoidal categories graded by crossed modules and 3-dimensional HQFTs},
journal = {Advances in Mathematics},
volume = {428},
pages = {109155},
year = {2023},
issn = {0001-8708},
doi = {https://doi.org/10.1016/j.aim.2023.109155},
url = {https://www.sciencedirect.com/science/article/pii/S0001870823002980},
author = {K{\"u}rsat S{\"o}zer and Alexis Virelizier},
keywords = {Crossed modules, Fusion categories, State sum, Homotopy quantum field theories},
}

@article{Geer_Patureau-Mirand_Turaev_2009, 
title={Modified quantum dimensions and re-normalized link invariants}, 
volume={145}, 
DOI={10.1112/S0010437X08003795}, 
number={1}, 
journal={Compositio Mathematica}, 
author={Geer, Nathan and Patureau-Mirand, Bertrand and Turaev, Vladimir}, 
year={2009}, 
pages={196–212}
}

@article{geer:hal-00603999,
  TITLE = {{Ambidextrous objects and trace functions for nonsemisimple categories}},
  AUTHOR = {Geer, Nathan and Kujawa, Jonathan and Patureau-Mirand, Bertrand},
  URL = {https://hal.science/hal-00603999},
  NOTE = {15 pages},
  JOURNAL = {{Proceedings of the American Mathematical Society}},
  PUBLISHER = {{American Mathematical Society}},
  PAGES = {141 (2013), 2963--2978},
  YEAR = {2013},
  DOI = {10.1090/S0002-9939-2013-11563-7},
  HAL_ID = {hal-00603999},
  HAL_VERSION = {v1},
}

@ARTICLE{Geer2011-cr,
  title    = "Generalized trace and modified dimension functions on ribbon
              categories",
  author   = "Geer, Nathan and Kujawa, Jonathan and Patureau-Mirand, Bertrand",
  abstract = "In this paper, we use topological techniques to construct
              generalized trace and modified dimension functions on ideals in
              certain ribbon categories. Examples of such ribbon categories
              naturally arise in representation theory where the usual trace
              and dimension functions are zero, but these generalized trace and
              modified dimension functions are nonzero. Such examples include
              categories of finite dimensional modules of certain Lie algebras
              and finite groups over a field of positive characteristic and
              categories of finite dimensional modules of basic Lie
              superalgebras over the complex numbers. These modified dimensions
              can be interpreted categorically and are closely related to some
              basic notions from representation theory.",
  journal  = "Selecta Mathematica",
  volume   =  17,
  number   =  2,
  pages    = "453--504",
  month    =  "jun",
  year     =  "2011"
}

@article{Calaque:2021sgp,
    author = "Calaque, Damien and Haugseng, Rune and Scheimbauer, Claudia",
    title = "{The AKSZ Construction in Derived Algebraic Geometry as an Extended Topological Field Theory}",
    eprint = "2108.02473",
    archivePrefix = "arXiv",
journal={Memoirs of the AMS},
    primaryClass = "math.CT",
    reportNumber = "CPH-SYM-DNRF92",
    month = "8",
    year = "2021"
}

@article{Kong:2011jf,
    author = "Kong, Liang",
    editor = "Sati, Hisham and Schreiber, Urs",
    title = "{Conformal field theory and a new geometry}",
    eprint = "1107.3649",
    archivePrefix = "arXiv",
    primaryClass = "math.QA",
    journal = "Proc. Symp. Pure Math.",
    volume = "83",
    pages = "199--244",
    year = "2011"
}

@article{Gaiotto:2011,
author = {Davide Gaiotto and Edward Witten},
title = {{Knot invariants from four-dimensional gauge theory}},
volume = {16},
journal = {Advances in Theoretical and Mathematical Physics},
number = {3},
publisher = {International Press of Boston},
pages = {935 -- 1086},
year = {2012},
}

@book{Casson:1986,
    author = {A. Casson},
    title = {A la Recherche de la Topologie Perdue},
series={Progress in Mathematics},
volume={62},
    publisher = {Birkhäuser},
    year = {1986},
    chapter = {II. Three lectures on new infinite constructions in 4-dimensional manifolds},
pages={201–244}
}

@article{Freedman:1982,
author = {Michael Hartley Freedman},
title = {{The topology of four-dimensional manifolds}},
volume = {17},
journal = {Journal of Differential Geometry},
number = {3},
publisher = {Lehigh University},
pages = {357 -- 453},
year = {1982},
doi = {10.4310/jdg/1214437136},
URL = {https://doi.org/10.4310/jdg/1214437136}
}

@article{Bennett_2016,
   title={Exotic smoothings via large $\mathbb{R}^4$’s in Stein surfaces},
   volume={16},
   ISSN={1472-2747},
   url={http://dx.doi.org/10.2140/agt.2016.16.1637},
   DOI={10.2140/agt.2016.16.1637},
   number={3},
   journal={Algebraic \& Geometric Topology},
   publisher={Mathematical Sciences Publishers},
   author={Bennett, Julia},
   year={2016},
   month=jul, pages={1637–1681} }

@article{Gompf:1983,
author = {Robert E. Gompf},
title = {{Three exotic $\mathbb{R}^4$'s and other anomalies}},
volume = {18},
journal = {Journal of Differential Geometry},
number = {2},
publisher = {Lehigh University},
pages = {317 -- 328},
year = {1983},
doi = {10.4310/jdg/1214437666},
URL = {https://doi.org/10.4310/jdg/1214437666}
}

@misc{sarkar2018cohomologyringsclasstorus,
      title={Cohomology Rings of a Class of Torus Manifolds}, 
      author={Soumen Sarkar and Donald Stanley},
      year={2018},
      eprint={1807.03830},
      archivePrefix={arXiv},
      primaryClass={math.AT},
      url={https://arxiv.org/abs/1807.03830}, 
}

@article{Fausk:2003,
    author = {H. Fausk, P. Hu, and J.P. May },
    title = {Isomorphisms between left and right adjoints},
    journal = {Theory and Applications of Categories},
volume={11},
number=4,
pages={107-131},
    year = 2003
}

@misc{froehlich2024yonedalemmarepresentationtheorem,
      title={Yoneda lemma and representation theorem for double categories}, 
      author={Benedikt Fr{\"o}hlich and Lyne Moser},
      year={2024},
      eprint={2402.10640},
      archivePrefix={arXiv},
      primaryClass={math.CT},
      url={https://arxiv.org/abs/2402.10640}, 
}

@article{Cairns:1961,
author = {Stewart S. Cairns},
title = {{A simple triangulation method for smooth manifolds}},
volume = {67},
journal = {Bulletin of the American Mathematical Society},
number = {4},
publisher = {American Mathematical Society},
pages = {389 -- 390},
year = {1961},
}

@Inbook{Takesaki1979,
author="Takesaki, Masamichi",
editor="Takesaki, Masamichi",
title="Types of von Neumann Algebras and Traces",
bookTitle="Theory of Operator Algebras I",
year="1979",
publisher="Springer New York",
address="New York, NY",
pages="289--374",
abstract="The material presented in this chapter is directly related to the dimension theory of Murray and von Neumann. The projections of a von Neumann algebra form a complete lattice. A partial ordering and an equivalence relation on the projection lattice of a von Neumann algebra are introduced by means of the partial isometries in the algebra in Section 1. According to the structure of the projection lattice with this ordering, the algebras are classified into those of type I, type II l, type II∞, and type III. It will be shown that every von Neumann algebra is decomposed into the direct sum of the algebras of these types. We will see that the von Neumann algebras of type I behave most naturally from the classical point of view. The structure of such an algebra will be completely determined in terms of the spectrum of the center and a set of cardinal numbers. At this point, it should be mentioned that the main task of the theory of von Neumann algebras is to establish methods for analyzing von Neumann algebras of type II or type III. Indeed, von Neumann algebras of type I do not introduce anything mysterious into the frame of the classical point of view, while there is no other theory which can handle mathematical phenomena related to von Neumann algebras of nontype I. Section 2 is devoted to the study of traces on semi-finite von Neumann algebras. It will be seen that the relative dimension of projections in a factor of type II takes continuous values.",
isbn="978-1-4612-6188-9",
doi="10.1007/978-1-4612-6188-9_5",
url="https://doi.org/10.1007/978-1-4612-6188-9_5"
}

@inproceedings{Fiorenza:2013jz,
    author = "Fiorenza, Domenico and Sati, Hisham and Schreiber, Urs",
    title = "{A higher stacky perspective on Chern-Simons theory}",
    booktitle = "{Winter School in Mathematical Physics: Mathematical Aspects of Quantum Field Theory}",
    eprint = "1301.2580",
    archivePrefix = "arXiv",
    primaryClass = "hep-th",
    doi = "10.1007/978-3-319-09949-1_6",
    publisher = "Springer",
    pages = "153--211",
    month = "1",
    year = "2013"
}

@inproceedings{Davydov:2011kb,
    author = "Davydov, Alexei and Kong, Liang and Runkel, Ingo",
    editor = "Sati, Hisham and Schreiber, Urs",
    title = "{Field theories with defects and the centre functor}",
    eprint = "1107.0495",
    archivePrefix = "arXiv",
    primaryClass = "math.QA",
    reportNumber = "ZMP-HH-11-8",
series="Mathematical foundations of quantum field theory and perturbative string theory",
booktitle="Proc. Sympos. Pure Math.,",
 volume="83",
publisher="American Mathematical Society, Providence",
    pages = "71--128",
    month = "7",
    year = "2011"
}

@article{Sati:2008eg,
    author = "Sati, Hisham and Schreiber, Urs and Stasheff, Jim",
    title = "{$L_{\infty}$ algebra connections and applications to String- and Chern-Simons n-transport}",
    eprint = "0801.3480",
    archivePrefix = "arXiv",
    primaryClass = "math.DG",
    doi = "10.1007/978-3-7643-8736-5_17",
journal="Quantum Field Theory",
year="(2009)",
pages={303-424},
}

@article{Lam:2023xng,
    author = "Lam, Ho Tat",
    title = "{Classification of dipolar symmetry-protected topological phases: Matrix product states, stabilizer Hamiltonians, and finite tensor gauge theories}",
    eprint = "2311.04962",
    archivePrefix = "arXiv",
    primaryClass = "cond-mat.str-el",
    reportNumber = "MIT-CTP/5617",
    doi = "10.1103/PhysRevB.109.115142",
    journal = "Phys. Rev. B",
    volume = "109",
    number = "11",
    pages = "115142",
    year = "2024"
}

@book{book-symplectic,
    author = {Ana Cannas Silva},
    title = {Lectures on Symplectic Geometry},
    publisher = {Springer Berlin, Heidelberg},
doi={https://doi.org/10.1007/978-3-540-45330-7},
edition=1,
pages={XII, 220},
    year = 2001
}

@article{Chevyrev:2024eng,
    author = "Chevyrev, Ilya and Garban, Christophe",
    title = "{Villain Action in Lattice Gauge Theory}",
    eprint = "2404.09928",
    archivePrefix = "arXiv",
    primaryClass = "math.PR",
    doi = "10.1007/s10955-025-03420-1",
    journal = "J. Statist. Phys.",
    volume = "192",
    number = "3",
    pages = "38",
    year = "2025"
}

@article{Chen:2024ddr,
    author = "Chen, Jing-Yuan",
    title = "{Instanton Density Operator in Lattice QCD from Higher Category Theory}",
    eprint = "2406.06673",
    archivePrefix = "arXiv",
    primaryClass = "hep-lat",
    month = "6",
    year = "2024"
}

@article{Jacobson:2023cmr,
    author = "Jacobson, Theodore and Sulejmanpasic, Tin",
    title = "{Modified Villain formulation of Abelian Chern-Simons theory}",
    eprint = "2303.06160",
    archivePrefix = "arXiv",
    primaryClass = "hep-th",
    doi = "10.1103/PhysRevD.107.125017",
    journal = "Phys. Rev. D",
    volume = "107",
    number = "12",
    pages = "125017",
    year = "2023"
}

@book{book-operators,
    author = {Bruce Blackadar},
    title = {Operator Algebras},
subtitle={Theory of C*-Algebras and von Neumann Algebras},
series={Encyclopaedia of Mathematical Sciences},
doi={https://doi.org/10.1007/3-540-28517-2},
    publisher = {Springer Berlin, Heidelberg},
edition=1,
pages={XX, 517},
    year =2005 
}

@book{kosinski2013differential,
  title={Differential Manifolds},
  author={Kosinski, A.A.},
  isbn={9780486318158},
  series={Dover Books on Mathematics},
  url={https://books.google.com.hk/books?id=PcmjAQAAQBAJ},
  year={2013},
  publisher={Dover Publications}
}

@article{Pare:2011,
      title={Yoneda theory for double categories}, 
      author={Robert Pare},
      year={2011},
journal={Theory Appl. Categ.},
volume={25},
number=17,
pages={436-489}
}

@article{Shulman2011,
author = {Shulman, Michael},
journal = {The New York Journal of Mathematics [electronic only]},
keywords = {model category; double category; derived functor; adjunction; conjuction; mate},
language = {eng},
pages = {75-125},
publisher = {University at Albany, Deptartment of Mathematics and Statistics},
title = {Comparing composites of left and right derived functors.},
url = {http://eudml.org/doc/229181},
volume = {17},
year = {2011},
}

@Inbook{Kerler2001,
title="Double Categories and Double Functors",
bookTitle="Non-Semisimple Topological Quantum Field Theories for 3-Manifolds with Corners",
year="2001",
publisher="Springer Berlin Heidelberg",
address="Berlin, Heidelberg",
pages="343--352",
author={Thomas Kerler and Volodymyr V. Lyubashenko},
isbn="978-3-540-44625-5",
doi="10.1007/3-540-44625-7_10",
url="https://doi.org/10.1007/3-540-44625-7_10"
}

@article{schreiber2013connectionsnonabeliangerbesholonomy,
      title={Connections on non-abelian Gerbes and their Holonomy}, 
      author={Urs Schreiber and Konrad Waldorf},
      year={2013},
      eprint={0808.1923},
      archivePrefix={arXiv},
      primaryClass={math.DG},
journal={Theory Appl. Categ.},
volume={28},
number=17,
pages={476-540}
}

@article{Morrison_2012,
   title={Blob homology},
   volume={16},
   ISSN={1465-3060},
   url={http://dx.doi.org/10.2140/gt.2012.16.1481},
   DOI={10.2140/gt.2012.16.1481},
   number={3},
   journal={Geometry \& Topology},
   publisher={Mathematical Sciences Publishers},
   author={Morrison, Scott and Walker, Kevin},
   year={2012},
   month=jul, pages={1481–1607} }

@article{Kharlamov:1993,
     author = {Kharlamov, V. M. and Turaev, V. G.},
     title = {On the {Definition} of $2${-Category} of $2${-Knots}},
     journal = {Les rencontres physiciens-math\'ematiciens de Strasbourg -RCP25},
     note = {talk:7},
     pages = {151--166},
     publisher = {Institut de Recherche Math\'ematique Avanc\'ee - Universit\'e Louis Pasteur},
     volume = {45},
     year = {1993},
     language = {en},
     url = {https://www.numdam.org/item/RCP25_1993__45__151_0/}
}

@article{Liu:2023dhj,
    author = "Liu, Zhengwei and Ming, Shuang and Wang, Yilong and Wu, Jinsong",
    title = "{Alterfold Topological Quantum Field Theory}",
    eprint = "2312.06477",
    archivePrefix = "arXiv",
    primaryClass = "math-ph",
    month = "12",
    year = "2023"
}

@ARTICLE{Beck2024-fs,
  title    = "Combinatorial 2d higher topological quantum field theory from a
              local cyclic $A_\infty$ algebra",
  author   = "Beck, Justin and Losev, Andrey and Mnev, Pavel",
  journal  = "Letters in Mathematical Physics",
  volume   =  114,
  number   =  6,
  pages    = "125",
  month    =  oct,
  year     =  2024
}

@article{shulman2009framedbicategoriesmonoidalfibrations,
      title={Framed bicategories and monoidal fibrations}, 
      author={Michael A. Shulman},
      year={2008},
journal={Theory Appl. Categ.},
volume={20},
 number={18}, 
pages={650--738},
      eprint={0706.1286},
      archivePrefix={arXiv},
      primaryClass={math.CT},
      url={https://arxiv.org/abs/0706.1286}, 
}

@book{book-2cats,
    author = {Johnson, Niles and Yau, Donald},
    title = {2-Dimensional Categories},
    publisher = {Oxford University Press},
    year = {2021},
    month = {01},
    abstract = {2-Dimensional Categories provides an introduction to 2-categories and bicategories, assuming only the most elementary aspects of category theory. A review of basic category theory is followed by a systematic discussion of 2-/bicategories; pasting diagrams; lax functors; 2-/bilimits; the Duskin nerve; the 2-nerve; internal adjunctions; monads in bicategories; 2-monads; biequivalences; the Bicategorical Yoneda Lemma; and the Coherence Theorem for bicategories. Grothendieck fibrations and the Grothendieck construction are discussed next, followed by tricategories, monoidal bicategories, the Gray tensor product, and double categories. Completely detailed proofs of several fundamental but hard-to-find results are presented for the first time. With exercises and plenty of motivation and explanation, this book is useful for both beginners and experts.},
    isbn = {9780198871378},
    doi = {10.1093/oso/9780198871378.001.0001},
    url = {https://doi.org/10.1093/oso/9780198871378.001.0001},
}

@article{Ehresmann1963,
author = {Ehresmann, Charles},
journal = {Annales scientifiques de l'École Normale Supérieure},
keywords = {general algebraic structures},
language = {fre},
number = {4},
pages = {349-426},
publisher = {Elsevier},
title = {Catégories structurées},
url = {http://eudml.org/doc/81794},
volume = {80},
year = {1963},
}

@misc{haioun2025nonsemisimplewrtboundarycraneyetter,
      title={Non-semisimple WRT at the boundary of Crane-Yetter}, 
      author={Benjamin Ha{\" i}oun},
      year={2025},
      eprint={2503.20905},
      archivePrefix={arXiv},
      primaryClass={math.QA},
      url={https://arxiv.org/abs/2503.20905}, 
}

@book{brylinski2007loop,
  title={Loop Spaces, Characteristic Classes and Geometric Quantization},
  author={Brylinski, J.L.},
  isbn={9780817647308},
  lccn={2007936854},
  series={Modern Birkh{\"a}user Classics},
  url={https://books.google.com.hk/books?id=ta5UB1D64_gC},
  year={2007},
  publisher={Birkh{\"a}user Boston}
}

@article{WALDHAUSEN1968195,
title = {Heegaard-Zerlegungen der 3-sphäre},
journal = {Topology},
volume = {7},
number = {2},
pages = {195-203},
year = {1968},
issn = {0040-9383},
doi = {https://doi.org/10.1016/0040-9383(68)90027-X},
url = {https://www.sciencedirect.com/science/article/pii/004093836890027X},
author = {Friedhelm Waldhausen}
}

@ARTICLE{Gomez_Larranaga1987-dz,
  title    = "3-Manifolds which are unions of three solid tori",
  author   = "G{\'o}mez Larra{\~n}aga, Jos{\'e} Carlos",
  abstract = "We give a complete classification of all closed, connected
              3-manifolds which are union of three solid tori.",
  journal  = "manuscripta mathematica",
  volume   =  59,
  number   =  3,
  pages    = "325--330",
  month    =  sep,
  year     =  1987
}

@book{Loregian_2021, place={Cambridge}, series={London Mathematical Society Lecture Note Series}, title={(Co)end Calculus}, publisher={Cambridge University Press}, author={Loregian, Fosco}, year={2021}, collection={London Mathematical Society Lecture Note Series}}

@article{Castler:1965,
 ISSN = {00029939, 10886826},
 URL = {http://www.jstor.org/stable/2033878},
 author = {B. G. Casler},
 journal = {Proceedings of the American Mathematical Society},
 number = {4},
 pages = {559--566},
 publisher = {American Mathematical Society},
 title = {An Imbedding Theorem for Connected 3-Manifolds with Boundary},
 urldate = {2025-03-24},
 volume = {16},
 year = {1965}
}

@ARTICLE{Sakata2022-il,
  title    = "Handlebody decompositions of three-manifolds and polycontinuous
              patterns",
  author   = "Sakata, N and Mishina, R and Ogawa, M and Ishihara, K and Koda, Y
              and Ozawa, M and Shimokawa, K",
  abstract = "We introduce the concept of a handlebody decomposition of a
              three-manifold, a generalization of a Heegaard splitting, or a
              trisection. We show that two handlebody decompositions of a
              closed orientable three-manifold are stably equivalent. As an
              application to materials science, we consider a mathematical
              model of polycontinuous patterns and discuss a topological study
              of microphase separation of a block copolymer melt.",
  journal  = "Proc Math Phys Eng Sci",
  volume   =  478,
  number   =  2260,
  pages    = "20220073",
  month    =  apr,
  year     =  2022,
  address  = "England",
  keywords = "handlebody decomposition; polycontinuous pattern; three-manifold",
  language = "en"
}

@book{matveev2007algorithmic,
  title={Algorithmic Topology and Classification of 3-Manifolds},
  author={Matveev, S.},
  isbn={9783540458982},
  lccn={2007927936},
  series={Algorithms and Computation in Mathematics},
  url={https://link.springer.com/book/10.1007/978-3-662-05102-3},
  year={2007},
  publisher={Springer Berlin Heidelberg}
}

@article{Bunk_2025,
   title={Lorentzian bordisms in algebraic quantum field theory},
   volume={115},
   ISSN={1573-0530},
   url={http://dx.doi.org/10.1007/s11005-025-01906-3},
   DOI={10.1007/s11005-025-01906-3},
   number={1},
   journal={Letters in Mathematical Physics},
   publisher={Springer Science and Business Media LLC},
   author={Bunk, Severin and MacManus, James and Schenkel, Alexander},
   year={2025},
   month=feb }

@article{Frank:1993,
    author = {Michael Frank},
    title = {Direct integrals and Hilbert W*-Modules},
    journal = {Problems in algebra, geometry and discrete mathematics (in russian)},
    year = 1992,
    publisher={Moscow State University}, 
    pages={162-177}
}

@inproceedings{Segal1951DecompositionsOO,
booktitle={Memoirs of the American Mathematical Scociety},
  title={Decompositions of Operator Algebras I and II},
  author={Irving Ezra Segal},
  year={1951},
  url={https://api.semanticscholar.org/CorpusID:117110498}
}

@article{ACKERMAN_2016,
   title={On computability and disintegration},
   volume={27},
   ISSN={1469-8072},
   url={http://dx.doi.org/10.1017/S0960129516000098},
   DOI={10.1017/s0960129516000098},
   number={8},
   journal={Mathematical Structures in Computer Science},
   publisher={Cambridge University Press (CUP)},
   author={ACKERMAN, NATHANAEL L. and FREER, CAMERON E. and ROY, DANIEL M.},
   year={2016},
   month=jul, pages={1287–1314} }

@article{Bridgeland:1999,
author = {Bridgeland, Tom},
title = {Equivalences of Triangulated Categories and Fourier–Mukai Transforms},
journal = {Bulletin of the London Mathematical Society},
volume = {31},
number = {1},
pages = {25-34},
doi = {https://doi.org/10.1112/S0024609398004998},
url = {https://londmathsoc.onlinelibrary.wiley.com/doi/abs/10.1112/S0024609398004998},
eprint = {https://londmathsoc.onlinelibrary.wiley.com/doi/pdf/10.1112/S0024609398004998},
abstract = {We give a condition for an exact functor between triangulated categories to be an equivalence. Applications to Fourier–Mukai transforms are discussed. In particular, we obtain a large number of such transforms for K3 surfaces. 1991 Mathematics Subject Classification 18E30, 14J28.},
year = {1999}
}

@book{hudson1969piecewise,
  title={Piecewise Linear Topology},
  author={Hudson, J.F.P.},
  isbn={9780805345513},
  lccn={lc72075219},
  series={Mathematics lecture note series},
  url={https://books.google.com.hk/books?id=UAY_AAAAIAAJ},
  year={1969},
  publisher={W. A. Benjamin}
}

@misc{douglas2016internalbicategories,
      title={Internal bicategories}, 
      author={Christopher L. Douglas and André G. Henriques},
      year={2016},
      eprint={1206.4284},
      archivePrefix={arXiv},
      primaryClass={math.CT},
      url={https://arxiv.org/abs/1206.4284}, 
}

@article{Ferreira:2015,
author = {Nelson Martins-Ferreira},
title = {{On the notion of pseudocategory internal to a category with a 2-cell structure}},
volume = {8},
journal = {Tbilisi Mathematical Journal},
number = {1},
publisher = {Tbilisi Centre for Mathematical Sciences},
pages = {107 -- 141},
keywords = {2-cell structure, cartesian 2-cell structure, natural 2-cell structure, pseudocategory, Sesquicategory},
year = {2015},
doi = {10.1515/tmj-2015-0007},
URL = {https://doi.org/10.1515/tmj-2015-0007}
}

@article{Swan1962VectorBA,
  title={Vector bundles and projective modules},
  author={Richard G. Swan},
  journal={Transactions of the American Mathematical Society},
  year={1962},
  volume={105},
  pages={264-277},
  url={https://api.semanticscholar.org/CorpusID:51735278}
}

@article{Serre:1955,
 ISSN = {0003486X, 19398980},
 URL = {http://www.jstor.org/stable/1969915},
 author = {Jean-Pierre Serre},
 journal = {Annals of Mathematics},
 number = {2},
 pages = {197--278},
 publisher = {Princeton University Press},
 title = {Faisceaux Algébriques Cohérents},
 urldate = {2025-01-18},
 volume = {61},
 year = {1955}
}

@book{Turaev+2016,
url = {https://doi.org/10.1515/9783110435221},
title = {Quantum Invariants of Knots and 3-Manifolds},
author = {Vladimir G. Turaev},
publisher = {De Gruyter},
address = {Berlin, Boston},
doi = {doi:10.1515/9783110435221},
isbn = {9783110435221},
year = {2016},
lastchecked = {2025-01-17}
}

@article{Liu:2024qth,
    author = "Liu, Zhengwei",
    title = "{Functional Integral Construction of Topological Quantum Field Theory}",
    eprint = "2409.17103",
    archivePrefix = "arXiv",
    primaryClass = "math-ph",
    month = "9",
    year = "2024"
}

@article{MACKENZIE200046,
title = {Double Lie Algebroids and Second-Order Geometry, II},
journal = {Advances in Mathematics},
volume = {154},
number = {1},
pages = {46-75},
year = {2000},
issn = {0001-8708},
doi = {https://doi.org/10.1006/aima.1999.1892},
url = {https://www.sciencedirect.com/science/article/pii/S0001870899918923},
author = {K.C.H. Mackenzie}
}

@book{Kashiwara1990SheavesOM,
  title={Sheaves on Manifolds},
  author={Masaki Kashiwara and Pierre Schapira},
series={Grundlehren der mathematischen Wissenschaften},
doi={https://doi.org/10.1007/978-3-662-02661-8},
publisher={Springer-Verlag Berlin Heidelberg},
day={ 14},
month={March},
year={2013},
edition={1},
pages={X,512},
url={https://link.springer.com/book/10.1007/978-3-662-02661-8}
}

@ARTICLE{Kong2024-vr,
  title    = "Categories of Quantum Liquids {II}",
  author   = "Kong, Liang and Zheng, Hao",
  abstract = "We continue to develop the theory of separable higher categories,
              including center functors, higher centralizers, modular
              extensions and group theoretical higher fusion categories.
              Moreover, we outline a theory of orthogonal higher categories to
              treat anti-unitary symmetries. Using these results we derive a
              systematic classification of gapped quantum liquids and predict
              many new SPT orders in spacetime dimension $$\textbackslashge
              3$$.",
  journal  = "Communications in Mathematical Physics",
  volume   =  405,
  number   =  9,
  pages    = "203",
  month    =  aug,
  year     =  2024
}

@article{Radford:1976,
 ISSN = {00029327, 10806377},
 URL = {http://www.jstor.org/stable/2373888},
 author = {David E. Radford},
 journal = {American Journal of Mathematics},
 number = {2},
 pages = {333--355},
 publisher = {Johns Hopkins University Press},
 title = {The Order of the Antipode of a Finite Dimensional Hopf Algebra is Finite},
 urldate = {2024-12-13},
 volume = {98},
 year = {1976}
}

@ARTICLE{Etingof:2004,
  author={Etingof, Pavel and Nikshych, Dmitri and Ostrik, Viktor},
  journal={International Mathematics Research Notices}, 
  title={An analogue of Radford's S4 formula for finite tensor categories}, 
  year={2004},
  volume={2004},
  number={54},
  pages={2915-2933},
  keywords={},
  doi={10.1155/S1073792804141445}}

@article{Barrett_2024,
   title={Gray categories with duals and their diagrams},
   volume={450},
   ISSN={0001-8708},
   url={http://dx.doi.org/10.1016/j.aim.2024.109740},
   DOI={10.1016/j.aim.2024.109740},
   journal={Advances in Mathematics},
   publisher={Elsevier BV},
   author={Barrett, John W. and Meusburger, Catherine and Schaumann, Gregor},
   year={2024},
   month=jul, pages={109740} }

@article{Grandis2004,
author = {{Grandis, Marco} and {Pare, Robert}},
journal = {Cahiers de Topologie et G{\`e}om{\`e}trie Diff{\`e}rentielle Cat{\`e}goriques},
keywords = {double category; adjunction; double monad},
language = {eng},
number = {3},
pages = {193-240},
publisher = {Dunod {\`e}diteur, publi{\`e} avec le concours du CNRS},
title = {Adjoint for double categories},
url = {http://eudml.org/doc/91684},
volume = {45},
year = {2004},
}

@article{PUTROV2017254,
title = {Braiding statistics and link invariants of bosonic/fermionic topological quantum matter in 2+1 and 3+1 dimensions},
journal = {Annals of Physics},
volume = {384},
pages = {254-287},
year = {2017},
issn = {0003-4916},
doi = {https://doi.org/10.1016/j.aop.2017.06.019},
url = {https://www.sciencedirect.com/science/article/pii/S0003491617301859},
author = {Pavel Putrov and Juven Wang and Shing-Tung Yau},
abstract = {Topological Quantum Field Theories (TQFTs) pertinent to some emergent low energy phenomena of condensed matter lattice models in 2+1 and 3+1 dimensions are explored. Many of our TQFTs are highly-interacting without free quadratic analogs. Some of our bosonic TQFTs can be regarded as the continuum field theory formulation of Dijkgraaf–Witten twisted discrete gauge theories. Other bosonic TQFTs beyond the Dijkgraaf–Witten description and all fermionic TQFTs (namely the spin TQFTs) are either higher-form gauge theories where particles must have strings attached, or fermionic discrete gauge theories obtained by gauging the fermionic Symmetry-Protected Topological states (SPTs). We analytically calculate both the Abelian and non-Abelian braiding statistics data of anyonic particle and string excitations in these theories, where the statistics data can one-to-one characterize the underlying topological orders of TQFTs. Namely, we derive path integral expectation values of links formed by line and surface operators in these TQFTs. The acquired link invariants include not only the familiar Aharonov–Bohm linking number, but also Milnor triple linking number in 3 dimensions, triple and quadruple linking numbers of surfaces, and intersection number of surfaces in 4 dimensions. We also construct new spin TQFTs with the corresponding knot/link invariants of Arf(–Brown–Kervaire), Sato–Levine and others. We propose a new relation between the fermionic SPT partition function and the Rokhlin invariant. As an example, we can use these invariants and other physical observables, including ground state degeneracy, reduced modular Sxy and Txy matrices, and the partition function on RP3 manifold, to identify all ν∈Z8 classes of 2+1 dimensional gauged Z2-Ising-symmetric Z2f-fermionic Topological Superconductors (realized by stacking ν layers of a pair of chiral and anti-chiral p-wave superconductors [p+ip and p−ip], where boundary supports non-chiral Majorana–Weyl modes) with continuum spin-TQFTs.}
}

@misc{Chen1:2025?,
      title={Combinatorial quantization of 4d 2-Chern-Simons theory I: the Hopf category of higher-graph states}, 
      author={Hank Chen},
      year={2025},
      eprint={2501.06486},
      archivePrefix={arXiv},
      primaryClass={math-ph},
      url={https://arxiv.org/abs/2501.06486}, 
}

@misc{Chen:2025?,
    author = "Hank Chen",
    title = "Categorical quantum symmetries and ribbon tensor 2-categories",
    journal = "To appear",
    year = "2025"
}

@article{Reshetikhin:1990pr,
    author = "Reshetikhin, N. Yu. and Turaev, V. G.",
    title = "{Ribbon graphs and their invariants derived from quantum groups}",
    doi = "10.1007/BF02096491",
    journal = "Commun. Math. Phys.",
    volume = "127",
    pages = "1--26",
    year = "1990"
}

@article{CARTER19971,
title = {A Combinatorial Description of Knotted Surfaces and Their Isotopies},
journal = {Advances in Mathematics},
volume = {127},
number = {1},
pages = {1-51},
year = {1997},
issn = {0001-8708},
doi = {https://doi.org/10.1006/aima.1997.1618},
url = {https://www.sciencedirect.com/science/article/pii/S0001870897916182},
author = {J.Scott Carter and Joachim H. Rieger and Masahico Saito},
abstract = {We discuss the diagrammatic theory of knot isotopies in dimension 4. We project a knotted surface to a three-dimensional space and arrange the surface to have generic singularities upon further projection to a plane. We examine the singularities in this plane as an isotopy is performed, and give a finite set of local moves to the singular set that can be used to connect any two isotopic knottings. We show how the notion of projections of isotopies can be used to give a combinatoric description of knotted surfaces that is sufficient for categorical applications. In this description, knotted surfaces are presented as sequences of words in symbols, and there is a complete list of moves among such sequences that relate the symbolic representations of isotopic knotted surfaces.}
}

@article{Delvaux2006ANO,
  title={A note on Radford's $S^4$ formula},
  author={Lydia Delvaux and Alfons Van Daele and Shuanhong Wang},
  journal={arXiv: Rings and Algebras},
  year={2006},
  url={https://api.semanticscholar.org/CorpusID:1770960}
}

@article{RADFORD19921,
title = {On the antipode of a quasitriangular Hopf algebra},
journal = {Journal of Algebra},
volume = {151},
number = {1},
pages = {1-11},
year = {1992},
issn = {0021-8693},
doi = {https://doi.org/10.1016/0021-8693(92)90128-9},
url = {https://www.sciencedirect.com/science/article/pii/0021869392901289},
author = {David E Radford}
}

@article{SHUM199457,
title = {Tortile tensor categories},
journal = {Journal of Pure and Applied Algebra},
volume = {93},
number = {1},
pages = {57-110},
year = {1994},
issn = {0022-4049},
doi = {https://doi.org/10.1016/0022-4049(92)00039-T},
url = {https://www.sciencedirect.com/science/article/pii/002240499200039T},
author = {Mei Chee Shum},
abstract = {A tortile tensor category is a braided tensor category in which every object A is equipped with a twist θA:A≅A and a compatible right dual (A∗,dA,eA). Given a category A we describe the free tortile tensor category FA on A by generators and relations. By observing that in any tortile tensor category, there are canonical isomorphisms (A⊗B)∗≅B∗⊗A∗,I∗≅I, and (non-canonical) isomorphisms A∗∗≅A, we show that FA is equivalent to the simpler RFA consisting of the reduced objects and reduced maps of FA. This equivalence will later be used to show that FA is equivalent to the category T̃ʃA of double tangles labelled by A. To define T̃ʃA we first consider a double knot, which may be thought of as a (tame) link with two “parallel” components, or as the boundary of a ribbon in 3-space. A knot with the same diagram as one of these components is called its underlying knot. We associate to each double knot an integral quantity called its twist number, and show that this together with an underlying knot completely determine (up to equivalence) the double knot. Thus to give a double knot is to give an ordinary knot and an integer. A tangle is a disjoint union of knots and of directed paths connecting two points on+ ∂([0, 1] ×P), where P is a Euclidean plane. A double tangle is a tangle with an integer attached to each of its connected components. Given a double tangle we label the points in its boundary by objects of A and its arcs by maps of A, and get what we call a double tangle labelled by A. Equivalence classes of these form a tortile tensor category T̃ ʃ A. Because of the existence of an “inclusion” functor A → T̃ ʃ A, there is by the freeness of FA a canonical strict tortile tensor functor FA → T̃ ʃ A. Our main theorem asserts that this functor is an equivalence of tortile tensor categories, giving an explicit description of FA.}
}

@ARTICLE{Jones:2017,
       author = {{Jones}, Corey and {Penneys}, David},
        title = "{Operator Algebras in Rigid C*-Tensor Categories}",
      journal = {Communications in Mathematical Physics},
     keywords = {Mathematics - Operator Algebras, Mathematics - Category Theory, Mathematics - Quantum Algebra, 18D10, 46L05, 46L10},
         year = 2017,
        month = nov,
       volume = {355},
       number = {3},
        pages = {1121-1188},
          doi = {10.1007/s00220-017-2964-0},
archivePrefix = {arXiv},
       eprint = {1611.04620},
 primaryClass = {math.OA},
       adsurl = {https://ui.adsabs.harvard.edu/abs/2017CMaPh.355.1121J},
      adsnote = {Provided by the SAO/NASA Astrophysics Data System}
}

@article{Pachl_1978, title={Disintegration and compact measures.}, volume={43}, url={https://www.mscand.dk/article/view/11771}, DOI={10.7146/math.scand.a-11771}, abstractNote={&amp;lt;p&amp;gt;A probability space is compact in the sense of Marczewski if and only if it admits countably additive disintegrations. It follows that the restriction of a compact measure to a sub-$\sigma$-algebra is compact.&amp;lt;/p&amp;gt;}, journal={MATHEMATICA SCANDINAVICA}, author={Pachl, Jan K.}, year={1978}, month={Jun.}, pages={157–168} }

@article{Yetter2003MeasurableC,
  title={Measurable Categories},
  author={David N. Yetter},
  journal={Applied Categorical Structures},
  year={2003},
  volume={13},
  pages={469-500},
  url={https://api.semanticscholar.org/CorpusID:15178814}
}

@article{stehouwer2023dagger,
  title={Dagger categories via anti-involutions and positivity},
  author={Stehouwer, Luuk and Steinebrunner, Jan},
  journal={arXiv preprint arXiv:2304.02928},
  year={2023}
}

@misc{ferrer2024daggerncategories,
      title={Dagger $n$-categories}, 
      author={Giovanni Ferrer and Brett Hungar and Theo Johnson-Freyd and Cameron Krulewski and Lukas Müller and Nivedita and David Penneys and David Reutter and Claudia Scheimbauer and Luuk Stehouwer and Chetan Vuppulury},
      year={2024},
      eprint={2403.01651},
      archivePrefix={arXiv},
      primaryClass={math.CT},
      url={https://arxiv.org/abs/2403.01651}, 
}

@book{book-mathphys1,
  added-at = {2010-10-02T18:22:22.000+0200},
  address = {New York},
  author = {Reed, M. and Simon, B.},
  biburl = {https://www.bibsonomy.org/bibtex/2c8de183d896d42465bd86661f69db09a/brouder},
  edition = {Second},
  interhash = {48d458671ad6ef96d5b54a1dd6f57f37},
  intrahash = {c8de183d896d42465bd86661f69db09a},
  keywords = {imported},
  publisher = {Academic Press},
  timestamp = {2010-10-02T18:22:29.000+0200},
  title = {Methods of Modern Mathematical Physics. I {F}unctional
 Analysis},
  year = 1980
}

@ARTICLE{Fuller:2015,
       author = {{Fuller}, Ben},
        title = "{Semidirect Products of Monoidal Categories}",
      journal = {arXiv e-prints},
     keywords = {Mathematics - Category Theory},
         year = 2015,
        month = oct,
          eid = {arXiv:1510.08717},
        pages = {arXiv:1510.08717},
          doi = {10.48550/arXiv.1510.08717},
archivePrefix = {arXiv},
       eprint = {1510.08717},
 primaryClass = {math.CT},
       adsurl = {https://ui.adsabs.harvard.edu/abs/2015arXiv151008717F},
      adsnote = {Provided by the SAO/NASA Astrophysics Data System}
}

@article{DAY199799,
title = {Monoidal Bicategories and Hopf Algebroids},
journal = {Advances in Mathematics},
volume = {129},
number = {1},
pages = {99-157},
year = {1997},
issn = {0001-8708},
doi = {https://doi.org/10.1006/aima.1997.1649},
url = {https://www.sciencedirect.com/science/article/pii/S0001870897916492},
author = {Brian Day and Ross Street}
}

@article{Soncini:2014,
  title={4-D semistrict higher Chern-Simons theory I},
  author={ Soncini, Emanuele  and  Zucchini, Roberto },
  journal={Journal of High Energy Physics},
  volume={2014},
  number={10},
  year={2014},
}

@inproceedings{Baez2023HoangXS,
  title={Ho{\` a}ng Xu{\^ a}n S{\'i}nh's Thesis: Categorifying Group Theory},
  author={John C. Baez},
  year={2023},
  url={https://api.semanticscholar.org/CorpusID:260775694}
}

@article{Bursztyn2000DeformationQO,
  title={Deformation Quantization of Hermitian Vector Bundles},
  author={Henrique Bursztyn and Stefan Waldmann},
  journal={Letters in Mathematical Physics},
  year={2000},
  volume={53},
  pages={349-365},
  url={https://api.semanticscholar.org/CorpusID:117082030}
}

@ARTICLE{Crane:2003gk,
  title    = "Measurable Categories and 2-Groups",
  author   = "Crane, Louis and Yetter, David N",
  abstract = "Using the theory of measurable categories developed in [10], we
              provide a notion of representations of 2-groups better suited to
              physically and geometrically interesting examples than that using
              2-VECT (cf. [8]). Using this theory we sketch a 2-categorical
              approach to the state-sum model for Lorentzian quantum gravity
              proposed in [6], and suggest state-integral constructions for
              4-manifold invariants.",
  journal  = "Applied Categorical Structures",
  volume   =  13,
  number   =  5,
  pages    = "501--516",
  month    =  dec,
  year     =  2005
}

@article{Williams2015HaarSO,
  title={Haar Systems on Equivalent Groupoids},
  author={Dana P. Williams},
  journal={arXiv: Operator Algebras},
  year={2015},
  url={https://api.semanticscholar.org/CorpusID:33769918}
}

@article{Waldorf2015TransgressiveLG,
  title={Transgressive loop group extensions},
  author={Konrad Waldorf},
  journal={Mathematische Zeitschrift},
  year={2015},
  volume={286},
  pages={325-360},
  url={https://api.semanticscholar.org/CorpusID:119641403}
}

@article{Nikolaus2011FOUREV,
  title={Four Equivalent Versions of Non-Abelian Gerbes},
  author={Thomas Nickelsen Nikolaus and Konrad Waldorf},
  journal={Pacific Journal of Mathematics},
  year={2011},
  volume={264},
  pages={355-420},
  url={https://api.semanticscholar.org/CorpusID:55265704}
}

@article{Schreiber:2013pra,
    author = "Schreiber, Urs",
    title = "{Differential cohomology in a cohesive infinity-topos}",
    eprint = "1310.7930",
    archivePrefix = "arXiv",
    primaryClass = "math-ph",
    month = "10",
    year = "2013"
}

@article{Waldorf2012ACO,
  title={A Construction of String 2-Group Models using a Transgression-Regression Technique},
  author={Konrad Waldorf},
  journal={Contemp. Math.},
volume={584},
pages={99-115},
  year={2012},
  url={https://api.semanticscholar.org/CorpusID:119267307}
}

@article{Schommer_Pries_2011,
   title={Central extensions of smooth 2–groups and a finite-dimensional string 2–group},
   volume={15},
   ISSN={1465-3060},
   url={http://dx.doi.org/10.2140/gt.2011.15.609},
   DOI={10.2140/gt.2011.15.609},
   number={2},
   journal={Geometry \& Topology},
   publisher={Mathematical Sciences Publishers},
   author={Schommer-Pries, Christopher-J},
   year={2011},
   month=may, pages={609–676} }

@article{Waldorf:2012,
      title={Transgression to Loop Spaces and its Inverse, I: Diffeological Bundles and Fusion Maps}, 
      author={Konrad Waldorf},
journal = {	Cah. Topol. Geom. Differ. Categ.}, 
year={2012},
volume={LIII}, 
pages={162-210},
      primaryClass={math.DG},
      url={https://arxiv.org/abs/0911.3212}
}

@article{Henriques2006Integrating,
  title={Integrating $L_\infty$-algebras},
  author={Andr{\'e} Henriques},
  journal={Compositio Mathematica},
  year={2006},
  volume={144},
  pages={1017 - 1045},
  url={https://api.semanticscholar.org/CorpusID:17931112}
}

@article{Khovanov_2010,
   title={A categorification of quantum $\mathrm{sl}(n)$},
   volume={1},
   ISSN={1664-073X},
   url={http://dx.doi.org/10.4171/QT/1},
   DOI={10.4171/qt/1},
   number={1},
   journal={Quantum Topology},
   publisher={European Mathematical Society - EMS - Publishing House GmbH},
   author={Khovanov, Mikhail and Lauda, Aaron D.},
   year={2010},
   month=feb, pages={1–92} }

@article{Fiorenza2020TwistorialCI,
  title={Twistorial cohomotopy implies Green–Schwarz anomaly cancellation},
  author={Domenico Fiorenza and Hisham Sati and Urs Schreiber},
  journal={Reviews in Mathematical Physics},
  year={2020},
  url={https://api.semanticscholar.org/CorpusID:221173118}
}

@article{FAONTE2019389,
title = {Higher Kac–Moody algebras and moduli spaces of G-bundles},
journal = {Advances in Mathematics},
volume = {346},
pages = {389-466},
year = {2019},
issn = {0001-8708},
doi = {https://doi.org/10.1016/j.aim.2019.01.040},
url = {https://www.sciencedirect.com/science/article/pii/S0001870819300763},
author = {Giovanni Faonte and Benjamin Hennion and Mikhail Kapranov},
keywords = {Kac–Moody algebras, Principal bundles, Derived geometry},
}

@article{Kapranov2021InfinitedimensionalL,
  title={Infinite-dimensional (dg) Lie algebras and factorization algebras in algebraic geometry},
  author={Mikhail Kapranov},
  journal={Japanese Journal of Mathematics},
  year={2021},
  pages={1-32},
  url={https://api.semanticscholar.org/CorpusID:231392391}
}

@article{Schenkel:2024dcd,
    author = "Schenkel, Alexander and Vicedo, Beno\textasciicircum{}it",
    title = "{5d 2-Chern-Simons Theory and 3d Integrable Field Theories}",
    eprint = "2405.08083",
    archivePrefix = "arXiv",
    primaryClass = "hep-th",
    doi = "10.1007/s00220-024-05170-9",
    journal = "Commun. Math. Phys.",
    volume = "405",
    number = "12",
    pages = "293",
    year = "2024"
}

@article{Morrison2019InvariantsO4,
  title={Invariants of 4–manifolds from Khovanov–Rozansky
link homology},
  author={Scott Morrison and Kevin Walker and Paul Wedrich},
  journal={Geometry \& Topology},
  year={2019},
  url={https://api.semanticscholar.org/CorpusID:198967556}
}

@article{Alfonsi:2024qdr,
    author = "Alfonsi, Luigi and Kim, Hyungrok and Young, Charles A. S.",
    title = "{Raviolo vertex algebras, cochains and conformal blocks}",
    eprint = "2401.11917",
    archivePrefix = "arXiv",
    primaryClass = "math.QA",
    month = "1",
    year = "2024"
}

@article{Manolescu2022SkeinLM,
  title={Skein lasagna modules and handle decompositions},
  author={Ciprian Manolescu and Kevin Walker and Paul Wedrich},
  journal={Advances in Mathematics},
  year={2022},
  url={https://api.semanticscholar.org/CorpusID:249538143}
}

@article{Chen:2024axr,
    author = "Chen, Hank and Liniado, Joaquin",
    title = "{Higher gauge theory and integrability}",
    eprint = "2405.18625",
    archivePrefix = "arXiv",
    primaryClass = "hep-th",
    doi = "10.1103/PhysRevD.110.086017",
    journal = "Phys. Rev. D",
    volume = "110",
    number = "8",
    pages = "086017",
    year = "2024"
}

@article{alfonsi2024raviolo,
  title={Raviolo vertex algebras, cochains and conformal blocks},
  author={Alfonsi, Luigi and Kim, Hyungrok and Young, Charles AS},
  journal={arXiv preprint arXiv:2401.11917},
  year={2024}
}

@article{huang2023tannaka,
  title={Tannaka-Krein duality for finite 2-groups},
  author={Huang, Mo and Zhang, Zhi-Hao},
  journal={arXiv preprint arXiv:2305.18151},
  year={2023}
}

@article{BAEZ2003705,
title = {Higher-dimensional algebra IV: 2-tangles},
journal = {Advances in Mathematics},
volume = {180},
number = {2},
pages = {705-764},
year = {2003},
issn = {0001-8708},
doi = {https://doi.org/10.1016/S0001-8708(03)00018-5},
url = {https://www.sciencedirect.com/science/article/pii/S0001870803000185},
author = {John C. Baez and Laurel Langford},
abstract = {Just as knots and links can be algebraically described as certain morphisms in the category of tangles in 3 dimensions, compact surfaces smoothly embedded in R4 can be described as certain 2-morphisms in the 2-category of ‘2-tangles in 4 dimensions’. Using the work of Carter, Rieger and Saito, we prove that this 2-category is the ‘free semistrict braided monoidal 2-category with duals on one unframed self-dual object’. By this universal property, any unframed self-dual object in a braided monoidal 2-category with duals determines an invariant of 2-tangles in 4 dimensions.}
}

@article{lurie2008classification,
  title={On the classification of topological field theories},
  author={Lurie, Jacob},
  journal={Current developments in mathematics},
  volume={2008},
  number={1},
  pages={129--280},
  year={2008},
  publisher={International Press of Boston}
}

@article{Atiyah:1988,
     author = {Atiyah, Michael F.},
     title = {Topological quantum field theory},
     journal = {Publications Math\'ematiques de l'IH{\'E}S},
     pages = {175--186},
     publisher = {Institut des Hautes \'Etudes Scientifiques},
     volume = {68},
     year = {1988},
     mrnumber = {1001453},
     zbl = {0692.53053},
     language = {en},
     url = {http://www.numdam.org/item/PMIHES_1988__68__175_0/}
}

@article{Sean:private,
    author = {Sanford, Sean},
    title = {Universal traces and 2-characters for 2-representations of groups},
    journal = {private communiations},
    year = 2024
}

@article{liu2024braided,
  title={A braided $(\infty,2)$-category of Soergel bimodules},
  author={Liu, Yu Leon and Mazel-Gee, Aaron and Reutter, David and Stroppel, Catharina and Wedrich, Paul},
    eprint = "2401.02956",
    archivePrefix = "arXiv",
    year={2024}
}

@article{Fiorenza:2020iax,
    author = "Fiorenza, Domenico and Sati, Hisham and Schreiber, Urs",
    title = "{Twistorial cohomotopy implies Green\textendash{}Schwarz anomaly cancellation}",
    eprint = "2008.08544",
    archivePrefix = "arXiv",
    primaryClass = "hep-th",
    doi = "10.1142/S0129055X22500131",
    journal = "Rev. Math. Phys.",
    volume = "34",
    number = "05",
    pages = "2250013",
    year = "2022"
}

@article{Jurco:2018sby,
    author = {Jur\v{c}o, Branislav and Raspollini, Lorenzo and S\"amann, Christian and Wolf, Martin},
    title = "{$L_\infty$-Algebras of Classical Field Theories and the Batalin-Vilkovisky Formalism}",
    eprint = "1809.09899",
    archivePrefix = "arXiv",
    primaryClass = "hep-th",
    reportNumber = "EMPG-18-19, DMUS-MP-18/05",
    doi = "10.1002/prop.201900025",
    journal = "Fortsch. Phys.",
    volume = "67",
    number = "7",
    pages = "1900025",
    year = "2019"
}

@article{Alvarez:1997ma,
    author = "Alvarez, Orlando and Ferreira, Luiz A. and Sanchez Guillen, J.",
    title = "{A New approach to integrable theories in any dimension}",
    eprint = "hep-th/9710147",
    archivePrefix = "arXiv",
    reportNumber = "US-FT-31-97, IFT-P-066-97, UMTG-202",
    doi = "10.1016/S0550-3213(98)00400-3",
    journal = "Nucl. Phys. B",
    volume = "529",
    pages = "689--736",
    year = "1998"
}

@article{Chen:2021xks,
    author = "Chen, Yu-An and Hsin, Po-Shen",
    title = "{Exactly solvable lattice Hamiltonians and gravitational anomalies}",
    eprint = "2110.14644",
    archivePrefix = "arXiv",
    primaryClass = "cond-mat.str-el",
    doi = "10.21468/SciPostPhys.14.5.089",
    journal = "SciPost Phys.",
    volume = "14",
    number = "5",
    pages = "089",
    year = "2023"
}

@article{baez19982,
  title={2-Tangles},
  author={Baez, John C and Langford, Laurel},
  journal={Letters in Mathematical Physics},
  volume={43},
  pages={187--197},
  year={1998},
  publisher={Springer}
}

@article{Johnson-Freyd:2013oea,
    author = "Johnson-Freyd, Theo",
    editor = "Donagi, Ron and Douglas, Michael R. and Kamenova, Ljudmila and Rocek, Martin",
    title = "{Poisson AKSZ theories and their quantizations}",
    eprint = "1307.5812",
    archivePrefix = "arXiv",
    primaryClass = "math-ph",
    doi = "10.1090/pspum/088/01465",
    journal = "Proc. Symp. Pure Math.",
    volume = "88",
    pages = "291--306",
    year = "2014"
}

@article{Song_2023,
	doi = {10.1007/jhep07(2023)207},
  
	url = {https://doi.org/10.1007%2Fjhep07%282023%29207},
  
	year = 2023,
	month = {jul},
  
	publisher = {Springer Science and Business Media {LLC}
},
  
	volume = {2023},
  
	number = {7},
  
	author = {Danhua Song and Mengyao Wu and Ke Wu and Jie Yang},
  
	title = {Higher Chern-Simons based on (2-)crossed modules},
  
	journal = {Journal of High Energy Physics}
}

@article{Kong:2022hjj,
    author = "Kong, Liang and Zheng, Hao",
    title = "{Categories of quantum liquids III}",
    eprint = "2201.05726",
    archivePrefix = "arXiv",
    primaryClass = "hep-th",
    month = "1",
    year = "2022"
}

@article{Baez1996HigherDimensionalAI,
  title={Higher-Dimensional Algebra II. 2-Hilbert Spaces},
  author={John C. Baez},
  journal={Advances in Mathematics},
  year={1996},
  volume={127},
  pages={125-189},
  url={https://api.semanticscholar.org/CorpusID:2792589}
}

@ARTICLE{Ganter:2014,
       author = {{Ganter}, Nora and {Usher}, Robert},
        title = "{Representation and character theory of finite categorical groups}",
        volume = {31}, 
        year = {2016}, 
        number = {21}, 
    pages = {542-570},
      journal = {Theory and Applications of Categories},
     keywords = {Mathematics - Category Theory, Mathematics - Representation Theory, 20J99},
        month = jul,
          eid = {arXiv:1407.6849},
          doi = {10.48550/arXiv.1407.6849},
archivePrefix = {arXiv},
       eprint = {1407.6849},
 primaryClass = {math.CT},
       adsurl = {http://www.tac.mta.ca/tac/volumes/31/21/31-21abs.html}
}

@ARTICLE{Ganter:2006,
       author = {{Ganter}, Nora and {Kapranov}, Mikhail},
        title = "{Representation and character theory in 2-categories}",
      journal = {arXiv Mathematics e-prints},
     keywords = {Mathematics - K-Theory and Homology, Mathematics - Algebraic Topology, Mathematics - Category Theory, 55U99},
         year = 2006,
        month = feb,
          eid = {math/0602510},
        pages = {math/0602510},
          doi = {10.48550/arXiv.math/0602510},
archivePrefix = {arXiv},
       eprint = {math/0602510},
 primaryClass = {math.KT},
       adsurl = {https://ui.adsabs.harvard.edu/abs/2006math......2510G},
      adsnote = {Provided by the SAO/NASA Astrophysics Data System}
}

@article{Carqueville:2023aak,
    author = {Carqueville, Nils and M\"uller, Lukas},
    title = "{Orbifold completion of 3-categories}",
    eprint = "2307.06485",
    archivePrefix = "arXiv",
    primaryClass = "math.QA",
    month = "7",
    year = "2023"
}

@article{Carqueville:2017aoe,
    author = "Carqueville, Nils and Runkel, Ingo and Schaumann, Gregor",
    title = "{Orbifolds of n-dimensional defect TQFTs}",
    eprint = "1705.06085",
    archivePrefix = "arXiv",
    primaryClass = "math.QA",
    doi = "10.2140/gt.2019.23.781",
    journal = "Geom. Topol.",
    volume = "23",
    pages = "781--864",
    year = "2019"
}

@article{Carqueville:2016kdq,
    author = "Carqueville, Nils and Meusburger, Catherine and Schaumann, Gregor",
    title = "{3-dimensional defect TQFTs and their tricategories}",
    eprint = "1603.01171",
    archivePrefix = "arXiv",
    primaryClass = "math.QA",
    doi = "10.1016/j.aim.2020.107024",
    journal = "Adv. Math.",
    volume = "364",
    pages = "107024",
    year = "2020"
}

@article{Bartsch:2022mpm,
    author = "Bartsch, Thomas and Bullimore, Mathew and Ferrari, Andrea E. V. and Pearson, Jamie",
    title = "{Non-invertible symmetries and higher representation theory I}",
    eprint = "2208.05993",
    archivePrefix = "arXiv",
    primaryClass = "hep-th",
    doi = "10.21468/SciPostPhys.17.1.015",
    journal = "SciPost Phys.",
    volume = "17",
    number = "1",
    pages = "015",
    year = "2024"
}

@book{costello_gwilliam_2016, place={Cambridge}, series={New Mathematical Monographs}, title={Factorization Algebras in Quantum Field Theory}, volume={1}, DOI={10.1017/9781316678626}, publisher={Cambridge University Press}, author={Costello, Kevin and Gwilliam, Owen}, year={2016}, collection={New Mathematical Monographs}}

@article{Porst2008Strict2A,
  title={Strict 2-Groups are Crossed Modules},
  author={Sven-S. Porst},
  journal={arXiv: Category Theory},
  year={2008}
}

@article{Wockel2008Principal2A,
url = {https://doi.org/10.1515/form.2011.020},
title = {Principal 2-bundles and their gauge 2-groups},
title = {},
author = {Christoph Wockel},
pages = {565--610},
volume = {23},
number = {3},
journal = {Forum Mathematicum},
doi = {doi:10.1515/form.2011.020},
year = {2011},
lastchecked = {2023-08-29}
}

@article{Garner:2023zqn,
    author = "Garner, Niklas and Williams, Brian R.",
    title = "{Raviolo vertex algebras}",
    eprint = "2308.04414",
    archivePrefix = "arXiv",
    primaryClass = "math.QA",
    month = "8",
    year = "2023"
}

@article{Gaiotto:2014kfa,
    author = "Gaiotto, Davide and Kapustin, Anton and Seiberg, Nathan and Willett, Brian",
    title = "{Generalized Global Symmetries}",
    eprint = "1412.5148",
    archivePrefix = "arXiv",
    primaryClass = "hep-th",
    doi = "10.1007/JHEP02(2015)172",
    journal = "JHEP",
    volume = "02",
    pages = "172",
    year = "2015"
}

@article{Chen:2023integrable,
    author = "Chen, Hank and Girelli, Florian",
    title = "{Integrability from categorification and the 2-Kac-Moody Algebra}",
    eprint = "2307.03831",
    archivePrefix = "arXiv",
    primaryClass = "math-ph",
    month = "7",
    year = "2023"
}

@article{Turaev:1992,
author = {TURAEV, VLADIMIR G.},
title = {MODULAR CATEGORIES AND 3-MANIFOLD INVARIANTS},
journal = {International Journal of Modern Physics B},
volume = {06},
number = {11n12},
pages = {1807-1824},
year = {1992},
doi = {10.1142/S0217979292000876},

URL = { 
    
        https://doi.org/10.1142/S0217979292000876
    
    

},
eprint = { 
    
        https://doi.org/10.1142/S0217979292000876
    
    

}
,
    abstract = { The aim of this paper is to give a concise introduction to the theory of knot invariants and 3-manifold invariants which generalize the Jones polynomial and which may be considered as a mathematical version of the Witten invariants. Such a theory was introduced by N. Reshetikhin and the author on the ground of the theory of quantum groups. Here we use more general algebraic objects, specifically, ribbon and modular categories. Such categories in particular arise as the categories of representations of quantum groups. The notion of modular category, interesting in itself, is closely related to the notion of modular tensor category in the sense of G. Moore and N. Seiberg. For simplicity we restrict ourselves in this paper to the case of closed 3-manifolds. }
}

@article{Elias2010ADT,
  title={A Diagrammatic Temperley-Lieb Categorification},
  author={Ben Elias},
  journal={Int. J. Math. Math. Sci.},
  year={2010},
  volume={2010},
  pages={530808:1-530808:47}
}

@article{decoppet2022morita,
   title={The Morita Theory of Fusion 2-Categories},
   volume={7},
   url={http://dx.doi.org/10.21136/HS.2023.07},
   DOI={10.21136/hs.2023.07},
   number={1},
   journal={Higher Structures},
   publisher={Institute of Mathematics, Czech Academy of Sciences},
   author={Décoppet, Thibault D.},
   year={2023},
   month={May},
   pages={234–292} }

@book{neuchl1997representation,
  title={Representation Theory of Hopf Categories},
  author={Neuchl, M.},
  url={https://books.google.ca/books?id=gpgLHAAACAAJ},
  year={1997},
  publisher={Verlag nicht ermittelbar}
}

@inbook{Kapranov:1994,
	title = {2-categories and Zamolodchikov tetrahedra equations},
	booktitle = {Algebraic groups and their generalizations: quantum and infinite-dimensional methods (University Park, PA, 1991)},
	series = {Proc. Sympos. Pure Math.},
	volume = {56},
	year = {1994},
	pages = {177{\textendash}259},
	publisher = {Amer. Math. Soc., Providence, RI},
	organization = {Amer. Math. Soc., Providence, RI},
	author = {Kapranov, M. M. and Voevodsky, V. A.}
}

@article{Khovanov:2000,
author = {Mikhail Khovanov},
title = {{A categorification of the Jones polynomial}},
volume = {101},
journal = {Duke Mathematical Journal},
number = {3},
publisher = {Duke University Press},
pages = {359 -- 426},
year = {2000},
doi = {10.1215/S0012-7094-00-10131-7},
URL = {https://doi.org/10.1215/S0012-7094-00-10131-7}
}

@article{Khovanov:2006,
 ISSN = {00029947},
 URL = {http://www.jstor.org/stable/3845459},
 abstract = {We construct a new invariant of tangle cobordisms. The invariant of a tangle is a complex of bimodules over certain rings, well-defined up to chain homotopy equivalence. The invariant of a tangle cobordism is a homomorphism between complexes of bimodules assigned to boundaries of the cobordism.},
 author = {Mikhail Khovanov},
 journal = {Transactions of the American Mathematical Society},
 number = {1},
 pages = {315--327},
 publisher = {American Mathematical Society},
 title = {An Invariant of Tangle Cobordisms},
 urldate = {2025-04-18},
 volume = {358},
 year = {2006}
}

@article{Wan:2014woa,
    author = "Wan, Yidun and Wang, Juven C. and He, Huan",
    title = "{Twisted Gauge Theory Model of Topological Phases in Three Dimensions}",
    eprint = "1409.3216",
    archivePrefix = "arXiv",
    primaryClass = "cond-mat.str-el",
    doi = "10.1103/PhysRevB.92.045101",
    journal = "Phys. Rev. B",
    volume = "92",
    number = "4",
    pages = "045101",
    year = "2015"
}

@article{Else:2017yqj,
    author = "Else, Dominic V. and Nayak, Chetan",
    title = "{Cheshire charge in (3+1)-dimensional topological phases}",
    eprint = "1702.02148",
    archivePrefix = "arXiv",
    primaryClass = "cond-mat.str-el",
    doi = "10.1103/PhysRevB.96.045136",
    journal = "Phys. Rev. B",
    volume = "96",
    number = "4",
    pages = "045136",
    year = "2017"
}

@article{Kong:2020wmn,
    author = "Kong, Liang and Tian, Yin and Zhang, Zhi-Hao",
    title = "{Defects in the 3-dimensional toric code model form a braided fusion 2-category}",
    eprint = "2009.06564",
    archivePrefix = "arXiv",
    primaryClass = "cond-mat.str-el",
    doi = "10.1007/JHEP12(2020)078",
    journal = "JHEP",
    volume = "12",
    pages = "078",
    year = "2020"
}

@article{Johnson-Freyd:2020usu,
    author = "Johnson-Freyd, Theo",
    title = "{On the Classification of Topological Orders}",
    eprint = "2003.06663",
    archivePrefix = "arXiv",
    primaryClass = "math.CT",
    doi = "10.1007/s00220-022-04380-3",
    journal = "Commun. Math. Phys.",
    volume = "393",
    number = "2",
    pages = "989--1033",
    year = "2022"
}

@book{atiyah2018k,
  title={K-theory},
  author={Atiyah, M.},
  isbn={9780429973178},
  url={https://books.google.ca/books?id=f1NPDwAAQBAJ},
  year={2018},
  publisher={CRC Press}
}

@article{Chen2z:2023,
    author = "Chen, Hank",
    title = "{Drinfeld double symmetry of the 4d Kitaev model}",
    eprint = "2305.04729",
    archivePrefix = "arXiv",
    primaryClass = "cond-mat.str-el",
    doi = "10.1007/JHEP09(2023)141",
    journal = "JHEP",
    volume = "09",
    pages = "141",
    year = "2023"
}

@article{Chen:2023tjf,
    author = "Chen, Hank and Girelli, Florian",
    title = "{Categorified Quantum Groups and Braided Monoidal 2-Categories}",
    eprint = "2304.07398",
    archivePrefix = "arXiv",
    primaryClass = "math.QA",
    month = "4",
    year = "2023"
}

@article{Chen:2022hct,
    author = "Chen, Hank and Girelli, Florian",
    title = "{Gauging the Gauge and Anomaly Resolution}",
    eprint = "2211.08549",
    archivePrefix = "arXiv",
    primaryClass = "hep-th",
    month = "11",
    year = "2022"
}

@article{Johnson_Freyd_2023,
   title={Minimal nondegenerate extensions},
   ISSN={1088-6834},
   url={http://dx.doi.org/10.1090/jams/1023},
   DOI={10.1090/jams/1023},
   journal={Journal of the American Mathematical Society},
   publisher={American Mathematical Society (AMS)},
   author={Johnson-Freyd, Theo and Reutter, David},
   year={2023},
   month={Jul} }

@book{etingof2016tensor,
  title={Tensor Categories},
  author={Etingof, P. and Gelaki, S. and Nikshych, D. and Ostrik, V.},
  isbn={9781470434410},
  lccn={2015006773},
  series={Mathematical Surveys and Monographs},
  url={https://books.google.ca/books?id=Z6XLDAAAQBAJ},
  year={2016},
  publisher={American Mathematical Society}
}

@article{Woronowicz1988,
author = {Woronowicz, S.L.},
journal = {Inventiones mathematicae},
keywords = {monoidal -category; Tannaka-Krein duality theorem; matrix pseudogroups},
number = {1},
pages = {35-76},
title = {Tannaka-Krein duality for compact matrix pseudogroups. Twisted SU (N) groups.},
url = {http://eudml.org/doc/143589},
volume = {93},
year = {1988},
}

@article{WITTEN1990285,
title = {Gauge theories, vertex models, and quantum groups},
journal = {Nuclear Physics B},
volume = {330},
number = {2},
pages = {285-346},
year = {1990},
issn = {0550-3213},
doi = {https://doi.org/10.1016/0550-3213(90)90115-T},
url = {https://www.sciencedirect.com/science/article/pii/055032139090115T},
author = {Edward Witten},
abstract = {It is known that the Jones polynomial of knot theory, and its generalizations, are closely related to the integrable “vertex models” of two-dimensional statistical mechanics, and to quantum groups. In this paper, an attempt is made to show on a priori grounds, starting only from general covariance of three-dimensional Chern-Simons gauge theory and two-dimensional “duality”, why this must be so.}
}

@inproceedings{Rouquier2005CategorificationOS,
  title={Categorification of sl 2 and braid groups},
  author={Raphael Rouquier},
  year={2005}
}

@unpublished{rouquier:hal-00002981,
  TITLE = {{Categorification of the braid groups}},
  AUTHOR = {Rouquier, Raphael},
  URL = {https://hal.science/hal-00002981},
  NOTE = {working paper or preprint},
  YEAR = {2004},
  MONTH = Sep,
  PDF = {https://hal.science/hal-00002981/file/2braid.hal.pdf},
  HAL_ID = {hal-00002981},
  HAL_VERSION = {v1},
}

@article{Reshetikhin:1991tc,
    author = "Reshetikhin, N. and Turaev, V. G.",
    title = "{Invariants of three manifolds via link polynomials and quantum groups}",
    doi = "10.1007/BF01239527",
    journal = "Invent. Math.",
    volume = "103",
    pages = "547--597",
    year = "1991"
}

@article{Delcamp:2023kew,
    author = "Delcamp, Clement and Tiwari, Apoorv",
    title = "{Higher categorical symmetries and gauging in two-dimensional spin systems}",
    eprint = "2301.01259",
    archivePrefix = "arXiv",
    primaryClass = "hep-th",
    doi = "10.21468/SciPostPhys.16.4.110",
    journal = "SciPost Phys.",
    volume = "16",
    number = "4",
    pages = "110",
    year = "2024"
}

@book{book-quasihopf,
title = "Quasi-Hopf Algebras: A Categorical Approach",
abstract = "This is the first book to be dedicated entirely to Drinfeld's quasi-Hopf algebras.",
keywords = "Quasi-Hopf algebra, monoidal category, Yetter-Drinfeld module, Ribbon Category",
author = "Stefaan Caenepeel and Daniel Bulacu and Florin Panaite and {Van Oystaeyen}, Freddy",
year = "2019",
language = "English",
isbn = "978-1-108-42701-2",
series = "Encyclopedia of Mathematics and its Applications",
publisher = "Cambridge University Press",
}

@article{KongTianZhou:2020,
title = {The center of monoidal 2-categories in 3+1D Dijkgraaf-Witten theory},
journal = {Advances in Mathematics},
volume = {360},
pages = {106928},
year = {2020},
issn = {0001-8708},
doi = {https://doi.org/10.1016/j.aim.2019.106928},
url = {https://www.sciencedirect.com/science/article/pii/S0001870819305432},
author = {Liang Kong and Yin Tian and Shan Zhou},
keywords = {Braided monoidal 2-categories, Drinfeld center, Topological quantum field theory},
abstract = {In this work, for a finite group G and a 4-cocycle ω∈Z4(G,k×), we compute explicitly the center of the monoidal 2-category 2VecGω of ω-twisted G-graded 1-categories of finite dimensional k-vector spaces. This center gives a precise mathematical description of the topological defects in the associated 3+1D Dijkgraaf-Witten TQFT. We prove that this center is a braided monoidal 2-category with a trivial sylleptic center.}
}

@article{Wen:2019,
  title = {Classification of $3+1\mathrm{D}$ Bosonic Topological Orders (II): The Case When Some Pointlike Excitations Are Fermions},
  author = {Lan, Tian and Wen, Xiao-Gang},
  journal = {Phys. Rev. X},
  volume = {9},
  issue = {2},
  pages = {021005},
  numpages = {37},
  year = {2019},
  month = {Apr},
  publisher = {American Physical Society},
  doi = {10.1103/PhysRevX.9.021005},
  url = {https://link.aps.org/doi/10.1103/PhysRevX.9.021005}
}

@article{Sati:2009ic,
    author = "Sati, Hisham and Schreiber, Urs and Stasheff, Jim",
    title = "{Differential twisted String and Fivebrane structures}",
    eprint = "0910.4001",
    archivePrefix = "arXiv",
    primaryClass = "math.AT",
    doi = "10.1007/s00220-012-1510-3",
    journal = "Commun. Math. Phys.",
    volume = "315",
    pages = "169--213",
    year = "2012"
}

@article{Gaiotto:2019xmp,
    author = "Gaiotto, Davide and Johnson-Freyd, Theo",
    title = "{Condensations in higher categories}",
    eprint = "1905.09566",
    archivePrefix = "arXiv",
    primaryClass = "math.CT",
    month = "5",
    year = "2019"
}

@article{Ciambelli:2021nmv,
    author = "Ciambelli, Luca and Leigh, Robert G. and Pai, Pin-Chun",
    title = "{Embeddings and Integrable Charges for Extended Corner Symmetry}",
    eprint = "2111.13181",
    archivePrefix = "arXiv",
    primaryClass = "hep-th",
    doi = "10.1103/PhysRevLett.128.171302",
    month = "11",
    year = "2021"
}

@article{Freidel:2020xyx,
    author = "Freidel, Laurent and Geiller, Marc and Pranzetti, Daniele",
    title = "{Edge modes of gravity. Part I. Corner potentials and charges}",
    eprint = "2006.12527",
    archivePrefix = "arXiv",
    primaryClass = "hep-th",
    doi = "10.1007/JHEP11(2020)026",
    journal = "JHEP",
    volume = "11",
    pages = "026",
    year = "2020"
}

@article{Freidel:2020svx,
    author = "Freidel, Laurent and Geiller, Marc and Pranzetti, Daniele",
    title = "{Edge modes of gravity. Part II. Corner metric and Lorentz charges}",
    eprint = "2007.03563",
    archivePrefix = "arXiv",
    primaryClass = "hep-th",
    doi = "10.1007/JHEP11(2020)027",
    journal = "JHEP",
    volume = "11",
    pages = "027",
    year = "2020"
}

@article{Bochniak:2020vil,
    author = "Bochniak, Arkadiusz and Hadasz, Leszek and La\.zej Ruba, B.",
    title = "{Dynamical generalization of Yetter's model based on a crossed module of discrete groups}",
    eprint = "2010.00888",
    archivePrefix = "arXiv",
    primaryClass = "math-ph",
    doi = "10.1007/JHEP03(2021)282",
    journal = "JHEP",
    volume = "03",
    pages = "282",
    year = "2021"
}

@article{Willerton:2008gyk,
    author = "Willerton, Simon",
    title = "{The twisted Drinfeld double of a finite group via gerbes and finite groupoids}",
    doi = "10.2140/agt.2008.8.1419",
    journal = "Algebr. Geom. Topol.",
    volume = "8",
    number = "3",
    pages = "1419--1457",
    year = "2008"
}

@article{Ritter:2016,
	doi = {10.1142/s0129055x16500215},
  
	url = {https://doi.org/10.1142},
  
	year = 2016,
	month = {oct},
  
	publisher = {World Scientific Pub Co Pte Lt},
  
	volume = {28},
  
	number = {09},
  
	pages = {1650021},
  
	author = {Patricia Ritter and Christian Sämann},
  
	title = {L$\infty$-algebra models and higher Chern{\textendash}Simons theories},
  
	journal = {Reviews in Mathematical Physics}
}

@article{Douglas:2018,
  title={Fusion 2-categories and a state-sum invariant for 4-manifolds},
  author={Christopher L. Douglas and David J. Reutter},
  journal={arXiv: Quantum Algebra},
  year={2018},
  url={https://api.semanticscholar.org/CorpusID:119305837}
}

@Inbook{Kapustin2017,
author="Kapustin, Anton
and Thorngren, Ryan",
editor="Auroux, Denis
and Katzarkov, Ludmil
and Pantev, Tony
and Soibelman, Yan
and Tschinkel, Yuri",
title="Higher Symmetry and Gapped Phases of Gauge Theories",
bookTitle="Algebra, Geometry, and Physics in the 21st Century: Kontsevich Festschrift",
year="2017",
publisher="Springer International Publishing",
address="Cham",
pages="177--202",
abstract="We study topological field theory describing gapped phases of gauge theories where the gauge symmetry is partially Higgsed and partially confined. The TQFT can be formulated both in the continuum and on the lattice and generalizes Dijkgraaf--Witten theory by replacing a finite group by a finite 2-group. The basic field in this TQFT is a 2-connection on a principal 2-bundle. We classify topological actions for such theories as well as loop and surface observables. When the topological action is trivial, the TQFT is related to a Dijkgraaf--Witten theory by electric-magnetic duality, but in general it is distinct.We propose the existence of new phases of matter protected by higher symmetry.",
isbn="978-3-319-59939-7",
doi="10.1007/978-3-319-59939-7{\_}5",
url="https://doi.org/10.1007/978-3-319-59939-7{\_}5"
}

@article{Baez:2012,
	doi = {10.1090/s0065-9266-2012-00652-6},
  
	url = {https://doi.org/10.1090%2Fs0065-9266-2012-00652-6},
  
	year = 2012,
	publisher = {American Mathematical Society ({AMS})},
  
	volume = {219},
  
	number = {1032},
  
	author = {John Baez and Aristide Baratin and Laurent Freidel and Derek Wise},
  
	title = {Infinite-Dimensional Representations of 2-Groups},
  
	journal = {Memoirs of the American Mathematical Society}
}

@article{Baez:2004,
author = {Baez, John C. and Lauda, Aaron D.},
journal = {Theory and Applications of Categories [electronic only]},
keywords = {2-group; categorical group; Chern-Simons theory; group cohomology},
language = {eng},
pages = {423-491},
publisher = {Mount Allison University, Department of Mathematics and Computer Science, Sackville},
title = {Higher-dimensional algebra. V: 2-Groups.},
url = {http://eudml.org/doc/124217},
volume = {12},
year = {2004},
}

@inproceedings{Angulo2024TheVE,
  title={The van Est homomorphism for strict Lie 2-groups},
  author={Camilo Angulo and Miquel Cueca},
  year={2024},
  url={https://api.semanticscholar.org/CorpusID:269790865}
}

@article{walker2012,
  title={(3+ 1)-TQFTs and topological insulators},
  author={Walker, Kevin and Wang, Zhenghan},
  journal={Frontiers of Physics},
  volume={7},
  number={2},
  pages={150--159},
  year={2012},
  publisher={Springer},
    url = {https://arxiv.org/abs/1104.2632},
    doi = {10.48550/ARXIV.1104.2632}
}

@article{Carey_1997,
	doi = {10.1016/s0393-0440(96)00014-9},
  
	url = {https://doi.org/10.1016%2Fs0393-0440%2896%2900014-9},
  
	year = 1997,
	month = {jan},
  
	publisher = {Elsevier {BV}
},
  
	volume = {21},
  
	number = {2},
  
	pages = {183--197},
  
	author = {A.L. Carey and M.K. Murray and B.L. Wang},
  
	title = {Higher bundle gerbes and cohomology classes in gauge theories},
  
	journal = {Journal of Geometry and Physics}
}

@article{Johnson-Freyd:2020,
    author = "Johnson-Freyd, Theo",
    title = "{(3+1)D topological orders with only a $\mathbb{Z}_2$-charged particle}",
    eprint = "2011.11165",
    archivePrefix = "arXiv",
    primaryClass = "math.QA",
    month = "11",
    year = "2020"
}

@article{KitaevKong_2012,
	doi = {10.1007/s00220-012-1500-5},
  
	url = {https://doi.org/10.1007%2Fs00220-012-1500-5},
  
	year = 2012,
	month = {jun},
  
	publisher = {Springer Science and Business Media {LLC}
},
  
	volume = {313},
  
	number = {2},
  
	pages = {351--373},
  
	author = {Alexei Kitaev and Liang Kong},
  
	title = {Models for Gapped Boundaries and Domain Walls},
  
	journal = {Communications in Mathematical Physics}
}

@misc{Baez:2002highergauge,
  doi = {10.48550/ARXIV.HEP-TH/0206130},
  
  url = {https://arxiv.org/abs/hep-th/0206130},
  
  author = {Baez, John C.},
  
  keywords = {High Energy Physics - Theory (hep-th), FOS: Physical sciences, FOS: Physical sciences},
  
  title = {Higher Yang-Mills Theory},
  
  publisher = {arXiv},
  
  year = {2002},
  
  copyright = {Assumed arXiv.org perpetual, non-exclusive license to distribute this article for submissions made before January 2004}
}

@article{Bochniak_2021,
	doi = {10.1007/jhep09(2021)068},
  
	url = {https://doi.org/10.1007%2Fjhep09%282021%29068},
  
	year = 2021,
	month = {sep},
  
	publisher = {Springer Science and Business Media {LLC}
},
  
	volume = {2021},
  
	number = {9},
  
	author = {A. Bochniak and L. Hadasz and P. Korcyl and B. Ruba},
  
	title = {Dynamics of a lattice 2-group gauge theory model},
  
	journal = {Journal of High Energy Physics}
}

@article{Kong:2020,
  title = {Algebraic higher symmetry and categorical symmetry: A holographic and entanglement view of symmetry},
  author = {Kong, Liang and Lan, Tian and Wen, Xiao-Gang and Zhang, Zhi-Hao and Zheng, Hao},
  journal = {Phys. Rev. Research},
  volume = {2},
  issue = {4},
  pages = {043086},
  numpages = {53},
  year = {2020},
  month = {Oct},
  publisher = {American Physical Society},
  doi = {10.1103/PhysRevResearch.2.043086},
  url = {https://link.aps.org/doi/10.1103/PhysRevResearch.2.043086}
}

@article{Song:2021,
    author = "Song, Danhua and Lou, Kai and Wu, Ke and Yang, Jie",
    title = "{Generalized higher connections and Yang-Mills}",
    eprint = "2112.13370",
    archivePrefix = "arXiv",
    primaryClass = "hep-th",
    month = "12",
    year = "2021"
}

@book{maclane:71,
  added-at = {2009-09-18T21:22:09.000+0200},
  address = {New York},
  author = {MacLane, Saunders},
  biburl = {https://www.bibsonomy.org/bibtex/29e8ca8b4bf357cc41e40e98cca25cb8c/minas},
  interhash = {51566d046db4c3ea930c2b5ca79173f1},
  intrahash = {9e8ca8b4bf357cc41e40e98cca25cb8c},
  keywords = {CategoryTheory},
  mrclass = {18-02},
  mrnumber = {MR0354798 (50 \#7275)},
  mrreviewer = {H.-B. Brinkmann},
  note = {Graduate Texts in Mathematics, Vol. 5},
  pages = {ix+262},
  publisher = {Springer-Verlag},
  timestamp = {2009-09-18T21:22:09.000+0200},
  title = {Categories for the Working Mathematician},
  year = 1971
}

@misc{webster2013knot,
      title={Knot invariants and higher representation theory II: the categorification of quantum knot invariants}, 
      author={Ben Webster},
      year={2013},
      eprint={1005.4559},
      archivePrefix={arXiv},
      primaryClass={math.GT}
}

@article{Pretko:2020,
author = {Pretko, Michael and Chen, Xie and You, Yizhi},
title = {Fracton phases of matter},
journal = {International Journal of Modern Physics A},
volume = {35},
number = {06},
pages = {2030003},
year = {2020},
doi = {10.1142/S0217751X20300033},

URL = { 
        https://doi.org/10.1142/S0217751X20300033
    
},
eprint = { 
        https://doi.org/10.1142/S0217751X20300033
    
}
,
    abstract = { Fractons are a new type of quasiparticle which are immobile in isolation, but can often move by forming bound states. Fractons are found in a variety of physical settings, such as spin liquids and elasticity theory, and exhibit unusual phenomenology, such as gravitational physics and localization. The past several years have seen a surge of interest in these exotic particles, which have come to the forefront of modern condensed matter theory. In this review, we provide a broad treatment of fractons, ranging from pedagogical introductory material to discussions of recent advances in the field. We begin by demonstrating how the fracton phenomenon naturally arises as a consequence of higher moment conservation laws, often accompanied by the emergence of tensor gauge theories. We then provide a survey of fracton phases in spin models, along with the various tools used to characterize them, such as the foliation framework. We discuss in detail the manifestation of fracton physics in elasticity theory, as well as the connections of fractons with localization and gravitation. Finally, we provide an overview of some recently proposed platforms for fracton physics, such as Majorana islands and hole-doped antiferromagnets. We conclude with some open questions and an outlook on the field. }
}

@article{Bullivant:2021,
   title={Gapped boundaries and string-like excitations in (3+1)d gauge models of topological phases},
   volume={2021},
   ISSN={1029-8479},
   url={http://dx.doi.org/10.1007/JHEP07(2021)025},
   DOI={10.1007/jhep07(2021)025},
   number={7},
   journal={Journal of High Energy Physics},
   publisher={Springer Science and Business Media LLC},
   author={Bullivant, Alex and Delcamp, Clement},
   year={2021},
   month={Jul} }

@book{book-charclass,
	author = {John Milnor and James D. Stacheff},
	publisher = {Princeton University Press},
	series = {Annals of Mathematics Studies},
	title = {{Characteristic Classes}},
	year = {1974},
	}

@article{Pfeiffer2007,
title = {2-Groups, trialgebras and their Hopf categories of representations},
journal = {Advances in Mathematics},
volume = {212},
number = {1},
pages = {62-108},
year = {2007},
issn = {0001-8708},
doi = {https://doi.org/10.1016/j.aim.2006.09.014},
url = {https://www.sciencedirect.com/science/article/pii/S0001870806003343},
author = {Hendryk Pfeiffer},
eprint = "0411468",
    archivePrefix = "arXiv",
    primaryClass = "math-ph",
keywords = {Categorical group, Categorification, Hopf algebra, Hopf category, Tannaka-Krein reconstruction},
}

@article{chen:2022,
  title = {Categorified Drinfel'd double and $BF$ theory: 2-groups in 4D},
  author = {Chen, Hank and Girelli, Florian},
  journal = {Phys. Rev. D},
  volume = {106},
  issue = {10},
  pages = {105017},
  numpages = {35},
  year = {2022},
  month = {Nov},
  publisher = {American Physical Society},
  doi = {10.1103/PhysRevD.106.105017},
  url = {https://link.aps.org/doi/10.1103/PhysRevD.106.105017}
}

@article{Mackaay:hc,
	abstract = {In this paper we give a short introduction to our results on the holonomy of gerbe-connections and explain our motivation coming from state-sum models.},
	author = {Marco Mackaay and Roger Picken},
	date-added = {2022-02-02 11:57:49 -0500},
	date-modified = {2022-02-02 11:57:49 -0500},
	eprint = {math/0104285},
	title = {2-Categories, 4d state-sum models and gerbes},
	url = {https://arxiv.org/pdf/math/0104285.pdf},
	bdsk-url-1 = {https://arxiv.org/pdf/math/0104285.pdf}}

@article{Pachner1991Pachner,
    title = "{\textit{P.L. Homeomorphic Manifolds are Equivalent by Elementary Shellings}}",
    journal = {European Journal of Combinatorics},
    volume = {12},
    number = {2},
    pages = {129-145},
    year = {1991},
    issn = {0195-6698},
    doi = {https://doi.org/10.1016/S0195-6698(13)80080-7},
    url = {https://www.sciencedirect.com/science/article/pii/S0195669813800807},
    author = {Udo Pachner},
}

@article{Thorngren2015,
	author = {Thorngren, Ryan},
	date-added = {2022-01-29 17:03:48 -0500},
	date-modified = {2022-01-29 17:03:48 -0500},
	doi = {10.1007/JHEP02(2015)152},
	journal = {Journal of High Energy Physics},
	number = {152},
	publisher = {Springer},
	title = {Framed Wilson operators, fermionic strings, and gravitational anomaly in 4d},
	url = {https://doi.org/10.1007/JHEP02(2015)152},
	volume = {2015},
	year = {2010},
	bdsk-url-1 = {https://doi.org/10.1007/JHEP02(2015)152}}

@article{Brown,
	author = {R. Brown},
	collection = {London Mathematical Society Lecture Note Series},
	date-added = {2022-01-27 13:22:11 -0500},
	date-modified = {2022-01-27 13:30:54 -0500},
	doi = {10.1017/CBO9780511526305},
	journal = {London Mathematical Society Lecture Note Series},
	pages = {187--210},
	place = {Cambridge},
	publisher = {Cambridge University Press},
	title = {Computing Homotopy Types Using Crossed $N$-Cubes of Groups},
	volume = {1},
	year = {1992},
	bdsk-url-1 = {https://doi.org/10.1017/CBO9780511526305}}

@article{baez2004,
author = {Baez, John C. and Lauda, Aaron D.},
journal = {Theory and Applications of Categories [electronic only]},
keywords = {2-group; categorical group; Chern-Simons theory; group cohomology},
language = {eng},
pages = {423-491},
publisher = {Mount Allison University, Department of Mathematics and Computer Science, Sackville},
title = {Higher-dimensional algebra. V: 2-Groups.},
url = {http://eudml.org/doc/124217},
volume = {12},
year = {2004},
}

@article{Baez:1995xq,
	archiveprefix = {arXiv},
	author = {Baez, J. C. and Dolan, J.},
	date-added = {2022-01-26 15:11:06 -0500},
	date-modified = {2022-01-26 15:11:06 -0500},
	doi = {10.1063/1.531236},
	eprint = {q-alg/9503002},
	journal = {J. Math. Phys.},
	pages = {6073--6105},
	title = {{Higher dimensional algebra and topological quantum field theory}},
	volume = {36},
	year = {1995},
	bdsk-url-1 = {https://doi.org/10.1063/1.531236}}

@article{Baez:2004in,
	archiveprefix = {arXiv},
	author = {Baez, John and Schreiber, Urs},
	date-added = {2022-01-26 15:00:12 -0500},
	date-modified = {2022-01-26 15:00:12 -0500},
	eprint = {hep-th/0412325},
	month = {12},
	title = {{Higher gauge theory: 2-connections on 2-bundles}},
	year = {2004}}

@article{Baez:2003fs,
	archiveprefix = {arXiv},
	author = {Baez, John C. and Crans, Alissa S.},
	date-added = {2022-01-26 14:57:14 -0500},
	date-modified = {2022-01-26 14:57:14 -0500},
	eprint = {math/0307263},
	journal = {Theor. Appl. Categor.},
	pages = {492--528},
	title = {{Higher-Dimensional Algebra VI: Lie 2-Algebras}},
	volume = {12},
	year = {2004}}

@article{Mikovic:2015hza,
	archiveprefix = {arXiv},
	author = {Mikovi\'c, A. and Oliveira, M. A. and Vojinovi\'c, M.},
	date-added = {2022-01-26 11:48:18 -0500},
	date-modified = {2022-01-26 11:48:18 -0500},
	doi = {10.1088/0264-9381/33/6/065007},
	eprint = {1508.05635},
	journal = {Class. Quant. Grav.},
	number = {6},
	pages = {065007},
	primaryclass = {gr-qc},
	title = {{Hamiltonian analysis of the BFCG theory for the Poincar\'e 2-group}},
	volume = {33},
	year = {2016},
	bdsk-url-1 = {https://doi.org/10.1088/0264-9381/33/6/065007}}

@article{Mikovic:2016xmo,
	archiveprefix = {arXiv},
	author = {Mikovic, Aleksandar and Oliveira, Miguel Angelo and Vojinovic, Marko},
	date-added = {2022-01-26 11:48:15 -0500},
	date-modified = {2022-01-26 11:48:15 -0500},
	eprint = {1610.09621},
	month = {10},
	primaryclass = {math-ph},
	title = {{Hamiltonian analysis of the BFCG theory for a strict Lie 2-group}},
	year = {2016}}

@article{Mikovic:2011si,
	archiveprefix = {arXiv},
	author = {Mikovic, A. and Vojinovic, M.},
	date-added = {2022-01-26 11:48:01 -0500},
	date-modified = {2022-01-26 11:48:01 -0500},
	doi = {10.1088/0264-9381/29/16/165003},
	eprint = {1110.4694},
	journal = {Class. Quant. Grav.},
	pages = {165003},
	primaryclass = {gr-qc},
	title = {{poincar{\'e} 2-group and quantum gravity}},
	volume = {29},
	year = {2012},
	bdsk-url-1 = {https://doi.org/10.1088/0264-9381/29/16/165003}}

@article{Drinfeld:1986in,
	author = {Drinfeld, V. G.},
	date-added = {2022-01-26 11:32:12 -0500},
	date-modified = {2022-01-26 11:32:12 -0500},
	doi = {10.1007/BF01247086},
	journal = {Zap. Nauchn. Semin.},
	pages = {18--49},
	title = {{Quantum groups}},
	volume = {155},
	year = {1986},
	bdsk-url-1 = {https://doi.org/10.1007/BF01247086}}

@article{Jimbo:1985zk,
    author = "Jimbo, Michio",
    title = "{A q difference analog of U(g) and the Yang-Baxter equation}",
    doi = "10.1007/BF00704588",
    journal = "Lett. Math. Phys.",
    volume = "10",
    pages = "63--69",
    year = "1985"
}

@inproceedings{Crane:1993if,
    author = "Crane, Louis and Yetter, David",
    title = "{A Categorical construction of 4-D topological quantum field theories}",
    eprint = "hep-th/9301062",
    archivePrefix = "arXiv",
    reportNumber = "PRINT-93-0299 (KANSAS-STATE)",
    month = "3",
    year = "1993"
}

@article{Crane:1994ty,
	archiveprefix = {arXiv},
	author = {Crane, Louis and Frenkel, Igor},
	date-added = {2022-01-26 11:24:41 -0500},
	date-modified = {2022-01-26 11:24:41 -0500},
	doi = {10.1063/1.530746},
	eprint = {hep-th/9405183},
	journal = {J. Math. Phys.},
	pages = {5136--5154},
	title = {{Four-dimensional topological field theory, Hopf categories, and the canonical bases}},
	volume = {35},
	year = {1994},
	bdsk-url-1 = {https://doi.org/10.1063/1.530746}}

@article{Kim:2019owc,
	archiveprefix = {arXiv},
	author = {Kim, Hyungrok and Saemann, Christian},
	date-added = {2022-01-25 23:36:32 -0500},
	date-modified = {2022-01-25 23:36:32 -0500},
	doi = {10.1088/1751-8121/ab8ef2},
	eprint = {1911.06390},
	journal = {J. Phys. A},
	number = {44},
	pages = {445206},
	primaryclass = {hep-th},
	reportnumber = {EMPG-19-24},
	title = {{Adjusted parallel transport for higher gauge theories}},
	volume = {53},
	year = {2020},
	bdsk-url-1 = {https://doi.org/10.1088/1751-8121/ab8ef2}}

@article{Benini:2018reh,
	archiveprefix = {arXiv},
	author = {Benini, Francesco and C\'ordova, Clay and Hsin, Po-Shen},
	date-added = {2022-01-25 23:35:22 -0500},
	date-modified = {2022-01-25 23:35:22 -0500},
	doi = {10.1007/JHEP03(2019)118},
	eprint = {1803.09336},
	journal = {JHEP},
	pages = {118},
	primaryclass = {hep-th},
	reportnumber = {SISSA 10/2018/FISI, SISSA-10-2018-FISI},
	title = {{On 2-Group Global Symmetries and their Anomalies}},
	volume = {03},
	year = {2019},
	bdsk-url-1 = {https://doi.org/10.1007/JHEP03(2019)118}}

@article{Baez:1995ph,
	archiveprefix = {arXiv},
	author = {Baez, John C.},
	date-added = {2022-01-25 23:19:12 -0500},
	date-modified = {2022-01-25 23:19:12 -0500},
	doi = {10.1007/BF00398315},
	eprint = {q-alg/9507006},
	journal = {Lett. Math. Phys.},
	pages = {129--143},
	title = {{Four-Dimensional BF theory with cosmological term as a topological quantum field theory}},
	volume = {38},
	year = {1996},
	bdsk-url-1 = {https://doi.org/10.1007/BF00398315}}

@article{Martins:2010ry,
	archiveprefix = {arXiv},
	author = {Martins, Joao Faria and Mikovic, Aleksandar},
	date-added = {2022-01-25 23:07:00 -0500},
	date-modified = {2022-01-25 23:07:00 -0500},
	doi = {10.4310/ATMP.2011.v15.n4.a4},
	eprint = {1006.0903},
	journal = {Adv. Theor. Math. Phys.},
	number = {4},
	pages = {1059--1084},
	primaryclass = {hep-th},
	title = {{Lie crossed modules and gauge-invariant actions for 2-BF theories}},
	volume = {15},
	year = {2011},
	bdsk-url-1 = {https://doi.org/10.4310/ATMP.2011.v15.n4.a4}}

@article{Chen:2012gz,
	archiveprefix = {arXiv},
	author = {Chen, Zhuo and Sti{\'e}non, Mathieu and Xu, Ping},
	date-added = {2022-01-25 22:57:58 -0500},
	date-modified = {2022-01-25 22:57:58 -0500},
	doi = {10.4310/jdg/1367438648},
	eprint = {1202.0079},
	journal = {J. Diff. Geom.},
	number = {2},
	pages = {209--240},
	primaryclass = {math.DG},
	title = {{Poisson 2-groups}},
	volume = {94},
	year = {2013},
	bdsk-url-1 = {https://doi.org/10.4310/jdg/1367438648}}

@article{Cordova:2018cvg,
	archiveprefix = {arXiv},
	author = {C\'ordova, Clay and Dumitrescu, Thomas T. and Intriligator, Kenneth},
	date-added = {2022-01-25 22:39:35 -0500},
	date-modified = {2022-01-25 22:39:35 -0500},
	doi = {10.1007/JHEP02(2019)184},
	eprint = {1802.04790},
	journal = {JHEP},
	pages = {184},
	primaryclass = {hep-th},
	title = {{Exploring 2-Group Global Symmetries}},
	volume = {02},
	year = {2019},
	bdsk-url-1 = {https://doi.org/10.1007/JHEP02(2019)184}}

@article{Pretko:2017fbf,
	archiveprefix = {arXiv},
	author = {Pretko, Michael},
	date-added = {2022-01-25 22:36:56 -0500},
	date-modified = {2022-01-25 22:36:56 -0500},
	doi = {10.1103/PhysRevD.96.024051},
	eprint = {1702.07613},
	journal = {Phys. Rev. D},
	number = {2},
	pages = {024051},
	primaryclass = {cond-mat.str-el},
	title = {{Emergent gravity of fractons: Mach\textquoteright{}s principle revisited}},
	volume = {96},
	year = {2017},
	bdsk-url-1 = {https://doi.org/10.1103/PhysRevD.96.024051}}

@article{Dubinkin:2020kxo,
	archiveprefix = {arXiv},
	author = {Dubinkin, Oleg and Rasmussen, Alex and Hughes, Taylor L.},
	date-added = {2022-01-25 22:32:59 -0500},
	date-modified = {2022-01-25 22:32:59 -0500},
	doi = {10.1016/j.aop.2020.168297},
	eprint = {2007.05539},
	journal = {Annals Phys.},
	pages = {168297},
	primaryclass = {cond-mat.str-el},
	title = {{Higher-form Gauge Symmetries in Multipole Topological Phases}},
	volume = {422},
	year = {2020},
	bdsk-url-1 = {https://doi.org/10.1016/j.aop.2020.168297}}

@Inbook{Kapustin:2013uxa,
author="Kapustin, Anton
and Thorngren, Ryan",
editor="Auroux, Denis
and Katzarkov, Ludmil
and Pantev, Tony
and Soibelman, Yan
and Tschinkel, Yuri",
title="Higher Symmetry and Gapped Phases of Gauge Theories",
bookTitle="Algebra, Geometry, and Physics in the 21st Century: Kontsevich Festschrift",
year="2017",
publisher="Springer International Publishing",
address="Cham",
pages="177--202",
abstract="We study topological field theory describing gapped phases of gauge theories where the gauge symmetry is partially Higgsed and partially confined. The TQFT can be formulated both in the continuum and on the lattice and generalizes Dijkgraaf--Witten theory by replacing a finite group by a finite 2-group. The basic field in this TQFT is a 2-connection on a principal 2-bundle. We classify topological actions for such theories as well as loop and surface observables. When the topological action is trivial, the TQFT is related to a Dijkgraaf--Witten theory by electric-magnetic duality, but in general it is distinct.We propose the existence of new phases of matter protected by higher symmetry.",
isbn="978-3-319-59939-7",
doi="10.1007/978-3-319-59939-7_5",
url="https://doi.org/10.1007/978-3-319-59939-7_5"
}

@article{Baez:2005sn,
author = {John C. Baez and Danny Stevenson and Alissa S. Crans and Urs Schreiber},
title = {{From loop groups to 2-groups}},
volume = {9},
journal = {Homology, Homotopy and Applications},
number = {2},
publisher = {International Press of Boston},
pages = {101 -- 135},
keywords = {2-group, gerbe, Kac–Moody extension, Lie 2-algebra, Loop group, string group},
year = {2007},
}

@phdthesis{Delcamp:2018kqc,
	author = {Delcamp, Clement},
	date-added = {2022-01-25 22:24:28 -0500},
	date-modified = {2022-01-25 22:24:28 -0500},
	school = {Waterloo U.},
	title = {{Gauge Models of Topological Phases and Applications to Quantum Gravity}},
	year = {2018}}

@article{Martins:2006hx,
	archiveprefix = {arXiv},
	author = {Martins, Joao Faria and Porter, Timothy},
	date-added = {2022-01-24 11:30:05 -0500},
	date-modified = {2022-01-24 11:30:05 -0500},
	eprint = {math/0608484},
	journal = {Theor. Appl. Categor.},
	pages = {118--150},
	title = {{On Yetter's invariant and an extension of the Dijkgraaf-Witten invariant to categorical groups}},
	volume = {18},
	year = {2007}}

@article{Yetter:1993dh,
	author = {Yetter, D. N.},
	date-added = {2022-01-13 23:01:09 -0500},
	date-modified = {2022-01-13 23:01:09 -0500},
	doi = {10.1142/S0218216593000076},
	journal = {J. Knot Theor. Ramifications},
	pages = {113--123},
	title = {{TQFT's from homotopy 2 types}},
	volume = {2},
	year = {1993},
	bdsk-url-1 = {https://doi.org/10.1142/S0218216593000076}}

@article{Bai_2013,
	author = {Chengming Bai and Yunhe Sheng and Chenchang Zhu},
	date-added = {2022-01-12 16:14:18 -0500},
	date-modified = {2022-01-12 16:14:18 -0500},
	doi = {10.1007/s00220-013-1712-3},
	journal = {Communications in Mathematical Physics},
	month = {apr},
	number = {1},
	pages = {149--172},
	publisher = {Springer Science and Business Media {LLC}},
	title = {Lie 2-Bialgebras},
	url = {https://doi.org/10.1007%2Fs00220-013-1712-3},
	volume = {320},
	year = 2013,
	bdsk-url-1 = {https://doi.org/10.1007%2Fs00220-013-1712-3},
	bdsk-url-2 = {https://doi.org/10.1007/s00220-013-1712-3}}

@article{Girelli:2021khh,
	archiveprefix = {arXiv},
	author = {Girelli, Florian and Laudonio, Matteo and Tsimiklis, Panagiotis},
	date-added = {2022-01-11 15:15:59 -0500},
	date-modified = {2022-01-11 15:15:59 -0500},
	eprint = {2105.10616},
	month = {5},
	primaryclass = {hep-th},
	title = {{Polyhedron phase space using 2-groups: $\kappa$-Poincar\'e as a Poisson 2-group}},
	year = {2021}}

@article{Alekseev:1994pa,
	archiveprefix = {arXiv},
	author = {Alekseev, Anton {\relax Yu}. and Grosse, Harald and Schomerus, Volker},
	date-added = {2018-03-10 03:50:22 +0000},
	date-modified = {2018-03-10 03:50:22 +0000},
	doi = {10.1007/BF02099431},
	eprint = {hep-th/9403066},
	journal = {Commun. Math. Phys.},
	pages = {317-358},
	primaryclass = {hep-th},
	reportnumber = {HUTMP-94-B336, ESI-79-1994, UUITP-5-94, UWTHPH-1994-8},
	slaccitation = {%%CITATION = HEP-TH/9403066;%%},
	title = {{Combinatorial quantization of the Hamiltonian Chern-Simons theory}},
	volume = {172},
	year = {1995},
	bdsk-url-1 = {https://dx.doi.org/10.1007/BF02099431}}

@article{Alekseev:1994au,
	archiveprefix = {arXiv},
	author = {Alekseev, Anton {\relax Yu}. and Grosse, Harald and Schomerus, Volker},
	date-added = {2018-03-10 03:50:22 +0000},
	date-modified = {2018-03-10 03:50:22 +0000},
	doi = {10.1007/BF02101528},
	eprint = {hep-th/9408097},
	journal = {Commun. Math. Phys.},
	pages = {561-604},
	primaryclass = {hep-th},
	reportnumber = {HUTMP-94-B-337, ESI-113-1994, UUITP-11-94A, UWTHPH-1994-26},
	slaccitation = {%%CITATION = HEP-TH/9408097;%%},
	title = {{Combinatorial quantization of the Hamiltonian Chern-Simons theory. 2.}},
	volume = {174},
	year = {1995},
	bdsk-url-1 = {https://dx.doi.org/10.1007/BF02101528}}

@article{Fock:1998nu,
	archiveprefix = {arXiv},
	author = {Fock, V. V. and Rosly, A. A.},
	date-added = {2018-03-10 03:49:24 +0000},
	date-modified = {2018-03-10 03:49:24 +0000},
	eprint = {math/9802054},
	journal = {Am. Math. Soc. Transl.},
	pages = {67-86},
	primaryclass = {math-qa},
	slaccitation = {%%CITATION = MATH/9802054;%%},
	title = {{Poisson structure on moduli of flat connections on Riemann surfaces and r matrix}},
	volume = {191},
	year = {1999}}

@article{Bullivant:2016clk,
	archiveprefix = {arXiv},
	author = {Bullivant, A. and Calcada, M. and Kadar, Z. and Martin, P. and Martins, J.},
	date-added = {2017-07-03 09:37:01 +0000},
	date-modified = {2018-01-18 21:48:03 +0000},
	doi = {10.1103/PhysRevB.95.155118},
	eprint = {1606.06639},
	journal = {Phys. Rev.},
	number = {15},
	pages = {155118},
	primaryclass = {cond-mat.str-el},
	slaccitation = {%%CITATION = ARXIV:1606.06639;%%},
	title = {Topological phases from higher gauge symmetry in 3+1 dimensions},
	volume = {B95},
	year = {2017},
	bdsk-url-1 = {http://dx.doi.org/10.1103/PhysRevB.95.155118}}

@article{Wang:2016rzy,
	archiveprefix = {arXiv},
	author = {Wang, Zitao and Chen, Xie},
	date-added = {2017-07-03 09:37:01 +0000},
	date-modified = {2017-07-03 09:37:01 +0000},
	doi = {10.1103/PhysRevB.95.115142},
	eprint = {1611.09334},
	journal = {Phys. Rev.},
	number = {11},
	pages = {115142},
	primaryclass = {cond-mat.str-el},
	slaccitation = {%%CITATION = ARXIV:1611.09334;%%},
	title = {{Twisted gauge theories in three-dimensional Walker-Wang models}},
	volume = {B95},
	year = {2017},
	bdsk-url-1 = {http://dx.doi.org/10.1103/PhysRevB.95.115142}}

@article{Delcamp:2017pcw,
	archiveprefix = {arXiv},
	author = {Delcamp, Clement},
	doi = {10.1007/JHEP12(2017)128},
	eprint = {1709.04924},
	journal = {JHEP},
	pages = {128},
	primaryclass = {hep-th},
	slaccitation = {%%CITATION = ARXIV:1709.04924;%%},
	title = {{Excitation basis for (3+1)d topological phases}},
	volume = {12},
	year = {2017},
	bdsk-url-1 = {https://dx.doi.org/10.1007/JHEP12(2017)128}}

@article{Semenov1992,
	author = {Semenov-Tyan-Shanskii, M. A.},
	day = {01},
	doi = {10.1007/BF01083527},
	issn = {1573-9333},
	journal = {Theoretical and Mathematical Physics},
	month = {Nov},
	note = {\url{https://doi.org/10.1007/BF01083527}},
	number = {2},
	pages = {1292--1307},
	title = {{Poisson-Lie groups. The quantum duality principle and the twisted quantum double}},
	volume = {93},
	year = {1992},
	bdsk-url-1 = {https://doi.org/10.1007/BF01083527}}

\end{document}